\numberwithin{equation}{section}
\numberwithin{figure}{section}
  \theoremstyle{remark}
  \newtheorem*{acknowledgement*}{\protect\acknowledgementname}
\theoremstyle{plain}
\newtheorem{thm}{\protect\theoremname}
  \theoremstyle{plain}
  \newtheorem*{conjecture*}{\protect\conjecturename}
  \theoremstyle{plain}
  \newtheorem{fact}[thm]{\protect\factname}
  \theoremstyle{definition}
  \newtheorem{defn}[thm]{\protect\definitionname}
  \theoremstyle{plain}
  \newtheorem{lem}[thm]{\protect\lemmaname}
  \theoremstyle{plain}
  \newtheorem{cor}[thm]{\protect\corollaryname}
  \theoremstyle{remark}
  \newtheorem*{rem*}{\protect\remarkname}
  \theoremstyle{remark}
  \newtheorem{rem}[thm]{\protect\remarkname}
  \theoremstyle{definition}
  \newtheorem{xca}[thm]{\protect\exercisename}
  \providecommand{\acknowledgementname}{Acknowledgement}
  \providecommand{\conjecturename}{Conjecture}
  \providecommand{\corollaryname}{Corollary}
  \providecommand{\definitionname}{Definition}
  \providecommand{\exercisename}{Exercise}
  \providecommand{\factname}{Fact}
  \providecommand{\lemmaname}{Lemma}
  \providecommand{\remarkname}{Remark}
\providecommand{\theoremname}{Theorem}
\begin{document}
\global\long\def\defeq{\stackrel{\mathrm{{\scriptscriptstyle def}}}{=}}
\global\long\def\norm#1{\left\Vert #1\right\Vert }
\global\long\def\R{\mathbb{R}}
 \global\long\def\Rn{\mathbb{R}^{n}}
\global\long\def\tr{\mathrm{Tr}}
\global\long\def\diag{\mathrm{diag}}
\global\long\def\Diag{\mathrm{Diag}}
\global\long\def\C{\mathbb{C}}
 \global\long\def\E{\mathbb{E}}
\global\long\def\vol{\text{vol}}
\global\long\def\argmax{\text{argmax}}

\title{Geodesic Walks in Polytopes}

\author{Yin Tat Lee\thanks{Microsoft Research and University of Washington, yintat@uw.edu},
Santosh S. Vempala\thanks{Georgia Tech, vempala@gatech.edu}}
\maketitle
\begin{abstract}
We introduce the geodesic walk for sampling Riemannian manifolds and
apply it to the problem of generating uniform random points from the
interior of polytopes in $\R^{n}$ specified by $m$ inequalities.
The walk is a discrete-time simulation of a stochastic differential
equation (SDE) on the Riemannian manifold equipped with the metric
induced by the Hessian of a convex function; each step is the solution
of an ordinary differential equation (ODE). The resulting sampling
algorithm for polytopes mixes in $O^{*}(mn^{\frac{3}{4}})$ steps.
This is the first walk that breaks the quadratic barrier for mixing
in high dimension, improving on the previous best bound of $O^{*}(mn)$
by Kannan and Narayanan for the Dikin walk. We also show that each
step of the geodesic walk (solving an ODE) can be implemented efficiently,
thus improving the time complexity for sampling polytopes. Our analysis
of the geodesic walk for general Hessian manifolds does not assume
positive curvature and might be of independent interest.
\end{abstract}
\tableofcontents{}

\bigskip{}

\begin{acknowledgement*}
The authors thank Sébastien Bubeck, Ben Cousins, Ton Dieker, Chit
Yu Ng, Sushant Sachdeva, Nisheeth K. Vishnoi and Chi Ho Yuen for helpful
discussions. Part of this work was done while visiting the Simons
Institute for the Theory of Computing, UC Berkeley. The authors thank
Yan Kit Chim for making the illustrations. 

\pagebreak{}
\end{acknowledgement*}

\section{Introduction}

Sampling a high-dimensional polytope is a fundamental algorithmic
problem with many applications. The problem can be solved in randomized
polynomial time. Progress on the more general problem of sampling
a convex body given by a membership oracle \cite{DyerFK89,DyerF90,DyerFK91,LS90,LS92,LS93,KLS97,LV3,LV06,VemSurvey}
has lead to a set of general-purpose techniques, both for algorithms
and for analysis in high dimension. All known algorithms are based
on sampling by discrete-time Markov chains. These include the ball
walk \cite{L90}, hit-and-run \cite{Sm,LV3} and the Dikin walk \cite{KanNar2009},
the last requiring stronger access than a membership oracle. In each
case, the main challenge is analyzing the mixing time of the Markov
chain. For a polytope defined by $m$ inequalities in $\R^{n},$ the
current best complexity of sampling is roughly the minimum of $n^{3}\cdot mn\mbox{ and }mn\cdot mn^{\omega-1}$
where the first factor in each term is the mixing time and the second
factor is the time to implement one step. In fact, the bound of $n^{3}$
on the mixing time (achieved by the ball walk and hit-and-run) holds
for arbitrary convex bodies, and $O(mn)$ is just the time to implement
a membership oracle. The second term is for the Dikin walk, for which
Kannan and Narayanan showed a mixing time of $O(mn)$ for the Dikin
walk \cite{KanNar2009}, with each step implementable in roughly matrix
multiplication time. For general convex bodies given by membership
oracles, $\Omega(n^{2})$ is a lower bound on the number of oracle
calls for all known walks. A quadratic upper bound would essentially
follow from a positive resolution of the KLS hyperplane conjecture
(we mention that \cite{CV2014} show a mixing bound of $\tilde{O}(n^{2})$
for the ball walk for sampling from a Gaussian distribution restricted
to a convex body). The quadratic barrier seems inherent for sampling
convex bodies given by membership oracles, holding even for cubes
and cylinders for the known walks based on membership oracles. It
has not been surpassed thus far even for explicitly described polytopes.

For a polytope in $\R^{n},$ the Euclidean perspective is natural
and predominant. The approach so far has been to define a process
on the points of the polytope so that the distribution of the current
point quickly approaches the uniform (or a desired stationary distribution).
The difficulty is that for points near the boundary of a body, steps
are necessarily small due to the nature of volume distribution in
high dimension. The Dikin walk departs from the standard perspective
by making the distribution of the next step depend on the distances
of the current point to defining hyperplanes of the polytope. At each
step, the process picks a random point from a suitable ellipsoid that
is guaranteed to almost lie inside. This process adapts to the boundary,
but runs into similar difficulties \textemdash{} the ellipsoid has
to shrink as the point approaches the boundary in order to ensure
that (a) the stationary distribution is close to uniform and (b) the
$1$-step distribution is \emph{smooth}, both necessary properties
for proving rapid convergence to the uniform distribution. Even though
this walk has the appealing property of being affine-invariant, and
thus avoids explicitly \emph{rounding }the polytope, the current best
upper bound for mixing is still quadratic, even for cylinders.

An alternative approach for sampling is the simulation of Brownian
motion with boundary reflection \cite{Harrison85,Dieker10,dalalyan2014theoretical,bubeck2015sampling}.
While there has been much study of this process, several difficulties
arise in turning it into an efficient algorithm. In particular, if
the current point is close to the boundary of the polytope, extra
care is needed in simulation and the process effectively slows down.
However, if it is deep inside the polytope, we should expect that
Brownian motion locally looks like a Gaussian distribution and hence
it is easier to simulate. This suggests that the standard Euclidean
metric, which does not take into account distance to the boundary,
is perhaps not the right notion for getting a fast sampling algorithm. 

In this paper, we combine the use of Stochastic Differential Equations
(SDE) with non-Euclidean geometries (Riemannian manifolds) to break
the quadratic barrier for mixing in polytopes. As a result we obtain
significantly faster sampling algorithms. 

Roughly speaking, our work is based on three key conceptual ideas.
The first is the use of a Riemannian metric rather than the Euclidean
metric. This allows us to scale space as we get closer to the boundary
and incorporate boundary information much more smoothly. This idea
was already used by Narayanan \cite{Narayanan16} to extend the Dikin
walk to more general classes of convex bodies. The relevant metrics
are induced by Hessians of convex barrier functions, objects that
have been put to remarkable use for efficient linear and convex optimization
\cite{nesterov1994interior}. The second idea is to simulate an SDE
on the manifold corresponding to the metric, via its geodesics (shortest-path
curves, to be defined presently). Unlike straight lines, geodesics
bend away from the boundary and this allows us to take larger steps
while staying inside the polytope. The third idea is to use a modification
of standard Brownian motion via a \emph{drift }term, i.e., rather
than centering the next step at the current point, we first shift
the current point deterministically, then take a random step. This
drift term compensates the changes of the step size and this makes
the process closer to symmetric. Taken together, these ideas allow
us to simulate an SDE by a discrete series of ordinary differential
equations (ODE), which we show how to solve efficiently to the required
accuracy. In order to state our contributions and results more precisely,
we introduce some background, under three headings.

\paragraph{Riemannian Geometry. }

A manifold can be viewed as a surface embedded in a Euclidean space.
Each point in the manifold (on the surface), has a tangent space (the
linear approximate of the surface at that point) and a local metric.
For a point $x$ in a manifold $M$, the metric at $x$ is defined
by a positive definite matrix $g(x)$ and the length of a vector $u$
in the tangent space $T_{x}M$ is defined as $\norm u_{x}\defeq u^{T}g(x)u$.
By integration, the length of any curve on the manifold is defined
as $\int\norm{\frac{dc}{dt}}_{c(t)}$ . A basic fact about Riemannian
manifolds is that for any point in the manifold, in any direction
(from the tangent space), there is locally a shortest path (\emph{geodesic})\emph{
}starting in that direction. In Euclidean space, this is just a straight
line in the starting direction. Previous random walks involve generating
a random direction and going along a straight line in that direction.
However such straight lines do not take into account the local geometry,
while geodesics do. We give formal definitions in Section \ref{sec:RG}.

\paragraph{Hessian Manifolds. }

In this paper, we are concerned primarily with Riemannian manifolds
induced by Hessians of smooth (infinitely differentiable) strictly
convex functions. More precisely, for any such function $\phi$, the
local metric (of the manifold induced by $\phi$) at a point $x\in M$
is given by the Hessian of $\phi$ at $x$, i.e., $\nabla^{2}\phi(x).$
Since $\phi$ is strictly convex, its Hessian is positive definite
and hence the Riemannian manifold induced by $\phi$ is well-defined
and is called a Hessian manifold. In the context of convex optimization,
we are interested in a class of convex functions called self-concordant
barriers. Such convex functions are smooth in a precise sense and
blow up on the boundary of a certain convex set. The class of Hessian
manifolds corresponding to self-concordant barriers has been studied
and used to study interior-point methods (IPM) \cite{karmarkar1990riemannian,nesterov2002riemannian,NesterovN08}.

Two barriers of particular interest are the logarithmic barrier and
the Lee-Sidford (LS) barrier \cite{lee2014path}, both defined for
polytopes. For a polytope $Ax>b$, with $A\in\R^{m\times n}$ and
$b\in\R^{m}$, for any $x$ in the polytope, the logarithmic barrier
is given by 
\[
\phi(x)=-\sum_{i}\ln(Ax-b)_{i}.
\]
This barrier is efficient in practice and has a self-concordance parameter
$\nu\le m$. The latter controls the number of iterations of the IPM
for optimization as $O(\sqrt{\nu}).$ The best possible value of $\nu$
is $n$, the dimension. This is achieved up to a constant by the \emph{universal
barrier} \cite{nesterov1994interior}, the \emph{canonical barrier}
\cite{hildebrand2014canonical} and the \emph{entropic barrier} \cite{bubeck2014entropic}.
However, these barrier functions take longer to evaluate (currently
$\Omega(n^{5})$ or more). The LS barrier has been shown to be efficiently
implementable (in time $O\left(nnz(A)+n^{2}\right)$ \cite{lee2015efficient}),
while needing only $\tilde{O}(\sqrt{n})$ iterations \cite{lee2014path}. 

In this paper, we develop tools to analyze general Hessian manifolds
and show how to use them for the logarithmic barrier to obtain a faster
algorithm for sampling polytopes. 

\paragraph{Stochastic Differential Equations. }

Given a self-concordant barrier $\phi$ on a convex set $K$, there
is a unique Brownian motion with drift on the Hessian manifold $M$
induced by $\phi$ that has uniform stationary distribution. In the
Euclidean coordinate system, the SDE is given by
\begin{equation}
dx_{t}=\mu(x_{t})dt+\left(\nabla^{2}\phi(x_{t})\right)^{-1/2}dW_{t}\label{eq:SDE}
\end{equation}
where the first term, called \emph{drift, }is given by: 
\begin{equation}
\mu_{i}(x_{t})=\frac{1}{2}\sum_{j=1}^{n}\frac{\partial}{\partial x_{j}}\left(\left(\nabla^{2}\phi(x_{t})\right)^{-1}\right)_{ij}.\label{eq:drift_term_general}
\end{equation}
This suggests an approach for generating a random point in a polytope,
namely to simulate the SDE. The running time of such an algorithm
depends on the convergence rate of the SDE and the cost of simulating
the SDE in discrete time steps.

Since the SDE is defined on the Riemannian manifold $M$, it is natural
to consider the following geodesic walk: 
\begin{equation}
x^{(j+1)}=\exp_{X^{(j)}}(\sqrt{h}w+\frac{h}{2}\mu(x^{(j)}))\label{eq:geo_walk}
\end{equation}
where $\exp_{x^{(j)}}$ is a map from $T_{x^{(j)}}M$ back to the
manifold, $w$ is a random Gaussian vector on $T_{x^{(j)}}M$, $\mu(x^{(j)})\in T_{x^{(j)}}M$
is the drift term and $h$ is the step size. The coefficient of the
drift term depends on the coordinate system we use; as we show in
Lemma \ref{lem:half-drift}, for the coordinate system induced by
a geodesic, the drift term is $\mu/2$ instead of $\mu$ as in (\ref{eq:SDE}).
The Gaussian vector $w$ has mean $0$ and variance $1$ in the metric
at $x$, i.e. for any $u$, $\E_{w}\langle w,u\rangle_{x}^{2}=\norm u^{2}.$
We write it as $w\sim N_{x}(0,I)$.

It can be shown that this discrete walk converges to (\ref{eq:SDE})
as $h\rightarrow0$ and it converges in a rate faster than the walk
suggested by Euclidean coordinates, namely, $x^{(j+1)}=x^{(j)}+\sqrt{h}w+h\mu(x^{(j)})$.
(Note the drift here is proportional to $h$ and not $h/2$.) This
is the reason we study the geodesic walk.

\subsection{Algorithm}

The algorithm is a discrete-time simulation of the geodesic process
(\ref{eq:geo_walk}). For step-size $h$ chosen in advance, let $p(x\overset{w}{\rightarrow}y)$
be the probability density (in Euclidean coordinates) of going from
$x$ to $y$ using the local step $w$. In general, the stationary
distribution of the geodesic process is not uniform and it is difficult
to analyze the stationary distribution unless $h$ is very small,
which would lead to a high number of steps. To get around this issue,
we use the standard method of rejection sampling to get a uniform
stationary distribution. We call this the geodesic walk (see Algo.
\ref{algo:geodesic_detailed} for full details).

\begin{algorithm2e}

\caption{Geodesic Walk}

\SetAlgoLined

Pick a Gaussian random vector $w\sim N_{x}(0,I)$, i.e. $\E_{w}\langle w,u\rangle_{x}^{2}=\norm u^{2}.$ 

Compute $y=\exp_{x}(\sqrt{h}w+\frac{h}{2}\mu(x))$ where $\mu(x)$
is given by (\ref{eq:drift_term_general}).

Let $p(x\overset{w}{\rightarrow}y)$ be the probability density of
going from $x$ to $y$ using the above step $w$.

Compute a corresponding $w'$ s.t. $x=\exp_{y}(\sqrt{h}w'+\frac{h}{2}\mu(y)).$

With probability $\min\left(1,\frac{p(y\overset{w'}{\rightarrow}x)}{p(x\overset{w}{\rightarrow}y)}\right)$,
go to $y$;

Otherwise, stay at $x$.

\end{algorithm2e}

We show how to implement this in Section \ref{subsec:implementation_general}.
Each iteration of the geodesic walk only uses matrix multiplication
and matrix inverse for $O(\log^{O(1)}m)$ many $O(m)\times O(m)$-size
matrices, and the rejection probability is small, i.e., acceptance
probability is at least a constant in each step.

To implement the geodesic walk, we need to compute the exponential
map $\exp_{x}$, the vector $w'$ and the probability densities $p(x\overset{w}{\rightarrow}y),p(y\overset{w'}{\rightarrow}x)$
efficiently. These computational problems turn out to be similar \textemdash{}
all involve solving ordinary differential equations (ODEs) to accuracy
$1/n^{\Theta(1)}$. Hence, one can view the geodesic walk as reducing
the problem of simulating an SDE (\ref{eq:SDE}) to solving a sequence
of ODEs. Although solving ODEs is well-studied, existing literature
seems quite implicit about the dependence on the dimension and hence
it is difficult to apply it directly. In Section \ref{subsec:CollocationMethod},
we rederive some existing results about solving ODEs efficiently,
but with quantitative estimates of the dependence on the dimension
and desired error. 

\subsection{Main result}

In this paper, we analyze the geodesic walk for the logarithmic barrier.
The convergence analysis will need tools from Riemannian geometry
and stochastic calculus, while the implementation uses efficient (approximate)
solution of ODEs. Both aspects appear to be of independent interest.
For the reader unfamiliar with these topics, we include an exposition
of the relevant background.

We analyze the geodesic walk in general and give a bound on the mixing
time in terms of a set of manifold parameters (Theorem \ref{thm:gen-convergence}).
Applying this to the logarithmic barrier, we obtain a faster sampling
algorithm for polytopes, going below the $mn$ mixing time of the
Dikin walk, while maintaining the same per-step complexity.
\begin{thm}[Sampling with logarithmic barrier]
\label{thm:Log-barrier-sampling}For any polytope $\{Ax\geq b\}$
with $m$ inequalities in $\R^{n}$, the geodesic walk with the logarithmic
mixes in $\tilde{O}\left(mn^{\frac{3}{4}}\right)$ steps from a warm
start, with each step taking $\tilde{O}\left(mn^{\omega-1}\right)$
time to implement. 
\end{thm}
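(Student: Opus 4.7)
The plan is to instantiate the general mixing-time bound of Theorem \ref{thm:gen-convergence} on the Hessian manifold $M_\phi$ induced by the logarithmic barrier $\phi(x) = -\sum_i \log(Ax-b)_i$, and then separately analyze the cost of one iteration using the ODE machinery of Section \ref{subsec:CollocationMethod}. The general theorem expresses the mixing time in terms of intrinsic quantities of $M_\phi$: the self-concordance parameter $\nu$, pointwise bounds on the Riemann curvature tensor and its covariant derivatives, bounds on the drift $\mu$ and its derivatives along geodesics, and control on how long a geodesic emanating from a Gaussian tangent vector remains inside the polytope. Each of these has to be estimated for the log barrier.

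First I would establish the local geometric estimates. Writing $H(x) = A^T S_x^{-2} A$ with $S_x = \Diag(Ax-b)$, the metric, its inverse, and the Christoffel symbols can all be expressed in terms of slacks and leverage scores, and the self-concordance of $\phi$ gives $\nu \le m$ directly. The crucial point is that although $\nu$ scales with $m$, the curvature-type quantities — Ricci lower bound, $\|R\|$, $\|\nabla R\|$, and the norms of the higher derivatives of the metric measured in the local inner product — scale with $n$ rather than $m$. Intuitively, after passing to slack coordinates the manifold is an $n$-dimensional section of the nonnegative orthant whose curvature depends only on leverage scores that sum to $n$. These dimension-$n$ curvature bounds are what eventually produce the $n^{-1/4}$ saving over Dikin.

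With these bounds in hand I would choose the step size $h$ by balancing the two competing constraints appearing in the conductance analysis of Theorem \ref{thm:gen-convergence}: smoothness of the one-step distribution in the starting point (which degrades with curvature), and the requirement that $\exp_x(\sqrt{h}\,w+\tfrac{h}{2}\mu(x))$ remain inside the polytope with constant probability and produce a Metropolis acceptance ratio $\Omega(1)$. Because a geodesic bends away from the boundary, the curvature and third-derivative estimates from the previous step allow a step of squared length $h \approx n^{-1/4}$ in the local metric, whereas the straight-line Dikin step is forced to take $h \approx n^{-1}$. Plugging the optimal $h$ into the conductance/isoperimetry bound from Theorem \ref{thm:gen-convergence} yields $\tilde O(\nu/h) = \tilde O(mn^{3/4})$ steps from a warm start.

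For the per-step complexity, one iteration requires: (a) forming and factoring $H(x) = A^T S_x^{-2} A$ and drawing $w \sim N_x(0,I)$; (b) integrating the geodesic ODE and the drift to accuracy $n^{-\Theta(1)}$; (c) computing the reverse tangent $w'$ together with the Jacobian of $\exp_x$ needed for the acceptance ratio. Step (a) is dominated by linear algebra on the $m\times n$ matrix $A$ and costs $\tilde O(mn^{\omega-1})$. For (b) and (c), the integrand is a smooth rational function of $H$, $H^{-1}$ and their directional derivatives, so the collocation method of Section \ref{subsec:CollocationMethod} converges in $\mathrm{polylog}(n)$ iterations, each requiring a constant number of Hessian-level operations, keeping the total per-step cost at $\tilde O(mn^{\omega-1})$. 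The main obstacle is the middle step: proving that $R$ and $\nabla R$ for the log-barrier Hessian metric are bounded in terms of $n$ alone, with constants small enough that the geodesic step of squared length $n^{-1/4}$ is simultaneously contained in the polytope and produces a one-step distribution stable under $O(1)$ perturbations of the starting point. This is the crux of the improvement over Dikin and is precisely where the manifold structure — inaccessible to straight-line walks — is used.
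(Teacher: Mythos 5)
Your plan follows the same overall route as the paper: apply the general Hessian-manifold conductance theorem, bound its parameters for the log barrier, pick the step size to balance competing constraints, and implement each step by collocation on the geodesic/Jacobi ODEs. But there are two places where the proposal as written goes off the rails.

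First, the numerics on the step size are inconsistent. You assert that the geodesic walk can take a step of size $h \approx n^{-1/4}$ and then conclude a mixing time of $\tilde O(\nu/h) = \tilde O(m n^{3/4})$; these do not agree (plugging $h = n^{-1/4}$ would give $\tilde O(mn^{1/4})$). The paper's analysis pins down $h = \Theta(n^{-3/4})$, with the bottleneck coming from the parameters $D_1$ (smoothness of $\|\mu\|^2$ along the geodesic, $O(n\sqrt h + \sqrt{n\log n})$) and $G_1$ (third derivative of $\log\det g$ along the geodesic, $O(\sqrt h + \sqrt{\log n / n})$), through the constraints $h \lesssim (n^{1/3} D_1^{2/3})^{-1}$ and $h \lesssim (n G_1^{2/3})^{-1}$. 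Those two constraints are what force $h \approx n^{-3/4}$, not the curvature terms, which are slacker. Identifying the binding constraint correctly matters because it tells you what to tighten for any further improvement.

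Second, and more substantively, the parameter bounds you need (in particular $R_1 = O(n^{-1/2})$, $D_1$, $G_1$) are \emph{not} pointwise bounds that hold for every geodesic; taken worst case they would be far too weak. The paper introduces an auxiliary potential $V(\gamma)$, roughly $\|s_{\gamma'}\|_4 / n^{-1/4} + \|s_{\gamma'}\|_\infty / (\sqrt{\log n / n} + \sqrt h)$, and proves (i) that $V(\gamma)$ is small with high probability for the Gaussian geodesic used in a step, (ii) that $V$ is Lipschitz under perturbations of the geodesic, and (iii) that all the smoothness/curvature parameters can be bounded conditionally on $V(\gamma) \le V_0$. The conditioning is what turns a Gaussian concentration fact ($\|s_{\gamma'(0)}\|_4 = O(n^{-1/4})$, $\|s_{\gamma'(0)}\|_\infty = O(\sqrt{\log n / n})$, stable along the geodesic by Lemma \ref{lem:geodesic_4}) into the effective $R_1 = O(n^{-1/2})$, $G_1 = O(\sqrt{h}+\sqrt{\log n / n})$, etc. Your proposal gestures at ``control on how long a geodesic...remains inside the polytope,'' but the auxiliary-function mechanism and the accompanying Lipschitzness in $s$ (needed for the one-to-one correspondence of geodesics under perturbation of the start point, Lemma \ref{lem:one_one_cor}) are the crux and need to be made explicit; without them the curvature estimates cannot be plugged into the general theorem.

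A smaller discrepancy: the $m$ factor in the mixing time does not enter through $\nu$. The general theorem is stated in terms of $G_2 = \sup d(x,y)/d_H(x,y)$, the ratio of the Hessian-metric distance to the Hilbert distance, which feeds into the isoperimetry bound; for the log barrier $G_2 = O(\sqrt m)$ (Lemma \ref{lem:distances_log}), and the conductance is $\Omega(\sqrt h / G_2)$ giving mixing $O(G_2^2/h) = O(m/h)$. This happens to coincide numerically with $\nu/h$ for the log barrier, but the mechanism --- and the quantity you need to bound --- is the Hilbert-metric comparison, not self-concordance.
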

The implementation of each step of sampling is based on an efficient
algorithm for solving high-dimensional ODEs (Theorem \ref{thm:Solving_ODE}).
We state the implementation as a general theorem below, and expect
it will have other applications. As an illustration, we show how Physarum
dynamics (studied in \cite{straszak2015natural}) can be computed
more efficiently (Section \ref{subsec:physarum}). 
\begin{thm}
Let $u(t)\in\Rn$ be the solution of the ODE $\frac{d}{dt}u(t)=F(u(t),t),u(0)=v$.
Suppose we are given $\varepsilon>0$ and $1\leq p\leq\infty$ such
that

\begin{enumerate}
\item There is a degree $d$ polynomial $q$ from $\R$ to $\Rn$ such that
$q(0)=v$ and $\norm{\frac{d}{dt}u(t)-\frac{d}{dt}q(t)}_{p}\leq\varepsilon$
for all $0\leq t\leq1$.
\item For some $L\geq1$, we have that $\norm{F(x,t)-F(y,t)}_{p}\leq L\norm{x-y}_{p}$
for all $x,y$ and $0\leq t\leq1$.
\end{enumerate}
Then, we can compute $u$ such that $\norm{u-u(1)}_{p}=O(\varepsilon)$
in $O(ndL^{3}\log^{2}(dK/\varepsilon))$ time and $O(dL^{2}\log(K/\varepsilon))$
evaluations of $F$ where $K=\max_{x,0\leq t\leq1}\norm{F(x,t)}_{p}$. 
\end{thm}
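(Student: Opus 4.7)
The plan is to solve the ODE via a Chebyshev-collocation Picard iteration, subdividing $[0,1]$ into $\Theta(L)$ subintervals so that the local Picard operator is a contraction on the space of degree-$d$ polynomials. Concretely, partition $[0,1]$ into $N$ equal pieces of length $h = 1/N$ with $N = \Theta(L)$ chosen so that $Lh \le 1/2$ (up to a logarithmic factor absorbed by the projection). On each subinterval $[t_i, t_{i+1}]$, represent candidate solutions as degree-$d$ polynomials sampled at $d+1$ Chebyshev (or Gauss--Lobatto) nodes, and define the local Picard operator
\[
(\mathcal{T}_i p)(t) \;=\; u_i \;+\; \Pi_d\!\left[\int_{t_i}^{t} F(p(s),s)\,ds\right],
\]
where $u_i$ is the value carried over from the previous subinterval and $\Pi_d$ is a projection onto polynomials of degree $d$ (e.g.\ truncation of a Chebyshev expansion). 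Stored at the nodes, both the integration and the projection are fixed linear operations, and one application of $\mathcal{T}_i$ costs $d+1$ evaluations of $F$ plus $O(d^2)$ (or $O(d\log d)$) arithmetic per coordinate of $\R^n$.

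The first analytic task is to show that $\mathcal{T}_i$ is a contraction in the norm $\sup_{t\in[t_i,t_{i+1}]} \norm{\cdot}_p$. The integration operator has operator norm $\le h$, and by the Lipschitz assumption on $F$ together with a uniform bound on $\norm{\Pi_d}$ (a Lebesgue constant), $\mathcal{T}_i$ has Lipschitz constant at most $Lh \cdot \norm{\Pi_d}$; adjusting $N$ by at most a $\log d$ factor yields contraction with factor $\le 1/2$. Hypothesis (1) then supplies an approximate fixed point: the restriction of the global polynomial $q$ to $[t_i,t_{i+1}]$ is a degree-$d$ polynomial whose image under $\mathcal{T}_i$ differs from $q$ by $O(h\varepsilon)$, because $q'$ approximates $u'(t) = F(u(t),t)$ and, by Lipschitzness of $F$, also approximates $F(q(t),t)$ within $O(L\varepsilon)$.

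Combining the contraction with this almost-fixed-point, $O(\log(1/\varepsilon'))$ Picard sweeps bring the iterate within $\varepsilon'$ of the true fixed point of $\mathcal{T}_i$. The concatenated numerical trajectory then solves a perturbed ODE with per-subinterval error $\varepsilon'$, and Gr\"onwall's inequality amplifies errors across the $N$ subintervals by a factor at most $e^L$; choosing $\varepsilon' = \varepsilon\, e^{-L}/N$ therefore requires $O(L + \log(1/\varepsilon))$ iterations per subinterval, for a total of $O(dL^2 \log(K/\varepsilon))$ evaluations of $F$ and $O(ndL^3 \log^2(dK/\varepsilon))$ arithmetic operations, matching the claim.

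The main obstacle I anticipate is the careful bookkeeping of three interacting error sources: (i) the hypothesis on $q$ is only a derivative-approximation statement in $\ell_p$ and must be translated into closeness to a fixed point of the Picard operator; (ii) the operator norm of the projection $\Pi_d$ in the sup-in-$\ell_p$ norm grows like $\log d$, and this must be absorbed without inflating $N$ beyond $\tilde{O}(L)$; and (iii) the Gr\"onwall amplification across $\Theta(L)$ subintervals forces the ``depth'' of Picard iteration to scale as $L + \log(1/\varepsilon)$ rather than just $\log(1/\varepsilon)$, and one must verify that this is compatible with the required final accuracy measured only at $t=1$. Once these are handled cleanly, the per-iteration cost is routine Chebyshev-spectral linear algebra and the stated complexity follows.
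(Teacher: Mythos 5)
Your proposal is the same approach as the paper's: subdivide $[0,1]$ into $\Theta(L)$ pieces of length $\ell\sim 1/L$ so the local collocation--Picard map is a contraction, iterate $O(\log(1/\varepsilon'))$ times per piece, and restart from the computed endpoint, tightening the per-piece tolerance to compensate for the $e^{O(L)}$ error amplification across pieces. Your uniform-tolerance/Gr\"onwall bookkeeping is an equivalent repackaging of the paper's choice to solve the $i$-th piece to accuracy $\varepsilon/\Theta(1)^{1/\ell-i}$, and both lead to the stated $O(ndL^3\log^2(dK/\varepsilon))$ bound.

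The one place where your sketch would, if carried out literally, fall short of the theorem's constants is the Lebesgue-constant point you flag as concern (ii). You write the local operator as $u_i + \Pi_d[\int F(p,\cdot)]$, so you must pay $\|\Pi_d\|\sim\log d$ in the contraction estimate, and your proposed fix (shrink $\ell$ by $\log d$, i.e.\ $N=\Theta(L\log d)$) does \emph{not} get absorbed: the running time scales like $N^3$, so you would pick up $(\log d)^3$, exceeding the $\log^2(dK/\varepsilon)$ in the target bound. The paper avoids the projection entirely: the operator $T$ (Lemma \ref{lem:two_ODE_version}) integrates the degree-$(d-1)$ Lagrange interpolant of $F$ at Chebyshev nodes, which is already a degree-$d$ polynomial, and Lemma \ref{lem:smothness_T} bounds the relevant operator norm via $\max_i\sum_j\bigl|\int_0^{c_i}\phi_j(s)\,ds\bigr|=O(\ell)$ with \emph{no} $\log d$ factor. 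That uniform bound is the content of Lemma \ref{lem:integrate_interpolation}: integration smooths the Lagrange basis enough that $\bigl|\int_{-1}^t\psi_k\bigr|=O(1/d)$, so the sum over $d$ basis functions stays $O(1)$ rather than $O(\log d)$. This is not a cosmetic improvement; it is what makes $N=O(L)$ (rather than $\tilde O(L)$) and hence the stated exponents correct. The rest of your argument---the contraction once $\ell\lesssim 1/L$, using $q$ as an approximate fixed point via $\|q'-F(q,\cdot)\|_p\le(1+L)\varepsilon/\ell$, and the stability of the scheme under perturbation of the initial value (the paper's Lemma \ref{lem:collocation_is_stable})---matches the paper.
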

For the application to the geodesic walk, $L=O(1)$ and we did not
optimize over the dependence on $L$.

\subsection{Discussion and open problems}

At a high level, our algorithm is based on the following sequence
of fairly natural choices. First, we consider a Brownian motion that
gives uniform stationary distribution and such Brownian motion is
unique for a given metric (via the Fokker-Planck equation, \ref{thm:Fokker-Planck}).
Since the set we sample is a polytope, we use the metric given by
the Hessian of a self-concordant barrier, a well-studied class of
metrics in convex optimization. This allows us to reduce the sampling
problem to the problem of simulating an SDE.\footnote{Coincidentally, when we use the best known self-concordant barrier,
the canonical barrier, our SDE becomes a Brownian motion with drift
on an Einstein manifold. This is similar to how physical particles
move under general relativity (an algorithm that has been executed
for over 10 billion years!).} To simulate an SDE, we apply the Milstein method, well-known in that
field. To implement the Milstein method, we perform a change of coordinates
to make the metric locally constant, which greatly simplifies the
Milstein approximation. These coordinates are called normal coordinates
and can be calculated by geodesics (Lemma \ref{lem:half-drift}).
This gives the step of our walk (\ref{eq:geo_walk}).

There are two choices which are perhaps not the most natural. First,
it is unclear whether Hessians of self-concordant barriers are the
best metrics for the sampling problem; after all, these barriers were
designed for solving linear programs. Second, the Milstein method
may not be the best choice for discrete approximation. There are other
numerical methods with better convergence rates for solving SDEs,
such as higher-order Runge-Kutta schemes. However, the latter methods
are very complex and it is not clear how to implement them in $\tilde{O}\left(mn^{\omega-1}\right)$
time. 

There are several avenues for improving the sampling complexity further.
One is to take longer steps and use a higher-order simulation of the
SDE that is accurate up to a larger distance. Another part that is
not tight in the analysis is the isoperimetry. Our analysis incurs
a linear factor in the mixing time due to the isoperimetry. As far
as we know, this factor might be as small as $O(1).$ We make this
precise via the following, admittedly rash generalization of the KLS
hyperplane conjecture. If true, it would directly improve our mixing
time bound to $\tilde{O}\left(n^{\frac{3}{4}}\right).$
\begin{conjecture*}
For any Hessian manifold $M$ with metric $d$ induced by a convex
body K, let the isoperimetric ratio of a subset $S$ w.r.t. $d$ be
defined as 

\[
\psi_{d}(S)=\inf_{\varepsilon>0}\frac{\vol(\{x\in K\setminus S,d(x,S)\le\varepsilon\})}{\varepsilon\cdot\min\{\vol(S),\vol(K\setminus S)\}}
\]
and the isoperimetric ratio of $d$ as $\psi=\inf_{S\subset K}\psi_{d}(S)$.
Then there is a subset $S$ defined as a halfspace intersected with
$K$ with $\psi_{d}(S)=O(\psi).$
\end{conjecture*}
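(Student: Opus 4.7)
The plan is to adapt the localization and stochastic localization machinery developed for the classical KLS conjecture to the Hessian-manifold setting, since the stated conjecture strictly generalizes KLS (the case $\phi(x) = \tfrac{1}{2}\|x\|^2$ recovers Euclidean KLS on $K$). The approach has three phases: (i) reduce the $n$-dimensional Riemannian Cheeger inequality to a family of one-dimensional weighted problems along geodesics; (ii) solve the one-dimensional problem; (iii) transfer the per-needle extremal sets back to a single Euclidean halfspace cut of $K$.

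For step (i), I would use a geodesic needle decomposition of $(K,\nabla^2\phi)$ with its Riemannian volume $\sqrt{\det\nabla^2\phi}\,dx$, in the spirit of Payne--Weinberger, Gromov--Milman, and Klartag. Given a witness set $S$ for $\psi$, the Monge--Kantorovich / guillotine cut associated with the signed measure $\mathbf{1}_S - \mathbf{1}_{K\setminus S}$ disintegrates the volume along geodesics, and the restricted measure on each geodesic is log-concave with respect to arclength because $\phi$ is convex. Step (ii) then reduces to a one-dimensional log-concave isoperimetric problem, for which Bobkov's theorem guarantees that the extremal subsets are half-lines (subintervals on one side of a point), giving a bad direction and bad level on each needle.

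Step (iii) is the main obstacle and is where the geometry peculiar to Hessian manifolds intrudes. A half-line cut on a single geodesic corresponds to a Euclidean halfspace only if geodesics are affine, which fails here: geodesics in a Hessian manifold bend toward the interior, which is precisely the property the geodesic walk exploits in order to take longer steps. One must therefore show that the family of per-needle half-line cuts assembles, up to a constant factor in the Cheeger ratio, into a single Euclidean halfspace intersected with $K$. I would try to absorb the geodesic curvature using the same self-concordance estimates the paper develops for the log-barrier to bound how far a geodesic deviates from its initial tangent direction; on the length scale $1/\psi$ relevant to the Cheeger witness, these estimates should upgrade a needle-wise half-line to a half\-space up to constant distortion of the isoperimetric ratio.

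Finally, since any proof of this conjecture implies at least the best known bound for KLS, I expect that the needle-decomposition route alone cannot reach the conjectured $O(1)$ loss, and that a stochastic-localization argument --- an Eldan-type martingale tilted by linear functionals weighted in the dual Hessian metric $\nabla^2\phi(x)^{-1}$ --- is the realistic path to the full conjecture. Controlling the growth of the covariance of the tilted measure with respect to the Riemannian structure, rather than the Euclidean one, appears to be the key technical ingredient and the principal place where genuinely new ideas beyond the Euclidean KLS program would be required.
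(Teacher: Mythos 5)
This statement is labeled in the paper as a \emph{conjecture}, not a theorem, and the paper offers no proof of it; the authors explicitly describe it as an ``admittedly rash generalization of the KLS hyperplane conjecture.'' There is therefore no argument of the authors' to compare yours against.

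Your write-up correctly identifies this and is candid that it is a research program rather than a proof: you flag that step (iii), assembling per-needle half-line cuts into a single Euclidean halfspace, is the main obstacle, and you explicitly say you do not expect the needle-decomposition route alone to reach the conjectured $O(1)$ loss. That assessment seems right to me, and for more than one reason. First, as you note, the conjecture with $\phi(x) = \tfrac{1}{2}\|x\|^2$ on a convex body recovers (a strong form of) Euclidean KLS, which is itself open; any complete argument would have to resolve that first. Second, the transfer step is genuinely delicate in a way your sketch underestimates: the Bobkov-type one-dimensional result gives per-needle extremizers that are half-lines in the \emph{geodesic} parametrization, and since geodesics in a Hessian manifold curve away from the boundary, the union of the corresponding ``right halves'' need not be convex, let alone a halfspace slab. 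Self-concordance bounds control the bending locally, but the Cheeger witness need not concentrate on a single length scale $1/\psi$, so absorbing the curvature ``up to constant distortion'' is not something the paper's estimates give you, and it is not obvious that it is even true in general. Third, the conjecture as stated asks for a \emph{Euclidean} halfspace to be an $O(1)$-approximate extremizer for a \emph{Riemannian} (Hessian) distance; this mismatch of geometries is precisely what makes the statement nontrivial and what neither classical localization nor Eldan's stochastic localization (which are both set up for Euclidean test sets) directly addresses. So the honest verdict is: your proposal is a reasonable summary of the known toolkit and of where it breaks down, but it does not constitute a proof, and no proof exists in the paper to compare against --- the statement is genuinely open.
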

We note that we are not simultaneously experts on Riemannian geometry,
numerical SDE/ODE and convex geometry; we view our paper as a sampling
algorithm that connects different areas while improving the state-of-the-art.
Although the sampling problem is a harder problem than solving linear
programs, the step size of geodesic walk we use is larger than that
of the short-step interior point method. Unlike many papers on manifolds,
we do not assume positive curvature everywhere. For example, in recent
independent work, Mangoubi and Smith \cite{mangoubi2016rapid} assumed
positive sectional curvature bounded between $0<\mathfrak{m}$ and
$\mathfrak{M}<\infty$, and an oracle to compute geodesics, and analyzed
a geodesic walk for sampling from the uniform distribution on a Riemannian
manifold; while their mixing time bound $O((\mathfrak{M}/m)^{3})$
is dimension-independent, as far as we know, any manifold that approximates
(the double cover of) a polytope well enough would have $\mathfrak{M}/m=\Omega(n)$
and hence does not yield an improvement. 

We hope that the connections revealed in this paper might be useful
for further development of samplers, linear programming solvers and
the algorithmic theory of manifolds. 

\subsection{Outline}

In the next section, we recall the basic definitions and properties
of Riemannian geometry and Hessian manifolds, and relevant facts from
stochastic calculus and complex analysis. We also derive the discrete-time
geodesic walk (in Section \ref{subsec:Derivation-of-GW}), showing
how the formula naturally arises. Key concepts that we use repeatedly
include geodesics, curvature and Jacobi fields. In Section \ref{sec:Convergence},
we prove the convergence guarantee for general Hessian manifolds.
The analysis needs three high-level ingredients: (1) the rejection
probability of the filter is small (2) two points that are close in
manifold distance have the property that their next step distributions
have large overlap (bounded total variation distance) (3) a geometric
isoperimetric inequality that shows that large subsets have large
boundaries. Of these the last is relatively straightforward, relying
on a comparison with the Hilbert metric and existing isoperimetic
inequalities for the latter. For the first two, we first derive an
explicit formula for the one-step probability distribution (Lemma
\ref{lem:prob_formula}). For bounding the rejection probability,
we need to show that this probability is comparable going forwards
and going backwards as computed in the algorithm (Theorem \ref{thm:rejectionprob}).
The one-step overlap is also derived by comparing the transition probabilities
from two nearby points to the same point (Theorem \ref{thm:dTV}).
This comparison and resulting bounds depend on several smoothness
and stability parameters of the manifold. This part also needs an
auxiliary function that controls the change of geodesics locally.
An important aspect of the analysis is understanding how this probability
changes via the Jacobi fields induced by geodesics. Given these ingredients,
the proof of mixing and conductance follows a fairly standard path
(Section \ref{sec:mixing}). As a warm-up, in Section \ref{subsec:Warmup},
we work out the mixing for a hypercube with the logarithmic barrier
\textemdash{} the mixing time is $\tilde{O}\left(n^{\frac{1}{3}}\right)$. 

In subsequent sections, we apply this general theorem to the logarithmic
barrier for a polytope, to prove Theorem \ref{thm:Log-barrier-sampling}.
We bound each of the parameters and use an explicit auxiliary function
that is just a combination of the infinity norm and the $\ell_{4}$
norm. 

The algorithm for solving ODEs (collocation method) is presented and
analyzed in Section \ref{subsec:CollocationMethod}, and this is used
to compute the geodesic and transition probabilities. The main idea
of the analysis is to show that the ODE can be approximated by a low-degee
polynomial, depending on bounds on the derivatives of the solution.
To bound these derivatives, we give some general relations for bounding
higher derivatives (Section \ref{sec:est_var}). As a simple application
of the collocation method, we give a faster convergence bound for
discretized Physarum dynamics. To apply this method for the log barrier,
in Section \ref{sec:Implementation-log-barrier}, we show that the
functions we wish to compute (geodesic, transition probability) are
complex analytic, then apply the derivative estimates of the previous
section and finally bound the time complexity of the collocoation
method.

\pagebreak{}
\section{Background and notation}

Throughout the paper, we use lowercase letter for vectors and vector
fields and uppercase letter for matrices and tensors (this is not
the convention used in Riemannian geometry). We use $e_{k}$ to denote
coordinate vectors. We use $\frac{d}{dt}$ for the usual derivative,
e.g. $\frac{df(c(t))}{dt}$ is the derivative of some function $f$
along a curve $c$ parametrized by $t$, we use $\frac{\partial}{\partial v}$
for the usual partial derivative. We use $D^{k}f(x)[v_{1},v_{2},\cdots,v_{k}]$
for the $k^{th}$ directional derivative of $f$ at $x$ along $v_{1},v_{2},\cdots,v_{k}$.
We use $\nabla$ for the connection (manifold derivative, defined
below which takes into account the local metric), $D_{v}$ for the
directional derivative of vector wrt to the vector (or vector field)
$v$ (again, defined below which takes into account the local metric),
and $D_{t}$ if the parametrization is clear from the context. We
use $g$ for the local metric. Given a point $x\in M$, $g$ is a
matrix with entries $g_{ij}.$ Its inverse has entries $g^{ij}.$
Also $n$ is the dimension, $m$ the number of inequalities, $\gamma$
is a geodesic, and $\phi$ is a smooth convex function. 

\subsection{Basic Definitions of Riemannian geometry \label{sec:RG}}

Here we first introduce basic notions of Riemannian geometry. One
can think of a manifold $M$ as a $n$-dimensional ``surface'' in
$\R^{k}$ for some $k\geq n$. In this paper, we only use a special
kind of manifolds, called Hessian manifold. For these manifolds, many
definition can be defined directly by some mysterious formulas. If
it helps, the reader can use this section merely to build intuition
and use Lemma \ref{lem:Hessian_formula} instead as the formal definition
of various concepts defined here.

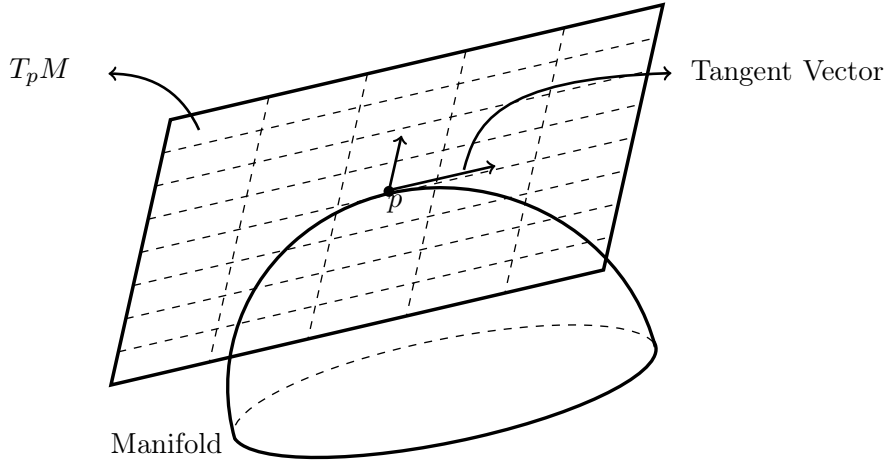
\begin{figure}
\begin{centering}
\begin{tikzpicture}[y=0.80pt, x=0.80pt, yscale=-0.45, xscale=0.5, inner sep=0pt, outer sep=0pt]   \path[draw,fill,line width=0.800pt]     (365.4455,424.6981) ellipse (0.1154cm and 0.1298cm);   \path[draw,line width=1.000pt, <->] (377.3214,366.1266) -- (365.6040,423.4098) -- (466.1347,397.8656);   \path[fill,line width=0.800pt] (364.3816,447.4461) node[above right] (text4209) {$p$};   \path[draw,dash pattern=on 3.00pt off 3.00pt,line width=0.500pt] (251.9807,325.2050) -- (195.7380,604.1067);   \path[draw,dash pattern=on 3.00pt off 3.00pt,line     width=0.500pt] (345.1308,301.0125) -- (288.8882,579.9142);   \path[draw,dash pattern=on 3.00pt off 3.00pt,line     width=0.500pt] (438.2810,276.8200) -- (382.0384,555.7217);   \path[draw,dash pattern=on 3.00pt off 3.00pt,line     width=0.500pt] (531.4312,252.6276) -- (475.1885,531.5293);   \path[draw,dash pattern=on 3.00pt off 3.00pt,line     width=0.500pt] (144.7698,419.1229) -- (610.5207,298.1605);   \path[draw,dash pattern=on 3.00pt off 3.00pt,line     width=0.500pt] (130.7092,488.8483) -- (596.4600,367.8859);   \path[draw,dash pattern=on 3.00pt off 3.00pt,line     width=0.500pt] (116.6485,558.5737) -- (582.3994,437.6114);   \path[draw,dash pattern=on 3.00pt off 3.00pt,line     width=0.500pt] (151.8002,384.2602) -- (617.5510,263.2978);   \path[draw,dash pattern=on 3.00pt off 3.00pt,line     width=0.500pt] (137.7395,453.9856) -- (603.4904,333.0232);   \path[draw,dash pattern=on 3.00pt off 3.00pt,line     width=0.500pt] (123.6789,523.7110) -- (589.4297,402.7486);   \path[draw,dash pattern=on 3.00pt off 3.00pt,line     width=0.500pt] (109.6182,593.4365) -- (575.3691,472.4741);   \path[cm={{0.96789,-0.25138,-0.19768,0.98027,(0.0,0.0)}},draw,line width=1.318pt,rounded corners=0.0000cm] (249.9886,420.5369)     rectangle (731.1910,705.0530); \path[draw,line width=1.000pt,<-] (100.2209,300.8954) .. controls (146.7120,301.3088)     and (171.4692,325.1671) .. (185.9142,360.1494); \path[draw,line width=1.000pt,->] (436.3334,402.0543) .. controls (447.3331,356.6007)     and (463.7234,306.0006) .. (632.6714,300.4063) node[pos=1,right] {\ \ Tangent Vector}; \path[line width=0.800pt] (6.6531,314.0834) node[above right] (text11999) {$T_{p}M$};    \path[line width=0.800pt] (102.2222,700.1400) node[above right] (text4263) {Manifold};   \path[cm={{0.9737,-0.22782,-0.22782,-0.9737,(0.0,0.0)}},draw,line width=1.250pt]     (467.1272,-713.3978)arc(-0.000:60.352:204.634410 and     214.532)arc(60.352:120.704:204.634410 and     214.532)arc(120.704:181.056:204.634410 and 214.532);   \path[cm={{0.96926,-0.24602,0.49025,0.87158,(0.0,0.0)}},draw,line width=1.272pt]     (260.1574,742.2313)arc(-0.000:89.464:204.788350 and     57.803)arc(89.464:178.929:204.788350 and 57.803);   \path[cm={{-0.96926,0.24602,-0.49025,-0.87158,(0.0,0.0)}},draw,dash     pattern=on 3.05pt off 3.05pt,line     width=0.509pt] (149.8904,-745.6280)arc(0.000:89.464:204.788350 and     57.803)arc(89.464:178.929:204.788350 and 57.803); \end{tikzpicture} 
\par\end{centering}
\centering{}\caption{Riemannian Manifold and Tangent Space}
\end{figure}

\begin{enumerate}
\item Tangent space $T_{p}M$: For any point $p$, the tangent space $T_{p}M$
of $M$ at point $p$ is a linear subspace of $\R^{k}$ of dimension
$n$. Intuitively, $T_{p}M$ is the vector space of possible directions
that are tangential to the manifold at $x$. Equivalently, it can
be thought as the first-order linear approximation of the manifold
$M$ at $p$. For any curve $c$ on $M$, the direction $\frac{d}{dt}c(t)$
is tangent to $M$ and hence lies in $T_{c(t)}M$. When it is clear
from context, we define $c'(t)=\frac{dc}{dt}(t)$. For any open subset
$M$ of $\Rn$, we can identify $T_{p}M$ with $\Rn$ because all
directions can be realized by derivatives of some curves in $\Rn$.
\item Riemannian metric: For any $v,u\in T_{p}M$, the inner product (Riemannian
metric) at $p$ is given by $\left\langle v,u\right\rangle _{p}$
and this allows us to define the norm of a vector $\norm v_{p}=\sqrt{\left\langle v,v\right\rangle _{p}}$.
We call a manifold a Riemannian manifold if it is equipped with a
Riemannian metric. When it is clear from context, we define $\left\langle v,u\right\rangle =\left\langle v,u\right\rangle _{p}$.
In $\Rn$ , $\left\langle v,u\right\rangle _{p}$ is the usual $\ell_{2}$
inner product.
\item Differential (Pushforward) $d$: Given a function $f$ from a manifold
$M$ to a manifold $N$, we define $df(x)$ as the linear map from
$T_{x}M$ to $T_{f(x)}N$ such that
\[
df(x)(c'(0))=(f\circ c)'(0)
\]
for any curve $c$ on $M$ starting at $x=c(0)$. When $M$ and $N$
are Euclidean spaces, $df(x)$ is the Jacobian of $f$ at $x$. We
can think of pushforward as a manifold Jacobian, i.e., the first-order
approximation of a map from a manifold to a manifold.
\item Hessian manifold: We call $M$ a Hessian manifold (induced by $\phi$)
if $M$ is an open subset of $\Rn$ with the Riemannian metric at
any point $p\in M$ defined by
\[
\left\langle v,u\right\rangle _{p}=v^{T}\nabla^{2}\phi(p)u
\]
where $v,u\in T_{p}M$ and $\phi$ is a smooth convex function on
$M$.
\item Length: For any curve $c:[0,1]\rightarrow M$, we define its length
by
\[
L(c)=\int_{0}^{1}\norm{\frac{d}{dt}c(t)}_{c(t)}dt.
\]
\item Distance: For any $x,y\in M$, we define $d(x,y)$ be the infimum
of the lengths of all paths connecting $x$ and $y$. In $\Rn$ ,
$d(x,y)=\norm{x-y}_{2}$.
\item Geodesic: We call a curve $\gamma(t):[a,b]\rightarrow M$ a geodesic
if it satisfies both of the following conditions:

\begin{enumerate}
\item The curve $\gamma(t)$ is parameterized with constant speed. Namely,
$\norm{\frac{d}{dt}\gamma(t)}_{\gamma(t)}$ is constant for $t\in[a,b]$.
\item The curve is the locally shortest length curve between $\gamma(a)$
and $\gamma(b)$. Namely, for any family of curve $c(t,s)$ with $c(t,0)=\gamma(t)$
and $c(a,s)=\gamma(a)$ and $c(b,s)=\gamma(b)$, we have that $\left.\frac{d}{ds}\right|_{s=0}\int_{a}^{b}\norm{\frac{d}{dt}c(t,s)}_{c(t,s)}dt=0$.
\end{enumerate}
Note that, if $\gamma(t)$ is a geodesic, then $\gamma(\alpha t)$
is a geodesic for any $\alpha$. Intuitively, geodesics are local
shortest paths. In $\Rn$, geodesics are straight lines. 
\item Exponential map: The map $\exp_{p}:T_{p}M\rightarrow M$ is defined
as
\[
\exp_{p}(v)=\gamma_{v}(1)
\]
where $\gamma_{v}$ is the unique geodesic starting at $p$ with initial
velocity $\gamma_{v}'(0)$ equal to $v$. The exponential map takes
a straight line $tv\in T_{p}M$ to a geodesic $\gamma_{tv}(1)=\gamma_{v}(t)\in M$.
Note that $\exp_{p}$ maps $v$ and $tv$ to points on the same geodesic.
Intuitively, the exponential map can be thought as point-vector addition
in a manifold. In $\R^{n}$, we have $\exp_{p}(v)=p+v$. 
\item Parallel transport: Given any geodesic $c(t)$ and a vector $v$ such
that $\left\langle v,c'(0)\right\rangle _{c(0)}=0$, we define the
parallel transport of $v$ along $c(t)$ by the following process:
Take $h$ to be infinitesimally small and $v_{0}=v$. For $i=1,2,\cdots,1/h$,
we let $v_{ih}$ be the vector orthogonal to $c'(ih)$ that minimizes
the distance on the manifold between $\exp_{c(ih)}(hv_{ih})$ and
$\exp_{c((i-1)h)}(hv_{(i-1)h})$. Intuitively, the parallel transport
finds the vectors on the curve such that their end points are closest
to the end points of $v$. For general vector $v\in T_{c'(0)}$, we
write $v=\alpha c'(0)+w$ and we define the parallel transport of
$v$ along $c(t)$ is the sum of $\alpha c'(t)$ and the parallel
transport of $w$ along $c(t)$. For non-geodesic curve, see the definition
in Fact \ref{fact:basic_RG}. 
\item Orthonormal frame: Given vector fields $v_{1},v_{2},\cdots,v_{n}$
on a subset of $M$, we call $\{v_{i}\}_{i=1}^{n}$ is an orthonormal
frame if $\left\langle v_{i},v_{j}\right\rangle _{x}=\delta_{ij}$
for all $x$. Given a curve $c(t)$ and an orthonormal frame at $c(0)$,
we can extend it on the whole curve by parallel transport and it remains
orthonormal on the whole curve.
\item Directional derivatives and the Levi-Civita connection: For a vector
$v\in T_{p}M$ and a vector field $u$ in a neighborhood of $p$,
let $\gamma_{v}$ be the unique geodesic starting at $p$ with initial
velocity $\gamma_{v}'(0)=v$. Define 
\[
\nabla_{v}u=\lim_{h\rightarrow0}\frac{u(h)-u(0)}{h}
\]
where $u(h)\in T_{p}M$ is the parallel transport of $u(\gamma(h))$
from $\gamma(h)$ to $\gamma(0)$. Intuitively, Levi-Civita connection
is the directional derivative of $u$ along direction $v$, \emph{taking
the metric into account}. In particular, for $\Rn$, we have $\nabla_{v}u(x)=\frac{d}{dt}u(x+tv)$.
When $u$ is defined on a curve $c$, we define $D_{t}u=\nabla_{c'(t)}u$.
In $\Rn$, we have $D_{t}u(\gamma(t))=\frac{d}{dt}u(\gamma(t))$.
We reserve $\frac{d}{dt}$ for the usual derivative with Euclidean
coordinates. 
\end{enumerate}
We list some basic facts about the definitions introduced above that
are useful for computation and intuition.
\begin{fact}
\label{fact:basic_RG}Given a manifold $M$, a curve $c(t)\in M$,
a vector $v$ and vector fields $u,w$ on $M$, we have the following:

\begin{enumerate}
\item (alternative definition of parallel transport) $v(t)$ is the parallel
transport of $v$ along $c(t)$ if and only if $\nabla_{c'(t)}v(t)=0$.
\item (alternative definition of geodesic) $c$ is a geodesic if and only
if $\nabla_{c'(t)}c'(t)=0$. 
\item (linearity) $\nabla_{v}(u+w)=\nabla_{v}u+\nabla_{v}w$.
\item (product rule) For any scalar-valued function f, $\nabla_{v}(f\cdot u)=\frac{\partial f}{\partial v}u+f\cdot\nabla_{v}u$. 
\item (metric preserving) $\frac{d}{dt}\left\langle u,w\right\rangle _{c(t)}=\left\langle D_{t}u,w\right\rangle _{c(t)}+\left\langle u,D_{t}w\right\rangle _{c(t)}$.
\item (torsion free-ness) For any map $c(t,s)$ from a subset of $\R^{2}$
to $M$, we have that $D_{s}\frac{\partial c}{\partial t}=D_{t}\frac{\partial c}{\partial s}$
where $D_{s}=\nabla_{\frac{\partial c}{\partial s}}$ and $D_{t}=\nabla_{\frac{\partial c}{\partial t}}$.
\item (alternative definition of Levi-Civita connection) $\nabla_{v}u$
is the unique linear mapping from the product of vector and vector
field to vector field that satisfies (3), (4), (5) and (6).
\end{enumerate}
\end{fact}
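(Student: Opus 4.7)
My plan is to verify each of the seven items in turn, exploiting the fact that the connection $\nabla_v u$ in the excerpt is defined by parallel transport along geodesics, which in turn was constructed via an infinitesimal minimum-distance condition. The first two items are essentially tautologies once I unpack the definitions: for (1), the defining recipe for parallel transport says that $v_{(i+1)h}$ is chosen (to first order in $h$) so that the displacement from $v_{ih}$ is purely in the direction of $c'$; formally, parallel transport makes the covariant derivative vanish, and conversely $\nabla_{c'(t)}v(t)=0$ is exactly the infinitesimal form of the minimization condition used to define it. For (2), I apply (1) to $v=c'(t)$: the definition of a geodesic as a locally shortest, constant-speed curve translates to the statement that $c'(t)$ is its own parallel transport along $c$, which by (1) is $\nabla_{c'(t)}c'(t)=0$. (One can also derive this from a first-variation computation for the length functional, which gives the classical Euler--Lagrange equations in the metric $g$.)

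Linearity (3) is immediate from linearity of the defining limit, and the product rule (4) follows by writing $f(\gamma(h))u(\gamma(h))=f(\gamma(h))\cdot u(\gamma(h))$, parallel transporting (which is a linear map on $T_pM$ and so commutes with scalar multiplication), subtracting $f(p)u(p)$, and splitting the telescoping difference $f(\gamma(h))u(h)-f(p)u(0)$ into $(f(\gamma(h))-f(p))u(h)+f(p)(u(h)-u(0))$ before taking $h\to 0$. For the metric-preserving property (5), the idea is to reduce to an orthonormal frame along $c$: pick an orthonormal basis at $c(0)$ and parallel transport it to get $\{e_i(t)\}$ along $c$; by construction this frame stays orthonormal (that is the entire point of the minimum-distance definition of parallel transport). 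Expanding $u(t)=\sum u^i(t)e_i(t)$ and $w(t)=\sum w^i(t)e_i(t)$, the inner product is $\sum u^i w^i$ and $D_tu=\sum (u^i)'e_i$, $D_tw=\sum (w^i)'e_i$; then (5) follows from the ordinary product rule.

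For torsion-freeness (6), the key observation is that the construction of $\exp$ and parallel transport is symmetric in the two parameters of a variation: applied to a variation $c(t,s)$, both $D_s\partial_t c$ and $D_t\partial_s c$ reduce, via Fact \ref{fact:basic_RG}(1)--(2) and the characterization of $\exp$ by geodesics, to the same second-order infinitesimal expression in $c(t,s)$, namely the symmetric mixed ``Jacobian'' of the map $c$. Concretely, one expands $c(t+\delta t,s+\delta s)$ up to second order using normal coordinates at $c(t,s)$ (where the Christoffel symbols vanish at the basepoint) and observes the off-diagonal coefficient of $\delta t\,\delta s$ is the same whether one differentiates first in $t$ or in $s$. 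Finally, (7) is the classical fundamental theorem of Riemannian geometry: items (3)--(6) uniquely determine $\nabla$ via the Koszul formula
\[
2\langle \nabla_X Y, Z\rangle = X\langle Y,Z\rangle + Y\langle X,Z\rangle - Z\langle X,Y\rangle + \langle [X,Y],Z\rangle - \langle [X,Z],Y\rangle - \langle [Y,Z],X\rangle,
\]
obtained by cyclically applying (5) and using (6) to rewrite $\nabla_X Y - \nabla_Y X = [X,Y]$; the right-hand side is manifestly independent of any choice of connection, so any $\nabla$ satisfying (3)--(6) is forced to coincide with this formula.

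The step requiring the most care is (6): the other items are either formally immediate from the infinitesimal definitions or are standard consequences of them, whereas torsion-freeness is where one actually uses that the connection was built from a \emph{symmetric} object (geodesics of a metric, or equivalently the symmetric Christoffel symbols $\Gamma^k_{ij} = \Gamma^k_{ji}$), and this is where a careful second-order expansion of $\exp$ in a two-parameter variation is needed.
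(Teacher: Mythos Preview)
Your approach differs substantially from the paper's. The paper proves this fact only for Hessian manifolds, taking the coordinate formula of Lemma~\ref{lem:Hessian_formula},
\[
\nabla_v u=\sum_{ik}v_i\frac{\partial u_k}{\partial x_i}e_k+\sum_{ijk}v_i u_j\Gamma^k_{ij}e_k,\qquad \Gamma^k_{ij}=\tfrac12\sum_l g^{kl}\phi_{ijl},
\]
as the \emph{definition} of $\nabla$ and verifying (3)--(6) by direct symbolic computation (with (2) then derived from (5) and (6) via an explicit first-variation computation; (1) and (7) are skipped). Torsion-freeness, for instance, is one line: the symmetry $\Gamma^k_{ij}=\Gamma^k_{ji}$ is manifest from $\phi_{ijl}=\phi_{jil}$. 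Your route is abstract, arguing from the paper's geometric definitions (parallel transport via the infinitesimal minimization in item~9, $\nabla$ as a limit of transported vectors). If it worked, it would be more general; the paper's route is narrower but fully concrete.

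However, your argument for (5) is circular. You invoke ``by construction this frame stays orthonormal'' to conclude metric compatibility, but the paper's only justification that parallel transport of an orthonormal frame stays orthonormal is precisely (5) itself (see the proof of Lemma~\ref{lem:formula_dexp}). The minimum-distance definition in item~9 is stated one vector at a time and says nothing directly about inner products of two \emph{different} transported vectors; extracting metric compatibility from that definition is nontrivial and is the content of (5), not an input to it. Similarly, your sketch for (6) appeals to normal coordinates ``where the Christoffel symbols vanish at the basepoint,'' but that vanishing already uses $\Gamma^k_{ij}=\Gamma^k_{ji}$: in normal coordinates, straight lines through the origin are geodesics, which gives only $\Gamma^k_{ij}v^iv^j=0$ for all $v$, hence $\Gamma^k_{(ij)}=0$; concluding $\Gamma^k_{ij}=0$ requires the symmetry you are trying to prove. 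The paper avoids both issues because the explicit Hessian formula makes both the symmetry of $\Gamma$ and the identity $\partial_t(u^Tg w)=\langle D_tu,w\rangle+\langle u,D_tw\rangle$ available by direct calculation.
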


\subsubsection{Curvature}

Here, we define various notions of curvature. Roughly speaking, they
measure the amount by which a manifold deviates from Euclidean space. 

Given vector $u,v\in T_{p}M$, in this section, we define $uv$ be
the point obtained from moving from $p$ along direction $u$ with
distance $\norm u_{p}$ (using geodesic), then moving along direction
``$v$'' with distance $\norm v_{p}$ where ``$v$'' is the parallel
transport of $v$ along the path $u$. In $\Rn$, $uv$ is exactly
$p+u+v$ and hence $uv=vu$, namely, parallelograms close up. For
a manifold, parallelograms almost close up, namely, $d(uv,vu)=o(\norm u\norm v)$.
This property is called being \emph{torsion-free. }

\begin{figure}
\begin{centering}
\hfill{}\begin{tikzpicture}[y=0.80pt, x=0.80pt, yscale=-0.500000, xscale=0.500000] \path[cm={{1.0,0.0,-0.75034,0.66105,(0.0,0.0)}},draw,line width=1pt] (822.3431,751.9528) rectangle (1035.9486,982.7806);  \path[draw,dash pattern=on 3pt off 1pt,line width=1pt] (84.2620,650.0331) -- (83.9450,426.2917)node[pos=1,above]{$w$};  \path[draw,dash pattern=on 3pt off 1pt,line width=1pt] (257.9406,497.6764) -- (218.2036,276.8737);  \path[draw,dash pattern=on 3pt off 1pt,line width=1pt](83.6129,425.7508) -- (217.7519,278.1314);  \path[draw,dash pattern=on 3pt off 1pt,line width=1pt] (336.4017,428.4870) -- (84.0520,426.3433);  \path[draw,dash pattern=on 3pt off 1pt,line width=1pt] (471.5379,495.9237) -- (393.2490,285.3465);  \path[draw,dash pattern=on 3pt off 1pt,line width=1pt] (337.0091,428.1680) -- (298.6542,649.2150);  \path[draw,dash pattern=on 3pt off 1pt,line width=1pt] (547.9405,283.9299) -- (472.4714,495.2052) node[pos=0,above]{$uvw$};  \path[draw,dash pattern=on 3pt off 1pt,line width=1pt] (336.9567,428.4480) -- (546.8298,284.5220);  \path[draw,dash pattern=on 3pt off 1pt,line width=1pt] (393.9144,285.7962) -- (218.4810,277.6440) node[pos=0,above]{$vuw$};  \path[draw,line width=1pt,->] (119.3991,619.2995) -- (116.8834,548.4308);  \path[draw,line width=1pt,->] (154.5278,588.5411) -- (149.5330,517.8039);  \path[draw,line width=1pt,->] (189.3156,557.6874) -- (181.8457,487.1686);  \path[draw,line width=1pt,->] (224.1014,526.8036) -- (214.2523,456.5776);  \path[draw,line width=1pt,->] (257.8481,497.5127) -- (245.8148,430.1965)  node[pos=0,below]{$v$};  \path[draw,line width=1pt,->] (84.2972,650.0581) -- (84.3150,579.1448) node[pos=0,below]{$p$};  \path[draw,line width=1pt,->] (127.1485,649.9085) -- (129.6845,579.0406);  \path[draw,line width=1pt,->] (170.6047,649.9987) -- (175.5314,579.2568);  \path[draw,line width=1pt,->] (255.2158,649.6706) -- (265.0865,579.4476);  \path[draw,line width=1pt,->] (298.7792,649.4337) -- (311.0888,579.5970) node[pos=0,below]{$u$};  \path[draw,line width=1pt,->] (213.2858,649.4170) -- (220.7998,578.9029);  \path[draw,line width=1pt,<-] (286.1249,427.4137) -- (300.9963,496.7501);  \path[draw,line width=1pt,<-] (325.7382,428.2247) -- (343.2012,496.9542);  \path[draw,line width=1pt,<-] (366.7538,429.0345) -- (386.4546,497.1562);  \path[draw,line width=1pt,<-] (447.2251,430.3964) -- (471.8095,496.9118);  \path[draw,line width=1pt,<-] (407.8654,429.3206) -- (429.6094,496.8180);  \path[draw,line width=1pt,->] (334.1954,618.2156) -- (348.8285,548.8285);  \path[draw,line width=1pt,->] (368.5949,587.8981) -- (385.8054,519.1049);  \path[draw,line width=1pt,->] (402.9377,557.6584) -- (422.4474,489.4817);  \path[draw,line width=1pt,->] (437.9636,526.8325) -- (459.9720,459.4209);  \path[draw,line width=1pt,->] (472.2228,496.6352) -- (496.5315,430.0186) node[pos=0,right]{$vu \sim uv$};
\end{tikzpicture}\hfill{}\begin{tikzpicture}[y=0.80pt, x=0.80pt, yscale=-0.350000, xscale=0.350000]   \path[draw=black,dash pattern=on 4.80pt off 4.80pt,line width=1pt] (57.4689,451.8057)     .. controls (144.5748,662.0838) and (233.2419,723.7170) ..     (343.5827,815.6919);   \path[draw=black,dash pattern=on 4.80pt off 4.80pt,line width=1pt]     (260.8495,204.7092) .. controls (415.6174,263.7272) and (482.7235,343.1355) ..     (566.1984,697.0648);   \path[draw=black,dash pattern=on 4.80pt off 4.80pt,line width=1pt]     (195.2339,267.8478) .. controls (342.2029,327.2485) and (453.0993,613.5705) ..     (453.0993,613.5705);   \path[cm={{0.82919,-0.55897,0.08056,0.99675,(0.0,0.0)}},draw=black,fill=black!20,line width=1pt]     (275.2935,611.9233) ellipse (5.5444cm and 1.6844cm)  node{$S(0)$};   \path[cm={{0.78678,-0.61724,0.14521,0.9894,(0.0,0.0)}},draw=black,fill=black!20,line width=1pt]     (359.8904,908.7310) ellipse (4.3573cm and 0.9148cm) node{$S(t)$};   \path[draw=black,dash pattern=on 4.80pt off 4.80pt,line width=1pt]     (110.5263,384.6930) .. controls (128.3208,432.2078) and (146.1153,460.4194) ..     (163.9098,486.1968);   \path[draw=black,dash pattern=on 4.80pt off 4.80pt,line width=1pt]     (212.0300,556.8728) .. controls (254.8015,622.4465) and (297.5731,656.1454) ..     (340.3447,705.7458);   \path[draw=black,dash pattern=on 4.80pt off 4.80pt,line width=1pt]     (375.6279,742.0559) .. controls (422.9963,793.1837) and (422.9963,793.1837) ..     (422.9963,793.1837);   \path[draw=black,dash pattern=on 4.80pt off 4.80pt,line width=1pt]     (475.1798,661.8767) .. controls (490.6728,697.0880) and (490.6728,697.0880) ..     (490.6728,697.0880);   \path[draw,line width=2pt,->]     (116.1725,572.8793) .. controls (188.7667,685.5475) and (188.7667,685.5475) ..     (188.7667,685.5475) node[left]{$v\ $};   \path[draw,line width=2pt,->]     (212.0418,557.4682) .. controls (299.5113,668.3450) and (299.5113,668.3450) ..     (299.5113,668.3450) node[left]{$v\ $};   \path[draw,line width=2pt,->]     (316.8103,369.1200) .. controls (394.5018,485.0058) and (394.5018,485.0058) ..     (394.5018,485.0058) node[left]{$v\ $};   \path[draw,line width=2pt,->]     (438.8704,337.6744) .. controls (512.3536,451.2406) and (512.3536,451.2406) ..     (512.3536,451.2406) node[left]{$v\ $}; \end{tikzpicture}\hfill{}
\par\end{centering}
\centering{}\caption{Riemann curvature tensor measures the deviation of parallelepipeds
(\ref{eq:Ruvw_explain}) and Ricci curvature measures the change of
volumes (\ref{eq:Ric_explain}). The illustrations are inspired by
\cite{ollivier2013visual}.}
\end{figure}
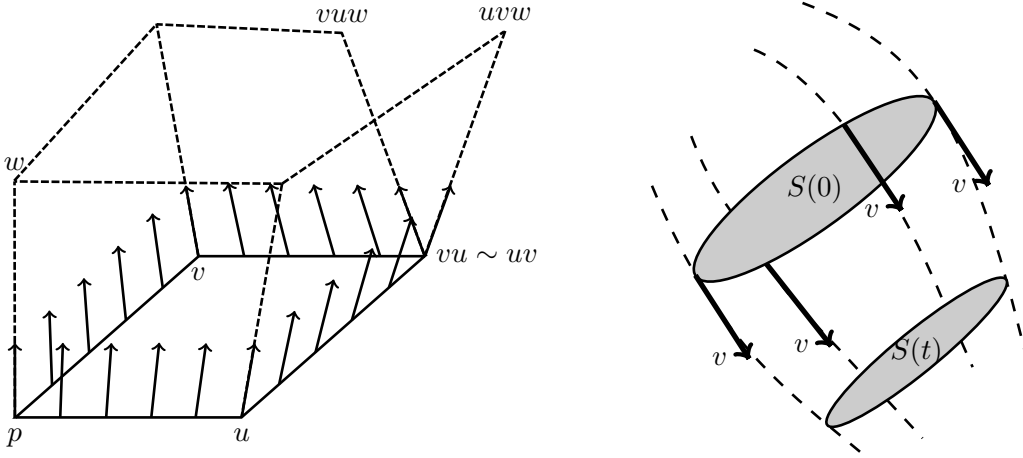

\begin{enumerate}
\item Riemann curvature tensor: Three-dimensional parallelepipeds might
not close up, and the curvature tensor measures how far they are from
closing up. Given vector $u,v,w\in T_{p}M$, we define $uvw$ as the
point obtained by moving from $uv$ along direction ``$w$'' for
distance $\norm w_{p}$ where ``$w$'' is the parallel transport
of $w$ along the path $uv$. In a manifold, parallelepipeds do not
close up and the Riemann curvature tensor how much $uvw$ deviates
from $vuw$. Formally, for vector fields $v$, $w$, we define $\tau_{v}w$
be the parallel transport of $w$ along the vector field $v$ for
one unit of time. Given vector field $v,w,u$, we define the Riemann
curvature tensor by 
\begin{equation}
R(u,v)w=\left.\frac{d}{ds}\frac{d}{dt}\tau_{su}^{-1}\tau_{tv}^{-1}\tau_{su}\tau_{tv}w\right|_{t,s=0}.\label{eq:Ruvw_explain}
\end{equation}
Riemann curvature tensor is a tensor, namely, $R(u,v)w$ at point
$p$ depends only on $u(p)$, $v(p)$ and $w(p)$. 
\item Ricci curvature: Given a vector $v\in T_{p}M$, the Ricci curvature
$\text{Ric}(v)$ measures if the geodesics starting around $p$ with
direction $v$ converge together. Positive Ricci curvature indicates
the geodesics converge while negative curvature indicates they diverge.
Let $S(0)$ be a small shape around $p$ and $S(t)$ be the set of
point obtained by moving $S(0)$ along geodesics in the direction
$v$ for $t$ units of time. Then, 
\begin{equation}
\text{vol}S(t)=\text{vol}S(0)(1-\frac{t^{2}}{2}\text{Ric}(v)+\text{smaller terms}).\label{eq:Ric_explain}
\end{equation}
Formally, we define
\[
\text{Ric}(v)=\sum_{u_{i}}\left\langle R(v,u_{i})u_{i},v\right\rangle 
\]
where $u_{i}$ is an orthonormal basis of $T_{p}M$. Equivalently,
we have $\text{Ric}(v)=\E_{u\sim N(0,I)}\left\langle R(v,u)u,v\right\rangle $.
For $\R^{n}$, $\text{Ric}(v)=0$. For a sphere in $n+1$ dimension
with radius $r$, $\text{Ric}(v)=\frac{n-1}{r^{2}}\norm v^{2}$.
\end{enumerate}
\begin{fact}[Alternative definition of Riemann curvature tensor]
\label{fact:formula_R} %
Given any $M$-valued function $c(t,s)$, we have vector fields $\frac{\partial c}{\partial t}$
and $\frac{\partial c}{\partial s}$ on $M$. Then, for any vector
field $z$, 
\[
R(\frac{\partial c}{\partial t},\frac{\partial c}{\partial s})z=\nabla_{\frac{\partial c}{\partial t}}\nabla_{\frac{\partial c}{\partial s}}z-\nabla_{\frac{\partial c}{\partial s}}\nabla_{\frac{\partial c}{\partial t}}z.
\]
Equivalently, we write $R(\partial_{t}c,\partial_{s}c)z=D_{t}D_{s}z-D_{s}D_{t}z$.
\end{fact}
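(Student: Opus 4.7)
The plan is to derive the claim as an instance of the standard curvature identity
\[
R(X,Y)Z \;=\; \nabla_X \nabla_Y Z - \nabla_Y \nabla_X Z - \nabla_{[X,Y]}Z,
\]
valid for arbitrary vector fields $X, Y, Z$ on $M$, and then to observe that for the coordinate-like fields $X = \partial_t c$ and $Y = \partial_s c$ coming from the parameterized surface $c(t,s)$, the Lie bracket $[X,Y]$ vanishes, so the last term drops out and leaves the stated formula $D_t D_s Z - D_s D_t Z$.

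For the main identity, I would Taylor expand the loop definition (\ref{eq:Ruvw_explain}) to second order. By Fact \ref{fact:basic_RG}(1), parallel transport along a curve with tangent $v$ is characterized by $\nabla_v Z = 0$, so along the geodesic generated by $Y$ one has $\tau_{tY} Z = Z - t \nabla_Y Z + \tfrac{1}{2} t^2 \nabla_Y \nabla_Y Z + O(t^3)$, and analogously for the other three transports. Composing the four transports $\tau_{sX}^{-1} \tau_{tY}^{-1} \tau_{sX} \tau_{tY}$ and collecting the mixed $st$ coefficient at $t=s=0$, the pure $s^{2}$ and $t^{2}$ contributions cancel against their inverses, what survives is $\nabla_X \nabla_Y Z - \nabla_Y \nabla_X Z$, together with one additional correction: the four consecutive geodesic motions do not close up but end at a point displaced from the start by $st\,[X,Y] + O((s+t)^3)$, and the extra parallel transport needed to bring this endpoint back to the base contributes precisely $-\nabla_{[X,Y]}Z$.

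For the surface case, I would verify $[X,Y]=0$ directly. In Euclidean coordinates on the ambient $\R^{n}$, the components of $\partial_t c$ and $\partial_s c$ are partial derivatives of $c(t,s)$, so for every smooth scalar $f$,
\[
[\partial_t c,\partial_s c]f \;=\; \partial_t c\bigl(\partial_s c(f)\bigr) - \partial_s c\bigl(\partial_t c(f)\bigr) \;=\; \frac{\partial^{2}(f\circ c)}{\partial t\,\partial s} - \frac{\partial^{2}(f\circ c)}{\partial s\,\partial t} \;=\; 0
\]
by equality of mixed partials. Substituting $[X,Y]=0$ into the general identity and rewriting $\nabla_{\partial_t c}$ as $D_t$ and $\nabla_{\partial_s c}$ as $D_s$ yields the claim. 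As a sanity check, one can also see $D_s D_t Z - D_t D_s Z$ is tensorial in $Z$ — a routine consequence of Fact \ref{fact:basic_RG}(3)--(6) and the torsion-free property $D_s \partial_t c = D_t \partial_s c$ — matching the fact that $R$ is a tensor.

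The main obstacle is the careful second-order bookkeeping of the four composed parallel transports together with the endpoint-displacement correction, in particular tracking signs through the inverses $\tau_{sX}^{-1}$ and $\tau_{tY}^{-1}$. This can be bypassed entirely by taking the commutator formula as the textbook definition of $R$ (as suggested by the commented-out alternative characterization appearing just before the fact), in which case the proof collapses to the one-line verification that $[\partial_t c,\partial_s c] = 0$.
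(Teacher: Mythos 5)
The paper states this as a \emph{Fact} and supplies no proof (it is a standard result in Riemannian geometry, essentially Lemma 4.1 of Chapter 4 of do Carmo or Proposition 7.5 of Lee), so there is no ``paper's own proof'' to compare against. Your proposal takes the natural textbook route: invoke the commutator identity $R(X,Y)Z=\nabla_X\nabla_Y Z-\nabla_Y\nabla_X Z-\nabla_{[X,Y]}Z$ and kill the bracket term. That is the right idea, but two steps deserve more care.

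First, the claim $[\partial_t c,\partial_s c]=0$ is informal as written. A Lie bracket of vector fields is defined on $M$, but $\partial_t c$ and $\partial_s c$ are vector fields \emph{along the map} $c:\R^2\to M$; in general (e.g.\ when the rank of $dc$ drops) they do not extend to vector fields on an open subset of $M$, and $\partial_s c(f)$ is a function on the parameter domain rather than on $M$, so the composition $\partial_t c(\partial_s c(f))$ is not literally the Lie bracket acting on $f$. The rigorous replacement for ``$[\partial_t c,\partial_s c]=0$'' is precisely the torsion-free identity $D_s\partial_t c=D_t\partial_s c$ (Fact \ref{fact:basic_RG}(6)), which holds for the pullback connection on $c^*TM$ with no rank assumption. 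The cleanest full argument is therefore a direct local-coordinate computation: write $z$ along $c$ in coordinates, expand both $D_tD_s z$ and $D_sD_t z$ using the Christoffel symbols, and observe that all first-order terms cancel by equality of mixed partials, leaving exactly the curvature components. Alternatively, if you want to use the bracket identity, you should either restrict to the case where $c$ is a local immersion so that $\partial_t c,\partial_s c$ extend to commuting coordinate fields, or argue that tensoriality of $R$ lets you evaluate pointwise using any local extension.

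Second, your Taylor expansion of the loop definition (\ref{eq:Ruvw_explain}) is only a sketch; the bookkeeping of the four transports and the endpoint-displacement correction (which produces $-\nabla_{[X,Y]}Z$) requires care with signs and with the order in which $d/dt$ and $d/ds$ are taken. As you note, this can be sidestepped by taking the commutator formula as the definition of $R$, which appears to be the convention the paper intends (and is consistent with Lemma \ref{lem:Hessian_formula}). With that starting point plus the torsion-free fact, your argument closes correctly.
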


\begin{fact}
\label{fact:sym_cuv}Given vector fields $v,u,w,z$ on $M$,
\[
\left\langle R(v,u)w,z\right\rangle =\left\langle R(w,z)v,u\right\rangle =-\left\langle R(u,v)w,z\right\rangle =-\left\langle R(v,u)z,w\right\rangle .
\]
\end{fact}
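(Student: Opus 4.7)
My plan is to establish the four claimed symmetries in the order (i) antisymmetry in the first pair, (ii) antisymmetry in the second pair, and (iii) the interchange symmetry, since each relies on the preceding ones and on facts already stated in the excerpt. To have vector fields available rather than just vectors at a point, I would start by fixing a smooth map $c(t,s,r)$ from a neighborhood of $0\in\R^{3}$ into $M$ with $\partial_{t}c|_{0}=v$, $\partial_{s}c|_{0}=u$, $\partial_{r}c|_{0}=w$, and extend $z$ to any smooth vector field taking the prescribed value at $p$. All four identities are tensorial in $u,v,w,z$ (as stated in the excerpt), so the result at $p$ does not depend on the extensions.

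The first antisymmetry $\langle R(v,u)w,z\rangle=-\langle R(u,v)w,z\rangle$ is immediate from Fact \ref{fact:formula_R}: swapping the roles of the $t$- and $s$-coordinates in $c(t,s)$ replaces $D_{t}D_{s}z-D_{s}D_{t}z$ with its negative. For the second antisymmetry, I would prove the equivalent statement $\langle R(v,u)z,z\rangle=0$. Using the metric-preserving property (Fact \ref{fact:basic_RG}(5)) twice,
\[
\partial_{t}\partial_{s}\langle z,z\rangle=2\langle D_{t}D_{s}z,z\rangle+2\langle D_{s}z,D_{t}z\rangle
\]
and the analogous expression with $t,s$ swapped. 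Equating these (since ordinary mixed partials of a scalar agree) gives $\langle (D_{t}D_{s}-D_{s}D_{t})z,z\rangle=0$, and polarizing $z\mapsto w+z$ yields $\langle R(v,u)w,z\rangle+\langle R(v,u)z,w\rangle=0$.

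For the interchange symmetry I would first prove the first Bianchi identity $R(X,Y)Z+R(Y,Z)X+R(Z,X)Y=0$, where $X=\partial_{t}c$, $Y=\partial_{s}c$, $Z=\partial_{r}c$. By the torsion-free property (Fact \ref{fact:basic_RG}(6)) applied pairwise, $D_{X}Y=D_{Y}X$, $D_{Y}Z=D_{Z}Y$ and $D_{Z}X=D_{X}Z$. Expanding the cyclic sum using Fact \ref{fact:formula_R} and regrouping turns it into $D_{X}(D_{Y}Z-D_{Z}Y)+D_{Y}(D_{Z}X-D_{X}Z)+D_{Z}(D_{X}Y-D_{Y}X)$, which vanishes by the three torsion-free identities. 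With Bianchi in hand, I would apply it four times, once to each cyclic permutation of $(u,v,w,z)$, and add the four equations. Using the two antisymmetries already proved to pair up terms, most terms cancel in pairs and the remainder collapses to $2\langle R(v,u)w,z\rangle-2\langle R(w,z)v,u\rangle=0$, which is the desired interchange symmetry.

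The main obstacle is the interchange identity, because the first Bianchi identity needs a three-parameter family to access torsion-free-ness simultaneously for each pair of variables; once that algebraic skeleton is set up, everything reduces to combining Fact \ref{fact:formula_R}, Fact \ref{fact:basic_RG}(5)--(6), and a bookkeeping computation on the four permuted Bianchi identities.
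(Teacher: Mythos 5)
Your proof is correct. The paper states Fact \ref{fact:sym_cuv} without proof, treating it as a standard result of Riemannian geometry, so there is no paper argument to compare against; what you give is the canonical textbook derivation of the algebraic symmetries of the curvature tensor. Your three-parameter family $c(t,s,r)$ is exactly the right device to make Fact \ref{fact:formula_R} and the torsion-free identity from Fact \ref{fact:basic_RG}(6) simultaneously applicable to all three pairs of directions, which is what makes the first Bianchi identity fall out cleanly; the antisymmetry in the second pair via $\langle R(v,u)z,z\rangle = 0$ together with polarization, and the subsequent ``add the four cyclically permuted Bianchi identities'' bookkeeping, are all carried out correctly (the cross terms cancel using the two antisymmetries and the remainder is indeed $2\langle R(v,u)w,z\rangle - 2\langle R(w,z)v,u\rangle$ after relabeling). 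One could alternatively verify the identities by direct computation from the explicit Hessian-manifold formula $R_{klij}=\frac{1}{4}\sum_{pq}g^{pq}(\phi_{jkp}\phi_{ilq}-\phi_{ikp}\phi_{jlq})$ of Lemma \ref{lem:Hessian_formula}, which in that special case makes all four symmetries visible by inspection, but your coordinate-free argument is more general and is the one a reader would expect.
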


\subsubsection{Jacobi field\label{subsec:Jacobi-field}}

In this paper, we often study the behavior of a family of geodesics.
One crucial fact we use is that the change of a family of geodesics
satisfies the following equation, called the Jacobi equation:
\begin{thm}[{\cite[Thm 4.2.1]{jost2008riemannian}}]
\label{thm:Jacobi_equation}Let $c:[0,\ell]\rightarrow M$ be a geodesic
and $c(t,s)$ be a variation of $c(t)$ (i.e. $c(t,0)=c(t)$) such
that $c_{s}(t)\defeq c(t,s)$ is a geodesic for all $s$. Then $u(t)\defeq\left.\frac{\partial}{\partial s}c(t,s)\right|_{s=0}$
satisfies the following equation
\[
D_{t}D_{t}u+R(u,\frac{dc}{dt})\frac{dc}{dt}=0
\]
where $R(\cdot,\cdot)\cdot$ is Riemann curvature tensor defined before.
Conversely, any vector field $V$ on $c(t)$ satisfying the equation
\[
D_{t}D_{t}V+R(V,\frac{dc}{dt})\frac{dc}{dt}=0
\]
can be obtained by a variation of $c(t)$ through geodesics. We call
any vector field satisfying this equation a Jacobi field.
\end{thm}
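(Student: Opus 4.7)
The plan is to derive the Jacobi equation by differentiating the geodesic equation for $c_s$ in the transverse direction $s$, and to prove the converse by an ODE existence/uniqueness argument combined with a direct construction of the variation through the exponential map.

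For the forward direction, I start from the fact that each $c_s$ is a geodesic, which by Fact \ref{fact:basic_RG}(2) means
\[
D_{t}\,\partial_{t}c = 0
\]
for all $s$. Applying $D_{s}$ to both sides gives $D_{s}D_{t}\,\partial_{t}c = 0$. I then swap the order of $D_s$ and $D_t$ using the curvature identity in Fact \ref{fact:formula_R}:
\[
D_{s}D_{t}\,\partial_{t}c \;=\; D_{t}D_{s}\,\partial_{t}c \;+\; R(\partial_{s}c,\partial_{t}c)\,\partial_{t}c.
\]
By the torsion-free property (Fact \ref{fact:basic_RG}(6)), $D_{s}\,\partial_{t}c = D_{t}\,\partial_{s}c$, so $D_{t}D_{s}\,\partial_{t}c = D_{t}D_{t}\,\partial_{s}c$. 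Combining and evaluating at $s=0$ (using $\partial_{s}c|_{s=0}=u$ and $\partial_{t}c|_{s=0}=dc/dt$) yields $D_{t}D_{t}u + R(u,dc/dt)\,dc/dt = 0$, which is the Jacobi equation.

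For the converse, I note that the Jacobi equation is a second-order linear ODE in $u$ along $c$ (once one fixes an orthonormal parallel frame along $c$ via parallel transport, the equation becomes a standard linear ODE in $\mathbb{R}^{n}$). By standard ODE theory, a solution $V$ is uniquely determined by $V(0)$ and $D_t V(0)$. It therefore suffices to exhibit \emph{some} variation through geodesics whose associated Jacobi field matches these initial data. To do so, pick any smooth curve $\alpha:(-\varepsilon,\varepsilon)\to M$ with $\alpha(0)=c(0)$ and $\alpha'(0)=V(0)$. Let $W(s)\in T_{\alpha(s)}M$ be the vector field along $\alpha$ defined by parallel transporting $c'(0)$ along $\alpha$ and then adding $s$ times the parallel transport of $D_{t}V(0)$ along $\alpha$; in particular $W(0)=c'(0)$ and $\nabla_{s}W|_{s=0}=D_{t}V(0)$. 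Define the variation
\[
c(t,s) \;\defeq\; \exp_{\alpha(s)}\!\bigl(t\,W(s)\bigr).
\]
Each $c_{s}(\cdot)$ is a geodesic by the definition of $\exp$, so the forward direction already proved applies and gives that $\tilde{V}(t)\defeq \partial_{s}c(t,s)|_{s=0}$ is a Jacobi field along $c$.

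It remains to verify that $\tilde{V}$ has the same initial conditions as $V$. At $t=0$, $c(0,s)=\alpha(s)$, so $\tilde{V}(0)=\alpha'(0)=V(0)$. For the derivative, observe that $\partial_{t}c(0,s)=W(s)$, and by torsion-freeness, $D_{t}\tilde{V}(0) = D_{t}\partial_{s}c|_{t=s=0} = D_{s}\partial_{t}c|_{t=s=0} = \nabla_{s}W|_{s=0} = D_{t}V(0)$. Hence $\tilde{V}$ and $V$ solve the same linear second-order ODE with the same initial data, so $V=\tilde{V}$ on $[0,\ell]$, completing the converse. The main (minor) obstacle is the bookkeeping in the converse: one must ensure $W(s)$ is chosen so that both $W(0)$ and its covariant $s$-derivative at $0$ line up correctly with $c'(0)$ and $D_t V(0)$, which is exactly why the parallel-transport construction above is used.
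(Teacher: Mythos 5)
Your forward-direction argument is exactly the paper's appendix proof, merely read in the opposite direction: you start from the geodesic identity $D_t\partial_t c = 0$ and differentiate in $s$, then commute $D_s$ past $D_t$ via Fact~\ref{fact:formula_R} and use torsion-freeness (Fact~\ref{fact:basic_RG}(6)) to identify $D_t D_s \partial_t c$ with $D_t D_t \partial_s c$; the paper starts from $D_tD_tu$ and unwinds the same chain of three equalities to arrive at $-R(u,c')c'$. The content is identical.

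You also supply a proof of the converse, which the paper explicitly declines to prove (it states it only proves the first part, as an illustration). Your argument is the standard one and is correct: writing the Jacobi equation in a parallel orthonormal frame reduces it to a linear second-order ODE determined by $(V(0),D_tV(0))$, and the variation $c(t,s)=\exp_{\alpha(s)}(tW(s))$ with $\alpha'(0)=V(0)$, $W(0)=c'(0)$, $\nabla_sW|_{s=0}=D_tV(0)$ produces (by your forward direction and torsion-freeness evaluated at $t=0$) a Jacobi field with the same initial data, hence equal to $V$. The one thing implicitly used is that $\exp_{\alpha(s)}(tW(s))$ is defined for $t\in[0,\ell]$ and small $|s|$; this holds on a geodesically complete manifold (which is the setting the paper cares about, via the cited completeness result for self-concordant barriers) or, more generally, for $s$ sufficiently small by smooth dependence on initial conditions. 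Otherwise no gaps.
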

See figure \ref{fig:Jacobifield} for an illustration.

\begin{figure}
\begin{centering}
\begin{tikzpicture}[y=0.80pt, x=0.80pt, yscale=-0.5, xscale=0.6]   \path[draw,dashed,line width=1pt] (0.1622,605.7811) .. controls (114.4521,535.6969) and (325.4295,429.6695) ..   (486.1622,517.7811) node[right,scale=1] {Geodesic $c(t,0)$};   \path[draw,dashed,line width=1pt] (0.4189,576.9433) .. controls (229.6782,381.5019) and (356.4545,330.9561) ..  (484.4189,410.9433) node[right,scale=1] {Geodesic $c(t,1/2)$};   \path[draw,dashed,line width=1pt] (0.7094,541.5266) .. controls (94.9993,427.6772) and (297.9767,198.2086) ..  (482.7094,274.6337) node[right,scale=1] {Geodesic $c(t,1)$};   \path[draw,->,line width=1pt] (0.0522,605.8188)--(0.0522,591.7023);   \path[draw,->,line width=1pt] (99.0387,551.2472)--(99.0387,526.2842);   \path[draw,->,line width=1pt] (196.7748,511.9232)-- (196.7748,471.9628);   \path[draw,->,line width=1pt] (293.0893,488.1887)-- (293.0893,438.6656);   \path[draw,->,line width=1pt] (396.3352,488.2392)-- (396.3352,420.2521) ;   \path[draw,->,line width=1pt] (485.7360,517.9052) -- (485.7360,459.3281) node[right,scale=1] {Jacobi field $V(t)=\frac{\partial c(t,s)}{\partial s}$};   \path[draw,->,line width=1pt] (9.0641,383.9608) -- (9.0641,335.2655) node[above,scale=1] {$s$};   \path[draw,->,line width=1pt] (8.2630,383.9716) -- (56.9583,383.9716) node[right,scale=1] {$t$};
\end{tikzpicture}
\par\end{centering}
\centering{}\caption{\label{fig:Jacobifield}A Jacobi field describes the difference between
geodesic.}
\end{figure}
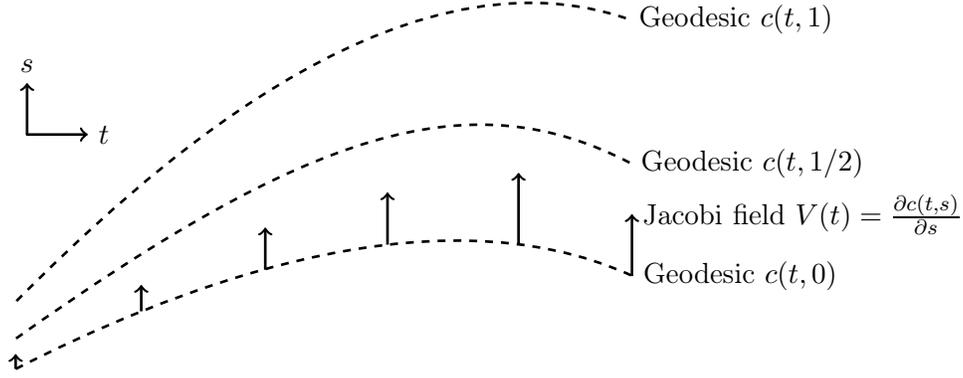

In the appendix, we prove the first part of the theorem above as an
illustration of the Jacobi equation. In $\Rn$, geodesics are straight
lines and a Jacobi field is linear, namely, $u(t)=u(0)+u'(0)t$. A
similar decomposition holds for any Jacobi field.
\begin{fact}
\label{fact:Jacobi_field_split}Given a unit speed geodesic $c(t)$,
every Jacobi field $u(t)$ on $c(t)$ can be split into a tangential
part $u_{1}$ and a normal part $u_{2}$ such that 

\begin{enumerate}
\item $u=u_{1}+u_{2}$,
\item $u_{1}$ and $u_{2}$ are Jacobi fields, namely, $D_{t}D_{t}u_{1}+R(u_{1},\frac{dc}{dt})\frac{dc}{dt}=0$
and $D_{t}D_{t}u_{2}+R(u_{2},\frac{dc}{dt})\frac{dc}{dt}=0$,
\item $u_{1}$ is parallel to $\frac{dc}{dt}$ and is linear, namely, $u_{1}(t)=\left(\left\langle u(0),\frac{dc}{dt}(0)\right\rangle _{c(0)}+\left\langle D_{t}u(0),\frac{dc}{dt}(0)\right\rangle _{c(0)}t\right)\frac{dc}{dt}(t)$,
\item $u_{2}$ is orthogonal to $\frac{dc}{dt}$, namely, $\left\langle u_{2}(t),\frac{dc}{dt}(t)\right\rangle _{c(t)}=0$.
\end{enumerate}
\end{fact}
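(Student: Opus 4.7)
The plan is to exhibit $u_1$ explicitly, verify it is a Jacobi field, and then derive the orthogonality of $u_2 = u - u_1$ from a short ODE argument for the scalar function $\langle u(t), c'(t)\rangle_{c(t)}$. Linearity of the Jacobi equation then gives that $u_2$ is automatically Jacobi.

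First I would define $a \defeq \langle u(0), c'(0)\rangle_{c(0)}$ and $b \defeq \langle D_t u(0), c'(0)\rangle_{c(0)}$, and set $u_1(t) \defeq (a+bt)\, c'(t)$. To check $u_1$ is Jacobi, note that $c$ being a geodesic gives $D_t c'(t) = 0$ by Fact \ref{fact:basic_RG}(2). So by the product rule for $D_t$, $D_t u_1 = b\, c'(t)$ and $D_t D_t u_1 = 0$. On the curvature side, $R(u_1, c')c' = (a+bt)\, R(c', c')c' = 0$ by the antisymmetry in the first two slots (Fact \ref{fact:sym_cuv}). Thus $D_t D_t u_1 + R(u_1, c')c' = 0$, proving (3) in the splitting, and also that $u_1$ is a Jacobi field.

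Next, to prove the orthogonality (4), I would consider $f(t) \defeq \langle u(t), c'(t)\rangle_{c(t)}$. Using the metric-preserving property (Fact \ref{fact:basic_RG}(5)) and $D_t c' = 0$, one gets $f'(t) = \langle D_t u, c'\rangle$ and $f''(t) = \langle D_t D_t u, c'\rangle$. Substituting the Jacobi equation $D_t D_t u = -R(u, c')c'$, I get $f''(t) = -\langle R(u,c')c', c'\rangle$, which vanishes by Fact \ref{fact:sym_cuv} (antisymmetry in the last two slots of $R$). Hence $f$ is affine with $f(0) = a$ and $f'(0) = b$, so $f(t) = a+bt$. Since $c$ has unit speed, $\langle u_1(t), c'(t)\rangle = a+bt$ as well, and therefore $\langle u_2, c'\rangle = 0$, which is (4). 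Together with (3), this also shows $u_1$ is tangential and $u_2$ is normal, giving (1).

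Finally, since the Jacobi equation is linear in the vector field and both $u$ and $u_1$ satisfy it, so does $u_2 = u - u_1$, giving the remaining part of (2). The only subtlety to keep an eye on is the two distinct symmetries of $R$ used: antisymmetry in the first two slots for $R(c',c')c'=0$, and antisymmetry in the last two slots (equivalently, the symmetry $\langle R(u,c')c', c'\rangle = \langle R(c',c')u, c'\rangle$ combined with antisymmetry) to kill $f''$. Beyond checking that these are correctly applied, the argument is essentially a one-variable linear ODE computation and presents no real obstacle.
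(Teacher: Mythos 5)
Your proof is correct, and the paper itself does not supply one (this is stated as a standard Fact from Riemannian geometry). Your argument — exhibiting $u_1(t)=(a+bt)c'(t)$, killing the curvature term via antisymmetry of $R$ in its first two slots, showing $\langle u(t),c'(t)\rangle$ is affine by differentiating twice and using antisymmetry in the last two slots, then invoking linearity of the Jacobi equation for $u_2=u-u_1$ — is the standard textbook derivation and checks out against Fact~\ref{fact:basic_RG} and Fact~\ref{fact:sym_cuv} as stated in the paper.
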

\begin{defn}
\label{def:Rt_definition}Given a geodesic $\gamma(t)$, we define
a linear map $R(t):T_{\gamma(t)}M\rightarrow T_{\gamma(t)}M$ by

\[
R(t)u=R(u,\gamma'(t))\gamma'(t).
\]
In particular, the Jacobi equation on $\gamma(t)$ can be written
as $D_{t}^{2}u+R(t)u=0$. 

Given a coordinate system $x_{i}$, the linear map $R(t)$ can be
written as a symmetric matrix
\[
R(t)_{ij}=\left\langle R(x_{i}(t),\gamma'(t))\gamma'(t),x_{j}(t)\right\rangle .
\]
\end{defn}
To analysis the Jacobi equation, it is convenient to adopt the follow
matrix notation.
\begin{defn}
Given a linear map $A:T_{x}M\rightarrow T_{x}M$, we define $\norm A_{2}=\max_{\norm v_{x}=1}\norm{Av}_{x}$,
$\norm A_{F}=\sum_{i,j}\left\langle v_{i},Av_{j}\right\rangle _{x}^{2}$
and $\tr A=\sum_{i}v_{i}Av_{i}$ where $\{v_{i}\}_{i=1}^{n}$ is some
arbitrary orthonormal basis of $T_{x}M$.
\end{defn}
In particular, we have that $Ric(\gamma'(t))=\tr R(t)$.

\subsubsection{Hessian manifolds}

Recall that a manifold is called Hessian if it is a subset of $\Rn$
and its metric is given by $g_{ij}=\frac{\partial^{2}}{\partial x^{i}\partial x^{j}}\phi$
for some smooth convex function $\phi$. We let $g^{ij}$ be entries
of the inverse matrix of $g_{ij}$. For example, we have $\sum_{j}g^{ij}g_{jk}=\delta_{ik}$.
We use $\phi_{ij}$ to denote $\frac{\partial^{2}}{\partial x^{i}\partial x^{j}}\phi$
and $\phi_{ijk}$ to denote $\frac{\partial^{3}}{\partial x^{i}\partial x^{j}\partial x^{k}}\phi$. 

Since a Hessian manifold is a subset of Euclidean space, we identify
tangent spaces $T_{p}M$ by Euclidean coordinates. The following lemma
gives formulas for the Levi-Civita connection and curvature under
Euclidean coordinates. 
\begin{lem}[\cite{totaro2004curvature}]
\label{lem:Hessian_formula}Given a Hessian manifold $M$, vector
fields $v,u,w,z$ on $M$, we have the following:

\begin{enumerate}
\item (Levi-Civita connection) $\nabla_{v}u=\sum_{ik}v_{i}\frac{\partial u_{k}}{\partial x_{i}}e_{k}+\sum_{ijk}v_{i}u_{j}\Gamma_{ij}^{k}e_{k}$
where $e_{k}$ are coordinate vectors and the Christoffel symbol 
\[
\Gamma_{ij}^{k}=\frac{1}{2}\sum_{l}g^{kl}\phi_{ijl}.
\]
\item (Riemann curvature tensor) $\left\langle R(u,v)w,z\right\rangle =\sum_{ijlk}R_{klij}u_{i}v_{j}w_{l}z_{k}$
where 
\[
R_{klij}=\frac{1}{4}\sum_{pq}g^{pq}\left(\phi_{jkp}\phi_{ilq}-\phi_{ikp}\phi_{jlq}\right).
\]
\item (Ricci curvature) $Ric(v)=\frac{1}{4}\sum_{ijlkpq}g^{pq}g^{jl}\left(\phi_{jkp}\phi_{ilq}-\phi_{ikp}\phi_{jlq}\right)v_{i}v_{k}$.
\end{enumerate}
\end{lem}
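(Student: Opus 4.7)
The plan is to prove each of the three formulas by a direct coordinate computation, using the fact that for a Hessian metric the partial derivatives of the metric tensor are totally symmetric in all three indices (since $\partial_i g_{jk} = \phi_{ijk}$).

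For part (1), I would start from the standard Koszul/coordinate formula for the Christoffel symbols of an arbitrary Riemannian metric,
\[
\Gamma^{k}_{ij} = \tfrac{1}{2}\sum_{l} g^{kl}\bigl(\partial_{i} g_{jl} + \partial_{j} g_{il} - \partial_{l} g_{ij}\bigr).
\]
Substituting $g_{ij}=\phi_{ij}$ gives $\partial_i g_{jl} = \phi_{ijl}$, and this is totally symmetric in $(i,j,l)$ because it is a third partial of a smooth function. Hence the three bracketed terms collapse to a single $\phi_{ijl}$, yielding the claimed Christoffel symbol. The expression for $\nabla_{v}u$ then follows immediately from the general definition $\nabla_v u = \sum_{k}\bigl(\sum_i v_i \partial_i u_k + \sum_{ij} \Gamma^{k}_{ij} v_i u_j\bigr)e_k$ in Euclidean coordinates.

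For part (2), I would start from the coordinate formula
\[
R^{k}{}_{lij} = \partial_{i}\Gamma^{k}_{jl} - \partial_{j}\Gamma^{k}_{il} + \sum_{m}\bigl(\Gamma^{k}_{im}\Gamma^{m}_{jl} - \Gamma^{k}_{jm}\Gamma^{m}_{il}\bigr),
\]
and compute $\partial_{i}\Gamma^{k}_{jl}$ using the formula from part (1). Differentiating the inverse metric requires the identity $\partial_{i}g^{kp} = -\sum_{a,b} g^{ka}\phi_{iab}g^{bp}$, which introduces a cross term built from two third-derivative factors. Differentiating the $\phi_{jlp}$ factor produces a fourth-derivative term $\phi_{ijlp}$, which is symmetric in $(i,j)$ and therefore cancels when we antisymmetrize to form $\partial_i\Gamma^k_{jl}-\partial_j\Gamma^k_{il}$. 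A short calculation then shows that the surviving cross terms from the derivatives of $g^{kp}$ combine with the $\Gamma\Gamma$ terms to give exactly $\tfrac{1}{4}\sum_{pq}g^{kp}g^{\cdot\cdot}(\phi_{jkp}\phi_{ilq}-\phi_{ikp}\phi_{jlq})$ after lowering the index via $R_{klij}=\sum_{m}g_{km}R^{m}{}_{lij}$. This index juggling is where I expect the main obstacle to lie: the proof is not deep but it requires careful bookkeeping to see the fourth-order terms drop out and the remaining third-order products regroup into the claimed antisymmetric combination.

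For part (3), I would apply the definition $\mathrm{Ric}(v)=\sum_{a}\langle R(v,u_{a})u_{a},v\rangle$ for an orthonormal basis $\{u_{a}\}$, expand using the formula from part (2), and use the identity $\sum_{a}(u_{a})_{j}(u_{a})_{l}=g^{jl}$ (which is just the statement that an orthonormal basis in the metric $g$ has outer-product sum equal to the inverse metric). The two $g$-factors in the final formula then arise: $g^{jl}$ from contracting against the basis, and $g^{pq}$ already present from the Riemann tensor formula in part (2). Assembling these gives the stated expression, and no nontrivial cancellations are needed beyond those already handled in part (2).
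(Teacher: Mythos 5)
Your proposal is correct. The paper does not prove this lemma; it cites it from an external reference (Totaro) and treats it as a black-box definition for the Hessian-manifold setting. Your argument is the standard direct-coordinate derivation. In part (1) the key observation — that $\partial_i g_{jl} = \phi_{ijl}$ is totally symmetric in $(i,j,l)$, collapsing the Koszul formula — is exactly right. In part (2) the bookkeeping works out as you describe: the $\phi_{ijlp}$ term cancels by $i\leftrightarrow j$ symmetry, and the cross terms from $\partial_i g^{kp} = -\sum_{a,b}g^{ka}\phi_{iab}g^{bp}$ carry coefficient $-\tfrac{1}{2}$ while the $\Gamma\Gamma$ terms carry $+\tfrac{1}{4}$, combining to $-\tfrac{1}{4}$ for each antisymmetrized pair; after lowering via $g_{mk}$ the factor $g^{ka}$ collapses to $\delta_{ma}$ and a single $g^{pq}$ remains, matching the stated $R_{klij}$. (Your displayed intermediate expression has one too many inverse-metric factors written informally, but the final target and the mechanism are correct.) In part (3) the identity $\sum_a (u_a)_j(u_a)_l = g^{jl}$ for a $g$-orthonormal basis is what you need and is standard. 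One small sanity check worth doing explicitly if you write this up: confirm the index placement $\langle R(u,v)w,z\rangle=\sum R_{klij}u_iv_jw_lz_k$ matches the curvature sign convention $R(X,Y)Z=\nabla_X\nabla_Y Z-\nabla_Y\nabla_X Z-\nabla_{[X,Y]}Z$ used in the paper (it does, but it is the most common place for a sign error).
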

As an exercise, the reader can try to prove Fact \ref{fact:basic_RG}
for Hessian manifolds using the above lemma as a definition. The proof
is given in the appendix. 

In this paper, geodesics are everywhere and we will be using the following
lemma in all of our calculations. 
\begin{lem}[{\cite[Cor 3.1]{nesterov2002riemannian}}]
If $\phi$ is a self-concordant function, namely, $\left|D^{3}f(x)[h,h,h]\right|\leq2(D^{2}f(x)[h,h])^{3/2}$,
then the corresponding Hessian manifold $M$ is geodesically complete,
namely, for any $p\in M$, the exponential map is defined on the entire
tangent space $T_{p}M$ and for any two points $p,q\in M$, there
is a length minimizing geodesic connecting them.

In particular, for a polytope $M=\{x\,:\,Ax>b\}$, the Hessian manifold
induced by the function $\phi(x)=-\sum_{i}\log(a_{i}^{T}x-b_{i})$
is geodesically complete.
\end{lem}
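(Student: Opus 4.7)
The plan is to invoke the Hopf--Rinow theorem from Riemannian geometry, which states that a Riemannian manifold is geodesically complete if and only if $(M,d)$ is complete as a metric space, and that either condition implies the existence of a length-minimizing geodesic between every pair of points. It therefore suffices to establish the first of the two equivalent conditions, namely that every maximal unit-speed geodesic $\gamma:[0,T)\to M$ has $T=\infty$. Arguing by contradiction, I would assume $T<\infty$ and show that $\gamma$ stays in a compact subset of $M$ on which the geodesic ODE has Lipschitz coefficients; standard Picard--Lindel\"of existence then extends $\gamma$ past $T$, contradicting maximality.

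For the polytope case $\phi(x)=-\sum_i\log(a_i^Tx-b_i)$, which is the only case used in the rest of the paper, I would give a self-contained argument via Gr\"onwall. Set $s_i(t)=a_i^T\gamma(t)-b_i$. Since $\nabla^2\phi(x)=\sum_i a_ia_i^T/s_i(x)^2$, the unit-speed identity $\norm{\gamma'(t)}_{\gamma(t)}^2=1$ reduces to $\sum_i(a_i^T\gamma'(t))^2/s_i(t)^2=1$, which forces $|s_i'(t)|=|a_i^T\gamma'(t)|\le s_i(t)$ for every $i$. Gr\"onwall's inequality then yields the two-sided bound $s_i(0)e^{-t}\le s_i(t)\le s_i(0)e^{t}$ on $[0,T)$. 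Since $\nabla^2\phi\succ 0$ forces $A$ to have full column rank, the bounded slacks confine $\gamma(t)$ to a Euclidean-compact subset of $M$ on which the Christoffel symbols $\Gamma_{ij}^k=\tfrac12\sum_l g^{kl}\phi_{ijl}$ are uniformly bounded, completing the extension step.

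For the general self-concordance case, the substitute for the Gr\"onwall estimate is the classical Dikin-ball inequality: if $\norm{h}_x<1$ then $x+h\in M$ and $(1-\norm{h}_x)^2\nabla^2\phi(x)\preceq\nabla^2\phi(x+h)\preceq(1-\norm{h}_x)^{-2}\nabla^2\phi(x)$, both of which are standard consequences of the one-dimensional self-concordance inequality applied to $\phi\circ c$ along the segment $c(s)=x+sh$. I would cover $\gamma([0,T))$ by an overlapping chain of Dikin half-balls; the metric comparisons multiply to an at-most exponential distortion in $T$, and self-concordance in turn bounds the third derivatives entering the geodesic ODE. The main obstacle I anticipate is ensuring that this chain of half-balls stays within some Euclidean-compact subset of $M$ rather than escaping to $\partial M$ or to infinity in finite parameter time; this is where one genuinely needs the quantitative Dikin-ball inclusion (the unit $\norm{\cdot}_x$-ball about each point lies in $M$) rather than merely the openness of the domain.
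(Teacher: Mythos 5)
The paper states this lemma as a citation to \cite[Cor 3.1]{nesterov2002riemannian} and gives no internal proof, so there is nothing in the paper to compare your argument against; the relevant question is whether your proof stands on its own. For the log-barrier case it does, and it is essentially complete: the unit-speed identity $\sum_i(a_i^T\gamma'(t))^2/s_i(t)^2=1$ indeed forces $|s_i'(t)|\le s_i(t)$, Gr\"onwall gives $s_i(0)e^{-t}\le s_i(t)\le s_i(0)e^{t}$ on $[0,T)$, positive-definiteness of $\nabla^2\phi$ forces $A$ to have full column rank so $x\mapsto Ax-b$ is a proper map, and the preimage of the compact slack box $\{s:\min_i s_i(0)e^{-T}\le s_i\le \max_i s_i(0)e^{T}\}$ is a Euclidean-compact $K\subset M$ on which the geodesic ODE has bounded coefficients and the unit-speed velocity is $\|\cdot\|_2$-bounded; the standard ODE-escape lemma then extends $\gamma$ past $T$, and Hopf--Rinow upgrades geodesic completeness to the existence of minimizers. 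Your general-self-concordance sketch is pointed in the right direction, and the obstacle you flag (keeping the Dikin chain inside a Euclidean-compact subset over a finite parameter interval) is the genuine one; for a $\nu$-self-concordant \emph{barrier} it is closed off by the standard bound $\|\nabla\phi(x)\|_x^*\le\sqrt{\nu}$, which gives $\left|\frac{d}{dt}\phi(\gamma(t))\right|\le\sqrt{\nu}$ along any unit-speed curve and so confines $\gamma$ to a sublevel set of $\phi$ away from $\partial M$, while the exponential metric-distortion estimate you already cite gives the Euclidean boundedness. Since the paper only ever invokes the log-barrier instance of this lemma, the fully rigorous part of your argument already covers every use of the statement.
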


\subsubsection{Normal coordinates\label{subsec:normal_coordinate}}

For any manifold $M$, and any $p\in M$, the exponential map $\ \exp_{x}$
maps from $T_{x}M$ to $M$. Since $T_{x}M$ is isomorphic to $\Rn$,
$\exp_{x}^{-1}$ gives a local coordinate system of $M$ around $x$.
We call this system \emph{the normal coordinates} at $x$. In a normal
coordinate system, the metric is locally constant.
\begin{lem}
\label{lem:normal_coordinate_constant}In normal coordinates, we have
\[
g_{ij}(x)=\delta_{ij}-\frac{1}{3}\sum_{kl}R_{ikjl}(x)x^{k}x^{l}+O(|x|^{3}).
\]
\end{lem}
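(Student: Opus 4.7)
The plan is to extract the expansion of $g_{ij}$ around the origin of the normal coordinate chart by comparing two computations of the squared norm of a well-chosen Jacobi field along a radial geodesic. The key feature of normal coordinates is that for any $v$ in the tangent space at the base point $x_0$ (which we take as the origin), the curve $t\mapsto tv$ is the geodesic $\gamma_v(t)=\exp_{x_0}(tv)$. Consequently the two-parameter family $c(t,s)=t(v+sw)$ consists entirely of geodesics for every fixed $s$, so by Theorem \ref{thm:Jacobi_equation} the variation field $J(t)=\partial_s c(t,s)|_{s=0}=tw$ is a Jacobi field along $\gamma_v$, with initial conditions $J(0)=0$ and $D_tJ(0)=w$.

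Before comparing, I would first nail down the zeroth- and first-order pieces of the Taylor expansion of $g_{ij}$. By choosing the orthonormal frame in the tangent space that defines the normal coordinates, $g_{ij}(0)=\delta_{ij}$. The geodesic equation $\ddot\gamma^k+\Gamma^k_{ij}(\gamma)\dot\gamma^i\dot\gamma^j=0$ applied to $\gamma(t)=tv$ forces $\Gamma^k_{ij}(tv)v^iv^j=0$ for every $t$ and $v$; evaluating at $t=0$ and using the symmetry $\Gamma^k_{ij}=\Gamma^k_{ji}$ together with polarization gives $\Gamma^k_{ij}(0)=0$ for all $i,j,k$. The standard identity $\partial_k g_{ij}=g_{il}\Gamma^l_{jk}+g_{jl}\Gamma^l_{ik}$ then kills the linear term, so $g_{ij}(x)=\delta_{ij}+O(|x|^2)$ and all the content of the lemma lies in the quadratic coefficient.

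For the quadratic coefficient, set $f(t)=\langle J(t),J(t)\rangle_{\gamma(t)}$ and expand at $t=0$ using the Jacobi equation $D_t^2 J=-R(J,\gamma')\gamma'$ together with $D_t\gamma'=0$ (since $\gamma$ is a geodesic). Repeated application of the metric-preserving property of $D_t$ gives $f(0)=f'(0)=0$, $f''(0)=2|D_tJ(0)|^2+2\langle D_t^2J(0),J(0)\rangle=2|w|^2$, $f'''(0)=0$ (since $J(0)=0$ and $D_t^2J(0)=-R(J(0),v)v=0$), and, differentiating $D_t^2J=-R(J,\gamma')\gamma'$ once more, $D_t^3J(0)=-R(w,v)v$, so
\[
f''''(0)=8\langle D_t^3J(0),D_tJ(0)\rangle=-8\langle R(w,v)v,w\rangle.
\]
Taylor expansion yields $\langle J(t),J(t)\rangle=t^2|w|^2-\tfrac{t^4}{3}\langle R(w,v)v,w\rangle+O(t^5)$.

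Finally I would compute $\langle J(t),J(t)\rangle$ directly from the coordinate description $J(t)=tw$: in the normal chart this is $t^2 g_{ij}(tv)w^iw^j$. Equating the two expressions, dividing by $t^2$, and substituting $x=tv$ gives
\[
g_{ij}(x)w^iw^j=\delta_{ij}w^iw^j-\tfrac{1}{3}\langle R(w,x)x,w\rangle+O(|x|^3)
\]
for every tangent vector $w$; since both sides are quadratic forms in $w$, polarization (together with the index conventions of Lemma \ref{lem:Hessian_formula} and the symmetries of Fact \ref{fact:sym_cuv}) identifies the quadratic coefficient as $-\tfrac{1}{3}R_{ikjl}x^kx^l$, which is the claimed formula. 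The only mildly delicate step is the bookkeeping of the four derivatives of $f$ at $0$; everything else is forced by the normal-coordinate setup and the Jacobi equation.
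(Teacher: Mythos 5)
The paper states Lemma \ref{lem:normal_coordinate_constant} without a proof (it is a classical fact of Riemannian geometry cited for background), so there is no in-paper argument to compare against. Judged on its own merits, your proof is correct and follows the standard Jacobi-field route: in normal coordinates radial lines are geodesics, so $J(t)=tw$ is a Jacobi field with $J(0)=0$, $D_tJ(0)=w$; the geodesic equation at $t=0$ kills the Christoffel symbols and hence the linear term in $g_{ij}$; computing $f(t)=\langle J,J\rangle$ via the Jacobi equation gives $f(0)=f'(0)=f'''(0)=0$, $f''(0)=2|w|^2$, $f''''(0)=-8\langle R(w,v)v,w\rangle$ (the terms $\langle D_t^4J(0),J(0)\rangle$ and $|D_t^2J(0)|^2$ vanish since $J(0)=0$ and $D_t^2J(0)=-R(0,v)v=0$); and the coordinate expression $f(t)=t^2 g_{ij}(tv)w^iw^j$ then delivers the quadratic coefficient. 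The index bookkeeping also checks out against the paper's conventions: from Lemma \ref{lem:Hessian_formula}, $\langle R(w,x)x,w\rangle=\sum R_{klij}w_ix_jx_lw_k=\sum R_{ikjl}w^ix^kw^jx^l$, and the pair-exchange symmetry in Fact \ref{fact:sym_cuv} shows $\sum_{kl}R_{ikjl}x^kx^l$ is symmetric in $(i,j)$, so the polarization step is legitimate and recovers the full symmetric coefficient matrix, not just the quadratic-form values. An alternative (equally standard) route is to Taylor-expand the Christoffel symbols directly using $\Gamma^k_{ij}(0)=0$ and $\partial_l\Gamma^k_{ij}(0)+\partial_i\Gamma^k_{jl}(0)+\partial_j\Gamma^k_{li}(0)=0$, which avoids Jacobi fields at the cost of more index gymnastics; your version is cleaner and fits the tools the paper already develops (Theorem \ref{thm:Jacobi_equation}).
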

For a Hessian manifold, one can do a linear transformation to make
the normal coordinates coincide with Euclidean coordinates up to the
first order. 
\begin{lem}
\label{lem:Hessian_normal_map}Given a Hessian manifold $M$ and any
point $x\in M$. We pick a basis of $T_{x}M$ such that 
\[
\exp_{x}(tv)=x+tv+O(t^{2}).
\]
Let $F:M\rightarrow\Rn$ be the normal coordinates defined by $F(y)=\exp_{x}^{-1}$.
Then, we 
\[
DF[h]=h\text{ and }D^{2}F_{k}(x)[h,h]=h^{T}\Gamma^{k}h
\]
where $F_{k}$ is the $k^{th}$ coordinate of $F$ and $\Gamma^{k}$
is the matrix with entries $\Gamma_{ij}^{k}$ defined in Lemma \ref{lem:Hessian_formula}.
\end{lem}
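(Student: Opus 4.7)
The plan is to read off both claims from the geodesic equation. Since $M$ is a Hessian manifold sitting inside $\R^n$, we may Taylor-expand everything in Euclidean coordinates and compare, using the formula for the Levi-Civita connection from Lemma~\ref{lem:Hessian_formula}.

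First I would handle the first-order statement. The chosen basis is precisely the one in which $\exp_x(tv) = x + tv + O(t^2)$, so by the inverse function theorem $DF(x) = (D\exp_x(0))^{-1} = I$, giving $DF(x)[h] = h$.

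For the second-order claim, the main step is to compute $\gamma(t) \defeq \exp_x(tv)$ through order $t^2$. By Fact~\ref{fact:basic_RG}(2) and the formula $\nabla_v u = \sum_{ik} v_i \tfrac{\partial u_k}{\partial x_i} e_k + \sum_{ijk} v_i u_j \Gamma_{ij}^k e_k$ from Lemma~\ref{lem:Hessian_formula}(1), the geodesic equation $\nabla_{\gamma'} \gamma' = 0$ reads, in Euclidean coordinates,
\[
\ddot{\gamma}^k(t) + \sum_{i,j} \Gamma_{ij}^k(\gamma(t))\,\dot\gamma^i(t)\,\dot\gamma^j(t) = 0.
\]
Evaluating at $t = 0$ with $\gamma(0) = x$ and $\dot\gamma(0) = v$ yields $\ddot\gamma^k(0) = -v^T \Gamma^k(x)\, v$, hence
\[
\gamma^k(t) = x^k + t v^k - \tfrac{t^2}{2}\, v^T \Gamma^k v + O(t^3).
\]

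Finally I would combine this with the definition $F(\gamma(t)) = tv$. Using $F(x) = 0$ and $DF(x) = I$, Taylor expansion gives
\[
F_k(\gamma(t)) = (\gamma^k(t) - x^k) + \tfrac{1}{2} D^2 F_k(x)[\gamma(t)-x,\, \gamma(t)-x] + O(t^3).
\]
Substituting the expansion of $\gamma$ and using $\gamma(t)-x = tv + O(t^2)$ in the quadratic term, the coefficient of $t^2$ on the left is $0$ (since $F_k(\gamma(t)) = tv^k$) while on the right it equals $-\tfrac{1}{2} v^T \Gamma^k v + \tfrac{1}{2} D^2 F_k(x)[v,v]$. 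Equating these gives $D^2 F_k(x)[v,v] = v^T \Gamma^k v$, as claimed.

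The computation is essentially routine; the only thing one has to be a bit careful with is deriving the Euclidean-coordinate form of the geodesic equation from the abstract condition $\nabla_{\gamma'}\gamma' = 0$, which is where the Christoffel symbols $\Gamma_{ij}^k$ enter. Everything else is a first- and second-order Taylor match.
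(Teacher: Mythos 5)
Your proposal is correct and is essentially the paper's own argument. Both proofs use the geodesic equation $\ddot\gamma^k + \sum_{ij}\Gamma_{ij}^k\dot\gamma^i\dot\gamma^j=0$ to get the second-order Taylor expansion of $\exp_x$, and both then invert: the paper differentiates the identity $F(\exp_x(y))=y$ twice in $y$ and evaluates at $y=0$, whereas you restrict to the ray $y=tv$ and match coefficients of $t^2$ in $F_k(\gamma(t))=tv^k$. Since $D^2F_k(x)$ is a symmetric bilinear form, determining it on the diagonal is equivalent, so the two presentations coincide.
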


\subsection{Stochastic calculus \label{subsec:Stochastic-Calculus}}

A stochastic differential equation (SDE) describes a stochastic process
over a domain $\Omega.$ It has the form $dx_{t}=\mu(x_{t},t)dt+\sigma(x_{t},t)dW_{t}$
where $x_{t}$ is the current point at time $t,$ $W_{t}$ is a standard
Brownian motion, and $\mu(x_{t},t),\sigma(x_{t},t)$ are the mean
and covariance of the next infinitesimal step at time $t$. 
\begin{lem}[It\={o}'s lemma]
\label{lem:Ito} Given a SDE $dx_{t}=\mu(x_{t})dt+\sigma(x_{t})dW_{t}$
and any smooth function $f$, we have 
\[
df(t,x_{t})=\left\{ \frac{\partial f}{\partial t}+\left\langle \nabla f,\mu\right\rangle +\frac{1}{2}\tr\left[\sigma^{T}\left(\nabla^{2}f\right)\sigma\right]\right\} dt+\left(\nabla f\right)^{T}\sigma dW_{t}.
\]
\end{lem}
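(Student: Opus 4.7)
The plan is to prove It\={o}'s lemma by a second-order Taylor expansion of $f(t,x_t)$ combined with the formal multiplication table of stochastic differentials, $dt\cdot dt = 0$, $dt\cdot dW_t^i = 0$, and $dW_t^i \cdot dW_t^j = \delta_{ij}\,dt$. The only real content is in the last identity, which follows from the quadratic variation of Brownian motion: $\sum_{k} (W_{t_{k+1}}^i - W_{t_k}^i)(W_{t_{k+1}}^j - W_{t_k}^j) \to \delta_{ij}(t-s)$ in $L^2$ as the mesh of the partition of $[s,t]$ goes to zero. I would either cite this fact or prove it by a direct variance computation, since it is the conceptual heart of It\={o} calculus and the one place where stochastic analysis genuinely differs from ordinary calculus.

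Given that, the computation is mechanical. Write a partition $0 = t_0 < t_1 < \cdots < t_N = T$ and expand
\[
f(t_{k+1},x_{t_{k+1}}) - f(t_k,x_{t_k}) = \tfrac{\partial f}{\partial t}\,\Delta t_k + \langle \nabla f, \Delta x_k\rangle + \tfrac{1}{2}\Delta x_k^T(\nabla^2 f)\Delta x_k + o(\Delta t_k),
\]
where the derivatives are evaluated at $(t_k, x_{t_k})$ and $\Delta x_k = \mu(x_{t_k})\Delta t_k + \sigma(x_{t_k})\Delta W_k$. Substituting,
\[
\langle \nabla f,\Delta x_k\rangle = \langle \nabla f,\mu\rangle \Delta t_k + (\nabla f)^T\sigma\,\Delta W_k,
\]
and the quadratic term expands into three pieces: one of order $(\Delta t_k)^2$ which is negligible, a cross term $\mu^T(\nabla^2 f)\sigma\,\Delta t_k\,\Delta W_k$ whose sum over $k$ vanishes in probability, and the genuinely new contribution $\Delta W_k^T \sigma^T (\nabla^2 f)\sigma\,\Delta W_k$. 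Using the quadratic-variation identity termwise, the last quantity converges to $\tr[\sigma^T(\nabla^2 f)\sigma]\,\Delta t_k$ in the limit.

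Summing telescopically and passing to the limit as $\max_k \Delta t_k \to 0$ yields
\[
f(T,x_T)-f(0,x_0) = \int_0^T\!\Bigl\{\tfrac{\partial f}{\partial t} + \langle \nabla f,\mu\rangle + \tfrac{1}{2}\tr[\sigma^T(\nabla^2 f)\sigma]\Bigr\}dt + \int_0^T (\nabla f)^T\sigma\,dW_t,
\]
which is the integrated form of the stated identity. The differential form follows immediately.

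The main technical obstacle is the rigorous justification of passing from the Riemann-sum approximations to the stochastic integral. One must argue that (i) the remainder term from Taylor expansion, which is $O(|\Delta x_k|^3)$, contributes $o(1)$ in $L^2$, using that $\mathbb{E}|\Delta x_k|^{2p} = O((\Delta t_k)^p)$; (ii) the cross-term $\sum_k \mu^T(\nabla^2 f)\sigma\,\Delta t_k\,\Delta W_k$ has variance $O(\max_k \Delta t_k)$ and hence vanishes; and (iii) the quadratic Brownian sum converges to $\int_0^T \tr[\sigma^T(\nabla^2 f)\sigma]\,dt$ in probability, via the $L^2$ quadratic-variation computation together with a continuity argument for the integrand. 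Under the smoothness assumption on $f$ and standard local boundedness of $\mu,\sigma$ (which hold in all applications in this paper, where $f$ and the coefficients are infinitely differentiable on the polytope), each of these steps is routine, so I would state them as lemmas and refer to a standard reference such as \cite{jost2008riemannian} or any text on stochastic calculus for the detailed estimates.
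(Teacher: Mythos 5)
Your proposal is the standard textbook proof of It\={o}'s lemma (second-order Taylor expansion plus the quadratic variation identity $dW^i\,dW^j = \delta_{ij}\,dt$), and it is correct. The paper itself states this lemma as classical background without proof, so there is nothing to compare against; your argument correctly isolates the quadratic variation of Brownian motion as the one genuinely probabilistic ingredient and correctly flags the routine $L^2$ estimates needed to pass to the limit. One small note: \cite{jost2008riemannian} is the Riemannian geometry reference used elsewhere in this paper, not a stochastic calculus text; a reference such as \O ksendal or Karatzas--Shreve would be the appropriate place to send the reader for the detailed limiting arguments you sketch.
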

SDEs are closely related to diffusion equations: 
\begin{eqnarray*}
\frac{\partial}{\partial t}p(x,t) & = & \frac{1}{2}\nabla\cdot\left(A(x,t)\nabla p(x,t)\right)
\end{eqnarray*}
where $p(x,t)$ is the density at point $x$ and time $t,$ $\nabla\cdot$
is the usual divergence operator, $\nabla p$ is the gradient of $p$
and the matrix $A(x,t)$ represents the diffusion coefficient at point
$x$ and time $t$. When $A(x,t)=I$, we get the familiar heat equation:

\[
\frac{\partial}{\partial t}p(x,t)=\frac{1}{2}\Delta p(x,t).
\]
In this paper, the diffusion coefficient will be a symmetric positive
definite matrix given by the Hessian $\left(\nabla^{2}\phi(x)\right)^{-1}$
of a convex function $\phi(x).$ 

The Fokker-Planck equation connects an SDE to a diffusion equation.
\begin{thm}[Fokker-Planck equation]
\label{thm:Fokker-Planck}For any stochastic differential equation
(SDE) of the form
\[
dx_{t}=\mu(x_{t},t)dt+\sqrt{A(x_{t},t)}dW_{t},
\]
the probability density of the SDE is given by the diffusion equation
\[
\frac{\partial}{\partial t}p(x,t)=-\sum_{i=1}^{n}\frac{\partial}{\partial x_{i}}[\mu_{i}(x,t)p(x,t)]+\frac{1}{2}\sum_{i=1}^{n}\sum_{j=1}^{n}\frac{\partial^{2}}{\partial x_{i}x_{j}}[A_{ij}(x,t)p(x,t)].
\]
\end{thm}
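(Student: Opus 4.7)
\textbf{Proof proposal for Theorem \ref{thm:Fokker-Planck}.}

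The plan is to use the standard weak-formulation derivation: probe the density against an arbitrary compactly supported smooth test function $f$, use It\^o's lemma to express $\frac{d}{dt}\mathbb{E}[f(x_t)]$, and then integrate by parts to transfer derivatives from $f$ onto $p$. Since $f$ is arbitrary, the resulting identity will hold pointwise as a PDE for $p$.

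First I would fix $f\in C_c^\infty(\R^n)$ and apply It\^o's lemma (Lemma \ref{lem:Ito}) to $f(x_t)$ with $\sigma(x,t)=\sqrt{A(x,t)}$, giving
\[
df(x_t) = \Bigl\{\langle\nabla f,\mu\rangle + \tfrac12 \tr\bigl[\sigma^T(\nabla^2 f)\sigma\bigr]\Bigr\} dt + (\nabla f)^T \sigma\, dW_t.
\]
Taking expectations and using that the stochastic integral is a martingale (so its expectation vanishes), I obtain
\[
\frac{d}{dt}\mathbb{E}[f(x_t)] = \mathbb{E}\!\left[\langle\nabla f,\mu\rangle + \tfrac12\sum_{i,j} A_{ij}\,\partial_{ij} f\right],
\]
using the cyclic identity $\tr[\sigma^T (\nabla^2 f)\sigma] = \tr[(\nabla^2 f)\sigma\sigma^T] = \sum_{ij} A_{ij}\partial_{ij} f$.

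Next I would rewrite both sides as integrals against $p(x,t)$:
\[
\int f(x)\,\frac{\partial p}{\partial t}(x,t)\,dx \;=\; \int\!\Bigl(\sum_i \mu_i(x,t)\,\partial_i f(x) + \tfrac12\sum_{i,j} A_{ij}(x,t)\,\partial_{ij} f(x)\Bigr) p(x,t)\,dx,
\]
assuming sufficient regularity to pass $\frac{d}{dt}$ through the integral on the left. Now I apply integration by parts. Since $f$ is compactly supported, boundary terms vanish, and
\[
\int (\partial_i f)\,\mu_i p\,dx = -\int f\,\partial_i(\mu_i p)\,dx, \qquad \int (\partial_{ij} f)\,A_{ij}p\,dx = \int f\,\partial_{ij}(A_{ij}p)\,dx.
\]
Collecting terms gives
\[
\int f\left\{\frac{\partial p}{\partial t} + \sum_i \partial_i(\mu_i p) - \tfrac12\sum_{i,j} \partial_{ij}(A_{ij}p)\right\}dx = 0.
\]
Because this holds for every $f\in C_c^\infty(\R^n)$, the bracketed quantity vanishes identically, which is precisely the Fokker--Planck equation.

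The main obstacle is the analytic justification rather than the formal calculation: existence of a smooth density $p(\cdot,t)$, justification of differentiation under the integral sign, and the regularity of $A$ and $\mu$ needed for the double integration by parts. Under the smoothness and non-degeneracy assumptions on $A$ and $\mu$ relevant to our setting (Hessian of a self-concordant barrier on a polytope), these are standard; for the application in this paper only the formal identity is needed, since we invoke Fokker--Planck only to identify the (unique) drift that produces the uniform stationary density via $\partial_t p \equiv 0$.
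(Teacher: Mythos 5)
The paper does not actually prove Theorem~\ref{thm:Fokker-Planck}: it is stated in Section~\ref{subsec:Stochastic-Calculus} as classical background alongside It\={o}'s lemma, with no proof given. So there is no proof in the paper to compare against. Your derivation is the standard one (apply It\={o} to a test function, take expectations, integrate by parts twice, invoke density of $C_c^\infty$), and it is correct at the level of the formal calculation; you also correctly identify the cyclic trace step $\tr[\sigma^T(\nabla^2 f)\sigma]=\sum_{ij}A_{ij}\partial_{ij}f$ that turns It\={o}'s $\sigma$-form into the $A$-form, and you are honest that the analytic caveats (existence and smoothness of $p$, differentiation under the integral, sufficient regularity of $\mu,A$) are assumed rather than established. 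For the purposes of this paper those caveats are acceptable, since the theorem is invoked only to identify the drift making the uniform density stationary, and the relevant $A=(\nabla^2\phi)^{-1}$ and $\mu$ are smooth and nondegenerate on the interior of the polytope. No gap to flag; this is exactly the proof the paper implicitly defers to the standard literature.
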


\subsubsection{\label{subsec:Derivation-of-GW}Derivation of the Geodesic walk}

Given a smooth convex function $\phi$ on the convex domain $M$,
namely that it is convex and is infinitely differentiable at every
interior point of $M$, we consider the corresponding diffusion equation
\begin{eqnarray*}
\frac{\partial}{\partial t}p(x,t) & = & \frac{1}{2}\nabla\cdot\left(\nabla^{2}\phi\right)^{-1}\nabla p.
\end{eqnarray*}
We can expand it by
\begin{eqnarray*}
\frac{\partial}{\partial t}p(x,t) & = & \frac{1}{2}\sum_{i=1}^{n}\frac{\partial}{\partial x_{i}}\left(\sum_{j=1}^{n}\left(\left(\nabla^{2}\phi\right)^{-1}\right)_{ij}\frac{\partial}{\partial x_{j}}p(x,t)\right)\\
 & = & \frac{1}{2}\sum_{i=1}^{n}\sum_{j=1}^{n}\frac{\partial^{2}}{\partial x_{i}x_{j}}\left(\left(\left(\nabla^{2}\phi\right)^{-1}\right)_{ij}p(x,t)\right)-\frac{1}{2}\sum_{i=1}^{n}\sum_{j=1}^{n}\frac{\partial}{\partial x_{i}}\left(\frac{\partial}{\partial x_{j}}\left(\left(\nabla^{2}\phi\right)^{-1}\right)_{ij}p(x,t)\right).
\end{eqnarray*}
The uniform distribution is the stationary distribution of this diffusion
equation. Now applying the Fokker-Planck equation (Theorem \ref{thm:Fokker-Planck})
with $A=\left(\nabla^{2}\phi\right)^{-1}$, the SDE for the above
diffusion is given by: 

\[
dx_{t}=\mu(x_{t})dt+\left(\nabla^{2}\phi(x_{t})\right)^{-1/2}dW_{t}.
\]
This explains the definition of (\ref{eq:SDE}). To simplify the notation,
we write the SDE as 
\begin{equation}
dx_{t}=\mu(x_{t})dt+\sigma(x_{t})dW_{t}\label{eq:Dikin_SDE}
\end{equation}
where $\sigma(x_{t})=\left(\nabla^{2}\phi(x_{t})\right)^{-1/2}$.
One way to simulate this is via the Euler\textendash Maruyama method,
namely
\[
x_{(t+1)h}=x_{th}+\mu(x_{th})h+\sigma(x_{th})w_{th}\sqrt{h}
\]
where $w_{th}\sim N_{x_{th}}(0,I)$. We find the direction we are
heading and take a small step along that direction. However, if we
view $M$ as a manifold, then directly adding the direction $\mu(x_{th})h+\sigma(x_{th})w_{th}\sqrt{h}$
to $x_{th}$ is not natural; the Euclidean coordinate is just an arbitrary
coordinate system and we could pick any other coordinate systems and
add the direction into $x_{th}$, giving a different step. Instead,
we take the step in normal coordinates (Section \ref{subsec:normal_coordinate}). 

In particular, given an initial point $x_{0}$, we define $F=\exp_{x_{0}}^{-1}$
and we note that $F(x_{t})$ is another SDE. To see the defining equation
of this transformed SDE, we use It\={o}'s lemma (Lemma \ref{lem:Ito})
to show that the transformed SDE looks the same but with half the
drift term. This explains the formulation of geodesic walk: $x^{(j+1)}=\exp_{x^{(j)}}(\sqrt{h}w+\frac{h}{2}\mu(x^{(j)}))$.
\begin{lem}
\label{lem:half-drift}Let $F=\exp_{x_{0}}^{-1}$ and $x_{t}$ satisfies
the SDE (\ref{eq:Dikin_SDE}) Then we have
\begin{equation}
dF(x_{0})=\frac{1}{2}\mu(x_{0})dt+\sigma(x_{0})dW_{0}.\label{eq:geodesic_SDE}
\end{equation}
\end{lem}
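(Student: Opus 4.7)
The plan is to apply Itô's lemma (Lemma \ref{lem:Ito}) componentwise to the vector-valued function $F$, and show that the second-order It\^o correction cancels exactly half of the drift $\mu$. Concretely, I would fix the basis of $T_{x_0}M$ promised by Lemma \ref{lem:Hessian_normal_map}, so that $DF(x_0)[h]=h$ and $D^2F_k(x_0)[h,h]=h^T\Gamma^k h$, where $\Gamma^k$ has entries $\Gamma_{ij}^k=\tfrac12\sum_l g^{kl}\phi_{ijl}$. Then $\nabla F_k(x_0)=e_k$ and $\nabla^2 F_k(x_0)=\Gamma^k$, and since $F$ has no explicit time dependence, It\^o's lemma gives, at $x_0$,
\[
dF_k(x_0)=\Bigl\{\mu_k(x_0)+\tfrac12\tr\bigl[\sigma^T\Gamma^k\sigma\bigr]\Bigr\}dt+(\sigma e_k)^T dW_0.
\]

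The main step is then to verify the identity $\tr[\sigma^T\Gamma^k\sigma]=-\mu_k(x_0)$. Using $\sigma\sigma^T=(\nabla^2\phi)^{-1}$, the cyclic property of the trace gives $\tr[\sigma^T\Gamma^k\sigma]=\tr[\Gamma^k(\nabla^2\phi)^{-1}]=\tfrac12\sum_{ijl}g^{kl}g^{ij}\phi_{ijl}$. On the other hand, differentiating the identity $g^{-1}g=I$ yields the standard formula $\partial_j g^{ij}=-\sum_{pq}g^{ip}g^{jq}\phi_{pqj}$, so substituting into the definition (\ref{eq:drift_term_general}) of $\mu$ gives $\mu_k=-\tfrac12\sum_{jpq}g^{kp}g^{jq}\phi_{pqj}$. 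Relabeling summation indices and using the symmetry $\phi_{pqj}=\phi_{jqp}$ of third partial derivatives matches this with the trace expression up to sign, establishing the cancellation.

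Combining the two pieces, the drift in the $F$-coordinates becomes $\mu_k-\tfrac12\mu_k=\tfrac12\mu_k$, and since the noise coefficient at $x_0$ is $DF(x_0)\sigma(x_0)=\sigma(x_0)$ (as $DF(x_0)$ is the identity), we get in vector form
\[
dF(x_0)=\tfrac12\mu(x_0)\,dt+\sigma(x_0)\,dW_0,
\]
as claimed.

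The only real obstacle is bookkeeping: one must carefully match the two expressions involving products of three inverse-metric entries and a third derivative of $\phi$, keeping track of which pair of indices is contracted. The conceptual content of the lemma is that, in normal coordinates at $x_0$, the Christoffel contribution to It\^o's second-order correction is precisely $-\tfrac12\mu$; this is why the geodesic walk uses drift $\tfrac12\mu$ rather than $\mu$ in (\ref{eq:geo_walk}).
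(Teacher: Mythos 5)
Your proposal is correct and follows essentially the same route as the paper: apply It\^o's lemma componentwise, read off $\nabla F_k(x_0)=e_k$ and $\nabla^2 F_k(x_0)=\Gamma^k$ from Lemma \ref{lem:Hessian_normal_map}, evaluate the trace term via $\sigma\sigma^T=(\nabla^2\phi)^{-1}$, and match it against the explicit formula for $\mu_k$ obtained by differentiating the inverse Hessian, which shows the second-order correction is exactly $-\tfrac12\mu_k$. The paper carries out the same index bookkeeping; the only cosmetic difference is that it writes out the intermediate steps $\tr[\sigma^T\Gamma^k\sigma]=\sum_i e_i^T\Gamma^k(\nabla^2\phi)^{-1}e_i=\tfrac12\sum_{ijl}g^{kl}\phi_{ijl}g^{ji}$ explicitly rather than invoking the cyclic property in one line.
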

\begin{proof}
It\={o}'s lemma (Lemma \ref{lem:Ito}) shows that 
\[
dF_{k}(x_{t})=\left\{ \left\langle \nabla F_{k},\mu\right\rangle +\frac{1}{2}\tr\left[\sigma^{T}\left(\nabla^{2}F_{k}\right)\sigma\right]\right\} dt+\left(\nabla F_{k}\right)^{T}\sigma dW_{t}
\]
where $F_{k}$ indicates the $k^{th}$ coordinate of $F$. 

From Lemma \ref{lem:Hessian_normal_map}, we have that $\left\langle \nabla F_{k}(x_{0}),\mu\right\rangle =\mu_{k}$,
$\left(\nabla F_{k}(x_{0})\right)^{T}\sigma=e_{k}^{T}\sigma$ and
\begin{eqnarray*}
\tr\left[\sigma(x_{0})^{T}\left(\nabla^{2}F_{k}(x_{0})\right)\sigma(x_{0})\right] & = & \sum_{i}D^{2}F_{k}[\sigma e_{i},\sigma e_{i}]\\
 & = & \sum_{i}e_{i}^{T}\sigma^{T}\Gamma^{k}\sigma e_{i}\\
 & = & \tr\left(\sigma^{T}\Gamma^{k}\sigma\right)\\
 & = & \sum e_{i}^{T}\Gamma^{k}\left(\nabla^{2}\phi\right)^{-1}e_{i}.
\end{eqnarray*}
Now, using Lemma \ref{lem:Hessian_formula}, we have that
\[
\tr\left[\sigma(x_{0})^{T}\left(\nabla^{2}F_{k}(x_{0})\right)\sigma(x_{0})\right]=\frac{1}{2}\sum_{ijl}g^{kl}\phi_{ijl}g^{ji}
\]
Hence, we have that
\[
dF_{k}(x_{0})=\left\{ \mu_{k}+\frac{1}{4}\sum_{ijl}g^{kl}\phi_{ijl}g^{ji}\right\} dt+e_{k}^{T}\sigma dW_{t}
\]
Recall that the drift term (\ref{eq:drift_term_general}) is given
by 
\begin{eqnarray}
\mu_{k} & = & \frac{1}{2}\sum_{i=1}^{n}\frac{\partial}{\partial x_{i}}\left(\left(\nabla^{2}\phi\right)^{-1}\right)_{ki}\nonumber \\
 & = & -\frac{1}{2}\sum_{i}e_{k}^{T}\left(\nabla^{2}\phi\right)^{-1}\frac{\partial}{\partial x_{i}}\nabla^{2}\phi\left(\nabla^{2}\phi\right)^{-1}e_{i}\nonumber \\
 & = & -\frac{1}{2}\sum_{l,i,j}g^{kl}\frac{\partial}{\partial x_{i}}\phi_{lj}g^{ji}\nonumber \\
 & = & -\frac{1}{2}\sum_{ijl}g^{kl}\phi_{ijl}g^{ji}.\label{eq:formula_mu_cts}
\end{eqnarray}
Therefore, we have the result.
\end{proof}
To understand why Euler-Maruyama method works especially better on
normal coordinates, we recall the following theorem:
\begin{thm}[Euler-Maruyama method]
Given a SDE $dx_{t}=\mu(x_{t})dt+\sigma(x_{t})dW_{t}$ where both
$\mu(x_{t})\in\R^{d}$ and $\sigma(x_{t})\in\R^{d\times m}$ are Lipschitz
smooth functions, consider the algorithm 
\begin{align*}
\overline{x}_{(i+1)h} & =\overline{x}_{ih}+\mu(\overline{x}_{ih})h+\sigma(\overline{x}_{ih})(W_{(i+1)h}-W_{ih})\text{ for all }i\geq0\\
\overline{x}_{0} & =x_{0}.
\end{align*}
For some small enough $h>0$, we have that
\[
\E\norm{\overline{x}_{1}-x_{1}}=O(\sqrt{h}).
\]
\end{thm}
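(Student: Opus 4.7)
The plan is to follow the standard strong-convergence analysis of the Euler--Maruyama scheme, establishing the bound in mean-square first and then extracting the $L^1$ bound by Cauchy--Schwarz. The key device is to embed the discrete scheme into a continuous-time process so that both the true SDE and its approximation can be compared by Itô's isometry on a common time axis.

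First, I would define the piecewise continuous interpolant $\overline{x}_t$ for $t\in[ih,(i+1)h]$ by
\[
\overline{x}_t \;=\; \overline{x}_{ih} + \mu(\overline{x}_{ih})(t-ih) + \sigma(\overline{x}_{ih})(W_t - W_{ih}),
\]
so that $\overline{x}_t$ agrees with the scheme at the grid points and satisfies $d\overline{x}_t = \mu(\overline{x}_{\kappa(t)})\,dt + \sigma(\overline{x}_{\kappa(t)})\,dW_t$, where $\kappa(t) = h\lfloor t/h\rfloor$. Writing the error $e(t) = \overline{x}_t - x_t$, the two SDEs give
\[
e(t) \;=\; \int_0^t \!\!\bigl[\mu(\overline{x}_{\kappa(s)}) - \mu(x_s)\bigr]\,ds \;+\; \int_0^t \!\!\bigl[\sigma(\overline{x}_{\kappa(s)}) - \sigma(x_s)\bigr]\,dW_s.
\]

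Second, I would square and take expectations, using Cauchy--Schwarz on the drift integral and Itô's isometry on the diffusion integral to get
\[
\E\norm{e(t)}^2 \;\le\; 2t\!\int_0^t \!\!\E\norm{\mu(\overline{x}_{\kappa(s)}) - \mu(x_s)}^2 ds \;+\; 2\!\int_0^t \!\!\E\norm{\sigma(\overline{x}_{\kappa(s)}) - \sigma(x_s)}_F^2 ds.
\]
Each integrand is split by the triangle inequality into a \emph{freezing} piece $\mu(\overline{x}_{\kappa(s)}) - \mu(\overline{x}_s)$ (and similarly for $\sigma$) and a \emph{propagation} piece $\mu(\overline{x}_s) - \mu(x_s)$. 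Lipschitz continuity turns both pieces into norms of differences of sample paths, reducing the estimate to bounding $\E\norm{\overline{x}_s - \overline{x}_{\kappa(s)}}^2$ and $\E\norm{e(s)}^2$.

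Third, I would estimate the freezing error by a direct calculation: since $\overline{x}_s - \overline{x}_{\kappa(s)} = \mu(\overline{x}_{\kappa(s)})(s-\kappa(s)) + \sigma(\overline{x}_{\kappa(s)})(W_s - W_{\kappa(s)})$, Itô's isometry together with the linear-growth consequence of the Lipschitz hypothesis gives $\E\norm{\overline{x}_s - \overline{x}_{\kappa(s)}}^2 = O(h)$ uniformly in $s \in [0,1]$ (for $h$ small enough, using that $\E\norm{\overline{x}_{\kappa(s)}}^2$ stays bounded). Substituting back produces an integral inequality of the form
\[
\E\norm{e(t)}^2 \;\le\; C_1 h \;+\; C_2 \int_0^t \E\norm{e(s)}^2 ds,
\]
and Gronwall's lemma then yields $\E\norm{e(1)}^2 \le C_1 h\, e^{C_2}= O(h)$. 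Applying Jensen's inequality, $\E\norm{\overline{x}_1 - x_1} \le \bigl(\E\norm{e(1)}^2\bigr)^{1/2} = O(\sqrt{h})$, as claimed.

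The main obstacle is the freezing-error estimate: one must show that the piecewise-interpolated process does not drift far from its last grid value in a step of length $h$, and this requires a uniform-in-$s$ second-moment bound on $\overline{x}_{\kappa(s)}$. The Lipschitz assumption supplies linear growth, so a preliminary Gronwall argument on $\E\norm{\overline{x}_{\kappa(s)}}^2$ closes this auxiliary estimate; once this is in hand, the rest of the proof is a single application of Gronwall. The restriction to ``$h$ small enough'' is exactly what is needed to make the constants uniform over the discretization.
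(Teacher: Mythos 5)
The paper does not actually prove this theorem; it is stated as classical background from the SDE literature (and the paper even remarks that it and the Milstein theorem ``are used for conveying intuition only''). So there is no paper-internal proof to compare against. Your proposal reconstructs the standard strong-convergence proof for the Euler--Maruyama scheme (continuous-time interpolant, Cauchy--Schwarz plus It\^o isometry to pass to mean-square, a freezing/propagation split, an auxiliary second-moment bound via linear growth, Gronwall, then Jensen to convert $L^2$ to $L^1$), and that argument is correct as sketched. The one place to be careful is the order in which you run the two Gronwall arguments: the uniform bound $\sup_{s\le 1}\E\norm{\overline{x}_{\kappa(s)}}^2<\infty$ must be obtained first (from linear growth of $\mu,\sigma$ alone, without any reference to the error $e$), and only then fed into the freezing estimate; you do flag this, so the argument closes. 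This is exactly the proof found in standard references such as Kloeden--Platen, and it is a legitimate and complete justification of the statement.
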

As a comparison, it is known that there is better method such as the
following that gives better error.
\begin{thm}[Milstein method]
\label{thm:milstein_method} Given a SDE $dx_{t}=\mu(x_{t})dt+\sigma(x_{t})dW_{t}$
where both $\mu(x_{t})\in\R^{d}$ and $\sigma(x_{t})\in\R^{d\times m}$
are Lipschitz smooth functions, consider the algorithm 
\begin{align*}
\overline{x}_{(i+1)h} & =\overline{x}_{ih}+\mu(\overline{x}_{ih})h+\sigma(\overline{x}_{ih})(W_{(i+1)h}-W_{ih})+\sum_{j_{1},j_{2}=1}^{m}L_{i}^{j_{1}}\sigma^{k,j_{2}}(\overline{x}_{ih})I_{j_{1}j_{2},i}\text{ for all }i\geq0\\
\overline{x}_{0} & =x_{0}.
\end{align*}
where $L_{i}^{j}=\sum_{k=1}^{d}\sigma^{k,j}(\overline{x}_{ih})\frac{\partial}{\partial x^{k}}$
and $I_{j_{1}j_{2},i}\sim\int_{0}^{h}\int_{ih}^{ih+t_{1}}dW_{t_{2}}^{j_{2}}dW_{t_{1}}^{j_{1}}$.
For some small enough $h>0$, we have that
\[
\E\norm{\overline{x}_{1}-x_{1}}=O(h).
\]
\end{thm}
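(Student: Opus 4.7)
The plan is to follow the classical proof of strong convergence for It\={o}-Taylor schemes: expand the exact solution over one step via iterated applications of It\={o}'s formula, observe that the Milstein update is precisely the expansion truncated to include the leading stochastic second-order term, bound the remainder in $L^{2}$, and then propagate the one-step error globally using a Gronwall-type inequality.

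\textbf{Itô-Taylor expansion.} Over the interval $[ih,(i+1)h]$, write
\begin{equation*}
x_{(i+1)h} = x_{ih} + \int_{ih}^{(i+1)h} \mu(x_s)\,ds + \int_{ih}^{(i+1)h} \sigma(x_s)\,dW_s,
\end{equation*}
and apply It\={o}'s lemma (Lemma \ref{lem:Ito}) to $s\mapsto \sigma(x_s)$ around the anchor $x_{ih}$ to get
\begin{equation*}
\sigma(x_s) = \sigma(x_{ih}) + \sum_{j_1=1}^{m} L^{j_1}\sigma(x_{ih}) \int_{ih}^{s} dW_{u}^{j_1} + R_\sigma(s),
\end{equation*}
where $L^{j_1}=\sum_{k}\sigma^{k,j_1}\partial_{k}$ and $R_\sigma$ consists of a bounded-variation integral plus further iterated stochastic integrals. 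Treating $\mu(x_s)$ similarly and substituting, the exact step equals the Milstein update $\mu(x_{ih})h+\sigma(x_{ih})\Delta W_{i}+\sum_{j_1,j_2}L^{j_1}\sigma^{k,j_2}(x_{ih})I_{j_1 j_2,i}$ plus a residual $R_{i}$ built out of triple iterated Itô integrals and mixed $ds\,dW$ integrals.

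\textbf{Local and global error.} Using It\={o} isometry together with the Lipschitz-smoothness assumption on $\mu,\sigma$ (which yields a priori moment bounds on $x_s$ and on the integrands occurring in $R_{i}$), every surviving term in $R_{i}$ carries at least three $dW$'s or one $ds$ paired with a $dW$, so $\E\|R_{i}\|^{2}=O(h^{3})$. Thus, if $\tilde{x}_{(i+1)h}$ denotes the Milstein update started from the exact value $x_{ih}$, then $\E\|x_{(i+1)h}-\tilde{x}_{(i+1)h}\|^{2}=O(h^{3})$. Now let $e_{i}=\overline{x}_{ih}-x_{ih}$. Lipschitz smoothness of $\mu,\sigma$ (together with that of $L^{j_1}\sigma^{\cdot,j_2}$, which follows because $\mu,\sigma$ are smooth) makes the discrete one-step map Lipschitz in its starting point up to an $O(h)$ correction, so
\begin{equation*}
\E\|e_{i+1}\|^{2} \le (1+Ch)\,\E\|e_{i}\|^{2} + O(h^{3}).
\end{equation*}
Iterating $1/h$ times yields $\E\|e_{1/h}\|^{2}=O(h^{2})$, and Jensen's inequality gives the claimed $\E\|\overline{x}_{1}-x_{1}\|=O(h)$.

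\textbf{Where the difficulty lies.} The delicate point is the It\={o}-Taylor bookkeeping: a double stochastic integral $\int\int dW^{j_1}dW^{j_2}$ has $L^{2}$-norm of order $h$, not $h^{3/2}$. Hence, if one lumped such terms into the remainder (as in the Euler\textendash Maruyama analysis), only $O(\sqrt{h})$ global accuracy would survive. The correction $I_{j_1 j_2,i}$ in the Milstein update is defined precisely to absorb these leading second-order stochastic terms, and the crux of the proof is verifying that \emph{every} such order-$h$ contribution is captured, so that the leftover is truly $O(h^{3/2})$ in $L^{2}$. Once this cancellation is pinned down, the Gronwall propagation is routine.
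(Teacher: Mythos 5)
The paper states Theorem \ref{thm:milstein_method} without proof, treating it as a classical result used ``for conveying intuition only,'' so there is no in-paper argument to compare against. Your approach --- It\={o}--Taylor expansion over one step, truncation at the Milstein level, local $L^2$ error bound, Gronwall iteration --- is the standard route from the numerical SDE literature (e.g.\ Kloeden--Platen, Thm.\ 10.6.3), and the expansion and local error estimates are correct: every surviving term in $R_i$ is a triple stochastic integral or contains a $ds$ paired with a $dW$, so $\E\norm{R_i}^2 = O(h^3)$.

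The gap is in the step you call ``routine.'' You assert $\E\norm{e_{i+1}}^2 \le (1+Ch)\,\E\norm{e_i}^2 + O(h^3)$ from the Lipschitz property of the one-step map together with $\E\norm{R_i}^2 = O(h^3)$. But if one bounds the cross term $2\,\E\bigl[\langle \varphi_h(\overline{x}_{ih})-\varphi_h(x_{ih}),\,R_i\rangle\bigr]$ by Cauchy--Schwarz followed by Young, one only obtains $h\,\E\norm{e_i}^2 + O(h^2)$, giving the weaker recursion $\E\norm{e_{i+1}}^2 \le (1+Ch)\,\E\norm{e_i}^2 + O(h^2)$, which after $1/h$ steps yields $\E\norm{e_{1/h}} = O(\sqrt{h})$ --- the Euler rate, not the Milstein rate. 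To recover $O(h^3)$ in the recursion you must also invoke the \emph{conditional mean} bound $\E[R_i \mid \mathcal{F}_{ih}] = O(h^2)$: conditioning on $\mathcal{F}_{ih}$, the $\mathcal{F}_{ih}$-measurable part of $\varphi_h(\overline{x}_{ih})-\varphi_h(x_{ih})$ pairs with $\E[R_i\mid\mathcal{F}_{ih}]$ to give $O(h^2)\,\E\norm{e_i} \le h\,\E\norm{e_i}^2 + O(h^3)$, while the stochastic parts pair to give similarly controllable terms. This is precisely the condition $p_1 \ge p_2 + \tfrac12$ in the standard local-to-global theorem, and it is what distinguishes the Milstein analysis from the Euler analysis at the Gronwall level. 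Happily the bound does hold here --- each surviving remainder term either contains a stochastic inner integral and hence has zero conditional mean, or is a pure $ds\,ds$ integral of size $O(h^2)$ --- but it must be stated and used explicitly; it cannot be absorbed into ``routine.''
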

Note that under normal coordinates, the metric $\sigma$ is locally
constant (Lemma \ref{lem:normal_coordinate_constant}). Due to this,
the term $\sum L^{j_{1}}\sigma^{k,j_{2}}I_{j_{1}j_{2}}$ in the Milstein
method vanishes. Hence, Euler-Maruyama method is equivalent to Milstein
method under normal coordinates. This is one of the reasons we use
geodesic instead of straight line as in Dikin walk. We remark that
this theorem is used for conveying intuition only. 

\subsection{Complex analysis \label{subsec:Complex-analysis}}

A complex function is said to be \emph{(complex) analytic} (equivalently,
\emph{holomorphic}) if it is locally defined by a convergent power
series. Hartog's theorem shows that a complex function in several
variables $f:\C^{n}\rightarrow\C$ is holomorphic iff it is analytic
in each variable (while fixing all the other variables). For any power
series expansion, we define the radius of convergence at $x$ as the
largest number $r$ such that the series converges on the sphere with
radius $r$ centered at $x$. In this paper, we use the fact that
complex analytic functions behave very nicely up to the radius of
convergence, and one can avoid complicated and tedious computations
by using general convergence theorems.
\begin{thm}
[Cauchy's Estimates]\label{thm:cauchy_estimate}Suppose $f$ is holomorphic
on a neighborhood of the ball $B\defeq\{z\in\mathbb{C}\ :\ \left|z-z_{0}\right|\leq r\}$,
then we have that
\[
\left|f^{(k)}(z_{0})\right|\leq\frac{k!}{r^{k}}\sup_{z\in B}\left|f(z)\right|.
\]
In particular, for any rational function $f(z)=\frac{\prod_{i=1}^{\alpha}(z-a_{i})}{\prod_{j=1}^{\beta}(z-b_{j})}$
and any ball $B\defeq\{z\in\mathbb{C}\ :\ \left|z-z_{0}\right|\leq r\}$
such that $b_{j}\notin B$, we have that
\[
\left|f^{(k)}(z_{0})\right|\leq\frac{k!}{r^{k}}\sup_{z\in B}\left|f(z)\right|.
\]
\end{thm}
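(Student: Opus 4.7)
The plan is to derive both estimates from the standard Cauchy integral formula for derivatives. Since $f$ is holomorphic on an open neighborhood of the closed ball $B$, the circle $\partial B = \{z \in \mathbb{C} : |z - z_0| = r\}$ lies inside the domain of holomorphy. Hence Cauchy's integral formula for the $k$-th derivative applies:
\[
f^{(k)}(z_0) \;=\; \frac{k!}{2\pi i} \oint_{\partial B} \frac{f(z)}{(z - z_0)^{k+1}}\, dz.
\]
This reduces the problem to an $ML$-estimate on the contour integral.

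The next step is to bound the integrand uniformly on $\partial B$. On the circle we have $|z - z_0| = r$ exactly, so
\[
\left| \frac{f(z)}{(z - z_0)^{k+1}} \right| \;\le\; \frac{M}{r^{k+1}}, \qquad M \defeq \sup_{z \in B} |f(z)|.
\]
The length of $\partial B$ is $2\pi r$, so the standard contour-integral inequality $\left|\oint_{\gamma} g\, dz\right| \le \mathrm{length}(\gamma) \cdot \sup_{\gamma} |g|$ gives
\[
\left| f^{(k)}(z_0) \right| \;\le\; \frac{k!}{2\pi} \cdot \frac{M}{r^{k+1}} \cdot 2\pi r \;=\; \frac{k!}{r^{k}} \, M,
\]
which is the claimed bound.

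For the second assertion, observe that the rational function $f(z) = \frac{\prod_{i=1}^{\alpha}(z - a_i)}{\prod_{j=1}^{\beta}(z - b_j)}$ is holomorphic on $\mathbb{C} \setminus \{b_1,\ldots,b_\beta\}$. The hypothesis $b_j \notin B$ means each pole $b_j$ has positive distance from the compact set $B$, so $f$ extends holomorphically to an open neighborhood of $B$; the first part of the theorem then applies verbatim, yielding the same bound.

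There is essentially no obstacle — the result is classical and short. The only thing to take care of is the topological point that ``holomorphic on a neighborhood of $B$'' genuinely provides an open set containing $\partial B$ on which Cauchy's formula is valid, which is immediate from the compactness of $B$ and positivity of the distance from $B$ to the singular set (or to the complement of the domain of holomorphy).
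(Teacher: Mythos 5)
The paper states this as a classical background fact without proof, so there is no paper argument to compare against. Your proof is the standard and correct one: Cauchy's integral formula for the $k$-th derivative on the boundary circle $\partial B$, followed by the $ML$-estimate, and the second part reduces to the first since a rational function is holomorphic away from its poles, hence on an open neighborhood of the compact set $B$ once all $b_j \notin B$.
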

A similar estimate holds for analytic functions\emph{ }in\emph{ }several
variables. We will also use the following classical theorem.
\begin{thm}[Simplified Version of Cauchy\textendash Kowalevski theorem]
\label{thm:cauchy_kowalevski}If $f$ is a complex analytic function
defined on a neighborhood of $(z_{0},\alpha)\in\mathbb{C}^{n+1}$,
then the problem
\[
\frac{dw}{dz}=f(z,w),\quad w(z_{0})=\alpha,
\]
has a unique complex analytic solution $w$ defined on a neighborhood
around $z_{0}$. 

Similarly, for a complex analytic function $f$ defined in a neighborhood
of $(z_{0},\alpha,\beta)\in\mathbb{C}^{2n+1}$, the ODE
\[
\frac{d^{2}w}{dz^{2}}=f(z,w,\frac{dw}{dz}),\quad w(z_{0})=\alpha,\quad\frac{dw}{dz}(z_{0})=\beta
\]
has a unique complex analytic solution $w$ defined in a neighborhood
around $z_{0}$. 
\end{thm}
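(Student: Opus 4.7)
The plan is to prove the first-order statement by the classical method of majorants, then reduce the second-order statement to it by introducing $v=dw/dz$ as an auxiliary unknown and solving the first-order system $(w,v)'=(v,f(z,w,v))$ on $\C^{2n}$ with initial data $(\alpha,\beta)$. For the first-order case, after translating so that $z_0=0$ and $\alpha=0$, I look for a formal power series solution $w(z)=\sum_{k\geq1}a_k z^k/k!$ with coefficients $a_k\in\C^n$. Differentiating $w'=f(z,w)$ repeatedly and evaluating at $z=0$ determines each component $a_k^\ell=\partial_z^{k-1}[f^\ell(z,w(z))]|_{z=0}$, which by Fa\`a di Bruno's formula is a universal polynomial $P_k^\ell$ with non-negative rational coefficients in the mixed partials of $f^\ell$ at $(0,0)$ and in $a_1,\dots,a_{k-1}$. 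This recursion forces the coefficients, yielding uniqueness of any analytic solution, so what remains is to show that the formal series has a positive radius of convergence.

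For convergence I apply the method of majorants. Since $f$ is analytic at $(0,0)$, the several-variable version of Cauchy's estimates (Theorem~\ref{thm:cauchy_estimate} applied variable by variable on a closed polydisk of radii $(r,\rho,\dots,\rho)$ on which $|f|\leq M$) bounds every mixed partial of every component $f^\ell$ at $(0,0)$ by the corresponding Taylor coefficient of the scalar function
\[ F(z,W)\defeq\frac{M}{(1-z/r)(1-W/\rho)^{n}}, \]
under the identification $w_1=\dots=w_n=W$. Let $W(z)=\sum_{k\geq1}A_k z^k/k!$ be the formal solution of $W'=F(z,W)$, $W(0)=0$, produced by the same recursion $A_k=P_k(\{F_{ij}\},A_1,\dots,A_{k-1})$. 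Because each $P_k^\ell$ has non-negative coefficients, induction on $k$ gives $|a_k^\ell|\leq A_k$ for every $\ell$, so each component of $w$ converges on the disk of convergence of $W$.

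Finally, the majorant ODE is separable and integrates in closed form to a $W(z)$ involving a root and $\log(1-z/r)$, which is analytic on an explicit disk of positive radius around $z=0$. On this disk $w$ is therefore analytic, and since term-by-term differentiation of a convergent power series is valid, the formal identity $w'=f(z,w)$ becomes a genuine equation of analytic functions, completing existence. The main obstacle in the plan is verifying, via Fa\`a di Bruno's formula, that each $P_k^\ell$ truly has non-negative coefficients so that elementwise majorization of $f$ by $F$ transfers to elementwise majorization of the coefficient sequences; this is a combinatorial identity rather than an inequality, but it is the only delicate point. Once it is in place, everything else reduces to the explicit separable ODE calculation and the reduction from the second-order to the first-order system.
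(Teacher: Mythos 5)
The paper does not prove this theorem; it cites it as a classical background fact (used only to conclude that geodesics, parallel transports and Jacobi fields on the Hessian manifold are complex analytic in Lemma~\ref{cor:geo_jacobi_analytic_log}). So there is no paper proof to compare against, and your argument should be judged on its own.

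Your proposal is the standard majorant-method proof and is correct. The reduction of the second-order case to a first-order system on $\C^{2n}$ via $v = dw/dz$ is routine, uniqueness follows immediately from the coefficient recursion, and the convergence argument via a separable majorant ODE is sound. One remark on presentation: the way you introduce the scalar majorant $F(z,W)=M/\bigl((1-z/r)(1-W/\rho)^{n}\bigr)$ ``under the identification $w_{1}=\cdots=w_{n}=W$'' is a little compressed, since a mixed partial of $f^{\ell}$ indexed by a multi-index $j$ has to be compared against the coefficient of $W^{|j|}$ in $F$, and the binomial factor $\binom{n+|j|-1}{|j|}$ enters in a way that needs a sentence of justification. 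A cleaner route is to first majorize each component $f^{\ell}$ by the genuine $n$-variable function $\tilde F(z,W_{1},\dots,W_{n})=M/\bigl((1-z/r)\prod_{m}(1-W_{m}/\rho)\bigr)$, whose Taylor coefficients match the Cauchy estimates multi-index by multi-index, consider the symmetric majorant system $W_{\ell}'=\tilde F(z,W)$ with $W_{\ell}(0)=0$, and observe that by symmetry $W_{1}=\cdots=W_{n}$, which collapses the system to exactly the scalar separable ODE $W'=M/\bigl((1-z/r)(1-W/\rho)^{n}\bigr)$ that you solve explicitly. That makes the nonnegativity of the Fa\`a di Bruno coefficients, which you correctly flag as the one delicate point, the only thing left to verify, and it follows by repeated application of the product and chain rules (each step is a sum of products of previously obtained nonnegative expressions). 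With that spelled out the argument is complete.
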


\pagebreak{}
\section{Convergence of the Geodesic Walk}

\label{sec:Convergence}The geodesic walk is a Metropolis-filtered
Markov chain, whose stationary distribution is the uniform distribution
over the polytope to be sampled. We will prove that the conductance
of this chain is large with an appropriate choice of the step-size
parameter. Therefore, its mixing time to converge to the stationary
distribution will be small. The proof of high conductance involves
showing (a) the acceptance probability of the Metropolis filter is
at least a constant (b) the induced metric satisfies a strong isoperimetric
inequality (c) two points that are close in metric distance are also
close in probabilistic distance, namely, the one-step distributions
from them have large overlap. Besides bounding the number of steps,
we also have to show that each step of the Markov chain can be implemented
efficiently. We do this in later sections via an efficient algorithm
for approximately solving ODEs.

In this section, we present the general conductance bound for Hessian
manifolds. The bound on the conductance will use several parameters
determined by the specific barrier function. In Section \ref{sec:Logarithmic-Barrier},
we bound these parameters for the logarithmic barrier. 

For a Markov chain with state space $M$, stationary distribution
$Q$ and next step distribution $P_{u}(\cdot)$ for any $u\in M$,
the conductance of the Markov chain is 

\[
\phi=\inf_{S\subset M}\frac{\int_{S}P_{u}(M\setminus S)dQ(u)}{\min\left\{ Q(S),Q(M\setminus S)\right\} }.
\]

The conductance of an ergodic Markov chain allows us to bound its
mixing time, i.e., the rate of convergence to its stationary distribution,
e.g., via the following theorem of Lovász and Simonovits.
\begin{thm}[\cite{LS93}]
Let $Q_{t}$ be the distribution of the current point after $t$
steps of a Markov chain with stationary distribution $Q$ and conductance
at least $\phi,$ starting from initial distribution $Q_{0}.$ Then, 
\end{thm}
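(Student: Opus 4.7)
The statement being proved is the classical Lovász--Simonovits mixing-time bound, whose conclusion will be of the form
\[
\sup_{A \subseteq M} \bigl|Q_t(A) - Q(A)\bigr| \;\leq\; \sqrt{M_0}\,\Bigl(1 - \tfrac{\phi^{2}}{2}\Bigr)^{t},
\]
where $M_0 = \sup_{A}\, Q_0(A)/Q(A)$ is the warmness of the start. The plan is to carry out the standard ``$h$-function'' argument of Lovász and Simonovits, which transfers a geometric statement about the conductance of the chain into a one-dimensional convex-analytic recursion on $[0,1]$ that is easy to iterate.

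First I would reduce the claim to controlling, for each $t$, the concave envelope
\[
h_t(x) \;\defeq\; \sup\bigl\{ Q_t(A) \;:\; A \subseteq M,\; Q(A) = x \bigr\},
\]
extended to the least concave function on $[0,1]$ majorizing this supremum (so that $h_t(0)=0$ and $h_t(1)=Q_t(M)=1$). The quantity $h_t(x) - x$ controls the worst-case deviation of $Q_t$ from $Q$ on sets of $Q$-measure $x$, so once $h_t(x) - x$ is small for all $x$, the total-variation bound follows. Its initial value satisfies $h_0(x) \leq \sqrt{M_0\, x}$ (and symmetrically for $1-x$), which injects the warm-start parameter.

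The main step is a one-step contraction lemma. For any measurable $A$ with $Q(A)=x$, stationarity and reversibility (of a Metropolis chain) give
\[
Q_{t+1}(A) \;=\; \tfrac{1}{2}\bigl(Q_t(A_+) + Q_t(A_-)\bigr),
\]
where $A_\pm$ are the ``inflated'' and ``deflated'' versions of $A$ obtained by moving mass across $\partial A$; by the conductance hypothesis, the flow across $\partial A$ in one step is at least $\phi \min(x, 1-x)$, which translates into the shift $|Q(A_\pm) - x| \geq 2\phi\min(x, 1-x)$. Taking suprema over $A$ and using concavity of $h_t$ yields the recursion
\[
h_{t+1}(x) \;\leq\; \tfrac{1}{2}\Bigl[\, h_t\bigl(x - 2\phi\min(x,1-x)\bigr) + h_t\bigl(x + 2\phi\min(x,1-x)\bigr) \Bigr].
\]
The concave-envelope step is the one delicate point: one must verify that the minimal concave majorant still satisfies the same recursion, which follows because the right-hand side is itself concave in $x$.

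Finally I would iterate the recursion. Guess $h_t(x) \leq x + \sqrt{M_0\,x(1-x)}\,(1-\phi^2/2)^t$ and verify it by induction: at the base $t=0$ it follows from warmness, and for the inductive step a short calculation using $\sqrt{(x-s)(1-x+s)} + \sqrt{(x+s)(1-x-s)} \leq 2\sqrt{x(1-x)}\sqrt{1 - s^2/(x(1-x))}$ with $s = 2\phi\min(x,1-x)$, together with $\sqrt{1-u} \leq 1 - u/2$, gives the required factor $(1-\phi^2/2)$. Evaluating at a set $A$ with $Q(A) = x$ yields the claimed exponential convergence in total variation. The hardest (i.e.\ least mechanical) point in writing this out cleanly is verifying that concavity of $h_t$ propagates through the recursion and that the inflation/deflation procedure really does produce a net shift of $2\phi\min(x,1-x)$; both are standard but require care in handling the Metropolis filter and the possibility that $Q_t$ is not absolutely continuous with respect to $Q$.
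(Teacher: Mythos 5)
The paper provides no proof here---this theorem is quoted from \cite{LS93}---so there is no internal argument to compare against. Your outline is the standard Lov\'asz--Simonovits $h$-function argument and is sound in structure, but it proves a weaker bound than the one stated: the initialization $h_0(x)\le\sqrt{M_0 x}$ with $M_0=\sup_u dQ_0(u)/dQ(u)$ yields $d_{TV}(Q_t,Q)\le\sqrt{M_0}\,(1-\phi^2/2)^t$, whereas the theorem claims this with the genuinely smaller constant $d_0=\E_{Q_0}(dQ_0/dQ)=\int (dQ_0/dQ)^2\,dQ\le M_0$. To recover $d_0$, replace your crude initial estimate by a Cauchy--Schwarz bound: for any $A$ with $Q(A)=x$,
\[
Q_0(A)-x \;=\; \int_A\Bigl(\tfrac{dQ_0}{dQ}-1\Bigr)\,dQ \;\le\; \sqrt{x}\,\Bigl(\int\bigl(\tfrac{dQ_0}{dQ}-1\bigr)^2dQ\Bigr)^{1/2} \;=\; \sqrt{x\,(d_0-1)},
\]
so $h_0(x)-x\le\sqrt{x(d_0-1)}$, and then feed that into the same one-step recursion.

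One hypothesis you should make explicit: the averaging $Q_{t+1}(A)\le\tfrac12\bigl(Q_t(A_+)+Q_t(A_-)\bigr)$, and hence the entire contraction lemma, is valid only for a \emph{lazy} chain (holding probability at least $\tfrac12$ at every state). That assumption is implicit in \cite{LS93}. The geodesic walk as written is not lazy---its acceptance probability is close to $1$---so to apply the theorem one must first lazify it (flip a fair coin at each step to decide whether to move), which costs only a constant factor in the mixing time.
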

\[
d_{TV}(Q_{t},Q)\le\sqrt{d_{0}}\left(1-\frac{\phi^{2}}{2}\right)^{t}
\]
where $d_{0}=\E_{Q_{0}}(dQ_{0}(u)/dQ(u))$ is a measure of the distance
of the starting distribution from the stationary and $d_{TV}$ is
the total variation distance.

\subsection{Hessian parameters\label{subsec:Hessian_parameters}}

The mixing of the walk depends on the maximum values of several smoothness
parameters of the manifold. Since each step of our walk involves a
Gaussian vector which can be large with some probability, many smoothness
parameters inevitably depend on this Gaussian vector. Formally, let
$\gamma$ be the geodesic used in a step of the geodesic walk with
the parameterization $\gamma:[0,\ell]\rightarrow M$ where $\ell\defeq\sqrt{nh}$.
Note that $\ell$ is not exactly the length of the geodesic step,
but it is close with high probability due to Gaussian concentration.
Rather than using supremum bounds for our smoothness parameters, it
suffices to use large probability bounds, where the probability is
over the choice of geodesic at any point $x\in\Omega$. To capture
this notion that ``most geodesics are good'', we allow the use of
an auxiliary function $V(\gamma)\geq0$ to measure how good a geodesic
is. Several of the smoothness parameters assume that this function
is bounded and Lipshitz for a sufficiently large step size $h$. More
precisely, viewing geodesics as maps $\gamma:[0,\ell]\rightarrow M,$
we assume that there exists an auxiliary real function on the tangent
bundle (union of tangent manifolds for all points in $M$), $V:TM\rightarrow\R_{+}$,
satisfying
\begin{enumerate}
\item For $h\le H$ and any variation of geodesics $\gamma_{s}$ with $V(\gamma_{s})\leq V_{0}$,
there is a $V_{1}\ge V_{0}$ s.t. $\left|\frac{d}{ds}V(\gamma_{s})\right|\leq V_{1}\left(\norm{\frac{d}{ds}\gamma_{s}(0)}_{\gamma_{s}(0)}+\ell\norm{D_{s}\gamma_{s}}_{\gamma_{s}(0)}\right)$ 
\item For any $x\in M$,
\begin{equation}
\mathbb{P}_{\text{geodesic }\gamma\text{ from }x}(V(\gamma)\leq\frac{1}{2}V_{0})\geq1-\frac{V_{0}}{100V_{1}}.\label{eq:gamma_good}
\end{equation}
\end{enumerate}
\begin{defn}
\label{def:hessian_parameter}Given a Hessian manifold $M$, maximum
step size $H$ and auxiliary function $V$ with parameters $V_{0},V_{1},$
we define the smoothness parameters $D_{0},D_{1},D_{2,}G_{1},G_{2},R_{1},R_{2}$
depending only on $M$ and the step size $h\le H$ as follows:
\begin{enumerate}
\item The maximum norm of the drift in the local metric, $D_{0}=\sup_{x\in M}\|\mu(x)\|_{x}.$
\item The smoothness of the norm of the drift, $D_{1}=\sup_{h\leq H,V(\gamma)\leq V_{0},0\leq t\leq\ell}\frac{d}{dt}\norm{\mu(\gamma(t))}_{\gamma(t)}^{2}$.
\item The smoothness of the drift, $D_{2}=\sup_{x\in M,\norm s_{x}\leq1}\norm{\nabla_{s}\mu(x)}_{x}$.
\item The smoothness of the local volume, $G_{1}=\sup_{h\leq H,V(\gamma)\leq V_{0},0\leq t\leq\ell}\left|\log\det(g(\gamma(t))))'''\right|$
where $g(x)$ is the metric at $x$.
\item The smoothness of the metric, $G_{2}=\sup\frac{d(x,y)}{d_{H}(x,y)}$
where $d_{H}$ is the Hilbert distance (defined in Section \ref{subsec:Isoperimetry})
and $d$ is the shortest path distance in $M$. 
\item The stability of the Jacobian field, $R_{1}=\sup_{h\leq H,V(\gamma)\leq V_{0},0\leq t\leq\ell}\norm{R(t)}_{F}$
where $R(t)$ is defined in Definition \ref{def:Rt_definition}.
\item The smoothness of the Ricci curvature, $R_{2}=\sup_{h\leq H,V(\gamma)\leq V_{0}}\left|\frac{d}{ds}Ric(\gamma_{s}'(t))\right|$
(see Definition \ref{def:R2}).
\end{enumerate}
\end{defn}
We refer to these as the parameters of a Hessian manifold. Our main
theorem for convergence can be stated as follows.
\begin{thm}
\label{thm:gen-convergence}On a Hessian manifold of dimension $n$
with an auxiliary function, step-size upper bound $H$ and parameters
$D_{0},D_{1},D_{2,}G_{1},G_{2},R_{1},R_{2}$, the geodesic walk with
step size 
\[
h\le\Theta(1)\min\left\{ \frac{1}{n^{1/3}D_{1}^{2/3}},\frac{1}{D_{2}},\frac{1}{nR_{1}},\frac{1}{(nD_{0}R_{1})^{2/3}},\frac{1}{\left(nR_{2}\right)^{2/3}},\frac{1}{nG_{1}^{2/3}},H\right\} 
\]
has conductance $\Omega(\sqrt{h}/G_{2})$ and mixing time $O(G_{2}^{2}/h).$
\end{thm}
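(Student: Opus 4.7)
The plan is to prove the conductance lower bound $\phi \ge \Omega(\sqrt{h}/G_2)$; the mixing time $O(G_2^2/h)$ is then immediate from the Lov\'asz--Simonovits conductance bound quoted before the theorem. Following the outline, conductance will be established through three ingredients already pre-announced: (i) the Metropolis filter accepts with constant probability (Theorem~\ref{thm:rejectionprob}), (ii) points at manifold distance $O(\sqrt{h})$ have one-step distributions with total variation $\le 1/2$ (Theorem~\ref{thm:dTV}), and (iii) a Cheeger-type isoperimetric inequality with constant $\Omega(1/G_2)$. Ingredient (iii) is the easy one: by definition of $G_2$, manifold distance $d$ dominates $d_H/G_2$, so any known isoperimetric inequality for the Hilbert metric (Cheeger constant $\Omega(1)$) transfers, losing a factor of $G_2$.

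Both (i) and (ii) rest on an explicit formula for the one-step density $p(x\xrightarrow{w}y)$ (Lemma~\ref{lem:prob_formula}). Working in normal coordinates at $x$, the proposal is Gaussian in $T_x M$; pushing forward by $\exp_x$ produces a Jacobian whose logarithm is controlled, via the Jacobi equation $D_t^2 J + R(t)J = 0$ (Theorem~\ref{thm:Jacobi_equation}), by an integral of $R(t)$ along $\gamma_v$ with $v=\sqrt{h}w+(h/2)\mu(x)$. Combining this with the metric expansion of Lemma~\ref{lem:normal_coordinate_constant} yields $\log p$ as an explicit series in $\sqrt{h}$ whose dominant pieces depend only on $\mu$, $\log\det g$, and $\mathrm{Ric}(\gamma')$.

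For (i), I would compare $\log p(x\xrightarrow{w}y)$ with $\log p(y\xrightarrow{w'}x)$ along the same geodesic $\gamma$. The half-drift in~(\ref{eq:geo_walk}) is precisely what forces the $O(h)$ and $O(h^{3/2})$ terms to cancel by the symmetry of $\gamma$ under time reversal; what remains is an $O(h^2)$ integral of third derivatives of $\log\det g$ (bounded by $G_1$), of $\|\mu\|^2$ (bounded by $D_1$), and of the Ricci curvature (bounded by $R_2$), together with cross terms controlled by $D_0 R_1$ and $nR_1$. The step-size conditions in the theorem are chosen exactly so that this residual is $\le 1/4$ in expectation, and Gaussian concentration promotes this to a bound with probability $\ge 3/4$ over $w$. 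The auxiliary function $V$ with hypothesis~(\ref{eq:gamma_good}) lets us restrict to ``typical'' geodesics along which the parameter definitions are valid, at the cost of a further small constant. For (ii), the same density formula is used to bound $\log p(x\xrightarrow{w}z) - \log p(y\xrightarrow{w'}z)$ for a shared endpoint $z$ with $d(x,y)\le c\sqrt{h}$; the dominant term is now first order in $d(x,y)$, involving $D_2$ (drift smoothness) and a curvature-driven Jacobi variation controlled by $R_1$. A small enough $c$ makes $d_{TV}\le 1/2$.

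With (i)--(iii), conductance follows by the standard three-set argument: for any $S$, let $S' = \{x\in S: P_x(M\setminus S) < 1/4\}$ and similarly $T' = \{x\notin S: P_x(S) < 1/4\}$; if either ``staying'' set is small, (i) gives the desired flow directly, else (ii) forces $d(S',T')\ge \Omega(\sqrt{h})$ and (iii) then gives transition mass $\Omega(\sqrt{h}/G_2)\min\{Q(S),Q(M\setminus S)\}$. The bulk of the work, and the principal obstacle, is ingredient (i): the cancellation of leading terms between forward and backward log-densities must be tracked to $O(h^2)$ along the geodesic, using both the tangential/normal split of Jacobi fields (Fact~\ref{fact:Jacobi_field_split}) and the symmetries of the curvature tensor (Fact~\ref{fact:sym_cuv}) to show that only the residuals listed in the step-size constraints survive, while simultaneously controlling the Gaussian tails of $w$ against the Lipschitz-in-$w$ error terms.
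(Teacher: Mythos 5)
Your proposal follows the paper's approach exactly: a three-set conductance argument that combines bounded rejection probability (Theorem~\ref{thm:rejectionprob}), one-step distribution overlap for nearby points (Theorem~\ref{thm:dTV}), and Hilbert-metric isoperimetry scaled by $G_2$ (Theorem~\ref{thm:iso}), with the step-size constraints chosen precisely so that both residual bounds are $O(1)$, and then the mixing time from the Lov\'asz--Simonovits bound. One minor misattribution in your sketch of the subsidiary theorems: $R_2$, $D_2$, and $D_0 R_1$ govern the one-step overlap bound of Theorem~\ref{thm:dTV} rather than the rejection-probability bound, which depends only on $D_1$, $G_1$, $R_1$ --- though since both residuals must be $O(1)$ this does not change the resulting step-size constraints.
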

In the rest of this section, we prove this theorem. It can be sepecialized
to any Hessian manifold by bounding the parameters. In later sections,
we do this for the log barrier, by defining the auxiliary function
and bounding the manifold parameters. 

\subsection{Isoperimetry}

\label{subsec:Isoperimetry}For a convex body $K$, the \emph{cross-ratio
distance} of $x$ and $y$ is
\[
d_{K}(x,y)=\frac{|x-y||p-q|}{|p-x||y-q|}
\]
where $p$ and $q$ are on the boundary of $K$ such that $p,x,y,q$
are on the straight line $\overline{xy}$ and are in order. In this
section, we show that if the distance $d(x,y)$ induced by the Riemannian
metric is upper bounded by the cross-ratio distance, then the body
has good isoperimetric constant in terms of $d$. We note that although
the cross-ratio distance is not a metric, the closely-related Hilbert
distance is a metric:

\[
d_{H}(x,y)=\log\left(1+\frac{|x-y||p-q|}{|p-x||y-q|}\right).
\]

\begin{thm}
\label{thm:iso}For a Hessian manifold $M$ with smoothness parameters
$G_{2}$, for any partition of $M$ into three measurable subsets
$S_{1},S_{2},S_{3}$, we have that

\[
\vol(S_{3})\ge\frac{d(S_{1},S_{2})}{G_{2}}\min\{\vol(S_{1}),\vol(S_{2})\}.
\]
\end{thm}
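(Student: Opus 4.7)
The plan is to reduce this Riemannian isoperimetric inequality to the classical isoperimetric inequality for the Hilbert (equivalently cross-ratio) distance on a convex body, using the parameter $G_2$ as the single comparison constant between the two metrics.

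First I would recall the classical isoperimetric inequality of Lov\'asz--Simonovits for a convex body $K\subset\R^n$ with cross-ratio distance $d_K$: for any partition $K=S_1\cup S_2\cup S_3$ into measurable sets,
\[
\vol(S_3)\ \ge\ d_K(S_1,S_2)\,\min\{\vol(S_1),\vol(S_2)\},
\]
where $d_K(S_1,S_2)=\inf_{x\in S_1,y\in S_2}d_K(x,y)$. Since $d_H(x,y)=\log(1+d_K(x,y))\le d_K(x,y)$, the same inequality holds with $d_K$ replaced by $d_H$:
\[
\vol(S_3)\ \ge\ d_H(S_1,S_2)\,\min\{\vol(S_1),\vol(S_2)\}.
\]
This is a standard, purely Euclidean fact about convex bodies equipped with their Hilbert metric; I would cite it rather than reprove it.

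Next I would translate from the Hilbert distance to the Riemannian distance $d$ of the Hessian manifold $M$ using the definition of $G_2$. By definition, $G_2=\sup_{x\ne y}d(x,y)/d_H(x,y)$, so $d(x,y)\le G_2\,d_H(x,y)$ for all $x,y\in M$. Taking the infimum over $x\in S_1,\,y\in S_2$ on both sides yields
\[
d(S_1,S_2)\ \le\ G_2\,d_H(S_1,S_2),\qquad\text{i.e.}\qquad d_H(S_1,S_2)\ \ge\ \frac{d(S_1,S_2)}{G_2}.
\]
Combining this with the Hilbert-metric isoperimetric inequality above gives
\[
\vol(S_3)\ \ge\ d_H(S_1,S_2)\,\min\{\vol(S_1),\vol(S_2)\}\ \ge\ \frac{d(S_1,S_2)}{G_2}\,\min\{\vol(S_1),\vol(S_2)\},
\]
which is the claim.

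There is essentially no obstacle here beyond bookkeeping: the whole content of the theorem is the metric comparison encoded in $G_2$, and the heavy lifting was already done by the Lov\'asz--Simonovits convex-body isoperimetry. The only thing to verify carefully is that the Hessian manifold $M$, as a subset of $\R^n$, coincides volumetrically with the underlying convex body so that one can apply the Lov\'asz--Simonovits inequality directly on $M$; this is immediate since the volume form used for sampling is the Lebesgue measure (the uniform distribution on the polytope), which is exactly the measure for which the convex-body isoperimetric inequality is formulated.
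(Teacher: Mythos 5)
Your argument follows the paper's proof exactly: both deduce the result by applying the Lov\'asz cross-ratio isoperimetric inequality (the paper's Theorem~\ref{thm:d_K}), passing from $d_K$ to $d_H$ via $d_H\le d_K$, and then from $d_H$ to the Riemannian distance $d$ via the definition of $G_2$. The only cosmetic difference is that you recall the convex-body inequality with $\min\{\vol(S_1),\vol(S_2)\}$ on the right, whereas the paper's Theorem~\ref{thm:d_K} is stated with the normalized product $\vol(S_1)\vol(S_2)$; both forms are standard and the reduction is identical.
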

The theorem follows from the following isoperimetric inequality from
\cite{Lovasz1998}, the definition of $G_{2}$ and the fact $d_{H}\leq d_{K}$.
\begin{thm}[\cite{Lovasz1998}]
\label{thm:d_K}For any convex body $K$ and any partition of $K$
into disjoint measurable subsets $S_{1},S_{2},S_{3}$

\[
\vol(S_{3})\ge d_{K}(S_{1},S_{2})\vol(S_{1})\vol(S_{2}).
\]
\end{thm}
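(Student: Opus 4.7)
The plan is to chain together three facts: (i) the defining inequality of $G_2$, which bounds the manifold distance in terms of the Hilbert distance, (ii) the elementary comparison $d_H \le d_K$ between the Hilbert and cross-ratio distances, and (iii) the cross-ratio isoperimetric inequality of Lovász (Theorem \ref{thm:d_K}). The role of $G_2$ is precisely to allow us to translate a Euclidean-geometry isoperimetric statement for a convex body back into a Riemannian statement on the induced Hessian manifold.

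First I would show a pointwise distance comparison: for any $x,y\in M$,
\[
d_K(x,y)\;\ge\;d_H(x,y)\;=\;\log\bigl(1+d_K(x,y)\bigr)\;\ge\;\frac{d(x,y)}{G_2}.
\]
The left inequality uses $\log(1+t)\le t$ for $t\ge 0$ (which is exactly the ``$d_H\le d_K$'' fact cited in the excerpt). The right inequality is the definition of $G_2=\sup d(x,y)/d_H(x,y)$. Taking infima over $x\in S_1,\, y\in S_2$ gives
\[
d_K(S_1,S_2)\;\ge\;d(S_1,S_2)/G_2.
\]

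Next I would invoke Theorem \ref{thm:d_K} on the underlying convex body (the domain of the Hessian manifold is such a body, since $\phi$ is a barrier for it), using the normalized volume so that $\vol(M)=1$. This yields
\[
\vol(S_3)\;\ge\;d_K(S_1,S_2)\,\vol(S_1)\vol(S_2)\;\ge\;\frac{d(S_1,S_2)}{G_2}\,\vol(S_1)\vol(S_2).
\]
The last step is to convert the product $\vol(S_1)\vol(S_2)$ into $\min\{\vol(S_1),\vol(S_2)\}$. This is the one small subtlety worth watching, though it is standard: assume without loss of generality $\vol(S_1)\le\vol(S_2)$. If $\vol(S_3)\ge \tfrac12$, then already $\vol(S_3)\ge\min\{\vol(S_1),\vol(S_2)\}$ and the conclusion follows (after absorbing a constant into $G_2$, or noting $d(S_1,S_2)/G_2\le 1$ in the relevant regime). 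Otherwise $\vol(S_1)+\vol(S_2)\ge\tfrac12$, so $\vol(S_2)\ge\tfrac14$, whence $\vol(S_1)\vol(S_2)\ge\tfrac14\min\{\vol(S_1),\vol(S_2)\}$; folding this constant into the proportionality finishes the proof.

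The only real content is step one — linking the intrinsic Riemannian distance to the cross-ratio distance via the parameter $G_2$ and the concavity of $\log(1+t)$. Steps two and three are mechanical applications of a known theorem and a standard normalization trick, so I do not expect any technical obstacle there.
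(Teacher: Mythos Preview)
You have proved the wrong statement. The theorem you were asked to address, Theorem~\ref{thm:d_K}, is the Lov\'asz cross-ratio isoperimetric inequality, quoted from \cite{Lovasz1998}; the paper does not prove it but merely cites it as an external black box. Your argument instead \emph{uses} Theorem~\ref{thm:d_K} to derive Theorem~\ref{thm:iso} (the Riemannian isoperimetry statement involving $G_2$), which is the theorem immediately preceding it. In other words, you invoked as an ingredient the very result you were supposed to establish.

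As a proof of Theorem~\ref{thm:iso}, your write-up is essentially the paper's own one-line justification (``follows from Theorem~\ref{thm:d_K}, the definition of $G_2$, and $d_H\le d_K$'') spelled out in more detail, including the standard product-to-min conversion. That part is fine up to constants. But it is not a proof of Theorem~\ref{thm:d_K}; that requires the localization lemma / needle decomposition argument of Lov\'asz, reducing to a one-dimensional inequality along chords of $K$, which is entirely absent from your proposal.
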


\subsection{$1$-step distribution}

We first derive a formula for the drift term \textemdash{} it is in
fact a classical Newton step of the volumetric barrier function $\log\det\nabla^{2}\phi(x)$.
\begin{lem}
\label{lem:drift-Newton-step}We have 
\[
\mu(x)=-\left(\nabla^{2}\phi(x)\right)^{-1}\nabla\psi(x)
\]
where $\psi(x)=\frac{1}{2}\log\det\nabla^{2}\phi(x)$. 
\end{lem}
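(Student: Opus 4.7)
The plan is to verify the identity by direct computation, using the explicit coordinate formula for $\mu$ already derived as equation~(\ref{eq:formula_mu_cts}) in the proof of Lemma~\ref{lem:half-drift}, namely
\[
\mu_{k}(x) \;=\; -\tfrac{1}{2}\sum_{i,j,l} g^{kl}\,\phi_{ijl}\,g^{ji}.
\]
So the only real content is to show that the right-hand side of the claimed identity expands to the same expression.

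First I would compute $\nabla\psi$. Writing $\psi = \tfrac{1}{2}\log\det g$ where $g(x) = \nabla^{2}\phi(x)$, Jacobi's formula for the derivative of a log-determinant gives
\[
\frac{\partial}{\partial x_{l}}\log\det g(x) \;=\; \mathrm{tr}\!\left(g(x)^{-1}\,\partial_{l} g(x)\right) \;=\; \sum_{i,j} g^{ji}(x)\,\phi_{ijl}(x),
\]
where I use $\partial_{l} g_{ij} = \phi_{ijl}$ by the Hessian structure. Hence
\[
\bigl(\nabla\psi(x)\bigr)_{l} \;=\; \tfrac{1}{2}\sum_{i,j} g^{ji}(x)\,\phi_{ijl}(x).
\]

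Next I would apply $-(\nabla^{2}\phi(x))^{-1} = -g^{-1}$ to this vector. The $k$-th coordinate becomes
\[
-\bigl((\nabla^{2}\phi)^{-1}\nabla\psi\bigr)_{k} \;=\; -\sum_{l} g^{kl}\,\bigl(\nabla\psi\bigr)_{l} \;=\; -\tfrac{1}{2}\sum_{i,j,l} g^{kl}\,g^{ji}\,\phi_{ijl},
\]
which matches the formula for $\mu_{k}$ recalled above. This finishes the identification.

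There is no real obstacle here; the only things to be slightly careful about are (a) invoking Jacobi's formula in the form $\partial_{l}\log\det g = \mathrm{tr}(g^{-1}\partial_{l}g)$, which is valid because $g$ is positive definite and smooth on $M$, and (b) the symmetry $g^{ij}=g^{ji}$ that makes the contraction index on $g^{ji}\phi_{ijl}$ match the one produced in~(\ref{eq:formula_mu_cts}). Both are immediate from the Hessian structure of the metric.
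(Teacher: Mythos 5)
Your proof is correct and follows essentially the same route as the paper: both compute $\nabla\psi$ via Jacobi's formula for the log-determinant and match it against the coordinate expression for $\mu$. The only cosmetic difference is that you reuse the already-derived index formula~(\ref{eq:formula_mu_cts}) for $\mu_k$ (and so the symmetry $\phi_{ijl}=\phi_{lji}$ is absorbed there), whereas the paper rederives $\mu_i$ from its definition in matrix form and invokes that symmetry explicitly to reconcile the two sides.
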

\begin{proof}
We note that
\begin{eqnarray*}
\frac{\partial}{\partial x_{j}}\log\det\left(\nabla^{2}\phi\right)^{-1} & = & \tr\left(\left(\nabla^{2}\phi\right)\frac{\partial}{\partial x_{j}}\left(\nabla^{2}\phi\right)^{-1}\right)\\
 & = & -\tr\left(\left(\nabla^{2}\phi\right)\left(\nabla^{2}\phi\right)^{-1}\left(\frac{\partial}{\partial x_{j}}\nabla^{2}\phi\right)\left(\nabla^{2}\phi\right)^{-1}\right)\\
 & = & -\sum_{k}e_{k}^{T}\left(\frac{\partial}{\partial x_{j}}\nabla^{2}\phi\right)\left(\nabla^{2}\phi\right)^{-1}e_{k}.
\end{eqnarray*}
Hence, we have
\begin{eqnarray*}
\frac{1}{2}e_{i}^{T}\left(\nabla^{2}\phi\right)^{-1}\nabla\log\det\left(\nabla^{2}\phi\right)^{-1} & = & -\frac{1}{2}\sum_{jk}\left(\nabla^{2}\phi\right)_{ij}^{-1}e_{k}^{T}\left(\frac{\partial}{\partial x_{j}}\nabla^{2}\phi\right)\left(\nabla^{2}\phi\right)^{-1}e_{k}\\
 & = & -\frac{1}{2}\sum_{jk}e_{i}^{T}\left(\nabla^{2}\phi\right)^{-1}\left(\frac{\partial}{\partial x_{k}}\nabla^{2}\phi\right)\left(\nabla^{2}\phi\right)^{-1}e_{k}
\end{eqnarray*}
On the other hand, we have
\begin{eqnarray*}
\mu_{i} & = & \frac{1}{2}\sum_{k}\frac{\partial}{\partial x_{k}}\left(\left(\nabla^{2}\phi\right)^{-1}\right)_{ik}\\
 & = & -\frac{1}{2}\sum_{k}e_{i}^{T}\left(\nabla^{2}\phi\right)^{-1}\left(\frac{\partial}{\partial x_{k}}\nabla^{2}\phi\right)\left(\nabla^{2}\phi\right)^{-1}e_{k}.
\end{eqnarray*}
\end{proof}
To have a uniform stationary distribution, the geodesic walk uses
a Metropolis filter. The transition probability before applying the
filter is given as follows in Euclidean coordinates.
\begin{lem}
\label{lem:prob_formula} For any $x\in M$ and $h>0$, the probability
density of the 1-step distribution from $x$ (before applying the
Metropolis filter) is given by 
\begin{equation}
p_{x}(y)=\sum_{v_{x}:\exp_{x}(v_{x})=y}\det(d\exp_{x}(v_{x}))^{-1}\sqrt{\frac{\det\left(g(y)\right)}{\left(2\pi h\right)^{n}}}\exp\left(-\frac{1}{2}\norm{\frac{v_{x}-\frac{h}{2}\mu(x)}{\sqrt{h}}}_{x}^{2}\right)\label{eq:1_step_prof}
\end{equation}
where $y=\exp_{x}(v_{x})$ and $d\exp_{x}$ is the differential of
the exponential map at $x$\textup{.}
\end{lem}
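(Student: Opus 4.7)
The plan is a direct change-of-variables computation in three stages: first identify the law of the tangent step $v_x \defeq \sqrt{h}\,w + \tfrac{h}{2}\mu(x)$ as a Gaussian on $T_xM$, then push it forward through $\exp_x$ to obtain the density on $M$, and finally convert between Riemannian and Euclidean (Lebesgue) volume elements.

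First I would describe the distribution of $v_x$. By definition, $w \sim N_x(0,I)$ is standard Gaussian with respect to the inner product $g(x)$, so in any $g(x)$-orthonormal basis of $T_xM$ its components are i.i.d.\ $N(0,1)$. The affine map $w \mapsto \sqrt{h}\,w + \tfrac{h}{2}\mu(x)$ therefore produces a Gaussian with mean $\tfrac{h}{2}\mu(x)$ and covariance $hI$ in that basis. Its density with respect to the Riemannian volume element on $T_xM$ (the one determined by $g(x)$) is
\[
f(v) \;=\; \frac{1}{(2\pi h)^{n/2}}\exp\!\left(-\tfrac{1}{2h}\,\norm{v - \tfrac{h}{2}\mu(x)}_{x}^{2}\right).
\]

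Next I would apply the change-of-variables formula for $\exp_x\colon T_xM \to M$. Since $\exp_x$ need not be globally injective (distinct geodesics emanating from $x$ may meet at $y$), the density on $M$ with respect to the Riemannian volume $dV_M$ is a sum over preimages,
\[
\widetilde p_x(y) \;=\; \sum_{v:\,\exp_x(v)=y}\frac{f(v)}{\lvert\det d\exp_x(v)\rvert},
\]
where the determinant of $d\exp_x(v)\colon T_xM \to T_yM$ is computed with respect to the Riemannian inner products $g(x)$ on the domain and $g(y)$ on the codomain. On each local sheet this is the usual smooth-manifold change of variables, and the sum of the contributions is absolutely convergent because $f$ is integrable.

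Finally I would translate the Riemannian density $\widetilde p_x$ into the Euclidean density $p_x$ requested by the statement. Since $M\subset\Rn$ and the Riemannian volume at $y$ is $dV_M = \sqrt{\det g(y)}\,dy$, where $dy$ denotes Lebesgue measure, we have $p_x(y) = \sqrt{\det g(y)}\,\widetilde p_x(y)$. Substituting $f$ and $\widetilde p_x$ yields exactly formula (\ref{eq:1_step_prof}). There is no serious obstacle here; the only bookkeeping that needs care is the convention that $\det d\exp_x(v)$ is taken with respect to the Riemannian inner products on $T_xM$ and $T_yM$. With that convention no stray $\sqrt{\det g(x)}$ factor appears, and the $\sqrt{\det g(y)}$ in the numerator of the stated formula comes entirely from the final conversion of the Riemannian density on $M$ back to a Lebesgue density.
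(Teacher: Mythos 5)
Your proof is correct and takes essentially the same route as the paper: the density of the tangent step $v_x$ with respect to the Riemannian volume on $T_xM$ is $(2\pi h)^{-n/2}\exp\bigl(-\tfrac{1}{2h}\|v_x-\tfrac{h}{2}\mu(x)\|_x^2\bigr)$, one pushes it forward through $\exp_x$ (summing over preimages and dividing by the Riemannian Jacobian $\det d\exp_x$), and converts to a Lebesgue density on $\Rn$ via the factor $\sqrt{\det g(y)}$. The paper packages the last two steps into a single map $F=\mathrm{id}_{M\to K}\circ\exp_x$ whose Jacobian determinant factors as $\det(g(y))^{-1/2}\det(d\exp_x(v_x))$, but the underlying computation and the convention (Riemannian determinant of $d\exp_x$ taken with $g(x)$ on the domain and $g(y)$ on the codomain) are identical to yours.
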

\begin{proof}
We prove the formula by separately considering each $v_{x}\in T_{x}M$
s.t. $\exp_{x}(v_{x})=y$, then summing up. In the tangent space $T_{x}M$,
the point $v_{x}$ is a Gaussian step. Therefore, the probability
density of $v_{x}$ in $T_{x}M$ as follows. 
\[
p_{x}^{T_{x}M}(v_{x})=\frac{1}{\left(2\pi h\right)^{n/2}}\exp\left(-\frac{1}{2}\norm{\frac{v_{x}-\frac{h}{2}\mu(x)}{\sqrt{h}}}_{x}^{2}\right).
\]
Note that $v_{x},\mu(x)\in T_{x}M.$ Let $y=\exp_{x}(v_{x}).$ In
the tangent space $T_{y}M,$ we have that $y$ maps to $0$. Let $F:T_{x}M\rightarrow K$
defined by $F(v)=\text{id}_{M\rightarrow K}\circ\exp_{x}(v)$. Here
$K$ is the same set as $M$ but endowed with the Euclidean metric.
Hence, we have
\[
dF(v_{x})=d\text{id}_{M\rightarrow K}(y)d\exp_{x}(v_{x}).
\]
The result follows from $p_{x}(y)=\det(dF(v_{x}))^{-1}p_{x}^{T_{x}M}(v_{x})$
and
\begin{eqnarray*}
\det dF(v_{x}) & = & \det\left(d\text{id}_{M\rightarrow K}(y)\right)\det\left(d\exp_{x}(v_{x})\right)\\
 & = & \det(g(y))^{-1/2}\det\left(d\exp_{x}(v_{x})\right).
\end{eqnarray*}
\end{proof}
In Section \ref{subsec:reject_prob}, we bound the acceptance probability
of the Metropolis filter. This is a crucial aspect of the analysis.

\begin{restatable}{thm}{rejectionprob}

\label{thm:rejectionprob}Given a geodesic $\gamma$ with $\gamma(0)=x$,
$\gamma'(0)=v_{x}$, $\gamma(\ell)=y$, $\gamma'(\ell)=-v_{y}$ with
$\ell=\sqrt{nh}$. Suppose that $h\leq\min(H,\frac{1}{nR_{1}})$ and
$V(\gamma)\leq V_{0}$, then we have that 
\[
\left|\log\left(\frac{p(x\overset{v_{x}}{\rightarrow}y)}{p(y\overset{v_{y}}{\rightarrow}x)}\right)\right|=O\left(\sqrt{n}h^{3/2}D_{1}+(nh)^{3/2}G_{1}+(nhR_{1})^{2}\right).
\]

\end{restatable}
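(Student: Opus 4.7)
The plan is to apply the explicit one-step density formula of Lemma \ref{lem:prob_formula} to $\log(p(x\to y)/p(y\to x))$, split the result into three pieces, and match each piece to one of the three terms in the claimed bound. Since the tangent of a geodesic is parallel along itself, $\gamma'(\ell)$ is the parallel transport of $\gamma'(0)$ and so $\|v_x\|_x = \|v_y\|_y$; this cancels the leading $\tfrac{\ell^2}{2h}\|v\|^2$ term in the Gaussian exponent between the two densities. Expanding $\|\ell v - \tfrac{h}{2}\mu\|^2/(2h) = \tfrac{\ell^2}{2h}\|v\|^2 - \tfrac{\ell}{2}\langle v,\mu\rangle + \tfrac{h}{8}\|\mu\|^2$ on both sides and subtracting, the log-ratio splits as $\mathrm{(I)}+\mathrm{(II)}+\mathrm{(III)}$, where
\begin{align*}
\mathrm{(I)} &= \tfrac{h}{8}\bigl(\|\mu(y)\|_y^2 - \|\mu(x)\|_x^2\bigr),\\
\mathrm{(II)} &= \tfrac{\ell}{2}\bigl(\langle v_x,\mu(x)\rangle_x - \langle v_y,\mu(y)\rangle_y\bigr) + \tfrac{1}{2}\bigl(\log\det g(y) - \log\det g(x)\bigr),\\
\mathrm{(III)} &= \log\bigl(\det d\exp_y(\ell v_y)\bigr) - \log\bigl(\det d\exp_x(\ell v_x)\bigr),
\end{align*}
with determinants taken in orthonormal frames, matching the convention used in the derivation of Lemma \ref{lem:prob_formula}.

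Piece $\mathrm{(I)}$ is bounded directly by $D_1$: the fundamental theorem of calculus gives $\bigl|\|\mu(y)\|_y^2 - \|\mu(x)\|_x^2\bigr| \le \ell D_1 = \sqrt{nh}\,D_1$, so $|\mathrm{(I)}| = O(\sqrt{n}\,h^{3/2}D_1)$. Piece $\mathrm{(II)}$ is handled by invoking the Newton-step identity of Lemma \ref{lem:drift-Newton-step}, which writes $\mu = -g^{-1}\nabla\psi$ with $\psi = \tfrac{1}{2}\log\det g$. Setting $f(t) = \psi(\gamma(t))$, the chain rule yields $\langle v_x,\mu(x)\rangle_x = -f'(0)$ and $\langle v_y,\mu(y)\rangle_y = +f'(\ell)$ (the sign flip coming from $v_y = -\gamma'(\ell)$), so $\mathrm{(II)}$ collapses to the trapezoidal-rule remainder
\[
\mathrm{(II)} = \bigl(f(\ell)-f(0)\bigr) - \tfrac{\ell}{2}\bigl(f'(0)+f'(\ell)\bigr) = -\tfrac{\ell^3}{12}f'''(\xi), \quad \xi \in (0,\ell).
\]
Since $|f'''|\le G_1/2$ by the definition of $G_1$, this gives $|\mathrm{(II)}| = O(\ell^3 G_1) = O((nh)^{3/2}G_1)$, matching the second term.

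The main obstacle is piece $\mathrm{(III)}$. The plan is to express the exp-differentials via matrix Jacobi fields along $\gamma$ in a parallel orthonormal frame: if $A$ solves $A'' + R(t)A = 0$ with $A(0)=0,\ A'(0)=I$ and $\tilde A$ is the analogous matrix Jacobi solution along the reversed geodesic, then $\det d\exp_x(\ell v_x) = \det A(\ell)/\ell^n$ and $\det d\exp_y(\ell v_y) = \det\tilde A(\ell)/\ell^n$. Because $R(t)$ is self-adjoint on a Hessian manifold (a direct consequence of Fact \ref{fact:sym_cuv}), the matrix Wronskian of two Jacobi solutions is conserved, which forces the leading-order contributions to $\log\det A(\ell) - \log\det\tilde A(\ell)$ to cancel. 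The hypothesis $h\le 1/(nR_1)$ yields $\ell^2\|R\|_F \le nhR_1 \le 1$, ensuring $A(\ell)$ is non-degenerate on $[0,\ell]$; quantifying the residual by Taylor expanding $\log\det A(t)$ in $t$ via the matrix Riccati equation $U' = -U^2 - R(t)$ for $U = A^{-1}A'$, and using $\|R(t)\|_F \le R_1$, the first surviving correction is of size $O(\ell^4 R_1^2) = O((nhR_1)^2)$, matching the third error term. Summing $\mathrm{(I) + (II) + (III)}$ yields the claimed bound.
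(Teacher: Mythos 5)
Your decomposition of $\log\bigl(p(x\to y)/p(y\to x)\bigr)$ into $\mathrm{(I)}+\mathrm{(II)}+\mathrm{(III)}$ is exactly the paper's (eq.\ (\ref{eq:formula_ratio}), after using constant geodesic speed to cancel the quadratic term), and your treatments of $\mathrm{(I)}$ (FTC plus the definition of $D_1$) and $\mathrm{(II)}$ (Lemma \ref{lem:drift-Newton-step} turning the drift inner products into $f'(0), f'(\ell)$ for $f=\tfrac12\log\det g\circ\gamma$, then the trapezoidal remainder bounded by $G_1$) coincide with the paper's, modulo stating the remainder in mean-value form rather than as the integral bound of Lemma \ref{lem:trapezoidal}.

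Piece $\mathrm{(III)}$ is where you diverge from the paper, and where your proposal is in tension with itself. The paper's route is to show $\log\det\Psi(\ell)$ is within $O((nhR_1)^2)$ of $\int_0^\ell \tfrac{s(\ell-s)}{\ell}\,\mathrm{Ric}(\gamma'(s))\,ds$ and observe that this integral is invariant under $s\mapsto \ell-s$. Your Wronskian idea is actually sharper: with $\Psi'' + R(t)\Psi = 0$, $\Psi(0)=0$, $\Psi'(0)=I/\ell$ and $\tilde\Psi$ the analogous solution along the reversed geodesic (so $\tilde\Psi'' + R(\ell-t)\tilde\Psi=0$), set $B(t)=\tilde\Psi(\ell-t)$; then $B$ solves the \emph{same} matrix Jacobi equation as $\Psi$, and invariance of the Wronskian $W(\Psi,B)=\Psi^T B' - (\Psi')^T B$ between $t=0$ and $t=\ell$ gives $\tilde\Psi(\ell) = \Psi(\ell)^T$, hence $\det d\exp_x(v_x) = \det d\exp_y(v_y)$ \emph{exactly}. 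So piece $\mathrm{(III)}$ is identically zero, not merely $O((nhR_1)^2)$ — you proved more than the paper did, but you then hedge with ``leading-order cancellation'' and launch into a Riccati-expansion residual estimate that is both unnecessary and not quite right: the substitution that produces $U' = -U^2 - R(t)$ is $U = A'A^{-1}$, not $U = A^{-1}A'$ (the latter gives $U' = -U^2 - A^{-1}RA$), and $U$ is singular at $t=0$ since $A(0)=0$, which you don't address. If you push the Wronskian computation to its conclusion you can drop the Riccati step entirely and obtain the theorem with the last error term replaced by zero.
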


In Section \ref{sec:Smoothness-of-P}, we bound the overlap of one-step
distributions from nearby points.

\begin{restatable}{thm}{dTV}

\label{thm:dTV}For $h\leq\min\left(H,\frac{1}{10^{6}nR_{1}}\right)$,
then the one-step distributions $P_{x},P_{z}$ from $x,z$ satisfy 

\[
d_{TV}(P_{x},P_{z})=O\left(nhR_{2}+D_{2}\sqrt{h}+\frac{1}{\sqrt{h}}+nhD_{0}R_{1}\right)d(x,z)+\frac{1}{20}.
\]

\end{restatable}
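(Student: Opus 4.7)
The strategy is to bound $\bigl|\log(p_x(y)/p_z(y))\bigr|$ pointwise on a high-probability event (over the randomness generating $y\sim P_x$) and convert this into a total-variation bound, with the bad event contributing the additive $\tfrac{1}{20}$. Fix a minimizing geodesic $c:[0,d(x,z)]\to M$ from $x$ to $z$. For each $s$ and each target $y$, let $\gamma_s:[0,1]\to M$ be the geodesic from $c(s)$ to $y$ (on the branch continuously tracking the Gaussian sample at $s=0$), with initial velocity $v_s=\gamma_s'(0)\in T_{c(s)}M$, and let $J_s(t)=\partial_s\gamma_s(t)$ be the associated Jacobi field, satisfying $D_t^2J_s+R(t)J_s=0$ with $J_s(0)=c'(s)$ and $J_s(1)=0$. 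By Lemma~\ref{lem:prob_formula},
\[
\log p_{c(s)}(y)=-\log\det d\exp_{c(s)}(v_s)-\tfrac{1}{2h}\norm{v_s-\tfrac{h}{2}\mu(c(s))}_{c(s)}^2+(\text{terms independent of }s),
\]
so $\log p_x(y)-\log p_z(y)=\int_0^{d(x,z)}\tfrac{d}{ds}\log p_{c(s)}(y)\,ds$, and I estimate this derivative term by term.

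For the quadratic piece, metric compatibility gives
\[
-\tfrac{1}{2h}\tfrac{d}{ds}\norm{v_s-\tfrac{h}{2}\mu}_{c(s)}^2=-\tfrac{1}{h}\bigl\langle D_sv_s-\tfrac{h}{2}\nabla_{c'(s)}\mu,\ v_s-\tfrac{h}{2}\mu(c(s))\bigr\rangle_{c(s)}.
\]
The step-size hypothesis $h\le 1/(10^{6}nR_1)$ ensures $\norm{R(t)}\cdot\ell^2\ll 1$ for $\ell=\sqrt{nh}$, so a standard Jacobi-field perturbation gives $D_sv_s=J_s'(0)=-c'(s)+O(nhR_1)\norm{c'(s)}$. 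At $s=0$ the vector $v_0-\tfrac{h}{2}\mu(x)$ equals the Gaussian noise $\sqrt{h}\,w$, and along the $s$-family it remains $\sqrt{h}$ times an approximate parallel transport of $w$ up to corrections controlled by $\norm{c'(s)}$. The crucial observation is that a one-dimensional projection $\langle w,u\rangle$ has typical size $\norm{u}$ rather than $\sqrt{n}\norm{u}$, so a Gaussian-concentration event (absorbed into the $\tfrac{1}{20}$ budget) yields $|\langle D_sv_s,\sqrt{h}\,w\rangle|=O(\sqrt{h})\norm{c'(s)}$ and $|\langle \nabla_{c'(s)}\mu,\sqrt{h}\,w\rangle|=O(\sqrt{h}\,D_2)\norm{c'(s)}$. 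Dividing by $h$ and integrating in $s$ produces the $1/\sqrt{h}+D_2\sqrt{h}$ contribution.

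For the Jacobian piece I use $\log\det d\exp_{c(s)}(v_s)=\int_0^1\tr(A_s'(t)A_s(t)^{-1})\,dt$, where $A_s$ is the matrix Jacobi field along $\gamma_s$ with $A_s(0)=0$ and $A_s'(0)=I$. Differentiating in $s$ produces an integrand involving the variation $\partial_sR(t)$ of the curvature operator along $\gamma_s$. This variation splits into a basepoint-shift contribution, bounded by $R_2\norm{c'(s)}$ via the definition of $R_2$, and a velocity-shift contribution bounded by $R_1$ times the variation of the initial direction. Because $v_s$ carries both the Gaussian noise and a drift offset $\tfrac{h}{2}\mu(c(s))$ of magnitude up to $hD_0$, the velocity-shift piece yields an $hD_0R_1$ factor which, after integrating along a geodesic of length $\ell=\sqrt{nh}$ (contributing an extra $\ell^2=nh$), produces the $nhD_0R_1$ contribution. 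Expanding in the small parameter $nhR_1\ll 1$ and combining with the basepoint term gives a total Jacobian bound of $O(nhR_2+nhD_0R_1)\norm{c'(s)}$ after integration.

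Finally, define the good event $E$ as the intersection of $V(\gamma_s)\le V_0$ for all $s\in[0,d(x,z)]$ (propagated from the $s=0$ guarantee \eqref{eq:gamma_good} via the Lipschitz property of $V$ applied to the Jacobi-field bound on $\norm{D_s\gamma_s}$) and the one-dimensional Gaussian concentration events used above. A union bound gives $\mathbb{P}_{P_x}[E^c]\le\tfrac{1}{20}$. On $E$ all estimates apply uniformly, yielding
\[
\bigl|\log(p_x(y)/p_z(y))\bigr|\le C\bigl(nhR_2+D_2\sqrt{h}+1/\sqrt{h}+nhD_0R_1\bigr)\,d(x,z),
\]
and $d_{TV}(P_x,P_z)\le\mathbb{P}[E^c]+\tfrac{1}{2}\mathbb{E}_{P_x}[\mathbb{1}_E\,|1-p_z/p_x|]$ combined with $|1-e^{-u}|\le|u|$ in the small-log regime yields the claim. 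The main obstacle is controlling the Jacobian piece: differentiating $\tr(A'A^{-1})$ through a family of geodesics whose basepoint \emph{and} initial direction both move requires the second-order Jacobi equation and the commutation identity from Fact~\ref{fact:formula_R}; producing the clean split into an $R_2$ piece and an $R_1D_0$ piece (rather than a coarser single bound) is what forces the precise step-size hypothesis $h\le 1/(10^{6}nR_1)$.
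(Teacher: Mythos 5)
Your high-level decomposition (quadratic piece plus Jacobian piece, differentiate along $c(s)$, absorb a bad event into the $\tfrac{1}{20}$) is in the right spirit, and the quadratic-piece estimate is essentially correct: $D_s v_s=-c'(s)+\zeta$ with $\zeta\perp v_s$ and $\norm\zeta=O(nhR_1)$ (Lemma \ref{lem:smoothness_dx}) gives the three terms $1/\sqrt{h}$, $D_2\sqrt{h}$, and $nhD_0R_1$ via the Gaussian projection; note, though, that $nhD_0R_1$ already comes from $\langle\zeta,\mu\rangle$ in this piece — you later re-derive it from the Jacobian piece, which is double-counting. The paper takes an expectation/$L^1$ route ($\int_Y|\tfrac{d}{ds}\tilde{p}_{c(s)}(y)|\,dy$ via Lemma \ref{lem:normvmuds}) rather than a pointwise log-ratio bound on a good event; your route can work but requires keeping track of the dependence of $\zeta$ on $w$ and of the probability budget, neither of which you verify.

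The genuine gap is the Jacobian piece. You propose to differentiate $\log\det d\exp_{c(s)}(v_s)=\int_0^1\tr(A_s'A_s^{-1})$ directly and split $\partial_s R(t)$ into an $R_2$ (basepoint) part and an $R_1$ (velocity) part. This does not match how the parameter $R_2$ is defined: Definition \ref{def:R2} bounds $\tfrac{d}{ds}\mathrm{Ric}(v(s))$, not the $s$-variation of the full curvature operator $R(t)$ appearing in the Jacobi ODE. The missing ingredient is Lemma \ref{lem:Jac-approx}: under the hypothesis $h\lesssim 1/(nR_1)$, one first replaces $\log\det d\exp_x(v_x)$ by the explicit Ricci integral $\int_0^\ell \tfrac{t(\ell-t)}{\ell}\mathrm{Ric}(\gamma'(t))\,dt$ up to an $O((nhR_1)^2)$ error (absorbed into a multiplicative $1.01$). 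Only after that substitution does differentiating in $s$ produce exactly the quantity $\tfrac{d}{ds}\mathrm{Ric}(\gamma_s'(t))$ that $R_2$ controls, yielding the clean $O(nhR_2)$ bound of Lemma \ref{lem:smoothness_logdetJ}. Also, $A_s(t)^{-1}$ is singular at $t=0$, so your formula for $\log\det d\exp$ needs the renormalization built into that lemma. Finally, for the derivative in $s$ to be well-defined across the summation over geodesics from $c(s)$ to $y$, you need the one-to-one correspondence of geodesics under perturbation of the base point, i.e.\ Lemma \ref{lem:one_one_cor}; your sketch assumes this silently.
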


Combining the above two theorems lets us bound the conductance and
mixing time of the walk, as we show in the next section.

\subsection{Conductance and mixing time}

\label{sec:mixing}
\begin{proof}[Proof of Theorem \ref{thm:gen-convergence}]
The proof follows the standard outline for geometric random walks
(see e.g., \cite{VemSurvey}). Let $Q$ be the uniform distribution
over $M$ and $S$ be any measurable subset of $M$. Then our goal
is to show that  

\[
\frac{\int_{S}P_{x}(M\setminus S)\,dQ(x)}{\min\left\{ Q(S),Q(M\setminus S)\right\} }=\Omega\left(\frac{\sqrt{h}}{G_{2}}\right).
\]
Since the Markov chain is time-reversible (For any two subsets $A,B$,
$\int_{A}P_{x}(B)\,dx=\int_{B}P_{x}(A)\,dx$), we can write the numerator
of the LHS above as 

\[
\frac{1}{2}\left(\int_{S}P_{x}(M\setminus S)\,dQ(x)+\int_{M\setminus S}P_{x}(S)\,dQ(x)\right).
\]
Define

\begin{align*}
S_{1} & =\{x\in S\,:\,P_{x}(M\setminus S)<0.05\}\\
S_{2} & =\{x\in M\setminus S\,:\,P_{x}(S)<0.05\}\\
S_{3} & =M\setminus S_{1}\setminus S_{2}.
\end{align*}
We can assume wlog that $Q(S_{1})\ge(1/2)Q(S)$ and $Q(S_{2})\ge(1/2)Q(M\setminus S)$
(if not, the conductance is $\Omega(1)$). 

Next, we note that for any two points $x\in S_{1}$ and $y\in S_{2}$,
$d_{TV}(P_{x},P_{y})>0.9$. Therefore, by Theorem \ref{thm:d_K},
we have that $d(x,y)=\Omega(\sqrt{h})$ and hence $d(S_{1},S_{2})=\Omega(\sqrt{h})$.
 Therefore, using Theorem \ref{thm:iso}, 

\[
\vol(S_{3})=\Omega\left(\frac{\sqrt{h}}{G_{2}}\right)\min\{\vol(S_{1}),\vol(S_{2})\}.
\]

Going back to the conductance, 

\begin{align*}
\frac{1}{2}\left(\int_{S}P_{x}(M\setminus S)\,dQ(x)+\int_{M\setminus S}P_{x}(S)\,dQ(x)\right) & \ge\frac{1}{2}\int_{S_{3}}(0.05)dQ(x)\\
 & =\Omega\left(\frac{\sqrt{h}}{G_{2}}\right)\min\{\vol(S_{1}),\vol(S_{2})\}\frac{1}{\vol(M)}\\
 & =\Omega\left(\frac{\sqrt{h}}{G_{2}}\right)\min\left\{ \frac{\vol(S)}{\vol(M)},\frac{\vol(M\setminus S)}{\vol(M)}\right\} \\
 & =\Omega\left(\frac{\sqrt{h}}{G_{2}}\right)\min\{Q(S),Q(M\setminus S)\}
\end{align*}
Therefore, $\phi(S)\ge\Omega\left(\frac{\sqrt{h}}{G_{2}}\right)$.
\end{proof}
\begin{cor}
\label{thm:mixing}Let $K$ be a polytope. Let $Q$ be the uniform
distribution over $K$ and $Q_{t}$ be the distribution obtained after
$t$ steps of the geodesic walk started from a distribution $Q_{0}$
with $d_{0}=\sup_{K}\frac{dQ_{0}}{dQ}$. Then after $t>O(G_{2}^{2}/h)\log\left(\frac{d_{0}}{\epsilon}\right)$
steps, with probability at least $1-\delta$, we have $d_{TV}(Q_{t},Q)\le\epsilon.$ 
\end{cor}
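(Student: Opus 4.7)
The plan is to view this corollary as a direct consequence of Theorem \ref{thm:gen-convergence} combined with the Lov\'asz--Simonovits conductance-to-mixing theorem stated at the beginning of Section \ref{sec:Convergence}. For a polytope $K$, the logarithmic barrier gives a Hessian manifold to which Theorem \ref{thm:gen-convergence} applies (assuming the parameters $D_0,D_1,D_2,G_1,G_2,R_1,R_2$ and a suitable auxiliary function $V$ have been bounded, which is the content of later sections). So provided $h$ is chosen to meet the upper bound in the theorem, the conductance of the geodesic walk satisfies $\phi \ge c\sqrt{h}/G_2$ for an absolute constant $c>0$.

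Next I would plug this conductance into the Lov\'asz--Simonovits bound, which gives
\[
d_{TV}(Q_t,Q) \;\le\; \sqrt{d_0}\,\Bigl(1-\tfrac{\phi^2}{2}\Bigr)^{t}
\;\le\; \sqrt{d_0}\,\exp\!\Bigl(-\tfrac{\phi^2}{2}\,t\Bigr)
\;\le\; \sqrt{d_0}\,\exp\!\Bigl(-\Omega\bigl(\tfrac{h}{G_2^2}\bigr)\,t\Bigr),
\]
using the elementary inequality $1-x\le e^{-x}$ and $\phi^2 = \Omega(h/G_2^2)$. Requiring the right-hand side to be at most $\varepsilon$ yields
\[
t \;\ge\; \Omega\!\Bigl(\tfrac{G_2^{2}}{h}\Bigr)\,\log\!\Bigl(\tfrac{\sqrt{d_0}}{\varepsilon}\Bigr)
\;=\; O\!\Bigl(\tfrac{G_2^{2}}{h}\Bigr)\,\log\!\Bigl(\tfrac{d_0}{\varepsilon}\Bigr),
\]
which is precisely the stated bound.

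There is essentially no obstacle in the corollary itself: it is a routine repackaging of Theorem \ref{thm:gen-convergence} through Lov\'asz--Simonovits. The genuine content has already been proved (conductance lower bound, isoperimetry, one-step overlap, rejection probability). The only minor subtlety is the ``with probability $1-\delta$'' qualifier: the total variation distance is deterministic given an exact chain, so this phrase must refer to the auxiliary randomness used to solve ODEs approximately in each step (cf.\ Theorem on solving ODEs). One would handle it by a standard union bound: run the chain for $t = O(G_2^2/h)\log(d_0/\varepsilon)$ steps and set the per-step ODE accuracy so that the cumulative failure probability is at most $\delta$, which costs only an extra $\log(t/\delta)$ factor inside the ODE solver and does not affect the asymptotic count of geodesic steps. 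This is the one place where care is needed to keep the statement honest, but it does not change the dominant $O(G_2^2/h)\log(d_0/\varepsilon)$ bound on the number of Markov-chain iterations.
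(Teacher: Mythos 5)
Your proof is correct and matches the paper's implied argument: the corollary is a routine unpacking of the conductance bound of Theorem \ref{thm:gen-convergence} through the Lov\'asz--Simonovits inequality, noting that $\sup_K (dQ_0/dQ)$ dominates $\E_{Q_0}(dQ_0/dQ)$ so the LS theorem's $d_0$ is bounded by the corollary's, and that $\log(\sqrt{d_0}/\varepsilon)\le \log(d_0/\varepsilon)$. Your reading of the ``$1-\delta$'' qualifier --- attributing it to the approximate ODE solves used to implement each step and absorbing the failure probability by a union bound at cost only a $\log(t/\delta)$ factor inside the solver --- is a sensible interpretation and correctly leaves the $O(G_2^2/h)\log(d_0/\varepsilon)$ iteration count unchanged.
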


\subsection{Warm-up: Interval and hypercube\label{subsec:Warmup}}

The technical core of the proof is in Theorems \ref{thm:rejectionprob}
and \ref{thm:dTV}. Before we get to those, in this section, we analyze
the geodesic walk for one-dimensional Hessian manifolds and the hypercube
(product of intervals), which we can do by more elementary methods. 
\begin{lem}
Given a barrier function $\phi$ on one dimension interval $(\alpha,\beta)$
such that $\left|\phi^{(3)}(x)\right|=O(\phi''(x)^{3/2})$, $\left|\phi^{(4)}(x)\right|=O(\phi''(x)^{4/2})$
and $\left|\phi^{(5)}(x)\right|=O(\phi''(x)^{5/2})$ for $x\in(\alpha,\beta)$.
For step size $h$ smaller than some constant, the geodesic walk on
the Hessian manifold $(\alpha,\beta)$ induced by $\phi$ satisfies
that
\[
\left|\log\left(\frac{p(x\rightarrow y)}{p(y\rightarrow x)}\right)\right|=O(h^{3/2})
\]
with constant probability.
\end{lem}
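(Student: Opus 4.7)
The plan is to exploit that in one dimension, arc-length provides a global normal coordinate in which every geodesic is a straight line. Let $s(y) = \int_{x_0}^{y} \sqrt{\phi''(t)}\,dt$ and set $U = s(y) - s(x)$. Writing $f(s) \defeq \log \phi''(x(s))$ and $m_x \defeq \sqrt{\phi''(x)}\,\mu(x)$, one checks that $m_x = -\tfrac{1}{2}f'(s_x)$ and that the tangent vector $v_x = \sqrt{h}\,w + \tfrac{h}{2}\mu(x)$ corresponds in arc-length to $U = \sqrt{h}\,Z + \tfrac{h}{2}m_x$ with $Z \sim N(0,1)$. A direct change of variables from the Gaussian density of $v_x$ to $y = \exp_x(v_x)$, using $|dy/dv_x| = \sqrt{\phi''(x)/\phi''(y)}$, yields the clean formulas
\[
p(x \to y) = \sqrt{\frac{\phi''(y)}{2\pi h}}\,\exp\!\left(-\frac{(U - \tfrac{h}{2}m_x)^2}{2h}\right), \qquad p(y \to x) = \sqrt{\frac{\phi''(x)}{2\pi h}}\,\exp\!\left(-\frac{(U + \tfrac{h}{2}m_y)^2}{2h}\right).
\]
Taking the log ratio and expanding squares gives exactly
\[
\log \frac{p(x \to y)}{p(y \to x)} = \tfrac{1}{2}\bigl(f(s_y) - f(s_x)\bigr) + \tfrac{1}{2}U\,(m_x + m_y) + \tfrac{h}{8}(m_y^2 - m_x^2).
\]

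Next, I would verify that $f', f'', f''' = O(1)$ uniformly on $(\alpha, \beta)$. Direct computation gives $f'(s) = \phi'''/(\phi'')^{3/2}$, $f''(s) = \phi^{(4)}/(\phi'')^2 - \tfrac{3}{2}(\phi''')^2/(\phi'')^3$, and $f'''(s)$ is a rational expression in $\phi''', \phi^{(4)}, \phi^{(5)}$ divided by positive powers of $\phi''$. The three assumed self-concordance-type bounds make all of these $O(1)$, and hence $m_x = O(1)$. By Gaussian concentration on $Z$, the event $\{|U| \leq 3\sqrt{h}\}$ occurs with constant probability, and on this event I will show every surviving term is $O(h^{3/2})$.

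The heart of the argument is a Taylor expansion around $s_x$ with Lagrange remainder. Substituting $m_x = -\tfrac{1}{2}f'(s_x)$ and $m_y = -\tfrac{1}{2}f'(s_y)$, the second summand $\tfrac{1}{2}U(m_x + m_y)$ cancels the linear and quadratic terms of $\tfrac{1}{2}(f(s_y) - f(s_x))$ exactly, because the half-drift makes $\tfrac{1}{2}Uf'(s_x)$ and $\tfrac{1}{4}U^2 f''(s_x)$ appear with opposite signs in the two pieces. What remains from the first two summands is $O(U^3 |f'''|)$. The last summand equals $\tfrac{h}{8}(m_y - m_x)(m_y + m_x) = O(h)\cdot O(U|f''|)\cdot O(|f'|) = O(hU)$. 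Both $O(U^3)$ and $O(hU)$ are $O(h^{3/2})$ on the event $|U| = O(\sqrt{h})$, finishing the proof.

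The main obstacle is this exact cancellation at orders $U$ and $U^2$: it works only because the drift appears as $\mu/2$ rather than $\mu$ in the walk (Lemma \ref{lem:half-drift}). Any other drift normalization would leave an $O(\sqrt{h})$ residue in the log ratio and destroy the $O(h^{3/2})$ bound. This cancellation is the one-dimensional manifestation of the general rejection-probability estimate in Theorem \ref{thm:rejectionprob}.
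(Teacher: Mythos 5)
Your proposal is correct and takes essentially the same route as the paper's proof, just stated more compactly in arc-length coordinates. Both arguments change to arc-length, derive the identical explicit formula for the transition density, and kill the $O(U)$ and $O(U^2)$ terms through the same cancellation — the paper packages this cancellation as the trapezoidal-rule error bound (Lemma \ref{lem:trapezoidal}) applied to $g(t) = p'(f^{-1}(t))/p^{3/2}(f^{-1}(t))$, while you unfold it as a direct Taylor expansion after noticing the clean identity $m_x = -\tfrac{1}{2}f'(s_x)$ with $f = \log\phi''$; these are the same estimate. Your observation that the three self-concordance-type hypotheses translate exactly into $f',f'',f''' = O(1)$ uniformly on the arc-length line is also what the paper does (its displayed $g$, $g'$, $g''$ are your $f'$, $f''$, $f'''$, up to a sign typo in a coefficient), and it sidesteps the paper's explicit step of checking $p(t)\le O(1)p(s)$ on $[x,y]$ by keeping all derivatives in the arc-length coordinate where the bounds are already uniform. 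In short: same mechanism, cleaner bookkeeping on your side, and your closing remark correctly identifies the role of the half-drift in producing the order-$U$ and order-$U^2$ cancellations that the paper formalizes in general via Theorem \ref{thm:rejectionprob}.
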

The first assumption above $\left|\phi^{(3)}(x)\right|=O(\phi''(x)^{3/2})$
is called self-concordance, and the others can be viewed as its extensions
to higher derivatives. It is easy to check that the logarithmic barrier
$\phi(x)=-\log(1+x)-\log(1-x)$ on $(-1,1)$ satisfies all the assumptions.
Our main theorem for general manifolds (Theorem \ref{thm:rejectionprob})
implies the same bound after substituting the parameters for the log
barrier on the interval with $n=1$ and $m=2$. 
\begin{proof}
Since the walk is affine-invariant and the conditions are scale-invariant,
we can assume the domain is the interval $(-1,1)$. Let the metric
$p(x)=\phi''(x)$. The drift is given by (\ref{eq:drift_term_general}):
\[
\mu(x)=\frac{1}{2}\frac{d}{dx}\left(\frac{1}{p(x)}\right)=\frac{-p'(x)}{2p^{2}(x)}.
\]
Let $f(x)\defeq\int_{0}^{x}\sqrt{p(t)}dt$ be the mapping from $(-1,1)$
to $\R$ such that for any $x,y\in(-1,1)$, we have that $|f(x)-f(y)|$
is the manifold distance on $(-1,1)$ using the metric $p$. In particular,
we have that
\[
\exp_{x}(v)=f^{-1}\left(f(x)+p^{1/2}(x)v\right).
\]
Therefore, the probability density on $\R$ (as in Lemma \ref{lem:prob_formula}
for the one-dimensional case) is given by
\[
p^{\R}(x\rightarrow y)=\frac{1}{\sqrt{2\pi h}}\exp\left[-\frac{1}{2h}\left(f(x)-\frac{hp'(x)}{4p^{3/2}(x)}-f(y)\right)^{2}\right].
\]
Hence, 
\[
p(x\rightarrow y)=\sqrt{\frac{p(y)}{2\pi h}}\exp\left[-\frac{1}{2h}\left(f(x)-\frac{hp'(x)}{4p^{3/2}(x)}-f(y)\right)^{2}\right].
\]
Hence, we have that
\begin{eqnarray*}
 &  & \frac{p(x\rightarrow y)}{p(y\rightarrow x)}\\
 & = & \sqrt{\frac{p(y)}{p(x)}}\exp\left[\frac{1}{2h}\left(f(y)-f(x)-\frac{hp'(y)}{4p^{3/2}(y)}\right)^{2}-\frac{1}{2h}\left(f(y)-f(x)+\frac{hp'(x)}{4p^{3/2}(x)}\right)^{2}\right]\\
 & = & \exp\left[\frac{\log(p(y))-\log(p(x))}{2}-\frac{f(y)-f(x)}{4}\left(\frac{p'(x)}{p^{3/2}(x)}+\frac{p'(y)}{p^{3/2}(y)}\right)+\frac{h}{32}\left(\frac{(p'(y))^{2}}{p^{3}(y)}-\frac{(p'(x))^{2}}{p^{3}(x)}\right)\right].
\end{eqnarray*}
Note that
\begin{eqnarray*}
\log(p(y))-\log(p(x)) & = & \log(p(f^{-1}(f(y))))-\log(p(f^{-1}(f(x))))\\
 & = & \int_{f(x)}^{f(y)}\frac{p'(f^{-1}(t))}{p(f^{-1}(t))f'(f^{-1}(t))}dt\\
 & = & \int_{f(x)}^{f(y)}\frac{p'(f^{-1}(t))}{p^{3/2}(f^{-1}(t))}dt.
\end{eqnarray*}
Note that for any second differentiable function $\phi$, Lemma \ref{lem:trapezoidal}
shows that that
\[
\left|\int_{\beta}^{\alpha}\phi(t)dt-(\alpha-\beta)\left(\phi(\alpha)+\phi(\beta)\right)\right|\leq\frac{\left|\alpha-\beta\right|^{3}}{12}\max_{\alpha\leq t\leq\beta}\left|\phi''(t)\right|.
\]
Hence, 
\begin{eqnarray*}
 &  & \left|\log(p(y))-\log(p(x))-\frac{f(y)-f(x)}{2}\left(\frac{p'(x)}{p^{3/2}(x)}+\frac{p'(y)}{p^{3/2}(y)}\right)\right|\\
 & \leq & \frac{\left|f(x)-f(y)\right|^{3}}{12}\max_{f(x)\leq t\leq f(y)}\left|\frac{d^{2}}{dt^{2}}\frac{p'(f^{-1}(t))}{p^{3/2}(f^{-1}(t))}\right|.
\end{eqnarray*}

For the other term, we note that
\[
\left|\frac{(p'(y))^{2}}{p^{3}(y)}-\frac{(p'(x))^{2}}{p^{3}(x)}\right|\leq\left|y-x\right|\max_{x\leq t\leq y}\left|\frac{d}{dt}\frac{(p'(t))^{2}}{p^{3}(t)}\right|.
\]
Hence, 
\begin{eqnarray*}
 &  & \left|\log\left(\frac{p(x\rightarrow y)}{p(y\rightarrow x)}\right)\right|\\
 & \leq & O\left(\left|f(x)-f(y)\right|^{3}\max_{f(x)\leq t\leq f(y)}\left|\frac{d^{2}}{dt^{2}}\frac{p'(f^{-1}(t))}{p^{3/2}(f^{-1}(t))}\right|\right)+O\left(h\left|y-x\right|\max_{x\leq t\leq y}\left|\frac{d}{dt}\frac{(p'(t))^{2}}{p^{3}(t)}\right|\right).
\end{eqnarray*}

Note that
\[
\frac{d}{dt}\frac{p'(f^{-1}(t))}{p^{3/2}(f^{-1}(t))}=-\frac{3}{2}\frac{(p'(f^{-1}(t)))^{2}}{p^{3}(f^{-1}(t))}+\frac{p''(f^{-1}(t))}{p^{2}(f^{-1}(t))}
\]
and
\[
\frac{d^{2}}{dt^{2}}\frac{p'(f^{-1}(t))}{p^{3/2}(f^{-1}(t))}=\frac{9}{2}\frac{(p'(f^{-1}(t)))^{3}}{p^{9/2}(f^{-1}(t))}-\frac{3p'(f^{-1}(t))p''(f^{-1}(t))}{p^{7/2}(f^{-1}(t))}+\frac{p'''(f^{-1}(t))}{p^{5/2}(f^{-1}(t))}.
\]
Also, we have that
\[
\frac{d}{dt}\frac{(p'(t))^{2}}{p^{3}(t)}=\frac{2p'(t)p''(t)}{p^{3}(t)}-\frac{3(p'(t))^{3}}{p^{4}(t)}.
\]
Hence, we have that
\begin{eqnarray*}
\left|\log\left(\frac{p(x\rightarrow y)}{p(y\rightarrow x)}\right)\right| & = & O\left(\left|f(x)-f(y)\right|^{3}\max_{x\leq t\leq y}\left(\left|\frac{(p'(t))^{3}}{p^{9/2}(t)}\right|+\left|\frac{p'(t)p''(t)}{p^{7/2}(t)}\right|+\left|\frac{p'''(t)}{p^{5/2}(t)}\right|\right)\right)\\
 &  & +O\left(h\left|y-x\right|\max_{x\leq t\leq y}\left(\left|\frac{p'(t)p''(t)}{p^{3}(t)}\right|+\left|\frac{(p'(t))^{3}}{p^{4}(t)}\right|\right)\right).
\end{eqnarray*}
Since $\phi$ is self-concordant, i.e.,$\left|\phi^{(3)}(x)\right|=O(\phi''(x)^{3/2})$,
we know that $p(t)\leq O(1)p(s)$ for all $s,t\in[x,y]$ if $p(x)|x-y|$
is smaller than a constant. Noting that $p(x)|x-y|^{2}=\Theta(h)$
with constant probability, we have that $p(t)\leq O(1)p(s)$ for all
$s,t\in[x,y]$. Therefore, we have that
\begin{align*}
\left|f(x)-f(y)\right| & =O(|x-y|\sqrt{p(x)})=O(\sqrt{h}),\\
|x-y| & =O(p^{-1/2}(x)\sqrt{h}).
\end{align*}
Hence, we have
\begin{eqnarray*}
 &  & \left|\log\left(\frac{p(x\rightarrow y)}{p(y\rightarrow x)}\right)\right|\\
 & = & O\left(h^{3/2}\max_{x\leq t\leq y}\left(\left|\frac{(p'(t))^{3}}{p^{9/2}(t)}\right|+\left|\frac{p'(t)p''(t)}{p^{7/2}(t)}\right|+\left|\frac{p'''(t)}{p^{5/2}(t)}\right|\right)\right)+O\left(h^{3/2}\max_{x\leq t\leq y}\left(\left|\frac{p'(t)p''(t)}{p^{7/2}(t)}\right|+\left|\frac{(p'(t))^{3}}{p^{9/2}(t)}\right|\right)\right).
\end{eqnarray*}
Using our assumption on $\phi$, we get our result.
\end{proof}
\begin{rem*}
The hypercube in $\R^{n}$ is a product of intervals. In fact, the
next-step density function is a product function, and the quantity
$\log\left(\frac{p(x\rightarrow y)}{p(y\rightarrow x)}\right)$ is
a sum over the same quantity in each coordinate. Viewing the coordinates
as independent processes, this is a sum of independent random variables,
w.h.p. of magnitude $O(h^{\frac{3}{2}}).$ The signs are random since
ratio is symmetric in $x,y.$ Thus, the overall log ratio is $O\left(\sqrt{n}h^{\frac{3}{2}}\right).$
To keep this bounded, it suffices to choose $h=O\left(n^{-\frac{1}{3}}\right).$
Moreover, since the isoperimetric ratio ($G_{2}$ in our parametrization)
is $O\left(1\right)$ for the hypercube, this gives an overall mixing
time of $O\left(n^{\frac{1}{3}}\right)$, much lower than the mixing
time of $n^{2}$ for the ball walk and hit-and-run, or the current
bound for the Dikin walk.
\end{rem*}

\subsection{Rejection probability\label{subsec:reject_prob}}

The goal of this section is to prove Theorem \ref{thm:rejectionprob},
i.e., the rejection probability of the Metropolis filter is small.
For a transition from $x$ to $y$ on the manifold, the filter is
applied with respect to a randomly chosen $v_{x},$ i.e., for one
geodesic from $x$ to $y$. We will bound the ratio of the transition
probabilities (without the filter) as follows: 
\begin{eqnarray*}
\log\left(\frac{p(x\overset{v_{x}}{\rightarrow}y)}{p(y\overset{v_{y}}{\rightarrow}x)}\right) & = & \log\left(\frac{\det\left(d\exp_{x}(v_{x})\right)^{-1}}{\det\left(d\exp_{y}(v_{y})\right)^{-1}}\right)+\frac{1}{2}\log\det\left(g(y)\right)-\frac{1}{2}\log\det\left(g(x)\right)\\
 &  & -\frac{1}{2}\norm{\frac{v_{x}-\frac{h}{2}\mu(x)}{\sqrt{h}}}_{x}^{2}+\frac{1}{2}\norm{\frac{v_{y}-\frac{h}{2}\mu(y)}{\sqrt{h}}}_{y}^{2}.
\end{eqnarray*}
Since a geodesic has constant speed, we have $\norm{v_{x}}_{x}=\norm{v_{y}}_{y}$.
Therefore, we have that
\begin{eqnarray}
\log\left(\frac{p(x\overset{v_{x}}{\rightarrow}y)}{p(y\overset{v_{y}}{\rightarrow}x)}\right) & = & \log\left(\frac{\det\left(d\exp_{y}(v_{y})\right)}{\det\left(d\exp_{x}(v_{x})\right)}\right)+\frac{1}{2}\log\det\left(g(y)\right)-\frac{1}{2}\log\det\left(g(x)\right)\label{eq:formula_ratio}\\
 &  & +\frac{1}{2}\left\langle v_{x},\mu(x)\right\rangle _{x}-\frac{h}{8}\norm{\mu(x)}_{x}^{2}-\frac{1}{2}\left\langle v_{y},\mu(y)\right\rangle _{y}+\frac{h}{8}\norm{\mu(y)}_{y}^{2}.\nonumber 
\end{eqnarray}
We separate the proof into three parts:
\begin{itemize}
\item $\left|\norm{\mu(x)}_{x}^{2}-\norm{\mu(y)}_{y}^{2}\right|\le\sqrt{nh}D_{1}$,
immediate from the definition of $D_{1}$ and the parameterization
$\gamma:[0,\ell\defeq\sqrt{nh}]\rightarrow M$. 
\item Sec \ref{sec:Trapezoidal_rule_volumetric_barrier}: $\left|\log\det\left(g(y)\right)-\log\det\left(g(x)\right)+\left\langle v_{x},\mu(x)\right\rangle _{x}-\left\langle v_{y},\mu(y)\right\rangle _{y}\right|=O((nh)^{3/2}G_{1})$,
\item Sec \ref{sec:smooth_exp}: $\left|\log\left(\frac{\det\left(d\exp_{y}(v_{y})\right)}{\det\left(d\exp_{x}(v_{x})\right)}\right)\right|=O((nhR_{1})^{2})$.
\end{itemize}
Together, these facts imply Theorem \ref{thm:rejectionprob}.

\subsubsection{Trapezoidal rule }

\label{sec:Trapezoidal_rule_volumetric_barrier}Recall that the trapezoidal
rule is to approximate $\int_{0}^{h}f(t)dt$ by $\frac{h}{2}\left(f(0)+f(h)\right)$.
The nice thing about this rule is that the error is $O(h^{3})$ instead
of $O(h^{2})$ because the second order term cancels by symmetry.
Our main observation here is that the geodesic walk implicitly follows
a trapezoidal rule on the metric and hence it has a small error. We
include the proof of the trapezoidal rule error for completeness.
\begin{lem}
\label{lem:trapezoidal}We have that
\[
\left|\int_{0}^{\ell}f(t)dt-\frac{\ell}{2}\left(f(0)+f(\ell)\right)\right|\leq\frac{\ell^{3}}{12}\max_{0\leq t\leq\ell}\left|f''(t)\right|.
\]
\end{lem}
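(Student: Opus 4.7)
The plan is to establish the standard Peano-kernel identity for the trapezoidal rule, namely
\[
\int_{0}^{\ell} f(t)\,dt - \frac{\ell}{2}\bigl(f(0)+f(\ell)\bigr) \;=\; -\frac{1}{2}\int_{0}^{\ell} t(\ell-t)\, f''(t)\,dt,
\]
and then bound the right-hand side by pulling $\max|f''|$ outside the integral. The identity is the whole content of the lemma; once it is in hand, the stated estimate is immediate from
\[
\int_{0}^{\ell} t(\ell-t)\,dt \;=\; \frac{\ell^{3}}{2}-\frac{\ell^{3}}{3} \;=\; \frac{\ell^{3}}{6},
\]
giving $|E| \le \tfrac12\cdot\tfrac{\ell^3}{6}\max|f''| = \tfrac{\ell^3}{12}\max|f''|$.

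To prove the identity, I would integrate $\int_{0}^{\ell} t(\ell-t) f''(t)\,dt$ by parts twice. The first integration, taking $u=t(\ell-t)$ and $dv=f''(t)dt$, makes the boundary term vanish (since $t(\ell-t)=0$ at both endpoints) and reduces the integral to $-\int_{0}^{\ell}(\ell-2t)f'(t)\,dt$. A second integration by parts with $u=\ell-2t$, $dv=f'(t)dt$, produces the boundary contribution $-\ell\bigl(f(0)+f(\ell)\bigr)$ together with $+2\int_{0}^{\ell}f(t)\,dt$. Combining and dividing by $-2$ yields the claimed identity.

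There is no real obstacle here; the only thing to be careful about is tracking the signs and the two boundary evaluations (at $t=0$ and $t=\ell$) in each integration by parts, so that the cancellation producing the $\ell^{3}$ scaling (rather than only $\ell^{2}$) is manifest. The lemma assumes $f$ is $C^{2}$ on $[0,\ell]$, which is exactly what is needed for the two integrations by parts to be valid and for $\max_{[0,\ell]}|f''|$ to be finite.
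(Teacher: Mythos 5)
Your proof is correct and reaches the sharp constant $\ell^3/12$, but it follows a different route from the paper. The paper applies the fundamental theorem of calculus twice: it first rewrites both $\int_0^\ell f$ and $f(0)+f(\ell)$ in terms of $f'$, collapses the error to $\int_0^\ell(\tfrac{\ell}{2}-s)f'(s)\,ds$, and then Taylor-expands $f'$ around the \emph{midpoint} $\ell/2$; the crucial observation is that $\int_0^\ell(\tfrac{\ell}{2}-s)\,ds=0$, which kills the constant term and leaves a double integral in $f''$, bounded directly by $\int_0^\ell(\tfrac{\ell}{2}-s)^2\,ds=\ell^3/12$. You instead go straight to the Peano-kernel identity $\int_0^\ell f - \tfrac{\ell}{2}(f(0)+f(\ell)) = -\tfrac12\int_0^\ell t(\ell-t)f''(t)\,dt$, establishing it by two integrations by parts and using $\int_0^\ell t(\ell-t)\,dt=\ell^3/6$. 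These are equivalent (swapping the order of integration in the paper's double integral recovers exactly the kernel $-\tfrac12 t(\ell-t)$), and both exploit the same essential symmetry that upgrades the naive $O(\ell^2)$ error to $O(\ell^3)$. Your version has the advantage of making the kernel explicit and sign-definite in one line, which is the cleaner way to see the sharp constant; the paper's version avoids the boundary-term bookkeeping of integration by parts. Either suffices for the lemma.
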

\begin{proof}
Note that
\begin{eqnarray*}
\int_{0}^{\ell}f(t)dt-\frac{\ell}{2}\left(f(0)+f(\ell)\right) & = & \int_{0}^{\ell}\left(f(0)+\int_{0}^{t}f'(s)ds\right)dt-\ell f(0)-\frac{\ell}{2}\int_{0}^{\ell}f'(s)ds\\
 & = & \int_{0}^{\ell}\int_{0}^{t}f'(s)dsdt-\frac{\ell}{2}\int_{0}^{\ell}f'(s)ds\\
 & = & \int_{0}^{\ell}(\frac{\ell}{2}-s)f'(s)ds\\
 & = & \int_{0}^{\ell}(\frac{\ell}{2}-s)\left(f'(\frac{\ell}{2})+\int_{\ell/2}^{s}f''(t)dt\right)ds\\
 & = & \int_{0}^{\ell}(\frac{\ell}{2}-s)\int_{\ell/2}^{s}f''(t)dtds\\
 & \leq & \frac{\ell^{3}}{12}\max_{0\leq t\leq\ell}\left|f''(t)\right|.
\end{eqnarray*}
\end{proof}
We apply this to the logdet function.
\begin{lem}
Let $f(t)=\log\det(g(\gamma(t))$ where $\gamma(t)=\exp_{x}(\frac{t}{\ell}v_{x})$.
If $h\leq H$ and $V(\gamma)\leq V_{0}$, we have 
\[
\left|\log\det(g(y))-\log\det(g(x))+\left\langle v_{x},\mu(x)\right\rangle _{x}-\left\langle v_{y},\mu(y)\right\rangle _{y}\right|=O((nh)^{3/2}G_{1}).
\]
\end{lem}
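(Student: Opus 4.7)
The plan is to apply the trapezoidal rule (Lemma \ref{lem:trapezoidal}) not to $f$ itself but to its derivative $f'(t) = \frac{d}{dt}\log\det g(\gamma(t))$ on the interval $[0,\ell]$. Since $\int_0^{\ell} f'(t)\,dt = f(\ell) - f(0) = \log\det g(y) - \log\det g(x)$, the trapezoidal estimate gives
\[
\left|\log\det g(y) - \log\det g(x) - \tfrac{\ell}{2}\bigl(f'(0)+f'(\ell)\bigr)\right| \le \tfrac{\ell^{3}}{12}\max_{0\le t\le \ell}|f'''(t)|.
\]
By the definition of $G_1$ and the hypothesis $V(\gamma)\le V_0$, the right-hand side is bounded by $O(\ell^3 G_1) = O((nh)^{3/2} G_1)$ since $\ell=\sqrt{nh}$. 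So the whole proof reduces to identifying the trapezoidal endpoint terms with the drift inner products.

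Next I compute $f'(0)$ and $f'(\ell)$ and match them to $\langle v_x,\mu(x)\rangle_x$ and $\langle v_y,\mu(y)\rangle_y$. Using Lemma \ref{lem:drift-Newton-step}, we have $\mu(x) = -g(x)^{-1}\nabla\psi(x)$ with $\psi(x) = \tfrac{1}{2}\log\det g(x)$, so in Euclidean coordinates
\[
\langle v,\mu(x)\rangle_x = v^{T} g(x)\mu(x) = -\tfrac{1}{2}\,v^{T}\nabla\log\det g(x).
\]
On the other hand, by the chain rule, $f'(t) = \langle \nabla\log\det g(\gamma(t)),\,\gamma'(t)\rangle_{\mathrm{Eucl}}$. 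With the lemma's parametrization $\gamma(t)=\exp_x(\tfrac{t}{\ell}v_x)$ we have $\gamma'(0) = v_x/\ell$ and $\gamma'(\ell) = -v_y/\ell$ (the latter because $\|\gamma'\|_{\gamma(t)}$ is constant and the backward initial velocity from $y$ is $v_y/\ell$). Substituting:
\[
\tfrac{\ell}{2}f'(0) = \tfrac{1}{2}\langle \nabla\log\det g(x), v_x\rangle_{\mathrm{Eucl}} = -\langle v_x,\mu(x)\rangle_x,
\]
\[
\tfrac{\ell}{2}f'(\ell) = -\tfrac{1}{2}\langle \nabla\log\det g(y), v_y\rangle_{\mathrm{Eucl}} = +\langle v_y,\mu(y)\rangle_y.
\]
Adding the two equalities and substituting back into the trapezoidal bound immediately yields the claimed inequality.

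The only point requiring genuine care is the bookkeeping in the previous paragraph: one must correctly track the two $\tfrac{1}{\ell}$ factors coming from $\gamma'(0),\gamma'(\ell)$, the sign flip at $t=\ell$ (due to $\gamma'(\ell)=-v_y/\ell$), and the factor $\tfrac{1}{2}$ in $\psi=\tfrac{1}{2}\log\det g$; these conspire so that the lengths $\ell$ cancel exactly and the trapezoidal endpoint sum equals $\langle v_y,\mu(y)\rangle_y - \langle v_x,\mu(x)\rangle_x$, which is precisely the combination that appears (with an overall sign) in the rejection-probability decomposition (\ref{eq:formula_ratio}). The curvature/error term is then controlled by $G_1$ by definition, and no further Riemannian machinery is needed — the key point of the argument is exactly that the \emph{first-order Taylor (trapezoidal) term vanishes}, leaving only a third-order error of size $O((nh)^{3/2}G_1)$, reflecting the symmetric nature of the drift $\mu/2$ chosen for the geodesic walk.
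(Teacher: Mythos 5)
Your proof is correct and follows essentially the same route as the paper's: both apply the trapezoidal rule (Lemma \ref{lem:trapezoidal}) to $f'(t)=\frac{d}{dt}\log\det g(\gamma(t))$ on $[0,\ell]$, identify $\tfrac{\ell}{2}f'(0)=-\langle v_x,\mu(x)\rangle_x$ and $\tfrac{\ell}{2}f'(\ell)=\langle v_y,\mu(y)\rangle_y$ via Lemma \ref{lem:drift-Newton-step} and the endpoint velocities $\gamma'(0)=v_x/\ell$, $\gamma'(\ell)=-v_y/\ell$, and bound the remainder by $\frac{\ell^3}{12}\max|f'''|=O((nh)^{3/2}G_1)$. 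The only cosmetic difference is that the paper first writes the uniform identity $f'(t)=-2\langle\mu(\gamma(t)),\gamma'(t)\rangle_{\gamma(t)}$ while you evaluate the two endpoints directly through the Euclidean gradient of $\psi=\tfrac12\log\det g$.
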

\begin{proof}
Let $f(t)=\log\det g(\gamma(t))$. By Lemma \ref{lem:drift-Newton-step},
$\mu(\gamma(t))=-\frac{1}{2}g(\gamma(t))^{-1}\nabla f(\gamma(t)).$
Using this, 

\begin{align*}
f'(t) & =\langle\nabla_{\gamma(t)}\log\det g(\gamma(t)),\gamma'(t)\rangle_{2}\\
 & =\langle g(\gamma(t))^{-1}\nabla_{\gamma(t)}\log\det g(\gamma(t)),\gamma'(t)\rangle_{\gamma(t)}\\
 & =-\langle2\mu(\gamma(t)),\gamma'(t)\rangle_{\gamma(t)}.
\end{align*}
Noting that $v_{x}=\ell\gamma'(0)$ and $v_{y}=-\ell\gamma'(\ell)$,
and using Lemma \ref{lem:trapezoidal}, we have
\begin{eqnarray*}
 &  & \left|\log\det(g(y))-\log\det(g(x))+\left\langle v_{x},\mu(x)\right\rangle _{x}-\left\langle v_{y},\mu(y)\right\rangle _{y}\right|\\
 & = & \left|\log\det(g(y))-\log\det(g(x))+\ell\left(\left\langle \gamma'(0),\mu(x)\right\rangle _{x}+\left\langle \gamma'(\ell),\mu(y)\right\rangle _{y}\right)\right|\\
 & = & \left|\int_{0}^{\ell}f'(t)dt-\frac{\ell}{2}\left(f'(0)+f'(\ell)\right)\right|\\
 & \leq & \frac{\ell^{3}}{12}\max_{0\leq t\leq\ell}\left|f'''(t)\right|=O\left((nh)^{3/2}G_{1}\right).
\end{eqnarray*}
\end{proof}

\subsubsection{Smoothness of exponential map\label{sec:smooth_exp}}

First, we show the relation between the differential of exponential
map $d\exp_{x}(v_{x})$ and the Jacobi field along the geodesic $\exp_{x}(tv_{x})$.
This can be viewed as the fundamental connection between geodesics
and the Jacobi field in matrix notation, which will be convenient
for our purpose. 
\begin{lem}
\label{lem:formula_dexp}Given a geodesic $\gamma(t)=\exp_{x}(\frac{t}{\ell}v_{x})$,
let $\{X_{i}(t)\}_{i=1}^{n}$ be the parallel transport of some orthonormal
frame along $\gamma(t)$. Then, for any $w\in\Rn$, we have that 
\[
d\exp_{x}(v_{x})\left(\sum w_{i}X_{i}(0)\right)=\sum_{i}\psi_{w}(\ell)_{i}X_{i}(\ell)
\]
 where $\psi_{w}$ satisfies the following Jacobi equation along $\gamma(t)$:
\begin{eqnarray}
\frac{d^{2}}{dt^{2}}\psi_{w}(t)+R(t)\psi_{w}(t) & = & 0\text{ for all }0\leq t\leq\ell,\nonumber \\
\frac{d}{dt}\psi_{w}(0) & = & \frac{w}{\ell},\label{eq:jacobian}\\
\psi_{w}(0) & = & 0,\nonumber 
\end{eqnarray}
and $R(t)$ is defined in Definition \ref{def:Rt_definition}.
\end{lem}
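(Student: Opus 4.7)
The plan is to realize the image $d\exp_x(v_x)(\sum_i w_i X_i(0))$ as the endpoint value of a Jacobi field, by choosing a carefully designed variation of geodesics, and then to translate the abstract Jacobi equation into the matrix ODE (\ref{eq:jacobian}) via the parallel frame.

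First I would set $u \defeq \sum_i w_i X_i(0) \in T_x M$ and consider the variation
\[
c(t,s) \defeq \exp_x\!\left(\tfrac{t}{\ell}\bigl(v_x + s\,u\bigr)\right), \qquad 0 \le t \le \ell.
\]
For every fixed $s$ the curve $t \mapsto c(t,s)$ is a geodesic (being the image under $\exp_x$ of a ray in $T_x M$), and $c(\cdot,0) = \gamma$. By Theorem \ref{thm:Jacobi_equation}, the variation field $J(t) \defeq \partial_s c(t,s)\big|_{s=0}$ along $\gamma$ is a Jacobi field, i.e.\ satisfies $D_t^2 J + R(J,\gamma')\gamma' = 0$.

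Second I would pin down the initial data and the endpoint. Since $c(0,s) = x$ for all $s$, we get $J(0) = 0$. Using the torsion-free identity $D_t\partial_s c = D_s\partial_t c$ (Fact \ref{fact:basic_RG}(6)) together with $\partial_t c(0,s) = \tfrac{1}{\ell}(v_x + su)\in T_x M$, whose ordinary $s$-derivative at $s=0$ equals $\tfrac{1}{\ell}u$, we obtain $D_t J(0) = \tfrac{1}{\ell}\sum_i w_i X_i(0)$. On the other end, $c(\ell,s) = \exp_x(v_x + su)$, so the very definition of the pushforward yields $J(\ell) = d\exp_x(v_x)(u)$, precisely the object we want to describe.

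Third I would expand $J$ in the parallel frame, writing $J(t) = \sum_i \psi_w(t)_i\, X_i(t)$. Because the $X_i$ are parallel transported, $D_t X_i \equiv 0$ (Fact \ref{fact:basic_RG}(1)), so the product rule gives $D_t^2 J = \sum_i \psi_w''(t)_i\, X_i(t)$. Meanwhile the curvature term reads $R(J,\gamma')\gamma' = \sum_i \psi_w(t)_i\, R(X_i(t),\gamma'(t))\gamma'(t)$, and re-expanding in the $X_j(t)$ basis with the symmetry in Fact \ref{fact:sym_cuv} identifies the coefficient matrix with exactly the symmetric matrix $R(t)$ of Definition \ref{def:Rt_definition}. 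Thus the Jacobi equation is equivalent to the linear ODE $\psi_w''(t) + R(t)\psi_w(t) = 0$ with $\psi_w(0)=0$ and $\psi_w'(0) = w/\ell$, and evaluating at $t=\ell$ gives the claimed identity.

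There is no deep difficulty here; the one subtlety is that the ``initial velocity'' of $J$ must be computed through the torsion-free identity rather than by naively differentiating components, and one must keep straight that in the parallel frame the operator $u \mapsto R(u,\gamma')\gamma'$ really is represented by the same symmetric matrix $R(t)$ used in (\ref{eq:jacobian}). Both points are immediate consequences of the facts already collected in Section 2, so the argument is essentially a clean translation of a standard Riemannian identity into the paper's matrix notation.
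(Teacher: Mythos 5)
Your proof is correct and follows essentially the same route as the paper: realize $d\exp_x(v_x)(u)$ as the endpoint of a Jacobi field arising from a one-parameter variation through geodesics, compute the initial data via torsion-freeness, and translate the Jacobi equation into the matrix ODE using the parallel frame. One small point in your favor: your variation $c(t,s)=\exp_x\bigl(\tfrac{t}{\ell}(v_x+su)\bigr)$ is the correct one, whereas the paper writes $\gamma(t,s)=\exp_x(tv_x/\ell+sw)$, which does not consist of geodesics in $t$ for $s\ne 0$ nor give $\eta_w(0)=0$; that appears to be a typo for the variation you chose.
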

\begin{proof}
We want to compute $d\exp_{x}(v_{x})(w)$ for some $w\in T_{y}M$.
By definition, we have that
\[
d\exp_{x}(v_{x})(w)=\frac{d}{ds}\gamma(t,s)|_{t=\ell,s=0}
\]
where $\gamma(t,s)=\exp_{x}(tv_{x}/\ell+sw)$. It is known that $\eta_{w}(t)=\frac{d}{ds}\gamma(t,s)|_{s=0}$
is a Jacobi field given by the formula (Sec \ref{subsec:Jacobi-field})
\begin{eqnarray*}
D_{t}D_{t}\eta_{w}+R(\eta_{w},\gamma'(t))\gamma'(t) & = & 0,\text{ for }0\leq t\leq\ell\\
D_{t}\eta_{w}(0) & = & w/\ell,\\
\eta_{w}(0) & = & 0.
\end{eqnarray*}

Recall that the parallel transport of an orthonormal frame remains
orthonormal because 
\[
\frac{d}{dt}\left\langle X_{i}(t),X_{j}(t)\right\rangle _{\gamma(t)}=\left\langle D_{t}X_{i}(t),X_{j}(t)\right\rangle _{\gamma(t)}+\left\langle X_{i}(t),D_{t}X_{j}(t)\right\rangle _{\gamma(t)}=0.
\]
Since $X_{i}(t)$ is an orthonormal basis at $T_{\gamma(t)}M$, we
can write
\[
\eta_{w}(t)=\sum_{i}\psi_{w,i}(t)X_{i}(t).
\]
Now, we need to verify $\psi_{w}$ satisfies (\ref{eq:jacobian}).
The second and last equation is immediate. 

Since $\eta_{w}$ satisfies the ODE above, we have 
\[
D_{t}D_{t}\sum_{i}\psi_{w,i}(t)X_{i}(t)+R(\sum_{i}\psi_{w,i}(t)X_{i}(t),\gamma'(t))\gamma'(t)=0.
\]
Since $X_{i}(t)$ is a parallel transport, we have that $D_{t}D_{t}\left(\psi_{w,i}(t)X_{i}(t)\right)=\frac{d^{2}}{dt^{2}}\psi_{w,i}(t)X_{i}(t)$
and hence 
\[
\sum_{i}\frac{d^{2}}{dt^{2}}\psi_{w,i}(t)X_{i}(t)+\sum_{i}\psi_{w,i}(t)R(X_{i}(t),\gamma'(t))\gamma'(t)=0
\]
Let $R_{ij}(t)=\left\langle R(X_{i}(t),\gamma'(t))\gamma'(t),X_{j}(t)\right\rangle $.
Since $R(t)$ is a symmetric matrix (Fact \ref{fact:sym_cuv}), we
have
\[
\frac{d^{2}}{dt^{2}}\psi_{w}(t)+R(t)\psi_{w}(t)=0.
\]
This verifies the first equation in (\ref{eq:jacobian}).
\end{proof}
Next, we have an elementary lemma about the determinant.
\begin{lem}
\label{lem:logdet_est}Suppose that $E$ is a matrix (not necessarily
symmetric) with $\norm E_{2}\leq\frac{1}{4}$, we have
\[
\left|\log\det(I+E)-\tr E\right|\leq\norm E_{F}^{2}.
\]
\end{lem}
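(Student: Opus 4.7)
The plan is to expand $\log\det(I+E)$ as a power series in $E$ and show that all terms beyond the linear one $\tr E$ contribute at most $\|E\|_F^2$ in total. Concretely, I would start from the standard identity $\log\det(I+E)=\tr\log(I+E)$, which holds whenever the spectral radius of $E$ is less than $1$ (expand each eigenvalue term using the principal branch, or equivalently integrate $\frac{d}{dt}\log\det(I+tE)=\tr((I+tE)^{-1}E)$ from $0$ to $1$). Under the hypothesis $\|E\|_2\le\tfrac14$, the series
\[
\log(I+E)=\sum_{k=1}^{\infty}\frac{(-1)^{k+1}}{k}E^{k}
\]
converges in operator norm, so taking traces yields
\[
\log\det(I+E)-\tr E=\sum_{k=2}^{\infty}\frac{(-1)^{k+1}}{k}\tr(E^{k}).
\]

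The next step is to bound each $|\tr(E^{k})|$ for $k\ge 2$ by $\|E\|_{F}^{2}\|E\|_{2}^{k-2}$. This comes from two standard inequalities: the Cauchy--Schwarz bound $|\tr(AB)|\le\|A\|_{F}\|B\|_{F}$ applied with $A=E$ and $B=E^{k-1}$, followed by the submultiplicativity $\|E^{k-1}\|_{F}\le\|E\|_{2}^{k-2}\|E\|_{F}$ (iterate $\|AB\|_{F}\le\|A\|_{2}\|B\|_{F}$). Substituting back,
\[
\bigl|\log\det(I+E)-\tr E\bigr|\le\|E\|_{F}^{2}\sum_{k=2}^{\infty}\frac{\|E\|_{2}^{k-2}}{k}\le\frac{\|E\|_{F}^{2}}{2}\cdot\frac{1}{1-\|E\|_{2}}\le\frac{\|E\|_{F}^{2}}{2}\cdot\frac{4}{3}\le\|E\|_{F}^{2},
\]
using $\|E\|_{2}\le\tfrac14$ in the last two steps.

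There is no real obstacle here; the only point worth verifying carefully is the identity $\log\det(I+E)=\tr\log(I+E)$ for non-symmetric $E$. If one prefers to avoid spectral arguments, the clean alternative is to define $f(t)=\log\det(I+tE)$ on $[0,1]$, use Jacobi's formula to get $f'(t)=\tr((I+tE)^{-1}E)$, expand $(I+tE)^{-1}=\sum_{j\ge0}(-tE)^{j}$ (valid since $\|tE\|_{2}\le\tfrac14$), and integrate termwise to obtain the same series. The bound on the tail then proceeds exactly as above.
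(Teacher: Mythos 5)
Your proof is correct, but it takes a genuinely different route from the paper's. The paper works with the scalar function $f(t)=\log\det(I+tE)$ and applies Taylor's theorem with integral remainder: by Jacobi's formula $f''(t)=-\tr\bigl((I+tE)^{-1}E(I+tE)^{-1}E\bigr)$, the hypothesis $\norm E_2\le\frac14$ gives $\norm{(I+tE)^{-1}}_2\le\frac43$ and hence $|f''(t)|\le 2\norm E_F^2$, so $|f(1)-\tr E|=\left|\int_0^1(1-s)f''(s)\,ds\right|\le\norm E_F^2$. You instead expand $\log\det(I+E)=\tr\log(I+E)=\sum_{k\ge1}\frac{(-1)^{k+1}}{k}\tr(E^k)$ and bound the tail termwise via $|\tr(E^k)|\le\norm E_F\norm{E^{k-1}}_F\le\norm E_F^2\norm E_2^{k-2}$. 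Both arguments are valid and elementary; the paper's Taylor-remainder version avoids an infinite series and needs only two derivatives, while your series version makes the dependence on $\norm E_2$ more transparent (the constant improves as $\norm E_2\to0$) and avoids appealing to matrix-derivative formulas. Your flagged concern about $\log\det(I+E)=\tr\log(I+E)$ for non-symmetric $E$ is not a real obstacle since $\norm E_2<1$ guarantees convergence of the matrix-logarithm series and the identity $\det(e^A)=e^{\tr A}$ is general; your fallback via Jacobi's formula and termwise integration of the Neumann series is also fine and is essentially equivalent to the paper's route.
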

\begin{proof}
Let $f(t)=\log\det(I+tE)$. Then, by Jacobi's formula, we have
\begin{eqnarray*}
f'(t) & = & \tr\left((I+tE)^{-1}E\right),\\
f''(t) & = & -\tr((I+tE)^{-1}E(I+tE)^{-1}E).
\end{eqnarray*}
Since $\norm E_{2}\leq\frac{1}{4}$, we have that $\norm{(I+tE)^{-1}}_{2}\leq\frac{4}{3}$
and hence
\begin{eqnarray*}
\left|f''(t)\right| & = & \left|\tr((I+tE)^{-1}E(I+tE)^{-1}E)\right|\\
 & \leq & \left|\tr(E^{T}\left((I+tE)^{-1}\right)^{T}(I+tE)^{-1}E)\right|\\
 & \leq & 2\left|\tr(E^{T}E)\right|=2\norm E_{F}^{2}.
\end{eqnarray*}
The result follows from 
\begin{eqnarray*}
f(1) & = & f(0)+f'(0)+\int_{0}^{1}(1-s)f''(s)ds\\
 & = & \tr(E)+\int_{0}^{1}(1-s)f''(s)ds.
\end{eqnarray*}
\end{proof}
Using lemma \ref{lem:logdet_est} along with bounds on the solution
to the Jacobi field equation (Lemma \ref{lem:jacobi_approx_sol_mat}
in Sec. \ref{subsec:Approx_Jacobi}) , we have the following: 
\begin{lem}
\label{lem:Jac-approx}Given a geodesic walk $\gamma(t)=\exp_{x}(\frac{t}{\ell}v_{x})$
with step size $h$ satisfying $0<h\leq\frac{1}{nR}$ where $R=\max_{0\leq t\leq\ell}\norm{R(t)}_{F}$.
We have that $d\exp_{x}(v_{x})$ is invertible,
\begin{equation}
\left|\log\det\left(d\exp_{x}(v_{x})\right)-\int_{0}^{\ell}\frac{s(\ell-s)}{\ell}\text{Ric}(\gamma'(s))ds\right|\leq\frac{(nhR)^{2}}{6}\label{eq:Jac_approx_1}
\end{equation}
and
\[
\left|\log\det\left(d\exp_{x}(v_{x})\right)-\log\det\left(d\exp_{y}(v_{y})\right)\right|\leq\frac{(nhR)^{2}}{3}.
\]
If we further assume $V(\gamma)\leq V_{0}$ and $h\leq H$, then we
have that $R\leq R_{1}$.
\end{lem}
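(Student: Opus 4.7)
The plan is to reduce the analysis of $\det(d\exp_x(v_x))$ to a matrix Jacobi ODE, apply the determinant estimate of Lemma \ref{lem:logdet_est}, and then identify the trace term with the weighted Ricci integral.

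By Lemma \ref{lem:formula_dexp}, if we express $d\exp_x(v_x)$ in the parallel orthonormal frame $\{X_i(t)\}$ along $\gamma$, then $d\exp_x(v_x)$ is represented by the matrix $J(\ell)/\ell$, where $J(t)$ is the unique matrix-valued solution of the Jacobi equation
\[
J''(t) + R(t) J(t) = 0, \qquad J(0) = 0, \qquad J'(0) = I.
\]
Writing $J(\ell)/\ell = I + E$ and iterating the integral form $J(t) = tI - \int_0^t\!\int_0^s R(u) J(u)\, du\, ds$, an interchange of order of integration gives the leading identity
\[
E = -\int_0^\ell \frac{s(\ell - s)}{\ell}\, R(s)\, ds \;+\; (\text{second-order terms in } R).
\]
The quantitative bounds I will invoke from Lemma \ref{lem:jacobi_approx_sol_mat} are $\|E\|_2 \le O(\ell^2 R)$ and $\|E\|_F^2 \le (nhR)^2/6$ (up to constants); since $\ell^2 R = nhR \le 1$ by hypothesis, $\|E\|_2 \le \tfrac14$, and in particular $I+E$ is invertible, which gives the invertibility of $d\exp_x(v_x)$.

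With $\|E\|_2 \le \tfrac14$ in hand, Lemma \ref{lem:logdet_est} yields $|\log\det(I+E) - \tr E| \le \|E\|_F^2$. Using $\tr R(s) = \mathrm{Ric}(\gamma'(s))$ (Definition \ref{def:Rt_definition} combined with the Ricci definition in Section \ref{sec:RG}), the leading part of $\tr E$ is exactly (up to sign) $\int_0^\ell \frac{s(\ell-s)}{\ell} \mathrm{Ric}(\gamma'(s))\, ds$, and the higher-order terms in $E$ are absorbed into the $(nhR)^2/6$ error using $\|E\|_F$ again. This proves \eqref{eq:Jac_approx_1}.

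For the symmetry statement, I run the same argument on the reversed geodesic $\tilde\gamma(t) = \gamma(\ell - t)$ emanating from $y$ with initial velocity $v_y/\ell = -\gamma'(\ell)$. The associated curvature operator is $\widetilde R(s) = R(\ell - s)$, and the weight $\frac{s(\ell - s)}{\ell}$ is symmetric under $s \mapsto \ell - s$, so the leading integrals at $x$ and at $y$ are \emph{identical}. The difference $|\log\det(d\exp_x(v_x)) - \log\det(d\exp_y(v_y))|$ is therefore at most the sum of the two individual error bounds $(nhR)^2/6 + (nhR)^2/6 = (nhR)^2/3$. The final sentence of the lemma is immediate from Definition \ref{def:hessian_parameter}: under the assumptions $V(\gamma) \le V_0$ and $h \le H$, $\|R(t)\|_F \le R_1$ for all $0 \le t \le \ell$, hence $R \le R_1$.

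The only real obstacle is the content packaged into Lemma \ref{lem:jacobi_approx_sol_mat}: one must solve the matrix Jacobi ODE perturbatively around $J(t) = tI$ and sharpen the Frobenius-norm control on the remainder to the stated constant $1/6$. This is a Gronwall/Picard iteration, but getting the exact coefficient requires carrying out the leading double integral $\int_0^t\!\int_0^s u\,R(u)\, du\, ds = \int_0^t u(t-u)\,R(u)\, du$ explicitly and bounding the next Picard iterate in Frobenius norm by $\|R\|_F$ uniformly. Everything else is bookkeeping.
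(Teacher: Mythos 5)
Your proposal follows essentially the same route as the paper's proof: reduce to the matrix Jacobi ODE $\Psi'' + R(t)\Psi = 0$ with $\Psi(0)=0$, $\Psi'(0)=I/\ell$ (your $J(\ell)/\ell$), control $E=\Psi(\ell)-I$ via Lemma \ref{lem:jacobi_approx_sol_mat}, apply Lemma \ref{lem:logdet_est}, identify $\tr R(s)=\mathrm{Ric}(\gamma'(s))$, and use symmetry of the weight $s(\ell-s)/\ell$ to get the $x\leftrightarrow y$ bound. Your constants are a bit loose in one place — Lemma \ref{lem:jacobi_approx_sol_mat} gives $\|E\|_F \le nhR/5$, hence $\|E\|_F^2 \le (nhR)^2/25$, not $(nhR)^2/6$; the $1/6$ in the final bound comes from adding this to a second estimate on $|\tr E - \int_0^\ell \frac{s(\ell-s)}{\ell}\tr R(s)\,ds|$ (the paper bounds it by $(nhR)^2/10$ via $\int_0^\ell (\ell-s)\|R(s)\|_F\|\Psi(s)-\tfrac{s}{\ell}I\|_F\,ds$), which you only gesture at with ``absorbed using $\|E\|_F$ again.'' That bookkeeping step is genuinely needed to obtain the stated constant, but it is exactly the Picard iteration you flag at the end, so the argument is sound.
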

\begin{proof}
Let $\Psi$ be the solution of the ODE $\Psi''(t)+R(t)\Psi(t)=0$,
$\Psi'(0)=I/\ell$ and $\Psi(0)=0$. We know that $\norm{R(t)}_{2}\leq\norm{R(t)}_{F}\leq R$
for all $0\leq t\leq\ell$. Hence, Lemma \ref{lem:jacobi_approx_sol_mat}
shows that 
\begin{eqnarray}
\norm{\Psi(t)-I}_{F} & \leq & \max_{0\leq s\leq\ell}\norm{R(s)}_{F}\left(\frac{\ell^{3}}{5}\norm{I/\ell}_{2}\right)\nonumber \\
 & \leq & \frac{1}{5}nhR\leq\frac{1}{5}\label{eq:estimate_psi_error}
\end{eqnarray}
By the Lemma \ref{lem:logdet_est}, we have that
\begin{equation}
\left|\log\det(\Psi(\ell))-\tr\left(\Psi(\ell)-I\right)\right|\leq\left(\frac{1}{5}nhR\right)^{2}.\label{eq:log_psi_est_1}
\end{equation}

Now, we need to estimate $\tr(\Psi(\ell)-I)$. Note that
\begin{eqnarray*}
\Psi(\ell) & = & \Psi(0)+\Psi'(0)\ell+\int_{0}^{\ell}(\ell-s)R(s)\Psi(s)ds\\
 & = & I+\int_{0}^{\ell}(\ell-s)R(s)\Psi(s)ds.
\end{eqnarray*}
Hence, we have
\begin{eqnarray*}
\Psi(\ell)-I-\int_{0}^{\ell}\frac{s(\ell-s)}{\ell}R(s)ds & = & \int_{0}^{\ell}(\ell-s)R(s)\left(\Psi(s)-\frac{s}{\ell}I\right)ds.
\end{eqnarray*}
Using Lemma (\ref{eq:estimate_psi_error}), we have
\begin{eqnarray}
\left|\tr\left(\Psi(\ell)-I-\int_{0}^{\ell}\frac{s(\ell-s)}{\ell}R(s)ds\right)\right| & \leq & \int_{0}^{\ell}(\ell-s)\left|\tr R(s)\left(\Psi(s)-\frac{s}{\ell}I\right)\right|ds\nonumber \\
 & \leq & \int_{0}^{\ell}(\ell-s)\norm{R(s)}_{F}\norm{\Psi(s)-\frac{s}{\ell}I}_{F}ds\nonumber \\
 & \leq & \frac{\ell^{2}}{2}\cdot R\cdot\frac{1}{5}Rnh\leq\frac{(nhR)^{2}}{10}.\label{eq:log_psi_est_2}
\end{eqnarray}
Combining (\ref{eq:log_psi_est_1}) and (\ref{eq:log_psi_est_2}),
we have that
\[
\left|\log\det(\Psi(\ell))-\int_{0}^{\ell}\frac{s(\ell-s)}{\ell}R(s)ds\right|\leq\left(\frac{1}{5}Rnh\right)^{2}+\frac{(nhR)^{2}}{10}\leq\frac{\left(nhR\right)^{2}}{6}.
\]

By Lemma \ref{lem:formula_dexp}, for any $w$, we have that $d\exp_{x}(v_{x})\left(\sum w_{i}X_{i}(0)\right)=\sum_{i}\psi_{w}(\ell)_{i}X_{i}(\ell)=\sum_{i}(\Psi(\ell)w)_{i}X_{i}(\ell)$.
Since $\{X_{i}(t)\}_{i=1}^{n}$ are orthonormal, this shows that 
\[
d\exp_{x}(v_{x})=X(\ell)\Psi(\ell)X(0)^{T}
\]
where $X$ is the matrix $[X_{1},X_{2},\cdots,X_{n}]$. Since $\norm{\Psi(\ell)-I}_{2}\leq\frac{1}{5}$
(\ref{eq:estimate_psi_error}), we have that $\Psi(\ell)$ is invertible
and so is $d\exp_{x}(v_{x})$. 

Since $X(\ell)$ and $X(0)$ are orthonormal, we have that
\[
\log\det\left(d\exp_{x}(v_{x})\right)=\log\det\Psi(\ell).
\]
Therefore, this gives 
\[
\left|\log\det\left(d\exp_{x}(v_{x})\right)-\int_{0}^{\ell}\frac{s(\ell-s)}{\ell}\tr R(s)ds\right|\leq\frac{(nhR)^{2}}{6}.
\]
By the definition of Ricci curvature and Fact \ref{fact:sym_cuv},
we have that 
\begin{eqnarray*}
\tr R(s) & = & \sum_{i}\left\langle R(X_{i}(s),\gamma'(s))\gamma'(s),X_{j}(s)\right\rangle \\
 & = & \text{Ric}(\gamma'(s)).
\end{eqnarray*}
This gives the result (\ref{eq:Jac_approx_1}).

Since the geodesic $\exp_{x}(tv_{x})$ is the same as the geodesic
$\exp_{y}(tv_{y})$ except for swapping the parameterization, and
since $\int_{0}^{\ell}\frac{s(\ell-s)}{\ell}\tr R(s)ds$ is invariant
under this swap, (\ref{eq:Jac_approx_1}) implies that $\log\det\left(d\exp_{x}(v_{x})\right)$
is close to $\log\det\left(d\exp_{y}(v_{y})\right)$.
\end{proof}

\subsection{Smoothness of one-step distributions}

\label{sec:Smoothness-of-P}Here we prove Theorem \ref{thm:d_K}.
Recall that the probability density of going from $x$ to $y$ is
given by the following formula:
\[
p_{x}(y)=\sum_{v_{x}:exp_{x}(v_{x})=y}\det\left(d\exp_{x}(v_{x})\right)^{-1}\sqrt{\frac{\det\left(g(y)\right)}{\left(2\pi h\right)^{n}}}\exp\left(-\frac{1}{2}\norm{\frac{v_{x}-\frac{h}{2}\mu(x)}{\sqrt{h}}}_{x}^{2}\right)
\]
To simplify the calculation, we apply Lemma \ref{lem:Jac-approx}
and consider the following estimate of $p_{x}(y)$ instead
\begin{equation}
\tilde{p}_{x}(y)=\sum_{v_{x}:exp_{x}(v_{x})=y}\sqrt{\frac{\det\left(g(y)\right)}{\left(2\pi h\right)^{n}}}\exp\left(-\int_{0}^{\ell}\frac{t(\ell-t)}{\ell}\text{Ric}(\gamma_{v_{x}}'(t))dt-\frac{1}{2}\norm{\frac{v_{x}-\frac{h}{2}\mu(x)}{\sqrt{h}}}_{x}^{2}\right)\label{eq:p_tilde_x}
\end{equation}
where $\gamma_{v_{x}}(t)$ be the geodesic from $\gamma_{v_{x}}(0)=x$
with $\gamma_{v_{x}}'(0)=v_{x}$. Lemma \ref{lem:Jac-approx} shows
that $\left|\log\left(\tilde{p}_{x}(y)/p_{x}(y)\right)\right|$ is
small and hence it suffices to prove the smoothness of $\tilde{p}_{x}(y)$.

Let $c(s)$ be an unit speed geodesic going from $x$ to some point
$z$ very close to $x$. Lemma \ref{lem:one_one_cor} shows that there
is an unique vector field $v(s)$ on $c(s)$ such that $\exp_{c(s)}(v(s))=y$
and $v(0)=v_{x}$. Now, we define 
\[
\zeta(v,s)=\sqrt{\frac{\det\left(g(y)\right)}{\left(2\pi h\right)^{n}}}\exp\left(-\int_{0}^{\ell}\frac{t(\ell-t)}{\ell}\text{Ric}(\gamma_{s}'(t))dt-\frac{1}{2}\norm{\frac{v(s)-\frac{h}{2}\mu(c(s))}{\sqrt{h}}}_{c(s)}^{2}\right)
\]
where $\gamma_{s}(t)=\exp_{c(s)}(\frac{t}{\ell}v(s))$. Then, we have
that
\[
\tilde{p}_{c(s)}(y)=\sum_{v:\exp_{c(s)}(v(s))=y}\zeta(v,s)
\]
and hence
\begin{equation}
\frac{d}{ds}\tilde{p}_{c(s)}(y)=\sum_{v:\exp_{c(s)}(v(s))=y}\left(-\int_{0}^{\ell}\frac{t(\ell-t)}{\ell}\frac{d}{ds}\text{Ric}(\gamma'_{s}(t))dt-\frac{1}{2}\frac{d}{ds}\norm{\frac{v(s)-\frac{h}{2}\mu(c(s))}{\sqrt{h}}}_{c(s)}^{2}\right)\zeta(v,s).\label{eq:foruma_dpds}
\end{equation}
Hence, it suffices to bound the terms in parenthesis. 

In Lemma \ref{lem:smoothness_logdetJ}, we analyze $\frac{d}{ds}\text{Ric}(\gamma'_{s}(t))$
and prove that
\[
\left|\int_{0}^{\ell}\frac{t(\ell-t)}{\ell}\frac{d}{ds}\text{Ric}(\gamma'_{s}(t))dt\right|=O\left(nhR_{2}\right).
\]
In Lemma \ref{lem:normvmuds}, we analyze $\frac{d}{ds}\norm{\frac{v(s)-\frac{h}{2}\mu(c(s))}{\sqrt{h}}}_{c(s)}^{2}$
and prove that 
\[
\E_{V(\gamma)\leq V_{0}}\left|\left.\frac{d}{ds}\norm{\frac{v(s)-\frac{h}{2}\mu(c(s))}{\sqrt{h}}}_{c(s)}^{2}\right|_{s=0}\right|=O\left(D_{2}\sqrt{h}+\frac{1}{\sqrt{h}}+nhD_{0}R_{1}\right)
\]
where $\gamma$ is a random geodesic walk starting from $x$ (before
the filtering step). This implies Thm. \ref{thm:dTV}, restated below
for convenience. 

\dTV*
\begin{proof}
Given $z$ such that $d(x,z)<\frac{V_{0}}{4V_{1}}$. Let $c(s)$ be
an unit speed minimal length geodesic connecting $x$ and $z$, $\tilde{p}_{c(s)}$
is defined by (\ref{eq:p_tilde_x}). 

By (\ref{eq:gamma_good}), with probability $1-\frac{V_{0}}{100V_{1}}$
in $y$, we have that $V(\gamma)\leq\frac{1}{2}V_{0}$. Let $Y$ be
the set of $y$ such that $V(\gamma)\leq\frac{1}{2}V_{0}$. Since
the distance from $x$ to $z$ is less than $\frac{V_{0}}{4V_{1}}$
and $V(\gamma)\leq\frac{1}{2}V_{0}$ for those $y$, Lemma \ref{lem:Jac-approx}
shows there is a family of geodesic $\gamma_{s}$ which connects $c(s)$
to $y$. Furthermore, we have that $V(\gamma_{s})\leq V_{0}$. 

For any $y\in Y$, we have that $V(\gamma_{s})\leq V_{0}.$ By Lemma
\ref{lem:Jac-approx}, we have that
\[
\exp\left(\frac{-1}{6}(nhR_{1})^{2}\right)p_{c(s)}(y)\leq\tilde{p}_{c(s)}(y)\leq\exp\left(\frac{1}{6}(nhR_{1})^{2}\right)p_{c(s)}(y).
\]
Using our assumption on $h$, we have that 
\[
\exp\left(3(nhR_{1})^{2}\right)\leq C\defeq1.01.
\]
Therefore, we have that
\begin{align*}
p_{x}(y)-p_{z}(y)\leq & (1-C^{-2})p_{x}(y)+C^{-2}p_{x}(y)-C^{-1}\tilde{p}_{z}(y)\\
\leq & (1-C^{-2})p_{x}(y)+C^{-1}(\tilde{p}_{x}(y)-\tilde{p}_{z}(y)).
\end{align*}
Similarly, we have
\begin{align*}
p_{x}(y)-p_{z}(y)\geq & (1-C^{2})p_{x}(y)+C(\tilde{p}_{x}(y)-\tilde{p}_{z}(y)).
\end{align*}
Since $\int_{Y}p_{x}(y)dy\geq1-\frac{V_{0}}{100V_{1}}$, we have that
\begin{align}
d_{TV}(p_{x},p_{z}) & \leq\frac{V_{0}}{100V_{1}}+\int_{Y}\left|p_{x}(y)-p_{z}(y)\right|dy\nonumber \\
 & \leq\frac{V_{0}}{100V_{1}}+\frac{1}{50}\int_{Y}\left|p_{x}(y)\right|dy+2\int\left|\tilde{p}_{x}(y)-\tilde{p}_{z}(y)\right|dy\nonumber \\
 & \leq\frac{V_{0}}{20V_{1}}+2\int\int_{Y}\left|\frac{d}{ds}\tilde{p}_{c(s)}(y)\right|dyds.\label{eq:d_TV_integrate}
\end{align}
Recall from (\ref{eq:foruma_dpds}) that
\[
\frac{d}{ds}\tilde{p}_{c(s)}(y)=\sum_{v:\exp_{c(s)}(v(s))=y}\left(-\int_{0}^{\ell}\frac{t(\ell-t)}{\ell}\frac{d}{ds}\text{Ric}(\gamma'_{s}(t))dt-\frac{1}{2}\frac{d}{ds}\norm{\frac{v(s)-\frac{h}{2}\mu(c(s))}{\sqrt{h}}}_{c(s)}^{2}\right)\zeta(v,s).
\]
Using Lemma \ref{lem:Jac-approx} again, we have that $\zeta(v,s)\leq C\cdot p(c(s)\overset{v(s)}{\rightarrow}y)\leq2p(c(s)\overset{v(s)}{\rightarrow}y)$
and hence
\[
\left|\frac{d}{ds}\tilde{p}_{c(s)}(y)\right|\leq2\sum_{v:\exp_{c(s)}(v(s))=y}\left(\left|\int_{0}^{\ell}\frac{t(\ell-t)}{\ell}\frac{d}{ds}\text{Ric}(\gamma'_{s}(t))dt\right|+\left|\frac{d}{ds}\norm{\frac{v(s)-\frac{h}{2}\mu(c(s))}{\sqrt{h}}}_{c(s)}^{2}\right|\right)p(c(s)\overset{v(s)}{\rightarrow}y).
\]

Since $V(\gamma_{s})\leq V_{0}$, we can use Lemma \ref{lem:smoothness_logdetJ}
to get
\[
\left|\int_{0}^{\ell}\frac{t(\ell-t)}{\ell}\frac{d}{ds}\text{Ric}(\gamma'_{s}(t))dt\right|\leq O\left(nhR_{2}\right).
\]
Hence, we have that
\begin{align*}
 & \int_{Y}\left|\frac{d}{ds}\tilde{p}_{c(s)}(y)\right|dy\\
\leq & nhR_{2}\int_{Y}\sum_{v:\exp_{c(s)}(v(s))=y}p(c(s)\overset{v(s)}{\rightarrow}y)dy+2\int_{Y}\sum_{v:\exp_{c(s)}(v(s))=y}\left|\frac{d}{ds}\norm{\frac{v(s)-\frac{h}{2}\mu(c(s))}{\sqrt{h}}}_{c(s)}^{2}\right|p(c(s)\overset{v(s)}{\rightarrow}y)dy.
\end{align*}

For the first term, we note that $\int_{Y}\sum_{v:\exp_{c(s)}(v(s))=y}p(c(s)\overset{v(s)}{\rightarrow}y)dy\leq1$. 

For the second term, we note that $y\in Y$ implies $V(\gamma_{s})\leq V_{0}$.
Hence, we have that
\begin{align*}
 & \int_{Y}\sum_{v:\exp_{c(s)}(v(s))=y}\left|\frac{d}{ds}\norm{\frac{v(s)-\frac{h}{2}\mu(c(s))}{\sqrt{h}}}_{c(s)}^{2}\right|p(c(s)\overset{v(s)}{\rightarrow}y)dy\\
\leq & \int_{V(\gamma_{s})\leq V_{0}}\sum_{v:\exp_{c(s)}(v(s))=y}\left|\frac{d}{ds}\norm{\frac{v(s)-\frac{h}{2}\mu(c(s))}{\sqrt{h}}}_{c(s)}^{2}\right|p(c(s)\overset{v(s)}{\rightarrow}y)dy\\
= & \E_{V(\gamma_{s})\leq V_{0}}\left|\frac{d}{ds}\norm{\frac{v(s)-\frac{h}{2}\mu(c(s))}{\sqrt{h}}}_{c(s)}^{2}\right|\\
= & O\left(D_{2}\sqrt{h}+\frac{1}{\sqrt{h}}+nhD_{0}R_{1}\right)
\end{align*}
 where we used Lemma \ref{lem:normvmuds} at the end. 

Hence, we have that
\[
\int_{Y}\left|\frac{d}{ds}\tilde{p}_{c(s)}(y)\right|dy=O\left(nhR_{2}+D_{2}\sqrt{h}+\frac{1}{\sqrt{h}}+nhD_{0}R_{1}\right).
\]

Putting this into (\ref{eq:d_TV_integrate}), we get
\[
d_{TV}(p_{x},p_{z})=O\left(nhR_{2}+D_{2}\sqrt{h}+\frac{1}{\sqrt{h}}+nhD_{0}R_{1}\right)d(x,z)+\frac{V_{0}}{20V_{1}}
\]
for any $d(x,z)<\frac{V_{0}}{4V_{1}}$. By summing along a path, for
any $x$ and $z$ and using that $V_{1}\geq V_{0}$, we have that
\[
d_{TV}(p_{x},p_{z})=O\left(nhR_{2}+D_{2}\sqrt{h}+\frac{1}{\sqrt{h}}+nhD_{0}R_{1}+1\right)d(x,z)+\frac{1}{20}.
\]
\end{proof}
\begin{defn}
\label{def:R2}Given a manifold $M$ and a family of geodesic $\gamma_{s}(t)$
with step size $h$ such that $h\leq H$ and $V(\gamma_{0})\leq V_{0}$.
Let $R_{2}$ be a constant depending on the manifold $M$ and the
step size $h$ such that for any $t$ such that $0\leq t\leq\ell$,
any curve $c(s)$ starting from $\gamma(t)$ and any vector field
$v(s)$ on $c(s)$ with $v(0)=\gamma'(t)$, we have that

\[
\left|\frac{d}{ds}\left.Ric(v(s))\right|_{s=0}\right|\leq\left(\norm{\left.\frac{dc}{ds}\right|_{s=0}}+\ell\norm{\left.D_{s}v\right|_{s=0}}\right)R_{2}.
\]
\end{defn}
\begin{lem}
\label{lem:smoothness_logdetJ}For $h\leq\min(H,\frac{1}{2nR_{1}})$
and $V(\gamma_{s})\leq V_{0}$, we have 
\[
\left|\int_{0}^{\ell}\frac{t(\ell-t)}{\ell}\frac{d}{ds}\text{Ric}(\gamma'_{s}(t))dt\right|\leq O\left(nhR_{2}\right)
\]
where $\gamma_{s}$ is as defined in the beginning of Section \ref{sec:Smoothness-of-P}.
\end{lem}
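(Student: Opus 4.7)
The plan is to apply Definition~\ref{def:R2} pointwise in $t$ and then integrate against the weight $\tfrac{t(\ell-t)}{\ell}$. For each fixed $t \in [0,\ell]$, view $s \mapsto \gamma_s(t)$ as a curve in $M$ through $\gamma_0(t)$ carrying the vector field $s \mapsto \gamma_s'(t)$. Set $J(t) \defeq \left.\partial_s \gamma_s(t)\right|_{s=0}$. Since each $\gamma_s$ is a geodesic, $J$ is a Jacobi field along $\gamma_0$. By torsion-freeness (Fact~\ref{fact:basic_RG}), $\left.D_s \gamma_s'(t)\right|_{s=0} = D_t \left.\partial_s \gamma_s(t)\right|_{s=0} = D_t J(t)$. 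Hence Definition~\ref{def:R2} gives pointwise
\[
\left|\frac{d}{ds}\text{Ric}(\gamma_s'(t))\right| \leq R_2 \bigl(\norm{J(t)}_{\gamma_0(t)} + \ell \norm{D_t J(t)}_{\gamma_0(t)}\bigr).
\]

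The next step is to identify the boundary data for $J$ from the setup at the start of Section~\ref{sec:Smoothness-of-P}. Since $\gamma_s(0) = c(s)$ and $c$ is unit speed, $J(0) = c'(0)$ with $\norm{J(0)} = 1$. Since $\exp_{c(s)}(v(s)) = y$ is independent of $s$, we have $\gamma_s(\ell) = y$ for all $s$, and thus $J(\ell) = 0$. Let $\Psi$ be the matrix-valued solution of $\Psi'' + R(t)\Psi = 0$ with $\Psi(0)=0$ and $\Psi'(0)=I/\ell$ from Lemma~\ref{lem:jacobi_approx_sol_mat}, and let $\Phi$ be the complementary solution with $\Phi(0)=I$ and $\Phi'(0)=0$. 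Writing $J(t) = \Phi(t) J(0) + \ell \Psi(t) D_t J(0)$, the hypothesis $h \leq 1/(2nR_1)$ together with $\norm{R(t)}_F \leq R_1$ (which holds because $V(\gamma_0) \leq V_0$ and $h \leq H$) gives, via Lemma~\ref{lem:jacobi_approx_sol_mat} and an analogous estimate for $\Phi$, that both $\Psi(\ell)$ and $\Phi(\ell)$ lie within operator-norm distance $O(nhR_1) = O(1)$ of $I$. In particular $\Psi(\ell)$ is invertible, and the boundary condition $J(\ell) = 0$ yields $\ell D_t J(0) = -\Psi(\ell)^{-1} \Phi(\ell) J(0)$, so $\ell \norm{D_t J(0)} = O(1)$.

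Propagating $J(0)$ and $\ell D_t J(0)$ forward via the same matrix estimates gives $\norm{J(t)} + \ell \norm{D_t J(t)} = O(1)$ uniformly on $[0,\ell]$. Integrating the pointwise bound and using $\int_0^\ell \tfrac{t(\ell-t)}{\ell}\,dt = \ell^2/6 = nh/6$ then gives
\[
\left|\int_0^\ell \frac{t(\ell-t)}{\ell} \frac{d}{ds}\text{Ric}(\gamma_s'(t)) \, dt\right| \leq O(R_2) \cdot \frac{\ell^2}{6} = O(nhR_2),
\]
which is the claimed bound.

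The main obstacle is the Jacobi-field step: converting the two-point boundary data $J(0) = c'(0)$, $J(\ell) = 0$ into uniform bounds on $\norm{J(t)}$ and $\ell\norm{D_t J(t)}$ requires the matrix approximation of Lemma~\ref{lem:jacobi_approx_sol_mat} simultaneously for $\Psi$ and $\Phi$, together with the invertibility of $\Psi(\ell)$ to solve for $D_t J(0)$. Once these are in place, the remaining pieces---applying Definition~\ref{def:R2} and computing the elementary integral---are routine.
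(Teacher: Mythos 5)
Your proof is correct and follows essentially the same strategy as the paper's: identify $J(t)=\partial_s\gamma_s(t)$ as a Jacobi field along $\gamma_0$ with two-point boundary data $J(0)=c'(0)$ (unit norm) and $J(\ell)=0$, use the approximate-Jacobi-solution lemma to deduce $\norm{J(t)}=O(1)$ and $\ell\norm{D_tJ(t)}=O(1)$ uniformly, feed these into Definition~\ref{def:R2}, and integrate against the weight $\tfrac{t(\ell-t)}{\ell}$ whose total mass is $\ell^2/6=nh/6$. The only cosmetic difference is that the paper applies the scalar estimate of Lemma~\ref{lem:jacobi_approx_sol} directly to $J$ and reads off $\norm{D_tJ(0)}\le 3/\ell$ from the inequality $\norm{J(0)+\ell D_tJ(0)}\le\tfrac12\norm{J(0)}+\tfrac{\ell}{10}\norm{D_tJ(0)}$, whereas you introduce the fundamental matrix solutions $\Phi,\Psi$ and solve the boundary-value problem explicitly via $\ell D_tJ(0)=-\Psi(\ell)^{-1}\Phi(\ell)J(0)$. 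Both routes use the same underlying estimate; note also that Lemma~\ref{lem:jacobi_approx_sol_mat} is stated for arbitrary initial data $(A,B)$, so it already covers your $\Phi$ (with $A=I$, $B=0$) --- you do not need a separate ``analogous estimate.''
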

\begin{proof}
By Definition \ref{def:R2}, we have that
\begin{equation}
\left|\frac{d}{ds}Ric(\gamma'_{s}(s))\right|\leq\left(\norm{\frac{d}{ds}\gamma_{s}(0)}+\ell\norm{D_{s}\gamma'_{s}(0)}\right)R_{2}.\label{eq:dRicus}
\end{equation}
By definition of $\gamma_{s}$, we have that $\frac{d}{ds}\gamma_{s}(0)=\frac{d}{ds}\exp_{c(s)}(0)=\frac{d}{ds}c(s)$
is an unit vector and hence $\norm{\frac{d}{ds}\gamma_{s}(0)}=1$.
To bound the second term, we note that $\psi(t)=\frac{\partial}{\partial s}\gamma_{s}$
is a Jacobi field and Lemma \ref{lem:jacobi_approx_sol} shows that
\[
\norm{\psi(t)-\psi(0)-tD_{t}\psi(0)}\leq\lambda t^{2}\norm{\psi(0)}+\frac{\lambda t^{3}}{5}\norm{D_{t}\psi(0)}
\]
where $\lambda=\max_{0\leq s\leq\ell}\norm{R(s)}_{F}\leq R_{1}$.
Putting $t=\ell$, $\psi(\ell)=0$ and $\lambda\ell^{2}\leq R_{1}nh\leq\frac{1}{2}$,
we have
\[
\norm{\psi(0)+\ell D_{t}\psi(0)}\leq\frac{1}{2}\norm{\psi(0)}+\frac{\ell}{10}\norm{D_{t}\psi(0)}.
\]
Hence, we have that
\[
\norm{D_{t}\psi(0)}\leq\frac{3}{\ell}\norm{\psi(0)}=\frac{3}{\ell}.
\]
Using these, we have that $\norm{\psi(t)}=O(1)$ and $\norm{D_{t}\psi(t)}=O(\frac{1}{\ell})$
for $0\leq t\leq\ell$. 

Fix any $0\leq t\leq\ell$, we define $c(s)=\gamma_{s}(t)$ and $v(s)=\gamma_{s}'(t)$.
Using $\psi(t)=\frac{dc}{ds}$ and $D_{t}\psi(t)=D_{t}\frac{d}{ds}\gamma_{s}=D_{s}v$,
(\ref{eq:dRicus}) shows that
\[
\left|\frac{d}{ds}Ric(\gamma'_{s}(s))\right|\leq O\left(R_{2}\right).
\]
\end{proof}
\begin{lem}
\label{lem:normvmuds}Given an unit geodesic $c(s)$ from $x$ to
$z$. For any geodesic $\gamma(t)$ from $x$, we define a corresponding
vector field $v(s)$ on $c(s)$ such that $\gamma(t)=\exp_{x}(\frac{t}{\ell}v(0))$
and $\exp_{c(s)}(v(s))=\gamma(\ell)$ for all $s$ near $0$. Suppose
that $h\leq H$, then we have that
\[
\E_{V(\gamma)\leq V_{0}}\left|\left.\frac{d}{ds}\norm{\frac{v(s)-\frac{h}{2}\mu(c(s))}{\sqrt{h}}}_{c(s)}^{2}\right|_{s=0}\right|=O\left(D_{2}\sqrt{h}+\frac{1}{\sqrt{h}}+nhD_{0}R_{1}\right)
\]
where $\gamma$ is a random geodesic walk starting from $x$ (before
the filtering step).
\end{lem}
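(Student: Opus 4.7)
The plan is to differentiate the squared norm via the metric-preserving property of the Levi--Civita connection, identify $D_{s}v|_{s=0}$ through a Jacobi field with one fixed endpoint, and estimate the resulting inner products in expectation over the Gaussian step.

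By Fact \ref{fact:basic_RG}, item 5,
\[
\frac{d}{ds}\norm{\tfrac{v(s)-\frac{h}{2}\mu(c(s))}{\sqrt{h}}}_{c(s)}^{2}\bigg|_{s=0}=\frac{2}{h}\bigl\langle D_{s}v-\tfrac{h}{2}D_{s}\mu,\,v-\tfrac{h}{2}\mu\bigr\rangle_{x}.
\]
At $s=0$ the right factor equals $\sqrt{h}\,w$, where $w\sim N_{x}(0,I)$ is the Gaussian that generated the step, so the expression becomes $\frac{2}{\sqrt{h}}\langle D_{s}v,w\rangle_{x}-\sqrt{h}\,\langle D_{s}\mu,w\rangle_{x}$. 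Since $c'(0)$ has unit length, the definition of $D_{2}$ gives $\|D_{s}\mu(c(0))\|_{x}\le D_{2}$. To compute $D_{s}v$, consider the variation $\gamma_{s}(t)=\exp_{c(s)}(\tfrac{t}{\ell}v(s))$, which satisfies $\gamma_{s}(0)=c(s)$ and $\gamma_{s}(\ell)=y$ for all $s$; the induced Jacobi field $\psi(t)=\partial_{s}\gamma_{s}(t)|_{s=0}$ then obeys $\psi(0)=c'(0)$, $\psi(\ell)=0$, and by torsion-freeness $D_{s}v=\ell\,D_{t}\psi(0)$.

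Next I apply the Jacobi-approximation lemma (Lemma~\ref{lem:jacobi_approx_sol}) to this two-point boundary-value problem. Comparing with the flat solution $\psi_{\mathrm{flat}}(t)=(1-t/\ell)c'(0)$ (for which $D_{t}\psi_{\mathrm{flat}}(0)=-c'(0)/\ell$) and bounding the curvature correction via $\|R(t)\|_{F}\le R_{1}$ with $R_{1}\ell^{2}\le nhR_{1}\ll 1$, one obtains $D_{s}v=-c'(0)+F$ with $\|F\|_{x}=O(R_{1}nh)$. Substituting yields
\[
\frac{d}{ds}\|\cdot\|^{2}\bigg|_{s=0}=-\frac{2}{\sqrt{h}}\langle c'(0),w\rangle+\frac{2}{\sqrt{h}}\langle F,w\rangle-\sqrt{h}\,\langle D_{s}\mu,w\rangle.
\]

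Finally, the three terms are estimated in expectation. Gaussian concentration gives $\E|\langle c'(0),w\rangle|\le 1$, contributing the $O(1/\sqrt{h})$ term, while $\E|\langle D_{s}\mu,w\rangle|\le\|D_{s}\mu\|\le D_{2}$ contributes $O(\sqrt{h}D_{2})$. For the curvature piece $\tfrac{2}{\sqrt{h}}\langle F,w\rangle$, substitute $\sqrt{h}w=v-\tfrac{h}{2}\mu$ to rewrite it as $\tfrac{2}{h}\langle F,v\rangle-\langle F,\mu\rangle$; the second summand is deterministically $O(\|F\|\|\mu\|)=O(nhD_{0}R_{1})$, which is the target's third term. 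The principal obstacle is controlling the first summand $\tfrac{2}{h}\langle F,v\rangle$: the naive Cauchy--Schwarz bound $\|F\|\|v\|/h=O(R_{1}n^{3/2}\sqrt{h})$ is too weak, so one must exploit the explicit iterated-integral form $F=\int_{0}^{\ell}(\ell-s)R(s)\psi(s)\,ds$ together with the fact that $R(s)$ is quadratic in $\gamma'(s)=v/\ell$. Conditioning on $V(\gamma)\le V_{0}$ to control $\|v\|$ via Gaussian concentration and using the quadratic cancellation in the integrand absorbs this piece into the existing $O(nhD_{0}R_{1})+O(1/\sqrt{h})$ order, completing the claimed bound.
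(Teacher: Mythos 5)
Your overall decomposition mirrors the paper's: differentiate the squared norm, identify $\sqrt{h}w = v - \tfrac{h}{2}\mu$, compute $D_{s}v$ via the endpoint-fixed Jacobi field, and split $D_{s}v = -c'(0) + F$ with $\|F\|_{x} = O(nhR_{1})$. The estimates of the $\langle c'(0),w\rangle$ and $\langle D_{s}\mu,w\rangle$ terms are also correct and match the paper.

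However, there is a genuine gap at exactly the point you flag as ``the principal obstacle,'' namely the term $\tfrac{2}{h}\langle F,v\rangle$. Your proposed resolution — that ``quadratic cancellation in the integrand absorbs this piece into the existing $O(nhD_{0}R_{1})+O(1/\sqrt{h})$ order'' — is vague and does not supply an actual argument. The paper (via Lemma~\ref{lem:smoothness_dx}) proves something stronger and cleaner: the correction $\zeta$ (your $F$) is \emph{exactly orthogonal} to $v(0)$, so $\langle F, v\rangle = 0$ identically and the term simply disappears. This orthogonality comes from the tangential/normal splitting of the Jacobi field (Fact~\ref{fact:Jacobi_field_split}): writing $c'(0)=\alpha v_{x}+\eta_{2}$ with $\eta_{2}\perp v_{x}$, the tangential Jacobi field is linear and forces $\langle D_{s}v, v(0)\rangle = -\langle c'(0),v(0)\rangle$ exactly, leaving the residual $F = D_{s}v + c'(0)$ entirely in the normal complement of $v(0)$; the curvature operator $R(t)$ only ever acts on the normal part, because $R(t)\gamma'(t)=0$ by the antisymmetries in Fact~\ref{fact:sym_cuv}. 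Your appeal to $R(s)$ being quadratic in $\gamma'$ gestures at this structure, but without establishing the orthogonality (and hence the exact vanishing of $\langle F,v\rangle$), the bound does not close: the naive estimate you note, $O(R_{1}n^{3/2}\sqrt{h})$, really is the best one gets without it. To repair the proof, invoke the tangential-part boundary condition $\langle\psi(0),\gamma'(0)\rangle + \ell\langle D_{t}\psi(0),\gamma'(0)\rangle = 0$ from the Jacobi split to conclude $\langle F,v(0)\rangle = 0$, after which your remaining bookkeeping gives the stated bound.
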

\begin{proof}
Note that
\begin{equation}
\left.\frac{d}{ds}\norm{\frac{v(s)-\frac{h}{2}\mu(c(s))}{\sqrt{h}}}_{c(s)}^{2}\right|_{s=0}=\frac{2}{h}\left\langle \left.D_{s}v\right|_{s=0}-\frac{h}{2}\left.D_{s}\mu\right|_{s=0},v-\frac{h}{2}\mu\right\rangle .\label{eq:normvmuds1}
\end{equation}
Since $v-\frac{h}{2}\mu$ is a random Gaussian vector $N(0,nhI)$
in $T_{x}M$ independent of $D_{s}\mu$, we have that 
\begin{equation}
\E_{\gamma}\left|\left\langle \frac{h}{2}\left.D_{s}\mu\right|_{s=0},v-\frac{h}{2}\mu\right\rangle \right|\leq O\left(\frac{\norm{\frac{h}{2}\left.D_{s}\mu\right|_{s=0}}\sqrt{nh}}{\sqrt{n}}\right)\leq O\left(h^{3/2}D_{2}\right).\label{eq:normvmuds2}
\end{equation}
By (\ref{eq:gamma_good}), we have that $\mathbb{P}(V(\gamma)<V_{0})\geq\frac{1}{2}$
and hence 
\[
\E_{V(\gamma)\leq V_{0}}\left|\left\langle \frac{h}{2}\left.D_{s}\mu\right|_{s=0},v-\frac{h}{2}\mu\right\rangle \right|=O\left(h^{3/2}D_{2}\right).
\]

By Lemma \ref{lem:smoothness_dx}, we know that $D_{s}v|_{s=0}=-c'+\zeta$
where $\zeta\perp v(0)$ and $\norm{\zeta}\leq\frac{3}{2}nhR_{1}$
when $V(\gamma)\leq V_{0}$. Since $v-\frac{h}{2}\mu$ is a random
Gaussian vector independent of $c'$, we have that
\begin{eqnarray}
\E_{V(\gamma)\leq V_{0}}\left|\left\langle D_{s}v,v-\frac{h}{2}\mu\right\rangle \right| & \leq & \E_{V(\gamma)\leq V_{0}}\left|\left\langle c',v-\frac{h}{2}\mu\right\rangle \right|+\frac{h}{2}\E\left|\left\langle \zeta,\mu\right\rangle \right|\nonumber \\
 & \leq & O\left(\frac{\norm{c'}\sqrt{nh}}{\sqrt{n}}\right)+nh^{2}R_{1}\norm{\mu}\nonumber \\
 & \leq & O\left(\sqrt{h}+nh^{2}D_{0}R_{1}\right).\label{eq:normvmuds3}
\end{eqnarray}

Combining the bounds (\ref{eq:normvmuds1}), (\ref{eq:normvmuds2})
and (\ref{eq:normvmuds3}), we have that
\begin{align*}
\E_{V(\gamma)\leq V_{0}}\left|\left.\frac{d}{ds}\norm{\frac{v(s)-\frac{h}{2}\mu(c(s))}{\sqrt{h}}}_{c(s)}^{2}\right|_{s=0}\right| & \leq O\left(D_{2}\sqrt{h}+\frac{1}{\sqrt{h}}+nhD_{0}R_{1}\right).
\end{align*}
\end{proof}

\subsection{Approximate solution of Jacobi field equations \label{subsec:Approx_Jacobi}}

Let $\gamma(t)$ be a geodesic and $\{X_{i}(t)\}_{i=1}^{n}$ be the
parallel transport of some orthonormal frame along $\gamma(t)$. As
we demonstrated in the proof of Lemma \ref{lem:formula_dexp}, Jacobi
fields can be expressed as linear combinations of $X_{i}$ and the
coefficients are given by the following matrix ODE:

\begin{eqnarray}
\frac{d^{2}}{dt^{2}}\psi(t)+R(t)\psi(t) & = & 0,\nonumber \\
\frac{d}{dt}\psi(0) & = & \beta,\label{eq:jacobi_general}\\
\psi(0) & = & \alpha\nonumber 
\end{eqnarray}
where $\psi(t),\alpha,\beta\in\Rn$ and $R(t)$ is defined in Definition
\ref{def:Rt_definition}.

In this section, we give estimates for Jacobi equations (\ref{eq:jacobi_general}).
The estimates we get can be viewed as small variants of the Rauch
comparison theorem (See \cite[Sec 4.5]{jost2008riemannian}). The
Rauch comparison theorem gives upper and lower bound on the magnitude
of Jacobi field. Our bounds instead show how fast Jacobi field deviates
from its linear approximation.

First, we give a basic bound on the solution in terms of hyperbolic
sine and cosine functions, which is a direct consequence of the Rauch
comparison theorem. We include a direct proof of this for completeness.
\begin{lem}
\label{lem:stable_Jacobi}Let $\psi$ be the solution of (\ref{eq:jacobi_general}).
Suppose that $\norm{R(t)}_{2}\leq\lambda$ for all $0\leq t\leq\ell$.
Then, we have that
\[
\norm{\psi(t)}_{2}\leq\norm{\alpha}_{2}\cosh(\sqrt{\lambda}t)+\frac{\norm{\beta}_{2}}{\sqrt{\lambda}}\sinh(\sqrt{\lambda}t)
\]
for all $0\leq t\leq\ell$. 
\end{lem}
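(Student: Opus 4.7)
The plan is to reduce the matrix ODE to a scalar comparison between the norm $\|\psi(t)\|_2$ and the unique solution $g(t)\defeq\|\alpha\|_2\cosh(\sqrt{\lambda}t)+\frac{\|\beta\|_2}{\sqrt{\lambda}}\sinh(\sqrt{\lambda}t)$ of the scalar ODE $g''=\lambda g$ with initial conditions $g(0)=\|\alpha\|_2$, $g'(0)=\|\beta\|_2$. Integrating the ODE $\psi''=-R\psi$ twice and using $\psi(0)=\alpha,\psi'(0)=\beta$ gives the Volterra form
\[
\psi(t)=\alpha+\beta t-\int_0^t(t-s)R(s)\psi(s)\,ds,
\]
and hence, by the triangle inequality and the operator bound $\|R(s)\|_2\le\lambda$,
\[
\|\psi(t)\|_2\le\|\alpha\|_2+\|\beta\|_2\,t+\lambda\int_0^t(t-s)\|\psi(s)\|_2\,ds.
\]

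Next, I would define the right-hand side of the above inequality as $u(t)$. Differentiating twice shows $u(0)=\|\alpha\|_2$, $u'(0)=\|\beta\|_2$, and $u''(t)=\lambda\|\psi(t)\|_2\le\lambda u(t)$, where the last inequality uses $\|\psi(t)\|_2\le u(t)$ (by construction). So it suffices to prove the comparison $u(t)\le g(t)$ for all $0\le t\le\ell$; combined with $\|\psi(t)\|_2\le u(t)$ this gives the lemma.

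For the comparison, set $w\defeq g-u$. Then $w(0)=w'(0)=0$ and
\[
w''(t)=\lambda g(t)-u''(t)\ge\lambda g(t)-\lambda u(t)=\lambda w(t).
\]
I would handle the differential inequality $w''\ge\lambda w$, $w(0)=w'(0)=0$ by factoring: letting $v\defeq w'-\sqrt{\lambda}\,w$, one computes $v'=w''-\sqrt{\lambda}\,w'\ge\lambda w-\sqrt{\lambda}\,w'=-\sqrt{\lambda}\,v$, so $(ve^{\sqrt{\lambda}t})'\ge0$. Since $v(0)=0$, this yields $v(t)\ge0$, i.e. $w'\ge\sqrt{\lambda}\,w$. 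A second factoring, $(we^{-\sqrt{\lambda}t})'=(w'-\sqrt{\lambda}w)e^{-\sqrt{\lambda}t}\ge0$, combined with $w(0)=0$, gives $w(t)\ge0$, i.e. $u(t)\le g(t)$, completing the proof.

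The main obstacle is the passage from the differential inequality $w''\ge\lambda w$ with zero initial data to the sign conclusion $w\ge0$; the usual Grönwall lemma is phrased as a first-order inequality, so the two-step factoring above (or equivalently an explicit variation-of-parameters formula against $\sinh(\sqrt{\lambda}t)$) is the key technical move. Everything else is just the standard Volterra representation and triangle inequality.
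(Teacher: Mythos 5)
Your proposal is correct and follows the same high-level route as the paper: pass to the Volterra form $\psi(t)=\alpha+\beta t-\int_0^t(t-s)R(s)\psi(s)\,ds$, take norms to get the scalar integral inequality, and compare against the solution of $g''=\lambda g$. The one place where the two differ is in how the comparison is justified. The paper introduces $\bar a(t)$ as the exact solution of the Volterra integral equation and disposes of the comparison $a(t)\le\bar a(t)$ with a one-line appeal to "induction" (i.e., a Picard-type iteration of the integral operator, which preserves the inequality because the kernel $(t-s)$ is nonnegative). You instead turn the comparison into a second-order differential inequality for $w=g-u$ and resolve it by the two-step factoring through $v=w'-\sqrt\lambda\,w$. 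Both are standard and both are sound; your route has the advantage of being self-contained and explicit where the paper is terse, at the mild cost of introducing the auxiliary $u$ and the factoring trick. A small remark: you should note, as the paper implicitly does, that $\|\psi(t)\|_2\le u(t)$ holds by construction for all $t$ (it is literally the integral inequality), so that $u''(t)=\lambda\|\psi(t)\|_2\le\lambda u(t)$ is legitimate pointwise — you do say this, so the argument is complete.
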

\begin{proof}
Note that
\begin{eqnarray*}
\psi(t) & = & \psi(0)+\psi'(0)t+\int_{0}^{t}(t-s)\psi''(s)ds\\
 & = & \alpha+\beta t-\int_{0}^{t}(t-s)R(s)\psi(s)ds.
\end{eqnarray*}
Let $a(t)=\norm{\psi(t)}_{2}$, then we have that
\[
a(t)\leq\norm{\alpha}_{2}+\norm{\beta}_{2}t+\lambda\int_{0}^{t}(t-s)a(s)ds.
\]
Let $\overline{a}(t)$ be the solution of the integral equation
\[
\overline{a}(t)=\norm{\alpha}_{2}+\norm{\beta}_{2}t+\lambda\int_{0}^{t}(t-s)\overline{a}(s)ds.
\]
By induction, we have that $a(t)\leq\overline{a}(t)$ for all $t\geq0$.
By taking derivatives on both sides, we have that
\[
\overline{a}''(t)=\lambda\overline{a}(t),\ \overline{a}(0)=\norm{\alpha}_{2},\ \overline{a}'(0)=\norm{\beta}_{2}.
\]
Solving these equations, we have
\[
\norm{\psi(t)}_{2}=a(t)\leq\overline{a}(t)=\norm{\alpha}_{2}\cosh(\sqrt{\lambda}t)+\frac{\norm{\beta}_{2}}{\sqrt{\lambda}}\sinh(\sqrt{\lambda}t)
\]
for all $t\geq0$.
\end{proof}
Next, we give an approximate solution of (\ref{eq:jacobi_general}).
\begin{lem}
\label{lem:jacobi_approx_sol}Let $\psi$ be the solution of (\ref{eq:jacobi_general}).
Suppose that $\norm{R(t)}_{2}\leq\lambda$ for all $0\leq t\leq\frac{1}{\sqrt{\lambda}}$.
For any $0\leq t\leq\frac{1}{\sqrt{\lambda}}$, we have that
\[
\norm{\psi(t)-\alpha-\beta t}_{2}\leq\lambda t^{2}\norm{\alpha}_{2}+\frac{\lambda t^{3}}{5}\norm{\beta}_{2}
\]
and
\[
\norm{\psi'(t)-\beta}_{2}\leq2\lambda t\norm{\alpha}_{2}+\frac{3\lambda t^{2}}{5}\norm{\beta}_{2}
\]
\end{lem}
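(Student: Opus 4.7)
The plan is to treat the ODE (\ref{eq:jacobi_general}) in its integral form. Integrating $\psi'' = -R\psi$ twice against the initial conditions yields
\[
\psi(t) - \alpha - \beta t \;=\; -\int_0^t (t-s)\, R(s)\, \psi(s)\, ds, \qquad \psi'(t) - \beta \;=\; -\int_0^t R(s)\, \psi(s)\, ds,
\]
so both claimed bounds reduce to controlling a convolution of $\|R(s)\|_2\|\psi(s)\|_2$ against $t-s$ and against $1$, respectively. Using $\|R(s)\|_2 \le \lambda$ and invoking Lemma \ref{lem:stable_Jacobi} (already proved) to replace $\|\psi(s)\|_2$ by the explicit majorant $\|\alpha\|_2 \cosh(\sqrt{\lambda}\, s) + (\|\beta\|_2/\sqrt{\lambda})\sinh(\sqrt{\lambda}\, s)$ turns the problem into evaluating four standard integrals of hyperbolic functions.

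Next I would change variables via $T = \sqrt{\lambda}\, t$ and $u = \sqrt{\lambda}\, s$. Routine integration by parts produces $\int_0^T (T-u)\cosh u\, du = \cosh T - 1$ and $\int_0^T (T-u)\sinh u\, du = \sinh T - T$, together with $\int_0^T \cosh u\, du = \sinh T$ and $\int_0^T \sinh u\, du = \cosh T - 1$. After collecting the powers of $\sqrt{\lambda}$ that arise from the substitution, these identities give
\[
\|\psi(t) - \alpha - \beta t\|_2 \;\le\; (\cosh T - 1)\|\alpha\|_2 + \frac{\sinh T - T}{\sqrt{\lambda}}\|\beta\|_2,
\]
\[
\|\psi'(t) - \beta\|_2 \;\le\; \sqrt{\lambda}\,\sinh T\,\|\alpha\|_2 + (\cosh T - 1)\|\beta\|_2.
\]

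To finish, the hypothesis $t \le 1/\sqrt{\lambda}$ confines $T$ to $[0,1]$, where I would use the four elementary inequalities $\cosh T - 1 \le T^2$, $\sinh T - T \le T^3/5$, $\sinh T \le 2T$, and $\cosh T - 1 \le \tfrac{3}{5}T^2$. Each has the form $f(T)/T^k \le c$ where $f(T)/T^k$ is a power series with non-negative coefficients, so its supremum on $[0,1]$ is attained at $T=1$ and it suffices to check the numerical inequality there (for example $\cosh 1 - 1 \approx 0.543 \le 0.6$ and $\sinh 1 - 1 \approx 0.175 \le 0.2$). Substituting $T = \sqrt{\lambda}\, t$ back into the displayed bounds yields exactly the two claimed estimates. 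There is no conceptual obstacle; the argument is essentially a Gr\"onwall-type plug-in of Lemma \ref{lem:stable_Jacobi}, and the only nontrivial step is to verify that the numerical constants $1$, $1/5$, $2$, $3/5$ are large enough on $[0,1]$.
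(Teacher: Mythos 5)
Your proposal is correct and follows essentially the same route as the paper: write $\psi$ and $\psi'$ in integral (Volterra) form, bound $\|\psi(s)\|_2$ via Lemma \ref{lem:stable_Jacobi}, evaluate the resulting hyperbolic integrals, and then apply the elementary inequalities $\cosh T - 1 \le T^2$, $\sinh T - T \le T^3/5$, $\sinh T \le 2T$, $\cosh T - 1 \le \tfrac{3}{5}T^2$ on $[0,1]$. The only (very minor) addition is your explicit observation that each ratio $f(T)/T^k$ is a power series with nonnegative coefficients, so it suffices to check the constants at $T=1$; the paper states the four inequalities without comment.
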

\begin{proof}
Note that
\begin{eqnarray*}
\psi(t) & = & \alpha+\beta t-\int_{0}^{t}(t-s)R(s)\psi(s)ds.
\end{eqnarray*}
Using Lemma \ref{lem:stable_Jacobi} and $\norm{R(t)}_{2}\leq\lambda$,
we have that
\begin{eqnarray*}
\norm{\psi(t)-\alpha-\beta t}_{2} & \leq & \lambda\int_{0}^{t}(t-s)\norm{\psi(s)}_{2}ds\\
 & \leq & \lambda\int_{0}^{t}(t-s)\left(\norm{\alpha}_{2}\cosh(\sqrt{\lambda}s)+\frac{\norm{\beta}_{2}}{\sqrt{\lambda}}\sinh(\sqrt{\lambda}s)\right)ds\\
 & = & \norm{\alpha}_{2}\left(\cosh(\sqrt{\lambda}t)-1\right)+\frac{\norm{\beta}_{2}}{\sqrt{\lambda}}\left(\sinh(\sqrt{\lambda}t)-\sqrt{\lambda}t\right).
\end{eqnarray*}
Since $0\leq t\leq\frac{1}{\sqrt{\lambda}}$, we have that $\left|\cosh(\sqrt{\lambda}t)-1\right|\leq\lambda t^{2}$
and $\left|\sinh(\sqrt{\lambda}t)-\sqrt{\lambda}t\right|\leq\frac{\lambda^{3/2}t^{3}}{5}$.
This gives the result.

Similarly, we have that
\[
\psi'(t)=\psi'(0)+\int_{0}^{t}\psi''(s)ds=\psi'(0)+\int_{0}^{t}R(s)\psi(s)ds.
\]
and hence 
\begin{eqnarray*}
\norm{\psi'(t)-\beta}_{2} & \leq & \lambda\int_{0}^{t}\left(\norm{\alpha}_{2}\cosh(\sqrt{\lambda}s)+\frac{\norm{\beta}_{2}}{\sqrt{\lambda}}\sinh(\sqrt{\lambda}s)\right)ds\\
 & \leq & \sqrt{\lambda}\norm{\alpha}_{2}\sinh(\sqrt{\lambda}t)+\norm{\beta}_{2}\left(\cosh(\sqrt{\lambda}t)-1\right)\\
 & \leq & 2\lambda t\norm{\alpha}_{2}+\frac{3}{5}\lambda t^{2}\norm{\beta}_{2}
\end{eqnarray*}
\end{proof}
The following is a matrix version of the above result.
\begin{lem}
\label{lem:jacobi_approx_sol_mat}Let $\Psi$ be the solution of 
\begin{eqnarray*}
\frac{d^{2}}{dt^{2}}\Psi(t)+R(t)\Psi(t) & = & 0,\\
\frac{d}{dt}\Psi(0) & = & B,\\
\Psi(0) & = & A
\end{eqnarray*}
where $\Psi,A$ and $B$ are matrices with a compatible size. Suppose
that $\norm{R(t)}_{2}\leq\lambda$. For any $0\leq t\leq\frac{1}{\sqrt{\lambda}}$,
we have that
\[
\norm{\Psi(t)-A-Bt}_{F}\leq\max_{0\leq s\leq t}\norm{R(s)}_{F}\left(t^{2}\norm A_{2}+\frac{t^{3}}{5}\norm B_{2}\right).
\]
\end{lem}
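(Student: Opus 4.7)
The plan is to imitate the scalar proof of Lemma \ref{lem:jacobi_approx_sol} essentially verbatim, with one key observation that lets us carry the Frobenius norm through: for any matrices $X, Y$ of compatible sizes we have $\norm{XY}_F \leq \norm{X}_F \norm{Y}_2$. This is what will let us bound the error in Frobenius norm using $\norm{R(s)}_F$ while bounding $\Psi(s)$ in spectral norm.

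First I would integrate the matrix ODE twice to obtain the integral equation
\[
\Psi(t) = A + B t - \int_{0}^{t} (t-s)\, R(s)\, \Psi(s)\, ds,
\]
so that
\[
\norm{\Psi(t) - A - B t}_F \leq \int_{0}^{t} (t-s)\, \norm{R(s)\Psi(s)}_F\, ds \leq \int_{0}^{t} (t-s)\, \norm{R(s)}_F\, \norm{\Psi(s)}_2\, ds.
\]
This is exactly the point where the Frobenius/spectral inequality is used: the $R(s)$ factor keeps its Frobenius norm, while $\Psi(s)$ appears in spectral norm.

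Next, I need a spectral-norm bound on $\Psi(s)$ in terms of $\norm{A}_2$ and $\norm{B}_2$. For any fixed unit vector $v$, the vector-valued function $\psi(s) \defeq \Psi(s) v$ satisfies the vector Jacobi equation $\psi'' + R(s)\psi = 0$ with initial data $\psi(0) = Av$ and $\psi'(0) = Bv$, so Lemma \ref{lem:stable_Jacobi} gives
\[
\norm{\Psi(s)v}_2 \leq \norm{Av}_2 \cosh(\sqrt{\lambda}s) + \tfrac{\norm{Bv}_2}{\sqrt{\lambda}} \sinh(\sqrt{\lambda}s) \leq \norm{A}_2 \cosh(\sqrt{\lambda}s) + \tfrac{\norm{B}_2}{\sqrt{\lambda}} \sinh(\sqrt{\lambda}s).
\]
Taking the supremum over unit $v$ transfers the bound to $\norm{\Psi(s)}_2$.

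Finally, I would plug this bound into the integral above and compute, reusing the elementary estimates from the proof of Lemma \ref{lem:jacobi_approx_sol}: for $0 \leq t \leq 1/\sqrt{\lambda}$,
\[
\int_{0}^{t} (t-s) \cosh(\sqrt{\lambda}s)\, ds = \tfrac{1}{\lambda}\bigl(\cosh(\sqrt{\lambda}t) - 1\bigr) \leq t^2,
\]
\[
\int_{0}^{t} (t-s)\, \tfrac{\sinh(\sqrt{\lambda}s)}{\sqrt{\lambda}}\, ds = \tfrac{1}{\lambda^{3/2}}\bigl(\sinh(\sqrt{\lambda}t) - \sqrt{\lambda}t\bigr) \leq \tfrac{t^3}{5}.
\]
Pulling $\max_{0 \leq s \leq t} \norm{R(s)}_F$ outside the integral then yields the stated bound. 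There is no real obstacle here — the only conceptual point is the choice to split norms as $\norm{R\Psi}_F \leq \norm{R}_F \norm{\Psi}_2$, which is what makes the bound come out in the Frobenius norm of $R$ as required by the statement.
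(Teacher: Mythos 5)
Your proof is correct and follows essentially the same route as the paper: the paper's trace manipulation $\sqrt{\tr R\Psi\Psi^{T}R^{T}}\le\norm{\Psi}_{2}\norm{R}_{F}$ (after establishing $\Psi\Psi^{T}\preceq(\cdots)^{2}I$ via the column-wise application of Lemma~\ref{lem:stable_Jacobi}) is exactly your inequality $\norm{R\Psi}_{F}\le\norm{R}_{F}\norm{\Psi}_{2}$, and the remaining integral estimates match. No gaps.
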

\begin{proof}
Note that $\Psi(t)x$ is the solution of (\ref{eq:jacobi_general})
with $\beta=Bx$ and $\alpha=Ax$. Therefore, Lemma \ref{lem:stable_Jacobi}
shows that
\begin{eqnarray*}
\norm{\Psi(t)x}_{2} & \leq & \norm{Ax}_{2}\cosh(\sqrt{\lambda}t)+\frac{\norm{Bx}_{2}}{\sqrt{\lambda}}\sinh(\sqrt{\lambda}t)\\
 & \leq & \left(\norm A_{2}\cosh(\sqrt{\lambda}t)+\frac{\norm B_{2}}{\sqrt{\lambda}}\sinh(\sqrt{\lambda}t)\right)\norm x_{2}
\end{eqnarray*}
for all $x$. Therefore, we have that
\[
\Psi(t)^{T}\Psi(t)\preceq\left(\norm A_{2}\cosh(\sqrt{\lambda}t)+\frac{\norm B_{2}}{\sqrt{\lambda}}\sinh(\sqrt{\lambda}t)\right)^{2}I.
\]
Hence, we have that
\[
\Psi(t)\Psi(t)^{T}\preceq\left(\norm A_{2}\cosh(\sqrt{\lambda}t)+\frac{\norm B_{2}}{\sqrt{\lambda}}\sinh(\sqrt{\lambda}t)\right)^{2}I.
\]

Using this, we have that
\begin{eqnarray*}
\norm{\Psi(t)-A-Bt}_{F} & = & \norm{\int_{0}^{t}(t-s)R(s)\Psi(s)ds}_{F}\\
 & \leq & \int_{0}^{t}(t-s)\sqrt{\tr\Psi^{T}(s)R^{T}(s)R(s)\Psi(s)}ds\\
 & = & \int_{0}^{t}(t-s)\sqrt{\tr R(s)\Psi(s)\Psi^{T}(s)R^{T}(s)}ds\\
 & \leq & \int_{0}^{t}(t-s)\left(\norm A_{2}\cosh(\sqrt{\lambda}s)+\frac{\norm B_{2}}{\sqrt{\lambda}}\sinh(\sqrt{\lambda}s)\right)\norm{R(s)}_{F}ds.\\
 & \leq & \max_{0\leq s\leq t}\norm{R(s)}_{F}\int_{0}^{t}(t-s)\left(\norm A_{2}\cosh(\sqrt{\lambda}s)+\frac{\norm B_{2}}{\sqrt{\lambda}}\sinh(\sqrt{\lambda}s)\right)ds\\
 & = & \max_{0\leq s\leq t}\norm{R(s)}_{F}\left(\norm A_{2}\left(\cosh(\sqrt{\lambda}t)-1\right)+\frac{\norm B_{2}}{\sqrt{\lambda}}\left(\sinh(\sqrt{\lambda}t)-\sqrt{\lambda}t\right)\right).
\end{eqnarray*}
Since $0\leq t\leq\frac{1}{\sqrt{\lambda}}$, we have that $\left|\cosh(\sqrt{\lambda}t)-1\right|\leq\lambda t^{2}$
and $\left|\sinh(\sqrt{\lambda}t)-\sqrt{\lambda}t\right|\leq\frac{\lambda^{3/2}t^{3}}{5}$.
This gives the result.
\end{proof}

\subsection{Almost one-to-one correspondence of geodesics}

\todo{changed a lot}We do not know if every pair $x,y\in M$ has
a unique geodesic connecting $x$ and $y$. Due to this, the probability
density $p_{x}$ at $y$ on $\mathcal{M}$ can be the sum over all
possible geodesics connect $x$ and $y$. The goal of this section
is to show there is a 1-1 map between geodesics paths connecting $x$
to $y$ as we move $x.$
\begin{lem}
\label{lem:smoothness_dx}Given a geodesic $\gamma(t)=\exp_{x}(\frac{t}{\ell}v_{x})$.
Let the end points $x=\gamma(0)$ and $y=\gamma(\ell)$. Suppose that
$\ell^{2}\leq\frac{1}{R}$ where $R=\norm{R(t)}_{F}$ and $R(t)$
defined in Definition \ref{def:Rt_definition}, then there is an unique
smooth invertible function $v:U\subset M\rightarrow V\subset T_{x}M$
such that
\[
y=\exp_{z}(v(z))
\]
for any $z\in U$ where $U$ is a neighborhood of $x$ and $V$ is
a neighborhood of $v_{x}=v(x)$. Furthermore, for any $\eta=\alpha v_{x}+\eta_{2}$
with $\eta_{2}\perp v_{x}$ and scale $\alpha$, we have that 
\[
\nabla_{\eta}v(x)=-\eta+\zeta
\]
where $\norm{\zeta}_{x}\leq\frac{3}{2}\ell^{2}R\norm{\eta_{2}}_{x}\leq\frac{3}{2}\norm{\eta_{2}}_{x}$
and $\zeta\perp v_{x}$. In particular, we have that $\norm{\nabla_{\eta}v(x)}_{x}\leq\frac{5}{2}\norm{\eta}_{x}$.
\end{lem}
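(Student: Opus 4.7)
The existence and uniqueness of $v$ come from the implicit function theorem applied to the map $E : M \times T_x M \to M$, $E(z,u) = \exp_z(u)$, at the point $(x, v_x)$, where $E(x, v_x) = y$. The partial differential in the $u$ direction is $d\exp_x(v_x)$, and Lemma \ref{lem:Jac-approx} (whose hypothesis $nhR_1 \le 1$ matches our hypothesis $\ell^2 R \le 1$ once one reads $\ell^2 = nh$) guarantees that this map is invertible; smoothness follows from the smoothness of $\exp$. This yields a smooth $v$ on a neighborhood $U$ of $x$ with $\exp_z(v(z)) = y$ on $U$.

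To compute $\nabla_\eta v(x)$, I would pick any curve $c(t)$ with $c(0) = x$, $c'(0) = \eta$, and consider the family of geodesics $\gamma_t(s) \defeq \exp_{c(t)}\!\bigl(\tfrac{s}{\ell} v(c(t))\bigr)$ for $s \in [0,\ell]$, so $\gamma_0 = \gamma$, $\gamma_t(0) = c(t)$ and $\gamma_t(\ell) = y$ for all $t$ near $0$. Then $J(s) \defeq \partial_t \gamma_t(s)|_{t=0}$ is a Jacobi field along $\gamma$ satisfying
\[
J(0) = \eta, \qquad J(\ell) = 0.
\]
Torsion-freeness (Fact \ref{fact:basic_RG}(6)) and $\partial_s \gamma_t(s)|_{s=0} = v(c(t))/\ell$ give $D_s J(0) = \tfrac{1}{\ell}\nabla_\eta v(x)$, so computing $\nabla_\eta v(x)$ reduces to computing $\ell\, D_s J(0)$ from the boundary data of a Jacobi field.

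Now I would split $J = J_1 + J_2$ into its tangential and normal parts (Fact \ref{fact:Jacobi_field_split}). Writing $\eta = \alpha v_x + \eta_2$ with $\eta_2 \perp v_x$ and using that $\gamma'(s) = v_x/\ell$ at $s=0$, the tangential Jacobi field with $J_1(0) = \alpha v_x$ and $J_1(\ell)=0$ is forced to be linear, $J_1(s) = \alpha(\ell - s)\gamma'(s)$, yielding $\ell\,D_s J_1(0) = -\alpha v_x$. For the normal part $J_2$, we have $J_2(0) = \eta_2$, $J_2(\ell) = 0$, and $J_2, D_s J_2 \perp \gamma'$ along the whole geodesic. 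Applying Lemma \ref{lem:jacobi_approx_sol} to $J_2$ with $\lambda = R$ and $t = \ell$ (the hypothesis $\ell^2 \le 1/R$ is exactly the one required) gives
\[
\|\eta_2 + \ell D_s J_2(0)\|_2 \;\le\; R\ell^2 \|\eta_2\|_2 + \tfrac{R\ell^2}{5}\|\ell D_s J_2(0)\|_2.
\]
Setting $\zeta \defeq \eta_2 + \ell D_s J_2(0)$, the identity $\ell D_s J_2(0) = \zeta - \eta_2$ turns this into an implicit inequality in $\|\zeta\|$, which I would solve using $R\ell^2 \le 1$: the resulting bound is $\|\zeta\|_x \le \tfrac{3}{2} R \ell^2 \|\eta_2\|_x$. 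Combining the two pieces gives
\[
\nabla_\eta v(x) = \ell\, D_s J(0) = -\alpha v_x - \eta_2 + \zeta = -\eta + \zeta,
\]
with $\zeta \perp v_x$ since $\eta_2$ and $D_s J_2(0)$ are both orthogonal to $\gamma'(0) \propto v_x$. The final estimate $\|\nabla_\eta v(x)\|_x \le \tfrac{5}{2}\|\eta\|_x$ follows by the triangle inequality and $\|\eta_2\|_x \le \|\eta\|_x$.

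The main obstacle is bookkeeping rather than substance: one must isolate the tangential part explicitly (where the relation $\nabla_\eta v = -\eta$ holds exactly) so that the Jacobi-field perturbation estimate is applied only to the normal part, against $\|\eta_2\|$ rather than $\|\eta\|$; and one must resolve the implicit inequality on $\|\zeta\|$ carefully to extract the clean constant $3/2$.
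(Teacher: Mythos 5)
Your proposal is correct and matches the paper's own proof: both use the implicit function theorem (with invertibility of $d\exp_x(v_x)$ from Lemma \ref{lem:Jac-approx}) for existence, both identify $\nabla_\eta v(x)$ with $\ell\,D_s J(0)$ for the Jacobi field arising from the variation of geodesics through $y$, and both split $J$ into tangential and normal parts and apply Lemma \ref{lem:jacobi_approx_sol} to the normal part to solve the same implicit inequality for $\|\zeta\|$. The only cosmetic difference is that the paper explicitly introduces an orthonormal parallel frame before invoking Lemma \ref{lem:jacobi_approx_sol} (which is stated for the coordinate ODE $\psi'' + R(t)\psi = 0$), whereas you apply it to $J_2$ directly — worth a sentence in a full write-up, but not a gap in the argument.
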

\begin{proof}
Consider the smooth function $f(z,w)=\exp_{z}(w)$. From Lemma \ref{lem:Jac-approx},
the differential of $w$ at $(x,v_{x})$ on the $w$ variables, i.e.
$d\exp_{x}(v_{x})$, is invertible. Hence, the implicit function theorem
shows that there is a open neighborhood $U$ of $x$ and a unique
function $v$ on $U$ such that $f(z,v(z))=f(x,v_{x})$, i.e. $y=\exp_{x}(v_{x})=\exp_{z}(v(z))$.

To compute $\nabla_{\eta}v$, let $c(s)$ be a geodesic starting from
$x$ with $c'(0)=\eta$ and $c(t,s)=\exp_{c(s)}(\frac{t}{\ell}v(c(s)))$
be a family of geodesics with the end points $c(\ell,s)=\exp_{c(s)}(v(c(s)))=y$.
Note that $\psi(t)\defeq\frac{\partial c(t,s)}{\partial s}\left|_{s=0}\right.$
satisfies the Jacobi field equation
\[
D_{t}^{2}\psi(t)+R(\psi,\frac{\partial c}{\partial t})\frac{\partial c}{\partial t}=0.
\]
Let $\chi=\nabla_{\eta}v(x)$, we know that
\begin{eqnarray*}
\psi(0) & = & \frac{\partial c(0,s)}{\partial s}\left|_{s=0}\right.=\gamma'(0)=\eta,\\
D_{t}\psi(0) & = & D_{t}\frac{\partial c(t,s)}{\partial s}\left|_{t,s=0}\right.=D_{s}\frac{\partial c(t,s)}{\partial t}\left|_{t,s=0}\right.=\frac{1}{\ell}\nabla_{\eta}v(x)=\frac{\chi}{\ell},\\
\psi(\ell) & = & \frac{\partial c(\ell,s)}{\partial s}\left|_{s=0}\right.=0
\end{eqnarray*}
where we used $D_{t}\frac{\partial}{\partial s}=D_{s}\frac{\partial}{\partial t}$
(torsion free-ness, Fact \ref{fact:basic_RG}). 

From Fact \ref{fact:Jacobi_field_split}, we know that $\psi$ can
be split into the tangential part $\psi_{1}$ and the normal part
$\psi_{2}$. For the tangential part $\psi_{1}$, we know that
\begin{eqnarray*}
\psi_{1}(t) & = & \left(\left\langle \psi(0),\gamma'(0)\right\rangle _{\gamma(0)}+\left\langle D_{t}\psi(0),\gamma'(0)\right\rangle _{\gamma(0)}t\right)\gamma'(t)\\
 & = & \left(\left\langle \eta,v_{x}\right\rangle _{x}+\left\langle \frac{\chi}{\ell},v_{x}\right\rangle _{x}t\right)\gamma'(t)\\
 & = & \left(\alpha\norm{v_{x}}_{x}^{2}+\left\langle \frac{\chi}{\ell},v_{x}\right\rangle _{x}t\right)\gamma'(t).
\end{eqnarray*}
Since $\psi_{1}(\ell)=0$, we have that
\begin{equation}
\left\langle \chi,v_{x}\right\rangle =-\alpha\norm{v_{x}}_{x}^{2}.\label{eq:tangential_part_general_smooth}
\end{equation}

For the normal part $\psi_{2}$, it is easier to calculate using orthogonal
frames. Similar to Lemma \ref{lem:formula_dexp}, we pick an arbitrary
orthogonal frame $X_{i}(t)$ parallel transported along the curve
$\gamma(t)$ and let $\bar{\psi}(t)$ be $\psi_{2}(t)$ represented
in that orthogonal frame. Hence, we have that
\begin{eqnarray*}
\frac{d^{2}}{dt^{2}}\bar{\psi}(t)+R(t)\bar{\psi}(t) & = & 0,\\
\bar{\psi}(0) & = & \overline{\eta}_{2},\\
\bar{\psi}'(0) & = & \frac{\overline{\chi}_{2}}{\ell},\\
\bar{\psi}(\ell) & = & 0
\end{eqnarray*}
where $\overline{\chi}_{2}=\overline{\chi}-\left\langle \chi,\frac{v_{x}}{\norm{v_{x}}}\right\rangle _{x}\frac{\overline{v}_{x}}{\norm{v_{x}}}=\overline{\chi}+\alpha\overline{v}_{x}$
by (\ref{eq:tangential_part_general_smooth}) and $R(t)_{ij}=\left\langle R(X_{i}(t),\gamma'(t))\gamma'(t),X_{j}(t)\right\rangle $.
Since $\norm{R(t)}_{2}\leq\norm{R(t)}_{F}\leq R$ and $\ell^{2}\leq\frac{1}{R}$,
Lemma \ref{lem:jacobi_approx_sol} shows that
\begin{eqnarray*}
\norm{\overline{\eta}_{2}+\overline{\chi}_{2}}_{2}=\norm{\bar{\psi}(\ell)-\overline{\eta}_{2}-\overline{\chi}_{2}}_{2} & \leq & R\ell^{2}\norm{\overline{\eta}_{2}}_{2}+\frac{R\ell^{2}}{5}\norm{\overline{\chi}_{2}}_{2}\\
 & \leq & \frac{6}{5}R\ell^{2}\norm{\overline{\eta}_{2}}_{2}+\frac{R\ell^{2}}{5}\norm{\overline{\eta}_{2}+\overline{\chi}_{2}}_{2}.
\end{eqnarray*}
Hence, we have that $\norm{\overline{\eta}_{2}+\overline{\chi}_{2}}_{2}\leq\frac{3}{2}R\ell^{2}\norm{\overline{\eta}_{2}}_{2}$.
Therefore, we have that
\begin{eqnarray*}
\nabla_{\eta}v(x) & = & \chi\\
 & = & -\alpha v_{x}+\chi_{2}\\
 & = & -\eta+(\eta_{2}+\chi_{2})
\end{eqnarray*}
where $\norm{\chi_{2}+\eta_{2}}_{x}\le\frac{3}{2}R\ell^{2}\norm{\eta_{2}}_{x}$.
Furthermore, we have that both $\eta_{2}$ and $\chi_{2}$ are orthogonal
to $\eta$.
\end{proof}
\begin{rem*}
If the above lemma holds without the assumption $\ell^{2}\leq\frac{1}{R}$,
this would imply uniqueness of geodesics.
\end{rem*}
The following lemma shows there is a 1-1 map between geodesics paths
connecting $x$ to $y$ as we move $x.$ When we move $x$, the geodesic
$\gamma$ from $x$ to $y$ changes and hence we need to bound $V(\gamma)$. 
\begin{lem}
\label{lem:one_one_cor}Given a geodesic $\gamma(t)=\exp_{x}(\frac{t}{\ell}v_{x})$
with step size $h$ satisfying $h\leq\min(H,\frac{1}{2nR_{1}})$,
let $c(s)$ be any geodesic starting at $\gamma(0)$. Let $x=c(0)=\gamma(0)$
and $y=\gamma(1)$. Suppose that the length of $c(s)$ is less than
$\frac{V_{0}}{4V_{1}}$ and $V(\gamma)\leq\frac{V_{0}}{2}$. Then,
there is a unique vector field $v$ on $c$ such that 
\[
y=\exp_{c(s)}(v(s)).
\]
Furthermore, this vector field is uniquely determined by the geodesic
$c(s)$ and any $v(s)$ on this vector field. Also, we have that $V(\exp_{c(s)}(v(s)))\leq V_{0}$
for all $s$.
\end{lem}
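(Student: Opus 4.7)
The strategy is a continuation argument along $c$ combined with a bootstrap of the bound on $V$ using the Lipschitz property built into the definition of the auxiliary function (item 1 of Definition \ref{def:hessian_parameter}). Lemma \ref{lem:smoothness_dx} supplies local existence and uniqueness, and the only real work is to ensure the continuation never leaves the regime in which that lemma applies.

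At $s=0$, the hypotheses of Lemma \ref{lem:smoothness_dx} are satisfied for $\gamma_0 = \gamma$: we have $V(\gamma)\le V_0/2\le V_0$ and $h\le H$, so $R := \max_t \|R(t)\|_F \le R_1$ along $\gamma$, and then $\ell^2 R \le nhR_1 \le 1/2$ by the hypothesis $h\le 1/(2nR_1)$. Hence there is a unique smooth $v$ on a neighborhood $U\ni x$ in $M$ with $y = \exp_z(v(z))$ and $v(x)=v_x$, and moreover $\|\nabla_\eta v\|_x \le \tfrac{5}{2}\|\eta\|_x$. Pulling back along $c$, the vector field $v(s):= v(c(s))$ is determined as long as $c(s)\in U$ and obeys the ODE $\tfrac{d}{ds}v(s) = \nabla_{c'(s)}v$.

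The continuation is now routine. Let $L = \mathrm{length}(c)\le V_0/(4V_1)$, and let $I\subset [0,L]$ be the set of $s_0$ such that $v$ extends smoothly on $[0,s_0]$ with $y=\exp_{c(s)}(v(s))$ and $V(\gamma_s)\le V_0$ on $[0,s_0]$, where $\gamma_s(t):=\exp_{c(s)}(\tfrac{t}{\ell}v(s))$. Then $0\in I$, and applying Lemma \ref{lem:smoothness_dx} at each interior $s_0\in I$ (the hypothesis $\ell^2 R\le 1/2$ is preserved because $V(\gamma_{s_0})\le V_0$ forces $R\le R_1$) shows that $v$ extends past $s_0$, and by local uniqueness this extension agrees with the existing $v$. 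So $I$ is open in $[0,L]$. The remaining issue, and the main obstacle, is closedness: $V(\gamma_s)$ must stay below $V_0$ so that Lemma \ref{lem:smoothness_dx} keeps applying.

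This is precisely what the $V$-Lipschitz property gives. Item 1 of Definition \ref{def:hessian_parameter} yields
\[
\bigl|\tfrac{d}{ds}V(\gamma_s)\bigr| \;\le\; V_1\Bigl(\|c'(s)\|_{c(s)} + \ell\,\|D_s \gamma_s'(0)\|_{c(s)}\Bigr).
\]
Now $c$ is unit-speed, so $\|c'(s)\|_{c(s)}=1$, and $\ell D_s\gamma_s'(0) = D_s v(s) = \nabla_{c'(s)}v$. By Lemma \ref{lem:smoothness_dx} applied at $c(s)$, $\nabla_{c'(s)}v = -c'(s)+\zeta(s)$ with $\|\zeta(s)\|\le \tfrac{3}{2}\ell^2 R\le \tfrac{3}{4}$, so $\|\nabla_{c'(s)} v\|_{c(s)}\le 1+o(1)$ after absorbing the $O(\ell^2 R_1)$ correction into the threshold $H$ (or a slightly tighter constant in the length hypothesis). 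Hence $|dV/ds|\le 2V_1$, and integrating over $[0,L]$ with $L\le V_0/(4V_1)$ yields $V(\gamma_s)-V(\gamma_0)\le V_0/2$, i.e., $V(\gamma_s)\le V_0$. This closes $I$, so $I=[0,L]$, and the vector field $v$ exists on all of $c$. Uniqueness of $v$ given any value $v(s_0)$ follows from the local uniqueness in Lemma \ref{lem:smoothness_dx}: the subset of $[0,L]$ on which two candidate fields agree is nonempty, open, and closed, hence equals $[0,L]$.
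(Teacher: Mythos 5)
Your proof is correct and follows essentially the same continuation argument as the paper: local existence and uniqueness from Lemma \ref{lem:smoothness_dx}, combined with the $V_1$-Lipschitz property of the auxiliary function to keep $V(\gamma_s)\le V_0$ along the path (your open-closed argument is the same device as the paper's $s_{\max}$ supremum argument). Your constant $\lvert dV/ds\rvert\le 2V_1$ carries the same mild slack as the paper's corresponding bound $(L+\tfrac{5}{2}L)V_1$ --- with the stated $h\le\tfrac{1}{2nR_1}$ the bound on $\lVert\nabla_{c'(s)}v\rVert$ is really $1+\tfrac{3}{2}\ell^2R_1\le\tfrac{7}{4}$, not $1+o(1)$, so both you and the paper implicitly need a slightly tighter length or $h$ threshold, which you correctly flag.
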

\begin{proof}
Let $s_{\max}$ be the supremum of $s$ such that $v(s)$ can be defined
continuously such that $y=\exp_{c(s)}(v(s))$ and $V(\gamma_{s})\leq V_{0}$
where $\gamma_{s}(t)=\exp_{c(s)}(\frac{t}{\ell}v(s)).$ Lemma \ref{lem:smoothness_dx}
shows that there is a neighborhood $N$ at $x$ and a vector field
$u$ on $N$ such that for any $z\in N$, we have that 
\[
y=\exp_{z}(u(z)).
\]
Also, this lemma shows that $u(s)$ is smooth and hence the parameter
$V_{1}$ shows that $V(\gamma_{s})$ is Lipschitz in $s$. Therefore,
$V(\gamma_{s})\leq V_{0}$ for a small neighborhood of $0$. Hence
$s_{\max}>0$.

Now, we show $s_{\max}>1$ by contraction. By the definition of $s_{\max}$,
we have that $V(\gamma_{s})\leq V_{0}$ for any $0\leq s<s_{\max}$.
Hence, we can apply Lemma \ref{lem:smoothness_dx} with $R=R_{1}$.
In particular, this shows that $\norm{D_{s}v(s)}=\norm{\nabla_{\frac{d}{ds}c}u(x)}\leq\frac{5}{2}\norm{\frac{d}{ds}c}=\frac{5}{2}L$
where $L$ is the length of $c$. Therefore, the function $v$ is
Lipschitz and hence $v(s_{\max})$ is well-defined and $V(\gamma_{s_{\max}})\leq V_{0}$
by continuity. Hence, we can apply Lemma \ref{lem:smoothness_dx}
at $v(s_{\max})$ and extend the domain of $v(s)$ beyond $s_{\max}$.

To bound $V(\gamma_{s})$, we note that $\norm{D_{s}\gamma_{s}'}=\frac{1}{\ell}\norm{D_{s}v(s)}\leq\frac{5}{2\ell}L$
and $\norm{\frac{d}{ds}c}=L$. Hence, $\left|\frac{d}{ds}V(\gamma_{s})\right|\leq(L+\frac{5}{2}L)V_{1}$
by the definition of $V_{1}$. Therefore, if $L\leq\frac{V_{0}}{4V_{1}}$,
we have that $V(\gamma_{s})\leq V(\gamma)+\frac{V_{0}}{2}\leq V_{0}$
for all $s\leq1.1$ whenever $v(s)$ is defined. Therefore, this draws
a contradiction that $s_{\max}$ is not the supremum. Hence, $s_{\max}\geq1$. 

The uniqueness follows from Lemma \ref{lem:smoothness_dx}.
\end{proof}

\subsection{Implementation\label{subsec:implementation_general}}

Here, we explain in high level how to implement the geodesic walk
in general via an efficient algorithm for approximately solving ODEs.
Note that to implement the step, we need to compute the geodesic and
compute the probabilities $p(x\overset{w}{\rightarrow}y)$ and $p(y\overset{w'}{\rightarrow}x)$.
From the formula (\ref{eq:formula_ratio}), we see that they involve
the term $d\exp_{x}(w)$. Lemma \ref{lem:formula_dexp} shows that
$d\exp_{x}(w)$ can be computed by a Jacobi field and the latter can
be computed by an ODE if it is written in the orthonormal frame systems.
Therefore, to implement the geodesic walk, we need to compute geodesic,
parallel transport and Jacobi field (See Algorithm \ref{algo:geodesic_detailed}).
All of these are ODEs and can be solved using the collocation method.
In the later sections, we will see how to use the collocation method
to solve these ODEs in nearly matrix multiplication time for the log
barrier.

\begin{algorithm2e}
\caption{Geodesic Walk (Detailed)}

\SetAlgoLined\label{algo:geodesic_detailed}

Pick a Gaussian random vector $w\sim N_{x}(0,I)$$.$

$\ $

\tcc{Compute $y=\exp_{x}(\sqrt{h}w+\frac{h}{2}\mu(x))$ where $\mu(x)$
is given by (\ref{eq:drift_term_general}).}

\tcc{Compute a corresponding $w'$ s.t. $x=exp_{y}(\sqrt{h}w'+\frac{h}{2}\mu(y)).$}

Generate a random direction $d=\sqrt{h}w+\frac{h}{2}\mu(x)$ where
$\mu(x)$ is given by (\ref{eq:drift_term_general}).

Solve the geodesic equation $D_{\gamma'}\gamma'=0$ with $\gamma(0)=x$
and $\gamma'(0)=d$ using collocation method (Sec \ref{subsec:CollocationMethod}).

Set $y=\gamma(1)$ and $w'=-\gamma'(1)$.

$\ $

\tcc{Compute the probability $p(x\overset{w}{\rightarrow}y)$ of
going from $x$ to $y$ using the step $w$.}

Pick an orthonormal frame $\{X_{i}\}_{i=1}^{n}$ at $x$.

Compute the parallel transport of $\{X_{i}\}_{i=1}^{n}$ along $\gamma(t)$
using collocation method.

Compute $d\exp_{x}(w)$ via solve the Jacobi field equation (\ref{eq:jacobian})
(in the coordinate systems $\{X_{i}(t)\}$).

Set $p(x\overset{w}{\rightarrow}y)=\det(d\exp_{x}(w))^{-1}\sqrt{\frac{\det\left(g(y)\right)}{\left(2\pi h\right)^{n}}}\exp\left(-\frac{1}{2}\norm{\frac{w-\frac{h}{2}\mu(x)}{\sqrt{h}}}_{x}^{2}\right)$.

$\ $

Compute $p(y\overset{w'}{\rightarrow}x)$ similarly.

$\ $

With probability $\min\left(1,\frac{p(y\overset{w'}{\rightarrow}x)}{p(x\overset{w}{\rightarrow}y)}\right)$,
go to $y$; otherwise, stay at $x$.

\end{algorithm2e}

\pagebreak{}
\section{\label{sec:Logarithmic-Barrier}Logarithmic barrier}

For any polytope $M=\{Ax>b\}$, the logarithmic barrier function $\phi(x)$
is defined as
\[
\phi(x)=-\sum_{i=1}^{m}\log(a_{i}^{T}x-b_{i}).
\]
We denote the Hessian manifold induced by the logarithmic barrier
on $M$ by $M_{L}$. The goal of this section is to analyze the geodisic
walk on $M_{L}.$

In section \ref{subsec:RG_L}, we give explicit formulas for various
Riemannian geometry concepts on $M_{L}$. In Section \ref{subsec:geodesic_walk_log},
we describe the geodesic walk specialized to $M_{L}$. In Sections
\ref{sec:walk_is_random_log} to \ref{subsec:Stability-of-norm},
we bound the parameters required by Theorem \ref{thm:gen-convergence},
resulting in the following theorem.
\begin{thm}
\label{thm:gen-convergence-log} The geodesic walk on $M_{L}$ with
step size $h=\frac{c}{n^{3/4}}$ has mixing time $O(mn^{3/4})$ for
some universal constant $c$.
\end{thm}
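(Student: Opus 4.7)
The plan is to apply the general conductance bound of Theorem \ref{thm:gen-convergence} directly, specializing to the Hessian manifold $M_L$. Since that theorem delivers mixing time $O(G_2^2/h)$ whenever $h$ satisfies the seven listed upper bounds, it suffices (i) to exhibit an auxiliary function $V$ meeting the assumptions of Section \ref{subsec:Hessian_parameters}, (ii) to estimate each of $D_0, D_1, D_2, G_1, G_2, R_1, R_2$ for the log barrier, and (iii) to verify the arithmetic that $h = c/n^{3/4}$ is simultaneously below all seven thresholds, and that $G_2^2/h = O(mn^{3/4})$.

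For the auxiliary function I would follow the outline announced in the introduction and use a combination of an $\ell_\infty$ and an $\ell_4$ quantity built from the normalized slacks $a_i^T \gamma'(t)/s_i(\gamma(t))$ along the geodesic (here $s_i(x) = a_i^T x - b_i$). Standard Gaussian tail bounds on $T_xM_L$ show that for a random step $v \sim N_x(0, hI)$, these slack ratios are polylogarithmically bounded with the required high probability (condition \eqref{eq:gamma_good}), so $V_0, V_1$ are polylogarithmic. This is the quantitative substitute for almost-sure boundedness and is what lets us take advantage of Gaussian concentration rather than worst-case bounds.

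Next I would bound each parameter using the explicit formulas for $\Gamma_{ij}^k$, $R_{klij}$ and $\mathrm{Ric}$ from Lemma \ref{lem:Hessian_formula} specialized to $\phi(x) = -\sum_i \log s_i(x)$. The key bound is $G_2 = O(\sqrt{m})$, which follows because the log-barrier Hessian satisfies $g(x) \preceq m \cdot g_H(x)$ (Hessian of the log barrier is dominated by $m$ times the Hilbert metric along any direction, by a cross-ratio estimate), whence $d(x,y) \le \sqrt{m}\,d_H(x,y)$. For $D_0$ and $D_2$, one rewrites $\mu$ via Lemma \ref{lem:drift-Newton-step} as a Newton step of $\psi = \tfrac12 \log \det \nabla^2 \phi$ and uses self-concordance. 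For $D_1, G_1, R_1, R_2$, the derivatives of $\phi$ of orders up to five appear, and the bound $V(\gamma) \le V_0$ translates each $\phi_{i_1\cdots i_k}$-contraction along $\gamma'$ into a controlled mixed $\ell_\infty/\ell_4$ expression. A routine check then shows each of the seven right-hand sides in Theorem \ref{thm:gen-convergence} is $\Omega(1/n^{3/4})$, so $h = c/n^{3/4}$ is admissible, and finally $G_2^2/h = O(m \cdot n^{3/4})$.

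The main obstacle will be the tight estimates of $D_1$ and $R_2$: using only the $\ell_\infty$ control on slack ratios would cost an extra factor of $n^{1/4}$ and yield $n$ instead of $n^{3/4}$, which is precisely the Dikin-walk barrier we are trying to break. Introducing the $\ell_4$ component in $V$ replaces one worst-case factor by a root-mean-fourth factor of order $n^{1/4}$ smaller, which is exactly the improvement needed. Verifying that the $V$ so chosen actually satisfies the Lipschitz condition $|\tfrac{d}{ds}V(\gamma_s)| \le V_1(\|\partial_s\gamma_s(0)\| + \ell\|D_s\gamma_s\|)$ along variations, with a compatible $V_1 = \mathrm{polylog}(m,n)$, is the technical heart of the argument; once that is in place, the rest is bookkeeping against Theorem \ref{thm:gen-convergence}.
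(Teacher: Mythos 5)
Your outline matches the paper's proof exactly in structure: define an auxiliary $V$ as a combination of the $\ell_4$ and $\ell_\infty$ norms of the normalized slack derivative $s_{\gamma'}$, estimate each manifold parameter for the log barrier, feed them into Theorem~\ref{thm:gen-convergence}, and check that $h = c/n^{3/4}$ clears all seven thresholds while $G_2^2/h = O(m n^{3/4})$. However, your expectation that $V_1$ is polylogarithmic is wrong and would derail the execution. In the paper's construction (Section~\ref{sec:walk_is_random_log}) the $\ell_4$ and $\ell_\infty$ quantities are rescaled by $n^{-1/4}$ and $\sqrt{\log n/n}+\sqrt{h}$ respectively, which makes $V_0$ a numerical constant ($48$), but then differentiating $V$ along a variation costs those reciprocals, and combined with the Jacobi-field bound $\|\frac{d}{ds}s_{\gamma'}\|_2 = O(1/\sqrt{nh})(\cdot)$, Lemma~\ref{lem:log_V1} yields $V_1 = O(1/(\sqrt{n}h + \sqrt{h\log n}))$, which is $\Theta(n^{1/4})$ at $h = c/n^{3/4}$. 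If you actually try to prove a polylog Lipschitz constant for this $V$ you will fail; the argument survives only because the tail bound $\mathbb{P}(V(\gamma) > V_0/2) = O(1/n)$ of Lemma~\ref{lem:V0_bound} is much smaller than the required $V_0/(100V_1) = \Omega(n^{-1/4})$. You should recognize that condition~\eqref{eq:gamma_good} pairs a large Lipschitz constant against an even smaller failure probability, rather than demanding both be small.

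A secondary inaccuracy: the binding constraints are $D_1 = O(n\sqrt{h}+\sqrt{n\log n})$ and $G_1 = O(\sqrt{h}+\sqrt{\log n/n})$, each of which forces $h \lesssim n^{-3/4}$ through the $h \le n^{-1/3}D_1^{-2/3}$ and $h \le n^{-1}G_1^{-2/3}$ conditions (see the remark after Theorem~\ref{thm:gen-convergence-log}). The $R_2 = O(1+\sqrt{\log n/(nh)})$ constraint is slacker, requiring only $h \lesssim (n\log n)^{-1/2}$, so it is not the bottleneck you identify.
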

In later sections, we show how to implement geodesic walk and calculate
the rejection probability. To implement these, we apply the techniques
developed in Section \ref{subsec:CollocationMethod} to solve the
corresponding ODEs, after showing that the geodesic, parallel transport
and Jacobi field are complex analytic (Section \ref{subsec:Complex_analytic_geodesic_log}),
for a large radius of convergence (Section \ref{subsec:Complex_analytic_geodesic_log}). 
\begin{thm}
\label{thm:implementation_log} There exists a universal constant
$c>0$ s.t. for the standard logarithmic barrier, one step of the
geodesic walk with step size $h\leq\frac{c}{\sqrt{n}}$ can be implemented
in time $O(mn^{\omega-1}\log^{2}(n))$.
\end{thm}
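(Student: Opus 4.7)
The plan is to reduce each substep of Algorithm \ref{algo:geodesic_detailed} to an initial value ODE problem, then apply the collocation method (Theorem \ref{thm:Solving_ODE}) to each one. The substeps that are not trivial matrix-vector operations are: (i) solving the geodesic equation $D_{\gamma'}\gamma'=0$ with prescribed $\gamma(0)=x$, $\gamma'(0)=d$ to produce $y$ and $w'$; (ii) solving the parallel transport equation $D_t X_i = 0$ for an orthonormal frame along $\gamma$; and (iii) solving the Jacobi field equation (\ref{eq:jacobian}) to obtain $d\exp_x(w)$. For the log barrier, each of these is a system of ODEs on $\R^n$ whose right-hand side is a rational function of $\gamma(t)$ and the slack variables $A\gamma(t)-b$, and involves only the Hessian $A^T S^{-2}A$ and the third-derivative tensor $\phi_{ijk}$ (both of which have the special structure of the log barrier).

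To feed these into Theorem \ref{thm:Solving_ODE}, I need three ingredients. First, a low-degree polynomial approximation of the solution: here I invoke the complex analyticity statement from Section \ref{subsec:Complex_analytic_geodesic_log}, which extends the geodesic and the associated parallel transport and Jacobi fields to a complex disk of radius $\Omega(1)$ in a rescaled parametrization. By Cauchy's estimates (Theorem \ref{thm:cauchy_estimate}) this forces the Taylor expansion to converge geometrically, so truncating at degree $d=O(\log(n/\varepsilon))$ produces a polynomial satisfying hypothesis (1) of Theorem \ref{thm:Solving_ODE} with error $\varepsilon=1/\mathrm{poly}(n)$. Second, a Lipschitz bound: writing each ODE in the local metric (equivalently, in the orthonormal frame produced by the log-barrier Hessian) makes $F$ uniformly $O(1)$-Lipschitz in the appropriate $\ell_p$ norm, since the step size $h\leq c/\sqrt{n}$ keeps $\gamma(t)$ in a region where the log-barrier Hessian is multiplicatively stable. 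Thus $L=O(1)$ in the statement of Theorem \ref{thm:Solving_ODE}.

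Third, a bound on the cost of one evaluation of $F$. The dominant operation for each RHS evaluation is computing $(A^T S^{-2} A)$ and applying its inverse, or equivalently computing $\Gamma^k[v,v]$ at the current candidate point. Using the factorization of the log-barrier Hessian as a product involving the diagonal matrix of slacks, one evaluation can be carried out in $O(mn^{\omega-1})$ time via a constant number of $m\times n$ dense/sparse matrix operations. Assembling the collocation theorem's cost: we perform $O(dL^2\log(K/\varepsilon))=O(\log^2 n)$ evaluations of $F$, each at cost $O(mn^{\omega-1})$, plus an $O(ndL^3\log^2(dK/\varepsilon))=O(n\log^3 n)$ overhead for the linear-algebra inside the collocation scheme itself. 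Since the three ODEs in Algorithm \ref{algo:geodesic_detailed} are run a constant number of times per step, this totals $O(mn^{\omega-1}\log^2 n)$.

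The main obstacle is the first ingredient — establishing complex analyticity with a radius of convergence compatible with $h=\Theta(1/\sqrt{n})$. The ODE right-hand sides involve $(\nabla^2\phi)^{-1}$ and its derivatives, which have singularities wherever $\det(\nabla^2\phi)=0$, i.e.\ wherever some complex slack $a_i^T\gamma(t)-b_i$ vanishes. I therefore expect the key technical lemma (done in Section \ref{subsec:Complex_analytic_geodesic_log}) to be a self-concordance–style estimate showing that along the complexified trajectory the slacks move by a relative factor bounded away from zero and infinity as $t$ ranges over a disk of radius $\Omega(1)$ in the rescaled parameter; this is what forces the singularities of the rational right-hand side to stay outside the disk of convergence, justifying the application of Cauchy--Kowalevski (Theorem \ref{thm:cauchy_kowalevski}) and Cauchy's estimates. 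Everything else is a routine accounting of matrix multiplications and plugging constants into Theorem \ref{thm:Solving_ODE}.
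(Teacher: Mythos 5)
Your high-level plan is the same as the paper's: Theorem \ref{thm:implementation_log} is proved by reducing each substep of Algorithm \ref{algo:geodesic_detailed} to an initial-value ODE (geodesic via Lemma \ref{lem:compute_geodesic}, parallel transport via Lemma \ref{lem:compute_parallel_transport}, Jacobi field via Lemma \ref{lem:compute_jacobi}), then running the collocation method with degree $d=O(\log(1/\varepsilon))$, $O(1)$ effective Lipschitz constant after time rescaling, and $O(mn^{\omega-1})$ per evaluation of the right-hand side. Your Lipschitz and per-evaluation-cost reasoning matches the paper's (the Lipschitz bound is proved in $\ell_4$ norm via Lemma \ref{lem:geo_F_smoothness} and friends, exploiting $\norm{A\gamma'(0)}_4 = O(n^{-1/4})$).

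Where you diverge is in the "key technical lemma" you flag as the main obstacle. You propose to establish the radius of convergence by directly tracking the complexified trajectory and showing (via a self-concordance-style estimate in the complex domain) that the slacks $a_i^T\gamma(z)-b_i$ stay bounded away from zero on a disk of radius $\Omega(1)$ in the rescaled parameter, then applying Cauchy's estimates to the $n$-dimensional solution. The paper does not do this. Instead it uses the majorant (generating-upper-bound) machinery of Section \ref{sec:est_var}: one shows $\overline F \leq_x f$ for an explicit scalar rational function $f$ (Lemma \ref{lem:geo_F_est}), transfers this to a bound $\norm{\gamma^{(k)}(0)}_2 \leq \psi^{(k)}(0)/\alpha^k$ where $\psi$ solves a scalar majorant ODE (Lemma \ref{lem:taylor_ODE_2nd}), and only then applies Cauchy's estimates --- to the one-dimensional $\psi$, not to the complexified geodesic. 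Lemma \ref{cor:geo_jacobi_analytic_log} is merely a qualitative existence statement (via Cauchy--Kowalevski); the quantitative radius comes entirely from the majorant bound. Your proposed route is plausible in spirit --- the majorant method is a formal-power-series encoding of exactly the kind of singularity-avoidance you describe --- but proving that the complexified slacks stay bounded on a disk of the required radius is a genuinely harder and different argument that you have not supplied, whereas the majorant approach avoids ever handling complex trajectories and produces the derivative bounds by elementary calculus rules (Lemmas \ref{lem:generating_composite}, \ref{lem:calculus_rule}, \ref{lem:taylor_ODE}). Minor additional note: the geodesic equation is second order, so the paper invokes Theorem \ref{thm:Solving_ODE_2nd} rather than Theorem \ref{thm:Solving_ODE} for that step.
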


\subsection{Riemannian geometry on $M_{L}$ ($G_{2}$)}

\label{subsec:RG_L}We use the following definitions throughout this
section:
\begin{itemize}
\item $A_{x}=S_{x}^{-1}A$.
\item $s_{x}=Ax-b$, $S_{x}=\Diag(s_{x})$, $s_{x,v}=A_{x}v$, $S_{x,v}=\Diag(A_{x}v)$.
\item $P_{x}=A_{x}(A_{x}^{T}A_{x})^{-1}A_{x}^{T}$, $\sigma_{x}=\Diag(P_{x})$,
$\left(P_{x}^{(2)}\right)_{ij}=\left(P_{x}\right)_{ij}^{2}$.
\item Gradient of $\phi$: $\phi_{i}=-\sum_{\ell}\left(e_{\ell}^{T}A_{x}e_{i}\right)$.
\item Hessian of $\phi$ and its inverse: $g_{ij}=\phi_{ij}=\sum_{\ell}\left(e_{\ell}^{T}A_{x}e_{i}\right)\left(e_{\ell}^{T}A_{x}e_{j}\right)$,
$g^{ij}=e_{i}^{T}\left(A_{x}^{T}A_{x}\right)^{-1}e_{j}$.
\item Third derivatives of $\phi$: $\phi_{ijk}=-2\sum_{\ell}\left(e_{\ell}^{T}A_{x}e_{i}\right)\left(e_{\ell}^{T}A_{x}e_{j}\right)\left(e_{\ell}^{T}A_{x}e_{k}\right)$.
\item For brevity (overloading notation), we define $s_{\gamma'}=s_{\gamma,\gamma'}$,
$s_{\gamma''}=s_{\gamma,\gamma''}$ , $S_{\gamma'}=S_{\gamma,\gamma'}$
and $S_{\gamma''}=S_{\gamma,\gamma''}$ for a curve $\gamma(t)$.
\item Formula for the transition probability:
\[
p(x\overset{v_{x}}{\rightarrow}y)=\sum_{v_{x}:\exp_{x}(v_{x})=y}\det(d\exp_{x}(v_{x}))^{-1}\sqrt{\frac{\det\left(A_{y}^{T}A_{y}\right)}{\left(2\pi h\right)^{n}}}\exp\left(-\frac{1}{2h}\norm{A_{x}\left(v_{x}-\frac{h}{2}\mu(x)\right)}^{2}\right)
\]
where the formula of $\mu(x)$ is given in Lemma \ref{lem:geodesic_walk_log_def}. 
\end{itemize}
In this and subsequent sections, we will frequently use the following
elementary calculus facts (using only the chain/product rules and
formula for derivative of inverse of a matrix):

\begin{align*}
\frac{dA_{\gamma}}{dt} & =-S_{\gamma'}A_{\gamma},\\
\frac{dP_{\gamma}}{dt} & =-S_{\gamma'}P_{\gamma}-P_{\gamma}S_{\gamma'}+2P_{\gamma}S_{\gamma'}P_{\gamma},\\
\frac{dS_{\gamma'}}{dt} & =\Diag(-S_{\gamma'}A_{\gamma}\gamma'+A_{\gamma}\gamma'')=-S_{\gamma'}^{2}+S_{\gamma''},\\
\frac{d\sigma_{\gamma}}{dt} & =\Diag(\frac{dP_{\gamma}}{dt}).
\end{align*}
We also use these matrix inequalities: $\tr(AB)=\tr(BA)$, $\tr(PAP)\le\tr(A)$
for any psd matrix $A$; $\tr(ABA^{T})\le\tr(AZA^{T})$ for any $B\preceq Z$;
the Cauchy-Schwartz, namely, $\tr(AB)\le\tr(AA^{T})^{\frac{1}{2}}\tr(BB^{T})^{\frac{1}{2}}.$
We also use $P^{2}=P$ since $P$ is a projection matrix.

Since the Hessian manifold $M_{L}$ is naturally embedded in $\Rn$,
we identify $T_{x}M_{L}$ by Euclidean coordinates unless otherwise
stated. Therefore, we have that
\begin{eqnarray*}
\left\langle u,v\right\rangle _{x} & = & u^{T}\nabla^{2}\phi(x)v\\
 & = & u^{T}A_{x}^{T}A_{x}v.
\end{eqnarray*}

\begin{lem}
\label{lem:geo_equ_log}Let $u(t)$ be a vector field defined on a
curve $\gamma(t)$ in $M_{L}$. Then, we have that
\[
\nabla_{\gamma'}u=\frac{du}{dt}-\left(A_{\gamma}^{T}A_{\gamma}\right)^{-1}A_{\gamma}^{T}S_{\gamma'}s_{\gamma,u}.
\]
In particular, the geodesic equation on $M_{L}$ is given by
\begin{equation}
\gamma''=\left(A_{\gamma}^{T}A_{\gamma}\right)^{-1}A_{\gamma}^{T}s_{\gamma'}^{2}\label{eq:geodesic_log}
\end{equation}
and the equation for parallel transport on a curve $\gamma(t)$ is
given by
\begin{equation}
\frac{d}{dt}v(t)=\left(A_{\gamma}^{T}A_{\gamma}\right)^{-1}A_{\gamma}^{T}S_{\gamma'}A_{\gamma}v.\label{eq:parallel_log}
\end{equation}
\end{lem}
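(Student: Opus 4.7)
The plan is to derive all three formulas from the general Hessian-manifold Christoffel symbol formula in Lemma \ref{lem:Hessian_formula}, specialized to $\phi(x)=-\sum_i\log(a_i^Tx-b_i)$, where the third derivatives factor conveniently as $\phi_{ijk}=-2\sum_p(A_x)_{pi}(A_x)_{pj}(A_x)_{pk}$. The inverse metric is $g^{ij}=e_i^T(A_\gamma^TA_\gamma)^{-1}e_j$, so the only work is a bookkeeping computation that folds the triple sum into the compact matrix expression $(A_\gamma^TA_\gamma)^{-1}A_\gamma^TS_{\gamma'}s_{\gamma,u}$.

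First I would compute $\sum_{ij}\gamma'_iu_j\Gamma_{ij}^k$. Substituting $\Gamma_{ij}^k=\tfrac12\sum_l g^{kl}\phi_{ijl}$ and the factored form of $\phi_{ijl}$ gives
\[
\sum_{ij}\gamma'_iu_j\Gamma_{ij}^k=-\sum_{l,p}g^{kl}(A_\gamma\gamma')_p(A_\gamma u)_p(A_\gamma)_{pl}=-\sum_l g^{kl}\bigl(A_\gamma^TS_{\gamma'}s_{\gamma,u}\bigr)_l,
\]
since $(A_\gamma\gamma')_p=(s_{\gamma'})_p$ and $(A_\gamma u)_p=(s_{\gamma,u})_p$, and the diagonal product $(s_{\gamma'})_p(s_{\gamma,u})_p$ is exactly $(S_{\gamma'}s_{\gamma,u})_p$. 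Summing against $g^{-1}=(A_\gamma^TA_\gamma)^{-1}$ gives $-(A_\gamma^TA_\gamma)^{-1}A_\gamma^TS_{\gamma'}s_{\gamma,u}$. For the first piece in Lemma \ref{lem:Hessian_formula}, since $u$ is defined along the curve $\gamma(t)$, the chain rule gives $\sum_i\gamma'_i\partial u_k/\partial x_i=du_k/dt$. Adding the two pieces yields the claimed identity for $\nabla_{\gamma'}u$.

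The geodesic equation follows by taking $u=\gamma'$ in the identity and using the defining property $\nabla_{\gamma'}\gamma'=0$ (Fact \ref{fact:basic_RG}, item 2); then $s_{\gamma,u}=s_{\gamma'}$ and $S_{\gamma'}s_{\gamma'}=s_{\gamma'}^2$ componentwise, producing $\gamma''=(A_\gamma^TA_\gamma)^{-1}A_\gamma^Ts_{\gamma'}^2$. For parallel transport, one takes $u=v$ along $\gamma$ and imposes $\nabla_{\gamma'}v=0$ (Fact \ref{fact:basic_RG}, item 1); since $s_{\gamma,v}=A_\gamma v$, one reads off $dv/dt=(A_\gamma^TA_\gamma)^{-1}A_\gamma^TS_{\gamma'}A_\gamma v$.

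There is no real obstacle here; the only thing to be careful with is the indexing/sign bookkeeping in collapsing the four-fold sum $\sum_{i,j,l,p}$ into matrix notation, and keeping track of the distinction between $S_{\gamma'}$ (the diagonal matrix with diagonal $A_\gamma\gamma'$) and $s_{\gamma,u}$ (the vector $A_\gamma u$), so that the product $S_{\gamma'}s_{\gamma,u}$ is recognized as the Hadamard product of the two slack-scaled vectors. Once this is done, the two specializations of the connection formula to $u=\gamma'$ and to a transported vector field $v$ immediately yield the geodesic ODE and the parallel-transport ODE, which are the forms used throughout the rest of Section~\ref{sec:Logarithmic-Barrier}.
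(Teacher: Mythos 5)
Your proof is correct and follows essentially the same route as the paper: specialize the Hessian-manifold Christoffel formula from Lemma \ref{lem:Hessian_formula} using the factored form of $\phi_{ijl}$ for the log barrier, collapse $\sum_{ijl}\gamma'_iu_j\Gamma_{ij}^k$ into $-(A_\gamma^TA_\gamma)^{-1}A_\gamma^TS_{\gamma'}s_{\gamma,u}$, combine with the chain-rule term $\sum_i\gamma'_i\partial u_k/\partial x_i=du_k/dt$, and then read off the geodesic and parallel-transport ODEs by setting $\nabla_{\gamma'}\gamma'=0$ and $\nabla_{\gamma'}v=0$. No gap.
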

\begin{proof}
By Lemma \ref{lem:Hessian_formula}, the Christoffel symbols respect
to the Euclidean coordinates is given by 
\begin{eqnarray*}
\Gamma_{ij}^{k} & = & \frac{1}{2}\sum_{l}g^{kl}\phi_{ijl}\\
 & = & -\sum_{z}\sum_{l}e_{k}^{T}\left(A_{x}^{T}A_{x}\right)^{-1}e_{l}\left(A_{x}e_{i}\right)_{z}\left(A_{x}e_{j}\right)_{z}\left(A_{x}e_{l}\right)_{z}\\
 & = & -e_{k}^{T}\left(A_{x}^{T}A_{x}\right)^{-1}A_{x}^{T}\left(\left(A_{x}e_{i}\right)\left(A_{x}e_{j}\right)\right).
\end{eqnarray*}
Recall that the Levi-Civita connection is given by 
\[
\nabla_{v}u=\sum_{ik}v_{i}\frac{\partial u_{k}}{\partial x^{i}}e_{k}+\sum_{ijk}v_{i}u_{j}\Gamma_{ij}^{k}e_{k}.
\]
Therefore, we have
\[
\nabla_{v}u=\sum_{ik}v^{i}\frac{\partial u^{k}}{\partial x^{i}}e_{k}-\left(A_{x}^{T}A_{x}\right)^{-1}A_{x}^{T}S_{x,v}s_{x,u}.
\]
Since $U$ is a vector field defined on a curve $\gamma(t)$, we have
that $\sum_{ik}\gamma'_{i}\frac{\partial u_{k}}{\partial x^{i}}e_{k}=\frac{du}{dt}$.

The geodesic equation follows from $\nabla_{\gamma'}\gamma'=0$ and
the parallel transport equation follows from $\nabla_{\gamma'}v=0$.
\end{proof}
\begin{lem}
\label{lem:Riemann_tensor_log}Given $u,v,w,z\in T_{x}M_{L}$, the
Riemann Curvature Tensor is given by
\begin{eqnarray*}
\left\langle R(u,v)w,z\right\rangle  & = & \left(s_{x,u}s_{x,w}\right)^{T}P_{x}\left(s_{x,v}s_{x,z}\right)-\left(s_{x,u}s_{x,z}\right)^{T}P_{x}\left(s_{x,v}s_{x,w}\right),\\
R(u,v)w & = & \left(A_{x}^{T}A_{x}\right)^{-1}A_{x}^{T}\left(S_{x,v}P_{x}S_{x,w}-\diag(P_{x}s_{x,v}s_{x,w})\right)A_{x}u
\end{eqnarray*}
and the Ricci curvature is given by
\begin{eqnarray*}
Ric(u) & = & s_{x,u}^{T}P_{x}^{(2)}s_{x,u}-\sigma_{x}^{T}P_{x}s_{x,u}^{2}.
\end{eqnarray*}
\end{lem}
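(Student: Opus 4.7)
The plan is to compute everything directly from the general Hessian-manifold formulas in Lemma~\ref{lem:Hessian_formula}, using only the explicit third derivatives of the log barrier and the identity
\[
\sum_{p,q} g^{pq}\,(A_x e_p)_\alpha (A_x e_q)_\beta \;=\; \bigl(A_x(A_x^TA_x)^{-1}A_x^T\bigr)_{\alpha\beta} \;=\; (P_x)_{\alpha\beta},
\]
which is the key trick that collapses the inverse-metric sum into the projection matrix. So I would first substitute the bullet-point formula $\phi_{ijk} = -2\sum_\ell (A_x e_i)_\ell(A_x e_j)_\ell(A_x e_k)_\ell$ into
\[
R_{klij} \;=\; \tfrac{1}{4}\sum_{pq} g^{pq}\bigl(\phi_{jkp}\phi_{ilq} - \phi_{ikp}\phi_{jlq}\bigr),
\]
then contract with $u^i v^j w^l z^k$ and apply the $P_x$-identity above to each of the two terms. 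The factor of $4$ from the product of two $\phi_{\cdot\cdot\cdot}$'s cancels the $1/4$, and what remains are exactly two double sums over $\alpha,\beta$ with a $(P_x)_{\alpha\beta}$ kernel. Recognizing the pairs $(A_x u)_\beta(A_x w)_\beta = (s_{x,u}s_{x,w})_\beta$ etc., one obtains
\[
\langle R(u,v)w,z\rangle \;=\; (s_{x,u}s_{x,w})^T P_x (s_{x,v}s_{x,z}) \;-\; (s_{x,u}s_{x,z})^T P_x (s_{x,v}s_{x,w}),
\]
which is the first claimed identity.

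For the second identity (the vector $R(u,v)w$ itself), I would use that $z$ is arbitrary and that $\langle\cdot,\cdot\rangle_x = \langle A_x^TA_x\,\cdot\,,\cdot\rangle_{\text{Eucl}}$, so from $\langle R(u,v)w,z\rangle = z^T A_x^T A_x\, R(u,v)w$ I read off $R(u,v)w$ by factoring $z^T A_x^T(\cdots)$ out of the right side. The first term $(s_{x,u}s_{x,w})^T P_x(s_{x,v}s_{x,z})$ rewrites as $z^T A_x^T S_{x,v} P_x S_{x,w} A_x u$ (using $s_{x,v}s_{x,z} = S_{x,v}A_x z$ to the right, and $s_{x,u}s_{x,w} = S_{x,w}A_x u$ on the left, plus symmetry of $P_x$); the second term rewrites as $z^T A_x^T \operatorname{Diag}(P_x s_{x,v}s_{x,w}) A_x u$. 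Dividing out $A_x^TA_x$ on the left gives the claimed formula.

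For Ricci, I would use the alternative definition $\text{Ric}(u) = \mathbb{E}_{v\sim N_x(0,I)}\langle R(u,v)v,u\rangle$ from the preliminaries. Setting $w=v,\,z=u$ in the tensor formula gives two terms, and the only input I need is that $\mathbb{E}[s_{x,v}s_{x,v}^T] = A_x\mathbb{E}[vv^T]A_x^T = A_x(A_x^TA_x)^{-1}A_x^T = P_x$. Applied entrywise: $\mathbb{E}[(s_{x,v})_i(s_{x,v})_j] = (P_x)_{ij}$, hence $\mathbb{E}[(s_{x,v})_i^2] = \sigma_{x,i}$. The first term $(s_{x,u}s_{x,v})^T P_x (s_{x,v}s_{x,u})$ in expectation becomes $\sum_{ij}(s_{x,u})_i(s_{x,u})_j(P_x)_{ij}^2 = s_{x,u}^T P_x^{(2)} s_{x,u}$; the second term $(s_{x,u}^2)^T P_x (s_{x,v}^2)$ becomes $\sigma_x^T P_x s_{x,u}^2$.

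There is no real obstacle here; the entire lemma is index bookkeeping, and the only conceptual step is spotting that the $g^{pq}$ contraction produces $P_x$ — everything else is algebraic rearrangement and (for Ricci) the second-moment computation of the Gaussian in the Hessian metric.
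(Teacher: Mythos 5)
Your proposal is correct and takes essentially the same route as the paper: substitute the log-barrier third derivatives into Lemma~\ref{lem:Hessian_formula}, collapse $\sum_{p,q}g^{pq}(A_xe_p)_\alpha(A_xe_q)_\beta$ into $(P_x)_{\alpha\beta}$, and then factor out $z^TA_x^T(\cdot)A_xu$ against the metric to read off the vector form. The only divergence is cosmetic, in the Ricci step: the paper contracts directly against coordinate vectors, writing $Ric(u)=\sum_{jl}g^{jl}\langle R(u,e_j)e_l,u\rangle$ and reducing to $\tr(P_xS_{x,u}P_xS_{x,u})-\tr(P_x\diag(P_xs_{x,u}^2))$, whereas you use the equivalent Gaussian characterization $Ric(u)=\E_{v\sim N_x(0,I)}\langle R(u,v)v,u\rangle$ together with $\E[s_{x,v}s_{x,v}^T]=P_x$; both evaluate the same contraction and land on $s_{x,u}^TP_x^{(2)}s_{x,u}-\sigma_x^TP_xs_{x,u}^2$.
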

\begin{proof}
By Lemma \ref{lem:Hessian_formula}, we have that
\begin{eqnarray*}
\left\langle R(u,v)w,z\right\rangle  & = & \frac{1}{4}\sum_{pqijlk}g^{pq}\left(\phi_{jkp}\phi_{ilq}-\phi_{ikp}\phi_{jlq}\right)u_{i}v_{j}w_{l}z_{k}\\
 & = & \sum_{pq}g^{pq}\left(e_{p}^{T}A_{x}^{T}\left(s_{x,v}s_{x,z}\right)e_{q}^{T}A_{x}^{T}\left(s_{x,u}s_{x,w}\right)-e_{p}^{T}A_{x}^{T}\left(s_{x,u}s_{x,z}\right)e_{q}^{T}A_{x}^{T}\left(s_{x,v}s_{x,w}\right)\right)\\
 & = & \left(s_{x,u}s_{x,w}\right)^{T}A_{x}\left(A_{x}^{T}A_{x}\right)^{-1}A_{x}^{T}\left(s_{x,v}s_{x,z}\right)-\left(s_{x,u}s_{x,z}\right)^{T}A_{x}\left(A_{x}^{T}A_{x}\right)^{-1}A_{x}^{T}\left(s_{x,v}s_{x,w}\right).
\end{eqnarray*}
Rewriting it, we have that
\begin{align*}
\left\langle R(u,v)w,z\right\rangle  & =s_{x,u}S_{x,w}P_{x}S_{x,v}s_{x,z}-\left(s_{x,u}s_{x,z}\right)^{T}P_{x}\left(s_{x,v}s_{x,w}\right)\\
 & =z^{T}A_{x}^{T}S_{x,w}P_{x}S_{x,v}A_{x}u-z^{T}A_{x}^{T}\diag(P_{x}s_{x,v}s_{x,w})A_{x}u\\
 & =z^{T}A_{x}^{T}\left(S_{x,w}P_{x}S_{x,v}-\diag(P_{x}s_{x,v}s_{x,w})\right)A_{x}u
\end{align*}
Since $\left\langle \alpha,\beta\right\rangle =\alpha^{T}A_{x}^{T}A_{x}\beta$,
we have that
\[
R(u,v)w=\left(A_{x}^{T}A_{x}\right)^{-1}A_{x}^{T}\left(A_{x}^{T}S_{x,w}P_{x}S_{x,v}-\diag(P_{x}s_{x,v}s_{x,w})\right)A_{x}u.
\]

For the Ricci curvature, we have that
\begin{eqnarray*}
Ric(u) & = & \sum_{jl}g^{jl}\left\langle R(u,\frac{\partial}{\partial x^{j}}),\frac{\partial}{\partial x^{l}},u\right\rangle \\
 & = & \sum_{jl}g^{jl}\left(\left(s_{x,u}s_{x,e_{l}}\right)^{T}P_{x}\left(s_{x,e_{j}}s_{x,u}\right)-\left(s_{x,u}^{2}\right)^{T}P_{x}\left(s_{x,e_{j}}s_{x,e_{l}}\right)\right)\\
 & = & \sum_{jl}g^{jl}e_{l}^{T}A_{x}^{T}\left(S_{x,u}P_{x}S_{x,u}-\diag(P_{x}s_{x,u}^{2})\right)A_{x}e_{j}\\
 & = & \tr\left((A_{x}^{T}A_{x})^{-1}A_{x}^{T}\left(S_{x,u}P_{x}S_{x,u}-\diag(P_{x}s_{x,u}^{2})\right)A_{x}\right)\\
 & = & \tr P_{x}S_{x,u}P_{x}S_{x,u}-\tr P_{x}\diag(P_{x}s_{x,u}^{2})\\
 & = & s_{x,u}^{T}P_{x}^{(2)}s_{x,u}-\sigma_{x}^{T}P_{x}s_{x,u}^{2}.
\end{eqnarray*}
\end{proof}
\begin{lem}
\label{lem:Jacobi_field_log}Given a geodesic $\gamma(t)$ on $M_{L}$
and an orthogonal frame $\{x_{i}\}_{i=1}^{n}$ on $\gamma(t)$. The
Jacobi field equation (in the orthogonal frame coordinates) is given
by
\[
\frac{d^{2}u}{dt^{2}}+X^{-1}(A_{\gamma}^{T}A_{\gamma})^{-1}\left(A_{\gamma}^{T}S_{\gamma'}P_{\gamma}S_{\gamma'}A_{\gamma}-A_{\gamma}^{T}\Diag(P_{\gamma}s_{\gamma'}^{2})A_{\gamma}\right)Xu=0
\]
where $X(t)=[x_{1}(t),x_{2}(t),\cdots,x_{n}(t)]$.
\end{lem}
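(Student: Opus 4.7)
The plan is to substitute the explicit formula for the Riemann curvature tensor from Lemma \ref{lem:Riemann_tensor_log} into the general Jacobi equation $D_t^2 V + R(V,\gamma')\gamma' = 0$ of Theorem \ref{thm:Jacobi_equation}, and then rewrite everything in the parallel orthonormal frame coordinates exactly as was done in Lemma \ref{lem:formula_dexp}.

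First, since the frame $\{x_i(t)\}$ is parallel transported along $\gamma$, we have $D_t x_i = 0$, so if we write the Jacobi field $V(t) = \sum_i u_i(t)\,x_i(t) = X(t)\,u(t)$ in Euclidean coordinates, then by the product rule and Fact \ref{fact:basic_RG}(4), $D_t V = X u'$ and $D_t^2 V = X u''$. Thus $u''$ is exactly the representation of $D_t^2 V$ in the frame $X(t)$.

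Next, apply the formula for $R(u,v)w$ from Lemma \ref{lem:Riemann_tensor_log} with the $u$-slot filled by $V$ and the $v,w$-slots filled by $\gamma'$, so that $s_{\gamma,v}=s_{\gamma,w}=s_{\gamma'}$ and $s_{\gamma,v}\,s_{\gamma,w} = s_{\gamma'}^2$ (entrywise):
\[
R(V,\gamma')\gamma' \;=\; (A_\gamma^T A_\gamma)^{-1} A_\gamma^T\Bigl(S_{\gamma'} P_\gamma S_{\gamma'} - \Diag\!\bigl(P_\gamma s_{\gamma'}^2\bigr)\Bigr)A_\gamma\, V .
\]
Substituting $V = X u$ and then left-multiplying by $X^{-1}$ extracts the coefficients of $R(V,\gamma')\gamma'$ in the orthonormal frame. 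Combined with $D_t^2 V = X u''$ and another left multiplication by $X^{-1}$, the Jacobi equation becomes precisely the stated ODE.

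There is essentially no obstacle — it is a direct substitution from Lemma \ref{lem:Riemann_tensor_log} into Theorem \ref{thm:Jacobi_equation}, paralleling the derivation already carried out abstractly in Lemma \ref{lem:formula_dexp}. As a sanity check, since $\{x_i\}$ is orthonormal in the metric $g=A_\gamma^T A_\gamma$, one has $X^T(A_\gamma^T A_\gamma)X = I$, i.e.\ $X^{-1}(A_\gamma^T A_\gamma)^{-1} = X^T$; therefore the coefficient matrix can equivalently be written as $X^T\!\bigl(A_\gamma^T S_{\gamma'} P_\gamma S_{\gamma'} A_\gamma - A_\gamma^T \Diag(P_\gamma s_{\gamma'}^2) A_\gamma\bigr)X$, which is manifestly symmetric and thus agrees with the symmetric matrix $R(t)$ in Definition \ref{def:Rt_definition} as guaranteed by Fact \ref{fact:sym_cuv}.
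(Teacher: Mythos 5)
Your proof is correct and follows exactly the same route as the paper: substitute the curvature formula from Lemma \ref{lem:Riemann_tensor_log} into the Jacobi equation from Theorem \ref{thm:Jacobi_equation}, write $V = Xu$ with $X$ a parallel orthonormal frame so that $D_t^2 V = X u''$, and left-multiply by $X^{-1}$. The closing observation that $X^{-1}(A_\gamma^T A_\gamma)^{-1} = X^T$ and hence the coefficient matrix is $X^T \overline{R} X$ (symmetric, matching Definition \ref{def:Rt_definition}) is a nice sanity check the paper does not spell out, but it is not needed.
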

\begin{proof}
The equation for Jacobi field along is
\[
D_{t}^{2}v+R(v,\gamma')\gamma'=0.
\]
By Lemma \ref{lem:Riemann_tensor_log}, under Euclidean coordinates,
we have
\[
D_{t}^{2}v+(A_{\gamma}^{T}A_{\gamma})^{-1}\left(A_{\gamma}^{T}S_{\gamma'}P_{\gamma}S_{\gamma'}A_{\gamma}-A_{\gamma}^{T}diag(P_{\gamma}s_{\gamma'}^{2})A_{\gamma}\right)v=0
\]
We write $v$ in terms of the orthogonal frame, namely, $v(t)=X(t)u(t)$
where $u(t)\in\R^{n}$. Then, we have that $D_{t}^{2}v=X(t)\frac{d^{2}u(t)}{dt^{2}}.$
Hence, under the orthogonal frame coordinate, we have that 
\[
\frac{d^{2}u}{dt^{2}}+X^{-1}(A_{\gamma}^{T}A_{\gamma})^{-1}\left(A_{\gamma}^{T}S_{\gamma'}P_{\gamma}S_{\gamma'}A_{\gamma}-A_{\gamma}^{T}\Diag(P_{\gamma}s_{\gamma'}^{2})A_{\gamma}\right)Xu=0.
\]
\end{proof}
Now, we show the relation between the distance induced by the metric
and the Hilbert metric.
\begin{lem}
\label{lem:distances_log}For any $x,y\in M_{L}$, we have that 
\[
\frac{d(x,y)}{d_{H}(x,y)}\leq\sqrt{m}.
\]
Hence, we have that $G_{2}\leq\sqrt{m}$.
\end{lem}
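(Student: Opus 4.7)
The plan is to upper bound $d(x,y)$ by using the straight segment $c(t) = (1-t)x + ty$ as a trial curve and then controlling its Riemannian length by the Hilbert distance. Writing $s_i(t) = a_i^Tc(t) - b_i = (1-t)\alpha_i + t\beta_i$ with $\alpha_i = a_i^Tx-b_i$, $\beta_i = a_i^Ty-b_i$, the local length element is
\[
\|c'(t)\|_{c(t)} \;=\; \|A_{c(t)}(y-x)\|_2 \;\le\; \sqrt{m}\,\|A_{c(t)}(y-x)\|_\infty \;=\; \sqrt{m}\,\max_i |f_i'(t)|,
\]
where $f_i(t) = \log s_i(t)$ and $f_i'(t) = (\beta_i-\alpha_i)/s_i(t)$. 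So it will suffice to prove
\[
\int_0^1 \max_i |f_i'(t)|\,dt \;\le\; d_H(x,y).
\]

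Next I would split the constraints by the sign of $\beta_i-\alpha_i$, into $I^+$ (where $f_i' > 0$) and $I^-$ (where $f_i' < 0$), and bound
\[
\max_i|f_i'(t)| \;\le\; \max_{i\in I^+} f_i'(t) \;+\; \max_{j\in I^-}\bigl(-f_j'(t)\bigr),
\]
since at most one of the two summands corresponds to any given $i$. The key observation is that $f_i'(t)$ depends on $i$ only through the ratio $r_i \defeq \beta_i/\alpha_i$: dividing numerator and denominator by $\alpha_i$,
\[
f_i'(t) \;=\; \frac{r_i-1}{(1-t)+t\,r_i} \;\eqqcolon\; \phi(r_i,t),
\]
and differentiating shows $\partial_r\phi(r,t) = 1/((1-t)+tr)^2 > 0$, so $\phi(\cdot,t)$ is strictly increasing in $r$ for every $t$. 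Consequently, the argmax over $i\in I^+$ is the \emph{same} index $i^* = \arg\max_{i\in I^+} r_i$ for all $t$, and
\[
\int_0^1 \max_{i\in I^+} f_i'(t)\,dt \;=\; \int_0^1 f_{i^*}'(t)\,dt \;=\; \log(\beta_{i^*}/\alpha_{i^*}) \;=\; \max_{i\in I^+}\log\frac{\beta_i}{\alpha_i}.
\]
An identical argument (with the opposite monotonicity) handles $I^-$ and gives $\int_0^1 \max_{j\in I^-}(-f_j'(t))\,dt = \max_{j\in I^-}\log(\alpha_j/\beta_j)$.

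Finally I would match this sum to the Hilbert distance. Parameterizing the extensions of the segment by $c(-\alpha)$ and $c(1+\beta)$ hitting $\partial M$, a direct calculation gives $d_H(x,y) = \log\bigl((1+\alpha)(1+\beta)/(\alpha\beta)\bigr)$, and $(1+\beta)/\beta = \max_i \alpha_i/\beta_i$, $(1+\alpha)/\alpha = \max_i \beta_i/\alpha_i$, so
\[
d_H(x,y) \;=\; \max_i\log\frac{\beta_i}{\alpha_i} + \max_j\log\frac{\alpha_j}{\beta_j},
\]
which matches the sum of the two integrals above. Combining, $d(x,y) \le L(c) \le \sqrt{m}\,d_H(x,y)$, yielding $G_2 \le \sqrt{m}$. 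The main care-points are the sign-splitting step and the (slightly subtle, but elementary) identification of the Hilbert distance with the two max log-ratios; the rest is routine calculus.
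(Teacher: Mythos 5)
Your proof is correct, and it takes a genuinely different route from the paper's. The paper first establishes the bound infinitesimally: for $d(x,y) \le \varepsilon$, it compares $\|A_x(x-y)\|_2$ (the local Riemannian norm) to the cross-ratio, using the fact that $\|A_x(p-x)\|_\infty \le 1$ at boundary points and invoking Lemma~3.1 of Nesterov--Todd to relate $d(x,y)$ to $\|A_x(x-y)\|_2$; it then obtains the global bound by subdividing an arbitrary path and passing to the limit. You instead bound $d(x,y)$ directly by the Riemannian length of the straight segment and show this length is exactly controlled by $d_H$: the crucial observation is that because $f_i'(t) = \phi(r_i,t)$ is monotone in $r_i=\beta_i/\alpha_i$ uniformly in $t$, the maximizing index on each of $I^+, I^-$ is constant in $t$, so $\int_0^1 \max_i |f_i'|\,dt$ collapses to the sum of two log-ratios, which is exactly $d_H(x,y)$ once you write $p=c(-\alpha)$, $q=c(1+\beta)$ and match $\frac{1+\alpha}{\alpha}$ and $\frac{1+\beta}{\beta}$ to $\max_i \beta_i/\alpha_i$ and $\max_i \alpha_i/\beta_i$. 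Your version is self-contained (no external lemma needed), avoids the limiting argument, and makes the $\sqrt m$ appear transparently from the single step $\|\cdot\|_2 \le \sqrt m\,\|\cdot\|_\infty$; the paper's version is shorter on the page but defers the real content to the cited lemma. Two small points you should make explicit in a write-up: (i) both $I^+$ and $I^-$ are nonempty whenever $x\neq y$ and the polytope is bounded (needed so that $\max_{i\in I^+}\log(\beta_i/\alpha_i)=\max_i\log(\beta_i/\alpha_i)$ and likewise for $I^-$), and (ii) the inequality $\max_i|f_i'| \le \max_{I^+}f_i' + \max_{I^-}(-f_j')$ uses that both summands are nonnegative.
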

\begin{proof}
First, we note that it suffices to prove that $\frac{d(x,y)}{d_{H}(x,y)}\leq(1+O(\varepsilon))\sqrt{m}$
for any $x,y\in M_{L}$ with $d(x,y)\leq\varepsilon$. Then, one can
run a limiting argument as follows. Let $x_{t}=t\cdot x+(1-t)\cdot y$,
then we have that
\[
d_{H}(x,y)=\lim_{n\rightarrow\infty}\sum_{i=0}^{n-1}d_{H}(x_{k/n},x_{(k+1)/n}).
\]
Since $d_{H}(x_{k/n},x_{(k+1)/n})=O_{x,y}(\frac{1}{n})$ , we have
that 
\begin{eqnarray*}
d_{H}(x,y) & = & \lim_{n\rightarrow\infty}\sum_{i=0}^{n-1}d_{H}(x_{k/n},x_{(k+1)/n})\\
 & \geq & \lim_{n\rightarrow\infty}\frac{(1-O_{x,y}(\frac{1}{n}))}{\sqrt{m}}\sum_{i=0}^{n-1}d(x_{k/n},x_{(k+1)/n})\\
 & = & \frac{1}{\sqrt{m}}\lim_{n\rightarrow\infty}\sum_{i=0}^{n-1}d(x_{k/n},x_{(k+1)/n})\\
 & \geq & \frac{1}{\sqrt{m}}d(x,y)=\frac{d(x,y)}{\sqrt{m}}.
\end{eqnarray*}

Now, we can assume $d(x,y)\leq\varepsilon$. Let $p$ and $q$ are
on the boundary of $M_{L}$ such that $p,x,y,q$ are on the straight
line $\overline{xy}$ and are in order. Without loss of generality,
we assume $p$ is closer to $x$. Then, we have that $p\in M_{L}\cap(x-M_{L})$,
equivalently, we have that $\left|a_{i}^{T}p-a_{i}^{T}x\right|\leq a_{i}^{T}x-b_{i}$
for all $i$ and hence $\norm{A_{x}(p-x)}_{\infty}\leq1$. Therefore,
we have that
\[
\norm{A_{x}(p-x)}_{2}\leq\sqrt{m}\norm{A_{x}(p-x)}_{\infty}\leq\sqrt{m}.
\]
Since $p,x,y$ are on the same line, we have that
\begin{eqnarray*}
\frac{\norm{x-y}_{2}\norm{p-q}_{2}}{\norm{p-x}_{2}\norm{y-q}_{2}} & \geq & \frac{\norm{x-y}_{2}}{\norm{p-x}_{2}}=\frac{\norm{A_{x}(x-y)}_{2}}{\norm{A_{x}(p-x)}_{2}}\geq\frac{\norm{A_{x}(x-y)}_{2}}{\sqrt{m}}.
\end{eqnarray*}

Since $d(x,y)<\varepsilon$, Lemma 3.1 in \cite{nesterov2002riemannian}
shows that
\[
d(x,y)\leq-\log\left(1-\norm{A_{x}(x-y)}_{2}\right).
\]
Hence, we have that $\norm{A_{x}(x-y)}_{2}\geq(1-O(\varepsilon))d(x,y)$
and hence
\[
d_{H}(x,y)=\frac{\norm{x-y}_{2}\norm{p-q}_{2}}{\norm{p-x}_{2}\norm{y-q}_{2}}\geq(1-O(\varepsilon))\frac{d(x,y)}{\sqrt{m}}.
\]
\end{proof}

\subsection{Geodesic walk on $M_{L}$}

\label{subsec:geodesic_walk_log}Recall that the geodesic walk is
given by

\[
x^{(\text{new})}=\exp_{x}(\sqrt{h}w+\frac{h}{2}\mu(x))
\]
where $w\sim N_{x}(0,I)$. In many proofs in this section, we consider
the geodesic $\gamma$ from $x$ with the initial velocity
\[
\gamma'(0)=\frac{w}{\sqrt{n}}+\frac{1}{2}\sqrt{\frac{h}{n}}\mu(x).
\]
The scaling is to make the speed of geodesic $\gamma$ close to one.
Since $w$ is a Gaussian vector, we have that $0.9\le\norm{\gamma'(0)}_{\gamma(0)}\le1.1$
with high probability. Due to this rescaling, the geodesic is defined
from $0$ to $\ell=\sqrt{nh}$.

We often work in Euclidean coordinates. In this case, the geodesic
walk is given by the following formula.
\begin{lem}
\label{lem:geodesic_walk_log_def}Given $x\in M_{L}$ and step size
$h>0$, one step of the geodesic walk starting at $x$ in Euclidean
coordinates is given by the solution $\gamma(\ell)$ of the following
geodesic equation
\begin{eqnarray*}
\gamma''(t) & = & \left(A_{\gamma}^{T}A_{\gamma}\right)^{-1}A_{\gamma}^{T}s_{\gamma'}^{2}\text{ for }0\leq t\leq\ell\\
\gamma'(0) & = & \frac{w}{\sqrt{n}}+\frac{1}{2}\sqrt{\frac{h}{n}}\mu(x)\\
\gamma(0) & = & x
\end{eqnarray*}
where $\ell=\sqrt{nh}$, $w\sim N(0,(A_{x}^{T}A_{x})^{-1})$ and $\mu(x)=\left(A_{x}^{T}A_{x}\right)^{-1}A_{x}^{T}\sigma_{x}$.
\end{lem}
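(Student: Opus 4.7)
The lemma is essentially an unpacking of the abstract geodesic walk $x^{(\text{new})} = \exp_x(\sqrt{h}w + \frac{h}{2}\mu(x))$, where $w \sim N_x(0,I)$, into the concrete ODE formulation using Euclidean coordinates specialized to the logarithmic barrier. The plan is to handle three ingredients: (i) convert the Gaussian sampled in the tangent-space metric to a Gaussian in Euclidean coordinates, (ii) derive the closed-form expression $\mu(x) = (A_x^T A_x)^{-1} A_x^T \sigma_x$, and (iii) rewrite $\exp_x(v)$ as the time-$\ell$ value of a reparametrized geodesic, then invoke Lemma \ref{lem:geo_equ_log} for the geodesic ODE.

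First I would handle the Gaussian. By definition $w \sim N_x(0,I)$ means $\mathbb{E}\langle w,u\rangle_x^2 = \|u\|_x^2$ for every $u \in T_xM$. Since the metric at $x$ on $M_L$ is $\langle u,v\rangle_x = u^T A_x^T A_x v$, expanding both sides in Euclidean coordinates forces the Euclidean covariance of $w$ to satisfy $u^T A_x^T A_x\, \mathbb{E}[ww^T]\, A_x^T A_x u = u^T A_x^T A_x u$ for all $u$. This identifies $\mathbb{E}[ww^T] = (A_x^T A_x)^{-1}$, yielding $w \sim N(0,(A_x^T A_x)^{-1})$ in Euclidean coordinates.

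Next, to get the formula for $\mu(x)$, I would invoke Lemma \ref{lem:drift-Newton-step}, which gives $\mu(x) = -(\nabla^2\phi(x))^{-1}\nabla\psi(x)$ with $\psi(x) = \tfrac{1}{2}\log\det \nabla^2\phi(x)$. For the log barrier $\nabla^2\phi(x) = A_x^T A_x$. Using the standard identity $\partial_i A_x = -S_{x,e_i} A_x$ (from $\partial_i S_x = \Diag(Ae_i)$), one computes $\partial_i(A_x^T A_x) = -2A_x^T S_{x,e_i} A_x$, and so by Jacobi's formula
\begin{equation*}
\partial_i \psi(x) = \tfrac{1}{2}\tr\bigl((A_x^T A_x)^{-1}\partial_i(A_x^T A_x)\bigr) = -\tr(P_x S_{x,e_i}) = -\sigma_x^T A_x e_i.
\end{equation*}
Thus $\nabla \psi(x) = -A_x^T \sigma_x$ and therefore $\mu(x) = (A_x^T A_x)^{-1} A_x^T \sigma_x$, matching the claim.

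Finally, I would handle the reparametrization. Setting $v \defeq \sqrt{h}w + \tfrac{h}{2}\mu(x)$, the step is $\exp_x(v) = \tilde{\gamma}(1)$ where $\tilde{\gamma}$ is the unit-time geodesic with $\tilde{\gamma}'(0) = v$. Reparametrizing by the constant $\ell = \sqrt{nh}$, i.e. setting $\gamma(t) = \tilde{\gamma}(t/\ell)$, gives a geodesic on $[0,\ell]$ with $\gamma(\ell) = \exp_x(v)$ and initial velocity
\begin{equation*}
\gamma'(0) = \frac{v}{\ell} = \frac{\sqrt{h}w + \tfrac{h}{2}\mu(x)}{\sqrt{nh}} = \frac{w}{\sqrt{n}} + \frac{1}{2}\sqrt{\frac{h}{n}}\,\mu(x),
\end{equation*}
which is exactly the stated initial condition. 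The ODE $\gamma'' = (A_\gamma^T A_\gamma)^{-1} A_\gamma^T s_{\gamma'}^2$ is then immediate from Lemma \ref{lem:geo_equ_log}, since the geodesic equation $\nabla_{\gamma'}\gamma' = 0$ in Euclidean coordinates reads precisely this way on $M_L$. There is no real obstacle here; the only computation requiring care is the closed form for $\mu(x)$, which is handled by the matrix-derivative identities above.
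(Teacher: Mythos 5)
Your proof is correct and arrives at the same statement. The main structural difference is in how you derive the closed form $\mu(x) = (A_x^T A_x)^{-1}A_x^T\sigma_x$: you invoke Lemma~\ref{lem:drift-Newton-step} (which expresses $\mu$ as a Newton step on the volumetric barrier, $\mu = -(\nabla^2\phi)^{-1}\nabla\psi$ with $\psi = \tfrac12\log\det\nabla^2\phi$), compute $\partial_i\psi = -\tr(P_xS_{x,e_i}) = -\sigma_x^TA_xe_i$, and read off the answer. The paper's own proof instead computes $\mu_i(x)$ directly from the defining formula~(\ref{eq:drift_term_general}), doing the matrix calculus from scratch without passing through $\psi$. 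Since Lemma~\ref{lem:drift-Newton-step} is already established earlier, your route reuses that work rather than repeating it; the two computations are essentially the same identity phrased differently, so this is a matter of economy rather than a substantively different argument. You also spell out two things the paper leaves implicit in the lemma statement — that $w\sim N_x(0,I)$ written in Euclidean coordinates becomes $N(0,(A_x^TA_x)^{-1})$ (by matching $\mathbb{E}\langle w,u\rangle_x^2 = \|u\|_x^2$ against $\langle w,u\rangle_x = w^TA_x^TA_xu$), and that dividing the initial velocity by $\ell=\sqrt{nh}$ turns the time-$1$ exponential map into a time-$\ell$ geodesic. Both are standard but worth recording, and you handle them correctly.
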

\begin{proof}
The geodesic equation is given by Lemma \ref{lem:geo_equ_log}. From
(\ref{eq:drift_term_general}), the drift term is given by
\begin{align*}
\mu_{i}(x) & =\frac{1}{2}\sum_{j=1}^{n}\frac{\partial}{\partial x_{j}}\left(\left(\nabla^{2}\phi(X_{t})\right)^{-1}\right)_{ij}=\frac{1}{2}\sum_{j=1}^{n}\frac{\partial}{\partial x_{j}}\left(\left(A_{x}^{T}A_{x}\right)^{-1}\right)_{ij}\\
 & =\sum_{j}e_{i}^{T}(A_{x}^{T}A_{x})^{-1}A_{x}^{T}S_{x,e_{j}}A_{x}(A_{x}^{T}A_{x})^{-1}e_{j}\\
 & =\sum_{j}\sum_{k}V_{ik}e_{k}^{T}A_{x}e_{j}V_{jk}\quad\mbox{\mbox{where }}V=(A_{x}^{T}A_{x})^{-1}A_{x}^{T}\\
 & =\sum_{k}V_{ik}e_{k}^{T}A_{x}\sum_{j}e_{j}e_{j}^{T}(A_{x}^{T}A_{x})^{-1}A_{x}^{T}e_{k}\\
 & =\sum_{k}V_{ik}e_{k}^{T}A_{x}(A_{x}^{T}A_{x})^{-1}A_{x}^{T}e_{k}\\
 & =V_{i}\sigma_{x}=e_{i}^{T}(A_{x}^{T}A_{x})^{-1}A_{x}^{T}\sigma_{x}.
\end{align*}
\end{proof}

\subsection{Randomness and stability of the geodesic ($V_{0}$) \label{sec:walk_is_random_log}}

Many parameters of a Hessian manifold relates to how fast a geodesic
approaches the boundary of the polytope. Since the initial velocity
of the geodesic consists mainly the Gaussian part (plus a small drift
term), one can imagine that $\norm{s_{\gamma'(0)}}_{\infty}=O\left(\frac{1}{\sqrt{n}}\right)\norm{s_{\gamma'(0)}}_{2}$,
namely, the geodesic initial approaches/leaves every facet of the
polytope in roughly same slow pace. If this holds on the whole geodesic,
this would allowed us to give very tight bounds on various parameters.
Although we are not able to prove that $\norm{s_{\gamma'(t)}}_{\infty}$
is stable throughout $0\leq t\leq\ell$, we can show that $\norm{s_{\gamma'(t)}}_{4}$
is stable and that allows us to give a good bound on $\norm{s_{\gamma'(t)}}_{\infty}$. 

Throughout this section, we only use the randomness of geodesic to
prove that both $\norm{s_{\gamma'(t)}}_{4}$ and $\norm{s_{\gamma'(t)}}_{\infty}$
is small with high probability. Since $\norm{s_{\gamma'(t)}}_{4}=O(n^{-1/4})$
and $\norm{s_{\gamma'(t)}}_{\infty}=O(\sqrt{\frac{\log n}{n}}+\sqrt{h})$
with high probability (Lemma \ref{lem:V0_bound}), we define 
\[
V(\gamma)\defeq\max_{0\leq t\leq\ell}\left(\frac{\norm{s_{\gamma'(t)}}_{4}}{n^{-1/4}}+\frac{\norm{s_{\gamma'(t)}}_{\infty}}{\sqrt{\frac{\log n}{n}}+\sqrt{h}}\right)
\]
 to capture this randomness involves in generating the geodesic walk.
This allows us to perturb the geodesic (Lemma \ref{lem:one_one_cor})
without worrying about the dependence on randomness. 

Here, we first prove the the geodesic is stable in $L_{4}$ norm and
hence $V(\gamma)$ can be simply approximated by $\norm{s_{\gamma'(0)}}_{4}$
and $\norm{s_{\gamma'(0)}}_{\infty}$.
\begin{lem}
\label{lem:geodesic_4}Let $\gamma$ be a geodesic in $M_{L}$ starting
at $x$. Let $v_{4}=\norm{s_{\gamma'(0)}}_{4}$. Then, for $0\leq t\leq\frac{1}{10v_{4}}$,
we have that

\begin{enumerate}
\item $\norm{s_{\gamma'(t)}}_{4}\leq1.25v_{4}$.
\item $\norm{\gamma''(t)}_{\gamma(t)}^{2}\leq3v_{4}^{4}$.
\end{enumerate}
\end{lem}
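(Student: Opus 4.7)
The plan is to track the scalar quantity $f(t) \defeq \norm{s_{\gamma'(t)}}_4^4$ along the geodesic and extract a differential inequality that can be integrated. First I will compute $\frac{d}{dt} s_{\gamma'}$ directly from the identity $s_{\gamma'} = A_\gamma \gamma'$ together with $\frac{d}{dt} A_\gamma = -S_{\gamma'} A_\gamma$ given in the background of this section, yielding
\[
\tfrac{d}{dt} s_{\gamma'} \;=\; -s_{\gamma'}^{\,2} + s_{\gamma''}.
\]
The geodesic equation in Lemma \ref{lem:geo_equ_log} says $\gamma'' = (A_\gamma^T A_\gamma)^{-1} A_\gamma^T s_{\gamma'}^{\,2}$, so multiplying on the left by $A_\gamma$ gives the key identity $s_{\gamma''} = P_\gamma s_{\gamma'}^{\,2}$.

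Differentiating $f$ componentwise and substituting the two displays above then gives
\[
f'(t) \;=\; 4\sum_i (s_{\gamma'})_i^{\,3}\bigl(-(s_{\gamma'})_i^{\,2} + (s_{\gamma''})_i\bigr)
\;=\; -4\norm{s_{\gamma'}}_6^{6} \,+\, 4\bigl\langle s_{\gamma'}^{\,3},\, P_\gamma s_{\gamma'}^{\,2}\bigr\rangle.
\]
For the cross term I will use Cauchy--Schwarz combined with the fact that $P_\gamma$ is an orthogonal projection, so $\norm{P_\gamma s_{\gamma'}^{\,2}}_2 \le \norm{s_{\gamma'}^{\,2}}_2 = \norm{s_{\gamma'}}_4^{2}$, giving
\[
\bigl\langle s_{\gamma'}^{\,3},\, P_\gamma s_{\gamma'}^{\,2}\bigr\rangle
\;\le\; \norm{s_{\gamma'}^{\,3}}_2 \norm{s_{\gamma'}^{\,2}}_2
\;=\; \norm{s_{\gamma'}}_6^{3}\,\norm{s_{\gamma'}}_4^{2}
\;\le\; \norm{s_{\gamma'}}_4^{\,5},
\]
where the last inequality uses the monotonicity of $\ell_p$ norms on $\R^m$ (with counting measure), $\norm{\cdot}_6 \le \norm{\cdot}_4$. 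Dropping the non-positive $-4\norm{s_{\gamma'}}_6^{6}$ term, this yields the scalar differential inequality $f'(t) \le 4\,f(t)^{5/4}$.

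Integrating (equivalently, noting $(f^{-1/4})' \ge -1$) together with $f(0)=v_4^{\,4}$ gives
\[
\norm{s_{\gamma'(t)}}_4 \;=\; f(t)^{1/4} \;\le\; \frac{v_4}{1 - v_4\, t}
\;\le\; \tfrac{10}{9}\,v_4 \;\le\; 1.25\, v_4
\qquad\text{for } 0 \le t \le \tfrac{1}{10 v_4},
\]
which is conclusion (1). Conclusion (2) is then immediate: from $A_\gamma \gamma'' = P_\gamma s_{\gamma'}^{\,2}$,
\[
\norm{\gamma''(t)}_{\gamma(t)}^{2} \;=\; \norm{P_\gamma s_{\gamma'}^{\,2}}_2^{2} \;\le\; \norm{s_{\gamma'}^{\,2}}_2^{2} \;=\; \norm{s_{\gamma'(t)}}_4^{\,4} \;\le\; (1.25)^4 v_4^{\,4} \;<\; 3\,v_4^{\,4}.
\]
There is no real obstacle here; the only place one has to be a little careful is in choosing the right norm bound for the cross term $\langle s_{\gamma'}^{\,3}, P_\gamma s_{\gamma'}^{\,2}\rangle$, since a naive AM--GM split (e.g.\ $\tfrac12\norm{s_{\gamma'}}_6^{6} + \tfrac12\norm{s_{\gamma'}}_4^{4}$) would only give exponential growth, whereas using the ordering $\norm{\cdot}_6 \le \norm{\cdot}_4$ produces the self-similar inequality $f' \le 4 f^{5/4}$ whose blow-up rate matches the stated time window $t \le 1/(10v_4)$.
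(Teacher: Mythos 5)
There is a genuine error in the derivative computation. When you write
\[
f'(t)=4\sum_i (s_{\gamma'})_i^3\bigl(-(s_{\gamma'})_i^2+(s_{\gamma''})_i\bigr)=-4\norm{s_{\gamma'}}_6^6+4\langle s_{\gamma'}^3,P_\gamma s_{\gamma'}^2\rangle,
\]
the first term is wrong: $\sum_i (s_{\gamma'})_i^3\,(s_{\gamma'})_i^2=\sum_i (s_{\gamma'})_i^5$, not $\sum_i (s_{\gamma'})_i^6$. More importantly, this is a \emph{signed} quintic sum, not a power of an $\ell_p$ norm. The entries of $s_{\gamma'}=A_\gamma\gamma'$ have both signs (the geodesic moves toward some facets and away from others), so $\sum_i(s_{\gamma'})_i^5$ is not sign-definite and the term $-4\sum_i(s_{\gamma'})_i^5$ cannot be dropped as ``nonpositive.'' The claimed differential inequality $f'\le 4f^{5/4}$ is therefore not established, and neither is the stated sharpening $\norm{s_{\gamma'(t)}}_4\le \tfrac{10}{9}v_4$.

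The fix is straightforward but erases the claimed improvement. Bound the quintic term the same way you bound the cross term:
\[
\Bigl|\sum_i(s_{\gamma'})_i^5\Bigr|\le\norm{s_{\gamma'}}_5^5\le\norm{s_{\gamma'}}_4^5=f(t)^{5/4},
\]
using $\norm{\cdot}_5\le\norm{\cdot}_4$. Combined with your (correct) estimate $\langle s_{\gamma'}^3,P_\gamma s_{\gamma'}^2\rangle\le\norm{s_{\gamma'}}_6^3\norm{s_{\gamma'}}_4^2\le f^{5/4}$, this yields $f'\le 8f^{5/4}$, i.e.\ $u'\le 2u^2$ for $u=\norm{s_{\gamma'(t)}}_4$ --- exactly the inequality the paper derives by the shorter route of differentiating $u$ directly, bounding $\tfrac{du}{dt}\le\norm{\tfrac{d}{dt}(A_\gamma\gamma')}_4\le\norm{A_\gamma\gamma''}_4+u^2$ via the triangle inequality, and then using $\norm{A_\gamma\gamma''}_4\le\norm{A_\gamma\gamma''}_2=\norm{P_\gamma s_{\gamma'}^2}_2\le u^2$. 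Integrating $u'\le 2u^2$ gives $u(t)\le v_4/(1-2v_4 t)$, which at $t\le\tfrac{1}{10v_4}$ is $\le\tfrac{5}{4}v_4$, so the lemma still holds (and your part (2), using $\norm{\gamma''}_\gamma^2=\norm{P_\gamma s_{\gamma'}^2}_2^2\le\norm{s_{\gamma'}}_4^4$, goes through with $(1.25)^4<3$); but the constant $1.25$ in the time window $t\le\tfrac{1}{10v_4}$ is then exactly tight rather than comfortably slack, and the closing remark about the ``right'' choice giving a better blow-up rate does not survive the correction.
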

\begin{proof}
Let $u(t)=\norm{s_{\gamma'(t)}}_{4}$. Then, we have
\begin{eqnarray}
\frac{du}{dt} & \leq & \norm{\frac{d}{dt}\left(A_{\gamma}\gamma'\right)}_{4}\nonumber \\
 & = & \norm{A_{\gamma}\gamma''-\left(A_{\gamma}\gamma'\right)^{2}}_{4}\nonumber \\
 & \leq & \norm{A_{\gamma}\gamma''}_{4}+u^{2}(t).\label{eq:geodesic_error_d}
\end{eqnarray}

Under the Euclidean coordinates, the geodesic equation is given by
\[
\gamma''=\left(A_{\gamma}^{T}A_{\gamma}\right)^{-1}A_{\gamma}^{T}s_{\gamma'}^{2}.
\]
Hence, we have that
\begin{eqnarray}
\norm{\gamma''}_{\gamma}^{2} & = & \left(s_{\gamma'}^{2}\right)^{T}A_{\gamma}\left(A_{\gamma}^{T}A_{\gamma}\right)^{-1}\left(A_{\gamma}^{T}A_{\gamma}\right)\left(A_{\gamma}^{T}A_{\gamma}\right)^{-1}A_{\gamma}^{T}s_{\gamma'}^{2}\nonumber \\
 & \leq & \sum_{i}(s_{\gamma'}^{4})_{i}=u^{4}(t).\label{eq:geodesic_changes}
\end{eqnarray}
Therefore, we have
\[
\norm{A_{\gamma}\gamma''}_{4}\leq\norm{A_{\gamma}\gamma''}_{2}\leq u^{2}(t).
\]

Plugging it into (\ref{eq:geodesic_error_d}), we have that
\[
\frac{du}{dt}\leq u^{2}(t)+u^{2}(t)\leq2u^{2}(t).
\]
Since $u(0)=v_{4}$, for all $t\geq0$, we have
\begin{equation}
u(t)\leq\frac{v_{4}}{1-2v_{4}t}.\label{eq:geodesic_inf_est}
\end{equation}
For $0\leq t\leq\frac{1}{10v_{4}}$, we have $u(t)\leq1.25v_{4}$
and this gives the first inequality. Using (\ref{eq:geodesic_changes}),
we get the second inequality.
\end{proof}
Using this, we prove that $V(\gamma)$ is small with high probability.
\begin{lem}
\label{lem:V0_bound}Assume that $h\leq\frac{1}{1000\sqrt{n}}$. We
have that
\[
P\left(V(\gamma)\leq24\right)\geq1-\frac{3}{n}
\]
where $\gamma$ is generated from geodesic walk.
\end{lem}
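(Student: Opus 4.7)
\medskip

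\noindent\textbf{Proof plan.} The bound splits naturally into an initial estimate at $t=0$ and a propagation along $0\le t\le \ell=\sqrt{nh}$.

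First I would control $s_{\gamma'(0)} = \frac{1}{\sqrt{n}}A_x w + \tfrac{1}{2}\sqrt{h/n}\,A_x\mu(x)$ separately in $\|\cdot\|_4$ and $\|\cdot\|_\infty$. Writing $A_x w = Bu$ with $u\sim N(0,I_n)$ and $B^TB=I_n$, $BB^T=P_x$, the map $u\mapsto\|Bu\|_4$ is $1$-Lipschitz in the Euclidean norm, and $\mathbb{E}\|Bu\|_4^4 = 3\sum_i (P_x)_{ii}^2 \le 3n$. Gaussian concentration then gives $\|A_x w\|_4 \le (3n)^{1/4} + O(\sqrt{\log n})$ with probability at least $1-1/n$. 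For the drift, $\|A_x\mu(x)\|_4 \le \|P_x\sigma_x\|_2 \le \|\sigma_x\|_2 \le \sqrt{n}$ (using $(P_x)_{ii}\le 1$ and $\tr P_x\le n$), so $\tfrac12\sqrt{h/n}\|A_x\mu(x)\|_4 \le \tfrac12\sqrt{h}$. Combining and using $h\le 1/(1000\sqrt{n})$ gives $\|s_{\gamma'(0)}\|_4 \le 2n^{-1/4}$. For the $\infty$-norm, each $(A_x w)_i\sim N(0,(P_x)_{ii})$ with $(P_x)_{ii}\le 1$, so a union bound yields $\|A_x w\|_\infty \le \sqrt{2\log(2mn)}$ with probability at least $1-1/n$. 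The same Cauchy--Schwarz bound gives $\|A_x\mu(x)\|_\infty \le \sqrt{n}$, so $\|s_{\gamma'(0)}\|_\infty \le \sqrt{2\log(2mn)/n} + \sqrt{h}/2$.

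Next I would propagate the $4$-norm using Lemma~\ref{lem:geodesic_4}: since $\|s_{\gamma'(0)}\|_4\le 2n^{-1/4}$, the hypothesis $t\le 1/(10v_4)$ is satisfied throughout $[0,\ell]$ as $\ell\le n^{1/4}/\sqrt{1000}<n^{1/4}/20$, giving $\|s_{\gamma'(t)}\|_4\le 2.5\,n^{-1/4}$ for all $t\in[0,\ell]$. This already bounds the $4$-norm contribution to $V(\gamma)$ by a small constant.

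The main work is to propagate the $\infty$-norm. Using $\gamma'' = (A_\gamma^T A_\gamma)^{-1}A_\gamma^T s_{\gamma'}^2$ from Lemma~\ref{lem:geo_equ_log}, one has the clean identity
\[
\frac{d}{dt}s_{\gamma'} = -s_{\gamma'}^2 + A_\gamma\gamma'' = (P_\gamma - I)s_{\gamma'}^2,
\]
whence $\|\tfrac{d}{dt}s_{\gamma'}\|_\infty \le \|s_{\gamma'}\|_\infty^2 + \|P_\gamma s_{\gamma'}^2\|_\infty$. Since $|(P_\gamma v)_i|\le \sqrt{(P_\gamma)_{ii}}\,\|v\|_2\le\|v\|_2$, this is bounded by $\|s_{\gamma'}\|_\infty^2 + \|s_{\gamma'}\|_4^2$. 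I would run a bootstrap: suppose $\|s_{\gamma'(t)}\|_\infty \le M$ on a maximal subinterval, use the step~3 bound to control $\|s_{\gamma'}\|_4^2\le 6.25/\sqrt{n}$, integrate over $[0,\ell]$, and obtain
\[
\|s_{\gamma'(t)}\|_\infty \le \|s_{\gamma'(0)}\|_\infty + \ell M^2 + O(\sqrt{h}).
\]
With $\ell=\sqrt{nh}$ and $h\le 1/(1000\sqrt{n})$, the terms $\ell M^2$ and $\ell\cdot h$ are negligible compared to $\sqrt{h}$ once $M=O(\sqrt{\log n/n}+\sqrt{h})$, so the bootstrap closes and $\|s_{\gamma'(t)}\|_\infty\le C(\sqrt{\log n/n}+\sqrt{h})$ throughout, for an absolute constant $C$. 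Summing the two failure probabilities and choosing the constants loosely gives $V(\gamma)\le 24$ with probability at least $1-3/n$.

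\smallskip

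\noindent\textbf{Where the difficulty sits.} The $4$-norm piece is essentially immediate once one observes the Lipschitz structure of $\|Bu\|_4$ and invokes Lemma~\ref{lem:geodesic_4}; the delicate step is the $\infty$-norm propagation, because the ODE is quadratically nonlinear and a naive Gronwall blows up on a scale shorter than $\ell$. The identity $\tfrac{d}{dt}s_{\gamma'}=(P_\gamma-I)s_{\gamma'}^2$ combined with the already-established $4$-norm control (which bounds the ``off-diagonal'' term $P_\gamma s_{\gamma'}^2$ by $\|s_{\gamma'}\|_4^2$ rather than $\|s_{\gamma'}\|_\infty\cdot\|s_{\gamma'}\|_1$) is what makes the bootstrap close under the step-size assumption $h\le 1/(1000\sqrt{n})$.
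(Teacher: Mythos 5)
Your proposal follows essentially the same route as the paper: Gaussian concentration for $\|s_{\gamma'(0)}\|_4$ and $\|s_{\gamma'(0)}\|_\infty$, propagation of the $4$-norm via Lemma~\ref{lem:geodesic_4}, and propagation of the $\infty$-norm by integrating the geodesic ODE and using the $4$-norm control to bound $\|A_\gamma\gamma''\|_\infty\le\|\gamma''\|_\gamma\le\sqrt{3}\,v_4^2=O(n^{-1/2})$. One small point in your favor: you keep the full identity $\frac{d}{dt}s_{\gamma'}=(P_\gamma-I)s_{\gamma'}^2$ and close a bootstrap to absorb the $-s_{\gamma'}^2$ term, whereas the paper's displayed inequality $\|s_{\gamma'(t)}\|_\infty\le\|A_\gamma\gamma'(0)\|_\infty+\int_0^\ell\|A_\gamma\gamma''\|_\infty\,dt$ silently drops that contribution (it is indeed negligible under $h\le 1/(1000\sqrt n)$, but your continuity argument is the cleaner way to justify that). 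Everything else, including the probability bookkeeping, matches the paper's proof.
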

\begin{proof}
From the definition of the geodesic walk (Lemma \ref{lem:geodesic_walk_log_def}),
we have that
\[
\gamma'(0)=u+v
\]
where $u\sim N\left(0,\frac{1}{n}\left(A_{\gamma}^{T}A_{\gamma}\right)^{-1}\right)$
and $v=\frac{1}{2}\sqrt{\frac{h}{n}}\left(A_{\gamma}^{T}A_{\gamma}\right)^{-1}A_{\gamma}^{T}\sigma_{\gamma}$.
Therefore, we have
\begin{eqnarray*}
\norm{A_{\gamma}\gamma'(0)}_{4} & \leq & \norm{A_{\gamma}u}_{4}+\norm{\frac{1}{2}\sqrt{\frac{h}{n}}A_{\gamma}\left(A_{\gamma}^{T}A_{\gamma}\right)^{-1}A_{\gamma}^{T}\sigma_{\gamma}}_{2}\\
 & = & \norm{A_{\gamma}u}_{4}+\frac{1}{2}\sqrt{\frac{h}{n}}\sqrt{\sigma_{\gamma}^{T}A_{\gamma}\left(A_{\gamma}^{T}A_{\gamma}\right)^{-1}A_{\gamma}^{T}\sigma_{\gamma}}.
\end{eqnarray*}

Note that $A_{\gamma}u=Bx$ where $B=\frac{1}{\sqrt{n}}A_{\gamma}\left(A_{\gamma}^{T}A_{\gamma}\right)^{-1/2}$
and $x\sim N(0,I)$. Since $\sum_{i}\norm{e_{i}^{T}B}_{2}^{4}=\frac{1}{n^{2}}\sum_{i}(\sigma_{\gamma})_{i}^{2}\leq\frac{1}{n}$
and $\norm B_{2\rightarrow4}\leq\norm B_{2\rightarrow2}=\frac{1}{\sqrt{n}}$.
Hence, Lemma \ref{lem:norm_random_Ax} shows that
\[
P\left(\norm{A_{\gamma}u}_{4}^{4}\leq\left(\left(\frac{3}{n}\right)^{1/4}+\frac{t}{\sqrt{n}}\right)^{4}\right)\leq1-\exp\left(-\frac{t^{2}}{2}\right).
\]
Using our assumption on $h$, we have that
\begin{align*}
v_{4}\defeq\norm{A_{\gamma}\gamma'(0)}_{4} & \leq\frac{3^{1/4}+\sqrt{2}}{n^{1/4}}+\frac{1}{2}\sqrt{h}\leq\frac{3}{n^{1/4}}
\end{align*}
with probability at least $1-\exp(-\sqrt{n})$. Now, we apply Lemma
\ref{lem:geodesic_4} to get that
\[
\norm{s_{\gamma'(t)}}_{4}\leq1.25v_{4}\leq\frac{4}{n^{1/4}}
\]
for all $0\leq t\leq\ell\leq\frac{1}{10v_{4}}$. 

Next, we estimate $\norm{s_{\gamma'(t)}}_{\infty}$. Since $e_{i}^{T}A_{\gamma}u=e_{i}^{T}Bx\sim N(0,\frac{\sigma_{i}}{n})$,
we have
\[
P_{u}\left(\left|e_{i}^{T}A_{\gamma}u\right|\geq\sqrt{\frac{\sigma_{i}}{n}}t\right)\leq2\exp\left(-\frac{t^{2}}{2}\right).
\]
Hence, we have that
\[
P_{u}\left(\norm{A_{\gamma}u}_{\infty}\geq2\sqrt{\frac{\log n}{n}}\right)\leq2\sum_{i}\exp\left(-\frac{2\log n}{\sigma_{i}}\right)
\]
Since $\sum_{i}\exp\left(-\frac{2\log n}{\sigma_{i}}\right)$ is concave
, the maximum of $\sum_{i}\exp\left(-\frac{\log n}{\sigma_{i}}\right)$
on the feasible set $\{0\leq\sigma\leq1,\sum\sigma_{i}=n\}$ occurs
on its vertices. Hence, we have that
\[
P_{u}\left(\norm{A_{\gamma}u}_{\infty}\geq2\sqrt{\frac{\log n}{n}}\right)\leq2n\exp\left(-2\log n\right)=\frac{2}{n}.
\]
Therefore, with probability $1-\frac{2}{n}$, we have that
\begin{align*}
\norm{A_{\gamma}\gamma'(0)}_{\infty} & \leq\norm{A_{\gamma}u}_{\infty}+\norm{\frac{1}{2}\sqrt{\frac{h}{n}}A_{\gamma}\left(A_{\gamma}^{T}A_{\gamma}\right)^{-1}A_{\gamma}^{T}\sigma_{\gamma}}_{\infty}\\
 & \leq2\sqrt{\frac{\log n}{n}}+\frac{1}{2}\sqrt{\frac{h}{n}}\sqrt{\sigma_{\gamma}^{T}A_{\gamma}\left(A_{\gamma}^{T}A_{\gamma}\right)^{-1}A_{\gamma}^{T}\sigma_{\gamma}}\\
 & \leq2\sqrt{\frac{\log n}{n}}+\frac{1}{2}\sqrt{h}.
\end{align*}
Lemma \ref{lem:geodesic_4} shows that $\norm{A_{\gamma}\gamma''}_{\infty}\leq\norm{\gamma''}_{\gamma(t)}\leq\sqrt{3}v_{4}^{2}\leq16n^{-1/2}$.
Hence, for any $0\leq t\leq\ell$, we have that

\begin{eqnarray*}
\norm{s_{\gamma'(t)}}_{\infty} & \leq & \norm{A_{\gamma}\gamma'(0)}_{\infty}+\int_{0}^{\ell}\norm{A_{\gamma(t)}\gamma''(t)}_{\infty}dt\\
 & \leq & 2\sqrt{\frac{\log n}{n}}+\frac{1}{2}\sqrt{h}+16\ell n^{-1/2}\\
 & = & 20\left(\sqrt{\frac{\log n}{n}}+\sqrt{h}\right).
\end{eqnarray*}

Combining with our estimate on $\norm{s_{\gamma'(t)}}_{4}$, we have
that
\[
P\left(\frac{\norm{s_{\gamma'(t)}}_{4}}{n^{-1/4}}+\frac{\norm{s_{\gamma'(t)}}_{\infty}}{\sqrt{\frac{\log n}{n}}+\sqrt{h}}\geq24\right)\leq e^{-\sqrt{n}}+\frac{2}{n}\leq\frac{3}{n}.
\]
\end{proof}
Due to this lemma, we set
\begin{equation}
V_{0}=48.\label{eq:log_V0}
\end{equation}
Here, we collect some simple consequences of small $V(\gamma)$ that
we will use through this section.
\begin{lem}
\label{lem:gamma_est}Given a geodesic $\gamma$ on $M_{L}$ with
$V(\gamma)\leq V_{0}$. For any $0\leq t\leq\ell$,

\begin{enumerate}
\item $\norm{A_{\gamma}\gamma'(t)}_{4}\leq48n^{-1/4}$.
\item $\norm{A_{\gamma}\gamma'(t)}_{\infty}\leq48\left(\sqrt{\frac{\log n}{n}}+\sqrt{h}\right)$.
\item $\norm{A_{\gamma}\gamma'(t)}_{2}\leq48$.
\item $\norm{\gamma''(t)}_{\gamma}^{2}\leq10^{8}n^{-1}$.
\end{enumerate}
\end{lem}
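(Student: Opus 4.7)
The plan is to establish the four bounds in turn, with (1), (2), (4) being essentially immediate and (3) requiring one extra observation. All of these follow directly from the hypothesis $V(\gamma)\le V_0=48$ together with the explicit geodesic ODE on $M_L$.

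Bounds (1) and (2) come straight out of the definition of $V$ given in Section~\ref{sec:walk_is_random_log}. That definition expresses $V(\gamma)$ as the supremum over $t \in [0,\ell]$ of a sum of two nonnegative quantities, so the hypothesis $V(\gamma)\le V_0=48$ forces each summand separately to be at most $48$ for every $t$. Unwinding the normalizations gives $\|A_\gamma \gamma'(t)\|_4 = \|s_{\gamma'(t)}\|_4 \le 48\,n^{-1/4}$ and $\|A_\gamma \gamma'(t)\|_\infty = \|s_{\gamma'(t)}\|_\infty \le 48(\sqrt{(\log n)/n}+\sqrt{h})$.

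For bound (4), I plug into the explicit geodesic equation from Lemma~\ref{lem:geo_equ_log}, $\gamma''=(A_\gamma^T A_\gamma)^{-1}A_\gamma^T s_{\gamma'}^2$. A direct calculation of the $\gamma$-norm gives
\[
\|\gamma''\|_\gamma^2 = (s_{\gamma'}^2)^T A_\gamma (A_\gamma^T A_\gamma)^{-1} A_\gamma^T (s_{\gamma'}^2) = (s_{\gamma'}^2)^T P_\gamma (s_{\gamma'}^2) \le \|s_{\gamma'}^2\|_2^2 = \|s_{\gamma'}\|_4^4,
\]
where the inequality uses that $P_\gamma$ is an orthogonal projection and so $P_\gamma \preceq I$. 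Substituting bound (1) then yields $\|\gamma''\|_\gamma^2 \le 48^4/n < 10^8/n$, exactly as claimed.

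Bound (3) is the most delicate. The key observation is that $\|A_\gamma \gamma'(t)\|_2^2 = (\gamma')^T A_\gamma^T A_\gamma \gamma' = \|\gamma'(t)\|_{\gamma(t)}^2$ is \emph{constant} in $t$, because $\gamma$ is a geodesic and so $D_t \gamma' = 0$, which by Fact~\ref{fact:basic_RG}(5) (metric preservation applied with $u=w=\gamma'$) implies $\tfrac{d}{dt}\|\gamma'\|_{\gamma}^2=0$. It therefore suffices to bound this common value at a single $t$, and I will do so by interpolating between the $\ell_4$ bound (1) and the $\ell_\infty$ bound (2). The main obstacle in this lemma is purely numerical — choosing the interpolation exponents and verifying that the constants work out against the stated bound $48$, and that the constant in (4) satisfies $3\cdot 48^4 < 10^8$. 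No new conceptual machinery is needed: the entire lemma is a mechanical readout of $V(\gamma)\le V_0$, the geodesic ODE, and constant-speed of geodesics.
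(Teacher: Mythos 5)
Parts (1), (2), and (4) of your proposal are correct. For (1) and (2) you correctly observe that each nonnegative summand in the definition of $V(\gamma)$ is bounded by $V_0=48$ once the sum is, which matches the paper exactly. For (4) your direct calculation $\norm{\gamma''}_\gamma^2=(s_{\gamma'}^2)^T P_\gamma s_{\gamma'}^2\le\norm{s_{\gamma'}}_4^4\le 48^4 n^{-1}$ is correct and is in fact slightly cleaner than the paper, which instead re-invokes Lemma~\ref{lem:geodesic_4} (paying an extra factor of $3$ that your argument avoids); both land comfortably under $10^8 n^{-1}$.

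Part (3) has a genuine gap. You propose to bound $\norm{A_\gamma\gamma'(t)}_2$ by ``interpolating between the $\ell_4$ bound and the $\ell_\infty$ bound,'' but no such interpolation exists: there is no inequality of the form $\norm x_2\le C\,\norm x_4^\theta\norm x_\infty^{1-\theta}$, because an upper bound on $\norm x_2$ needs an $\ell_p$ norm with $p<2$ (or a bound on the support / ambient dimension). Interpolation with $\ell_4$ and $\ell_\infty$ can only control $\norm x_q$ for $q\ge 4$. Concretely, a vector proportional to $(1,1,\dots,1)\in\R^m$ has bounded $\ell_4$ and $\ell_\infty$ norms while its $\ell_2$ norm grows like $m^{1/4}$ times the $\ell_4$ norm, so the obstacle is not ``purely numerical'' — the plan would fail at the level of choosing exponents. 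Your constant-speed observation $\norm{A_\gamma\gamma'(t)}_2=\norm{\gamma'(t)}_{\gamma(t)}=\mathrm{const}$ is correct but doesn't help here, since the $\ell_4$ and $\ell_\infty$ bounds from $V$ already hold uniformly in $t$, and the issue is passing from them to $\ell_2$ at any single $t$. For comparison, the paper disposes of (3) with the one-line step $\norm{A_\gamma\gamma'}_2\le n^{1/4}\norm{A_\gamma\gamma'}_4\le 48$ — an inequality you should scrutinize as well, since for a general vector in $\R^m$ the exponent would be $m^{1/4}$, not $n^{1/4}$; the paper supplies no further justification, so either an additional structural fact about $s_{\gamma'}=A_\gamma\gamma'$ is being used tacitly, or the bound on $\norm{\gamma'(0)}_{\gamma(0)}$ must really come from the construction of the walk (where $\gamma'(0)=n^{-1/2}w+\tfrac12\sqrt{h/n}\,\mu$, as noted just before Lemma~\ref{lem:geodesic_walk_log_def}) rather than from $V(\gamma)\le V_0$ alone.
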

\begin{proof}
The first two inequality simply follows from the definition of $V(\gamma)$.
The third inequality comes from the calculation
\[
\norm{A_{\gamma}\gamma'}_{2}\leq n^{1/4}\norm{A_{\gamma}\gamma'}_{4}\leq48.
\]
Since $\norm{A_{\gamma}\gamma'(0)}_{4}\leq48n^{-1/4}$, Lemma \ref{lem:geodesic_4}
shows the last inequality.
\end{proof}

\subsection{Stability of Drift ($D_{0}$, $D_{1}$ and $D_{2}$)\label{sec:drift_stable_log}}
\begin{lem}
\label{lem:log_D0}For any $x\in M_{L}$, we have that $\norm{\mu(x)}_{x}^{2}\le n.$
Hence, $D_{0}\leq\sqrt{n}$.
\end{lem}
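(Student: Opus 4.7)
The plan is to compute $\|\mu(x)\|_x^2$ in closed form using the explicit formula $\mu(x) = (A_x^T A_x)^{-1} A_x^T \sigma_x$ from Lemma \ref{lem:geodesic_walk_log_def}, and then bound it using two standard facts about the leverage scores $\sigma_x = \diag(P_x)$, namely that $0 \le (\sigma_x)_i \le 1$ (since $P_x$ is a projection matrix, its diagonal entries lie in $[0,1]$) and that $\tr(P_x) = n$ (the rank of the $n$-dimensional projection).

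First I would expand $\|\mu(x)\|_x^2$ using $\|v\|_x^2 = v^T A_x^T A_x v$:
\[
\|\mu(x)\|_x^2 = \sigma_x^T A_x (A_x^T A_x)^{-1} (A_x^T A_x) (A_x^T A_x)^{-1} A_x^T \sigma_x = \sigma_x^T P_x \sigma_x.
\]
Since $P_x$ is an orthogonal projection, $P_x \preceq I$, so $\sigma_x^T P_x \sigma_x \le \sigma_x^T \sigma_x = \sum_i (\sigma_x)_i^2$. Using $(\sigma_x)_i \in [0,1]$, we have $(\sigma_x)_i^2 \le (\sigma_x)_i$, and therefore
\[
\|\mu(x)\|_x^2 \le \sum_i (\sigma_x)_i = \tr(P_x) = n.
\]
The bound $D_0 \le \sqrt{n}$ follows by taking square roots and supremizing over $x \in M_L$.

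There is no real obstacle here: the entire argument is a two-line computation once the correct formula for $\mu$ is in hand. The only thing to justify carefully is $(\sigma_x)_i \in [0,1]$, which is immediate from $(P_x)_{ii} = e_i^T P_x e_i = \|P_x e_i\|_2^2$ together with $\|P_x e_i\|_2 \le \|e_i\|_2 = 1$, and $\tr(P_x) = n$, which holds because $P_x = A_x (A_x^T A_x)^{-1} A_x^T$ satisfies $\tr(P_x) = \tr((A_x^T A_x)^{-1} A_x^T A_x) = \tr(I_n) = n$ (the polytope is assumed to be full-dimensional so $A_x$ has rank $n$).
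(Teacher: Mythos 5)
Your proof is correct and is essentially identical to the paper's: both compute $\|\mu(x)\|_x^2 = \sigma_x^T P_x \sigma_x$, drop the projection $P_x \preceq I$ to get $\sum_i (\sigma_x)_i^2$, and then use $(\sigma_x)_i \in [0,1]$ with $\tr(P_x) = n$ to conclude. You have simply written out the justification of $(\sigma_x)_i \le 1$ explicitly, which the paper leaves implicit.
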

\begin{proof}
We have
\begin{align*}
\norm{\mu(x)}_{x}^{2} & =\sigma_{x}^{T}A_{x}\left(A_{x}^{T}A_{x}\right)^{-1}A_{x}\sigma_{x}\leq\sum_{i}(\sigma_{x})_{i}^{2}\le n.
\end{align*}
\end{proof}
\begin{lem}
\label{lem:log_D1}Let $\gamma(t)$ be a geodesic on $M_{L}$ with
$V(\gamma)\leq V_{0}$. Then, we have that
\[
D_{1}=\sup_{0\leq t\leq\ell}\left|\frac{d}{dt}\norm{\mu(\gamma(t))}^{2}\right|=O(n\sqrt{h}+\sqrt{n\log n}).
\]
\end{lem}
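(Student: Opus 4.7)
The plan is to express $\|\mu(\gamma)\|_\gamma^2$ in a form amenable to the elementary calculus identities listed at the start of Section~\ref{subsec:RG_L}, differentiate, and then bound each resulting piece by combining the bound $\|\mu\|_\gamma^2 \le n$ from Lemma~\ref{lem:log_D0} with the norm estimates on $s_{\gamma'}$ available from $V(\gamma)\le V_0$ (Lemma~\ref{lem:gamma_est}).

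Since $\mu(\gamma) = (A_\gamma^T A_\gamma)^{-1} A_\gamma^T \sigma_\gamma$, I first rewrite
\[
\|\mu(\gamma)\|_\gamma^2 = \mu^T (A_\gamma^T A_\gamma)\mu = \sigma_\gamma^T A_\gamma (A_\gamma^T A_\gamma)^{-1} A_\gamma^T \sigma_\gamma = \sigma_\gamma^T P_\gamma \sigma_\gamma.
\]
Differentiating and using $\frac{dP_\gamma}{dt} = -S_{\gamma'}P_\gamma - P_\gamma S_{\gamma'} + 2 P_\gamma S_{\gamma'} P_\gamma$ and $\frac{d\sigma_\gamma}{dt} = \mathrm{diag}(\frac{dP_\gamma}{dt}) = -2\, s_{\gamma'}\odot \sigma_\gamma + 2\,\mathrm{diag}(P_\gamma S_{\gamma'} P_\gamma)$, the derivative decomposes into a bounded number of terms, each of one of two shapes: $\sigma_\gamma^T M \sigma_\gamma$ with $M$ a product of $P_\gamma$'s and one $S_{\gamma'}$, or $\mathrm{diag}(P_\gamma S_{\gamma'} P_\gamma)^T P_\gamma \sigma_\gamma$.

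For the first shape, the key identity is $\|P_\gamma \sigma_\gamma\|_2^2 = \sigma_\gamma^T P_\gamma \sigma_\gamma = \|\mu\|_\gamma^2 \le n$. Thus
\[
|\sigma_\gamma^T P_\gamma S_{\gamma'} P_\gamma \sigma_\gamma| \le \|s_{\gamma'}\|_\infty\, \|P_\gamma\sigma_\gamma\|_2^2 \le \|s_{\gamma'}\|_\infty \cdot n,
\]
and similarly $|\sigma_\gamma^T S_{\gamma'} P_\gamma \sigma_\gamma| \le \|s_{\gamma'}\|_\infty \|\sigma_\gamma\|_2 \|P_\gamma\sigma_\gamma\|_2 \le \|s_{\gamma'}\|_\infty\cdot n$ (using $\|\sigma_\gamma\|_2\le\sqrt n$). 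Under $V(\gamma)\le V_0$, Lemma~\ref{lem:gamma_est} gives $\|s_{\gamma'}\|_\infty = O(\sqrt{(\log n)/n} + \sqrt{h})$, so each such term is $O(\sqrt{n\log n} + n\sqrt h)$. For the second shape, I would bound $\|\mathrm{diag}(P_\gamma S_{\gamma'} P_\gamma)\|_2^2 \le \mathrm{tr}(P_\gamma S_{\gamma'} P_\gamma S_{\gamma'} P_\gamma) \le \|s_{\gamma'}\|_\infty^2\, \mathrm{tr}(P_\gamma) = n\|s_{\gamma'}\|_\infty^2$ and combine with Cauchy--Schwarz and $\|P_\gamma\sigma_\gamma\|_2\le\sqrt n$ to again obtain $O(n \|s_{\gamma'}\|_\infty) = O(\sqrt{n\log n} + n\sqrt h)$.

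Collecting all $O(1)$-many such contributions yields $|\frac{d}{dt}\|\mu(\gamma(t))\|^2| = O(n\sqrt h + \sqrt{n\log n})$, which is the claimed bound on $D_1$. The main technical care is in recognizing that every term that survives after using $P_\gamma^2 = P_\gamma$ and $\mathrm{tr}(P_\gamma)\le n$ reduces to a factor $\|s_{\gamma'}\|_\infty$ times the already-controlled quantity $\|P_\gamma\sigma_\gamma\|_2^2 \le n$; in particular, it is crucial to use $\|\cdot\|_\infty$ (rather than $\|\cdot\|_2$) on $s_{\gamma'}$ so that the $\sqrt{h}+\sqrt{\log n/n}$ savings from the Gaussian-driven randomness of the geodesic (Lemma~\ref{lem:V0_bound}) is actually exploited. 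This is the only delicate step; the rest is bookkeeping.
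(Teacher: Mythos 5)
Your proposal is correct and follows essentially the same route as the paper: write $\norm{\mu}^2_\gamma = \sigma_\gamma^T P_\gamma \sigma_\gamma$, differentiate via the product rule for $P_\gamma$ and $\sigma_\gamma$, and bound the $O(1)$ resulting terms by Cauchy--Schwarz together with $\sigma_i \le 1$, $\norm{\sigma_\gamma}_2 \le \sqrt n$, and the $V(\gamma)\le V_0$ estimates on $\norm{s_{\gamma'}}_\infty$. The only cosmetic difference is that the paper bounds a couple of terms by the slightly tighter $\norm{\gamma'}_\gamma\sqrt n = O(\sqrt n)$ where you use $n\norm{s_{\gamma'}}_\infty$ throughout, but since $\sqrt n = O(\sqrt{n\log n})$ this does not change the final $O(n\sqrt h + \sqrt{n\log n})$.
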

\begin{proof}
Note that
\begin{eqnarray*}
\norm{\mu(\gamma(t))}_{\gamma}^{2} & = & \sigma_{\gamma}^{T}A_{\gamma}\left(A_{\gamma}^{T}A_{\gamma}\right)^{-1}A_{\gamma}\sigma_{\gamma}\\
 & = & 1^{T}\Diag(P_{\gamma})P_{\gamma}\Diag(P_{\gamma})1.
\end{eqnarray*}
Using $\frac{d}{dt}P_{\gamma}=-S_{\gamma'}P_{\gamma}-P_{\gamma}S_{\gamma'}+2P_{\gamma}S_{\gamma'}P_{\gamma}$,
we have
\begin{eqnarray*}
\frac{d}{dt}\norm{\mu(\gamma(t))}_{\gamma}^{2} & = & -2\cdot1^{T}\Diag(S_{\gamma'}P_{\gamma})P_{\gamma}\Diag(P_{\gamma})1\\
 &  & -2\cdot1^{T}\Diag(P_{\gamma}S_{\gamma'})P_{\gamma}\Diag(P_{\gamma})1\\
 &  & +4\cdot1^{T}\Diag(P_{\gamma}S_{\gamma'}P_{\gamma})P_{\gamma}\Diag(P_{\gamma})1\\
 &  & -1^{T}\Diag(P_{\gamma})S_{\gamma'}P_{\gamma}\Diag(P_{\gamma})1\\
 &  & -1^{T}\Diag(P_{\gamma})P_{\gamma}S_{\gamma'}\Diag(P_{\gamma})1\\
 &  & +2\cdot1^{T}\Diag(P_{\gamma})P_{\gamma}S_{\gamma'}P_{\gamma}\Diag(P_{\gamma})1\\
 & = & -2\cdot1^{T}\Diag(S_{\gamma'}P_{\gamma})P_{\gamma}\Diag(P_{\gamma})1\\
 &  & -4\cdot1^{T}\Diag(P_{\gamma})S_{\gamma'}P_{\gamma}\Diag(P_{\gamma})1\\
 &  & +4\cdot1^{T}\Diag(P_{\gamma}S_{\gamma'}P_{\gamma})P_{\gamma}\Diag(P_{\gamma})1\\
 &  & +2\cdot1^{T}\Diag(P_{\gamma})P_{\gamma}S_{\gamma'}P_{\gamma}\Diag(P_{\gamma})1.
\end{eqnarray*}
Now, we bound these 4 terms separately. Note that
\begin{align*}
\left|1^{T}\Diag(S_{\gamma'}P_{\gamma})P_{\gamma}\Diag(P_{\gamma})1\right| & \leq\sqrt{1^{T}\Diag(S_{\gamma'}P_{\gamma})P_{\gamma}\Diag(S_{\gamma'}P_{\gamma})1}\sqrt{1^{T}\Diag(P_{\gamma})P_{\gamma}\Diag(P_{\gamma})1}\\
 & \leq\sqrt{\sum_{i}(\sigma_{\gamma}s_{\gamma'})_{i}^{2}}\sqrt{\sum_{i}(\sigma_{\gamma})_{i}^{2}}\leq\norm{\gamma'}_{\gamma}\sqrt{n},
\end{align*}
\begin{align*}
\left|1^{T}\Diag(P_{\gamma})S_{\gamma'}P_{\gamma}\Diag(P_{\gamma})1\right| & \leq\sqrt{1^{T}\Diag(P_{\gamma})S_{\gamma'}P_{\gamma}S_{\gamma'}\Diag(P_{\gamma})1}\sqrt{1^{T}\Diag(P_{\gamma})P_{\gamma}\Diag(P_{\gamma})1}\\
 & \leq\sqrt{\sum_{i}(\sigma_{\gamma}s_{\gamma'})_{i}^{2}}\sqrt{\sum_{i}(\sigma_{\gamma})_{i}^{2}}\leq\norm{\gamma'}_{\gamma}\sqrt{n},
\end{align*}
\begin{align*}
\left|1^{T}\Diag(P_{\gamma}S_{\gamma'}P_{\gamma})P_{\gamma}\Diag(P_{\gamma})1\right| & \leq\sqrt{1^{T}\Diag(P_{\gamma}S_{\gamma'}P_{\gamma})P_{\gamma}\Diag(P_{\gamma}S_{\gamma'}P_{\gamma})1}\sqrt{1^{T}\Diag(P_{\gamma})P_{\gamma}\Diag(P_{\gamma})1}\\
 & \leq\sqrt{\sum_{i}(P_{\gamma}S_{\gamma'}P_{\gamma})_{ii}^{2}}\sqrt{\sum_{i}(\sigma_{\gamma})_{i}^{2}}\\
 & \leq\norm{S_{\gamma'}}_{\infty}\sum_{i}(\sigma_{\gamma})_{i}^{2}\leq\norm{S_{\gamma'}}_{\infty}n,
\end{align*}
\begin{align*}
1^{T}\Diag(P_{\gamma})P_{\gamma}S_{\gamma'}P_{\gamma}\Diag(P_{\gamma})1 & \leq\norm{S_{\gamma'}}_{\infty}1^{T}\Diag(P_{\gamma})P_{\gamma}P_{\gamma}\Diag(P_{\gamma})1\\
 & \leq\norm{S_{\gamma'}}_{\infty}\sum_{i}(\sigma_{\gamma})_{i}^{2}\leq\norm{S_{\gamma'}}_{\infty}n.
\end{align*}
Since $\norm{\gamma'}_{\gamma}\leq48$ and $\norm{A_{\gamma}\gamma'}_{\infty}\leq48\left(\sqrt{\frac{\log n}{n}}+\sqrt{h}\right)$
(Lemma \ref{lem:gamma_est}), we have
\[
D_{1}=\sup\left|\frac{d}{dt}\norm{\mu(\gamma(t))}^{2}\right|\leq288\sqrt{n}+288n\left(\sqrt{\frac{\log n}{n}}+\sqrt{h}\right)\leq576\sqrt{n\log n}+288n\sqrt{h}
\]
\end{proof}
\begin{lem}
\label{lem:log_D2}Given $x\in M_{L}$ and any curve $c(t)$ starting
at $x$ with unit speed. We have that
\[
\norm{D_{t}\mu(c(t))}=O(\sqrt{n}).
\]
Hence, $D_{2}=O(\sqrt{n})$.
\end{lem}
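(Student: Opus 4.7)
The plan is to compute $D_{t}\mu(c(t))$ explicitly by combining the ordinary derivative of $\mu = (A_{c}^{T}A_{c})^{-1}A_{c}^{T}\sigma_{c}$ with the Christoffel correction from Lemma \ref{lem:geo_equ_log}, observe that the result takes the form $G^{-1}A_{c}^{T}v$ (where $G = A_{c}^{T}A_{c}$), and then bound $v$ entrywise.

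First I would differentiate each piece of $\mu$ using the elementary rules collected at the start of Section \ref{subsec:RG_L}: $\frac{d}{dt}A_{c} = -S_{c'}A_{c}$, $G' = -2A_{c}^{T}S_{c'}A_{c}$, $(G^{-1})' = 2G^{-1}A_{c}^{T}S_{c'}A_{c}G^{-1}$, and $\sigma_{c}' = -2S_{c'}\sigma_{c} + 2\diag(P_{c}S_{c'}P_{c})$. Collecting terms gives
\[
\tfrac{d\mu}{dt} \;=\; 2G^{-1}A_{c}^{T}S_{c'}P_{c}\sigma_{c} - G^{-1}A_{c}^{T}S_{c'}\sigma_{c} + G^{-1}A_{c}^{T}\sigma_{c}'.
\]
By Lemma \ref{lem:geo_equ_log}, $D_{t}\mu = \frac{d\mu}{dt} - G^{-1}A_{c}^{T}S_{c'}s_{c,\mu}$, and since $s_{c,\mu} = A_{c}\mu = P_{c}\sigma_{c}$, one term cancels and the expression simplifies to
\[
D_{t}\mu \;=\; G^{-1}A_{c}^{T}\,\bigl[\,S_{c'}P_{c}\sigma_{c} - 3S_{c'}\sigma_{c} + 2\diag(P_{c}S_{c'}P_{c})\,\bigr].
\]

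Next I would pass to the manifold norm. Writing $v$ for the bracketed vector, the key identity $\|G^{-1}A_{c}^{T}v\|_{c}^{2} = v^{T}A_{c}G^{-1}A_{c}^{T}v = v^{T}P_{c}v \le \|v\|_{2}^{2}$ reduces the problem to bounding $\|v\|_{2}$. For this I use: $\|s_{c'}\|_{\infty} \le \|s_{c'}\|_{2} = \|c'\|_{c} = 1$; $\|\sigma_{c}\|_{2} \le \sqrt{n}$ since $0 \le \sigma_{i} \le 1$ and $\sum \sigma_{i}=n$; and $\|P_{c}\sigma_{c}\|_{2} = \|s_{c,\mu}\|_{2} = \|\mu\|_{c} \le \sqrt{n}$ by Lemma \ref{lem:log_D0}. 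This immediately gives $\|S_{c'}\sigma_{c}\|_{2} \le \sqrt{n}$ and $\|S_{c'}P_{c}\sigma_{c}\|_{2} \le \sqrt{n}$. For the third term, a Cauchy--Schwarz estimate on rows of $P_{c}$ yields $(P_{c}S_{c'}P_{c})_{ii}^{2} \le \sigma_{i}\cdot(P_{c}S_{c'}^{2}P_{c})_{ii}$, so
\[
\|\diag(P_{c}S_{c'}P_{c})\|_{2}^{2} \;\le\; \max_{i}\sigma_{i}\cdot\tr(P_{c}S_{c'}^{2}P_{c}) \;\le\; \tr(S_{c'}^{2}P_{c}) \;\le\; \|s_{c'}\|_{2}^{2} \;=\; 1.
\]
Combining the three bounds gives $\|v\|_{2} = O(\sqrt{n})$, hence $\|D_{t}\mu\|_{c} = O(\sqrt{n})$ and $D_{2} = O(\sqrt{n})$.

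The only potentially delicate step is the bookkeeping in the simplification of $D_{t}\mu$: one must verify that the $(G^{-1})'$, $(A_{c}^{T})'$ and Christoffel contributions collapse so that only the clean combination $S_{c'}P_{c}\sigma_{c} - 3S_{c'}\sigma_{c} + 2\diag(P_{c}S_{c'}P_{c})$ survives. Everything else is a direct application of $P_{c}^{2}=P_{c}$, $\|P_{c}\|_{op}\le 1$, and the Cauchy--Schwarz trick for bounding the diagonal of a sandwich $PSP$.
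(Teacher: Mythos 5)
Your proof is correct and uses essentially the same technique as the paper: express $D_t\mu$ in the form $(A_c^TA_c)^{-1}A_c^T v$, reduce $\norm{D_t\mu}_c$ to $\norm{v}_2$ via $P_c\preceq I$, and bound $v$ with $\norm{s_{c'}}_2=1$, $\norm{\sigma_c}_2\le\sqrt n$, $\norm{\mu}_c\le\sqrt n$ and a Cauchy--Schwarz estimate on $\diag(P_cS_{c'}P_c)$. The only cosmetic difference is that you first combine the ordinary derivative with the Christoffel correction so that the $S_{c'}P_c\sigma_c$ term partially cancels, whereas the paper bounds the two contributions separately (both give $O(\sqrt n)$).
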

\begin{proof}
By Lemma \ref{lem:geo_equ_log}, we have that
\[
D_{t}\mu(c(t))=\frac{d\mu(c)}{dt}-(A_{c}^{T}A_{c})^{-1}A_{c}^{T}S_{c'}s_{c,\mu}.
\]
We bound the terms separately. 

For the second term, since $c$ is an unit speed curve, we have that
$\norm{s_{c'}}_{2}=1$ and 
\begin{eqnarray*}
\norm{(A_{c}^{T}A_{c})^{-1}A_{c}^{T}S_{c'}s_{c,\mu}}_{c} & \leq & \norm{S_{c'}s_{c,\mu}}_{2}\leq\norm{S_{c,\mu}}_{\infty}\\
 & = & \norm{A_{c}(A_{c}^{T}A_{c})^{-1}A_{c}^{T}\sigma_{c}}_{\infty}\\
 & \leq & \norm{A_{c}(A_{c}^{T}A_{c})^{-1}A_{c}^{T}\sigma_{c}}_{2}\\
 & \leq & \norm{\sigma_{c}}_{2}\leq\sqrt{n}.
\end{eqnarray*}

For the first term,
\begin{eqnarray*}
\frac{d}{dt}\left(A_{c}\mu(c)\right) & = & \frac{d}{dt}\left(P_{c}\diag(P_{c})\right)\\
 & = & -S_{c'}P_{c}\diag(P_{c})-P_{c}S_{c'}\diag(P_{c})+2P_{c}S_{c'}P_{c}\diag(P_{c})\\
 &  & -P_{c}\diag(S_{c'}P_{c})-P_{c}\diag(P_{c}S_{c'})+2P_{c}\diag(P_{c}S_{c'}P_{c}).
\end{eqnarray*}
Using $\norm{s_{c'}}_{2}=1$, we have
\begin{eqnarray*}
\norm{\frac{d}{dt}\mu(c)}_{c} & \leq & \norm{S_{c'}A_{c}\mu(c)}_{2}\\
 &  & +\norm{S_{c'}P_{c}\diag(P_{c})}_{2}+2\norm{P_{c}S_{c'}\diag(P_{c})}_{2}+2\norm{P_{c}S_{c'}P_{c}\diag(P_{c})}_{2}\\
 &  & +\norm{P_{c}\diag(P_{c}S_{c'})}_{2}+\norm{P_{c}\diag(P_{c}S_{c'}P_{c})}_{2}\\
 & \leq & 8\norm{S_{c'}}_{\infty}\sqrt{n}\leq8\sqrt{n}.
\end{eqnarray*}

Combining both terms, we have the result.
\end{proof}

\subsection{Smoothness of the metric ($G_{1}$)}

\label{subsec:smoothness_metric_log}
\begin{lem}
\label{lem:log_G1}Let $\gamma(t)$ be a geodesic on $M_{L}$ with
$V(\gamma)\leq V_{0}$. Let $f(t)=\log\det(A_{\gamma(t)}^{T}A_{\gamma(t)})$.
Then, we have 
\[
\sup_{0\leq t\leq\ell}\left|f'''(t)\right|=O\left(\sqrt{h}+\sqrt{\frac{\log n}{n}}\right).
\]
Hence, $G_{1}=O\left(\sqrt{\frac{\log n}{n}}+\sqrt{h}\right)$.
\end{lem}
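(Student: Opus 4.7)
The plan is to bound $f'''(t)$ by direct computation in Euclidean coordinates, then a term-by-term H\"older/Cauchy--Schwarz estimate that leverages the $L^\infty$ smallness of $s_{\gamma'}$ guaranteed by $V(\gamma)\le V_0$. First, using $\frac{d}{dt}A_\gamma = -S_{\gamma'} A_\gamma$ and Jacobi's formula, I get $f'(t) = -2\tr(P_\gamma S_{\gamma'}) = -2\sigma_\gamma^T s_{\gamma'}$. Differentiating with $\dot{P}_\gamma = -S_{\gamma'}P_\gamma - P_\gamma S_{\gamma'} + 2P_\gamma S_{\gamma'}P_\gamma$, $\dot{S}_{\gamma'} = -S_{\gamma'}^{2} + S_{\gamma''}$, and the geodesic equation $s_{\gamma''} = P_\gamma s_{\gamma'}^{2}$ (Lemma \ref{lem:geo_equ_log}), yields the closed form
\[
f''(t) \;=\; 6\,\sigma_\gamma^T s_{\gamma'}^{2} \;-\; 4\, s_{\gamma'}^T P_\gamma^{(2)} s_{\gamma'} \;-\; 2\,\sigma_\gamma^T P_\gamma s_{\gamma'}^{2},
\]
which is $O(1)$. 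A third differentiation, reusing the same three identities, writes $f'''(t)$ as a finite sum of traces of the form $\sigma_\gamma^T M\, s_{\gamma'}^{k}$ or $s_{\gamma'}^T M\, s_{\gamma'}^{k}$ where $M$ is a short word in $P_\gamma$, $S_{\gamma'}$, $\Diag(\cdot)$, and the $s_{\gamma''}$-factors are reduced back to $P_\gamma s_{\gamma'}^2$.

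The key observation is that, compared to every term in $f''$, every term in $f'''$ carries one extra factor of $s_{\gamma'}$ (or of $s_{\gamma''}=P_\gamma s_{\gamma'}^{2}$, which contains two factors of $s_{\gamma'}$). Writing $\epsilon := \sqrt{\log n/n}+\sqrt{h}$, the assumption $V(\gamma)\le V_0$ gives $\|s_{\gamma'}\|_\infty \le 48\epsilon$, $\|s_{\gamma'}\|_2\le 48$, and $\|s_{\gamma'}\|_4\le 48 n^{-1/4}$ (Lemma \ref{lem:gamma_est}). Combining these with $\|\sigma_\gamma\|_\infty\le 1$, $P_\gamma^{2}=P_\gamma$, and the Schur-product inequality $P_\gamma^{(2)} \preceq \Diag(\sigma_\gamma)$, I bound a typical term such as $s_{\gamma'}^T P_\gamma^{(2)}(s_{\gamma'}\!\cdot\! s_{\gamma'})$ by $\|s_{\gamma'}\|_\infty\, s_{\gamma'}^T P_\gamma^{(2)} s_{\gamma'} \le \|s_{\gamma'}\|_\infty\, \sigma_\gamma^T s_{\gamma'}^{2} = O(\epsilon)$, and a typical term such as $\sigma_\gamma^T(s_{\gamma'}\!\cdot\! s_{\gamma''})$ by $\|s_{\gamma'}\|_\infty \|\sigma_\gamma\|_2 \|s_{\gamma''}\|_2 \le \|s_{\gamma'}\|_\infty \sqrt{n}\,\|s_{\gamma'}\|_4^{2} = O(\epsilon)$, since $\|s_{\gamma''}\|_2 = \|P_\gamma s_{\gamma'}^{2}\|_2 \le \|s_{\gamma'}\|_4^{2}=O(n^{-1/2})$. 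Summing the finitely many such terms yields $|f'''(t)| = O(\epsilon)$, which is the claim, and hence $G_1 = O(\epsilon)$.

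The main obstacle is simply the bookkeeping: $f'''$ expands into a dozen or so distinct trace expressions, and for each one the right pairing of H\"older/Cauchy--Schwarz with the Schur bound $P^{(2)}\preceq\Diag(\sigma)$ (or with $\|\gamma''\|_\gamma = O(1/\sqrt{n})$ from Lemma \ref{lem:gamma_est}) has to be chosen so that exactly one $\|s_{\gamma'}\|_\infty$ factor is extracted while the remaining factors collapse to $O(1)$. Conceptually the bound is transparent, since each successive differentiation of $f$ introduces precisely one new $s_{\gamma'}$ (or $s_{\gamma''}$) factor whose $L^\infty$-size is the claimed $\epsilon$, and the base quantity $\sigma_\gamma^T s_{\gamma'}$ controlling $f'$ is an inner product whose relevant ``trace mass'' is $O(1)$ throughout.
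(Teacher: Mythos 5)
Your proposal is correct and follows essentially the same route as the paper: differentiate $f$ three times using Jacobi's formula, the dynamics of $P_\gamma$ and $S_{\gamma'}$, and the geodesic equation $s_{\gamma''}=P_\gamma s_{\gamma'}^2$, then bound each resulting trace by H\"older/Cauchy--Schwarz together with the $L^2$, $L^4$ and $L^\infty$ estimates on $s_{\gamma'}$ from $V(\gamma)\le V_0$. Your organizing observation that each differentiation costs exactly one extra factor of $s_{\gamma'}$ (of $L^\infty$-size $\sqrt{h}+\sqrt{\log n/n}$) is a clean summary of the mechanism; the paper happens to bound the five ``non-$\frac{d}{dt}S_{\gamma''}$'' terms more tightly as $O(n^{-1/2})$ via the $L^4$ bound, but since $n^{-1/2}\le\sqrt{\log n/n}$ this is immaterial and your more uniform treatment gives the same $G_1=O(\sqrt{\log n/n}+\sqrt{h})$.
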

\begin{proof}
Note that
\[
f'(t)=-2\tr(A_{\gamma}^{T}A_{\gamma})^{-1}A_{\gamma}^{T}S_{\gamma'}A_{\gamma}=-2\tr P_{\gamma}S_{\gamma'}.
\]
Since $\frac{d}{dt}P_{\gamma}=-S_{\gamma'}P_{\gamma}-P_{\gamma}S_{\gamma'}+2P_{\gamma}S_{\gamma'}P_{\gamma}$
and $\frac{d}{ds}S_{\gamma'}=-S_{\gamma'}^{2}+S_{\gamma''}$, we have
\begin{align*}
f''(t)= & 2\tr S_{\gamma'}P_{\gamma}S_{\gamma'}+2\tr P_{\gamma}S_{\gamma'}^{2}-4\tr P_{\gamma}S_{\gamma'}P_{\gamma}S_{\gamma'}\\
 & +2\tr P_{\gamma}S_{\gamma'}^{2}-2\tr P_{\gamma}S_{\gamma''}\\
= & -4\tr P_{\gamma}S_{\gamma'}P_{\gamma}S_{\gamma'}+6\tr P_{\gamma}S_{\gamma'}^{2}-2\tr P_{\gamma}S_{\gamma''}.
\end{align*}
So, we have
\begin{align*}
f'''(t)= & +8\tr S_{\gamma'}P_{\gamma}S_{\gamma'}P_{\gamma}S_{\gamma'}+8\tr P_{\gamma}S_{\gamma'}^{2}P_{\gamma}S_{\gamma'}-16\tr P_{\gamma}S_{\gamma'}P_{\gamma}S_{\gamma'}P_{\gamma}S_{\gamma'}\\
 & +8\tr P_{\gamma}S_{\gamma'}^{2}P_{\gamma}S_{\gamma'}-8\tr P_{\gamma}S_{\gamma''}P_{\gamma}S_{\gamma'}\\
 & -6\tr S_{\gamma'}P_{\gamma}S_{\gamma'}^{2}-6\tr P_{\gamma}S_{\gamma'}S_{\gamma'}^{2}+12\tr P_{\gamma}S_{\gamma'}P_{\gamma}S_{\gamma'}^{2}\\
 & -12\tr P_{\gamma}S_{\gamma'}^{3}+12\tr P_{\gamma}S_{\gamma'}S_{\gamma''}\\
 & +2\tr S_{\gamma'}P_{\gamma}S_{\gamma''}+2\tr P_{\gamma}S_{\gamma'}S_{\gamma''}-4\tr P_{\gamma}S_{\gamma'}P_{\gamma}S_{\gamma''}\\
 & +2\tr P_{\gamma}S_{\gamma'}S_{\gamma''}-2\tr P_{\gamma}S_{\gamma'''}\\
= & -16\tr P_{\gamma}S_{\gamma'}P_{\gamma}S_{\gamma'}P_{\gamma}S_{\gamma'}+36\tr P_{\gamma}S_{\gamma'}^{2}P_{\gamma}S_{\gamma'}\\
 & -12\tr P_{\gamma}S_{\gamma''}P_{\gamma}S_{\gamma'}-24\tr P_{\gamma}S_{\gamma'}^{3}\\
 & +16\tr P_{\gamma}S_{\gamma'}S_{\gamma''}-2\tr P_{\gamma}\frac{d}{dt}S_{\gamma''}.
\end{align*}
Hence, we have
\[
\left|f'''(t)\right|\leq(16+36+24)\sqrt{\tr S_{\gamma'}^{4}\tr S_{\gamma'}^{2}}+(12+16)\sqrt{\tr S_{\gamma''}^{2}\tr S_{\gamma'}^{2}}+2\left|\tr P_{\gamma}\frac{d}{dt}S_{\gamma''}\right|.
\]
Since $\tr S_{\gamma'}^{2}\leq48$, $\tr S_{\gamma'}^{4}\leq48^{4}n^{-1}$,
$\tr S_{\gamma''}^{2}\leq10^{8}n^{-1}$ (Lemma \ref{lem:gamma_est}),
we have

\begin{eqnarray*}
\left|f'''(t)\right| & = & O(n^{-1/2})+2\left|\tr P_{\gamma}\frac{d}{dt}S_{\gamma''}\right|.
\end{eqnarray*}

To bound the last term, we start with the geodesic equation:
\[
s_{\gamma''}=P_{\gamma}s_{\gamma'}^{2}.
\]
Since $\frac{d}{dt}P_{\gamma}=-S_{\gamma'}P_{\gamma}-P_{\gamma}S_{\gamma'}+2P_{\gamma}S_{\gamma'}P_{\gamma}$,
we have that
\begin{eqnarray*}
\frac{d}{dt}s_{\gamma''} & = & -S_{\gamma'}P_{\gamma}s_{\gamma'}^{2}-P_{\gamma}S_{\gamma'}s_{\gamma'}^{2}+2P_{\gamma}S_{\gamma'}P_{\gamma}s_{\gamma'}^{2}\\
 &  & +2P_{\gamma}s_{\gamma'}s_{\gamma''}-2P_{\gamma}s_{\gamma'}^{3}\\
 & = & -S_{\gamma'}P_{\gamma}s_{\gamma'}^{2}-3P_{\gamma}s_{\gamma'}^{3}+2P_{\gamma}S_{\gamma'}P_{\gamma}s_{\gamma'}^{2}+2P_{\gamma}s_{\gamma'}s_{\gamma''}\\
 & = & -S_{\gamma'}P_{\gamma}s_{\gamma'}^{2}-3P_{\gamma}s_{\gamma'}^{3}+4P_{\gamma}S_{\gamma'}P_{\gamma}s_{\gamma'}^{2}.
\end{eqnarray*}
Hence, we have that
\begin{eqnarray*}
\left|\tr P_{\gamma}\frac{d}{dt}S_{\gamma''}\right| & \leq & \left|\sigma_{\gamma}^{T}S_{\gamma'}P_{\gamma}s_{\gamma'}^{2}\right|+3\left|\sigma_{\gamma}^{T}P_{\gamma}s_{\gamma'}^{3}\right|+4\left|\sigma_{\gamma}^{T}P_{\gamma}S_{\gamma'}P_{\gamma}s_{\gamma'}^{2}\right|\\
 & = & \left|(S_{\gamma'}\sigma_{\gamma})^{T}P_{\gamma}s_{\gamma'}^{2}\right|+3\left|(S_{\gamma'}P_{\gamma}\sigma_{\gamma})^{T}s_{\gamma'}^{2}\right|+4(S_{\gamma'}P_{\gamma}\sigma_{\gamma})^{T}(P_{\gamma}s_{\gamma'}^{2})\\
 & \leq & \sqrt{\sum s_{\gamma'}^{2}\sum s_{\gamma'}^{4}}+\sqrt{\sum s_{\gamma'}^{4}}\sqrt{\sigma_{\gamma}^{T}P_{\gamma}S_{\gamma'}^{2}P_{\gamma}\sigma_{\gamma}}\\
 & = & O\left(n^{-1/2}\right)+O\left(n^{-1}\right)O\left(\sqrt{\frac{\log n}{n}}+\sqrt{h}\right)O\left(n\right)\\
 & \leq & O\left(\sqrt{\frac{\log n}{n}}+\sqrt{h}\right).
\end{eqnarray*}

Hence, we have that
\begin{eqnarray*}
\left|f'''(t)\right| & \leq & O\left(\sqrt{\frac{\log n}{n}}+\sqrt{h}\right).
\end{eqnarray*}
\end{proof}

\subsection{Stability of Curvatures ($R_{1}$ and $R_{2}$)\label{subsec:Stability-of-Jacobian-log-barrier}}
\begin{lem}
\label{lem:total_ricci2}Let $\gamma(t)$ be a geodesic on $M_{L}$
with $V(\gamma)\leq V_{0}$. Then, we have that
\[
\sup_{0\leq t\leq\ell}\norm{R(t)}_{F}=O(n^{-1/2})
\]
where $R(t)$ defined in Definition \ref{def:Rt_definition}. Hence,
$R_{1}=O(n^{-1/2})$.
\end{lem}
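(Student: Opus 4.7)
The plan is to compute an explicit matrix representation of the linear map $R(t):T_{\gamma(t)}M\to T_{\gamma(t)}M$ with respect to an orthonormal frame, and then bound its Frobenius norm in terms of $\norm{s_{\gamma'}}_4$, which we already control via the assumption $V(\gamma)\le V_{0}$.

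By Lemma \ref{lem:Riemann_tensor_log}, we have the Euclidean-coordinate expression
\[
R(t)u=\left(A_{\gamma}^{T}A_{\gamma}\right)^{-1}A_{\gamma}^{T}K\,A_{\gamma}u,\qquad K\defeq S_{\gamma'}P_{\gamma}S_{\gamma'}-\Diag(P_{\gamma}s_{\gamma'}^{2}).
\]
To compute $\norm{R(t)}_{F}$ we pick a basis $X=[x_1,\ldots,x_n]$ of $T_{\gamma(t)}M$ that is orthonormal with respect to the metric, i.e.\ $X^{T}A_{\gamma}^{T}A_{\gamma}X=I$. A short computation shows $XX^{T}=(A_{\gamma}^{T}A_{\gamma})^{-1}$, and therefore $A_{\gamma}XX^{T}A_{\gamma}^{T}=P_{\gamma}$. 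The matrix of $R(t)$ in this frame is $M=X^{T}A_{\gamma}^{T}K\,A_{\gamma}X$, and direct manipulation using $X^{T}A_{\gamma}^{T}A_{\gamma}X=I$ gives
\[
\norm{R(t)}_{F}^{2}=\tr(M^{T}M)=\tr\!\left(P_{\gamma}K^{T}P_{\gamma}K\right).
\]

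Now split $K=K_{1}-K_{2}$ with $K_{1}=S_{\gamma'}P_{\gamma}S_{\gamma'}$ and $K_{2}=\Diag(P_{\gamma}s_{\gamma'}^{2})$; both are symmetric. Using that $P_{\gamma}$ is an orthogonal projection (so $P_{\gamma}^{2}=P_{\gamma}\preceq I$) and the cyclic property of trace,
\[
\tr(P_{\gamma}K_{i}P_{\gamma}K_{i})\le\tr(K_{i}P_{\gamma}K_{i})\le\norm{K_{i}}_{F}^{2}.
\]
For the first piece, $\norm{K_{1}}_{F}^{2}=\tr(S_{\gamma'}^{2}P_{\gamma}S_{\gamma'}^{2}P_{\gamma})\le\tr(S_{\gamma'}^{4}P_{\gamma})\le\tr(S_{\gamma'}^{4})=\norm{s_{\gamma'}}_{4}^{4}$, again using $P_{\gamma}\preceq I$ and $P_{\gamma}^{2}=P_{\gamma}$. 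For the second piece, $\norm{K_{2}}_{F}^{2}=\sum_{i}(P_{\gamma}s_{\gamma'}^{2})_{i}^{2}=\norm{P_{\gamma}s_{\gamma'}^{2}}_{2}^{2}\le\norm{s_{\gamma'}^{2}}_{2}^{2}=\norm{s_{\gamma'}}_{4}^{4}$.

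Combining via the triangle inequality for the seminorm $A\mapsto\sqrt{\tr(PA^{T}PA)}$ gives $\norm{R(t)}_{F}\le 2\norm{s_{\gamma'}}_{4}^{2}$. Since $V(\gamma)\le V_{0}$ forces $\norm{s_{\gamma'(t)}}_{4}\le V_{0}\,n^{-1/4}$ uniformly in $t\in[0,\ell]$ by Lemma \ref{lem:gamma_est}, we conclude $\sup_{0\le t\le\ell}\norm{R(t)}_{F}=O(n^{-1/2})$, as claimed. The only mildly delicate point is the change-of-basis computation verifying $\norm{R(t)}_{F}^{2}=\tr(P_{\gamma}K^{T}P_{\gamma}K)$; every subsequent estimate is a straightforward trace manipulation that trades $P_{\gamma}\preceq I$ and $P_{\gamma}^{2}=P_{\gamma}$ for a reduction to $\norm{s_{\gamma'}}_{4}^{4}$.
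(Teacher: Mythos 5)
Your proposal is correct and follows essentially the same path as the paper: represent $R(t)$ in an orthonormal frame (the paper picks $X_i=(A_\gamma^T A_\gamma)^{-1/2}e_i$ concretely, while you argue with an abstract orthonormal $X$; both yield $\norm{R(t)}_F^2=\tr(P_\gamma K^T P_\gamma K)$), split $K=S_{\gamma'}P_\gamma S_{\gamma'}-\Diag(P_\gamma s_{\gamma'}^2)$, and bound each piece by $\norm{s_{\gamma'}}_4^4$ using $P_\gamma\preceq I$, $P_\gamma^2=P_\gamma$, and cyclicity. The only cosmetic difference is that you invoke the triangle inequality for the seminorm $A\mapsto\norm{P_\gamma AP_\gamma}_F$ where the paper uses the elementary bound $\norm{B-C}_F^2\le 2\norm{B}_F^2+2\norm{C}_F^2$, but both give $\norm{R(t)}_F\le 2\norm{s_{\gamma'}}_4^2$ and then conclude via Lemma \ref{lem:gamma_est}.
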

\begin{proof}
Let $\overline{R(t)}$ such that $\left\langle R(u,\gamma'(t))\gamma'(t),v\right\rangle =u^{T}\overline{R(t)}v$.
Lemma \ref{lem:Riemann_tensor_log} shows that 
\begin{eqnarray*}
u^{T}\overline{R(t)}v & = & s_{u}^{T}S_{\gamma'}P_{\gamma}S_{\gamma'}s_{v}-(s_{v}s_{u})^{T}P_{\gamma}s_{\gamma'}^{2}.
\end{eqnarray*}
Hence, we have
\begin{eqnarray*}
\overline{R(t)} & = & A_{\gamma}^{T}S_{\gamma'}P_{\gamma}S_{\gamma'}A_{\gamma}-A_{\gamma}^{T}\Diag(P_{\gamma}s_{\gamma'}^{2})A_{\gamma}.
\end{eqnarray*}
Pick $X_{i}=(A^{T}A_{\gamma})^{-1/2}e_{i}$. Then, we can write $R(t)$
in the coordinate systems $\{X_{i}\}_{i}$ and get
\begin{eqnarray*}
R(t) & = & (A_{\gamma}^{T}A_{\gamma})^{-1/2}\left(A_{\gamma}^{T}S_{\gamma'}P_{\gamma}S_{\gamma'}A_{\gamma}-A_{\gamma}^{T}\Diag(P_{\gamma}s_{\gamma'}^{2})A_{\gamma}\right)(A_{\gamma}^{T}A_{\gamma})^{-1/2}.
\end{eqnarray*}
Therefore, we have that
\begin{eqnarray*}
 &  & \norm{R(t)}_{F}^{2}\\
 & \leq & 2\norm{(A_{\gamma}^{T}A_{\gamma})^{-1/2}A_{\gamma}^{T}S_{\gamma'}P_{\gamma}S_{\gamma'}A_{\gamma}(A_{\gamma}^{T}A_{\gamma})^{-1/2}}_{F}^{2}+2\norm{(A_{\gamma}^{T}A_{\gamma})^{-1/2}A_{\gamma}^{T}\Diag(P_{\gamma}s_{\gamma'}^{2})A_{\gamma}(A_{\gamma}^{T}A_{\gamma})^{-1/2}}_{F}^{2}\\
 & = & 2\tr P_{\gamma}S_{\gamma'}P_{\gamma}S_{\gamma'}P_{\gamma}S_{\gamma'}P_{\gamma}S_{\gamma'}+2\tr P_{\gamma}\Diag(P_{\gamma}s_{\gamma'}^{2})P_{\gamma}\Diag(P_{\gamma}s_{\gamma'}^{2})\\
 & \leq & 4\norm{s_{\gamma'}}_{4}^{4}.
\end{eqnarray*}
The claim follows from Lemma \ref{lem:gamma_est}.
\end{proof}
\begin{lem}
\label{lem:log_R2}Given a geodesic $\gamma(t)$ on $M_{L}$ with
$V(\gamma)\leq V_{0}$. Assume that $h\leq\frac{1}{\sqrt{n}}$. For
any $t$ such that $0\leq t\leq\ell$, any curve $c(s)$ starting
from $\gamma(t)$ and any vector field $v(s)$ on $c(s)$ with $v(0)=\gamma'(t)$,
we have that
\[
\left|\frac{d}{ds}\left.Ric(v(s))\right|_{s=0}\right|\leq O\left(1+\sqrt{\frac{\log n}{nh}}\right)\left(\norm{\frac{dc}{ds}}+\ell\norm{D_{s}v}\right).
\]
Therefore, $R_{2}=O(1+\sqrt{\frac{\log n}{nh}})$. \todo{Check this. Changed a lot.}
\end{lem}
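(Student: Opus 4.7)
\textbf{Proof plan for Lemma \ref{lem:log_R2}.} I will expand $\frac{d}{ds}\operatorname{Ric}(v(s))$ at $s=0$ by applying the product rule to the explicit formula
\[
\operatorname{Ric}(v) = s_v^T P^{(2)} s_v - \sigma^T P\, s_v^2
\]
from Lemma \ref{lem:Riemann_tensor_log}, where all quantities $A_c$, $P_c$, $\sigma_c$, and $s_{c,v}=A_c v$ depend on $s$ through both $c(s)$ and $v(s)$. Writing $w=\tfrac{dc}{ds}\big|_0$ and $a=D_s v\big|_0$, the basic elementary derivatives from Section \ref{subsec:RG_L} give $\tfrac{d}{ds}A_c|_0=-S_w A_x$, $\tfrac{d}{ds}P_c|_0=-S_w P - P S_w + 2PS_wP$, and $\tfrac{d}{ds}\sigma_c|_0=-2S_w\sigma+2\diag(PS_wP)$. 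For the vector field, the covariant derivative formula from Lemma \ref{lem:geo_equ_log} yields $\tfrac{dv}{ds}|_0 = a + (A_x^TA_x)^{-1}A_x^T(s_w\odot s_v)$ and hence
\[
\tfrac{d}{ds}s_{c,v}\big|_0 \;=\; -S_w s_v + s_a + P(s_w\odot s_v),
\]
with $s_w=A_x w$, $s_a=A_x a$ and $\|w\|=\|s_w\|_2$, $\|a\|=\|s_a\|_2$.

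Next I split $\dot{\operatorname{Ric}} = \dot{\operatorname{Ric}}_w + \dot{\operatorname{Ric}}_a$ by isolating the contributions that are linear in $s_w$ from those that are linear in $s_a$. Each of the resulting $\approx 10$ scalar terms has the form $\sigma^T P(\cdots)$ or $s_v^T P^{(2)}(\cdots)$ or $\sigma^T P (s_v\odot\cdots)$ and is bounded by Cauchy\textendash Schwarz together with the Hessian-parameter estimates of Lemma \ref{lem:gamma_est}: $\|s_v\|_4\le 48 n^{-1/4}$, $\|s_v\|_\infty \le 48(\sqrt{\log n/n}+\sqrt h)$, $\|s_v\|_2\le 48$, $\|\sigma\|_\infty\le 1$, $\|\sigma\|_2\le\sqrt n$, and $P^{(2)}\preceq P$ so $\|P^{(2)}\|_{2\to 2}\le 1$. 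For the $w$-part the Hölder chain $|\sigma^T P(s_w\odot s_v^2)|\le \|\sigma\|_\infty \|s_w\|_2 \|s_v\|_4^2 \le O(n^{1/2})\cdot n^{-1/2}\|w\|=O(\|w\|)$ (and analogous estimates for the other $w$-terms) gives $|\dot{\operatorname{Ric}}_w|=O(\|w\|)$.

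The main obstacle is the $a$-part, because $a$ is allowed to be large and we only have $\ell\|a\|$ on the right-hand side. The worst term is $\sigma^T P(s_v\odot s_a)$ and its siblings, which we bound by
\[
|\sigma^T P(s_v\odot s_a)| \;\le\; \|\sigma\|_2 \|s_v\odot s_a\|_2 \;\le\; \sqrt n\,\|s_v\|_\infty\,\|s_a\|_2
\;\le\; O\bigl(\sqrt{\log n}+\sqrt{nh}\bigr)\|a\|.
\]
Rewriting $\sqrt{\log n}+\sqrt{nh}=\sqrt{nh}\bigl(1+\sqrt{\log n/(nh)}\bigr)=\ell\bigl(1+\sqrt{\log n/(nh)}\bigr)$ produces exactly the claimed factor. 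The remaining $a$-terms (those with $s_v^T P^{(2)} s_a$, and with $\dot P$ or $\dot\sigma$ contracted against $s_v^2$) are strictly easier because either (i) they carry an extra $s_v$ factor that contributes $n^{-1/4}$ in $\ell_4$, or (ii) they only involve $\|s_a\|_2=\|a\|$ against $P$-bounded operators and pay only $O(\|a\|)\le \ell\|a\|$.

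Assembling the two parts gives $\bigl|\tfrac{d}{ds}\operatorname{Ric}(v(s))|_{s=0}\bigr|=O\!\bigl(1+\sqrt{\log n/(nh)}\bigr)\bigl(\|w\|+\ell\|a\|\bigr)$, which by Definition \ref{def:R2} means $R_2=O\!\bigl(1+\sqrt{\log n/(nh)}\bigr)$. The bookkeeping of the many product-rule terms is the main source of length; the only genuine analytic input beyond bounded-operator estimates is the use of $\|s_v\|_\infty$ rather than $\|s_v\|_2$ on the critical $\sigma^T P(s_v\odot s_a)$ term, and this is precisely what forces the $\sqrt{\log n/(nh)}$ correction and rules out a pure $O(1)$ bound.
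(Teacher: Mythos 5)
Your proposal is correct and follows essentially the same route as the paper: differentiate the explicit Ricci formula from Lemma \ref{lem:Riemann_tensor_log} via the product rule, use the covariant-derivative relation from Lemma \ref{lem:geo_equ_log} to express $\tfrac{dv}{ds}$ in terms of $D_s v$, and isolate $\sigma^T P(s_v\odot s_a)$ as the critical term, bounding it with $\|\sigma\|_2\le\sqrt n$ together with the $\ell_\infty$ estimate $\|s_v\|_\infty\le 48(\sqrt{\log n/n}+\sqrt h)$ from Lemma \ref{lem:gamma_est}. The only organizational difference is that you substitute $\tfrac{dv}{ds}|_0 = a + (A_x^TA_x)^{-1}A_x^T(s_w\odot s_v)$ at the outset and split $\dot{\operatorname{Ric}}$ into pure $a$-linear and pure $w$-linear pieces, whereas the paper first splits into $v_s$-dependent and $v_s$-independent pieces and converts $\|s_{c,v_s}\|_2 \le \|D_s v\| + \|s_{\gamma'}\|_\infty\|s_{c_s}\|_2$ only at the end; the resulting estimates coincide, since the $w$-contribution hiding inside $v_s$ picks up the extra $\|s_{\gamma'}\|_\infty$ factor and is absorbed into $O(\|w\|)$ either way. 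One small caution: your displayed Hölder step $|\sigma^T P(s_w\odot s_v^2)|\le\|\sigma\|_\infty\|s_w\|_2\|s_v\|_4^2$ is not literally correct as a chain of inequalities (the pairing of $\|\sigma\|_\infty$ with an $\ell_2$ quantity needs an intermediate $\ell_1$ control); the valid route, which gives the same order, is $|\sigma^T P(s_w\odot s_v^2)|\le\|\sigma\|_2\|s_w\odot s_v^2\|_2\le\sqrt n\,\|s_w\|_4\|s_v\|_8^2\le\sqrt n\,\|w\|\,\|s_v\|_4^2=O(\|w\|)$.
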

\begin{proof}
By Lemma \ref{lem:Riemann_tensor_log}, we know that
\begin{eqnarray*}
Ric(v(s)) & = & s_{c(s),v(s)}^{T}P_{c(s)}^{(2)}s_{c(s),v(s)}-\sigma_{c(s)}^{T}P_{c(s)}s_{c(s),v(s)}^{2}\\
 & = & \tr(S_{c(s),v(s)}P_{c(s)}S_{c(s),v(s)}P_{c(s)})-\tr(\Diag(P_{c(s)}s_{c(s),v(s)}^{2})P_{c(s)}).
\end{eqnarray*}
Note that the ``s'' in $c(s)$ is the parameter of the curve $s$
while the first letter ``s'' in both $s_{c(s)}$ and $S_{c(s)}$
denotes the slack. For simplicity, we suppress the parameter $s$
and hence, we have
\[
Ric(v)=\tr(S_{c,v}P_{c}S_{c,v}P_{c})-\tr(\Diag(P_{c}S_{c,v}^{2})P_{c}).
\]
We write $\frac{d}{ds}c=c_{s}$ and $\frac{d}{ds}v=v_{s}$ (in Euclidean
coordinate). Since $\frac{d}{ds}P_{c}=-S_{c_{s}}P_{c}-P_{c}S_{c_{s}}+2P_{c}S_{c_{s}}P_{c}$
and $\frac{d}{ds}S_{c,v}=-S_{c_{s}}S_{c,v}+S_{c,v_{s}}$, we have
that
\begin{eqnarray*}
 &  & \frac{d}{ds}Ric(v)\\
 & = & -2\tr(S_{c,v}S_{c_{s}}P_{c}S_{c,v}P_{c})-2\tr(S_{c,v}P_{c}S_{c_{s}}S_{c,v}P_{c})+4\tr(S_{c,v}P_{c}S_{c_{s}}P_{c}S_{c,v}P_{c})\\
 &  & -2\tr(S_{c_{s}}S_{c,v}P_{c}S_{c,v}P_{c})+2\tr(S_{c,v_{s}}P_{c}S_{c,v}P_{c})\\
 &  & +\tr(\Diag(P_{c}s_{c,v}^{2})S_{c_{s}}P_{c})+\tr(\Diag(P_{c}s_{c,v}^{2})P_{c}S_{c_{s}})-2\tr(\Diag(P_{c}s_{c,v}^{2})P_{c}S_{c_{s}}P_{c})\\
 &  & +\tr(\Diag(P_{c}S_{c_{s}}s_{c,v}^{2})P_{c})+\tr(\Diag(S_{c_{s}}P_{c}s_{c,v}^{2})P_{c})-2\tr(\Diag(P_{c}S_{c_{s}}P_{c}s_{c,v}^{2})P_{c})\\
 &  & +2\tr(\Diag(P_{c}S_{c,v}S_{c_{s}}s_{c,v})P_{c})-2\tr(\Diag(P_{c}S_{c,v}s_{c,v_{s}})P_{c})\\
 & = & -6\tr(S_{c,v}S_{c_{s}}P_{c}S_{c,v}P_{c})+4\tr(S_{c,v}P_{c}S_{c_{s}}P_{c}S_{c,v}P_{c})+2\tr(S_{c,v_{s}}P_{c}S_{c,v}P_{c})\\
 &  & +3\tr(\Diag(P_{c}s_{c,v}^{2})S_{c_{s}}P_{c})-2\tr(\Diag(P_{c}s_{c,v}^{2})P_{c}S_{c_{s}}P_{c})\\
 &  & +3\tr(\Diag(P_{c}S_{c_{s}}s_{c,v}^{2})P_{c})-2\tr(\Diag(P_{c}S_{c_{s}}P_{c}s_{c,v}^{2})P_{c})\\
 &  & -2\tr(\Diag(P_{c}S_{c,v}s_{c,v_{s}})P_{c}).
\end{eqnarray*}
Let $\frac{d}{ds}Ric(v)=(1)+(2)$ where $(1)$ is the sum of all terms
not involving $v_{s}$ and $(2)$ is the sum of other terms. 

For the first term $(1)$, we have that
\begin{eqnarray*}
\left|(1)\right| & \leq & 6\left|\tr(S_{c,v}S_{c_{s}}P_{c}S_{c,v}P_{c})\right|+4\left|\tr(S_{c,v}P_{c}S_{c_{s}}P_{c}S_{c,v}P_{c})\right|\\
 &  & +3\left|\tr(\Diag(P_{c}s_{c,v}^{2})S_{c_{s}}P_{c})\right|+2\left|\tr(\Diag(P_{c}s_{c,v}^{2})P_{c}S_{c_{s}}P_{c})\right|\\
 &  & +3\left|\tr(\Diag(P_{c}S_{c_{s}}s_{c,v}^{2})P_{c})\right|+2\left|\tr(\Diag(P_{c}S_{c_{s}}P_{c}s_{c,v}^{2})P_{c})\right|\\
 & \leq & 6\norm{S_{c_{s}}}_{\infty}\sqrt{\sum_{i}(s_{c,v})_{i}^{2}}\sqrt{\sum_{i}(s_{c,v})_{i}^{2}}+4\norm{S_{c_{s}}}_{\infty}\left|\tr(P_{c}S_{c,v}P_{c}S_{c,v}P_{c})\right|\\
 &  & +3\sqrt{\sum_{i}(s_{c,v})_{i}^{4}}\norm{S_{c_{s}}}_{2}+2\sqrt{\sum_{i}(s_{c,v})_{i}^{4}}\sqrt{\sum_{i}(P_{c}S_{c_{s}}P_{c})_{ii}^{2}}\\
 &  & +3\norm{S_{c_{s}}}_{\infty}\sqrt{\sum_{i}(s_{c,v})_{i}^{4}}\sqrt{\sum_{i}(P_{c})_{ii}^{2}}+2\norm{S_{c_{s}}}_{\infty}\sqrt{\sum_{i}(s_{c,v})_{i}^{4}}\sqrt{\sum_{i}(P_{c})_{ii}^{2}}\\
 & \leq & 10\norm{S_{c_{s}}}_{\infty}\norm{S_{c,v}}_{2}^{2}+3\norm{S_{c,v}}_{4}^{2}\norm{S_{c_{s}}}_{2}+7\norm{S_{c_{s}}}_{\infty}\norm{S_{c,v}}_{4}^{2}\sqrt{n}\\
 & \leq & 20\norm{S_{c_{s}}}_{2}\norm{S_{c,v}}_{4}^{2}\sqrt{n}.
\end{eqnarray*}
Since $s_{c,v}=s_{\gamma'}$ at $s=0$, we have that $\norm{S_{c,v}}_{4}^{2}\leq48^{2}n^{-1/2}$
and hence
\[
\left|(1)\right|=O\left(1\right)\norm{s_{c_{s}}}_{2}.
\]

For the second term $(2)$, we have that
\begin{eqnarray*}
\left|(2)\right| & \leq & 2\left|\tr(S_{c,v_{s}}P_{c}S_{c,v}P_{c})\right|+2\left|\tr(\Diag(P_{c}S_{c,v}s_{c,v_{s}})P_{c})\right|\\
 & \leq & 2\norm{s_{c,v_{s}}}_{2}\norm{s_{c,v}}_{2}+2\sqrt{n}\sqrt{\sum_{i}(s_{c,v_{s}}s_{c,v})_{i}^{2}}\\
 & \leq & O\left(1\right)\norm{s_{c,v_{s}}}_{2}+O\left(\sqrt{\log n}+\sqrt{nh}\right)\norm{s_{c,v_{s}}}_{2}\\
 & = & O\left(\sqrt{\log n}+\sqrt{nh}\right)\norm{s_{c,v_{s}}}_{2}
\end{eqnarray*}
where we used $\norm{s_{c,v}}_{\infty}=\norm{s_{\gamma'}}_{\infty}=O\left(\sqrt{\frac{\log n}{n}}+\sqrt{h}\right)$
and $\norm{s_{c,v}}_{2}=\norm{s_{\gamma'}}_{2}=O(1)$ at $s=0$ in
the second last line.

Note that at $s=0$, we have
\begin{align*}
D_{s}v & =\frac{dv}{ds}-\left(A_{c}^{T}A_{c}\right)^{-1}A_{c}^{T}S_{c_{s}}s_{c,v}.
\end{align*}
Therefore, we have
\[
s_{c,v_{s}}=A_{c}\left(D_{s}v\right)-A_{c}\left(A_{c}^{T}A_{c}\right)^{-1}A_{c}^{T}S_{c_{s}}s_{c,v}
\]
and hence
\begin{eqnarray*}
\norm{s_{c,v_{s}}}_{2} & \leq & \norm{D_{s}v}+\norm{A_{c}\left(A_{c}^{T}A_{c}\right)^{-1}A_{c}^{T}S_{c_{s}}s_{c,v}}_{2}\\
 & \leq & \norm{D_{s}v}+\norm{s_{\gamma'}}_{\infty}\norm{s_{c_{s}}}_{2}
\end{eqnarray*}
Therefore, we have
\begin{align*}
\left|(2)\right| & =O\left(\sqrt{\log n}+\sqrt{nh}\right)\left(\norm{D_{s}v}+\left(\sqrt{\frac{\log n}{n}}+\sqrt{h}\right)\norm{s_{c_{s}}}_{2}\right).
\end{align*}

Therefore, we have
\begin{align*}
\left|\frac{d}{ds}\left.Ric(v(s))\right|_{s=0}\right|= & O\left(1\right)\norm{s_{c_{s}}}_{2}+O\left(\sqrt{\log n}+\sqrt{nh}\right)\norm{D_{s}v}\\
 & +O\left(\sqrt{\log n}+\sqrt{nh}\right)\left(\sqrt{\frac{\log n}{n}}+\sqrt{h}\right)\norm{s_{c_{s}}}_{2}\\
= & O\left(1+\sqrt{\frac{\log n}{nh}}\right)\left(\norm{\frac{dc}{ds}}+\ell\norm{D_{s}v}\right).
\end{align*}
where we used $h\leq\frac{1}{\sqrt{n}}$ at the last line.
\end{proof}

\subsection{\label{subsec:Stability-of-norm}Stability of $L_{4}+L_{\infty}$
norm ($V_{1}$) }
\begin{lem}
\label{lem:log_V1}Given a family of geodesic $\gamma_{s}(t)$ on
$M_{L}$ with $V(\gamma_{0})\leq V_{0}$. Suppose that $h\leq\frac{1}{\sqrt{n}}$,
we have that
\[
\left|\frac{d}{ds}V(\gamma_{s})\right|\leq O\left(\frac{1}{\sqrt{n}h+\sqrt{h\log n}}\right)\left(\norm{\frac{d}{ds}\gamma_{s}(0)}_{\gamma_{s}(0)}+\ell\norm{D_{s}\gamma_{s}'}_{\gamma_{s}(0)}\right).
\]
Hence, we have that $V_{1}=O\left(\frac{1}{\sqrt{n}h+\sqrt{h\log n}}\right)$.
\end{lem}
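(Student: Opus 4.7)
The strategy is to differentiate $s_{\gamma'_s(t)} = A_{\gamma_s(t)}\gamma'_s(t)$ in $s$, bound this derivative in the relevant $L^p$ norms pointwise in $t$, and then propagate across $t\in[0,\ell]$ via the Jacobi field governing the variation. Since $V(\gamma_s)$ is a maximum over $t$ of a sum of normalized $L^4$ and $L^\infty$ norms, by an envelope inequality it suffices to bound $|\tfrac{d}{ds}\|s_{\gamma'_s(t)}\|_p|$ uniformly in $t$ for $p\in\{4,\infty\}$ and then divide by the corresponding normalization.

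Using $\tfrac{d}{ds}A_{\gamma_s} = -S_\psi A_{\gamma_s}$ with $\psi(t) = \tfrac{d}{ds}\gamma_s(t)$ and the Levi-Civita formula of Lemma \ref{lem:geo_equ_log} to pass between the Euclidean derivative $\tfrac{d\gamma'_s}{ds}$ and the covariant derivative $D_s\gamma'_s$, a direct calculation yields
\[
\frac{d}{ds}s_{\gamma'_s(t)} \;=\; A_{\gamma}\,D_s\gamma'_s \;-\; (I-P_\gamma)\,S_\psi\, s_{\gamma'}.
\]
By triangle inequality, $|\tfrac{d}{ds}\|s_{\gamma'_s(t)}\|_p| \le \|A_\gamma D_s\gamma'_s\|_p + \|(I-P_\gamma)S_\psi s_{\gamma'}\|_p$. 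The first summand is at most $\|A_\gamma D_s\gamma'_s\|_2 = \|D_s\gamma'_s\|_{\gamma(t)}$ since $\|\cdot\|_p \le \|\cdot\|_2$ for $p\in\{4,\infty\}$. For the second, I combine the Hölder bounds $\|S_\psi s_{\gamma'}\|_\infty \le \|s_\psi\|_\infty\|s_{\gamma'}\|_\infty \le O(\sqrt{\log n/n}+\sqrt h)\,\|\psi\|_{\gamma(t)}$ and $\|S_\psi s_{\gamma'}\|_2 \le \|s_\psi\|_4\|s_{\gamma'}\|_4 \le O(n^{-1/4})\|\psi\|_{\gamma(t)}$ (using $V(\gamma)\le V_0$ together with $\|s_\psi\|_\infty\le\|s_\psi\|_2=\|\psi\|_\gamma$) with the projection estimate $\|(I-P)y\|_p\le\|y\|_p+\|Py\|_p\le\|y\|_p+\|y\|_2$, to obtain $\|(I-P_\gamma)S_\psi s_{\gamma'}\|_p = O(n^{-1/4} + \sqrt{\log n/n}+\sqrt h)\,\|\psi(t)\|_{\gamma(t)}$ for both $p=4$ and $p=\infty$.

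Next, I use that $\psi(t)$ is precisely the Jacobi field of the variation $\gamma_s$, so in the parallel-transported orthonormal frame (which preserves the manifold norm) Lemma \ref{lem:jacobi_approx_sol} applied with $\lambda = R_1 = O(n^{-1/2})$ (Lemma \ref{lem:total_ricci2}) yields
\[
\|\psi(t)\|_{\gamma(t)} \le O(1)\,\alpha + O(\ell)\,\beta, \qquad \|D_s\gamma'_s(t)\|_{\gamma(t)} \le O(\sqrt h)\,\alpha + O(1)\,\beta,
\]
where $\alpha \defeq \|\psi(0)\|_{\gamma(0)}$, $\beta \defeq \|D_s\gamma'_s|_{t=0}\|_{\gamma(0)}$, and where $\lambda\ell^2 = R_1\cdot nh = O(\sqrt h) = O(1)$ under the hypothesis $h\le 1/\sqrt n$. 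Plugging these into the previous bound gives, for every $t\in[0,\ell]$,
\[
\bigl|\tfrac{d}{ds}\|s_{\gamma'_s(t)}\|_p\bigr| \;\le\; O\bigl(n^{-1/4}+\sqrt{\log n/n}+\sqrt h\bigr)(\alpha+\ell\beta) \;+\; O(\sqrt h)\,\alpha + O(1)\,\beta.
\]

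Finally, I divide the $p=4$ contribution by $n^{-1/4}$ and the $p=\infty$ contribution by $\sqrt{\log n/n}+\sqrt h$ and take the maximum over $t$. Using $\ell=\sqrt{nh}$, $n^{1/4}\sqrt h\le 1$ (from $h\le 1/\sqrt n$), and the algebraic identity $\sqrt n h+\sqrt{h\log n} = \sqrt h(\sqrt{nh}+\sqrt{\log n})$, a short case split on whether $\sqrt h$ or $\sqrt{\log n/n}$ dominates the $L^\infty$ denominator shows each resulting summand is bounded by $O\bigl(1/(\sqrt n h + \sqrt{h\log n})\bigr)\cdot(\alpha+\ell\beta)$, yielding the claim and hence $V_1 = O(1/(\sqrt n h+\sqrt{h\log n}))$. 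The main obstacle is that the obvious Hölder bounds for $\|S_\psi s_{\gamma'}\|_p$ are sub-optimal in either $n$ or $\sqrt h+\sqrt{\log n/n}$ separately; the combination of the $L^4$ and $L^\infty$ control encoded in $V(\gamma)$ is exactly what is needed — neither norm alone would give the correct exponent — and the extra $n^{-1/4}/(\sqrt h+\sqrt{\log n/n})$ factor from the projection term must be checked to be within the target $1/(\sqrt n h+\sqrt{h\log n})$ throughout the whole range $h\le 1/\sqrt n$.
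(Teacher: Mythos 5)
Your proof is correct and follows essentially the same strategy as the paper: differentiate $s_{\gamma'}$ in $s$, pass to the covariant derivative $D_s\gamma'$ via Lemma \ref{lem:geo_equ_log}, bound the variation $\psi(t)=\tfrac{d}{ds}\gamma_s(t)$ and $D_t\psi(t)=D_s\gamma'_s(t)$ with Lemma \ref{lem:jacobi_approx_sol} using $\lambda=R_1=O(n^{-1/2})$ (and $\lambda\ell^2=O(\sqrt n\,h)=O(1)$), then divide by the two normalizations. Your reformulation $\tfrac{d}{ds}s_{\gamma'}=A_\gamma D_s\gamma'-(I-P_\gamma)S_\psi s_{\gamma'}$ is a clean, correct rearrangement of the paper's two-step substitution. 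The one place you diverge is the final accounting: you bound $\|\tfrac{d}{ds}s_{\gamma'}\|_4$ and $\|\tfrac{d}{ds}s_{\gamma'}\|_\infty$ directly (getting the sharper $\alpha$-coefficient $O(n^{-1/4})$ from the $L^4\times L^4$ H\"older estimate on $\|S_\psi s_{\gamma'}\|_2$), then divide and case-split; the paper simply uses $\|\cdot\|_4,\|\cdot\|_\infty\le\|\cdot\|_2$ together with the $L^2$ bound $O(1/\sqrt{nh})(\alpha+\ell\beta)$ and the identity $\sqrt{nh}\bigl(\sqrt{\log n/n}+\sqrt h\bigr)=\sqrt n\,h+\sqrt{h\log n}$, which makes the final division cleaner and renders the projection-term subtlety you flag moot, since the $1/\sqrt{nh}$ coefficient dominates all the $p$-specific terms automatically. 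Your extra sharpness is not needed here but does check out (the key ratio is $n^{-1/4}/(\sqrt{\log n/n}+\sqrt h)=n^{1/4}\sqrt h/(\sqrt n\,h+\sqrt{h\log n})\le 1/(\sqrt n\,h+\sqrt{h\log n})$ under $h\le n^{-1/2}$).
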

\begin{proof}
Since $\frac{d}{ds}s_{\gamma'}=-s_{\gamma'}A_{\gamma}\frac{d}{ds}\gamma+A_{\gamma}\frac{d}{ds}\gamma'$,
we have 
\begin{align*}
\norm{\frac{d}{ds}s_{\gamma'}}_{2} & \leq\norm{s_{\gamma'}}_{\infty}\norm{A_{\gamma}\frac{d}{ds}\gamma}_{2}+\norm{A_{\gamma}\frac{d}{ds}\gamma'}_{2}.
\end{align*}
For the last term, we note that
\begin{align*}
D_{s}\gamma' & =\frac{d\gamma'}{ds}-\left(A_{\gamma}^{T}A_{\gamma}\right)^{-1}A_{\gamma}^{T}S_{\frac{d\gamma}{ds}}s_{\gamma'}.
\end{align*}
Hence, we have
\[
\norm{A_{\gamma}\frac{d}{ds}\gamma'}_{2}\leq\norm{D_{s}\gamma'}+\norm{S_{\frac{d\gamma}{ds}}s_{\gamma'}}_{2}\leq\norm{D_{s}\gamma'}+\norm{s_{\gamma'}}_{\infty}\norm{S_{\frac{d\gamma}{ds}}}_{2}.
\]
Therefore, we have that
\begin{align}
\norm{\frac{d}{ds}s_{\gamma'}}_{2} & \leq2\norm{s_{\gamma'}}_{\infty}\norm{A_{\gamma}\frac{d}{ds}\gamma}_{2}+\norm{D_{s}\gamma'}\nonumber \\
 & =O\left(\sqrt{\frac{\log n}{n}}+\sqrt{h}\right)\norm{\frac{d}{ds}\gamma}+\norm{D_{s}\gamma'}.\label{eq:log_d_s_s_gamma}
\end{align}

By Lemma \ref{lem:total_ricci2}, we have that $\norm{R(t)}_{F}=O(n^{-1/2})$.
Since $\gamma_{s}$ is a family of geodesic, $\frac{d}{ds}\gamma(t)$
is a Jacobi field and Lemma \ref{lem:jacobi_approx_sol} shows that
\[
\norm{\frac{d}{ds}\gamma(t)-\alpha-\beta t}\leq\lambda t^{2}\norm{\frac{d}{ds}\gamma_{s}(0)}+\frac{\lambda t^{3}}{5}\norm{D_{s}\gamma_{s}'(0)}
\]
and
\[
\norm{D_{t}\frac{d}{ds}\gamma(t)-\beta}\leq2\lambda t\norm{\frac{d}{ds}\gamma_{s}(0)}+\frac{3\lambda t^{2}}{5}\norm{D_{s}\gamma_{s}'(0)}
\]
with $\lambda=O(n^{-1/2})$. Using $h\leq\frac{1}{\sqrt{n}}$, for
any $0\leq t\leq\ell$, we have that
\[
\norm{\frac{d}{ds}\gamma(t)}\leq O\left(1\right)\left(\norm{\frac{d}{ds}\gamma_{s}(0)}+\ell\norm{D_{s}\gamma_{s}'(0)}\right)
\]
and
\[
\ell\norm{D_{t}\frac{d}{ds}\gamma(t)}\leq O\left(1\right)\left(\norm{\frac{d}{ds}\gamma_{s}(0)}+\ell\norm{D_{s}\gamma_{s}'(0)}\right).
\]
Putting these into (\ref{eq:log_d_s_s_gamma}) and using $h\leq\frac{1}{\sqrt{n}}$,
we have
\begin{align*}
\norm{\frac{d}{ds}s_{\gamma'}}_{2} & =O\left(\sqrt{\frac{\log n}{n}}+\sqrt{h}+\frac{1}{\sqrt{nh}}\right)\left(\norm{\frac{d}{ds}\gamma_{s}(0)}+\ell\norm{D_{s}\gamma_{s}'(0)}\right)\\
 & =O\left(\frac{1}{\sqrt{nh}}\right)\left(\norm{\frac{d}{ds}\gamma_{s}(0)}+\ell\norm{D_{s}\gamma_{s}'(0)}\right).
\end{align*}

Hence, we have that
\begin{align*}
\frac{d}{ds}\left(\frac{\norm{s_{\gamma'(t)}}_{4}}{n^{-1/4}}+\frac{\norm{s_{\gamma'(t)}}_{\infty}}{\sqrt{\frac{\log n}{n}}+\sqrt{h}}\right)\leq & \frac{\norm{\frac{d}{ds}s_{\gamma'(t)}}_{4}}{n^{-1/4}}+\frac{\norm{\frac{d}{ds}s_{\gamma'(t)}}_{\infty}}{\sqrt{\frac{\log n}{n}}+\sqrt{h}}\\
= & O\left(\frac{1}{\sqrt{\frac{\log n}{n}}+\sqrt{h}}\right)\norm{\frac{d}{ds}s_{\gamma'(t)}}_{2}\\
= & O\left(\frac{1}{\sqrt{n}h+\sqrt{h\log n}}\right)\left(\norm{\frac{d}{ds}\gamma_{s}(0)}+\ell\norm{D_{s}\gamma_{s}'(0)}\right).
\end{align*}
Hence, we have that the result.
\end{proof}

\subsection{Mixing Time}
\begin{proof}[Proof of Theorem \ref{thm:gen-convergence-log}]
In the last previous sections, we proved that if $h\leq\frac{1}{1000\sqrt{n}}$

\begin{enumerate}
\item $V_{0}=48$ (\ref{eq:log_V0})
\item $V_{1}=O\left(\frac{1}{\sqrt{n}h+\sqrt{h\log n}}\right)$ (Lemma \ref{lem:log_V1})
\item $D_{0}=O(\sqrt{n})$ (Lemma \ref{lem:log_D0})
\item $D_{1}=O(n\sqrt{h}+\sqrt{n\log n})$ (Lemma \ref{lem:log_D1})
\item $D_{2}=O(\sqrt{n})$ (Lemma \ref{lem:log_D2})
\item $G_{1}=O(\sqrt{h}+\sqrt{\frac{\log n}{n}})$ (Lemma \ref{lem:log_G1})
\item $G_{2}=O(\sqrt{m})$ (Lemma \ref{lem:distances_log})
\item $R_{1}=O(1/\sqrt{n})$ (Lemma \ref{lem:total_ricci2})
\item $R_{2}=O(1+\sqrt{\frac{\log n}{nh}})$ (Lemma \ref{lem:log_R2})
\end{enumerate}
Lemma \ref{lem:V0_bound} proved that 
\[
P\left(V(\gamma)\leq24\right)\geq1-\frac{3}{n}.
\]
Therefore, if we set $h=\Omega(n^{-1})$, we have that
\[
P\left(V(\gamma)\leq24\right)\geq1-\frac{V_{0}}{100V_{1}}.
\]
Hence, these are valid parameters for $M_{L}$ with $H=\frac{1}{1000\sqrt{n}}$.

Theorem \ref{thm:gen-convergence} implies that the walk has mixing
time $O(G_{2}^{2}/h)$ as long as
\[
h\le\Theta(1)\min\left\{ \frac{1}{n^{1/3}D_{1}^{2/3}},\frac{1}{D_{2}},\frac{1}{nR_{1}},\frac{1}{(nD_{0}R_{1})^{2/3}},\frac{1}{\left(nR_{2}\right)^{2/3}},\frac{1}{nG_{1}^{2/3}},H\right\} \leq\frac{1}{\Theta(n^{3/4})}
\]
if $n$ is large enough.
\end{proof}
\begin{rem*}
The bottleneck of the proof occurs in the parameters $D_{1}$ and
$G_{1}$.
\end{rem*}

\pagebreak{}
\section{Collocation Method for ODE\label{subsec:CollocationMethod}}

In this section, we study a collocation method for solving ordinary
differential equation (ODE) and show how to solve a nice enough ODE
in nearly constant number of iterations without explicitly computing
higher derivatives.

Collocation method is a general framework for solving differential
equations. This framework finds a solution to the differential equation
by finding a low degree polynomial (or other finite dimensional space
that approximate the function space) that satisfies the differential
equation at a set of predesignated points (called collocation points).
By choosing the finite dimensional space and the collocation points
carefully, one can make sure there is an unique solution and that
the solution can be found using simple iterative methods (See \cite[Sec 3.4]{iserles2009first}
for an introduction). One key departure of our analysis to the standard
analysis is that we use $L_{4}$ norm (instead of the standard $L_{2}$
norm). This is essential to take advantage of the stability of the
$L_{4}$ norm of the geodesic.

\subsection{First Order ODE}

We first consider the following first order ODE
\begin{eqnarray}
\frac{d}{dt}u(t) & = & F(u(t),t),\text{ for }0\leq t\leq\ell\label{eq:ODE}\\
u(0) & = & v\nonumber 
\end{eqnarray}
where $F:\R^{n+1}\rightarrow\Rn$ and $u(t)\in\Rn$. The idea of collocation
methods is to find a degree $d$ polynomial $p$ such that 

\begin{eqnarray}
\frac{d}{dt}p(t) & = & F(p(t),t),\text{ for }t=c_{1},c_{2},\cdots,c_{d}\label{eq:ODE_discrete}\\
p(0) & = & v\nonumber 
\end{eqnarray}
where $c_{1},c_{2},\cdots,c_{d}$ are carefully chosen distinct points
on $[0,\ell]$. Here, we call $p:\R\rightarrow\Rn$ is a degree $d$
polynomial if $p(t)=[p_{1}(t);p_{2}(t);\cdots;p_{n}(t)]$ and each
$p_{i}(t)$ is an univariate polynomial with degree at most $d$.
The first part of the proof shows the existence of a solution for
the systems (\ref{eq:ODE_discrete}). To describe the algorithm, it
is easier to consider an equivalent integral equation.
\begin{lem}
\label{lem:two_ODE_version}Given distinct points $c_{1},c_{2},\cdots,c_{d}\in\R$
and $F:\R^{n+1}\rightarrow\Rn$, consider the nonlinear map $T:\R^{d\times n}\rightarrow\R^{d\times n}$
defined by
\[
T(\zeta)_{(i,k)}=\int_{0}^{c_{i}}\sum_{j=1}^{d}F(\zeta_{j},c_{j})_{k}\phi_{j}(s)ds\text{ for }i\in[d],k\in[n]
\]
where $\phi_{i}(s)=\prod_{j\neq i}\frac{s-c_{j}}{c_{i}-c_{j}}$ are
the Lagrange basis polynomials. Given any $\zeta\in\R^{d\times n}$
such that
\begin{equation}
\zeta_{i}=v+T(\zeta)_{i},\text{ for }i\in[d]\label{eq:ODE_fix_point}
\end{equation}
the polynomial
\[
p(t)=v+\int_{0}^{t}\sum_{j=1}^{d}F(\zeta_{j},c_{j})\phi_{j}(s)ds
\]
is a solution of the system (\ref{eq:ODE_discrete}). 
\end{lem}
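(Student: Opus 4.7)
The plan is to verify the two defining conditions of system (\ref{eq:ODE_discrete}) directly from the fixed-point equation (\ref{eq:ODE_fix_point}) and the fundamental property of Lagrange basis polynomials, namely $\phi_j(c_i) = \delta_{ij}$.

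First, I would observe that $p$ is well-defined as a polynomial of degree at most $d$: the integrand $\sum_j F(\zeta_j,c_j)\phi_j(s)$ is a vector-valued polynomial of degree at most $d-1$ in $s$ (each $\phi_j$ has degree $d-1$, and the coefficients $F(\zeta_j,c_j) \in \R^n$ are constants once $\zeta$ is fixed), so $p(t) = v + \int_0^t(\cdots)ds$ has degree at most $d$. The initial condition $p(0) = v$ follows immediately since the integral from $0$ to $0$ vanishes.

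Next, I would differentiate under the integral to get
\[
\frac{d}{dt}p(t) = \sum_{j=1}^{d} F(\zeta_j, c_j)\,\phi_j(t).
\]
Evaluating at $t = c_i$ and using $\phi_j(c_i) = \delta_{ij}$ gives $\frac{d}{dt}p(t)\big|_{t=c_i} = F(\zeta_i, c_i)$. Separately, from the definition of $T$,
\[
p(c_i) = v + \int_0^{c_i}\sum_{j=1}^{d} F(\zeta_j, c_j)\phi_j(s)\,ds = v + T(\zeta)_i = \zeta_i,
\]
where the last equality uses the hypothesis (\ref{eq:ODE_fix_point}). Combining the two computations yields $\frac{d}{dt}p(t)\big|_{t=c_i} = F(\zeta_i, c_i) = F(p(c_i), c_i)$, which is exactly the $i$-th equation of (\ref{eq:ODE_discrete}).

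There is no real obstacle here: this lemma is a tautological rewriting of the collocation conditions as an integral fixed-point equation, exploiting the duality between nodal values (of $p$ at the $c_i$) and the Lagrange interpolation formula for the derivative $p'$. The substantive work — showing that the map $\zeta \mapsto v + T(\zeta)$ actually has a (unique) fixed point, and that the resulting polynomial closely approximates the true solution $u$ — is what the next parts of the section will address; here the only thing to verify is the equivalence between (\ref{eq:ODE_discrete}) and (\ref{eq:ODE_fix_point}), which is what the argument above does.
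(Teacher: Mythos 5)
Your proof is correct and follows essentially the same argument as the paper's: use $\phi_j(c_i)=\delta_{ij}$ to read off $p'(c_i)=F(\zeta_i,c_i)$, and use the fixed-point equation to identify $p(c_i)=\zeta_i$. The only additions are the explicit degree count and the closing commentary, both accurate but not needed for the verification itself.
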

\begin{proof}
Define the polynomials $\phi_{i}(s)=\prod_{j\neq i}\frac{s-c_{j}}{c_{i}-c_{j}}.$
Note that $\phi_{i}(c_{j})=\delta_{ij}$. Therefore, we have
\[
\sum_{j=1}^{d}\alpha_{j}\phi_{j}(c_{i})=\alpha_{i}.
\]
Therefore, $p(0)=v$ and
\[
\frac{d}{dt}p(c_{i})=\sum_{j=1}^{d}F(\zeta_{j},c_{j})\phi_{j}(c_{i})=F(\zeta_{j},c_{j}).
\]
Since $\zeta_{i}=v+T(\zeta)_{i}$, we have that
\[
\zeta_{i}=v+\int_{0}^{c_{i}}\sum_{j=1}^{d}F(\zeta_{j},c_{j})\phi_{j}(s)ds=p(c_{i}).
\]
Hence, we have $\frac{d}{dt}p(c_{i})=F(p(c_{i}),c_{i}).$ Therefore,
$p$ is a solution to the system (\ref{eq:ODE_discrete}).
\end{proof}
From Lemma \ref{lem:two_ODE_version}, we see that it suffices to
solve the system (\ref{eq:ODE_fix_point}). We solve it by a simple
fix point iteration shown in Algorithm \ref{alg:ode}.

\begin{algorithm2e}
\caption{$\texttt{CollocationMethod}$}

\label{alg:ode}

\SetAlgoLined

\textbf{Input:} An ordinary differential equation $\frac{d}{dt}u(t)=F(u(t),t)\text{ for }0\leq t\leq\ell$
with initial condition $u(0)=v$. 

Define $T(\zeta)_{(i,k)}=\int_{0}^{c_{i}}\sum_{j=1}^{d}F(\zeta_{j},c_{j})_{k}\phi_{j}(s)ds\text{ for }i\in[d],k\in[n]$
as defined in Lemma \ref{lem:two_ODE_version}.

$c_{i}=\frac{\ell}{2}+\frac{\ell}{2}\cos(\frac{2i-1}{2d}\pi)$ for
all $i\in[d]$, $K=4000\ell\max_{t\in[0,\ell]}\norm{F(v,t)}_{p}$,
$Z=\log_{2}(K/\varepsilon)$

$\zeta_{i}^{(0)}=v+T(\overline{v})_{i}$ for all $i\in[d]$ where
$\overline{v}=(v,v,\cdots,v)\in\R^{d\times n}$.

\For{$z=0,\cdots,Z$}{

$\zeta_{i}^{(z+1)}=v+T(\zeta^{(z)})_{i}$ for $i\in[d]$.

}

\textbf{Output:} $p(t)=v+\int_{0}^{t}\sum_{j=1}^{d}F(\zeta_{j}^{(Z+1)},c_{j})\phi_{j}(s)ds$.

\end{algorithm2e}

In the following lemma, we show that $T$ is a contraction mapping
if $F$ is smooth enough and hence this algorithm converges linearly.
We will use the following norm: $\norm x_{\infty;p}=\max_{i\in[d]}\left(\sum_{k\in[n]}\left|x_{(i,k)}\right|^{p}\right)^{1/p}$.
\begin{lem}
\label{lem:smothness_T}Given $x,y\in\R^{d\times n}$, suppose that
$\norm{F(x_{i},t)-F(y_{i},t)}_{p}\leq L\norm{x_{i}-y_{i}}_{p}$ for
all $0\leq t\leq\ell$ and all $i\in[d]$. For $c_{k}=\frac{\ell}{2}+\frac{\ell}{2}\cos(\frac{2k-1}{2d}\pi)$
and $T$ defined Lemma \ref{lem:two_ODE_version}, we have that
\[
\norm{T(x)-T(y)}_{\infty;p}\leq1000\ell L\norm{x-y}_{\infty;p}.
\]
Also, for any $\alpha\in\R^{d\times n}$, we have that
\[
\int_{0}^{t}\sum_{j=1}^{d}\alpha_{j}\phi_{j}(s)ds\leq1000\ell\max_{i}\norm{\alpha_{i}}_{p}
\]
\end{lem}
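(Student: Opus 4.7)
The plan is to reduce both inequalities to a single ``integrated Lebesgue function'' estimate for the Chebyshev nodes, and then invoke a classical bound for that quantity.

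First I would pull the $\ell_p$ norm (indexed by $k\in[n]$) inside the integral defining $T$. Writing
\[
\bigl(T(x)-T(y)\bigr)_{(i,k)}=\int_{0}^{c_{i}}\sum_{j=1}^{d}\bigl(F(x_{j},c_{j})_{k}-F(y_{j},c_{j})_{k}\bigr)\phi_{j}(s)\,ds,
\]
Minkowski's inequality in the $k$-variable and the triangle inequality in the integral give
\[
\|T(x)_{i}-T(y)_{i}\|_{p}\ \le\ \int_{0}^{c_{i}}\sum_{j=1}^{d}\bigl\|F(x_{j},c_{j})-F(y_{j},c_{j})\bigr\|_{p}\,|\phi_{j}(s)|\,ds.
\]
Applying the hypothesized Lipschitz bound on $F$ and then replacing $\|x_{j}-y_{j}\|_{p}$ by $\|x-y\|_{\infty;p}$ reduces the task to the purely scalar estimate
\[
\Lambda\ \defeq\ \sup_{0\le t\le\ell}\int_{0}^{t}\sum_{j=1}^{d}|\phi_{j}(s)|\,ds\ \le\ 1000\,\ell .
\]
The second stated inequality follows from exactly the same reduction: $\|\int_{0}^{t}\sum_{j}\alpha_{j}\phi_{j}(s)\,ds\|_{p}\le (\max_{i}\|\alpha_{i}\|_{p})\cdot\Lambda$.

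Thus everything comes down to proving $\Lambda\le 1000\,\ell$. This is the classical integrated Lebesgue function for Lagrange interpolation at Chebyshev nodes, and this is the step I expect to be the main work. The plan is to rescale to $[-1,1]$ so that the nodes become the standard Chebyshev zeros $\cos\bigl(\tfrac{2k-1}{2d}\pi\bigr)$, then use the explicit product formula $\phi_{j}(s)=\prod_{k\ne j}\frac{s-c_{k}}{c_{j}-c_{k}}$ together with the identity $\omega(s)=\prod_{k}(s-c_{k})=2^{1-d}T_{d}(s)$ (up to the affine rescaling). This gives $|\phi_{j}(s)|=\frac{|T_{d}(s)|}{|s-c_{j}|\,|T_{d}'(c_{j})|}$. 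Summing over $j$ and integrating, I would split the interval into the ``near'' region within distance $O(\ell/d^{2})$ of the endpoints, where $|T_{d}|$ is bounded and the denominators are controlled node-by-node, and the ``bulk'' region, where the classical estimate $\sum_{j}|\phi_{j}(s)|=O(\log d)$ holds pointwise. Combining these with the observation that the average spacing between $c_{j}$'s is $\Theta(\ell/d)$ yields a uniform bound of the form $\Lambda\le C\ell$ with $C$ an absolute constant; choosing the constant generously gives $\Lambda\le 1000\,\ell$.

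The main technical obstacle is getting a constant independent of $d$ for $\Lambda/\ell$: a naive pointwise Lebesgue constant bound only gives $O(\ell\log d)$, so the delicate point is exploiting the cancellation between the positive and negative parts of $\phi_{j}$ under the integral (equivalently, the fact that the Clenshaw--Curtis quadrature weights corresponding to these nodes are non-negative with total mass $\ell$, together with local control near the interpolation points). Once this $O(\ell)$ bound is established, substitution back through the Minkowski/Lipschitz reductions above yields both claims with the stated constant $1000\,\ell L$ and $1000\,\ell$.
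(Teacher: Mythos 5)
Your reduction goes wrong at the very first step, and the error is not cosmetic. You bound
\[
\|T(x)_i - T(y)_i\|_p \le \int_0^{c_i} \sum_{j=1}^d \|F(x_j,c_j)-F(y_j,c_j)\|_p\, |\phi_j(s)|\, ds,
\]
and thereby reduce everything to controlling the integrated Lebesgue function $\Lambda = \sup_t \int_0^t \sum_j |\phi_j(s)|\, ds$. But this quantity is \emph{not} $O(\ell)$: for Chebyshev nodes the Lebesgue function $\sum_j|\phi_j(s)|$ has mean value $\Theta(\log d)$ over the interval, so $\Lambda = \Theta(\ell\log d)$, and your claimed bound $\Lambda\le 1000\ell$ is simply false for large $d$. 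You half-notice the problem --- you say the ``delicate point is exploiting the cancellation between the positive and negative parts of $\phi_j$'' --- but you have already placed $|\phi_j(s)|$ inside the integral, which destroys exactly that cancellation. The Clenshaw--Curtis positivity you invoke concerns the signed weights $\int \phi_j$, not $\int|\phi_j|$, and does not rescue $\Lambda$.

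The paper's proof avoids this by observing that $\phi_j(s)$ is a scalar that does not depend on the coordinate index $k$, so one can pull the integral through the sum and factor it out of the $\ell_p$ norm \emph{before} taking absolute values:
\[
\left\|\sum_j \bigl(F(x_j,c_j)-F(y_j,c_j)\bigr)\int_0^{c_i}\phi_j(s)\,ds\right\|_p \le \left(\sum_{j=1}^d \left|\int_0^{c_i}\phi_j(s)\,ds\right|\right)\max_j\|F(x_j,c_j)-F(y_j,c_j)\|_p.
\]
The controlling quantity is therefore $\max_i\sum_j|\int_0^{c_i}\phi_j(s)\,ds|$ --- integral inside, absolute value outside --- which retains the sign cancellation inside each integral. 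Lemma \ref{lem:integrate_interpolation} then shows each individual integral is $O(1/d)$ (this is where the real work is, using the explicit Chebyshev form of $\phi_j$), and summing over $d$ terms and rescaling to $[0,\ell]$ yields $O(\ell)$ with a dimension-free constant. If you want to salvage your write-up, you must push the integral past the sum and norm before estimating, and then prove the per-$j$ bound $|\int_0^{c_i}\phi_j(s)\,ds| = O(\ell/d)$ rather than any statement about $\int|\phi_j|$.
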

\begin{proof}
Using the definition of $T$ and $\norm{\cdot}_{\infty;p}$, we have
\begin{eqnarray*}
\norm{T(x)-T(y)}_{\infty;p} & = & \max_{i}\norm{\int_{0}^{c_{i}}\sum_{j=1}^{d}F(x_{j},c_{j})\phi_{j}(s)ds-\int_{0}^{c_{i}}\sum_{j=1}^{d}F(y_{j},c_{j})\phi_{j}(s)ds}_{p}\\
 & \leq & \left(\max_{i}\sum_{j=1}^{d}\left|\int_{0}^{c_{i}}\phi_{j}(s)ds\right|\right)\left(\max_{i}\norm{F(x_{i},c_{i})-F(y_{i},c_{i})}_{p}\right)\\
 & \leq & L\norm{x-y}_{\infty;p}\left(\max_{i}\sum_{j=1}^{d}\left|\int_{0}^{c_{i}}\phi_{j}(s)ds\right|\right).
\end{eqnarray*}

To bound $\max_{i}\sum_{j=1}^{d}\left|\int_{0}^{c_{i}}\phi_{j}(s)ds\right|$,
it is easier to work with the function $\psi_{j}(x)\defeq\phi_{j}(\frac{2}{\ell}x-1)$.
Note that $\psi_{k}$ is the Lagrange basis polynomials on the nodes
$\{\cos(\frac{2k-1}{2d}\pi)\}_{k=1}^{d}$ and hence we have 
\[
\psi_{k}(x)=\frac{(-1)^{k-1}\sqrt{1-x_{k}^{2}}\cos\left(d\cos^{-1}x\right)}{d(x-x_{k})}
\]
where $x_{k}=\cos(\frac{2k-1}{2d}\pi)$. Lemma \ref{lem:integrate_interpolation}
at the end of this section shows that $\left|\int_{-1}^{t}\psi_{k}(x)dx\right|\leq\frac{2000}{d}$
for all $t$. Therefore, we have that $\max_{i}\sum_{j=1}^{d}\left|\int_{0}^{c_{i}}\phi_{j}(s)ds\right|\leq1000\ell$.
This gives the first inequality. The second inequality is similar.
\end{proof}
In each iteration of the collation method, we need to compute $\int_{0}^{c_{i}}\sum\alpha_{j}\phi_{j}(s)ds$
for some $\alpha_{j}$. The following theorem shows that this can
be done in $O(d\log(d/\varepsilon))$ time using multipole method.
\begin{thm}[{\cite[Sec 5]{dutt1996fast}}]
\label{thm:integrate_basis}Let $\phi_{i}(s)$ be the Lagrange basis
polynomials on the Chebyshev nodes on $[0,\ell]$, namely, $\phi_{i}(s)=\prod_{j\neq i}\frac{s-c_{j}}{c_{i}-c_{j}}$
with $c_{i}=\frac{\ell}{2}+\frac{\ell}{2}\cos(\frac{2i-1}{2d}\pi)$.
Given a polynomial $p(s)=\sum_{j}\alpha_{j}\phi_{j}(s)$ and a point
set $\{x_{1},x_{2},\cdots,x_{d}\}$, one can compute $t_{i}$ such
that
\[
\left|t_{i}-\int_{0}^{x_{i}}p(s)ds\right|\leq\varepsilon\ell\sqrt{\sum_{j\neq k}(\alpha_{j}-\alpha_{k})^{2}}\text{ for }i\in[d]
\]
in time $O(d\log(d/\varepsilon))$.
\end{thm}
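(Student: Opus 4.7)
The plan is to decompose the task into three standard numerical building blocks: an FFT-based basis change, exact termwise integration in the Chebyshev basis, and a fast multipole evaluation at arbitrary points.

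\textbf{Step 1 (Basis change via FFT).} First I would convert $p$ from the Lagrange basis $\{\phi_j\}$ on the Chebyshev nodes $c_j=\tfrac{\ell}{2}+\tfrac{\ell}{2}\cos(\tfrac{2j-1}{2d}\pi)$ to the shifted Chebyshev polynomial basis $\{T_k(\tfrac{2s}{\ell}-1)\}_{k=0}^{d-1}$. Because the $c_j$ are precisely the roots of $T_d(\tfrac{2s}{\ell}-1)$, this is a discrete Chebyshev transform computable by FFT in $O(d\log d)$ time, producing coefficients $\beta_0,\ldots,\beta_{d-1}$ with $p(s)=\sum_k \beta_k T_k(\tfrac{2s}{\ell}-1)$. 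Orthogonality of the transform gives $\sum_k \beta_k^2 = O(1)\sum_j \alpha_j^2$ and, more importantly for the error, $\sum_{k\ge 1}\beta_k^2 = O(1)\min_\alpha\sum_j (\alpha_j-\alpha)^2$, since a constant shift in $p$ only affects $\beta_0$.

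\textbf{Step 2 (Exact integration in Chebyshev basis).} Next, using the closed form $\int T_k(u)\,du = \tfrac{1}{2}\bigl(\tfrac{T_{k+1}(u)}{k+1}-\tfrac{T_{k-1}(u)}{k-1}\bigr)$ for $k\ge 2$ (with the obvious special cases for $k=0,1$), I would compute in $O(d)$ time the Chebyshev coefficients $\gamma_0,\ldots,\gamma_d$ of the antiderivative $P(s)=\int_0^s p(u)\,du$, with the constant of integration fixed by $P(0)=0$. The recursion is contractive in $k$, so $\sum_{k\ge 1}\gamma_k^2 = O(\ell^2)\sum_{k\ge 1}\beta_k^2$.

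\textbf{Step 3 (Fast evaluation at arbitrary points).} Finally, I would evaluate $P$ at the $d$ target points $x_1,\ldots,x_d$ using the FMM-based algorithm of \cite{dutt1996fast}, which evaluates a degree-$d$ Chebyshev series at $d$ arbitrary points with relative error $\varepsilon$ in total time $O(d\log(d/\varepsilon))$. Since a constant shift of $p$ by $\alpha_k$ affects $P(x_i)$ only through the explicitly computable linear term $\alpha_k x_i$, I may apply the FMM to $p-\alpha_k$ for any reference index $k$ and add $\alpha_k x_i$ back exactly. The FMM error is bounded by $\varepsilon$ times the $\ell_2$ norm of the nonconstant Chebyshev coefficients of the shifted antiderivative, which by Steps 1--2 is $O(\ell)\sqrt{\sum_{j\ne k}(\alpha_j-\alpha_k)^2}$, yielding the claimed bound after symmetrizing over $k$.

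The main obstacle is the third step: the FFT and the termwise integration are textbook, but obtaining an $O(d\log(d/\varepsilon))$ evaluator of a Chebyshev expansion at arbitrary nodes genuinely requires the FMM machinery of \cite{dutt1996fast}, together with a careful tracking of how the error norm on the Chebyshev coefficients pulls back to the Lagrange coefficients $\alpha_j$.
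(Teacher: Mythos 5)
The paper does not prove this theorem; it cites Dutt, Gu and Rokhlin \cite{dutt1996fast} and moves on, so there is no internal derivation to measure your attempt against---your task was effectively to reconstruct the cited result. Your three-stage pipeline---DCT/FFT from the Lagrange data $\alpha_j$ to Chebyshev coefficients, exact termwise integration via $\int T_k=\tfrac12\bigl(\tfrac{T_{k+1}}{k+1}-\tfrac{T_{k-1}}{k-1}\bigr)$, then fast evaluation at the targets $x_i$---is a legitimate route that lands in the claimed $O(d\log(d/\varepsilon))$, and the device of subtracting a constant $\alpha_k$ from $p$, integrating the resulting linear term $\alpha_k x_i$ exactly, and handing only the remainder to the fast evaluator is exactly the right way to produce the $(\alpha_j-\alpha_k)$ structure in the error bound.

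The one genuine soft spot is the error accounting in your Step 3. You assert the fast evaluator incurs error $\varepsilon$ times the $\ell_2$ norm of the nonconstant Chebyshev coefficients of the antiderivative, but the standard guarantee for FMM-type one-dimensional evaluation (Cauchy kernel or Chebyshev expansion) is an $\ell_1$ bound on the source strengths: pointwise error $\le\varepsilon\sum_k|\gamma_k|$, not $\varepsilon\|\gamma\|_2$. To recover the theorem you then need Cauchy--Schwarz, $\|\gamma\|_1\le\sqrt{d}\,\|\gamma\|_2$, and the extra $\sqrt{d}$ must be absorbed by the stated bound. That works only under the double-sum reading of $\sum_{j\ne k}(\alpha_j-\alpha_k)^2$ (over all ordered pairs), which equals $2d\sum_j(\alpha_j-\bar\alpha)^2$ and so carries precisely a $\sqrt{d}$ on top of the centered $\ell_2$ norm; your phrasing (``for any reference index $k$,'' ``after symmetrizing over $k$'') suggests a single fixed-$k$ reading that drops this factor and leaves the $\ell_1$--$\ell_2$ gap open. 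Make the double-sum reading explicit and replace the $\ell_2$ FMM claim by the $\ell_1$ one plus Cauchy--Schwarz, and the argument closes. As a comparison, the actual Dutt--Gu--Rokhlin route in Section~5 is more direct: it writes the interpolant in barycentric form and reduces immediately to a one-dimensional FMM on the Cauchy kernel $1/(s-c_j)$ (with an explicit primitive for each $\phi_j$), bypassing the Chebyshev coefficient transform entirely; both routes give the same complexity, but theirs makes the $\ell_1$ nature of the FMM error bound transparent from the start.
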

Now we have everything to state our main result in this subsection.
\begin{thm}
\label{thm:Solving_ODE}Let $u(t)\in\Rn$ be the solution of the ODE
(\ref{eq:ODE}). Suppose we are given $\varepsilon,\ell>0$ and $1\leq p\leq\infty$
such that

\begin{enumerate}
\item There is a degree $d$ polynomial $q$ from $\R$ to $\Rn$ such that
$q(0)=v$ and $\norm{\frac{d}{dt}u(t)-\frac{d}{dt}q(t)}_{p}\leq\frac{\varepsilon}{\ell}$
for all $0\leq t\leq\ell$.
\item We have that $\norm{F(x,t)-F(y,t)}_{p}\leq\frac{1}{2000\ell}\norm{x-y}_{p}$
for all $\norm{x-v}_{p}\leq K$ and $\norm{y-v}_{p}\leq K$ with $K=4000\ell\max_{t\in[0,\ell]}\norm{F(v,t)}_{p}.$ 
\end{enumerate}
Then, Algorithm $\ensuremath{\texttt{CollocationMethod}}$ outputs
a degree $d$ polynomial $p(t)$ such that $\max_{0\leq t\leq\ell}\norm{u(t)-p(t)}_{p}\leq4003\varepsilon$
in $O(nd\log^{2}(dK/\varepsilon))$ time and $O(d\log(K/\varepsilon))$
evaluations of $F$.
\end{thm}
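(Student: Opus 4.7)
The plan is to view Algorithm \texttt{CollocationMethod} as a Banach fixed-point iteration for the map $T$ from Lemma \ref{lem:two_ODE_version}, and then compare the resulting discrete solution to $u$ by using the approximating polynomial $q$ as a witness. The Lipschitz hypothesis $\|F(x,t)-F(y,t)\|_p \le \frac{1}{2000\ell}\|x-y\|_p$ combined with Lemma \ref{lem:smothness_T} says $T$ is a $\frac{1}{2}$-contraction in the $\|\cdot\|_{\infty;p}$ norm on the ball $B = \{x : \|x-\bar v\|_{\infty;p} \le K\}$, where $\bar v = (v,\ldots,v)$.

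First I would show by induction that every iterate stays in $B$. The second inequality in Lemma \ref{lem:smothness_T} gives $\|\zeta^{(0)}-v\|_{\infty;p} = \|T(\bar v)\|_{\infty;p} \le 1000\ell \max_t\|F(v,t)\|_p = K/4$. Assuming $\|\zeta^{(z)}-v\|_{\infty;p}\le K/2$, the triangle inequality plus contraction gives $\|\zeta^{(z+1)}-v\|_{\infty;p} \le \|T(\zeta^{(z)})-T(\bar v)\|_{\infty;p}+\|T(\bar v)\|_{\infty;p} \le \tfrac{1}{2}\|\zeta^{(z)}-v\|_{\infty;p}+K/4 \le K/2$, closing the induction. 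Banach's theorem then yields a unique fixed point $\zeta^*\in B$ of $\zeta = v+T(\zeta)$, and the iteration error contracts as $\|\zeta^{(z+1)}-\zeta^*\|_{\infty;p}\le 2^{-(z+1)}K$. Choosing $Z = \log_2(K/\varepsilon)$ gives $\|\zeta^{(Z+1)}-\zeta^*\|_{\infty;p}\le \varepsilon$.

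The core step is showing $\zeta^*$ approximates the true solution via the polynomial $q$. Set $\eta_i = q(c_i)$. Since $q'$ has degree at most $d-1$, Lagrange interpolation at the $d$ Chebyshev nodes recovers $q'$ exactly, so $q(t) = v + \int_0^t \sum_j q'(c_j)\phi_j(s)\,ds$. Subtracting the definition of $T$ yields the residual
\[
\eta_i - v - T(\eta)_i \;=\; \int_0^{c_i}\sum_{j=1}^d \bigl(q'(c_j) - F(q(c_j),c_j)\bigr)\phi_j(s)\,ds.
\]
Using $u'(c_j)=F(u(c_j),c_j)$, the Lipschitz bound on $F$, and the pointwise bound $\|q(c_j)-u(c_j)\|_p = \|\int_0^{c_j}(q'-u')\|_p \le \varepsilon$, I obtain $\|q'(c_j)-F(q(c_j),c_j)\|_p \le \varepsilon/\ell + L\varepsilon = O(\varepsilon/\ell)$. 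The second inequality of Lemma \ref{lem:smothness_T} converts this into $\|\eta - v - T(\eta)\|_{\infty;p} = O(\varepsilon)$. Combining with the $\tfrac12$-contraction, $\|\eta-\zeta^*\|_{\infty;p} \le 2\|\eta-v-T(\eta)\|_{\infty;p} = O(\varepsilon)$, and hence $\|\eta-\zeta^{(Z+1)}\|_{\infty;p} = O(\varepsilon)$ as well.

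Finally I would compare $p(t)$ to $u(t)$. Writing $p(t)-q(t) = \int_0^t\sum_j(F(\zeta^{(Z+1)}_j,c_j)-q'(c_j))\phi_j(s)\,ds$, the second part of Lemma \ref{lem:smothness_T} plus the Lipschitz bound on $F$ applied to $\|\zeta^{(Z+1)}_j - \eta_j\|_p = O(\varepsilon)$ gives $\sup_{t\in[0,\ell]}\|p(t)-q(t)\|_p = O(\varepsilon)$; combining with $\|q-u\|_p\le \varepsilon$ yields the desired $O(\varepsilon)$ bound (the factor $4003$ comes from carefully tracking the constant $1000$ through Lemma \ref{lem:smothness_T} twice, plus the $\varepsilon$ from $q$ versus $u$). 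For the complexity, each of the $Z+1 = O(\log(K/\varepsilon))$ iterations performs $d$ evaluations of $F$ and applies Theorem \ref{thm:integrate_basis} coordinatewise at cost $O(nd\log(d/\varepsilon))$, giving the stated $O(nd\log^2(dK/\varepsilon))$ time and $O(d\log(K/\varepsilon))$ evaluations of $F$. The only delicate bookkeeping is ensuring all intermediate points lie in the region where the Lipschitz hypothesis holds (handled by the $K/2$ invariance) and tracking the residual $O(\varepsilon/\ell)$ so that the factor of $\ell$ cancels after integration.
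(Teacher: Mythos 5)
Your proof is correct and follows essentially the same strategy as the paper: view the algorithm as a contractive fixed-point iteration for $T$, establish that the iterates stay in the ball where the Lipschitz hypothesis holds, use the exactness of degree-$(d-1)$ Lagrange interpolation of $q'$ to compare $\bar q=(q(c_i))_i$ to the collocation solution, and transfer the resulting nodewise bound to a uniform bound via Lemma \ref{lem:smothness_T}. The one place you diverge is cosmetic but actually cleaner than the paper: you bound $\|\bar q-\zeta^*\|_{\infty;p}\le 2\,\|\bar q-v-T(\bar q)\|_{\infty;p}$ directly from the contraction property, whereas the paper routes the same information through one extra application of $T$ (the estimate $\|\bar q-\zeta^{(k+1)}\|\le\tfrac12\|\bar q-\zeta^{(k)}\|+2000\varepsilon$ together with $\|\zeta^{(k+1)}-\zeta^{(k)}\|\le2\varepsilon$); both yield the same constants up to bookkeeping.
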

\begin{proof}
First, we estimate the initial error. Let $\zeta^{(\infty)}$ be the
solution to (\ref{eq:ODE_fix_point}), $\zeta_{i}^{(0)}=v+T(\overline{v})_{i}$
be the initial vector and $\overline{v}=(v,v,\cdots,v)\in\R^{d\times n}$.
Then, we have that
\begin{eqnarray*}
\norm{\zeta^{(\infty)}-\zeta^{(0)}}_{\infty;p} & = & \max_{i}\norm{(v+T(\zeta^{(\infty)})_{i})-(v+T(\overline{v})_{i})}_{p}\\
 & = & \norm{T(\zeta^{(\infty)})-T(\overline{v})}_{\infty;p}\\
 & \leq & \frac{1}{2}\norm{\zeta^{(\infty)}-\overline{v}}_{\infty;p}\\
 & \leq & \frac{1}{2}\left(\norm{\zeta^{(\infty)}-\zeta^{(0)}}_{\infty;p}+\norm{\zeta^{(0)}-\overline{v}}_{\infty;p}\right).
\end{eqnarray*}
Therefore, we have that
\begin{eqnarray*}
\norm{\zeta^{(\infty)}-\zeta^{(0)}}_{\infty;p} & \leq & \norm{\zeta^{(0)}-\overline{v}}_{\infty;p}\\
 & = & \norm{T((v,v,\cdots,v))}_{\infty;p}\\
 & \leq & 1000\ell\max_{i}\norm{F(v,c_{i})}_{p}\\
 & \leq & \frac{K}{4}.
\end{eqnarray*}
Hence, we have that $\norm{\zeta^{(\infty)}-\overline{v}}_{\infty;p}\leq\frac{K}{2}$. 

Using the assumption on $F$, Lemma \ref{lem:smothness_T} shows that
$\norm{T(x)-T(y)}_{\infty;p}\leq\frac{1}{2}\norm{x-y}_{\infty;p}$
and hence
\begin{eqnarray*}
\norm{\zeta^{(\infty)}-\zeta^{(k)}}_{\infty;p} & \leq & \frac{K}{2^{k+1}}.
\end{eqnarray*}
Thus, it takes $\log_{2}(K/\varepsilon)$ iteration to get a point
$\zeta^{(k)}$ with 
\begin{equation}
\norm{\zeta^{(\infty)}-\zeta^{(k)}}_{\infty;p}\leq\varepsilon.\label{eq:accuracy_zeta}
\end{equation}
Also, this shows that $\norm{\zeta^{(\infty)}-\overline{v}}_{\infty;p}\leq K$.
Hence, we only requires the assumption (2) to be satisfied in this
region.

Now, we show that $\zeta^{(k)}$ is close to the solution by using
the existence of $q$. By the assumption on $q$, we have that $\norm{\frac{d}{dt}u(t)-\frac{d}{dt}q(t)}_{p}\leq\frac{\varepsilon}{\ell}$
and hence $\norm{u(t)-q(t)}_{p}\leq\varepsilon$. Using the smoothness
of $F$, we have that $\norm{F(u(t),t)-F(q(t),t)}_{p}\leq\frac{\varepsilon}{\ell}$
for all $0\leq t\leq\ell$. Therefore, we have that
\begin{eqnarray*}
\norm{\frac{d}{dt}q(t)-F(q(t),t)}_{p} & \leq & \norm{\frac{d}{dt}q(t)-\frac{d}{dt}u(t)}_{p}+\norm{F(q(t),t)-F(u(t),t)}_{p}\\
 & \leq & 2\frac{\varepsilon}{\ell}
\end{eqnarray*}
for all $0\leq t\leq1$. Therefore, we have
\begin{eqnarray*}
q(t) & = & v+\int_{0}^{t}\sum_{j=1}^{d}\frac{d}{dt}q(c_{j})\phi_{j}(s)ds\\
 & = & v+\int_{0}^{t}\sum_{j=1}^{d}\left(F(q(c_{j}),c_{j})+\delta_{j}\right)\phi_{j}(s)ds
\end{eqnarray*}
where $\norm{\delta_{i}}_{p}\leq2\frac{\varepsilon}{\ell}$ for all
$i\in[d]$. By Lemma \ref{lem:smothness_T}, we have that
\[
\norm{q(t)-v-\int_{0}^{t}\sum_{j=1}^{d}F(q(c_{j}),c_{j})\phi_{j}(s)ds}_{p}\leq2000\varepsilon.
\]
Now, we compare $\overline{q}_{j}\defeq q(c_{j})$ with the approximate
solution constructed by the fix point algorithm
\[
q^{(k)}(t)=v+\int_{0}^{t}\sum_{j=1}^{d}F(\zeta_{j}^{(k)},c_{j})\phi_{j}(s)ds.
\]
For $0\leq t\leq\ell$, we have
\begin{eqnarray}
\norm{q(t)-q^{(k)}(t)}_{p} & \leq & \norm{T(\overline{q})-T(\zeta^{(k)})}_{\infty;p}+2000\varepsilon\nonumber \\
 & \leq & \frac{1}{2}\norm{\overline{q}-\zeta^{(k)}}_{\infty;p}+2000\varepsilon.\label{eq:P_difference}
\end{eqnarray}
Setting $t=c_{i}$, we have
\begin{eqnarray*}
\norm{\overline{q}-\zeta^{(k+1)}}_{\infty;p} & = & \max_{i}\norm{\overline{q}(c_{i})-q^{(k)}(c_{i})}_{p}\\
 & \leq & \frac{1}{2}\norm{q-\zeta^{(k)}}_{\infty;p}+2000\varepsilon.
\end{eqnarray*}
Since $\norm{\zeta^{(k+1)}-\zeta^{(k)}}_{\infty;p}\leq\norm{\zeta^{(k+1)}-\zeta^{(\infty)}}_{\infty;p}+\norm{\zeta^{(k)}-\zeta^{(\infty)}}_{\infty;p}\leq2\varepsilon$,
we have
\begin{eqnarray*}
\norm{\overline{q}-\zeta^{(k)}}_{\infty;p} & \leq & \frac{1}{2}\norm{\overline{q}-\zeta^{(k)}}_{\infty;p}+2\varepsilon+2000\varepsilon
\end{eqnarray*}
Therefore, we have
\[
\norm{\overline{q}-\zeta^{(k)}}_{\infty;p}\leq4004\varepsilon.
\]
Putting it into (\ref{eq:P_difference}), we have
\[
\norm{q(t)-q^{(k)}(t)}_{p}\leq4002\varepsilon
\]
for all $0\leq t\leq\ell$. Using that $\norm{u(t)-q(t)}_{p}\leq\varepsilon$,
we have
\[
\norm{u(t)-q^{(k)}(t)}_{p}\leq4003\varepsilon.
\]
Hence, it proves the guarantee.

Each iteration involves computing $v_{k}+\int_{0}^{c_{i}}\sum_{j=1}^{d}F(\zeta_{j}^{(z)},c_{j})_{k}\phi_{j}(s)ds$
for all $i\in[d],k\in[n]$. Note that $\sum_{j=1}^{d}F(\zeta_{j}^{(z)},c_{j})_{k}\phi_{j}(s)$
is a polynomial expressed by Lagrange polynomials. Theorem \ref{thm:integrate_basis}
shows they can be computed in $O(d\log(dK/\varepsilon)))$ with $\frac{\varepsilon}{Kd^{O(1)}}$
accuracy. Since there are $n$ coordinates, it takes $O(nd\log(dK/\varepsilon)))$
time plus $d$ evaluation per iteration.
\end{proof}
The theorem above essentially says that if the solution is well approximated
by a polynomial and if the $F$ has small enough Lipschitz constant,
then we can reconstruct the solution efficiently. Note that this method
is not useful for stochastic differential equation because Taylor
expansion of the solution involves the high moments of probability
distributions which is very expensive to store.

The assumption on the Lipschitz constant of $F$ holds for our application.
For the rest of this subsection, we show that this assumption is not
necessary by taking multiple steps. This is mainly to make the result
easier to use for other applications and not needed for this paper.
First, we prove that the collocation method is stable under small
perturbation of the initial solution. 
\begin{lem}
\label{lem:collocation_is_stable}Let $p(t)$ and $\tilde{p}(t)$
be the outputs of $\ensuremath{\texttt{CollocationMethod}}$ for the
initial value $v$ and the initial value $\tilde{v}$. We make the
same assumption as Theorem \ref{thm:Solving_ODE} for the initial
condition $v$ (albeit a small change in the constants). Suppose that
$\norm{\tilde{v}-v}_{p}\leq c\ell\max_{t\in[0,\ell]}\norm{F(v,t)}_{p}$
with small enough constant $c$, we have that 
\[
\max_{0\leq t\leq\ell}\norm{p(t)-\tilde{p}(t)}_{p}=O(\norm{\tilde{v}-v}_{p}).
\]
\end{lem}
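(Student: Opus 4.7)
The plan is to compare the two fixed-point iterations $\zeta^{(z)}$ and $\tilde\zeta^{(z)}$ produced by \texttt{CollocationMethod} when starting from $v$ and $\tilde v$ respectively, and exploit the fact that $T$ is a contraction in the $\norm{\cdot}_{\infty;p}$ norm (Lemma \ref{lem:smothness_T} combined with the Lipschitz hypothesis on $F$).

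The first step is to verify that both iterations stay inside a region where the Lipschitz hypothesis on $F$ applies. Writing $K=4000\ell\max_t\norm{F(v,t)}_p$ and $\tilde K=4000\ell\max_t\norm{F(\tilde v,t)}_p$, the Lipschitz bound on $F$ near $v$ together with $\norm{\tilde v-v}_p\leq cK$ gives $\tilde K\leq K(1+O(c))$, so the natural ``radius of operation'' $\norm{x-\tilde v}_p\leq\tilde K$ for the tilde-iteration is contained in the slightly enlarged ball $\norm{x-v}_p\leq K(1+O(c))$. Assuming the Lipschitz hypothesis on this enlarged ball (this is the ``small change in constants'' permitted by the statement), the analysis in Theorem \ref{thm:Solving_ODE} applies verbatim to both iterations, and in particular the fixed-point limits $\zeta^{(\infty)}$ and $\tilde\zeta^{(\infty)}$ exist and are reached up to accuracy $\varepsilon$ by $\zeta^{(Z+1)}$ and $\tilde\zeta^{(Z+1)}$.

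Next I would compare the two limits. From $\zeta_i^{(\infty)}=v+T(\zeta^{(\infty)})_i$ and $\tilde\zeta_i^{(\infty)}=\tilde v+T(\tilde\zeta^{(\infty)})_i$, subtracting and using the $\tfrac12$-contraction of $T$ gives
\[
\norm{\zeta^{(\infty)}-\tilde\zeta^{(\infty)}}_{\infty;p}\leq\norm{v-\tilde v}_p+\tfrac12\norm{\zeta^{(\infty)}-\tilde\zeta^{(\infty)}}_{\infty;p},
\]
hence $\norm{\zeta^{(\infty)}-\tilde\zeta^{(\infty)}}_{\infty;p}\leq 2\norm{v-\tilde v}_p$. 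Combined with the $\varepsilon$-accuracy of the iterates, $\norm{\zeta^{(Z+1)}-\tilde\zeta^{(Z+1)}}_{\infty;p}\leq 2\norm{v-\tilde v}_p+2\varepsilon$.

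Finally, from the output formula,
\[
p(t)-\tilde p(t)=(v-\tilde v)+\int_0^t\sum_{j=1}^d\bigl[F(\zeta_j^{(Z+1)},c_j)-F(\tilde\zeta_j^{(Z+1)},c_j)\bigr]\phi_j(s)\,ds,
\]
and the second bound of Lemma \ref{lem:smothness_T} together with the Lipschitz hypothesis on $F$ give
\[
\max_{0\leq t\leq\ell}\norm{p(t)-\tilde p(t)}_p\leq\norm{v-\tilde v}_p+\tfrac12\norm{\zeta^{(Z+1)}-\tilde\zeta^{(Z+1)}}_{\infty;p}+O(\varepsilon),
\]
which is $O(\norm{v-\tilde v}_p)$ once we take $\varepsilon$ comparable to or smaller than $\norm{v-\tilde v}_p$ (or pass to the exact fixed points). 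The main obstacle is the bookkeeping in the first step: one must be careful that every quantity produced by the tilde iteration (initial error, limit, intermediate iterates) stays in the enlarged ball where the Lipschitz hypothesis holds, so that the contraction argument can be applied uniformly in both runs.
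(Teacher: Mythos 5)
Your overall strategy is the same as the paper's: both iterations are driven by the same contraction $T$ (Lemma~\ref{lem:smothness_T}), and the stability of the output polynomial is reduced to the stability of the collocation iterates. The difference is in how you compare the two runs, and it actually creates a small but genuine gap.

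You detour through the exact fixed points, bounding $\norm{\zeta^{(\infty)}-\tilde\zeta^{(\infty)}}_{\infty;p}\le 2\norm{v-\tilde v}_p$ and then appealing to the $\varepsilon$-closeness of $\zeta^{(Z+1)}$ to $\zeta^{(\infty)}$. This leaves you with a final estimate of the form $O(\norm{v-\tilde v}_p+\varepsilon)$, and you notice this: your last sentence proposes either forcing $\varepsilon\lesssim\norm{v-\tilde v}_p$ or "passing to the exact fixed points." Neither quite repairs the argument as stated. The algorithm fixes $\varepsilon$ in advance and its output $p(t)$ is literally built from $\zeta^{(Z+1)}$, not the fixed point, so you cannot pass to the limit; and the lemma asserts $O(\norm{\tilde v-v}_p)$ unconditionally, not under an extra assumption relating $\varepsilon$ and $\norm{\tilde v-v}_p$. (In the downstream application, Theorem~\ref{thm:Solving_ODE_2}, one in fact wants exactly the unconditional form, since the two quantities are coupled there in a way that is easiest to handle when the lemma does not smuggle in an $\varepsilon$.)

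The paper avoids the $\varepsilon$ by comparing the iterates directly rather than the limits. Since both runs iterate $\zeta^{(k+1)}=v+T(\zeta^{(k)})$ and $\tilde\zeta^{(k+1)}=\tilde v+T(\tilde\zeta^{(k)})$ from initial points satisfying $\norm{\tilde\zeta^{(0)}-\zeta^{(0)}}_{\infty;p}\le 2\norm{\tilde v-v}_p$, the $\tfrac12$-contraction of $T$ gives by induction
\[
\norm{\tilde\zeta^{(k+1)}-\zeta^{(k+1)}}_{\infty;p}\le\norm{\tilde v-v}_p+\tfrac12\norm{\tilde\zeta^{(k)}-\zeta^{(k)}}_{\infty;p}\le 2\norm{\tilde v-v}_p
\]
uniformly in $k$, with no $\varepsilon$ anywhere. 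Feeding $k=Z+1$ into your (correct) final step via the output formula then yields the claimed $O(\norm{\tilde v-v}_p)$ bound directly. So the fix is simply to replace the fixed-point comparison by the inductive comparison of iterates; the rest of your argument, including the careful bookkeeping that both runs stay in a single enlarged ball where $F$ is Lipschitz, is fine and matches what the paper does (if more explicitly than the paper's one-line remark).
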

\begin{proof}
With a proof similar to Theorem \ref{thm:Solving_ODE}, we know that
$T$ is $\frac{1}{2}$ Lipschitz. Here, we use that $F$ is Lipschitz
in a neighbour of $v$ and a neighbour of $\tilde{v}$. Since $\tilde{v}$
is close to $v$, we only need to make the assumption on a neighbour
of $v$.

Let $\zeta^{(k)}$ , $\tilde{\zeta}^{(k)}$ be the corresponding intermediate
variables in the algorithm. Since $\zeta_{i}^{(0)}=v+T(v,v,\cdots)_{i}$
and $\tilde{\zeta}_{i}^{(0)}=\tilde{v}+T(\tilde{v},\tilde{v},\cdots)_{i}$,
we have that
\begin{align*}
\norm{\tilde{\zeta}^{(0)}-\zeta^{(0)}}_{\infty;p} & \leq\norm{\tilde{v}-v}_{p}+\norm{T(\tilde{v},\tilde{v},\cdots)-T(v,v,\cdots)}_{\infty;p}\\
 & \leq\norm{\tilde{v}-v}_{p}+\frac{1}{2}\norm{\tilde{v}-v}_{p}=\frac{3}{2}\norm{\tilde{v}-v}_{p}\leq2\norm{\tilde{v}-v}_{p}.
\end{align*}
Since $\zeta_{i}^{(k+1)}=v+T(\zeta^{(k)})_{i}$ and $\tilde{\zeta}_{i}^{(k+1)}=\tilde{v}+T(\tilde{\zeta}^{(k)})_{i}$,
by induction, we have that
\begin{align*}
\norm{\tilde{\zeta}^{(k+1)}-\zeta^{(k+1)}}_{\infty;p} & \leq\norm{\tilde{v}-v}_{p}+\frac{1}{2}\norm{\tilde{\zeta}^{(k)}-\zeta^{(k)}}_{\infty;p}\\
 & \leq\norm{\tilde{v}-v}_{p}+\frac{1}{2}\cdot2\norm{\tilde{v}-v}_{p}\leq2\norm{\tilde{v}-v}_{p}.
\end{align*}

Now, we note that $p(t)=v+\int_{0}^{t}\sum_{j=1}^{d}F(\zeta_{j}^{(Z+1)},c_{j})\phi_{j}(s)ds$.
By our assumption on $F$, we have that 
\begin{align*}
\norm{F(\zeta_{j}^{(Z+1)},c_{j})-F(\tilde{\zeta}_{j}^{(Z+1)},c_{j})}_{p} & =O(\frac{1}{\ell})\norm{\zeta_{j}^{(Z+1)}-\tilde{\zeta}_{j}^{(Z+1)}}_{p}\\
 & =O(\frac{1}{\ell})\norm{\tilde{v}-v}_{p}.
\end{align*}
Apply second part of Lemma \ref{lem:smothness_T}, we have that
\[
\norm{p(t)-\tilde{p}(t)}_{p}=O(\norm{\tilde{v}-v}_{p})
\]
for any $0\leq t\leq\ell$.
\end{proof}
Now, we give the main theorem of this subsection.
\begin{thm}
\label{thm:Solving_ODE_2}Let $u(t)\in\Rn$ be the solution of the
ODE (\ref{eq:ODE}). Suppose we are given $\varepsilon>0$ and $1\leq p\leq\infty$
such that

\begin{enumerate}
\item There is a degree $d$ polynomial $q$ from $\R$ to $\Rn$ such that
$q(0)=v$ and $\norm{\frac{d}{dt}u(t)-\frac{d}{dt}q(t)}_{p}\leq\varepsilon$
for all $0\leq t\leq1$.
\item For some $L\geq1$, we have that $\norm{F(x,t)-F(y,t)}_{p}\leq L\norm{x-y}_{p}$
for all $x,y$ and $0\leq t\leq1$.
\end{enumerate}
\end{thm}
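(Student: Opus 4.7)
The plan is to reduce to Theorem \ref{thm:Solving_ODE} by partitioning the time interval into pieces on which the effective Lipschitz constant is at most $1/2000$. Set $\ell = 1/(2000L)$ and $N = 2000L$, and partition $[0,1]$ into consecutive subintervals $I_k = [k\ell, (k+1)\ell]$ for $k = 0,\dots,N-1$. On each $I_k$, reparametrize time via $s = (t-k\ell)/\ell$ so that the ODE becomes $\frac{d}{ds}\tilde u_k(s) = \ell\, F(\tilde u_k(s),\, k\ell + s\ell) =: \tilde F_k(\tilde u_k(s), s)$ on $s \in [0,1]$. By construction, $\tilde F_k$ is $\ell L = 1/2000$-Lipschitz in its first argument, which is exactly the hypothesis of Theorem \ref{thm:Solving_ODE} when applied on an interval of length $1$. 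The approximating polynomial on $I_k$ is taken to be the restriction of $q$ (itself reparametrized), whose degree is still $d$ and whose derivative error remains bounded by $\ell\varepsilon \leq \varepsilon$, so hypothesis (1) of Theorem \ref{thm:Solving_ODE} transfers.

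Second, I would chain the subintervals: let $v_0 = v$ and inductively let $v_{k+1}$ be the value at the right endpoint of the polynomial output by $\texttt{CollocationMethod}$ on $I_k$ with initial value $v_k$ and precision $\varepsilon_k$ (to be chosen). Write $\hat u_k$ for the exact solution of the reparametrized ODE on $I_k$ with initial value $v_k$, and $\tilde p_k$ for the computed polynomial. Then $\tilde p_k(1) = v_{k+1}$, while $\hat u_k(1)$ would be the exact successor had we started from $v_k$. Because each $\tilde F_k$ is Lipschitz with constant $\ell L = 1/2000$, Grönwall yields $\|\hat u_k(1) - \text{(true solution at }(k+1)\ell)\| \leq e^{1/2000}\|v_k - u(k\ell)\|$, so the true-solution error grows by a factor at most $C_0 := e^{1/2000}$ per step from Grönwall. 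Combined with the per-step collocation error $O(\varepsilon_k)$ from Theorem \ref{thm:Solving_ODE} and stability under a perturbed initial value given by Lemma \ref{lem:collocation_is_stable}, the accumulated error satisfies $\|v_{k+1} - u((k+1)\ell)\| \leq C_0 \|v_k - u(k\ell)\| + O(\varepsilon_k)$.

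Third, choose $\varepsilon_k = \varepsilon \cdot C_0^{-(N-k)}$ so that the telescoping sum of contributions gives a final error of $O(\varepsilon)$; since $C_0^N = e^{L/2000} = O(e^{O(L)})$, this only costs an additive $O(L)$ inside the logarithms. I would also have to verify the ``distance-$K$'' precondition of Theorem \ref{thm:Solving_ODE} on each subinterval, which follows because the true solution is confined to a ball of radius $O(K)$ around the current $v_k$ over the short interval $I_k$ (using the global a priori bound $K = \max \|F\|_p$).

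Finally, tallying: $N = O(L)$ calls to $\texttt{CollocationMethod}$, each with precision parameter $\varepsilon_k$ for which $\log(K/\varepsilon_k) = O(L + \log(K/\varepsilon))$. By Theorem \ref{thm:Solving_ODE}, each call uses $O(d \log(K/\varepsilon_k)) = O(dL + d\log(K/\varepsilon))$ evaluations of $F$ and runs in $O(nd\log^2(dK/\varepsilon_k))$ time, giving totals of $O(dL^2 \log(K/\varepsilon))$ evaluations (absorbing the additive $L$ into one of the factors) and $O(ndL^3 \log^2(dK/\varepsilon))$ time. The main obstacle I expect is the bookkeeping for error propagation and verifying the neighborhood hypothesis for $F$-Lipschitz-ness in Theorem \ref{thm:Solving_ODE} at each intermediate step; modulo getting those constants right, the structure is a clean reduction.
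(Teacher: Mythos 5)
Your proposal is correct and follows essentially the same route as the paper's proof: split $[0,1]$ into $N = O(L)$ subintervals of length $\ell = 1/(2000L)$, apply Theorem \ref{thm:Solving_ODE} on each piece, chain the outputs, and use geometrically decreasing per-step tolerances $\varepsilon_k$ to control error accumulation, with Lemma \ref{lem:collocation_is_stable} handling stability under perturbed initial data. One small remark on the bookkeeping: your decomposition compares the computed $v_{k+1}$ against $u_k((k+1)\ell)$ (the exact solution \emph{started from the perturbed} $v_k$) and bounds the collocation error there by $O(\varepsilon_k)$; strictly speaking, hypothesis (1) of Theorem \ref{thm:Solving_ODE} is stated for the true solution $u$, not for $u_k$, so the paper instead first applies Theorem \ref{thm:Solving_ODE} starting from the \emph{true} value $u(k\ell)$ (where the polynomial-approximability hypothesis is available) and then uses Lemma \ref{lem:collocation_is_stable} to transfer from $u(k\ell)$ to $v_k$. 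Your Gr\"onwall step is thus not wrong but is effectively subsumed by (and slightly less convenient than) the stability lemma; the final recurrence and tallies are the same.
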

Then, we can compute $u$ such that $\norm{u-u(1)}_{p}=O(\varepsilon)$
in $O(ndL^{3}\log^{2}(dK/\varepsilon))$ time and $O(dL^{2}\log(K/\varepsilon))$
evaluations of $F$ where $K=\max_{x,0\leq t\leq1}\norm{F(x,t)}_{p}$.
\begin{proof}
Let $\ell=\frac{1}{2000L}$. By Theorem \ref{thm:Solving_ODE}, we
can compute $u(\ell)$ in $O(nd\log^{2}(dK/\varepsilon))$ time and
$O(d\log(K/\varepsilon))$ evaluations of $F$. Now, we can apply
\ref{thm:Solving_ODE} again with our approximate of $u(\ell)$ to
get $u(2\ell)$. Repeating this process, we can compute $u(1)$.

Lemma \ref{lem:collocation_is_stable} shows that if the initial value
of $u(0)$ has $\delta$ error, the collocation method would output
$u(\ell)$ with $O(\delta)+\varepsilon$ error. Therefore, to compute
$u(1)$ with $\varepsilon$ error, we need to iteratively compute
$y(i\ell)$ with error $\varepsilon/\Theta(1)^{1/\ell-i}$. Hence,
the total running time is
\[
\sum_{i=1}^{1/\ell}O(nd\log^{2}(\Theta(1)^{1/\ell-i}dK/\varepsilon))=O\left(nd\left(\frac{1}{\ell}\right)^{3}\log^{2}(dK/\varepsilon)\right)
\]
with 
\[
\sum_{i=1}^{1/\ell}O(d\log(\Theta(1)^{1/\ell-i}dK/\varepsilon))=O\left(d\left(\frac{1}{\ell}\right)^{2}\log^{2}(dK/\varepsilon)\right)
\]
 evaluations of $F$.
\end{proof}
\begin{rem}
The main bottleneck of this algorithm is that the error blows up exponentially
when we iteratively apply the collocation method. We believe this
can be alleviated by using the collocation method in one shot but
with a more smart initialization. Since we do not face this problem
for our main application, we left this as a future investigation.
\end{rem}

\subsection{Second Order ODE}

Now, we consider the following second order ODE
\begin{eqnarray}
\frac{d^{2}}{dt^{2}}u(t) & = & F(\frac{d}{dt}u(t),u(t),t),\text{ for }0\leq t\leq\ell\label{eq:ODE2}\\
\frac{d}{dt}u(0) & = & w,\nonumber \\
u(0) & = & v\nonumber 
\end{eqnarray}
where $F:\R^{2n+1}\rightarrow\Rn$ and $u(t)\in\Rn$. Using a standard
reduction from second order ODE to first order ODE, we show how to
apply our first order ODE method to second order ODE. 
\begin{thm}
\label{thm:Solving_ODE_2nd}Let $x(t)\in\Rn$ be the solution of the
ODE (\ref{eq:ODE2}). Given some $\varepsilon,\ell>0$ and $1\leq p\leq\infty$,
let $\alpha=4000\ell$ and suppose that 

\begin{enumerate}
\item There is a degree $d$ polynomial $q$ from $\R$ to $\Rn$ such that
$q(0)=v$, $q'(0)=w$, $\norm{\frac{d}{dt^{2}}x(t)-\frac{d}{dt^{2}}q(t)}_{p}\leq\frac{\varepsilon}{\ell^{2}}$
for all $0\leq t\leq\ell$.
\item We have that $\norm{F(x,\gamma,t)-F(y,\eta,t)}_{p}\leq\frac{1}{\alpha}\norm{x-y}_{p}+\frac{1}{\alpha^{2}}\norm{\gamma-\eta}_{p}$
for all $\norm{x-w}_{p}\leq K$, $\norm{y-w}_{p}\leq K$, $\norm{\gamma-v}_{p}\leq\alpha K$,
$\norm{\eta-v}_{p}\leq\alpha K$ where $K=\alpha\max_{t\in[0,\ell]}\norm{F(w,v,t)}_{p}+\norm w_{p}$.
\end{enumerate}
Then, in $O(nd\log^{2}(dK/\varepsilon))$ time plus $O(d\log(K/\varepsilon))$
evaluations of $F$, we can find $p(t)$ such that 
\[
\max_{0\leq t\leq\ell}\norm{u(t)-p(t)}_{p}=O(\varepsilon)\quad\text{and}\quad\max_{0\leq t\leq\ell}\norm{u'(t)-p'(t)}_{p}=O(\varepsilon/\ell).
\]
\end{thm}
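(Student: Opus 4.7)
The plan is to reduce the second-order ODE to a first-order ODE of doubled dimension and then invoke Theorem \ref{thm:Solving_ODE}. The only subtlety is that $u(t)$ and $u'(t)$ naturally live at different scales (the step-integrated difference in $u$ over $[0,\ell]$ is roughly $\ell$ times the difference in $u'$), so I must rescale the position variable before combining the two so that the Lipschitz constant of the combined right-hand side and the polynomial-approximation error are both of the form required by Theorem \ref{thm:Solving_ODE}.

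Concretely, I would introduce $z(t) = (u'(t),\, u(t)/\alpha) \in \R^{2n}$ with $\alpha = 4000\ell$, and define $G:\R^{2n+1}\to\R^{2n}$ by $G((\eta,\beta),t) = (F(\eta,\alpha\beta,t),\, \eta/\alpha)$. Then $z'(t) = G(z(t),t)$ with $z(0) = (w,v/\alpha)$, which is exactly the format of the ODE (\ref{eq:ODE}) in dimension $2n$. The initial-value bound gives $\|G(z(0),t)\|_p \le \|F(w,v,t)\|_p + \|w\|_p/\alpha$, so the effective $K$ for the reduced system is $4000\ell$ times this, i.e.\ $\alpha\max\|F(w,v,t)\|_p + \|w\|_p$, which matches the $K$ given in the statement.

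The heart of the check is hypothesis (2) of Theorem \ref{thm:Solving_ODE}. On the ball of radius $K$ around $z(0)$ in $\|\cdot\|_p$, each component satisfies $\|\eta-w\|_p \le K$ and $\|\alpha\beta-v\|_p \le \alpha K$, which is precisely the region where the Lipschitz hypothesis of Theorem \ref{thm:Solving_ODE_2nd} holds. Using that hypothesis,
\[
\|G(z,t)-G(\tilde z,t)\|_p \;\le\; \tfrac{1}{\alpha}\|\eta-\tilde\eta\|_p + \tfrac{\alpha}{\alpha^2}\|\beta-\tilde\beta\|_p + \tfrac{1}{\alpha}\|\eta-\tilde\eta\|_p \;\le\; \tfrac{2}{\alpha}\|z-\tilde z\|_p,
\]
so $G$ is $(2/\alpha) = 1/(2000\ell)$-Lipschitz, meeting the requirement. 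For the polynomial approximation, set $r(t) = (q'(t), q(t)/\alpha)$, which has degree at most $d$, satisfies $r(0) = z(0)$, and obeys $\|z'(t)-r'(t)\|_p \le \|u''-q''\|_p + \|u'-q'\|_p/\alpha$. Since $u'(0) = q'(0) = w$, integrating the hypothesized bound on $u''-q''$ yields $\|u'(t)-q'(t)\|_p \le \varepsilon/\ell$, so $\|z'-r'\|_p \le \varepsilon/\ell^2 + \varepsilon/(\ell\alpha) = O(\varepsilon/\ell^2)$, which is $\le (\varepsilon')/\ell$ for $\varepsilon' = O(\varepsilon/\ell)$.

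Applying Theorem \ref{thm:Solving_ODE} with the parameters $(2n, d, \varepsilon', \ell, K)$ produces a degree-$d$ polynomial approximation $\tilde z(t)$ to $z(t)$ with $\|\tilde z(t)-z(t)\|_p = O(\varepsilon/\ell)$ uniformly on $[0,\ell]$, in time $O(nd\log^2(dK/\varepsilon))$ and $O(d\log(K/\varepsilon))$ evaluations of $G$ (each of which is one evaluation of $F$). Reading off the components: the first $n$ coordinates of $\tilde z$ give $p'(t)$ with error $O(\varepsilon/\ell)$, and $\alpha$ times the last $n$ coordinates give $p(t)$ with error $O(\alpha\cdot\varepsilon/\ell) = O(\varepsilon)$, which are exactly the two bounds claimed. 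The only real obstacle was picking the rescaling $\alpha = 4000\ell$ so that the cross-terms in the Lipschitz bound and the loss from integrating the second-derivative approximation both come out below $1/(2000\ell)$ and $O(\varepsilon/\ell)$ respectively; everything else is bookkeeping on norms and degrees.
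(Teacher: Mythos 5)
Your reduction is essentially the paper's: rewrite the second-order ODE as a first-order system in $\R^{2n}$ via a scale-balanced change of variables, and invoke Theorem \ref{thm:Solving_ODE}. You rescale only the state, taking $z=(u',u/\alpha)$, while the paper additionally rescales time, taking $x(t')=(\alpha u'(\alpha t'), u(\alpha t'))$; both work, and your version sidesteps a minor time-domain bookkeeping step that the paper's proof handles loosely. The choice of the shifted initial condition, the matching value of $K$, the degree-$d$ polynomial $r(t)=(q'(t),q(t)/\alpha)$, and the error propagation from $q''$ to $q'$ are all handled correctly.

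One step to tighten: the displayed Lipschitz chain
\[
\|G(z,t)-G(\tilde z,t)\|_p \;\le\; \tfrac{2}{\alpha}\|\eta-\tilde\eta\|_p + \tfrac{1}{\alpha}\|\beta-\tilde\beta\|_p \;\le\; \tfrac{2}{\alpha}\|z-\tilde z\|_p
\]
is literally correct only if the norm on $\R^{2n}$ is taken to be $\|\eta\|_p+\|\beta\|_p$. With the concatenated $\ell_p$ norm on $\R^{2n}$ (which is what Theorem \ref{thm:Solving_ODE} actually uses), the second inequality fails, e.g.\ at $p=\infty$ when $\|\eta-\tilde\eta\|_p=\|\beta-\tilde\beta\|_p$, where the middle quantity equals $\tfrac{3}{\alpha}\|z-\tilde z\|_\infty$. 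The final bound $\|G(z,t)-G(\tilde z,t)\|_p\le \tfrac{2}{\alpha}\|z-\tilde z\|_p$ does hold for the concatenated $\ell_p$ norm, but it requires not applying the triangle inequality prematurely. Writing $a=\|\eta-\tilde\eta\|_p$, $b=\|\beta-\tilde\beta\|_p$, bound
\[
\|G(z,t)-G(\tilde z,t)\|_p \;\le\; \tfrac{1}{\alpha}\bigl((a+b)^p+a^p\bigr)^{1/p} \;\le\; \tfrac{2}{\alpha}\bigl(a^p+b^p\bigr)^{1/p} = \tfrac{2}{\alpha}\|z-\tilde z\|_p,
\]
where the last step uses $(a+b)^p+a^p\le (2^{p-1}+1)(a^p+b^p)\le 2^p(a^p+b^p)$ for $1\le p\le\infty$. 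With that repair the Lipschitz constant is exactly $\frac{1}{2000\ell}$ and Theorem \ref{thm:Solving_ODE} applies as you intend.
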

\begin{proof}
Let $\alpha=4000\ell$. Let $x(t)=(\alpha u'(\alpha t),u(\alpha t))\in\R^{2n}$.
Note that $x(t)$ satisfies the following ODE
\begin{eqnarray}
\frac{d}{dt}x(t) & = & \overline{F}(x(t),t)\text{ for }0\leq t\leq\ell\label{eq:new_2nd_ODE}\\
x(0) & = & (\alpha w,v).\nonumber 
\end{eqnarray}
where $\overline{F}(x(t),t)=(\alpha^{2}F(\alpha^{-1}x_{(1)}(t),x_{(2)}(t),\alpha t),x_{(1)}(t))$,
$x_{(1)}(t)$ is the first $n$ variables of $x(t)$ and $x_{(2)}(t)$
is the last $n$ variables of $x(t)$. Next, we verify the conditions
of Theorem \ref{thm:Solving_ODE} for this ODE. Let $\overline{\ell}=\frac{1}{6000}$
and $\overline{K}=4000\overline{\ell}\max_{t\in[0,\ell]}\norm{\overline{F}(x(0),t)}_{p}$.
Note that
\[
\overline{K}\leq\alpha^{2}\max_{t\in[0,\ell]}\norm{F(w,v,t)}_{p}+\alpha\norm w_{p}.
\]
For any $y,z$ such that $\norm{y-x(0)}_{p}\leq\overline{K}$ and
$\norm{z-x(0)}_{p}\leq\overline{K}$, we apply the assumption on $F$
and get that 
\begin{eqnarray*}
 &  & \norm{\overline{F}(y,t)-\overline{F}(z,t)}_{p}\\
 & \leq & \alpha^{2}\norm{F(\alpha^{-1}y_{(1)},y_{(2)},\alpha t)-F(\alpha^{-1}z_{(1)},z_{(2)},\alpha t)}_{p}+\norm{y_{(1)}-z_{(1)}}_{p}\\
 & \leq & \alpha^{2}\left(\frac{1}{\alpha}\norm{\alpha^{-1}y_{(1)}-\alpha^{-1}z_{(1)}}_{p}+\frac{1}{\alpha^{2}}\norm{y_{(2)}-z_{(2)}}_{p}\right)+\norm{y_{(1)}-z_{(1)}}_{p}\\
 & \leq & 3\norm{y-z}_{p}.
\end{eqnarray*}

Also, by our assumption on $x$, we have a polynomial $\overline{q}=\left(\alpha q'(\alpha t),q(\alpha t)\right)$
such that
\[
\norm{\frac{d}{dt}x(t)-\frac{d}{dt}\overline{q}(t)}_{p}=O\left(\frac{\varepsilon}{\overline{\ell}}\right)
\]
where $x$ is the solution of the ODE (\ref{eq:new_2nd_ODE}). Therefore,
Theorem \ref{thm:Solving_ODE} shows that we can compute $x$ with
$O(\varepsilon)$ error in $O(nd\log^{2}(dK/\varepsilon))$ time plus
$O(d\log(K/\varepsilon))$ evaluations of $F$. Since $x=(\alpha u',u)$,
this gives us an approximate of $u$ with error $O(\varepsilon)$
and $u'$ with error $O(\varepsilon/\ell)$.
\end{proof}

\subsection{Example: Discretization of Physarum dynamics\label{subsec:physarum}}

In this section, we use the Physarum dynamics as an example to showcase
the usefulness of the collocation method. We consider a linear program
of the form
\begin{equation}
\min_{Ax=b,x\geq0}c^{T}x\label{eq:LP}
\end{equation}
where $A\in\mathbb{Z}^{m\times n}$, $c\in\mathbb{Z}_{>0}^{n}$ and
$b\in\mathbb{Z}^{m}$. Inspired by Physarum polycephalum (a slime
mold), Straszak and Vishnoi \cite{straszak2015natural} introduce
the following dynamics for solving the linear program:
\begin{equation}
\frac{dx}{dt}=WA^{T}(AWA)^{-1}b-x\label{eq:LP_ODE}
\end{equation}
where $W$ is a diagonal matrix with the diagonal $W_{ii}=x_{i}/c_{i}$.
In this subsection, we follow their notations/assumptions:
\begin{enumerate}
\item Assume that $A$ is full rank and this linear program has a feasible
solution. 
\item Let $\text{OPT}$ be the optimal value of the linear program (\ref{eq:LP}).
\item Let $D$ be the maximum sub-determinant of $A$, i.e. $D=\max_{A'\text{ is a square submatrix of }A}\left|\det(A')\right|$.
For graph problems, $A$ is usually an unimodular matrix and hence
$D=1$. 
\item Assume that we have an initial point $x(0)$ such that $Ax(0)=b$
and a parameter $M$ such that
\[
M^{-1}\leq x(0)\leq M\text{ and }c^{T}s\leq M\cdot\text{OPT}.
\]
\end{enumerate}
They showed that the continuous dynamics converges linearly to the
solution:
\begin{lem}[{Convergence of Physarum dynamics, \cite[Thm 6.3]{straszak2015natural}}]
Consider $x(t)$ be the solution of \ref{eq:LP_ODE} with initial
point $Ax(0)=b$. Then, we have that
\[
\text{OPT}\leq c^{T}x(t)\leq\text{OPT}+(n+M)^{2}e^{8D^{2}\norm c_{1}\norm b_{1}-D^{-3}t}.
\]
\end{lem}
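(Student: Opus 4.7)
The plan is to prove the bound in three movements: first show the dynamics preserves the feasibility set, then track a suitable Lyapunov potential, and finally convert the potential decay into the stated bound on $c^{T}x(t)-\text{OPT}$.

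\textbf{Preserving feasibility.} First I would check that the affine constraint is preserved: multiplying the ODE $\dot{x}=WA^{T}(AWA^{T})^{-1}b-x$ on the left by $A$ gives $A\dot{x}=b-Ax$, so $Ax(t)-b$ satisfies $\frac{d}{dt}(Ax-b)=-(Ax-b)$ and decays to zero; since $Ax(0)=b$ we get $Ax(t)=b$ for all $t\geq0$. Next, writing the ODE coordinate-wise as $\dot{x}_{i}=x_{i}(q_{i}/c_{i}-1)$ with $q=A^{T}(AWA^{T})^{-1}b$, the right-hand side vanishes on $\{x_{i}=0\}$, so positivity is invariant and $x(t)$ remains a feasible point of the LP for all $t\geq0$; this immediately gives the lower bound $c^{T}x(t)\geq\text{OPT}$. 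The initial bounds on $x(0)$ give a concrete constant to start the quantitative analysis.

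\textbf{Lyapunov function.} Fix an optimal vertex $x^{*}$ of (\ref{eq:LP}) and consider the Bregman-type potential
\[
\Phi(x)=\sum_{i}c_{i}x_{i}^{*}\log\!\left(\frac{x_{i}^{*}}{x_{i}}\right)+c^{T}(x-x^{*}),
\]
which is non-negative and vanishes exactly at $x^{*}$ on the feasible set. A direct computation using $\dot{x}_{i}=x_{i}(q_{i}/c_{i}-1)$ gives
\[
\dot{\Phi}=-\sum_{i}c_{i}x_{i}^{*}\,\frac{\dot{x}_{i}}{x_{i}}+c^{T}\dot{x}=-\,(x^{*})^{T}q+c^{T}x^{*}+c^{T}(Wq-x).
\]
Since $Ax^{*}=b$, the term $(x^{*})^{T}q=b^{T}(AWA^{T})^{-1}b$, and $c^{T}Wq=x^{T}q=b^{T}(AWA^{T})^{-1}b$ as well, leaving $\dot{\Phi}=c^{T}x^{*}-c^{T}x=\text{OPT}-c^{T}x(t)\leq0$. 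Thus $\Phi$ is monotonically non-increasing, and in fact $\int_{0}^{\infty}(c^{T}x(t)-\text{OPT})\,dt\leq\Phi(x(0))$, giving an $L^{1}$ convergence guarantee for free.

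\textbf{From integrated to pointwise decay.} The remaining work is to upgrade the $L^{1}$-type decay of $c^{T}x-\text{OPT}$ to the exponential bound stated. The mechanism is a two-sided comparison: bound $\Phi(x(0))$ using the initial conditions (giving a $\log M$-type quantity absorbed into $(n+M)^{2}$) and bound $c^{T}x(t)-\text{OPT}$ from above in terms of $\Phi(x(t))$ via a reverse Pinsker-style estimate. This is where the sub-determinant parameter $D$ enters: basic feasible solutions of the LP have all non-zero coordinates bounded below by $1/D$ and above by $D\|b\|_{1}$ (Cramer's rule), so every extreme point's support is quantitatively separated from the boundary, and one can extract a factor $D^{-O(1)}$ linking $\Phi$ to $c^{T}x-\text{OPT}$. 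Combining $\dot{\Phi}\leq-(c^{T}x-\text{OPT})$ with $c^{T}x-\text{OPT}\geq D^{-3}\Phi$ on the relevant region yields $\dot{\Phi}\leq-D^{-3}\Phi$, hence $\Phi(x(t))\leq\Phi(x(0))e^{-D^{-3}t}$, and the stated bound follows after bounding $\Phi(x(0))\leq(n+M)^{2}e^{8D^{2}\|c\|_{1}\|b\|_{1}}$ from the initial-condition assumptions.

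\textbf{Main obstacle.} The easy part is the Lyapunov computation; the delicate part is the reverse inequality $\Phi(x)\leq e^{O(D^{2}\|c\|_{1}\|b\|_{1})}\cdot(c^{T}x-\text{OPT})$ needed to close the loop. This requires showing that along the trajectory $x(t)$ cannot have coordinates that are simultaneously much smaller than $1/D$ on the support of $x^{*}$ and much larger than $0$ off the support of $x^{*}$, which in turn uses that $x(t)$ stays in a bounded sublevel set of $\Phi$ together with the combinatorial structure of basic feasible solutions. This is exactly the place where the $D^{-3}$ rate (rather than $O(1)$) and the exponential dependence on $\|b\|_{1}\|c\|_{1}$ arise, and controlling the constants here is the technical core of the proof.
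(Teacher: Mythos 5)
This lemma is not proved in the paper: it is cited verbatim as Theorem~6.3 of \cite{straszak2015natural}, so there is no internal proof to compare your argument against. That said, a few remarks on the sketch itself are in order.

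Your feasibility and positivity arguments are correct, and the Lyapunov computation is clean. With $q=A^{T}(AWA^{T})^{-1}b$, one has $(x^{*})^{T}q=b^{T}(AWA^{T})^{-1}b=x^{T}q=c^{T}Wq$, so indeed $\dot{\Phi}=\text{OPT}-c^{T}x(t)\leq 0$, and each summand $c_{i}\bigl(x_{i}^{*}\log(x_{i}^{*}/x_{i})+x_{i}-x_{i}^{*}\bigr)$ is non-negative, so $\Phi\geq 0$. This Bregman/KL-type potential is genuinely of the kind used in the Physarum dynamics literature, so the opening of the argument is on solid ground.

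The part you label the "technical core" is, however, genuinely missing and the route you sketch for closing it has a logical gap. To get $\dot{\Phi}\leq -D^{-3}\Phi$ you invoke $c^{T}x-\text{OPT}\geq D^{-3}\Phi$; but then, to convert the resulting decay of $\Phi$ back into a bound on $c^{T}x(t)-\text{OPT}$ (which is what the lemma states), you also need an inequality in the opposite direction, something like $c^{T}x-\text{OPT}\leq C\,\Phi$. Note that $c^{T}x-\text{OPT}=c^{T}(x-x^{*})$ can exceed $\Phi$ whenever some coordinate has $x_{i}>x_{i}^{*}$ (the entropy part is then negative), so this second comparison is not automatic. The two comparisons must hold simultaneously with controlled constants along the entire trajectory, and establishing that is where the actual proof lives. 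Moreover, the factor $e^{8D^{2}\|c\|_{1}\|b\|_{1}}$ is exponentially large, which strongly suggests that in the original reference it arises as a warm-up or burn-in bound (the dynamics needs roughly $D^{5}\|c\|_{1}\|b\|_{1}$ time to enter the geometrically contracting regime), not merely as a bound on $\Phi(x(0))$. Your sketch identifies the right objects ($D$ via Cramer's rule, $\|b\|_{1}$, $\|c\|_{1}$) but it does not yet exhibit a differential inequality that produces both the $D^{-3}$ rate and the exponential prefactor; one cannot tell from the sketch whether the loop actually closes.
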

Furthermore, they analyzed the Euler method for this dynamics and
obtained the following result:
\begin{lem}[{Euler method for Physarum dynamics, \cite[Thm 7.1]{straszak2015natural}}]
Consider the discretization of the Physarum dynamics
\begin{align*}
x^{(k+1)} & =(1-h)x^{(k)}+hW^{(k)}A^{T}(AW^{(k)}A)^{-1}b,\\
x^{(0)} & =x(0)
\end{align*}
with the diagonal matrix $W_{ii}^{(k)}=x_{i}^{(k)}/c_{i}$. Then,
for any $\varepsilon>0$ and $h=\frac{1}{6}\varepsilon\norm c_{1}^{-2}D^{-2}$,
we have that
\[
\text{OPT}\leq c^{T}x^{(k)}\leq(1+\varepsilon)\text{OPT}
\]
where $k=O\left(\frac{\ln M}{\varepsilon^{2}h^{2}}\right)=O\left(\frac{\norm c_{1}^{4}D^{4}\ln M}{\epsilon^{4}}\right)$.
\end{lem}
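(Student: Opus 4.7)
The plan is to compare the Euler iterate $x^{(k)}$ with the continuous flow $x(t)$ studied in the preceding lemma, and then bound the accumulated discretization error. Writing $F(x) = W A^{T}(AWA^{T})^{-1}b - x$ with $W = \Diag(x_i/c_i)$, the update is simply $x^{(k+1)} = x^{(k)} + hF(x^{(k)})$, so the first observation is the feasibility invariant $Ax^{(k)} = b$ for all $k$ (since $AWA^T(AWA^T)^{-1}b = b$ and $Ax^{(0)} = b$). I would next establish a positivity/boundedness invariant $M^{-O(1)} \le x_i^{(k)} \le M^{O(1)}$, proved by induction using that $h$ is small: the update is a convex combination $(1-h)x^{(k)} + h \, W^{(k)}A^{T}(AW^{(k)}A^{T})^{-1}b$, and each coordinate of the second term can be bounded above and below using Cramer's rule in terms of the maximum sub-determinant $D$.

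With those invariants, the core of the argument is a one-step inequality of the form
\[
c^{T}x^{(k+1)} \le (1-h)\, c^{T}x^{(k)} + h\,\mathrm{OPT} + O(h^{2}) \cdot E_k,
\]
where $E_k$ is the per-step Taylor remainder. Writing $c^{T}x(t)$ for the continuous trajectory from $x^{(k)}$ and comparing $c^{T}x(h)$ with $c^{T}x^{(k+1)}$, the remainder is controlled by $h^{2}$ times a bound on $\|DF \cdot F\|$ along the segment. The derivative of $F$ and the values of $F$ can both be bounded using $\|(AWA^{T})^{-1}\| = O(D^{2})$ and $\|c\|_{1}$ times polylogarithmic-in-$M$ factors, coming again from Cramer's rule applied to $AWA^{T}$. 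This produces $E_k = O(\|c\|_{1}^{2}D^{2}) \cdot \mathrm{OPT}$, so the choice $h = \varepsilon / (6\|c\|_{1}^{2}D^{2})$ makes the remainder at most $h\varepsilon \cdot \mathrm{OPT}/6$ per step.

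Unrolling the recursion then gives
\[
c^{T}x^{(k)} - \mathrm{OPT} \le (1-h)^{k}\bigl(c^{T}x^{(0)} - \mathrm{OPT}\bigr) + \tfrac{\varepsilon}{3}\,\mathrm{OPT}.
\]
Using $c^{T}x^{(0)} \le M \cdot \mathrm{OPT}$ from the hypothesis, the first term drops below $\tfrac{\varepsilon}{3}\mathrm{OPT}$ once $(1-h)^{k} \le \varepsilon/(3M)$, i.e.\ $k = \Omega(\ln(M/\varepsilon)/h)$. Combined with the per-step slack, I would in fact need to iterate long enough that the accumulated error, which grows as $k h^{2} \cdot E_k$, still stays below $\varepsilon\,\mathrm{OPT}$. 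This is where the weaker $k = O(\ln M / (\varepsilon^{2}h^{2}))$ count arises rather than the naive $\ln M / h$: one must run $k$ steps long enough for the geometric decay to outcompete the quadratic local error, and taking $k \asymp \ln M/(\varepsilon^{2}h^{2})$ balances the two.

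The main obstacle is the boundedness invariant on the trajectory, both because the natural Lipschitz bound on $F$ blows up as any coordinate of $x^{(k)}$ approaches zero, and because the condition number of $AW^{(k)}A^{T}$ must be controlled to apply Cramer-type estimates at every step. I would spend the bulk of the proof propagating lower bounds on the coordinates $x_i^{(k)}$ inductively, using the $Ax=b$ feasibility together with the concrete form of the update to show that the small coordinates cannot collapse faster than $(1-h)^{k}$, which with the chosen $h$ stays bounded below by a $\mathrm{poly}(M,1/\varepsilon)^{-1}$ factor. Once this is in hand the per-step error analysis and the recursion above give the stated bound.
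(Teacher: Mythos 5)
This lemma is not proved in the paper; it is quoted verbatim (with attribution) from \cite[Thm 7.1]{straszak2015natural}, so there is no ``paper's own proof'' to compare against. Judged on its own terms, though, your reconstruction has a concrete flaw in the one-step inequality, and the flaw is visible without opening the reference.

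You claim $c^{T}x^{(k+1)} \le (1-h)\,c^{T}x^{(k)} + h\,\mathrm{OPT} + O(h^{2})E_{k}$. But the update is exact, not an approximation: since $c^{T}W^{(k)} = (x^{(k)})^{T}$ and $Ax^{(k)} = b$,
\[
c^{T}x^{(k+1)} = (1-h)\,c^{T}x^{(k)} + h\,b^{T}\bigl(AW^{(k)}A^{T}\bigr)^{-1}b ,
\]
with no remainder at all. The quantity $b^{T}(AW^{(k)}A^{T})^{-1}b = \min_{Ay=b} y^{T}(W^{(k)})^{-1}y$ is \emph{not} bounded above by $\mathrm{OPT}$: substituting $y = x^{(k)}$ into the variational characterization gives $b^{T}(AW^{(k)}A^{T})^{-1}b \le c^{T}x^{(k)}$, and it can in fact be as large as $c^{T}x^{(k)}$, far above $\mathrm{OPT}$. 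The gap $b^{T}(AW^{(k)}A^{T})^{-1}b - \mathrm{OPT}$ is $\Theta(1)$, so it cannot be hidden in an $O(h^{2})$ Taylor remainder. This is corroborated by the continuous-time lemma stated just above, whose decay rate is $D^{-3}$ rather than $1$: if your one-step inequality held, the continuous flow would converge at rate $1$, which is inconsistent with that lemma.

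The error also shows in the final accounting. Unrolling your recursion yields $k = O(\ln(M/\varepsilon)/h)$, which is \emph{much better} than the stated $k = O(\ln M/(\varepsilon^{2}h^{2}))$; the paragraph where you try to reconcile this by saying one must ``run long enough for the geometric decay to outcompete the quadratic local error'' does not derive the $\varepsilon^{2}h^{2}$ denominator from anything in your argument and reads as an attempt to match a formula you already know. The actual Straszak--Vishnoi analysis does not track $c^{T}x$ against a fixed linear contraction toward $\mathrm{OPT}$; it tracks a potential whose decrease per step degrades to $\Omega(\varepsilon^{2}h^{2})$ near the optimum, which is exactly where the stated step count comes from. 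To fix the argument you would need to identify that potential and prove the degraded per-step decrease, rather than assert a linear drift toward $\mathrm{OPT}$ that the exact update formula contradicts.
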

Note that the continuous process converges linearly while the discrete
process converges sublinearly. In this section, we show how to use
collocation method to get a discrete process that converges ``linearly''.
\begin{lem}[Collocation method for Physarum dynamics]
For any $T\geq0$ and $1>\varepsilon>0$, we can compute $y$ such
that 
\[
(1-\varepsilon)x_{i}(T)\leq y_{i}\leq(1+\varepsilon)x_{i}(T)
\]
for all $i$ in time $O\left(n^{\omega+3}D^{6}\norm c_{1}^{3}T^{3}\log^{2}(nD\norm c_{1}/\varepsilon)\right).$
\end{lem}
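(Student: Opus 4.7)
The plan is to apply Theorem~\ref{thm:Solving_ODE_2} (the general collocation ODE solver) to a logarithmic reformulation of the Physarum dynamics, so that an additive error bound from the solver translates directly into the required multiplicative bound. Setting $z_i \defeq \ln x_i$ and using $W_{ii} = x_i/c_i$, the ODE (\ref{eq:LP_ODE}) becomes
\[
\frac{dz_i}{dt} \;=\; \tilde F_i(z) \;\defeq\; \frac{1}{c_i}\bigl[A^T(A\,\Diag(e^z/c)\,A^T)^{-1}b\bigr]_i \,-\, 1,
\]
for which an $O(\varepsilon)$-additive approximation $\tilde z$ of $z(T)$ yields $y_i \defeq e^{\tilde z_i}$ satisfying $(1\pm O(\varepsilon))x_i(T)$. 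Rescaling time by $t = Ts$ puts the ODE on $[0,1]$ at the cost of multiplying the Lipschitz constant by $T$; this is the origin of the $T^3$ factor in the final runtime, via the $L^3$ dependence in Theorem~\ref{thm:Solving_ODE_2}.

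The next step is to verify the hypotheses of Theorem~\ref{thm:Solving_ODE_2}: a uniform bound $K = \sup\|\tilde F\|_\infty$, a Lipschitz constant $L$, and a polynomial approximation degree $d$. The central technical ingredient is a uniform lower bound $\lambda_{\min}(A\,\Diag(e^{z(t)}/c)\,A^T) \geq 1/\mathrm{poly}(n,D,M)$ valid along the trajectory on $[0,T]$, obtained from the initial bound $x(0) \geq 1/M$, the convergence estimate \cite[Thm 6.3]{straszak2015natural}, and the sub-determinant bound on $A$ captured by $D$. Given this eigenvalue bound, $K$ follows by Cramer's rule, and $L$ follows from the identity
\[
\frac{\partial}{\partial z_j}(AWA^T)^{-1} \;=\; -\frac{e^{z_j}}{c_j}\,(AWA^T)^{-1}A\,e_j e_j^T A^T (AWA^T)^{-1};
\]
both come out polynomial in $n, D, \|c\|_1, \|b\|_1$.

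For the degree $d$, observe that $\tilde F$ is complex analytic on a polydisc around any real trajectory, since matrix inversion is analytic wherever the matrix is nonsingular and the eigenvalue bound above extends by continuity into a neighborhood in $\mathbb{C}^n$. Theorem~\ref{thm:cauchy_estimate} together with the standard Chebyshev-interpolation estimate for analytic functions then yields an approximating polynomial of degree $d = O(\log(nDT\|c\|_1/\varepsilon))$ on each unit sub-interval. Theorem~\ref{thm:Solving_ODE_2} then produces $z(T)$ with $O(\varepsilon)$ additive error in time $O(nd(TL)^3\log^2(\cdot))$ plus $O(d(TL)^2\log(\cdot))$ evaluations of $\tilde F$; each evaluation costs $O(n^\omega)$ for forming and inverting the $m\times m$ matrix $AWA^T$. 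Combining the per-evaluation cost with the polynomial factors $L = \mathrm{poly}(n,D,\|c\|_1)$ yields the stated bound $O(n^{\omega+3}D^6\|c\|_1^3 T^3 \log^2(nD\|c\|_1/\varepsilon))$.

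The main obstacle is the uniform lower bound on $\lambda_{\min}(AWA^T)$ along the Physarum trajectory. The matrix degenerates whenever the support of $x(t)$ fails to contain a basis of $A$, and although the Physarum flow is known to drive $x$ toward a vertex of the feasible polytope, a priori some basis-supporting coordinate could dip low before stabilizing. Resolving this quantitatively requires combining the explicit convergence estimate from \cite[Thm 6.3]{straszak2015natural} with the sub-determinant bound $D$ to argue that on any interval where the cost $c^T x(t)$ has not yet converged, the active basis remains nondegenerate with weights at least $1/\mathrm{poly}(n,D,M)$. Once this quantitative lower bound is in hand, the remainder of the proof is a routine application of the collocation machinery developed in Section~\ref{subsec:CollocationMethod}.
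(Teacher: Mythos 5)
Your overall strategy matches the paper's: pass to $z=\ln x$ so that additive error becomes multiplicative, rescale time, and invoke the collocation solver (Theorem \ref{thm:Solving_ODE_2}). But there is a genuine gap in how you propose to obtain the Lipschitz and derivative bounds. You identify as "the main obstacle" a uniform lower bound $\lambda_{\min}(A W A^T)\geq 1/\mathrm{poly}(n,D,M)$ along the trajectory, and you propose to establish it by combining the convergence estimate of \cite[Thm 6.3]{straszak2015natural} with the sub-determinant bound. This is not how the paper proceeds, and in fact the paper makes this step unnecessary.

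The key technical ingredient you are missing is the uniform bound on the \emph{lifted projection} $P(t)=A^T(AWA^T)^{-1}AW$, namely $\|P(t)\|_{1\to\infty}\leq D$, from \cite[Lem 5.2]{straszak2015natural}. This inequality holds unconditionally — it does not require any lower bound on $\lambda_{\min}(AWA^T)$ or any control of the trajectory at all, because it follows from Cramer's rule applied to the leverage-type ratios, and the resulting bound depends only on the sub-determinant bound $D$ of $A$. Using it, the paper shows directly (equation~(\ref{eq:slim_Lip})) that $F$ has $L^\infty$-Lipschitz constant $nD^2\|c\|_1$, and by iterating the identity $\frac{d}{dt}P(t)=-P(t)\diag(\frac{dy}{dt})P(t)+P(t)\diag(\frac{dy}{dt})$ one gets the clean inductive derivative bound $\|d^k y/dt^k\|_\infty = O(1)^k k! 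D^{2k-1} n^{k-1}\|c\|_1^k$. That yields both the required Lipschitz constant $L$ and the degree bound $d$ for Theorem~\ref{thm:Solving_ODE} with no reference to the trajectory's closeness to the boundary. Your proposed alternative — arguing that on any interval where $c^T x$ has not converged the "active basis remains nondegenerate with weights at least $1/\mathrm{poly}$" — is not supported by the cited convergence theorem (which only controls $c^T x$, not per-coordinate lower bounds), and would likely be false in general: coordinates outside the optimal support are \emph{supposed} to vanish. You should replace the eigenvalue-lower-bound argument with the projection bound, after which the rest of your proposal (time rescaling, cost per evaluation $O(n^\omega)$, degree $O(\log(\cdot))$, $T^3$ from the $L^3$ factor) assembles the runtime as claimed.
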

\begin{proof}
Let $y=\ln x$. The Physarum dynamics can be written as
\begin{align*}
\frac{dy}{dt} & =\diag\left(\frac{1}{c}\right)A^{T}\left(A\diag\left(\frac{e^{y}}{c}\right)A^{T}\right)^{-1}b-1\\
 & \defeq F(y).
\end{align*}
To use Theorem \ref{thm:Solving_ODE}, we need to bound the Lipschitz
constant of $F$ and show that $y(t)$ can be approximated by polynomial.

The Lipschitz constant of $F$: Given any path $\tilde{y}(t)$ in
$\Rn$. Let $W(t)=\diag\left(\frac{e^{\tilde{y}(t)}}{c}\right)$,
$C=\diag\left(c\right)$ and $P(t)=A^{T}\left(AW(t)A^{T}\right)^{-1}AW(t)$.
Then, we have that 
\begin{align*}
\frac{d}{dt}F(\tilde{y}(t)) & =\frac{d}{dt}C^{-1}A^{T}\left(AW(t)A^{T}\right)^{-1}b\\
 & =-C^{-1}A^{T}\left(AW(t)A^{T}\right)^{-1}AW(t)\diag\left(\frac{d\tilde{y}}{dt}\right)A^{T}\left(AW(t)A^{T}\right)^{-1}b\\
 & =-C^{-1}P(t)\diag\left(\frac{d\tilde{y}}{dt}\right)P(t)c.
\end{align*}
where we used $Ax=b$ at the last line. By \cite[Lem 5.2]{straszak2015natural},
we have that
\begin{equation}
\norm{P(t)}_{1\rightarrow\infty}\leq D.\label{eq:slime_P}
\end{equation}
Hence, we have that
\begin{align*}
\norm{P(t)\diag\left(\frac{d\tilde{y}}{dt}\right)P(t)c}_{\infty}\leq & D\norm{\diag\left(\frac{d\tilde{y}}{dt}\right)P(t)c}_{1}\\
\leq & nD\norm{\frac{d\tilde{y}}{dt}}_{\infty}\norm{P(t)c}_{\infty}\leq nD^{2}\norm{\frac{d\tilde{y}}{dt}}_{\infty}\norm c_{1}.
\end{align*}
Since $c$ is integral, we have that
\begin{equation}
\norm{\frac{d}{dt}F(\tilde{y}(t))}_{\infty}\leq nD^{2}\norm{\frac{d\tilde{y}}{dt}}_{\infty}\norm c_{1}.\label{eq:slim_Lip}
\end{equation}
Hence, $F$ has Lipschitz constant $nD^{2}\norm c_{1}$ in $L^{\infty}$
norm. 

Analyticity of $y(t)$: Note that
\begin{equation}
\frac{dy}{dt}=C^{-1}P(t)c-1.\label{eq:slime_D}
\end{equation}
By (\ref{eq:slime_P}), we have that $\norm{\frac{dy}{dt}}_{\infty}\leq1+D\norm c_{1}\leq2D\norm c_{1}$.
Note that
\[
\frac{d}{dt}P(t)=-P(t)\diag(\frac{dy}{dt})P(t)+P(t)\diag(\frac{dy}{dt}).
\]
Hence, we have that
\[
\frac{d^{2}y}{dt^{2}}=-C^{-1}P(t)\diag(\frac{dy}{dt})P(t)c+C^{-1}P(t)\diag(\frac{dy}{dt})c
\]
and 
\begin{align*}
\norm{\frac{d^{2}y}{dt^{2}}}_{\infty} & \leq nD\norm{\frac{dy}{dt}}_{\infty}D\norm c_{1}+D\norm{\frac{dy}{dt}}_{\infty}\norm c\\
 & \leq2nD^{2}\norm c_{1}\norm{\frac{dy}{dt}}_{\infty}\leq4nD^{3}\norm c_{1}^{2}.
\end{align*}
By induction, one can show that
\[
\norm{\frac{d^{k}y}{dt^{k}}}_{\infty}=O(1)^{k}k!D^{2k-1}n^{k-1}\norm c_{1}^{k}.
\]
For $d=\log(1/\varepsilon)$ and $\ell=\frac{1}{\Omega(nD^{2}\norm c_{1})}$,
we have that
\[
\norm{y(t)-\sum_{k=0}^{d-1}\frac{1}{k!}y^{(k)}(0)t^{k}}_{\infty}\leq O(1)^{d}D^{2d-1}n^{d-1}\norm c_{1}^{d}\ell^{d}\leq\varepsilon
\]
for any $0\leq t\leq\ell$. Similarly, we have that
\[
\norm{\frac{d}{dt}y(t)-\frac{d}{dt}\sum_{k=0}^{d-1}\frac{1}{k!}y^{(k)}(0)t^{k}}_{\infty}\leq\varepsilon
\]
for any $0\leq t\leq\ell$. 

Since $F$ has Lipschitz constant $nD^{2}\norm c_{1}$, we can apply
Theorem \ref{thm:Solving_ODE} with $\ell=O\left(\frac{1}{nD^{2}\norm c_{1}\log(1/\varepsilon)}\right)$,
$d=\log(1/\varepsilon)$, $p=\infty$. Hence, we can compute $y(\ell)$
with $\varepsilon$ error in $O(n\log^{2}(nD\norm c_{1}/\varepsilon))$
time with $O(\log(nD\norm c_{1}/\varepsilon))$ evaluations of $F$.
Since $F$ can be computed in matrix multiplication time, we can compute
$y(\ell)$ with $\varepsilon$ error in $O(n^{\omega}\log^{2}(nD\norm c_{1}/\varepsilon))$
time.

Next, we note that if the initial value of $y(0)$ has $\delta$ error,
the collocation method would output $y(\ell)$ with $O(\delta)+\varepsilon$
error. Therefore, to compute $y(T)$ with $\varepsilon$ error, we
need to iteratively compute $y(i\ell)$ with error $\varepsilon/\Theta(1)^{T/\ell-i}$.
Hence, the total running time is
\[
\sum_{i=1}^{T/\ell}O(n^{\omega}\log^{2}(\Theta(1)^{T/\ell-i}nD\norm c_{1}/\varepsilon))=O\left(n^{\omega}\left(\frac{T}{\ell}\right)^{3}\log^{2}(nD\norm c_{1}/\varepsilon)\right).
\]
\end{proof}
\begin{rem}
We use this example merely to showcase that the collocation method
is useful for getting a polynomial time algorithm for solving ordinary
differential equations. The $k^{th}$ order derivatives of the path
$x(t)$ can be computed efficiently in $(kn)^{O(1)}$ time and hence
one can simply use Taylor series to approximate Physarum dynamics.
Alternatively, since the path $x(t)$ is the solution of certain convex
optimization problem \cite[Thm 4.2]{straszak2015natural}, we can
be computed it directly in $\tilde{O}(n^{3})$ time \cite{lee2015faster}.
\end{rem}

\subsection{Derivative Estimations\label{sec:est_var}}

For any smooth one dimension function $f$, we know by Taylor's theorem
that 
\[
f(x)=\sum_{k=0}^{N}\frac{f^{(k)}(a)}{k!}(x-a)^{k}+\frac{1}{N!}\int_{a}^{x}(x-t)^{N}f^{(N+1)}(t)dt.
\]
This formula provides a polynomial estimate of $f$ around $a$. To
analyze the accuracy of this estimate, we need to bound $\left|f^{(N+1)}(t)\right|$.
In one dimension, we could simply give explicit formulas for the derivatives
of $f$ and use it to estimate the remainder term. However, for functions
in high dimension, it is usually too tedious. Here we describe some
techniques for bounding the derivatives of higher dimensional functions. 

The derivatives of one-variable functions can be bounded via Cauchy's
estimate (Theorem \ref{thm:cauchy_estimate}). In Section \ref{subsec:est_n_var},
we give calculus rules that reduces the problem of estimating derivatives
of high-dimensional functions to derivatives of one-dimensional functions.
In Section \ref{subsec:est_ode}, we show how to reduce bounding the
derivative for an arbitrary ODE to an ODE in one dimension. 

\subsection{Explicit Function\label{subsec:est_n_var}}

In this subsection, we show how to bound the derivatives of a complicated
explicit function using the following object, generating upper bound.
We reduce estimates of the derivatives of functions in high dimension
to one variable rational functions. Since rational functions are holomorphic,
one can apply Cauchy's estimates (Theorem (\ref{thm:cauchy_estimate}))
to bound their derivatives. 
\begin{defn}
Given a function $F$. We call that $F\leq_{x}f$ for some one variable
function $f:\R\rightarrow\R$ if
\begin{equation}
\norm{D^{(k)}F(x)[\Delta_{1},\Delta_{2},\cdots,\Delta_{k}]}\leq f^{(k)}(0)\prod_{i=1}^{k}\norm{\Delta_{i}}^{k}\label{eq:less_sim_est}
\end{equation}
for any $k\geq0$ and any $\Delta_{i}$.
\end{defn}
\begin{rem*}
In general, $F\leq_{x}f$ and $f(t)\leq g(t)$ point-wise does NOT
imply $F\leq_{x}g$. However, $F\leq_{x}f$ and $f\leq_{0}g$ does
imply $F\leq_{x}g$.
\end{rem*}
\begin{rem*}
The bounds we give in this subsection only assume that $\norm{\cdot}$
is a norm and it satisfies $\norm{ab}\leq\norm a\norm b$. In the
later sections, we always use $\norm{\cdot}_{2}$ for both matrix
and scalar.
\end{rem*}
This concept is useful for us to reduce bounding derivatives of a
high dimension function to bounding derivatives of 1 dimension function.
First of all, we note that upper bounds are composable.
\begin{lem}
\label{lem:generating_composite}Given $F\leq_{x}f$ and $G\leq_{F(x)}g$,
we have that
\[
G\circ F\leq_{x}g\circ\overline{f}
\]
where $\overline{f}(s)=f(s)-f(0)$.
\end{lem}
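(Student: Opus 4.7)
The plan is to reduce everything to the Faà di Bruno formula for higher derivatives of a composition, applied both in the ambient (high-dimensional) setting and in the one-variable setting, so that the two Faà di Bruno expansions can be matched term by term.

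First, I would write out the $k$-th directional derivative of $G\circ F$ at $x$ via Faà di Bruno: for any vectors $\Delta_1,\dots,\Delta_k$,
\[
D^{k}(G\circ F)(x)[\Delta_1,\dots,\Delta_k]=\sum_{\pi\in\Pi_k}D^{|\pi|}G(F(x))\!\left[\,\dots,\ D^{|B|}F(x)[\Delta_B],\ \dots\right]_{B\in\pi},
\]
where $\Pi_k$ is the set of partitions of $\{1,\dots,k\}$ and $\Delta_B$ denotes the tuple of $\Delta_i$'s with $i\in B$. Taking norms and using sub-multiplicativity of $\|\cdot\|$, the hypothesis $F\leq_x f$ to bound each $\|D^{|B|}F(x)[\Delta_B]\|\le f^{(|B|)}(0)\prod_{i\in B}\|\Delta_i\|$, and the hypothesis $G\leq_{F(x)} g$ to bound the outer $|\pi|$-linear form, I get
\[
\left\|D^{k}(G\circ F)(x)[\Delta_1,\dots,\Delta_k]\right\|\le\left(\sum_{\pi\in\Pi_k}g^{(|\pi|)}(0)\prod_{B\in\pi}f^{(|B|)}(0)\right)\prod_{i=1}^{k}\|\Delta_i\|.
\]

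Next, I would recognize the parenthesised sum as exactly $(g\circ\overline f)^{(k)}(0)$. Applying Faà di Bruno in one variable to the composition $g\circ\overline f$ at the point $t=0$ gives
\[
(g\circ\overline f)^{(k)}(0)=\sum_{\pi\in\Pi_k}g^{(|\pi|)}(\overline f(0))\prod_{B\in\pi}\overline f^{(|B|)}(0).
\]
Here the definition $\overline f(s)=f(s)-f(0)$ is used twice: it gives $\overline f(0)=0$, so each $g^{(|\pi|)}(\overline f(0))=g^{(|\pi|)}(0)$, matching the outer-derivative bound supplied by $G\leq_{F(x)}g$; and it gives $\overline f^{(j)}(0)=f^{(j)}(0)$ for every $j\ge1$, matching the inner bound supplied by $F\leq_x f$. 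Note also that every partition $\pi\in\Pi_k$ with $k\ge1$ has every block of size $\ge 1$, so the case $j=0$ never appears and the shift by the constant $f(0)$ is harmless. Combining the two displays yields
\[
\left\|D^{k}(G\circ F)(x)[\Delta_1,\dots,\Delta_k]\right\|\le(g\circ\overline f)^{(k)}(0)\prod_{i=1}^{k}\|\Delta_i\|,
\]
which is precisely the definition of $G\circ F\leq_x g\circ\overline f$.

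The potentially annoying step is not the Faà di Bruno expansion itself but the bookkeeping that shows the two expansions line up term by term; in particular, one has to be careful that the bound from $G\leq_{F(x)}g$ is stated with derivatives of $g$ evaluated at $0$ (as in the definition of $\leq$), which is exactly why the $\overline f$ shift is needed in the conclusion. The case $k=0$ is trivial since $(g\circ\overline f)(0)=g(0)$ and we only need $\|G(F(x))\|\le g(0)$, which follows from $G\leq_{F(x)}g$ at order $0$.
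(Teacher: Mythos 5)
Your proof is correct and takes essentially the same approach as the paper's: both bound the multivariate chain-rule expansion of $D^k(G\circ F)$ term by term and match it against the one-variable chain-rule expansion of $(g\circ\overline f)^{(k)}(0)$. The only difference is presentational — you invoke the Faà di Bruno partition formula explicitly, whereas the paper writes out the expansions for $k=1,2,3$ and concludes by the evident pattern.
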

\begin{proof}
Fix any $\Delta_{1},\Delta_{2},\cdots$ be unit vectors in the domain
of $F$. Let $H(x)=G\circ F(x)$. By chain rule, we have that
\begin{eqnarray*}
DH(x)[\Delta_{1}] & = & DG(F(x))[DF(x)[\Delta_{1}]],\\
DH(x)[\Delta_{1},\Delta_{2}] & = & DG(F(x))[D^{2}F(x)[\Delta_{1},\Delta_{2}]]\\
 &  & +D^{2}G(F(x))[DF(x)[\Delta_{1}],DF(x)[\Delta_{2}]],\\
DH(x)[\Delta_{1},\Delta_{2},\Delta_{3}] & = & DG(F(x))[D^{2}F(x)[\Delta_{1},\Delta_{2},\Delta_{3}]]\\
 &  & +D^{2}G(F(x))[D^{2}F(x)[\Delta_{1},\Delta_{2}],DF(x)[\Delta_{3}]]\\
 &  & +D^{2}G(F(x))[D^{2}F(x)[\Delta_{1},\Delta_{3}],DF(x)[\Delta_{2}]]\\
 &  & +D^{2}G(F(x))[D^{2}F(x)[\Delta_{2},\Delta_{3}],DF(x)[\Delta_{1}]]\\
 &  & +D^{3}G(F(x))[DF(x)[\Delta_{1}],DF(x)[\Delta_{2}],DF(x)[\Delta_{3}]],\\
 & \vdots
\end{eqnarray*}
Since $G\leq_{F(x)}g$, equation (\ref{eq:less_sim_est}) shows that
\begin{eqnarray*}
\norm{DH(x)[\Delta_{1}]} & \leq & g^{(1)}(0)\norm{DF(x)[\Delta_{1}]},\\
\norm{D^{2}H(x)[\Delta_{1},\Delta_{2}]} & \leq & g^{(1)}(0)\norm{D^{2}F(x)[\Delta_{1},\Delta_{2}]}\\
 &  & +g^{(2)}(0)\norm{DF(x)[\Delta_{1}]}_{2}\norm{DF(x)[\Delta_{2}]},\\
\norm{D^{3}H(x)[\Delta_{1},\Delta_{2},\Delta_{3}]} & \leq & g^{(1)}(0)\norm{D^{2}F(x)[\Delta_{1},\Delta_{2},\Delta_{3}]}\\
 &  & +g^{(2)}(0)\norm{D^{2}F(x)[\Delta_{1},\Delta_{2}]}\norm{DF(x)[\Delta_{3}]}\\
 &  & +g^{(2)}(0)\norm{D^{2}F(x)[\Delta_{1},\Delta_{3}]}\norm{DF(x)[\Delta_{2}]}\\
 &  & +g^{(2)}(0)\norm{D^{2}F(x)[\Delta_{2},\Delta_{3}]}\norm{DF(x)[\Delta_{1}]}\\
 & \vdots & +g^{(3)}(0)\norm{DF(x)[\Delta_{1}]}\norm{DF(x)[\Delta_{2}]}\norm{DF(x)[\Delta_{3}]}.
\end{eqnarray*}
Now, we use $F\leq_{x}f$ to get
\begin{eqnarray*}
\norm{DH(x)[\Delta_{1}]} & \leq & g^{(1)}(0)f^{(1)}(0)=\left(g\circ\overline{f}\right)^{(1)}(0)\\
\norm{D^{2}H(x)[\Delta_{1},\Delta_{2}]} & \leq & g^{(1)}(0)f^{(2)}(0)+g^{(2)}(0)f^{(1)}(0)^{2}=\left(g\circ\overline{f}\right)^{(2)}(0),\\
\norm{D^{3}H(x)[\Delta_{1},\Delta_{2},\Delta_{3}]} & \leq & g^{(1)}(0)f^{(3)}(0)+2g^{(2)}(0)f^{(2)}(0)f^{(1)}(0)\\
 & \vdots & +g^{(3)}(0)f^{(1)}(0)^{3}=\left(g\circ\overline{f}\right)^{(3)}(0).
\end{eqnarray*}
Therefore, we have that $\norm{D^{k}H(x)[\Delta_{i}]}\leq\left(g\circ\overline{f}\right)^{(k)}(0)$
for all $k\geq1$. For $k=0$, we have that $\norm{H(x)}=\norm{G(F(x))}\leq g(0)=g(\overline{f}(0))$.
\end{proof}
Next, we give some extra calculus rule for generating upper bounds. 
\begin{lem}
\label{lem:calculus_rule}Given that $H_{i}\leq_{x}h_{i}$ for all
$i=1,\cdots,k$. Then, we have that 
\[
\sum_{i=1}^{k}H_{i}\leq_{x}\sum_{i=1}^{k}h_{i}\text{ and }\prod_{i=1}^{k}H_{i}\leq_{x}\prod_{i=1}^{k}h_{i}.
\]
Given that $H\leq_{x}h$ and $\norm{H^{-1}(x)}\leq C$, we have that
\[
H^{-1}\leq_{x}\frac{1}{C^{-1}-(h(s)-h(0))}.
\]
\end{lem}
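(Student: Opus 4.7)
The three claims are respectively a linearity statement, a higher-order Leibniz statement, and a resolvent statement, and I plan to handle them in that order, each reducing to the one-variable version of the same identity.

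For the sum, I would just use linearity of directional derivatives: $D^{(k)}\bigl(\sum_i H_i\bigr)[\Delta_1,\ldots,\Delta_k] = \sum_i D^{(k)}H_i[\Delta_1,\ldots,\Delta_k]$, and taking norms and summing the hypothesis $\|D^{(k)}H_i[\Delta_1,\ldots,\Delta_k]\| \le h_i^{(k)}(0)\prod \|\Delta_j\|$ gives exactly $(\sum_i h_i)^{(k)}(0)\prod\|\Delta_j\|$ on the right, as required.

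For the product I plan to do $k=2$ factors first and then induct on $k$. For two factors, the higher-order Leibniz rule gives
\[
D^{(k)}(H_1 H_2)[\Delta_1,\ldots,\Delta_k] \;=\; \sum_{S\subseteq [k]} D^{(|S|)}H_1[\Delta_S]\cdot D^{(k-|S|)}H_2[\Delta_{[k]\setminus S}],
\]
and applying the triangle inequality together with the hypothesis on each $H_i$ and the submultiplicativity $\|AB\|\le\|A\|\|B\|$ produces $\sum_{S} h_1^{(|S|)}(0)\,h_2^{(k-|S|)}(0)\prod\|\Delta_j\| = (h_1 h_2)^{(k)}(0)\prod\|\Delta_j\|$, again by the scalar Leibniz rule. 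Induction extends to arbitrary $k$.

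For the inverse, the cleanest route is to factor $H^{-1}$ as the composition $g\circ H$ where $g$ is matrix inversion, and then invoke Lemma \ref{lem:generating_composite}. Expanding $(M+\varepsilon\Delta)^{-1}$ as a Neumann series shows
\[
g^{(k)}(M)[\Delta_1,\ldots,\Delta_k] \;=\; (-1)^k \sum_{\sigma\in S_k} M^{-1}\Delta_{\sigma(1)}M^{-1}\cdots M^{-1}\Delta_{\sigma(k)}M^{-1},
\]
so $\|g^{(k)}(M)[\Delta_1,\ldots,\Delta_k]\| \le k!\,\|M^{-1}\|^{k+1}\prod\|\Delta_j\|$. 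Setting $\tilde g(s) = \frac{1}{C^{-1}-s} = \sum_{k\ge 0} C^{k+1} s^k$ we read off $\tilde g^{(k)}(0) = k!\,C^{k+1}$, so $g\leq_{H(x)}\tilde g$ since $\|H^{-1}(x)\|\le C$. Lemma \ref{lem:generating_composite} then yields $H^{-1} = g\circ H \leq_x \tilde g\circ\overline{h} = \frac{1}{C^{-1} - (h(s)-h(0))}$, which is the stated bound.

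Nothing here is an essential obstacle; the only point requiring a little care is the inverse, where one has to recognize the matching between the Neumann expansion of the matrix inverse and the geometric series of $\tilde g$, and verify that the factorials line up so that the composition rule of Lemma \ref{lem:generating_composite} produces exactly $\tilde g\circ\overline{h}$ rather than a looser bound.
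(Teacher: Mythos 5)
Your proposal matches the paper's proof in all three parts: linearity for the sum, a Leibniz-type expansion for the product, and computing the higher derivatives of matrix inversion to read off the generating function $\frac{1}{C^{-1}-s}$ before invoking Lemma~\ref{lem:generating_composite}. The only cosmetic difference is that you prove the product rule for two factors and induct, whereas the paper writes the multi-factor Leibniz formula directly — the combinatorial coefficients match in either organization.
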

\begin{proof}
Fix $\Delta_{1},\Delta_{2},\cdots$ be unit vectors. For the first
claim, let $H=\sum H_{i}$, we note that
\[
D^{j}H[\Delta_{1},\cdots,\Delta_{j}]=\sum_{i=1}^{k}D^{j}H_{i}[\Delta_{1},\cdots,\Delta_{j}].
\]
Therefore, we have that $\norm{D^{j}H[\Delta_{1},\cdots,\Delta_{j}]}\leq\sum_{i=1}^{k}\norm{D^{j}H_{i}[\Delta_{1},\cdots,\Delta_{j}]}$.
Since $H\leq_{x}h_{i}$, we have that $\norm{D^{j}H[\Delta_{1},\cdots,\Delta_{j}]}\leq\sum_{i=1}^{k}h^{(j)}(0)$.
Hence, we have $H\leq_{x}\sum_{i=1}^{k}h_{i}$.

For the second claim, we let $G=\prod_{i=1}^{k}H_{i}$ and note that
\[
D^{j}G=\sum_{i_{1}+i_{2}+\cdots+i_{k}=j}\prod_{l=1}^{k}D^{i_{l}}H_{l}.
\]
Let $g=\prod_{i=1}^{k}h_{i}$. Then, we have that
\[
\norm{D^{j}G}\leq\sum_{i_{1}+i_{2}+\cdots+i_{k}=j}\prod_{l=1}^{k}\norm{D^{i_{l}}H_{l}}\leq\sum_{i_{1}+i_{2}+\cdots+i_{k}=j}\prod_{l=1}^{k}h_{i}^{(i_{l})}(0)=D^{j}g(0).
\]
Hence, $G\leq_{x}g$.

For the last claim, we first consider the function $\Phi(M)=M^{-1}$.
Note that $D\Phi[\Delta_{1}]=-M^{-1}\Delta_{1}M^{-1}$, $D^{2}\Phi[\Delta]=M^{-1}\Delta_{1}M^{-1}\Delta_{2}M^{-1}+M^{-1}\Delta_{2}M^{-1}\Delta_{1}M^{-1}$
and hence
\[
\norm{D^{j}\Phi}\leq j!\norm{M^{-1}}^{j+1}=j!C^{j+1}
\]
Hence, we have $\Phi\leq_{M}\sum_{j=0}^{\infty}\frac{j!C^{j+1}}{j!}s^{j}=\frac{1}{C^{-1}-s}$.
By Lemma \ref{lem:generating_composite}, we see that $H^{-1}\leq_{x}\frac{1}{C^{-1}-(h(s)-h(0))}$.
\end{proof}

\subsection{Explicit ODE\label{subsec:est_ode}}

\label{subsec:upperbound_ODE}In this section, we study the Taylor
expansion of the solution of ODE (\ref{eq:ODE}).
\begin{lem}
\label{lem:taylor_ODE}Let $u(t)$ be the solution of the ODE $u'(t)=F(u(t))$.
Suppose that $F\leq_{u(0)}f$ and let $\psi(t)$ be the solution of
the ODE $\psi'(t)=f(\psi(t))$ and $\psi(0)=0$. Then, we have
\[
\norm{u^{(k)}(0)}\leq\psi^{(k)}(0)
\]
for all $k\geq1$.
\end{lem}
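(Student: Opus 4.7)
The plan is to prove this by induction on $k$, using the fact that $u^{(k)}(0)$ is given by a universal polynomial (Fa\`a di Bruno-type) in the derivatives $D^j F(u(0))$ and the lower derivatives $u^{(1)}(0),\ldots,u^{(k-1)}(0)$, and that the same universal polynomial governs $\psi^{(k)}(0)$ in terms of $f^{(j)}(0)$ and $\psi^{(1)}(0),\ldots,\psi^{(k-1)}(0)$.

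First I would differentiate $u'(t) = F(u(t))$ repeatedly, obtaining expressions such as $u^{(2)} = DF(u)[u']$, $u^{(3)} = D^2F(u)[u',u'] + DF(u)[u'']$, and so on. An easy induction shows that for each $k \geq 1$ there is a universal formula
\[
u^{(k)}(0) \;=\; \sum_{(j,\pi)} c_{j,\pi}\, D^{j}F(u(0))\bigl[u^{(i_1)}(0),\ldots,u^{(i_j)}(0)\bigr],
\]
where $\pi = (i_1,\ldots,i_j)$ ranges over compositions of $k-1$ into $j$ positive parts and $c_{j,\pi}$ are non-negative integers. The critical point, which follows directly from the derivation by iterated chain and product rules, is that \emph{every} coefficient $c_{j,\pi}$ is a non-negative integer — no cancellation occurs. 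Applying exactly the same derivation to the scalar ODE $\psi' = f(\psi)$ with $\psi(0)=0$ gives
\[
\psi^{(k)}(0) \;=\; \sum_{(j,\pi)} c_{j,\pi}\, f^{(j)}(0)\, \prod_{\ell=1}^{j} \psi^{(i_\ell)}(0),
\]
with the \emph{same} coefficients $c_{j,\pi}$.

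Now I would use the induction hypothesis $\|u^{(i)}(0)\| \leq \psi^{(i)}(0)$ for all $1 \leq i < k$. Combining with the triangle inequality, the hypothesis $F \leq_{u(0)} f$ (which gives $\|D^{j}F(u(0))[\Delta_1,\ldots,\Delta_j]\| \leq f^{(j)}(0)\prod_\ell \|\Delta_\ell\|$), and the non-negativity of the $c_{j,\pi}$ and of $f^{(j)}(0)$ (which follows from $f^{(j)}(0)$ being an upper bound on a norm), we obtain
\[
\|u^{(k)}(0)\| \;\leq\; \sum_{(j,\pi)} c_{j,\pi}\, f^{(j)}(0)\, \prod_{\ell=1}^{j} \|u^{(i_\ell)}(0)\| \;\leq\; \sum_{(j,\pi)} c_{j,\pi}\, f^{(j)}(0)\, \prod_{\ell=1}^{j} \psi^{(i_\ell)}(0) \;=\; \psi^{(k)}(0),
\]
closing the induction.

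The only subtle point — and the one I would check most carefully — is the claim that the universal coefficients $c_{j,\pi}$ are non-negative integers and that the same formula holds in the scalar case. This is not hard but requires writing out the recursion: if one differentiates $u^{(k)}(t) = \sum c_{j,\pi} D^{j}F(u)[u^{(i_1)},\ldots,u^{(i_j)}]$ once more, both the chain rule (bumping $j$ to $j+1$ and appending an extra $u^{(1)}$) and the product rule (incrementing one of the $i_\ell$) produce only positive contributions, so non-negativity is preserved. The same recursion applied to $\psi$ yields the identical coefficients, which is the structural fact driving the whole argument.
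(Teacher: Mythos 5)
Your proposal is correct and follows essentially the same approach as the paper: differentiate $u'=F(u)$ repeatedly, observe that the resulting expressions for $u^{(k)}(0)$ and $\psi^{(k)}(0)$ share the same combinatorial structure with non-negative coefficients, and close by induction using $F\leq_{u(0)}f$. The paper presents the first few cases and an ellipsis where you make the universal non-negative-coefficient formula explicit, which is a slightly more rigorous write-up of the identical argument.
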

\begin{proof}
Since $u'(t)=F(u(t))$, we have that
\begin{eqnarray*}
u^{(2)}(t) & = & DF(u(t))[u^{(1)}(t)],\\
u^{(3)}(t) & = & DF(u(t))[u^{(2)}(t)]+D^{2}F(u(t))[u^{(1)}(t),u^{(1)}(t)],\\
u^{(4)}(t) & = & DF(u(t))[u^{(3)}(t)]+2D^{2}F(u(t))[u^{(2)}(t),u^{(1)}(t)]\\
 &  & +D^{3}F(u(t))[u^{(1)}(t),u^{(1)}(t),u^{(1)}(t)]\\
 & \vdots
\end{eqnarray*}
Therefore, we have
\begin{eqnarray*}
\norm{u^{(2)}(0)} & \leq & f^{(1)}(0)\norm{u^{(1)}(0)},\\
\norm{u^{(3)}(0)} & \leq & f^{(1)}(0)\norm{u^{(2)}(0)}+f^{(2)}(0)\norm{u^{(1)}(0)}^{2},\\
\norm{u^{(4)}(0)} & \leq & f^{(1)}(0)\norm{u^{(3)}(0)}+2f^{(2)}(0)\norm{u^{(2)}(0)}\norm{u^{(1)}(0)}+f^{(3)}(0)\norm{u^{(1)}(0)}^{3},\\
 & \vdots
\end{eqnarray*}

By expanding $\psi'(t)=f(\psi(t))$ at $t=0$, we see that 
\begin{eqnarray*}
\psi^{(2)}(0) & = & f^{(1)}(0)\psi^{(1)}(0),\\
\psi^{(3)}(0) & = & f^{(1)}(0)\psi^{(2)}(0)+f^{(2)}(0)\left(\psi^{(1)}(0)\right)^{2},\\
\psi^{(4)}(0) & = & f^{(1)}(0)\psi^{(3)}(0)+2f^{(2)}(0)\psi^{(2)}(0)\psi^{(1)}(0)+f^{(2)}(0)\left(\psi^{(1)}(0)\right)^{3},\\
 & \vdots
\end{eqnarray*}
Since $\norm{u^{(1)}(0)}=\norm{F(u(0))}\leq f(0)=\psi^{(1)}(0)$,
we have that $\norm{u^{(k)}(0)}\leq\psi^{(k)}(0)$ for all $k\geq1$. 
\end{proof}
Now, we apply Lemma \ref{lem:taylor_ODE} to second order ODEs.
\begin{lem}
\label{lem:taylor_ODE_2nd}Let $u(t)$ be the solution of the ODE
$u''(t)=F(u'(t),u(t),t)$. Given some $\alpha>0$ and define $\overline{F}(y,x,t)=\alpha^{2}F(\alpha^{-1}y(t)+u'(0),x(t)+\alpha tu'(0),\alpha t)$.
Suppose that $\overline{F}\leq_{(0,u(0),0)}f$ and let $\psi(t)$
be the solution of the ODE $\psi'(t)=1+\psi(t)+f(\psi(t))$ and $\psi(0)=0$.
Then, we have
\[
\norm{u^{(k)}(0)}\leq\frac{\psi^{(k)}(0)}{\alpha^{k}}
\]
for all $k\geq2$.
\end{lem}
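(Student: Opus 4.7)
The plan is to reduce the second-order ODE to an autonomous first-order ODE on a product space and then invoke Lemma~\ref{lem:taylor_ODE}. Concretely, I would introduce the shifted and rescaled variables
\[
v(t)=\alpha u'(\alpha t)-\alpha u'(0),\qquad w(t)=u(\alpha t)-\alpha t\,u'(0),
\]
which are engineered so that $v(0)=0$ and $w(0)=u(0)$, matching the base point $(0,u(0),0)$ at which $\overline{F}$ is being estimated. Differentiating and using $u''=F(u',u,t)$ together with the identities $u'(\alpha t)=\alpha^{-1}v(t)+u'(0)$ and $u(\alpha t)=w(t)+\alpha t\,u'(0)$ gives
\[
v'(t)=\alpha^{2}F\bigl(\alpha^{-1}v(t)+u'(0),\,w(t)+\alpha t\,u'(0),\,\alpha t\bigr)=\overline{F}(v(t),w(t),t),\qquad w'(t)=v(t).
\]
Hence the triple $z(t)=(v(t),w(t),t)$ satisfies the autonomous first-order ODE $z'(t)=G(z(t))$ with $G(v,w,t)=(\overline{F}(v,w,t),\,v,\,1)$ and $z(0)=(0,u(0),0)$.

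The next step is to exhibit a one-variable generating upper bound for $G$ at $z(0)$. Using the product norm $\|(a,b,c)\|:=\|a\|+\|b\|+|c|$, the component $v\mapsto v$ contributes only to the first-order derivative and the constant component $1$ contributes only to the zeroth-order term, so that the derivatives of $G$ at $z(0)$ are dominated by: value $\le 1+f(0)$, first derivative $\le 1+f'(0)$, and higher derivatives $\le f^{(k)}(0)$ for $k\ge2$. These are precisely the coefficients of $g(s):=1+s+f(s)$, since $g(0)=1+f(0)$, $g'(0)=1+f'(0)$, and $g^{(k)}(0)=f^{(k)}(0)$ for $k\ge2$. Thus $G\leq_{z(0)} g$, and by construction $g$ is exactly the right-hand side appearing in the ODE $\psi'=1+\psi+f(\psi)$.

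Applying Lemma~\ref{lem:taylor_ODE} to $z'=G(z)$ with generating bound $g$ and comparison ODE $\psi'(t)=g(\psi(t))$, $\psi(0)=0$, yields $\|z^{(k)}(0)\|\leq\psi^{(k)}(0)$ for all $k\ge1$. Since the projection onto the $w$-coordinate is $1$-Lipschitz in the chosen product norm, $\|w^{(k)}(0)\|\leq\|z^{(k)}(0)\|\leq\psi^{(k)}(0)$. Finally, for $k\ge2$ the affine correction $\alpha t\,u'(0)$ in the definition of $w$ has vanishing $k$-th derivative, so $w^{(k)}(0)=\alpha^{k}u^{(k)}(0)$, which gives the desired $\|u^{(k)}(0)\|\leq\psi^{(k)}(0)/\alpha^{k}$. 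The only mildly subtle point, and the one I would verify most carefully, is the compatibility of the product norm with the definition of $\leq_{x}$: one needs to ensure that bounding $\|\overline{F}(\cdots)\|$ by $f^{(k)}(0)\prod\|\Delta_{i}\|$ in the original norm translates, term-by-term under the coordinate decomposition $\Delta=(\Delta_{v},\Delta_{w},\Delta_{t})$, into the claimed componentwise bounds for $G$; this is routine since $\|\Delta_{v}\|,\|\Delta_{w}\|,|\Delta_{t}|\le\|\Delta\|$ for the sum-norm.
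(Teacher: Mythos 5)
Your proposal is correct and follows essentially the same route as the paper's proof: the same affine change of variables $w(t)=u(\alpha t)-\alpha t\,u'(0)$, $v(t)=\alpha u'(\alpha t)-\alpha u'(0)$, the same reduction to the autonomous first-order system with vector field $(\overline{F},v,1)$ and generating upper bound $f(s)+s+1$, and the same invocation of Lemma~\ref{lem:taylor_ODE}. The only difference is that you spell out the sum-norm on the product space and the resulting projection inequality, which the paper leaves implicit.
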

\begin{proof}
Let $x(t)=u(\alpha t)-\alpha tu'(0)$ and $y(t)=\alpha u'(\alpha t)-\alpha u'(0)$.
Then, we can write the problem into first order ODE
\begin{eqnarray*}
y'(t) & = & \alpha^{2}F(\alpha^{-1}y(t)+u'(0),x(t)+\alpha tu'(0),\alpha t)=\overline{F}(y(t),x(t),t),\\
x'(t) & = & y(t),\\
t' & = & 1.
\end{eqnarray*}
Let $\mathbf{F}(y,x,t)=(\overline{F}(y,x,t),y,1)$. Then, we have
that $\norm{D^{k}\mathbf{F}}\leq\norm{D^{k}\overline{F}}+\norm{D^{k}y}+\norm{D^{k}1}$.
Using $\overline{F}\leq_{(0,u(0),0)}f$, we have that  
\[
\mathbf{F}\leq_{(0,x(0),0)}f+t+1.
\]
By Lemma \ref{lem:taylor_ODE}, we know that 
\[
\norm{u^{(k)}(0)}=\frac{\norm{x^{(k)}(0)}}{\alpha^{k}}\leq\frac{\psi^{(k)}(0)}{\alpha^{k}}.
\]
\end{proof}

\pagebreak{}
\section{Implementation of Geodesic walk for log barrier \label{sec:Implementation-log-barrier}}

\subsection{Complex Analyticity of geodesics, parallel transport and Jacobi fields}

\label{subsec:Complex_analytic_geodesic_log}

The Cauchy\textendash Kowalevski theorem (Theorem \ref{thm:cauchy_kowalevski})
shows that if a differential equation is complex analytic, then the
equation has a unique complex analytic solution. As we see in section
\ref{subsec:RG_L}, the equations for geodesic, parallel transport
and Jacobi field involve only rational functions. Since rational functions
are complex analytic, the Cauchy-Kowalevski theorem shows that geodesics,
orthogonal frames and Jacobi fields are also complex analytic.
\begin{lem}
\label{cor:geo_jacobi_analytic_log}Geodesics, parallel transport
and Jacobi fields are complex analytic for the Hessian manifold induced
by the logarithmic barrier.
\end{lem}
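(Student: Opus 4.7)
The plan is to apply the Cauchy--Kowalevski theorem (Theorem \ref{thm:cauchy_kowalevski}) in sequence to the three ODE systems. The content-free observation is that, on $M_L$, every quantity appearing on the right-hand side of the relevant ODEs is a rational function in the entries of $\gamma(t)$ (and in the coefficients of the dependent variable), with denominators built from $s_{\gamma} = A\gamma - b$ and $\det(A_{\gamma}^T A_{\gamma})$. Since $\gamma$ starts strictly inside the polytope, both denominators are nonvanishing at the initial point, so everything extends holomorphically to a small complex neighborhood of the initial data.

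First I would treat the geodesic itself. By Lemma \ref{lem:geo_equ_log}, the geodesic ODE is
\[
\gamma''(t) = (A_{\gamma}^T A_{\gamma})^{-1} A_{\gamma}^T \, s_{\gamma'}^{2},
\]
where $A_{\gamma} = S_{\gamma}^{-1} A$ and $s_{\gamma'} = A_{\gamma}\gamma'$. Thinking of the pair $(\gamma,\gamma')$ as the dependent variable, the right-hand side $F(\gamma,\gamma',t)$ is a rational function in these variables whose only singularities are where some $a_i^T\gamma - b_i$ vanishes or where $\det(A_{\gamma}^T A_{\gamma}) = 0$. At the initial value $\gamma(0) \in M_L$ (interior of the polytope), the slacks are strictly positive and $A_{\gamma}^T A_{\gamma}$ is strictly positive definite, so $F$ extends as a holomorphic function of $(\gamma,\gamma',t) \in \mathbb{C}^{2n+1}$ on some neighborhood of $(\gamma(0),\gamma'(0),0)$. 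The second-order form of Theorem \ref{thm:cauchy_kowalevski} then delivers a unique complex analytic solution, so $\gamma$ is complex analytic in $t$ near $0$.

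Next I would handle parallel transport. Given the analytic geodesic $\gamma(t)$, the parallel transport equation from Lemma \ref{lem:geo_equ_log} is
\[
\frac{d}{dt}v(t) = (A_{\gamma}^T A_{\gamma})^{-1} A_{\gamma}^T S_{\gamma'} A_{\gamma}\, v(t),
\]
which is a linear first-order ODE in $v$ whose coefficient matrix is a rational function of $(\gamma,\gamma')$, hence complex analytic in $t$ wherever $\gamma$ is (and denominators don't vanish). Cauchy--Kowalevski applied to this (linear) ODE gives an analytic $v(t)$, and applying it to the $n$ initial conditions corresponding to an orthonormal frame at $\gamma(0)$ produces an analytic parallel frame $X_1(t),\dots,X_n(t)$ along $\gamma$.

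Finally, for the Jacobi field I would use Lemma \ref{lem:Jacobi_field_log}, which expresses the Jacobi equation in the parallel frame coordinates as
\[
\frac{d^2 u}{dt^2} + X^{-1}(A_{\gamma}^T A_{\gamma})^{-1}\!\left(A_{\gamma}^T S_{\gamma'} P_{\gamma} S_{\gamma'} A_{\gamma} - A_{\gamma}^T \mathrm{Diag}(P_{\gamma} s_{\gamma'}^{2}) A_{\gamma}\right) X \, u = 0.
\]
The coefficient is a rational function of the (now analytic) quantities $\gamma$, $\gamma'$, and the frame matrix $X(t)$, with denominators controlled by slacks, $\det(A_{\gamma}^T A_{\gamma})$, and $\det X(t)$. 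Since $X(0)$ is orthonormal and $X$ varies analytically, $\det X(t)$ is nonvanishing near $0$ (parallel transport preserves the metric, so $X(t)$ stays invertible throughout). Hence the coefficient extends holomorphically in a complex neighborhood, and Theorem \ref{thm:cauchy_kowalevski} (second-order version) yields a complex analytic solution $u(t)$, i.e.\ the Jacobi field is complex analytic.

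The only place where any care is needed is the domain of holomorphy: I would briefly record that on a small complex ball around each interior real point, the functions $s_{\gamma}$, $\det(A_{\gamma}^T A_{\gamma})$, and $\det X(t)$ remain bounded away from $0$ by continuity, so the rational expressions really are holomorphic on that ball. This is the only potential pitfall; once it is noted, the three applications of Cauchy--Kowalevski are mechanical.
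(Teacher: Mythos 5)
Your proposal matches the paper's argument exactly: the paper likewise observes that the geodesic, parallel transport, and Jacobi field ODEs from Section \ref{subsec:RG_L} have rational (hence complex analytic) right-hand sides and invokes Theorem \ref{thm:cauchy_kowalevski} to conclude analyticity of the solutions. You have merely spelled out, more carefully than the paper does, why the denominators (slacks, $\det(A_{\gamma}^T A_{\gamma})$, $\det X(t)$) stay nonzero in a complex neighborhood, which is implicit in the paper's terse justification.
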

We bound the higher-order derivatives of geodesic, parallel transport
and Jacobi field, using techniques developed in Section \ref{sec:est_var}.
Since these are complex analytic, this gives us a bound for their
radius of convergence. 

The purpose of these derivative bounds is to show that the solutions
of the corresponding ODEs are well-approximated by low-degree polynomials,
where the degree of the polynomial grows as $\log\left(\frac{1}{\epsilon}\right)$
for desired accuracy $\epsilon.$ The bound on the degree also implies
that the Collocation method for solving ODEs is efficient (roughly
matrix multiplication time). 

\subsubsection{Geodesic}

Motivated from the geodesic equation under Euclidean coordinate (\ref{eq:geodesic_log}),
we define the following auxiliary function 
\begin{equation}
\overline{F}_{\eta}(y,x,t)=\left(A_{x+t\eta}^{T}A_{x+t\eta}\right)^{-1}A_{x+t\eta}^{T}s_{x+t\eta,y+\eta}^{2}.\label{eq:geo_F}
\end{equation}
The derivative bounds on geodesic rely on the smoothness of this auxiliary
function.
\begin{lem}
\label{lem:geo_F_est}Under the normalization $A^{T}A=I$, $S_{x}=I$,
we have that 
\[
\overline{F}_{\eta}(y,x,t)\leq_{(0,x,0)}\frac{3\left(\norm{A\eta}_{4}+1\right)^{2}}{1-\max\left(8+8\norm{A\eta}_{\infty},1\right)t}
\]
where $\overline{F}_{\eta}$ is defined in (\ref{eq:geo_F}).
\end{lem}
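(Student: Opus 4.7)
The plan is to build the generating upper bound for $\overline{F}_{\eta}$ via the calculus developed in Section \ref{sec:est_var}, by expressing $\overline{F}_{\eta}$ as a composition and product of elementary smooth maps of $(y,x,t)$. Writing $u(x,t)=x+t\eta$ (affine in $(x,t)$) and $w(y)=y+\eta$ (affine in $y$), we have $\overline{F}_{\eta}=(A_{u}^{T}A_{u})^{-1}A_{u}^{T}(A_{u}w)^{2}$ with componentwise square. The strategy is to bound each of the three factors $(A_{u}^{T}A_{u})^{-1}$, $A_{u}^{T}$, and $(A_{u}w)^{2}$ separately, and then assemble them with the product rule of Lemma \ref{lem:calculus_rule}.

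The atomic building block is the diagonal map $S_{u}^{-1}=\diag(Au-b)^{-1}$, which equals $I$ at the base point under the normalization. Since $u$ is affine, the perturbation of $S_{u}$ is exactly $\diag(A\,du)$, so measuring in the $L_{\infty}\to L_{\infty}$ operator norm (which equals $L_{2}\to L_{2}$ for diagonal matrices), the inverse rule of Lemma \ref{lem:calculus_rule} gives a bound of the form $S_{u}^{-1}\leq_{(x,0)}1/(1-\zeta)$ where $\zeta$ is linear in $t$ with coefficient $\|A\eta\|_{\infty}$, plus a piece linear in $\|A\Delta_{x}\|_{\infty}$. Composition (Lemma \ref{lem:generating_composite}) then propagates this to $A_{u}=S_{u}^{-1}A$, and a product bounds $A_{u}^{T}A_{u}$; since $A_{u}^{T}A_{u}=I$ at the base point, a second application of the inverse rule yields $(A_{u}^{T}A_{u})^{-1}$ with a similar denominator. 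For the vector factor, $A_{u}w$ evaluated at $(y=0,x,t=0)$ equals $A\eta$, and its componentwise square has $L_{2}$-norm $\|A\eta\|_{4}^{2}$; the extra ``$+1$'' in $(\|A\eta\|_{4}+1)^{2}$ absorbs the first two $y$-derivatives of $w^{2}$ (which terminate since $w$ is affine in $y$) together with contributions from $u$-derivatives.

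The main obstacle is reconciling the two different norms that naturally arise: the output $L_{2}$-norm of $\overline{F}_{\eta}$ pulls out an $L_{4}$-norm of $A\eta$ at the base point (since $\|(A\eta)^{2}\|_{2}=\|A\eta\|_{4}^{2}$), while the growth under perturbation is controlled most cleanly in $L_{\infty}$ (because diagonal slack perturbations act diagonally). I will therefore track each factor in whichever norm keeps the constant smallest, using the identity $\|D\|_{L_{2}\to L_{2}}=\|D\|_{L_{\infty}\to L_{\infty}}$ for diagonal matrices to cross between them. After multiplying the three generating upper bounds, each of the individual denominators has the form $1-c_{i}t$; collapsing them using the standard inequality $\prod(1-c_{i}t)^{-1}\leq (1-(\sum c_{i})t)^{-1}$ (valid in the regime where the sum stays below $1$) produces a single $(1-\max(8+8\|A\eta\|_{\infty},1)\,t)^{-1}$ factor, with the constant $8$ coming from summing the Lipschitz contributions of the two appearances of $S_{u}^{-1}$ (in $A_{u}^{T}A_{u}$) and the one appearance in $A_{u}w$. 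The numerator $3(\|A\eta\|_{4}+1)^{2}$ is then the base-point value of $\|(A_{u}^{T}A_{u})^{-1}A_{u}^{T}(A_{u}w)^{2}\|_{2}$ with a small constant slack to accommodate the residual $y$- and $t$-derivatives.
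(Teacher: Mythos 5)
Your decomposition of $\overline{F}_{\eta}=(A_{u}^{T}A_{u})^{-1}A_{u}^{T}(A_{u}w)^{2}$ with $u=x+t\eta$, $w=y+\eta$ and the plan to assemble generating upper bounds via Lemmas \ref{lem:generating_composite} and \ref{lem:calculus_rule} is the right skeleton and matches the paper. However, the final step of your argument has a genuine gap. You claim the three factors each yield a bound with a single denominator of the form $(1-c_{i}t)^{-1}$ and that one can then collapse with $\prod_{i}(1-c_{i}t)^{-1}\le(1-(\sum_{i}c_{i})t)^{-1}$. But the inverse rule in Lemma \ref{lem:calculus_rule} does not produce $(1-ct)^{-1}$; applied to $A_{u}^{T}A_{u}\le_{(x,0)}(1-\beta t)^{-2}$ it gives $(A_{u}^{T}A_{u})^{-1}\le_{(x,0)}\frac{(1-\beta t)^{2}}{2(1-\beta t)^{2}-1}$, whose denominator $2(1-\beta t)^{2}-1$ is a quadratic with a different pole location. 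After combining all factors the generating function is a rational function $\phi(t)=\frac{(\norm{A\eta}_{4}+t)^{2}}{(2(1-\beta t)^{2}-1)(1-\beta t)}$ that is not a product of simple geometric series, so your collapsing inequality does not apply to it. The paper resolves this by invoking Cauchy's estimate (Theorem \ref{thm:cauchy_estimate}): it chooses a disk of radius $\frac{1}{8}\min(\beta^{-1},\norm{A\eta}_{4}+8)$ staying inside the radius of convergence, bounds $\sup|\phi|$ on that disk by $3(\norm{A\eta}_{4}+1)^{2}$, and thereby obtains the geometric coefficient bound $|a_{k}|\le 3(\norm{A\eta}_{4}+1)^{2}\max(8\beta,1)^{k}$, which re-packages into the stated $(1-\max(8+8\norm{A\eta}_{\infty},1)t)^{-1}$ form. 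The constant $8$ thus comes from a pole/radius-of-convergence analysis, not from summing Lipschitz contributions of the $S_{u}^{-1}$ appearances as you suggest.

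A secondary issue: the $\le_{x}f$ calculus is tied to a single fixed norm for both domain and codomain (the paper explicitly uses $\norm{\cdot}_{2}$ everywhere in this section; the $\norm{\cdot}_{\infty}$ quantities only appear as intermediate estimates inside an $\norm{\cdot}_{2}$-to-$\norm{\cdot}_{2}$ derivative bound on the diagonal matrix $S_{x+t\eta}$). You propose to carry different factors in different norms and cross over ``using the identity $\|D\|_{L_{2}\to L_{2}}=\|D\|_{L_{\infty}\to L_{\infty}}$ for diagonal matrices,'' but the composition and product rules (Lemmas \ref{lem:generating_composite}, \ref{lem:calculus_rule}) are stated for a single submultiplicative norm and would not be valid if the inner and outer norms disagreed. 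You should keep everything in $\norm{\cdot}_{2}$ as the paper does and simply use $\norm{\Diag(z)}_{2}=\norm{z}_{\infty}$ as a calculational step, not as a license to change the ambient norm of the generating-upper-bound relation.
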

\begin{proof}
By the assumption that $S_{x}=I$, we have that $\norm{S_{x}}_{2}=1$.
Using $A^{T}A=I$, we have that
\begin{eqnarray*}
\norm{DS_{x+t\eta}(x,t)[(d_{x},d_{t})]}_{2} & = & \norm{\Diag(Ad_{x}+d_{t}A\eta)}_{2}\\
 & \leq & \norm{Ad_{x}}_{\infty}+\norm{d_{t}A\eta}_{\infty}\\
 & \leq & \norm{d_{x}}_{2}+\left|d_{t}\right|\norm{A\eta}_{\infty}\\
 & \leq & (1+\norm{A\eta}_{\infty})\norm{(d_{x},d_{t})}_{2}.
\end{eqnarray*}
Let $\beta=1+\norm{A\eta}_{\infty}$. Then, we have that $S_{x+t\eta}\leq_{(x,t=0)}1+\beta t$. 

By using the inverse formula (Lemma \ref{lem:calculus_rule}), 
\[
S_{x+t\eta}^{-1}\leq_{(x,0)}\frac{1}{1-\beta t}.
\]
Using $A^{T}A=I$, we have that $A\leq_{(x,0)}1$ and hence product
formula (Lemma \ref{lem:calculus_rule}) shows that 
\[
A_{x+t\eta}=S_{x+t\eta}^{-1}A\leq_{(x,0)}\frac{1}{1-\beta t}.
\]
Since $A^{T}A=I$, we have that $AA^{T}\preceq I$ and hence $A_{x+t\eta}^{T}\leq_{(x,0)}\frac{1}{1-\beta t}$.
Therefore, we have that
\[
A_{x+t\eta}^{T}A_{x+t\eta}\leq_{(x,0)}\frac{1}{\left(1-\beta t\right)^{2}}.
\]
By using the inverse formula again,
\[
\left(A_{x+t\eta}^{T}A_{x+t\eta}\right)^{-1}\leq_{(x,0)}\frac{1}{2-\frac{1}{\left(1-\beta t\right)^{2}}}=\frac{(1-\beta t)^{2}}{2(1-\beta t)^{2}-1}.
\]
Hence, we have that
\[
\left(A_{x+t\eta}^{T}A_{x+t\eta}\right)^{-1}A_{x+t\eta}^{T}\leq_{(x,0)}\frac{(1-\beta t)^{2}}{2(1-\beta t)^{2}-1}\frac{1}{1-\beta t}=\frac{1-\beta t}{2(1-\beta t)^{2}-1}.
\]

Now, we consider the function $H(y)=(A(y+\eta))^{2}$. Note that 
\begin{align*}
\norm{H(y)}_{2} & \leq\norm{A(y+\eta)}_{4}^{2},\\
\norm{DH(y)[d]}_{2} & =2\norm{(A(y+\eta))Ad}_{2}\leq2\norm{A(y+\eta)}_{4}\norm d_{2},\\
\norm{D^{2}H(y)[d,d]}_{2} & \leq2\norm{(Ad)^{2}}_{2}\leq2\norm d_{2}^{2}.
\end{align*}
Therefore, we have that
\[
H\leq_{y=0}\norm{A\eta}_{4}^{2}+2\norm{A\eta}_{4}t+t^{2}=(\norm{A\eta}_{4}+t)^{2}.
\]
Hence, we have that
\[
\overline{F}_{\eta}(y,x,t)=\left(A_{x+t\eta}^{T}A_{x+t\eta}\right)^{-1}A_{x+t\eta}^{T}s_{x+t\eta,y+\eta}^{2}\leq_{(0,x,0)}\frac{(1-\beta t)\left(\norm{A\eta}_{4}+t\right)^{2}}{2(1-\beta t)^{2}-1}\frac{1}{\left(1-\beta t\right)^{2}}.
\]

Let $\phi(t)=\frac{\left(\norm{A\eta}_{4}+t\right)^{2}}{(2(1-\beta t)^{2}-1)(1-\beta t)}$
and we write $\phi(t)=\sum_{k=0}^{\infty}a_{k}t^{k}$. For any complex
$|z|=\frac{1}{8}\min\left(\frac{1}{\beta},\norm{A\eta}_{4}+8\right)$,
we have that
\begin{eqnarray*}
\left|\phi(z)\right| & \leq & \frac{(1-\frac{1}{8})}{(2(1-\frac{1}{8})^{2}-1)\left(1-\frac{1}{8}\right)^{2}}\left(\norm{A\eta}_{4}+\frac{\norm{A\eta}_{4}}{8}+1\right)^{2}\\
 & \leq & 3\left(\norm{A\eta}_{4}+1\right)^{2}\left(\norm{A\eta}_{4}+1\right)^{2}.
\end{eqnarray*}
Theorem \ref{thm:cauchy_estimate} shows that
\begin{eqnarray*}
\left|a_{k}\right| & \leq & 3\left(\norm{A\eta}_{4}+1\right)^{2}\left(8\max\left(\beta,\left(\norm{A\eta}_{4}+8\right)^{-1}\right)\right)^{k}\\
 & \leq & 3\left(\norm{A\eta}_{4}+1\right)^{2}\left(\max\left(8\beta,1\right)\right)^{k}
\end{eqnarray*}
Hence, we can instead bound $\overline{F}_{\eta}$ by 
\[
\overline{F}_{\eta}\leq_{(0,x,0)}\frac{3\left(\norm{A\eta}_{4}+1\right)^{2}}{1-\max\left(8+8\norm{A\eta}_{\infty},1\right)t}.
\]
\end{proof}
Now, we prove the geodesic has large radius of convergence. 
\begin{lem}
\label{lem:geodesic_asm}Under the normalization $A^{T}A=I$, $S_{x}=I$
and Euclidean coordinate, any geodesic starting at $x$ satisfies
the bound 
\[
\norm{\gamma^{(k)}(0)}_{2}\leq k!c^{k}
\]
for all $k\geq2$ where $c=512\norm{A\gamma'(0)}_{4}$.
\end{lem}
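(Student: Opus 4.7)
The plan is to reduce the geodesic equation $\gamma''(t)=F(\gamma'(t),\gamma(t))$, where $F(y,x)=(A_{x}^{T}A_{x})^{-1}A_{x}^{T}s_{x,y}^{2}$, to the scalar-ODE framework of Lemma~\ref{lem:taylor_ODE_2nd} and feed in the quantitative bound on $\overline{F}_{\eta}$ just established in Lemma~\ref{lem:geo_F_est}. With $\eta=\gamma'(0)$ and $q=\norm{A\eta}_{4}$, I would observe that the auxiliary function $\overline{F}$ of Lemma~\ref{lem:taylor_ODE_2nd} is precisely $\alpha^{2}\overline{F}_{\eta}$ evaluated at the rescaled arguments $(\alpha^{-1}y,x,\alpha t)$. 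The product-norm calculus of Lemmas~\ref{lem:generating_composite} and~\ref{lem:calculus_rule}, together with the inequality $\norm{A\eta}_{\infty}\le q$, then converts the bound of Lemma~\ref{lem:geo_F_est} into an explicit rational generating upper bound $\overline{F}\leq_{(0,x,0)}f$ of the form $f(s)=3\alpha^{2}(q+1)^{2}/(1-8(1+q)\alpha s)$ (up to the rescaling factor $\max(\alpha,\alpha^{-1})$ coming from the product norm on the three input slots).

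Next, the scalar ODE $\psi'(t)=1+\psi(t)+f(\psi(t))$ with $\psi(0)=0$ has a unique complex-analytic solution near the origin by the Cauchy--Kowalevski theorem (Theorem~\ref{thm:cauchy_kowalevski}). I would quantify its domain of analyticity by a standard comparison argument: pick $R_{0}$ strictly less than the pole of $f$, bound $|f(\psi)|$ uniformly for $|\psi|\leq R_{0}$, deduce a uniform bound $K$ on $|1+\psi+f(\psi)|$, and conclude that $\psi$ extends complex-analytically to $|t|\leq T_{0}:=R_{0}/K$ with $|\psi(t)|\leq R_{0}$ on this disk. Cauchy's estimate (Theorem~\ref{thm:cauchy_estimate}) then yields $|\psi^{(k)}(0)|\leq k!\,R_{0}/T_{0}^{k}$, and Lemma~\ref{lem:taylor_ODE_2nd} translates this into $\norm{\gamma^{(k)}(0)}_{2}\leq\psi^{(k)}(0)/\alpha^{k}$ for $k\ge 2$.

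The main obstacle is tuning $\alpha$ and the intermediate constants so that the effective exponential base $1/(T_{0}\alpha)$ comes out \emph{linear} in $q$, matching the claim $c=512q$. A direct execution of the recipe above produces a base of order $(q+1)^{3}$, which is one power too many for large $q$; removing this slack requires exploiting the special structure of the geodesic equation---crucially, that the $(q+1)^{2}$ prefactor in Lemma~\ref{lem:geo_F_est} already captures the size of $F(\eta,x)=\gamma''(0)$ itself, so only one residual factor of $q$ should enter the radius of convergence. Equivalently, one can sidestep the scalar ODE and bootstrap directly from the geodesic equation in the complex plane: show that $\gamma(t)$ extends complex-analytically to $|t|\leq 1/(Cq)$ with $A_{\gamma(t)}$ uniformly well-conditioned and $\norm{\gamma(t)-x-t\eta}_{2}$ uniformly bounded on this disk (using Lemma~\ref{lem:geo_F_est} to control $F(\gamma'(t),\gamma(t))$ pointwise), and then apply Cauchy's estimate to $\gamma(t)-x-t\eta$ to obtain $\norm{\gamma^{(k)}(0)}_{2}\le k!\,(512q)^{k}$ for $k\ge 2$. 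Either route funnels through the same quantitative control of Lemma~\ref{lem:geo_F_est}; the absolute constant $512$ is there to absorb the slack of these reductions.
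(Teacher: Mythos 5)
Your skeleton is right---reduce via Lemma~\ref{lem:taylor_ODE_2nd} to a scalar ODE, feed in Lemma~\ref{lem:geo_F_est}, then Cauchy-estimate the solution---but the execution has a genuine gap at exactly the point you flag, and the obstacle you describe is real: with your bookkeeping the base cannot be made linear in $q$. Where you go wrong is in treating $\overline{F}(y,x,t)=\alpha^{2}F(\alpha^{-1}y+\gamma'(0),x+\alpha t\gamma'(0))$ as ``$\alpha^{2}\overline{F}_{\gamma'(0)}$ at rescaled arguments'' and then paying a worst-case factor $\max(\alpha,\alpha^{-1})$ in the generating-bound calculus. That view is wasteful precisely because $F(y,x)=(A_{x}^{T}A_{x})^{-1}A_{x}^{T}s_{x,y}^{2}$ is \emph{homogeneous of degree two} in its velocity slot: the $\alpha^{2}$ prefactor in the definition of $\overline{F}$ cancels exactly against $s_{\cdot,\alpha^{-1}y+\gamma'(0)}^{2}=\alpha^{-2}s_{\cdot,y+\alpha\gamma'(0)}^{2}$, giving the \emph{identity} $\overline{F}(y,x,t)=\overline{F}_{\alpha\gamma'(0)}(y,x,t)$ with no residual rescaling of either slot. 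So the correct move is to invoke Lemma~\ref{lem:geo_F_est} with $\eta=\alpha\gamma'(0)$ rather than $\eta=\gamma'(0)$; then $\norm{A\eta}_{4}=\alpha q$, the bound reads $\overline{F}\leq_{(0,x,0)}\tfrac{3(\alpha q+1)^{2}}{1-\max(8+8\alpha q,1)t}$ (using $\norm{A\eta}_{\infty}\le\norm{A\eta}_{4}$), and the single choice $\alpha=q^{-1}$ makes the generating function the $q$-\emph{independent} $\tfrac{12}{1-16t}\leq_{0}\tfrac{16}{1-16t}-1-t$. The comparison ODE then collapses to $\psi'=\tfrac{16}{1-16\psi}$, whose explicit solution $\psi=\tfrac{1}{16}(1-\sqrt{1-512t})$ has radius $1/512$, giving $|\psi^{(k)}(0)|\le k!\,512^{k}$; the only $q$-dependence comes from the final rescaling $\norm{\gamma^{(k)}(0)}_{2}\le\psi^{(k)}(0)/\alpha^{k}=\psi^{(k)}(0)\,q^{k}$ supplied by Lemma~\ref{lem:taylor_ODE_2nd}, which is where $c=512q$ comes from.

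By contrast, your version pays $\alpha^{2}g(\alpha s)$ (with $g$ the bound at $\eta=\gamma'(0)$); chasing the resulting ODE gives an effective base of order $\alpha^{2}(q+1)^{3}$, which for $\alpha\ge 1$ is minimized at $\alpha=1$ and is therefore bounded below by a constant for all $q$---it can never tend to zero as $q\to 0$, which is the regime actually used ($q=\norm{A\gamma'(0)}_{4}=\Theta(n^{-1/4})$). So this is not a matter of losing ``one power'' or absorbing slack into the constant $512$: the $\max(\alpha,\alpha^{-1})$ step structurally destroys the $q$-scaling, and no choice of $\alpha$ repairs it. Your suggested alternative of bootstrapping directly in the complex plane is a sensible fallback in spirit, but you do not carry it out, and it is unnecessary once the exact cancellation is noticed.
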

\begin{proof}
Recall that under Euclidean coordinate, the geodesic equation is given
by 
\[
\gamma''=\left(A_{\gamma}^{T}A_{\gamma}\right)^{-1}A_{\gamma}^{T}s_{\gamma'}^{2}\defeq F(\gamma',\gamma).
\]
So, we have that $\overline{F}_{\eta}(y,x,t)=\alpha^{2}F(\alpha^{-1}y+\gamma'(0),x+\alpha t\gamma'(0))$
with $\eta=\alpha\gamma'(0).$

Now, we estimate $\overline{F}_{\eta}$. Lemma \ref{lem:geo_F_est}
shows that 
\[
\overline{F}_{\eta}\leq_{(0,x,0)}\frac{3\left(\norm{A\eta}_{4}+1\right)^{2}}{1-\max\left(8+8\norm{A\eta}_{\infty},1\right)t}=\frac{3\left(\alpha\norm{A\gamma'(0)}_{4}+1\right)^{2}}{1-\max\left(8+8\alpha\norm{A\gamma'(0)}_{\infty},1\right)t}.
\]
Setting $\alpha=\norm{A\gamma'(0)}_{4}^{-1}$ and using $\norm{A\gamma'(0)}_{\infty}\leq\norm{A\gamma'(0)}_{4}$,
we have that
\[
\overline{F}_{\eta}\leq_{(0,x,0)}\frac{12}{1-16t}\leq_{0}\frac{16}{1-16t}-1-t.
\]
Lemma \ref{lem:taylor_ODE_2nd} shows that
\[
\norm{\gamma^{(k)}(0)}_{2}\leq\frac{\psi^{(k)}(0)}{\alpha^{k}}
\]
for all $k\geq2$ where $\psi(t)$ is the solution of 
\[
\psi'(t)=\frac{16}{1-16\psi(t)}\text{ with }\psi(0)=0.
\]
Solving it, we get that
\[
\psi(t)=\frac{1}{16}(1-\sqrt{1-512t}).
\]
By Theorem \ref{thm:cauchy_estimate}, we have that 
\[
\left|\psi^{(k)}(0)\right|\leq k!(512)^{k}.
\]
Hence, we have that
\[
\norm{\gamma^{(k)}(0)}_{2}\leq\frac{k!(512)^{k}}{\alpha^{k}}=k!(512)^{k}\norm{A\gamma'(0)}_{4}^{k}
\]
for all $k\geq2$. 
\end{proof}

\subsubsection{Parallel Transport}

Motivated from the equation for parallel transport under Euclidean
coordinate \ref{eq:parallel_log}, we define the following auxiliary
function 
\begin{equation}
F(t)=(A_{\gamma(t)}^{T}A_{\gamma(t)})^{-1}A_{\gamma(t)}^{T}S_{\gamma'(t)}A_{\gamma(t)}.\label{eq:orthgonal_frame_F}
\end{equation}
The derivative bounds on parallel transport rely on the smoothness
of this auxiliary function.
\begin{lem}
\label{lem:orthgonal_F_est}Given a geodesic $\gamma(t)$. Under the
normalization that $A^{T}A=I$, $S_{\gamma(0)}=I$, we have that 
\[
F(t)\leq_{0}\frac{c}{2(1-2ct)^{2}-\left(1-ct\right)^{2}}\frac{1-ct}{1-2ct}
\]
where $F$ is defined in (\ref{eq:orthgonal_frame_F}) and $c=512\norm{A\gamma'(0)}_{4}$.
\end{lem}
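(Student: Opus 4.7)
The plan is to assemble the bound mechanically from the calculus of generating upper bounds (Lemmas \ref{lem:generating_composite} and \ref{lem:calculus_rule}), feeding in the derivative estimate on the geodesic from Lemma \ref{lem:geodesic_asm}. First, I would translate the derivative bound on $\gamma$ into generating upper bounds for the elementary blocks of $F(t)$. Under the normalization $A^T A = I$, $S_{\gamma(0)} = I$, we have $\frac{d^{k}}{dt^{k}}S_{\gamma(t)}\big|_{t=0} = \Diag(A\gamma^{(k)}(0))$, so $\|(S_{\gamma})^{(k)}(0)\|_{2} \leq \|A\gamma^{(k)}(0)\|_{\infty} \leq \|\gamma^{(k)}(0)\|_{2}$. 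For $k \geq 2$ Lemma \ref{lem:geodesic_asm} gives $\|\gamma^{(k)}(0)\|_{2} \leq k!\,c^{k}$, while for $k = 1$ the bound $\|A\gamma'(0)\|_{\infty} \leq \|A\gamma'(0)\|_{4} = c/512 \leq c$ is dominated by the same geometric series. Therefore
\[
S_{\gamma(t)} \leq_{0} \frac{1}{1-ct}, \qquad S_{\gamma'(t)} \leq_{0} \frac{c}{(1-ct)^{2}},
\]
the second inequality coming from the same estimate applied to $\gamma^{(k+1)}$ (the coefficient $(k{+}1)\,c^{k+1}$ matches the power series expansion of $c/(1-cs)^{2}$).

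Second, I would propagate these two base bounds through the algebraic structure of $F$. Using the inverse rule of Lemma \ref{lem:calculus_rule} with $\|S_{\gamma(0)}^{-1}\|_{2} = 1$ gives $S_{\gamma(t)}^{-1} \leq_{0} \frac{1-ct}{1-2ct}$, and since $A_{\gamma(t)} = S_{\gamma(t)}^{-1}A$ with $\|A\|_{2} = 1$, also $A_{\gamma(t)}, A_{\gamma(t)}^{T} \leq_{0} \frac{1-ct}{1-2ct}$. The product rule then yields $A_{\gamma(t)}^{T}A_{\gamma(t)} \leq_{0} \bigl(\tfrac{1-ct}{1-2ct}\bigr)^{2}$, and a second application of the inverse rule (using $\|(A^{T}A)^{-1}\|_{2} = 1$) gives
\[
\bigl(A_{\gamma(t)}^{T}A_{\gamma(t)}\bigr)^{-1} \leq_{0} \frac{(1-2ct)^{2}}{2(1-2ct)^{2} - (1-ct)^{2}}.
\]
Multiplying the four factors of $F$ via the product rule and cancelling the matching $(1-ct)$ and $(1-2ct)$ terms gives the bound $\frac{c}{2(1-2ct)^{2} - (1-ct)^{2}}$, which is actually slightly tighter than the claimed target.

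The one genuine subtlety — and the step I would be most careful with — is that the claimed target differs from what the product rule gives by an extra factor $\frac{1-ct}{1-2ct}$. Because the remark after the definition of $\leq_{x}$ warns that pointwise domination does \emph{not} imply a $\leq_{0}$ inequality, I would finish by verifying that the difference of the two candidate bounds has non-negative Taylor coefficients. This reduces to showing that the power series $\frac{ct}{(1-2ct)\bigl(2(1-2ct)^{2} - (1-ct)^{2}\bigr)} = \frac{ct}{(1-2ct)(1 - 6ct + 7c^{2}t^{2})}$ has non-negative coefficients, which follows from the linear recurrence $p_{n} = 6 p_{n-1} - 7 p_{n-2}$ with $p_{0} = 1,\,p_{1} = 6$ via the standard Binet-type argument (positive coefficients $A = \frac{1 + 3/\sqrt{2}}{2}, B = \frac{1 - 3/\sqrt{2}}{2}$ on the roots $3 \pm \sqrt{2}$, with $A/|B|$ exceeding the ratio of the roots). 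The main work is thus bookkeeping; no new geometric or analytic ideas are needed beyond the generating-function calculus already assembled.
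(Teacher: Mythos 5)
Your overall strategy matches the paper's: push the geodesic derivative bound of Lemma \ref{lem:geodesic_asm} through the calculus of generating upper bounds. But there is a concrete error in the bound you claim for $S_{\gamma'(t)}$, and it is exactly this error that manufactures the ``extra factor'' you then spend a paragraph trying to absorb.

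Recall that $S_{\gamma'(t)} = \Diag(A_{\gamma(t)}\gamma'(t)) = \Diag(S_{\gamma(t)}^{-1}A\gamma'(t)) = S_{\gamma(t)}^{-1}\,\Diag(A\gamma'(t))$. Your estimate $S_{\gamma'(t)}\leq_0 \frac{c}{(1-ct)^2}$ is obtained by differentiating $\gamma'$ alone, so it is actually a bound on $\Diag(A\gamma'(t))$, not on $S_{\gamma'(t)}$ --- you have silently dropped the $S_{\gamma(t)}^{-1}$ factor. It is not merely a matter of constants: already the first Taylor coefficient fails, since $\frac{d}{dt}S_{\gamma'} = -S_{\gamma'}^2 + S_{\gamma''}$, so $\|(S_{\gamma'})'(0)\|_2 \le \|A\gamma'(0)\|_\infty^2 + \|A\gamma''(0)\|_\infty$ carries a contribution from $(A\gamma'(0))^2$ that the series $\frac{c}{(1-ct)^2} = c + 2c^2 t + \cdots$ does not account for. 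With the $S_{\gamma(t)}^{-1}\leq_0 \frac{1-ct}{1-2ct}$ factor included, the product rule gives
\[
S_{\gamma'(t)} \leq_0 \frac{1-ct}{1-2ct}\cdot \frac{c}{(1-ct)^2} = \frac{c}{(1-2ct)(1-ct)},
\]
and once you substitute this into the four-factor product
\[
\left(A_\gamma^T A_\gamma\right)^{-1}\cdot A_\gamma^T \cdot S_{\gamma'} \cdot A_\gamma \leq_0 \frac{(1-2ct)^2}{2(1-2ct)^2-(1-ct)^2}\cdot\frac{1-ct}{1-2ct}\cdot\frac{c}{(1-2ct)(1-ct)}\cdot\frac{1-ct}{1-2ct},
\]
the cancellation lands exactly on $\frac{c}{2(1-2ct)^2-(1-ct)^2}\cdot\frac{1-ct}{1-2ct}$ with nothing left over. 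Your final paragraph comparing Taylor coefficients of two rational functions is therefore unnecessary: the discrepancy it resolves was created by the erroneous $S_{\gamma'}$ bound, not by the problem. (The comparison argument would be logically sound as stated, but its premise --- that $F\leq_0 \frac{c}{2(1-2ct)^2-(1-ct)^2}$ --- is exactly what was not established.)
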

\begin{proof}
Lemma \ref{lem:geodesic_asm} shows that $\norm{\gamma^{(k)}(0)}_{2}\leq k!c^{k}$
for all $k\geq2$. Therefore, we have that 
\[
S_{\gamma(t)}\leq_{0}1+t\norm{A\gamma'(0)}_{\infty}+\sum_{k\geq2}(ct)^{k}\leq_{0}\frac{1}{1-ct}.
\]
Using Lemma \ref{lem:calculus_rule}, we have that
\[
A_{\gamma(t)}\leq_{0}\frac{1}{2-\frac{1}{1-ct}}=\frac{1-ct}{1-2ct}
\]
and hence
\[
(A_{\gamma(t)}^{T}A_{\gamma(t)})^{-1}A_{\gamma(t)}^{T}\leq_{0}\frac{\left(1-2ct\right)\left(1-ct\right)}{2(1-2ct)^{2}-\left(1-ct\right)^{2}}.
\]

Now, we note that
\begin{equation}
\Diag(A\gamma'(t))\leq_{0}\norm{A\gamma'(0)}_{\infty}+\sum_{k\geq1}\frac{(k+1)!c^{k+1}t^{k}}{k!}\leq_{0}\frac{c}{(1-ct)^{2}}\label{eq:diag_A_gamma_}
\end{equation}
and hence
\[
S_{\gamma'}=S_{\gamma}^{-1}\Diag(A\gamma'(t))\leq_{0}\frac{1-ct}{1-2ct}\frac{c}{(1-ct)^{2}}=\frac{c}{(1-2ct)(1-ct)}.
\]
This gives the result.
\end{proof}
Now, we prove parallel transport has large radius of convergence. 
\begin{lem}
\label{lem:orthgon_asm}Given a geodesic with $\gamma(0)=x$. Let
$v(t)$ be the parallel transport of a unit vector along $\gamma(t)$.
Under the normalization that $A^{T}A=I$, $S_{\gamma(0)}=I$, we have
that 
\[
\norm{v^{(k)}(0)}_{2}\leq k!\left(16c\right)^{k}
\]
for all $k\geq1$ where $c=512\norm{A\gamma'(0)}_{4}$.
\end{lem}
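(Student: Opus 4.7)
The plan is to express parallel transport as the linear ODE $v'(t) = F(t) v(t)$ with $F$ the matrix defined in (\ref{eq:orthgonal_frame_F}) and $\|v(0)\|_2 = 1$, and then reduce the question to the scalar ODE $\psi'(t) = \bar f(t)\psi(t)$, $\psi(0)=1$, where $\bar f(t) = \frac{c(1-ct)}{[2(1-2ct)^2 - (1-ct)^2](1-2ct)}$ is the majorant from Lemma \ref{lem:orthgonal_F_est}.

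The first step is a scalar majorization. Differentiating $v' = Fv$ and applying the Leibniz rule gives
\[
v^{(k+1)}(0) = \sum_{j=0}^{k}\binom{k}{j} F^{(j)}(0)\, v^{(k-j)}(0).
\]
Since $F \leq_0 \bar f$ by Lemma \ref{lem:orthgonal_F_est}, by definition of $\leq_0$ we have $\|F^{(j)}(0)\|_2 \leq \bar f^{(j)}(0)$. The scalar majorant $\psi$ satisfies the same recursion, namely $\psi^{(k+1)}(0) = \sum_j \binom{k}{j} \bar f^{(j)}(0)\,\psi^{(k-j)}(0)$. A straightforward induction with base case $\|v(0)\|_2 = 1 = \psi(0)$ then yields $\|v^{(k)}(0)\|_2 \leq \psi^{(k)}(0)$ for every $k \geq 0$.

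The second step is to bound $\psi^{(k)}(0)$ via complex analysis. Integrating, one has $\psi(t) = \exp\!\bigl(\int_0^t \bar f(s)\,ds\bigr)$, which is holomorphic wherever $\bar f$ is. Inspecting the denominator of $\bar f$, the singularities closest to the origin are the roots of $2(1-2ct)^2 - (1-ct)^2$, which occur at $ct = (3 - \sqrt{2})/7 \approx 0.226$ — comfortably outside the disk $|z| \leq 1/(16c)$. On this disk one checks $|1-cz| \leq 17/16$, $|1-2cz| \geq 7/8$, and $|2(1-2cz)^2 - (1-cz)^2| \geq 2(7/8)^2 - (17/16)^2 > 0$, so $|\bar f(z)| = O(c)$ and hence $|\psi(z)| = O(1)$ on the circle of radius $1/(16c)$. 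Cauchy's estimate (Theorem \ref{thm:cauchy_estimate}) then yields $\psi^{(k)}(0) \leq k!\,(16c)^k$, and combining with the majorization from the first step completes the proof.

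The main obstacle is that the constant $16c$ is tight: the explicit bound on $|\psi(z)|$ on the circle of radius $1/(16c)$ must come out bounded by a small absolute constant so that the Cauchy estimate returns exactly $k!(16c)^k$ rather than something like $Ck!(16c)^k$ with an undesired extra factor. This requires being somewhat careful with the elementary inequalities bounding $|\bar f|$ on the disk; if the slack turns out to be insufficient, one may need to evaluate Cauchy on a slightly smaller disk (say radius $1/(17c)$) and absorb the resulting constant into the target by strengthening the induction slightly, or else give a direct residue-style bound on $\psi^{(k)}(0)$. Everything else — the Leibniz recursion, the majorization principle, and the identification of the singularity set of $\bar f$ — is routine.
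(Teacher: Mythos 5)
Your route is genuinely different from the paper's and, for this particular lemma, cleaner. You exploit the linearity of the parallel-transport equation $v'=F(t)v$ directly via the Leibniz recursion $v^{(k+1)}(0)=\sum_j\binom{k}{j}F^{(j)}(0)v^{(k-j)}(0)$ and compare with the linear time-dependent scalar majorant ODE $\psi'=\bar f(t)\psi$, $\psi(0)=1$. The paper instead rescales time by $\alpha=1/(8c)$, plugs the rescaled bound into the general nonlinear comparison machinery (Lemma \ref{lem:taylor_ODE}, giving the autonomous $\psi'=1/(1-\psi)$, $\psi(0)=0$, i.e.\ $\psi=1-\sqrt{1-2t}$), estimates the Taylor coefficients of $\sqrt{1-2t}$, and undoes the rescaling. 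Your majorization step is correct as written (the needed positivity of $\bar f^{(j)}(0)$ is automatic because those quantities dominate norms), so the only issue is the final Cauchy estimate.

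There the gap is real and the workaround you propose points the wrong way. On $|cz|=1/16$ your own bounds give $|\bar f(z)|\le (17/16)/\big((103/256)(7/8)\big)\,c \approx 3.02c$, hence $|\psi(z)|\le e^{3.02/16}\approx 1.21$, and Cauchy yields $|\psi^{(k)}(0)|\le 1.21\,k!\,(16c)^k$, a factor $1.21$ short. Shrinking the circle to radius $1/(17c)$ only makes the geometric factor worse, $(17c)^k$ instead of $(16c)^k$. The correct fix is to \emph{enlarge} the circle: on $|cz|=1/10$ (still well inside the nearest singularity of $\bar f$ at $cz=(3-\sqrt{2})/7\approx 0.23$) one has $|1-cz|\le 1.1$, $|1-2cz|\ge 0.8$, $|1-6cz+7(cz)^2|=7|cz-r_1||cz-r_2|\ge 7(0.126)(0.53)>0.47$, so $|\bar f|\le 3c$, $|\psi|\le e^{0.3}<1.35$, and Cauchy gives $|\psi^{(k)}(0)|\le 1.35\,k!(10c)^k\le k!(16c)^k$ for every $k\ge1$ since $1.35\cdot(10/16)^k\le1$ once $k\ge1$. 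An alternative is to establish $\bar f\leq_0 c/(1-16ct)$ directly (plenty of slack, since the Taylor coefficients of $\bar f/c$ in powers of $ct$ grow only like $(1/r_1)^k\approx 4.4^k$), whence $\psi\leq_0(1-16ct)^{-1/16}$ and $\psi^{(k)}(0)\le (1/16)_k(16c)^k\le k!(16c)^k$. Either repair completes your argument.
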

\begin{proof}
From (\ref{eq:parallel_log}), we have that 
\[
\frac{d}{dt}v(t)=(A_{\gamma(t)}^{T}A_{\gamma(t)})^{-1}A_{\gamma(t)}^{T}S_{\gamma'(t)}A_{\gamma(t)}v(t).
\]
Let $u(t)=v(\alpha t)$, then we have that
\begin{eqnarray*}
u'(t) & = & \alpha v'(\alpha t)\\
 & = & \alpha(A_{\gamma(\alpha t)}^{T}A_{\gamma(\alpha t)})^{-1}A_{\gamma(\alpha t)}^{T}S_{\gamma'(\alpha t)}A_{\gamma(\alpha t)}u(t).
\end{eqnarray*}
Let $F(x,t)=\alpha(A_{\gamma(\alpha t)}^{T}A_{\gamma(\alpha t)})^{-1}A_{\gamma(\alpha t)}^{T}S_{\gamma'(\alpha t)}A_{\gamma(\alpha t)}x$.
Then, Lemma \ref{lem:orthgonal_F_est} shows that 
\[
F(x,t)\leq_{(v,0)}\frac{\alpha c}{2(1-2\alpha ct)^{2}-\left(1-\alpha ct\right)^{2}}\frac{1-\alpha ct}{1-2\alpha ct}(1+t).
\]
Setting $\alpha=\frac{1}{8c}$, we have that
\[
F(x,t)\leq_{(v,0)}\frac{\frac{1}{8}}{2(1-\frac{t}{4})^{2}-\left(1-\frac{t}{8}\right)^{2}}\frac{1-\frac{t}{8}}{1-\frac{t}{4}}(1+t)\leq_{0}\frac{1}{1-t}.
\]

Lemma \ref{lem:taylor_ODE} shows that
\[
\norm{u^{(k)}(0)}_{2}\leq\psi^{(k)}(0)
\]
for all $k\geq1$ where $\psi(t)$ is the solution of 
\[
\psi'(t)=\frac{1}{1-\psi(t)}\text{ with }\psi(0)=0.
\]
Solving it, we get that
\[
\psi(t)=1-\sqrt{1-2t}.
\]
By Theorem \ref{thm:cauchy_estimate}, we have that for any $0\leq t\leq\frac{1}{2}$,
we have that 
\[
\norm{u^{(k)}(0)}_{2}\leq\left|\psi^{(k)}(0)\right|\leq k!2^{k}
\]
for all $k\geq1$. For $k\geq1$, we have that
\[
\norm{v^{(k)}(0)}_{2}\leq k!\left(16c\right)^{k}.
\]
\end{proof}

\subsubsection{Jacobi field}

Motivated from the equation for Jacobi field under orthogonal frame
basis (Lemma \ref{lem:Jacobi_field_log}), we define the following
auxiliary function 
\begin{equation}
F(t)=X^{-1}(A_{\gamma}^{T}A_{\gamma})^{-1}\left(A_{\gamma}^{T}S_{\gamma'}P_{\gamma}S_{\gamma'}A_{\gamma}-A_{\gamma}^{T}diag(P_{\gamma}s_{\gamma'}^{2})A_{\gamma}\right)X.\label{eq:jacobi_F}
\end{equation}
The derivative bounds on Jacobi field rely on the smoothness of this
auxiliary function.
\begin{lem}
\label{lem:jacobi_F_est}Given a geodesic $\gamma(t)$ and an orthogonal
frame $\{x_{i}\}_{i=1}^{n}$. Under the normalization that $A^{T}A=I$,
$S_{\gamma(0)}=I$, we have that 
\[
F(t)\leq_{0}\frac{12c^{2}}{1-64ct}
\]
where $F$ is defined in (\ref{eq:jacobi_F}) and $c=512\norm{A\gamma'(0)}_{4}$.
\end{lem}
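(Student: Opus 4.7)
The strategy is mechanical: bound each matrix factor appearing in the definition (\ref{eq:jacobi_F}) of $F(t)$ individually via a one-variable rational upper bound in the sense of $\leq_0$, combine them with the product/sum rules of Lemma \ref{lem:calculus_rule}, and finally extract a clean single-pole bound of the form $\frac{12c^{2}}{1-64ct}$ via Cauchy's estimate (Theorem \ref{thm:cauchy_estimate}). All the hard analytic work has already been done in Lemmas \ref{lem:geodesic_asm}, \ref{lem:orthgonal_F_est}, \ref{lem:orthgon_asm} — what remains is bookkeeping.

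\textbf{Step 1: reuse the factor bounds from Lemma \ref{lem:orthgonal_F_est}.} Under the normalization $A^{T}A=I$, $S_{\gamma(0)}=I$, its proof already establishes, with $c=512\|A\gamma'(0)\|_{4}$,
\[
S_{\gamma(t)}\leq_{0}\tfrac{1}{1-ct},\qquad A_{\gamma(t)}\leq_{0}\tfrac{1-ct}{1-2ct},\qquad (A_{\gamma}^{T}A_{\gamma})^{-1}A_{\gamma}^{T}\leq_{0}\tfrac{(1-2ct)(1-ct)}{2(1-2ct)^{2}-(1-ct)^{2}},
\]
and $S_{\gamma'(t)}\leq_{0}\tfrac{c}{(1-2ct)(1-ct)}$. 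Multiplying by $A_{\gamma}$ on the left via the product rule gives $P_{\gamma}\leq_{0}\tfrac{(1-ct)^{2}}{2(1-2ct)^{2}-(1-ct)^{2}}$, and the bound (\ref{eq:diag_A_gamma_}) on $\mathrm{Diag}(A\gamma')$ from that same proof yields $s_{\gamma'}^{2}\leq_{0}\tfrac{c^{2}}{(1-2ct)^{2}(1-ct)^{2}}$, from which $\mathrm{diag}(P_{\gamma}s_{\gamma'}^{2})$ is bounded by the product of the two.

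\textbf{Step 2: bound the frame matrix $X$ and its inverse.} Since each column $x_{i}(t)$ of $X(t)$ is the parallel transport of a unit vector along $\gamma$, Lemma \ref{lem:orthgon_asm} gives $\|x_{i}^{(k)}(0)\|_{2}\leq k!(16c)^{k}$, hence
\[
X\leq_{0}\tfrac{1}{1-16ct}.
\]
Orthonormality in the Riemannian metric reads $X^{T}A_{\gamma}^{T}A_{\gamma}X=I$, and at $t=0$ we have $A_{\gamma(0)}^{T}A_{\gamma(0)}=I$, so $X(0)$ is Euclidean-orthogonal and $\|X^{-1}(0)\|_{2}=1$. The inverse rule of Lemma \ref{lem:calculus_rule} then gives
\[
X^{-1}\leq_{0}\frac{1}{1-\bigl(\tfrac{1}{1-16ct}-1\bigr)}=\frac{1-16ct}{1-32ct}.
\]

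\textbf{Step 3: assemble and clean up.} Applying the product rule of Lemma \ref{lem:calculus_rule} to the eight factors appearing in (\ref{eq:jacobi_F}) (using linearity for the difference of the two $A_{\gamma}^{T}(\cdots)A_{\gamma}$ terms), one obtains $F(t)\leq_{0}\phi(t)$ for an explicit rational function $\phi$. Every pole of $\phi$ lies at a point of the form $t=\tfrac{1}{kc}$ with $k\in\{2,16,32,\dots\}$ — the smallest positive pole (coming from $X^{-1}$) is at $t=\tfrac{1}{32c}$. To package this into the simpler single-pole form claimed, I would apply Theorem \ref{thm:cauchy_estimate} on the disk $|z|=\tfrac{1}{64c}$, which lies strictly inside the domain of holomorphy of $\phi$. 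Direct substitution shows $|\phi(z)|\leq 12c^{2}$ on this disk (each factor evaluated at $|z|=\tfrac{1}{64c}$ is bounded by a small constant, and the $c^{2}$ comes from the two copies of $S_{\gamma'}\leq_{0}\tfrac{c}{(1-2ct)(1-ct)}$ or equivalently from $s_{\gamma'}^{2}$), so Cauchy gives $|\phi^{(k)}(0)|\leq k!(64c)^{k}\cdot 12c^{2}$, which is exactly the Taylor expansion of $\tfrac{12c^{2}}{1-64ct}$, yielding $F\leq_{0}\tfrac{12c^{2}}{1-64ct}$.

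\textbf{Main obstacle.} No conceptual difficulty is expected; the delicate point is purely arithmetic — verifying that at $|z|=\tfrac{1}{64c}$ each of the several compounding rational factors stays bounded by a modest absolute constant so that their product is below $12$. If the constant $12$ turns out to be off, one simply inflates $64$ slightly and re-balances; the structural form of the bound is forced by the locations of the poles of the various factor estimates.
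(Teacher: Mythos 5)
Your proposal follows the paper's own proof essentially step for step: bound each factor in (\ref{eq:jacobi_F}) by a rational function in the $\leq_{0}$ sense using Lemmas \ref{lem:geodesic_asm}, \ref{lem:orthgonal_F_est}, \ref{lem:orthgon_asm}, combine via the product/inverse rules of Lemma \ref{lem:calculus_rule}, and collapse the resulting rational function into a single-pole bound. Steps 2 and 3 match the paper's treatment of $X$, $X^{-1}$, $P_{\gamma}$, $Y$, and the final simplification (the paper leaves the last $\leq_{0}$ implicit where you invoke Cauchy's estimate, which is a reasonable way to justify it).

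There is, however, a genuine gap in Step 1 in the way you claim the bound on $s_{\gamma'}^{2}$. You assert that the bound (\ref{eq:diag_A_gamma_}) on $\Diag(A\gamma')$ "yields $s_{\gamma'}^{2}\leq_{0}\tfrac{c^{2}}{(1-2ct)^{2}(1-ct)^{2}}$," but this is not a product-rule step that can be made directly. To bound the vector $s_{\gamma'}^{2}=S_{\gamma}^{-2}\Diag(A\gamma'(t))A\gamma'(t)$ by the product rule, one of the two $A\gamma'$ factors must be measured in vector $\ell_{2}$ norm, and its constant term is $\norm{A\gamma'(0)}_{2}=O(1)$, \emph{not} $O(c)$ (recall $c=512\norm{A\gamma'(0)}_{4}=O(n^{-1/4})$ in the regime of interest, while $\norm{A\gamma'(0)}_{2}$ is order one). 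Feeding that into the product rule blows up the $k=0$ coefficient by a factor as large as $\norm{A\gamma'(0)}_{2}/\norm{A\gamma'(0)}_{\infty}$, which can be $\sqrt{m}$. The paper sidesteps this by splitting $\gamma'(t)=\gamma'(0)+(\gamma'(t)-\gamma'(0))$, so the dangerous constant piece only ever sits inside a $\Diag(\cdot)$ factor (measured in $\ell_{\infty}$, where $\norm{A\gamma'(0)}_{\infty}\leq\norm{A\gamma'(0)}_{4}=c/512$) while the other factor, $\gamma'(t)-\gamma'(0)$, has a zero constant term; this gives $\Diag(A\gamma'(t))A\gamma'(t)\leq_{0}\tfrac{3c^{2}}{(1-ct)^{4}}$ and then $s_{\gamma'}^{2}\leq_{0}\tfrac{3c^{2}}{(1-ct)^{2}(1-2ct)^{2}}$. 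Your claimed bound is actually consistent with the true coefficients, but it does not follow from the shortcut you give; you would need this split (or some equivalent device) to make Step 1 rigorous, after which the rest of your argument goes through.
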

\begin{proof}
We first bound the derivatives of $s_{\gamma'}^{2}$. Using $\norm{\gamma^{(k)}(0)}_{2}\leq k!c^{k}$
for all $k\geq2$ (Lemma \ref{lem:geodesic_asm}), we have that 
\[
\gamma'(t)-\gamma'(0)\leq_{0}\frac{d}{dt}\frac{1}{1-ct}=\frac{c}{(1-ct)^{2}}.
\]
Using $A^{T}A=I$ and $\Diag(A\gamma'(t))\leq_{0}\frac{c}{(1-ct)^{2}}$
(\ref{eq:diag_A_gamma_}), we have that
\begin{equation}
\Diag(A\gamma'(t))A(\gamma'(t)-\gamma'(0))\leq_{0}\left(\frac{c}{(1-ct)^{2}}\right)^{2}.\label{eq:jacobi_F_est1}
\end{equation}

Next, we note that 
\[
\Diag(A\gamma'(t))A\gamma'(0)=\Diag(A\gamma'(0))A(\gamma'(t)-\gamma'(0))+\Diag(A\gamma'(0))^{2}.
\]
and hence
\begin{equation}
\Diag(A\gamma'(t))A\gamma'(0)\leq_{0}\frac{c^{2}}{(1-ct)^{2}}+c^{2}.\label{eq:jacobi_F_est2}
\end{equation}

Combining (\ref{eq:jacobi_F_est1}) and (\ref{eq:jacobi_F_est2}),
we get
\[
\Diag(A\gamma'(t))A(\gamma'(t))\leq_{0}\left(\frac{c}{(1-ct)^{2}}\right)^{2}+\frac{c^{2}}{(1-ct)^{2}}+c^{2}\leq_{0}\frac{3c^{2}}{(1-ct)^{4}}
\]
In the proof of Lemma \ref{lem:orthgonal_F_est}, we showed that $S_{\gamma}\leq_{0}\frac{1}{1-ct}$
and hence
\begin{align*}
s_{\gamma'}^{2} & =S_{\gamma}^{-2}\Diag(A\gamma'(t))A(\gamma'(t)).\\
 & \leq_{0}\left(\frac{1}{2-\frac{1}{1-ct}}\right)^{2}\frac{3c^{2}}{(1-ct)^{4}}\\
 & =\frac{3c^{2}}{(1-ct)^{2}(2(1-ct)-1)^{2}}.
\end{align*}
In the proof of Lemma \ref{lem:orthgonal_F_est}, we showed that $A_{\gamma}\leq_{0}\frac{1-ct}{1-2ct}$
and $(A_{\gamma}^{T}A_{\gamma})^{-1}\leq_{0}\frac{\left(1-2ct\right)^{2}}{2(1-2ct)^{2}-\left(1-ct\right)^{2}}$.
Therefore, we have that
\[
P_{\gamma}=A_{\gamma}(A_{\gamma}^{T}A_{\gamma})^{-1}A_{\gamma}^{T}\leq_{0}\frac{\left(1-ct\right)^{2}}{2(1-2ct)^{2}-\left(1-ct\right)^{2}}.
\]
Hence, we have
\begin{align*}
P_{\gamma}s_{\gamma'}^{2} & \leq_{0}\frac{\left(1-ct\right)^{2}}{2(1-2ct)^{2}-\left(1-ct\right)^{2}}\frac{3c^{2}}{(1-ct)^{2}(1-2ct)^{2}}\\
 & =\frac{3c^{2}}{(1-2ct)^{2}(2(1-2ct)^{2}-\left(1-ct\right)^{2})}.
\end{align*}

Let $Y=(A_{\gamma}^{T}A_{\gamma})^{-1}\left(A_{\gamma}^{T}S_{\gamma'}P_{\gamma}S_{\gamma'}A_{\gamma}-A_{\gamma}^{T}diag(P_{\gamma}s_{\gamma'}^{2})A_{\gamma}\right)$.
By a similar proof, we have that
\begin{align*}
Y_{0}\leq_{0} & \frac{\left(1-2ct\right)^{2}}{2(1-2ct)^{2}-\left(1-ct\right)^{2}}\left(\frac{1-ct}{1-2ct}\right)^{2}\\
 & \left(\left(\frac{c}{(1-2ct)(1-ct)}\right)^{2}\frac{\left(1-ct\right)^{2}}{2(1-2ct)^{2}-\left(1-ct\right)^{2}}+\frac{3c^{2}}{(1-2ct)^{2}(2(1-2ct)^{2}-\left(1-ct\right)^{2})}\right)\\
= & \frac{4c^{2}}{(2(1-2ct)^{2}-\left(1-ct\right)^{2})^{2}}\left(\frac{1-ct}{1-2ct}\right)^{2}
\end{align*}

Next, we let $z(t)=X(t)v$ for some unit vector $v$. Since $X(t)$
is a parallel transport of $X(0)$, $z(t)$ is a parallel transport
of $X(0)v$ and hence Lemma \ref{lem:orthgon_asm} shows that $\norm{z^{(k)}(t)}_{2}\leq k!\left(16c\right)^{k}$
for all $i$. For any $k$, there is unit vector $v_{k}$ such that
$\norm{\frac{d}{dt^{k}}X(0)}_{2}=\norm{\frac{d}{dt^{k}}X(0)v_{k}}_{2}.$
Hence, we have that
\[
\norm{\frac{d}{dt^{k}}X(0)}_{2}\leq k!\left(16c\right)^{k}.
\]
Therefore, we have that $X\leq_{0}\frac{1}{1-16ct}$. Since $X(0)=I$,
we have that $X^{-1}\leq_{0}\frac{1}{2-\frac{1}{1-16ct}}=\frac{1-16ct}{1-32ct}$.
Thus, we have that
\begin{eqnarray*}
F(t) & \leq_{0} & \frac{4c^{2}}{(2(1-2ct)^{2}-\left(1-ct\right)^{2})^{2}}\left(\frac{1-ct}{1-2ct}\right)^{2}\frac{1}{1-32ct}\\
 & \leq_{0} & \frac{12c^{2}}{1-64ct}.
\end{eqnarray*}
\end{proof}
Now, we prove Jacobi field has large radius of convergence. 
\begin{lem}
\label{lem:jacobi_asm}Given a geodesic $\gamma(t)$ and an orthogonal
frame $\{X_{i}\}_{i=1}^{n}$ along $\gamma$. Let $V(t)$ be a Jacobi
field along $\gamma(t)$ with $V(0)=0$ and let $U(t)$ be the Jacobi
field under the $X_{i}$ coordinates, namely, $V(t)=X(t)U(t)$. Assume
that $U(0)=0$. Under the normalization that $A^{T}A=I$, $S_{\gamma(0)}=I$,
we have that 
\[
\norm{U^{(k)}(0)}_{2}\leq4(k!)\left(256c\right)^{k-1}\norm{U'(0)}_{2}
\]
for all $k\geq2$ where $c=512\norm{A\gamma'(0)}_{4}$.
\end{lem}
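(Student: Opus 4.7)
The plan is to convert the Jacobi equation in orthonormal-frame coordinates, namely $U'' + F(t) U = 0$ from Lemma \ref{lem:Jacobi_field_log}, into a power-series recursion and establish the bound by direct induction on the Taylor coefficients of $U$, using the coefficient estimates for $F$ from Lemma \ref{lem:jacobi_F_est}.

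First, since the Jacobi equation is linear in $U$, I will normalize to $\|U'(0)\|_2 = 1$ and recover the general statement by rescaling. Complex analyticity of $U$ (Corollary \ref{cor:geo_jacobi_analytic_log}) lets me write $U(t) = \sum_{k\geq 1} u_k t^k$ (using $U(0) = 0$) and $F(t) = \sum_{j \geq 0} f_j t^j$. The estimate $F \leq_0 \tfrac{12c^2}{1-64ct}$ from Lemma \ref{lem:jacobi_F_est} immediately gives the Taylor coefficient bound $\|f_j\|_2 \leq 12 c^2 (64c)^j$. Comparing coefficients of $t^n$ in $U''(t) = -F(t)U(t)$ and using $u_0 = 0$ produces the recursion
\[
(n+2)(n+1)\, u_{n+2} = -\sum_{j=0}^{n-1} f_j\, u_{n-j}, \qquad n \geq 0.
\]

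The key inductive claim is $\|u_k\|_2 \leq B^{k-1}$ with $B = 128c$ for every $k \geq 1$. The base cases $\|u_1\|_2 = 1$ and $u_2 = 0$ (from $U(0) = 0$, equivalently $U''(0) = -F(0)U(0) = 0$) are immediate. The inductive step is a short geometric-series calculation: assuming the bound for all indices up to $n+1$,
\[
(n+2)(n+1)\, \|u_{n+2}\|_2 \;\leq\; 12 c^2 \sum_{j=0}^{n-1} (64c)^j B^{n-j-1} \;\leq\; \frac{12 c^2 B^n}{B - 64c},
\]
which, with $B = 128c$, rearranges to $\|u_{n+2}\|_2 \leq \frac{12 c^2}{B(B-64c)(n+2)(n+1)} B^{n+1} \leq B^{n+1}$ since $12 c^2 / (B(B-64c)) = 12/(128 \cdot 64)$ is comfortably less than $1$. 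Multiplying through by $k!$ yields $\|U^{(k)}(0)\|_2 = k!\|u_k\|_2 \leq k!(128c)^{k-1}$, which is dominated by $4\, k!\, (256c)^{k-1}$ for every $k \geq 2$ with substantial slack.

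The main obstacle is simply picking $B$ so that the induction closes uniformly in $n$: choosing $B$ too close to $64c$ blows up the geometric series, but any $B$ for which $12c^2/(B(B-64c))$ is controlled by $(n+2)(n+1) \geq 2$ works, and $B = 128c$ is a clean choice giving the stated factor $256c$ with room to spare. Nothing else in the argument is delicate: analyticity supplies the power series, linearity of the Jacobi equation reduces matters to a scalar majorant comparison, and Lemma \ref{lem:jacobi_F_est} provides a pole of $F$ at distance $1/(64c)$ from the origin, which is exactly the geometric-series radius of convergence needed.
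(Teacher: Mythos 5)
Your proof is correct, and it takes a genuinely different route from the paper's. The paper converts the second-order Jacobi equation into a first-order system, applies the general nonlinear scalar-majorant machinery of Lemma \ref{lem:taylor_ODE_2nd} to reduce to the scalar ODE $\psi'(t)=\frac{2}{1-\psi(t)}$, solves it explicitly as $\psi(t)=1-\sqrt{1-4t}$, and finishes with Cauchy's estimate on $\psi$. You instead exploit the fact that the Jacobi equation is \emph{linear} in $U$: this lets you skip the nonlinear majorant ODE entirely and close the argument by a direct Cauchy-product comparison of Taylor coefficients, with a geometric-series bound in the inductive step. Your route is more elementary, and it actually yields the strictly tighter constant $k!(128c)^{k-1}\norm{U'(0)}_2$ (which you correctly observe dominates into the paper's stated $4\,k!(256c)^{k-1}\norm{U'(0)}_2$ with ample slack). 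What the paper's approach buys is uniformity: the same $\leq_x$ formalism and majorant lemma are reused verbatim for the geodesic and parallel-transport derivative bounds (Lemmas \ref{lem:geodesic_asm} and \ref{lem:orthgon_asm}), where the underlying ODEs are nonlinear and a direct coefficient recursion would be messier. One small check worth being explicit about: the recursion $(n+2)(n+1)u_{n+2}=-\sum_{j=0}^{n-1}f_j u_{n-j}$ uses that the convolution starts at $i=n-j\geq 1$ because $u_0=U(0)=0$; and the majorant $F\leq_0 \frac{12c^2}{1-64ct}$ does indeed give $\norm{f_j}_2\leq 12c^2(64c)^j$ in the operator norm, which is exactly the submultiplicative norm the $\leq_x$ formalism requires when you estimate $\norm{f_j u_{n-j}}_2\leq\norm{f_j}_2\norm{u_{n-j}}_2$. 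Both are fine, so the argument goes through.
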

\begin{proof}
Note that $\frac{d^{2}U(t)}{dt^{2}}+F(t)U(t)=0$ where $F(t)$ defined
in (\ref{eq:jacobi_F}). Let $\overline{F}(U,t)=\alpha^{2}F(\alpha t)\left(U+\alpha tU'(0)\right)$
with $\alpha=\frac{1}{64c}$.

Since the differential equation is linear, we can rescale $U$ and
assume that $\norm{U'(0)}_{2}=\frac{1}{\alpha}=64c$. 

Using $\alpha^{2}F(\alpha t)\leq_{0}\frac{12c^{2}\alpha^{2}}{1-64c\alpha t}$
(Lemma \ref{lem:jacobi_F_est}), $U\leq_{U=0}t$ and $\alpha tU'(0)\leq_{t=0}\norm{\alpha U'(0)}_{2}t=t$,
we have that
\begin{eqnarray*}
\overline{F} & \leq_{(0,0)} & \frac{24c^{2}\alpha^{2}t}{1-64c\alpha t}\\
 & \leq_{0} & \frac{t}{1-t}\leq_{0}\frac{2t}{1-t}-1-t.
\end{eqnarray*}
Lemma \ref{lem:taylor_ODE_2nd} shows that that 
\[
\norm{U^{(k)}(0)}_{2}\leq\frac{\psi^{(k)}(0)}{\alpha^{k}}
\]
for all $k\geq2$ where $\psi(t)$ is the solution of 
\[
\psi'(t)=\frac{2}{1-\psi(t)}\text{ with }\psi(0)=0.
\]
Solving it, we get that
\[
\psi(t)=1-\sqrt{1-4t}.
\]
By Theorem \ref{thm:cauchy_estimate}, we have that for any $0\leq t\leq\frac{1}{4}$,
we have that 
\[
\left|\psi^{(k)}(0)\right|\leq k!(4)^{k}.
\]
Hence, we have that
\[
\norm{U^{(k)}(0)}_{2}\leq\frac{k!4^{k}}{\alpha^{k}}\leq k!\left(256c\right)^{k}
\]
for all $k\geq2$. 

Since we have rescaled the equation, we need to rescale it back and
get the result.
\end{proof}

\subsection{Computing Geodesic Equation}

\label{subsec:geodesic_equation_apx}To apply Theorem \ref{thm:Solving_ODE},
we define 
\begin{equation}
F(u,s)=A(A^{T}S^{-2}A)^{-1}A^{T}S^{-3}u^{2}\label{eq:geodesic_F}
\end{equation}
The following lemma bounds the Lipschitz constant of $F$.
\begin{lem}
\label{lem:geo_F_smoothness}Assuming $\frac{1}{2}\leq s_{i}\leq2$
for all $i$. Then, we have
\[
\norm{DF(u,s)[d_{u},d_{s}]}_{2}^{2}\leq10^{4}\left(\norm{d_{s}}_{\infty}^{2}\norm u_{4}^{4}+\norm{d_{u}}_{4}^{2}\norm u_{4}^{2}\right).
\]
\end{lem}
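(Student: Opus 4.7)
The plan is to simplify $F$ using the substitution $\tilde A = S^{-1}A$, compute the two partial derivatives of $F$ separately, and then bound them using operator norm inequalities and the assumption $\tfrac{1}{2}\leq s_i \leq 2$. Specifically, set $Q \defeq \tilde A(\tilde A^T \tilde A)^{-1}\tilde A^T$, the orthogonal projection onto the column range of $\tilde A$, so that $Q^2 = Q$ and $\norm{Q}_2 \leq 1$. Then $A(A^T S^{-2} A)^{-1} A^T = S Q S$, and hence
\[
F(u,s) = SQ S^{-2} u^2.
\]
The bound $\tfrac12 \le s_i \le 2$ gives $\norm{S}_2 \le 2$, $\norm{S^{-1}}_2 \le 2$, $\norm{S^{-2}}_2 \le 4$, which we will use throughout.

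For the $u$-derivative, differentiating $u^2$ componentwise gives
\[
D_u F(u,s)[d_u] = 2\, SQ\, S^{-2}(u\cdot d_u),
\]
which is bounded directly via $\norm{u\cdot d_u}_2 \le \norm{u}_4\norm{d_u}_4$ and the operator norm estimates above, yielding
$\norm{D_u F[d_u]}_2 \le 16\,\norm{u}_4\norm{d_u}_4.$

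For the $s$-derivative, the key step is the projection-derivative identity. Writing $D \defeq S^{-1}\Diag(d_s)$ (a diagonal matrix with $\norm{D}_2 \le 2\norm{d_s}_\infty$), the computation $\frac{d\tilde A}{dt} = -D\tilde A$ propagates (exactly as in Section~\ref{subsec:RG_L}) to
\[
D_s Q[d_s] = -DQ - QD + 2QDQ.
\]
Combined with $D_s S[d_s] = \Diag(d_s) = SD$ and $D_s(S^{-2})[d_s] = -2\,S^{-2}D$, the product rule gives three terms whose sum, after using commutativity of diagonal matrices and a small cancellation between $SDQS^{-2}$ (from $D_s S$) and $-SDQS^{-2}$ (from $D_s Q$), collapses to
\[
D_s F(u,s)[d_s] = SQD(2Q - 3I)\, S^{-2} u^2.
\]

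Bounding each factor using $\norm{Q}_2\le 1$, $\norm{2Q-3I}_2\le 5$, $\norm{D}_2\le 2\norm{d_s}_\infty$, and the estimates on $S,S^{-2}$, one obtains $\norm{D_s F[d_s]}_2 \le 80\,\norm{d_s}_\infty \norm{u}_4^2.$ Combining the two partial-derivative bounds via $(a+b)^2 \le 2a^2 + 2b^2$ then yields the claimed constant (any constant $\le 10^4$ suffices; the loose arithmetic absorbs it). The main obstacle is the bookkeeping in the $s$-derivative: one must recognize the projection-derivative formula $\dot Q = -DQ - QD + 2QDQ$ and carry through the three-term product rule carefully so that the unwanted term $SDQ S^{-2}$ cancels, leaving the clean expression $SQD(2Q-3I)S^{-2}u^2$. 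Without this cancellation the bound on $\norm{D_s F[d_s]}_2$ would be numerically similar but the calculation would be considerably messier.
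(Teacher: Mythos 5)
Your proof takes essentially the same route as the paper: both compute the same derivative of $F$ via the product rule and bound it with operator-norm estimates. Writing $F(u,s) = SQS^{-2}u^2$ with $Q = \tilde A(\tilde A^T\tilde A)^{-1}\tilde A^T$ is just a notational compression of what the paper calls $P = A_s(A_s^T A_s)^{-1}A_s^T$ (indeed $A(A_s^TA_s)^{-1}A_s^T = SQ$), and your collapsed expression
\[
D_sF[d_s] = SQD(2Q - 3I)S^{-2}u^2, \qquad D_uF[d_u] = 2SQS^{-2}(u\cdot d_u)
\]
is term-by-term identical to the paper's three-term formula for $DF[d_u,d_s]$ once you observe that $S_d = D$ and $Ud_u = u\cdot d_u$. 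The only genuine difference is how the final norm estimate is assembled: the paper bounds the entire quantity $\norm{DF}^2$ at once as a quadratic form $(u^2)^T(\cdots)u^2$ and uses $P\preceq I$ to peel off projections, whereas you bound the $u$-part and $s$-part separately and then invoke $(a+b)^2\le 2a^2+2b^2$. This is a cosmetic difference in bookkeeping rather than a different argument.

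One small arithmetic point: you bound $\norm{2Q-3I}_2\le 5$ by triangle inequality, which gives $\norm{D_sF}_2 \le 80\norm{d_s}_\infty\norm{u}_4^2$ and hence a final coefficient $2\cdot 80^2 = 12800 > 10^4$ on the $\norm{d_s}_\infty^2\norm{u}_4^4$ term, so as written the claimed constant is not quite reached. The fix is trivial: since $Q$ is an orthogonal projection, $2Q-3I$ is symmetric with eigenvalues in $\{-3,-1\}$, so $\norm{2Q-3I}_2 = 3$ exactly. This gives $\norm{D_sF}_2\le 48\norm{d_s}_\infty\norm{u}_4^2$ and a final coefficient $2\cdot 48^2 = 4608 < 10^4$, matching the lemma.
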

\begin{proof}
Let $A_{s}=S^{-1}A$ and $S_{d}=S^{-1}\Diag(d_{s})$. Then, we have
$F(u,s)=A(A_{s}^{T}A_{s})^{-1}A_{s}^{T}S^{-2}u^{2}$. Hence, we have
that
\begin{eqnarray*}
DF(u,s)[d_{u},d_{s}] & = & 2A(A_{s}^{T}A_{s})^{-1}A_{s}^{T}S_{d}A_{s}(A_{s}^{T}A_{s})^{-1}A_{s}^{T}S^{-2}u^{2}\\
 &  & -3A(A_{s}^{T}A_{s})^{-1}A_{s}^{T}S_{d}S^{-2}u^{2}\\
 &  & +2A(A_{s}^{T}A_{s})^{-1}A_{s}^{T}S^{-2}Ud_{u}
\end{eqnarray*}
Let $P=A_{s}\left(A_{s}^{T}A_{s}\right)^{-1}A_{s}^{T}$, then, we
have that
\begin{eqnarray*}
\norm{DF(u,s)[d_{u},d_{s}]}_{2}^{2} & \leq & 12\left(u^{2}\right)^{T}S^{-2}PS_{d}PS^{2}PS_{d}PS^{-2}u^{2}\\
 &  & +27\left(u^{2}\right)^{T}S^{-2}S_{d}PS^{2}PS_{d}S^{-2}u^{2}\\
 &  & +12d_{u}^{T}S^{-2}UPS^{2}PUS^{-2}d_{u}.
\end{eqnarray*}
Using that $P\preceq I$, we have that
\begin{eqnarray*}
 &  & \norm{DF(u,s)[d_{u},d_{s}]}_{2}^{2}\\
 & \leq & 12\norm S_{\infty}^{2}\left(u^{2}\right)^{T}S^{-2}PS_{d}^{2}PS^{-2}u^{2}+27\norm S_{\infty}^{2}\left(u^{2}\right)^{T}S^{-2}S_{d}^{2}S^{-2}u^{2}+12\norm S_{\infty}^{2}\norm{S^{-1}}_{\infty}^{4}\sum_{i}(d_{u})_{i}^{2}u_{i}^{2}\\
 & \leq & 39\norm S_{\infty}^{2}\norm{S^{-1}}_{\infty}^{4}\norm{S_{d}}_{\infty}^{2}\norm u_{4}^{4}+12\norm S_{\infty}^{2}\norm{S^{-1}}_{\infty}^{4}\norm u_{4}^{2}\norm{d_{u}}_{4}^{2}\\
 & \leq & 39\norm S_{\infty}^{2}\norm{S^{-1}}_{\infty}^{6}\norm{d_{s}}_{\infty}^{2}\norm u_{4}^{4}+12\norm S_{\infty}^{2}\norm{S^{-1}}_{\infty}^{4}\norm u_{4}^{2}\norm{d_{u}}_{4}^{2}.
\end{eqnarray*}
Now, we use that $\frac{1}{2}\leq s_{i}\leq2$ and get
\[
\norm{DF(u,s)[d_{u},d_{s}]}_{2}^{2}\leq9984\norm{d_{s}}_{\infty}^{2}\norm u_{4}^{4}+768\norm u_{4}^{2}\norm{d_{u}}_{4}^{2}.
\]
\end{proof}
Now, we can apply the collocation method to obtain a good approximation
of geodesics.
\begin{lem}
\label{lem:compute_geodesic}Let $\gamma$ be a random geodesic generated
by the geodesic walk with step size $h\leq\frac{1}{10^{20}\sqrt{n}}$.
With probability at least $1-O(\frac{1}{n})$ among $\gamma$, in
time $O(mn^{\omega-1}\log^{2}(n/\varepsilon))$, we can find $\overline{\gamma}$
such that 
\[
\max_{0\leq t\leq\ell}\norm{\gamma(t)-\overline{\gamma}(t)}_{\infty}\leq\varepsilon\quad\text{and}\quad\max_{0\leq t\leq\ell}\norm{\gamma'(t)-\overline{\gamma}'}_{\infty}\leq\varepsilon.
\]
Furthermore, $\overline{\gamma}$ is a $O(\log(1/\varepsilon))$ degree
polynomial.
\end{lem}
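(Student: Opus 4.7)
The plan is to apply Theorem \ref{thm:Solving_ODE_2nd} directly to the geodesic equation $\gamma''(t) = F(\gamma'(t), \gamma(t))$ in Euclidean coordinates, where by (\ref{eq:geodesic_log}) we have $F(u, x) = (A_x^T A_x)^{-1} A_x^T s_{x,u}^2$. Before invoking the theorem I would first apply an affine transformation so that $A^T A = I$ and $S_{\gamma(0)} = I$; this is legitimate because both the geodesic ODE and the conclusion of the lemma are affine-invariant up to the change of coordinates, and it lets us use the derivative bounds of Section \ref{subsec:Complex_analytic_geodesic_log} off-the-shelf.

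The bulk of the work is verifying the two hypotheses of Theorem \ref{thm:Solving_ODE_2nd}. For the polynomial approximation hypothesis, I would use Lemma \ref{lem:V0_bound}: with probability at least $1 - 3/n$ over the Gaussian choice of $\gamma'(0)$, we have $V(\gamma) \le 24$, whence Lemma \ref{lem:gamma_est} gives $\|A\gamma'(0)\|_4 \le 48 n^{-1/4}$. Plugging this into Lemma \ref{lem:geodesic_asm} yields $\|\gamma^{(k)}(0)\|_2 \le k!\,c^k$ for all $k \ge 2$ with $c = O(n^{-1/4})$. Since $\ell = \sqrt{nh}$ and $h \le 1/(10^{20}\sqrt n)$, we obtain $c\ell \le 10^{-10}$, so the Taylor polynomial $q(t)$ of degree $d = O(\log(1/\varepsilon))$ centred at $0$ satisfies $\|\gamma''(t) - q''(t)\|_\infty \le \varepsilon/\ell^2$ throughout $[0,\ell]$ (the geometric tail of the Taylor series, bounded using $(c\ell)^{d-1}$, crushes to $\varepsilon$). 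For the Lipschitz hypothesis, I note that along any approximate trajectory the displacement satisfies $\|A(\overline\gamma(t) - \overline\gamma(0))\|_\infty \le \ell \cdot O(n^{-1/4}) \ll 1$, so $S_{\overline\gamma(t)}$ stays inside $[1/2, 2]$ and Lemma \ref{lem:geo_F_smoothness} applies; the smallness of $\|u\|_4 \le O(n^{-1/4})$ and $\ell = O(n^{-1/4})$ makes the derivative $\|DF\|$ small enough to meet the $1/\alpha$ and $1/\alpha^2$ bounds of Theorem \ref{thm:Solving_ODE_2nd} with $\alpha = 4000\ell$.

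With both hypotheses verified, Theorem \ref{thm:Solving_ODE_2nd} returns a degree-$d$ polynomial $\overline\gamma$ with $\|\overline\gamma - \gamma\|_\infty \le \varepsilon$ and $\|\overline\gamma' - \gamma'\|_\infty \le \varepsilon/\ell \le \varepsilon$. The running time follows because a single evaluation of $F$ reduces to forming $A_x^T A_x$, solving a linear system, and two matrix-vector products against $A_x$ and $A_x^T$, costing $O(mn^{\omega-1})$ via fast matrix multiplication; combining with the $O(d\log(K/\varepsilon)) = O(\log^2(n/\varepsilon))$ evaluations needed by the collocation scheme, plus its $O(nd\log^2(dK/\varepsilon))$ bookkeeping overhead (lower order), gives the claimed total $O(mn^{\omega-1}\log^2(n/\varepsilon))$. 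The main obstacle I anticipate is not any single step but the careful interplay of norms: the derivative bound of Lemma \ref{lem:geodesic_asm} is in $\|\cdot\|_2$, the Lipschitz bound of Lemma \ref{lem:geo_F_smoothness} mixes $L_2, L_4$ and $L_\infty$, while Theorem \ref{thm:Solving_ODE_2nd} wants everything in a single $L_p$, so one has to pay a small polynomial-in-$n$ factor when converting between norms and verify that $c\ell$ and $\|A\gamma'\|_\infty \cdot \ell$ are each so small (courtesy of $h \le 1/(10^{20}\sqrt n)$) that all these losses are absorbed.
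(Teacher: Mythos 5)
Your plan has the right skeleton (affine normalization, Lemma~\ref{lem:V0_bound} for the $L^4$ control on $A\gamma'(0)$, Lemma~\ref{lem:geodesic_asm} for the Taylor radius, Lemma~\ref{lem:geo_F_smoothness} for the Lipschitz bound, then Theorem~\ref{thm:Solving_ODE_2nd}), but there is a genuine gap: you propose to run the collocation method on $\gamma(t)\in\R^n$ directly, whereas the paper first changes variables to the slack $s(t)=A\gamma(t)-b\in\R^m$ and solves the ODE $s''=F(s',s)$ with the function $F$ of (\ref{eq:geodesic_F}) in the $L^4$ norm on $\R^m$. This change of variables is not cosmetic. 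In Theorem~\ref{thm:Solving_ODE_2nd}, the Lipschitz hypothesis must hold on the entire ball $\{\norm{x-w}_p\le K,\ \norm{\gamma-v}_p\le \alpha K\}$, where $K=\alpha\max_t\norm{F(w,v,t)}_p+\norm{w}_p$. Working with $s$ in $L^4$, the initial velocity is $w=s'(0)=A\gamma'(0)$ and the key bound $\norm{A\gamma'(0)}_4=O(n^{-1/4})$ makes $K=O(n^{-1/4})$ and $\alpha K\le 1/2$; moreover $\norm{s-1}_4\le 1/2$ implies $\norm{s-1}_\infty\le 1/2$, so every slack stays in $[1/2,3/2]$ and Lemma~\ref{lem:geo_F_smoothness} applies on the whole ball. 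If instead you solve for $\gamma$ in $L^2$ on $\R^n$, the term $\norm{w}_2=\norm{\gamma'(0)}_2=\Theta(1)$ dominates $K$, so $\alpha K=\Theta(\ell)=\Theta(n^{1/4})$. The ball $\{\norm{\gamma-\gamma(0)}_2\le \alpha K\}$ then contains points whose slacks are zero or negative, where $F$ is not even defined, let alone Lipschitz. Your remark that ``along any approximate trajectory the displacement satisfies $\norm{A(\overline\gamma(t)-\overline\gamma(0))}_\infty\ll 1$'' is true only for the true trajectory; the collocation iterates are not constrained to lie near it, which is precisely why the theorem requires a ball-wide Lipschitz bound.

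Nor can you rescue this by using $L^4$ on $\gamma\in\R^n$ (which would give $\norm{\gamma'(0)}_4=O(n^{-1/4})$ too): the feasibility condition $s_i\in[1/2,2]$ needs $\norm{A(\gamma-\gamma(0))}_\infty$ small, and under the normalization $A^TA=I$ one only has $\norm{Ad}_\infty\le\norm{d}_2\le n^{1/4}\norm{d}_4$, so the $L^4$ ball in $\gamma$-space still allows the slacks to escape. The slack variable $s$ is special exactly because the norm in which the initial data is small ($L^4$ on $\R^m$) is also the norm that dominates $L^\infty$ and so enforces feasibility. Also worth noting: your closing comment that converting between $L^2$, $L^4$, $L^\infty$ only costs ``a small polynomial-in-$n$ factor'' is optimistic without the $s$-coordinates; in $\gamma$-coordinates these conversions can involve the row norms of $A$, and the relevant bound $\norm{\cdot}_\infty\le\norm{\cdot}_4$ only becomes available after the reparametrization. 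The fix is short: set $s(t)=A\gamma(t)-b$, solve $s''=F(s',s)$ via Theorem~\ref{thm:Solving_ODE_2nd} with $p=4$, and recover $\gamma(t)=A^T(s(t)+b)$ at the end.
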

\begin{proof}
Let $s(t)=A\gamma(t)-b$. By rotating the space and rescaling the
rows of $A$, we assume that $s(0)_{i}=1$ for all $i$ and $A^{T}A=I$.
We define $F$ as (\ref{eq:geodesic_F}). Then, we have that
\begin{eqnarray*}
s''(t) & = & F(s',s),\\
s'(0) & = & A\gamma'(0),\\
s(0) & = & 1.
\end{eqnarray*}
We let $\alpha=4000\ell$ and 
\begin{eqnarray*}
K & \defeq & \alpha\norm{F(A\gamma'(0),1)}_{4}+\norm{A\gamma'(0)}_{4}\\
 & = & \alpha\norm{A(A^{T}A)^{-1}A^{T}\left(A\gamma'(0)\right)^{2}}_{2}+\norm{A\gamma'(0)}_{4}\\
 & \leq & \alpha\norm{A\gamma'(0)}_{4}^{2}+\norm{A\gamma'(0)}_{4}.
\end{eqnarray*}
Using $\norm{A_{\gamma}\gamma'(0)}_{4}\leq48n^{-1/4}$ with probability
$1-\frac{3}{n}$ (Lemma \ref{lem:V0_bound} and \ref{lem:gamma_est}),
we have that
\begin{align*}
K & \leq4000\ell\left(48n^{-1/4}\right)^{2}+48n^{-1/4}\leq100n^{-1/4},\\
\alpha K & \leq4000\ell\cdot100n^{-1/4}\leq\frac{1}{2}.
\end{align*}

For any $\norm{u-s'(0)}_{4}\leq K\leq100n^{-1/4}$, $\norm{s-1}_{4}\leq\alpha K\leq\frac{1}{2}$,
Lemma \ref{lem:geo_F_smoothness} shows that
\begin{eqnarray*}
\norm{DF(u,s)[d_{u},d_{s}]}_{4} & \leq & \norm{DF(u,s)[d_{u},d_{s}]}_{2}\\
 & \leq & 10^{2}\left(\norm{d_{s}}_{\infty}\norm u_{4}^{2}+\norm{d_{u}}_{4}\norm u_{4}\right)\\
 & \leq & 10^{7}\left(n^{-1/2}\norm{d_{s}}_{\infty}+n^{-1/4}\norm{d_{u}}_{4}\right).
\end{eqnarray*}
Therefore, for any $\norm{u_{1}-s'(0)}_{4}\leq K$, $\norm{s_{1}-1}_{4}\leq\alpha K$,
$\norm{u_{2}-s'(0)}_{4}\leq K$, $\norm{s_{2}-1}_{4}\leq\alpha K$,
we have that
\begin{eqnarray*}
\norm{F(u_{1},s_{1})-F(u_{2},s_{2})}_{4} & \leq & 10^{7}n^{-1/4}\norm{u_{1}-u_{2}}_{4}+10^{7}n^{-1/2}\norm{s_{1}-s_{2}}_{4}\\
 & \leq & \frac{1}{\alpha}\norm{u_{1}-u_{2}}_{4}+\frac{1}{\alpha^{2}}\norm{s_{1}-s_{2}}_{4}.
\end{eqnarray*}

Since $\gamma$ is analytic and $\norm{\gamma^{(k)}(0)}_{2}=O(k!n^{-k/4})$
(Lemma \ref{lem:orthgon_asm}), $\gamma(t)$ is $\varepsilon$ close
to the following polynomial 
\[
\sum_{k=0}^{\Theta(\log(1/\varepsilon))}\frac{1}{k!}\gamma^{(k)}(0)t^{k}
\]
for $0\leq t\leq cn^{1/4}$ for some small constant $c$. Hence, we
can apply Theorem \ref{thm:Solving_ODE_2nd} and find $\overline{\gamma}$
such that $\norm{\gamma-\overline{\gamma}}_{4}\leq\varepsilon$ and
$\norm{\gamma'-\overline{\gamma}'}_{4}\leq\varepsilon$ in $O(n\log^{3}(nK/\varepsilon))$
time plus $O(\log^{2}(K/\varepsilon))$ evaluations of $F$. Note
that each evaluation of $F$ involves solving a linear system and
hence it takes $O(mn^{\omega-1})$. Therefore, the total running time
is $O(mn^{\omega-1}\log^{2}(n/\varepsilon))$.
\end{proof}

\subsection{Computing Parallel Transport}

\label{subsec:parallel_transport_apx}
\begin{lem}
\label{lem:compute_parallel_transport}Given $\gamma$ be a random
geodesic generated by the geodesic walk with step size $h\leq\frac{1}{10^{20}\sqrt{n}}$
and an unit vector $v$. Let $v(t)$ be the parallel transport of
a unit vector along $\gamma(t)$. With probability at least $1-O(\frac{1}{n})$
among $\gamma$, in time $O(mn^{\omega-1}\log^{2}(n/\varepsilon))$,
we can find $\overline{v}$ such that $\max_{0\leq t\leq\ell}\norm{v(t)-\overline{v}(t)}_{\infty}\leq\varepsilon$.
Furthermore, $\overline{v}$ is a $O(\log(1/\varepsilon))$ degree
polynomial.

Similarly, given a basis $\{v_{i}\}_{i=1}^{n}$, with probability
at least $1-O(\frac{1}{n})$, in time $O(mn^{\omega-1}\log^{2}(n/\varepsilon))$,
we can find an approximate parallel transport $\overline{v_{i}}(t)$
of $\{v_{i}\}_{i=1}^{n}$ along $\gamma(t)$ such that $\max_{0\leq t\leq\ell}\norm{v_{i}(t)-\overline{v}_{i}(t)}_{\infty}\leq\varepsilon$
for all $i$.
\end{lem}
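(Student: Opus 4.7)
The plan is to apply the collocation method (Theorem~\ref{thm:Solving_ODE}) to the linear first order ODE for parallel transport on $\gamma$, namely
\[
\frac{d}{dt}v(t)=F(v(t),t),\qquad F(v,t)\defeq (A_{\gamma(t)}^{T}A_{\gamma(t)})^{-1}A_{\gamma(t)}^{T}S_{\gamma'(t)}A_{\gamma(t)}v,
\]
from Lemma~\ref{lem:geo_equ_log}. By Lemma~\ref{lem:compute_geodesic}, with probability $1-O(1/n)$ over $\gamma$ we can already produce a degree $O(\log(1/\varepsilon))$ polynomial approximation of both $\gamma(t)$ and $\gamma'(t)$ to accuracy $\varepsilon/\mathrm{poly}(n)$, and in that event we also have $V(\gamma)\leq V_{0}$ and $\|A_{\gamma}\gamma'\|_{4}=O(n^{-1/4})$ throughout the step by Lemma~\ref{lem:V0_bound} and Lemma~\ref{lem:gamma_est}. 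We condition on this event and use the approximate $\gamma,\gamma'$ when evaluating $F$ at the collocation points; the small error incurred can be absorbed into $\varepsilon$ by the stability of $T$ in the proof of Theorem~\ref{thm:Solving_ODE}.

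First I would verify the two hypotheses of Theorem~\ref{thm:Solving_ODE} in the $\ell_{\infty}$ norm (or equivalently $\ell_{4}$; both behave the same up to constants after the normalization $A^{T}A=I$, $S_{\gamma(0)}=I$ used in Lemmas~\ref{lem:orthgonal_F_est} and~\ref{lem:orthgon_asm}). For the polynomial approximability, Lemma~\ref{lem:orthgon_asm} gives $\|v^{(k)}(0)\|_{2}\leq k!(16c)^{k}$ with $c=512\|A\gamma'(0)\|_{4}=O(n^{-1/4})$, so truncating the Taylor series at degree $d=\Theta(\log(1/\varepsilon))$ produces the required polynomial $q$ of degree $d$ with $\|q'-v'\|_{p}\leq\varepsilon/\ell$ on $[0,\ell]$ whenever $16c\ell\leq 1/2$, which holds for our $h\leq 1/(10^{20}\sqrt{n})$. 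For the Lipschitz condition, $F(v,t)$ is linear in $v$ with operator norm $\|(A_{\gamma}^{T}A_{\gamma})^{-1}A_{\gamma}^{T}S_{\gamma'}A_{\gamma}\|\leq\|S_{\gamma'}\|_{\infty}=O(\sqrt{(\log n)/n}+\sqrt h)$ using the $V(\gamma)\leq V_{0}$ bound on $\|s_{\gamma'}\|_{\infty}$ from Lemma~\ref{lem:gamma_est}; multiplied by $\alpha=4000\ell=O(\sqrt{nh})$ this is $O(\sqrt{h\log n}+h\sqrt n)\ll 1/2000$ for our choice of $h$, so the assumption $\|F(x,t)-F(y,t)\|_{p}\leq \|x-y\|_{p}/(2000\ell)$ is satisfied on the relevant ball.

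With both hypotheses met, Theorem~\ref{thm:Solving_ODE} returns a degree $d=O(\log(1/\varepsilon))$ polynomial $\overline v$ with $\max_{t}\|v(t)-\overline v(t)\|_{\infty}\leq\varepsilon$, using $O(d\log(K/\varepsilon))=O(\log^{2}(n/\varepsilon))$ evaluations of $F$ and $O(nd\log^{2}(nK/\varepsilon))=O(n\log^{3}(n/\varepsilon))$ additional arithmetic. The cost of one evaluation of $F$ is dominated by forming $A_{\gamma}^{T}S_{\gamma'}A_{\gamma}v$ and solving one linear system in $A_{\gamma}^{T}A_{\gamma}$, which runs in $O(mn^{\omega-1})$ time (and the $\gamma,\gamma'$ values needed at the collocation nodes are read off the polynomial produced by Lemma~\ref{lem:compute_geodesic}). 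Multiplying gives the claimed $O(mn^{\omega-1}\log^{2}(n/\varepsilon))$ bound.

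For the second statement with a basis $\{v_{i}\}_{i=1}^{n}$, the crucial observation is that the linear operator $M(t)=(A_{\gamma}^{T}A_{\gamma})^{-1}A_{\gamma}^{T}S_{\gamma'}A_{\gamma}$ does not depend on which vector is being transported. Running the collocation fixed-point iteration on the matrix-valued ODE $V'(t)=M(t)V(t)$, $V(0)=[v_{1}\,\cdots\,v_{n}]$, means each evaluation of the right-hand side is a single product $M(c_{j})V$ carried out as $O(1)$ matrix operations of size $O(m)\times O(n)$, costing $O(mn^{\omega-1})$ using fast matrix multiplication (the factorization of $A_{\gamma}^{T}A_{\gamma}$ can be reused across all $n$ columns). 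The Lipschitz and polynomial-approximation hypotheses are inherited column-by-column from the scalar case, so Theorem~\ref{thm:Solving_ODE} again yields the polynomial approximants in $O(\log^{2}(n/\varepsilon))$ iterations, for a total of $O(mn^{\omega-1}\log^{2}(n/\varepsilon))$ time. The main obstacle is purely bookkeeping: making sure that the $\varepsilon/\mathrm{poly}(n)$ errors from the approximate $\gamma,\gamma'$ propagate into only $O(\varepsilon)$ error in $\overline v$, which follows from the same contraction-mapping stability argument used in Lemma~\ref{lem:collocation_is_stable}.
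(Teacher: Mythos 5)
Your proposal is correct and follows essentially the same route as the paper: bound the Lipschitz constant of the parallel-transport right-hand side via $\|S_{\gamma'}\|_\infty$ under the $V(\gamma)\le V_0$ event, invoke Lemma~\ref{lem:orthgon_asm} for polynomial approximability, apply Theorem~\ref{thm:Solving_ODE}, and observe each $F$-evaluation (scalar or full-basis) costs $O(mn^{\omega-1})$. Your extra remark about propagating the $\varepsilon/\mathrm{poly}(n)$ error from the approximate $\gamma,\gamma'$ via the contraction stability of Lemma~\ref{lem:collocation_is_stable} is a detail the paper leaves implicit, but it is correct and harmless.
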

\begin{proof}
Recall that the equation for parallel transport (\ref{eq:parallel_log})
is given by
\[
\frac{d}{dt}v(t)=\left(A_{\gamma(t)}^{T}A_{\gamma(t)}\right)^{-1}A_{\gamma(t)}^{T}S_{\gamma'(t)}A_{\gamma(t)}v.
\]
By rotating the space and rescaling the rows of $A$, we assume that
$s(\gamma(0))_{i}=1$ for all $i$ and $A^{T}A=I$. In the proof of
Lemma \ref{lem:compute_geodesic}, we know that $\frac{1}{2}\leq s(\gamma(t))_{i}\leq2$
for all $0\leq t\leq\ell$. For any unit vector $u$, we have
\begin{align*}
\norm{\left(A_{\gamma}^{T}A_{\gamma}\right)^{-1}A_{\gamma}^{T}S_{\gamma'}A_{\gamma}u}_{2} & \leq2\norm{\left(A_{\gamma}^{T}A_{\gamma}\right)^{-1/2}A_{\gamma}^{T}S_{\gamma'}A_{\gamma}u}_{2}\\
 & \leq2\norm{S_{\gamma'}}_{\infty}\norm{A_{\gamma}u}_{2}\\
 & \leq4\norm{S_{\gamma'}}_{\infty}\leq192\left(\sqrt{\frac{\log n}{n}}+\sqrt{h}\right)
\end{align*}
where we used Lemma \ref{lem:gamma_est} in the last line. Using $h\leq\frac{1}{10^{20}\sqrt{n}}$,
we have that
\[
\norm{\left(A_{\gamma}^{T}A_{\gamma}\right)^{-1}A_{\gamma}^{T}S_{\gamma'}A_{\gamma}}_{2}\leq\frac{1}{2000\ell}.
\]

Since $v$ is analytic and $\norm{v^{(k)}(0)}_{2}=O(k!n^{-k/4})$
(Lemma \ref{lem:orthgon_asm}), $v(t)$ is $\varepsilon$ close to
a polynomial with degree $O(\log(1/\varepsilon))$ for $0\leq t\leq cn^{1/4}$
for some small constant $c$. Hence, we can apply Theorem \ref{thm:Solving_ODE}
and find $\overline{v}$ such that $\norm{v-\overline{v}}_{2}\leq\varepsilon$
in $O(n\log^{3}(n/\varepsilon))$ time plus $O(\log^{2}(1/\varepsilon))$
evaluations of $F$. Note that each evaluation of $F$ involves solving
a linear system and hence it takes $O(mn^{\omega-1})$. Therefore,
the total running time is $O(mn^{\omega-1}\log^{2}(n/\varepsilon))$.

For the last result, we note that each evaluation of $F$ becomes
computing matrix inverse and performing matrix multiplication and
they can be done in again $O(mn^{\omega-1})$ time.
\end{proof}

\subsection{Computing Jacobi field}

\label{subsec:jac_equation_apx}
\begin{lem}
\label{lem:compute_jacobi}Given $\gamma$ be a random geodesic generated
by the geodesic walk with step size $h\leq\frac{1}{10^{20}\sqrt{n}}$.
Let $\{X_{i}\}_{i=1}^{n}$ be an orthogonal frame along $\gamma$.
Let $v(t)$ be a Jacobi field along $\gamma(t)$ with $v(0)=0$ and
$\norm{v'(0)}\leq\cdots$. Let $u(t)$ be the Jacobi field under the
$X_{i}$ coordinates, namely, $v(t)=X(t)u(t)$. With probability at
least $1-O(\frac{1}{n})$ among $\gamma$, In time $O(mn^{\omega-1}\log^{2}(n/\varepsilon))$,
we can find $\overline{u}$ such that $\max_{0\leq t\leq\ell}\norm{u(t)-\overline{u}(t)}_{\infty}\leq\varepsilon$.
Furthermore, $\overline{u}$ is a $O(\log(1/\varepsilon))$ degree
polynomial.
\end{lem}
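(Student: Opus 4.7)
The plan is to mirror the structure of the proofs of Lemma \ref{lem:compute_geodesic} and Lemma \ref{lem:compute_parallel_transport}, applying Theorem \ref{thm:Solving_ODE_2nd} to the second-order ODE satisfied by $u$ in the orthonormal frame. By Lemma \ref{lem:Jacobi_field_log}, that ODE reads
$$\frac{d^{2}u}{dt^{2}} = -M(t)\,u, \qquad u(0)=0,\qquad u'(0)\text{ given},$$
with $M(t) = X(t)^{-1}(A_{\gamma}^{T}A_{\gamma})^{-1}\bigl(A_{\gamma}^{T}S_{\gamma'}P_{\gamma}S_{\gamma'}A_{\gamma}-A_{\gamma}^{T}\Diag(P_{\gamma}s_{\gamma'}^{2})A_{\gamma}\bigr)X(t).$ As in the preceding two lemmas, I first rotate and rescale coordinates so that $A^{T}A=I$ and $s(\gamma(0))_{i}=1$; the geodesic computation in Lemma \ref{lem:compute_geodesic} already established that $\tfrac{1}{2}\le s(\gamma(t))_{i}\le 2$ on $[0,\ell]$ after this normalization, so all estimates from Section \ref{subsec:Complex_analytic_geodesic_log} are available.

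Next, I verify the Lipschitz hypothesis of Theorem \ref{thm:Solving_ODE_2nd} with $p=2$. Since the right-hand side is linear in $u$ and independent of $u'$, the hypothesis reduces to bounding $\|M(t)\|_{2\to 2}$ on $[0,\ell]$. Conditioned on $V(\gamma)\le V_{0}$ (which holds with probability $1-O(1/n)$ by Lemma \ref{lem:V0_bound}), Lemma \ref{lem:total_ricci2} gives $\|R(t)\|_{F} = O(n^{-1/2})$; the operator inside $M(t)$ is exactly the matrix representation of $R(t)$ conjugated by $X$, so $\|M(t)\|_{2}=O(n^{-1/2})$. Because $h\le 10^{-20}/\sqrt{n}$ gives $\ell=\sqrt{nh}\le 10^{-10}n^{1/4}$, we obtain $\|M(t)\|_{2}\ll 1/(4000\ell)^{2}$, which is exactly the Lipschitz bound required.

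Then I apply the analyticity bound: Lemma \ref{lem:jacobi_asm} yields $\|u^{(k)}(0)\|_{2}\le 4\cdot k!\cdot(256c)^{k-1}\|u'(0)\|_{2}$ with $c=512\|A\gamma'(0)\|_{4}=O(n^{-1/4})$ on the event $V(\gamma)\le V_{0}$. Truncating the Taylor series at degree $d=O(\log(1/\varepsilon))$ therefore produces a polynomial $q$ with $q(0)=0$, $q'(0)=u'(0)$, and $\|q''(t)-u''(t)\|_{2}\le \varepsilon/\ell^{2}$ for $0\le t\le\ell$. Both hypotheses of Theorem \ref{thm:Solving_ODE_2nd} are thus met, so it returns a polynomial $\overline u$ of degree $O(\log(1/\varepsilon))$ with $\max_{t\in[0,\ell]}\|u(t)-\overline u(t)\|_{\infty}\le \varepsilon$, using $O(d\log(1/\varepsilon))$ evaluations of $M$ and $O(nd\log^{2})$ arithmetic overhead.

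The only point requiring additional care — and the main obstacle — is implementing one evaluation of $v\mapsto M(t)v$ within the target budget $\widetilde O(mn^{\omega-1})$. For this I would first invoke the second half of Lemma \ref{lem:compute_parallel_transport} once as a preprocessing step to obtain polynomial approximations to the full frame $X(\cdot)$ at a small accuracy (cost $O(mn^{\omega-1}\log^{2}(n/\varepsilon))$). Then at each of the $O(\log(1/\varepsilon))$ collocation points, evaluating $M(t)v$ amounts to: multiplying by $X(t)$, applying one inverse of $A_{\gamma}^{T}A_{\gamma}$ (one linear system solve), a constant number of $A_{\gamma}$ multiplications with the diagonal matrices $S_{\gamma'}$ and $\Diag(P_{\gamma}s_{\gamma'}^{2})$, and finally multiplying by $X(t)^{-1}$. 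Each operation is $O(mn^{\omega-1})$. The error accumulated from using an approximate $X$ rather than the exact frame is controlled by Lemma \ref{lem:collocation_is_stable} with target accuracy reduced by a constant factor, so the final bound matches the claimed $O(mn^{\omega-1}\log^{2}(n/\varepsilon))$.
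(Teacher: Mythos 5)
Your proposal is correct and closely tracks the paper's own argument: normalize so $A^{T}A=I$ and $s(\gamma(0))=1$, bound the operator norm of the ODE coefficient to establish the Lipschitz hypothesis of Theorem \ref{thm:Solving_ODE_2nd}, invoke Lemma \ref{lem:jacobi_asm} for the polynomial-approximability hypothesis, and account for the $O(mn^{\omega-1})$ cost of each $F$-evaluation. There is one place where you take a slightly different and cleaner route: to bound the norm of $M(t)$ you observe that it is, up to an orthogonal change of orthonormal frame, exactly the matrix $R(t)$ from Definition \ref{def:Rt_definition}, and appeal directly to $\norm{R(t)}_{F}=O(n^{-1/2})$ from Lemma \ref{lem:total_ricci2}; the paper instead rederives the bound from scratch via $\norm{S_{\gamma'}}_{2}\le 48(\sqrt{\log n/n}+\sqrt{h})$ and $\norm{\Diag(P_{\gamma}s_{\gamma'}^{2})}_{2}\le\norm{s_{\gamma'}}_{4}^{2}$ (Lemma \ref{lem:gamma_est}). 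Both yield the needed $\norm{M(t)}_{2}\le(4000\ell)^{-2}$; your version avoids redundant computation but requires the observation (worth spelling out) that $M(t)=O^{T}R(t)O$ for an orthogonal $O$, since $X$ and $(A_{\gamma}^{T}A_{\gamma})^{-1/2}$ are both $g$-orthonormal bases. You are also more explicit than the paper about the implementation detail that the frame $X(\cdot)$ is only available approximately (via Lemma \ref{lem:compute_parallel_transport}) and that Lemma \ref{lem:collocation_is_stable} controls the resulting error propagation; the paper compresses this into the remark that each $F$-evaluation is a constant number of matrix products and inversions. No gaps.
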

\begin{proof}
Recall that the equation for Jacobi field \ref{lem:Jacobi_field_log}
is given by
\[
\frac{d^{2}u}{dt^{2}}+X^{-1}(A_{\gamma}^{T}A_{\gamma})^{-1}\left(A_{\gamma}^{T}S_{\gamma'}P_{\gamma}S_{\gamma'}A_{\gamma}-A_{\gamma}^{T}\Diag(P_{\gamma}s_{\gamma'}^{2})A_{\gamma}\right)Xu=0.
\]
By rotating the space and rescaling the rows of $A$, we assume that
$s(\gamma(0))_{i}=1$ for all $i$ and $A^{T}A=I$. In the proof of
Lemma \ref{lem:compute_geodesic}, we know that $\frac{1}{2}\leq s(\gamma(t))_{i}\leq2$
for all $0\leq t\leq\ell$. Since $X$ is an orthogonal frame, we
have $XX^{T}=XX^{T}=I$. Hence, for any unit vector $v$, we have
\begin{align*}
 & \norm{X^{-1}(A_{\gamma}^{T}A_{\gamma})^{-1}\left(A_{\gamma}^{T}S_{\gamma'}P_{\gamma}S_{\gamma'}A_{\gamma}-A_{\gamma}^{T}\Diag(P_{\gamma}s_{\gamma'}^{2})A_{\gamma}\right)Xv}_{2}\\
\leq & 2\norm{(A_{\gamma}^{T}A_{\gamma})^{-1/2}\left(A_{\gamma}^{T}S_{\gamma'}P_{\gamma}S_{\gamma'}A_{\gamma}-A_{\gamma}^{T}\Diag(P_{\gamma}s_{\gamma'}^{2})A_{\gamma}\right)Xv}_{2}\\
\leq & 2\norm{S_{\gamma'}P_{\gamma}S_{\gamma'}A_{\gamma}Xv}_{2}+2\norm{\Diag(P_{\gamma}s_{\gamma'}^{2})A_{\gamma}Xv}_{2}.
\end{align*}
Using $\norm{S_{\gamma'}}_{2}\leq48\left(\sqrt{\frac{\log n}{n}}+\sqrt{h}\right)$
and $\norm{\Diag(P_{\gamma}s_{\gamma'}^{2})}_{2}\leq\norm{P_{\gamma}s_{\gamma'}^{2}}_{2}\leq\norm{s_{\gamma'}}_{4}^{2}\leq10^{4}n^{-1/2}$
(Lemma \ref{lem:gamma_est}), we have that
\begin{align*}
 & \norm{X^{-1}(A_{\gamma}^{T}A_{\gamma})^{-1}\left(A_{\gamma}^{T}S_{\gamma'}P_{\gamma}S_{\gamma'}A_{\gamma}-A_{\gamma}^{T}\Diag(P_{\gamma}s_{\gamma'}^{2})A_{\gamma}\right)Xv}_{2}\\
\leq & 2\left(48\left(\sqrt{\frac{\log n}{n}}+\sqrt{h}\right)\right)^{2}\norm{A_{\gamma}Xv}_{2}+4\cdot10^{4}n^{-1/2}\norm{A_{\gamma}Xv}_{2}\\
\leq & \frac{1}{(4000\ell)^{2}}
\end{align*}
where we used $h\leq\frac{1}{10^{20}\sqrt{n}}$ in the last line.

Since $u$ is analytic and $\norm{u^{(k)}(0)}_{2}=O(k!n^{-k/4})$
(Lemma \ref{lem:jacobi_asm}), $u(t)$ is $\varepsilon$ close to
a polynomial with degree $O(\log(1/\varepsilon))$ for $0\leq t\leq cn^{1/4}$
for some small constant $c$. Hence, we can apply Theorem \ref{thm:Solving_ODE}
and find $\overline{u}$ such that $\norm{u-\overline{u}}_{2}\leq\varepsilon$
in $O(n\log^{3}(n/\varepsilon))$ time plus $O(\log^{2}(1/\varepsilon))$
evaluations of $F$. Note that each evaluation of $F$ involves computing
matrix inversions and matrix multiplications and hence it takes $O(mn^{\omega-1})$.
Therefore, the total running time is $O(mn^{\omega-1}\log^{2}(n/\varepsilon))$.
\end{proof}

\subsection{Computing Geodesic Walk}
\begin{proof}[Proof of Theorem \ref{thm:implementation_log}]
To implement the geodesic walk, we use Lemma \ref{lem:compute_geodesic}
to compute the geodesic, Lemma \ref{lem:compute_parallel_transport}
to compute an orthogonal frame along the geodesic and Lemma \ref{lem:compute_jacobi}
to compute the Jacobi field along. Using the Jacobi field, we can
use (\ref{eq:1_step_prof}) and Lemma \ref{lem:formula_dexp} to compute
the probability from $x$ to $y$ and the probability from $y$ to
$x$. Using these probabilities, we can implement the rejection sampling.
It suffices to compute the geodesic and the probability up to $1/n^{O(1)}$
accuracy and hence these operations can be done in time $O(mn^{\omega-1}\log^{2}(n))$.

Note that we only use randomness to prove that $V(\gamma)$ is small
(Lemma \ref{lem:V0_bound}) and they can be checked. When we condition
our walk to that, we only change the distribution by very small amount.
Hence, this result is stated without mentioning the success probability. 
\end{proof}

\bibliographystyle{plain}
\bibliography{acg}

\pagebreak{}

\appendix
\listoftodos[Notes]

\section{Additional proofs}
\begin{proof}[Proof of Lemma \ref{lem:Hessian_normal_map}]
Let $T(y)=\exp_{x}(y)$. Then, we have
\[
F(T(y))=y.
\]
Therefore, 
\[
DF(T(y))[DT(y)[h]]=h
\]
and
\[
D^{2}F(T(y))[DT(y)[h],DT(y)[h]]+DF(T(y))[D^{2}T(y)[h,h]]=0.
\]

By the geodesic equation, we have
\[
\frac{d^{2}x_{k}}{dt^{2}}+\sum_{i,j}\Gamma_{ij}^{k}\frac{dx_{i}}{dt}\frac{dx_{j}}{dt}=0.
\]
Therefore, 
\begin{equation}
T(y)=x+y-\frac{1}{2}\sum_{i,j}\Gamma_{ij}^{k}y_{i}y_{j}+O(\norm y^{3}).\label{eq:T_expansion}
\end{equation}

Putting $y=0$, we have $DT(0)[h]=h$ and hence $DF(x)[h]=h.$ Now,
we note that 
\[
D^{2}F(x)[DT(0)[h],DT(0)[h]]+DF(x)[D^{2}T(0)[h,h]]=0.
\]
Hence,
\[
D^{2}F(x)[h,h]=-D^{2}T(0)[h,h].
\]
Using (\ref{eq:T_expansion}), we have $D^{2}T_{k}(0)[h,h]=-h^{T}\Gamma^{k}h$
and hence $D^{2}F_{k}(x)[h,h]=h^{T}\Gamma^{k}h.$
\end{proof}
\begin{lem}
\label{lem:norm_random_Ax}For $p\geq1$, we have 
\[
P_{x\sim N(0,I)}\left(\norm{Ax}_{p}^{p}\leq\left(\left(\frac{2^{p/2}\Gamma(\frac{p+1}{2})}{\sqrt{\pi}}\sum_{i}\norm{a_{i}}_{2}^{p}\right)^{1/p}+\norm A_{2\rightarrow p}t\right)^{p}\right)\leq1-\exp\left(-\frac{t^{2}}{2}\right).
\]
In particular, we have
\[
P_{x\sim N(0,I)}\left(\norm{Ax}_{4}^{4}\leq\left(\left(3\sum_{i}\norm{a_{i}}_{2}^{4}\right)^{1/4}+\norm A_{2\rightarrow4}t\right)^{4}\right)\leq1-\exp\left(-\frac{t^{2}}{2}\right).
\]
\end{lem}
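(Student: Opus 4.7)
The plan is to combine an exact first-moment computation with Gaussian Lipschitz concentration. First I would compute the $p$-th moment exactly: since the rows $a_i^T$ of $A$ act on $x\sim N(0,I)$ to produce $a_i^T x\sim N(0,\|a_i\|_2^2)$, I get
\[
\mathbb{E}\|Ax\|_p^p \;=\; \sum_i \mathbb{E}|a_i^T x|^p \;=\; \Bigl(\sum_i \|a_i\|_2^p\Bigr)\cdot \mathbb{E}|g|^p,
\]
where $g\sim N(0,1)$. The standard Gaussian absolute moment is $\mathbb{E}|g|^p = 2^{p/2}\Gamma(\tfrac{p+1}{2})/\sqrt{\pi}$, which gives exactly the constant appearing in the statement. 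Jensen's inequality then yields $\mathbb{E}\|Ax\|_p \le (\mathbb{E}\|Ax\|_p^p)^{1/p}$, which is the first term inside the parentheses of the claimed bound.

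Next I would observe that the map $x\mapsto \|Ax\|_p$ is Lipschitz with constant exactly $\|A\|_{2\to p}$, since by the triangle inequality for $\|\cdot\|_p$,
\[
\bigl|\|Ax\|_p - \|Ay\|_p\bigr| \;\le\; \|A(x-y)\|_p \;\le\; \|A\|_{2\to p}\,\|x-y\|_2.
\]
The Borell--Tsirelson--Ibragimov--Sudakov (Gaussian concentration) inequality for Lipschitz functions of a standard Gaussian then gives, for any $t\ge 0$,
\[
\mathbb{P}\bigl(\|Ax\|_p \ge \mathbb{E}\|Ax\|_p + \|A\|_{2\to p}\, t\bigr) \;\le\; \exp(-t^2/2).
\]
Combining this with the Jensen bound on $\mathbb{E}\|Ax\|_p$ and raising both sides to the $p$-th power yields the stated inequality (up to the direction of the inequality in the statement, which appears to be a typographical inversion: the natural reading is that the probability on the left is at least $1-\exp(-t^2/2)$, equivalently the tail probability is at most $\exp(-t^2/2)$).

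The special case $p=4$ is then immediate: $\mathbb{E}|g|^4 = 3$, so the coefficient becomes $3\sum_i \|a_i\|_2^4$, matching the second display.

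There is no real obstacle here; the only points that require a tiny bit of care are checking the exact constant in the Gaussian moment formula and invoking Gaussian concentration in the ``mean'' form rather than the median form (both forms are standard). Everything else is bookkeeping.
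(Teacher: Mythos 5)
Your proposal is correct and follows essentially the same route as the paper's proof: compute $\mathbb{E}\|Ax\|_p$ via Jensen and the exact Gaussian absolute moment, note that $x\mapsto\|Ax\|_p$ is $\|A\|_{2\to p}$-Lipschitz, and apply Gaussian concentration. You are also right that the inequality sign in the stated probability bound is reversed (it should read $\ge 1-\exp(-t^2/2)$, i.e.\ the upper tail has probability at most $\exp(-t^2/2)$); the paper's own proof contains the same typographical inversion, so this is a flaw in the stated lemma rather than a gap in your argument.
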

\begin{proof}
Let $F(x)=\norm{Ax}_{p}$. Since $\left|F(x)-F(y)\right|\leq\norm A_{2\rightarrow p}\norm{x-y}_{2}$,
Gaussian concentration shows that 
\[
P_{x\sim N(0,I)}\left(F(x)\leq\E F(x)+\norm A_{2\rightarrow p}t\right)\leq1-\exp\left(-\frac{t^{2}}{2}\right).
\]
Since $x^{p}$ is convex, we have that 
\begin{eqnarray*}
\E\norm{Ax}_{p} & \leq & \left(\E\norm{Ax}_{p}^{p}\right)^{1/p}=\left(\sum_{i}\E\left|a_{i}^{T}x\right|^{p}\right)^{1/p}\\
 & = & \left(\E_{t\sim N(0,1)}|t|^{p}\sum_{i}\norm{a_{i}}_{2}^{p}\right)^{1/p}\\
 & = & \left(\frac{2^{p/2}\Gamma(\frac{p+1}{2})}{\sqrt{\pi}}\sum_{i}\norm{a_{i}}_{2}^{p}\right)^{1/p}.
\end{eqnarray*}
\end{proof}
\begin{lem}
\label{lem:integrate_interpolation}Let $\psi(x)=\frac{\sqrt{1-y^{2}}\cos\left(d\cos^{-1}x\right)}{d(x-y)}$
where $y=\cos(\frac{2k-1}{2d}\pi)$ for some integer $k\in[d]$. Then,
we have that
\[
\left|\int_{-1}^{t}\psi(x)dx\right|\leq\frac{2000}{d}.
\]
for any $-1\leq t\leq1$.
\end{lem}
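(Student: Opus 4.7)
The plan is to pass to the angular variable $x=\cos\theta$, split the integrand into a ``sine-integral'' piece and a smooth remainder, and exploit the oscillation of $\cos(d\theta)$ to produce a factor $1/d$ on the smooth part.

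Writing $t=\cos\phi$ for some $\phi\in[0,\pi]$, the substitution $x=\cos\theta$ turns the target integral into
\[
\int_{-1}^{t}\psi(x)\,dx=\int_{\phi}^{\pi}\frac{\sin\theta_k\sin\theta\cos(d\theta)}{d(\cos\theta-\cos\theta_k)}\,d\theta,
\]
where $y=\cos\theta_k$ and $\theta_k=(2k-1)\pi/(2d)$. The crucial observation is that $d\theta_k=(2k-1)\pi/2$, so $\cos(d\theta_k)=0$ and $\sin(d\theta_k)=(-1)^{k-1}$; the addition formula then gives the identity $\cos(d\theta)=(-1)^k\sin\bigl(d(\theta-\theta_k)\bigr)$, which cancels the apparent pole of the prefactor at $\theta=\theta_k$.

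Next I would decompose the singular prefactor as
\[
\frac{\sin\theta}{\cos\theta-\cos\theta_k}=-\frac{1}{\theta-\theta_k}+R(\theta),
\]
with $R$ a smooth remainder on $[0,\pi]$. Using the identity $\frac{\sin\theta}{\cos\theta-\cos\theta_k}=-\frac{1}{2}\bigl[\cot\frac{\theta-\theta_k}{2}+\cot\frac{\theta+\theta_k}{2}\bigr]$ and the Laurent expansion $\frac{1}{2}\cot(u/2)=\frac{1}{u}-\frac{u}{12}+O(u^{3})$, one verifies that $R$ is $C^{1}$ on $[0,\pi]$, and that multiplication by $\sin\theta_k$ tames the potential blow-ups at $\theta_k\to0,\pi$, giving uniform bounds $|\sin\theta_k\cdot R(\theta)|,|\sin\theta_k\cdot R'(\theta)|\le C$ with $C$ an absolute constant.

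Finally I would handle the two resulting pieces separately. For the singular piece, the substitution $u=d(\theta-\theta_k)$ combined with $\cos(d\theta)=(-1)^k\sin u$ converts it to a sine-integral,
\[
-\frac{\sin\theta_k}{d}\int_{\phi}^{\pi}\frac{\cos(d\theta)}{\theta-\theta_k}\,d\theta=\frac{(-1)^{k+1}\sin\theta_k}{d}\int_{d(\phi-\theta_k)}^{d(\pi-\theta_k)}\frac{\sin u}{u}\,du,
\]
whose absolute value is at most $\frac{2\sin\theta_k\cdot\sup|\mathrm{Si}|}{d}\le\frac{4}{d}$. For the smooth piece $\frac{\sin\theta_k}{d}\int_{\phi}^{\pi}\cos(d\theta)R(\theta)\,d\theta$, integration by parts with $dv=\cos(d\theta)\,d\theta$, $v=\sin(d\theta)/d$ produces an $O(1/d^{2})$ bound from the uniform estimates on $R$ and $R'$. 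Summing, $\bigl|\int_{-1}^{t}\psi(x)\,dx\bigr|\le\frac{4}{d}+\frac{C'}{d^{2}}\le\frac{2000}{d}$ with enormous room to spare.

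The main technical obstacle is establishing the uniform bounds on $\sin\theta_k\cdot R$ and $\sin\theta_k\cdot R'$. Both $R$ and $R'$ individually can grow like $\cot\theta_k$ near $\theta_k\to0,\pi$, and the cancellation that tames them comes from combining the cotangent identity above with the mean-value representations $\cos\theta-\cos\theta_k=-\int_{\theta_k}^{\theta}\sin s\,ds$ and $\sin\theta(\theta-\theta_k)+\cos\theta-\cos\theta_k=\int_{\theta_k}^{\theta}\cos s\,(s-\theta_k)\,ds$. Once these bounds are in hand the remaining steps are routine, and the constant $2000$ in the statement is vastly larger than what the argument actually produces.
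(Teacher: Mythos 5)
Your approach is genuinely different from the paper's and, once one small claim is corrected, it works. The paper partitions $[-1,1]$ at the Chebyshev extrema $z_j=\cos(j\pi/d)$, bounds the contribution of each subinterval separately (using the exact value of $\int T_d$ over each piece and the distance from $z_j$ to the node $y$), and controls the total by the convergent series $\sum_j |j-k+\tfrac12|^{-2}$; the interval containing $y$ is handled by a direct pointwise bound $|\psi|\le2$. You instead pass to the angular variable once and for all, split $\sin\theta/(\cos\theta-\cos\theta_k)=-1/(\theta-\theta_k)+R(\theta)$, recognize the singular piece as a sine integral via the exact identity $\cos(d\theta)=(-1)^k\sin\bigl(d(\theta-\theta_k)\bigr)$, and integrate the smooth piece by parts to gain a second factor of $1/d$. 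Your route is cleaner and gives a bound of order $10/d$ rather than $2000/d$; the paper's is more elementary in that it avoids special functions and works interval-by-interval.

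There is, however, a gap in the step you flag as the main obstacle: the claimed pointwise bound $|\sin\theta_k\cdot R'(\theta)|\le C$ uniformly in $\theta,k,d$ is \emph{false}. Writing $R(\theta)=h(\theta-\theta_k)-\tfrac12\cot\tfrac{\theta+\theta_k}{2}$ with $h(u)=\tfrac1u-\tfrac12\cot\tfrac u2$ smooth on $[-\pi,\pi]$, one has $R'(\theta)=h'(\theta-\theta_k)+\tfrac14\csc^2\tfrac{\theta+\theta_k}{2}$. Taking $k=1$, $\theta_k=\pi/(2d)$, and $\theta=0$ gives $R'(0)\sim 1/\theta_k^2$ while $\sin\theta_k\sim\theta_k$, so $\sin\theta_k\,R'(0)\sim1/\theta_k\sim d$, which is unbounded. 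The good news is that the integration by parts never needs a sup bound on $R'$, only an $L^1$ bound, and that holds with a small constant: since $\int_0^{\pi}\tfrac14\csc^2\tfrac{\theta+\theta_k}{2}\,d\theta=\tfrac12\bigl(\cot\tfrac{\theta_k}{2}+\tan\tfrac{\theta_k}{2}\bigr)=1/\sin\theta_k$, one gets $\sin\theta_k\int_0^\pi|R'|\,d\theta\le 1+\pi\sin\theta_k\sup|h'|<2$. Replacing the false sup bound on $R'$ by this $L^1$ bound, the smooth piece is $O(1/d^2)$ and your argument goes through (the sup bound on $R$ itself is fine and is used only for the boundary terms).
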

\begin{proof}
For $d=1$, we have that $y=0$ and $\psi(x)=1$. Hence, we have the
result. 

From now on, we assume $d\geq2$. Let $z_{j}=\cos\left(j\pi/d\right)$.
Given $j\in[d]$ with $j\neq k$. Let 
\[
\overline{j}=\begin{cases}
j & \text{if }\left|z_{j}-y\right|\leq\left|z_{j-1}-y\right|\\
j-1 & \text{otherwises}
\end{cases}.
\]
Then, we have that
\begin{eqnarray}
\left|\int_{z_{j-1}}^{z_{j}}\psi(x)dx\right| & \leq & \frac{\sqrt{1-y^{2}}}{d}\left|\int_{z_{j-1}}^{z_{j}}\frac{\cos\left(d\cos^{-1}x\right)}{z_{\overline{j}}-y}dx\right|+\frac{\sqrt{1-y^{2}}}{d}\int_{z_{j-1}}^{z_{j}}\left|\frac{\cos\left(d\cos^{-1}x\right)(z_{\overline{j}}-x)}{(z_{\overline{j}}-y)(x-y)}\right|dx\nonumber \\
 & \leq & \frac{\sqrt{1-y^{2}}}{d}\left|\int_{z_{j-1}}^{z_{j}}\frac{\cos\left(d\cos^{-1}x\right)}{z_{\overline{j}}-y}dx\right|+\frac{\sqrt{1-y^{2}}}{d}\frac{(z_{j}-z_{j-1})^{2}}{(z_{\overline{j}}-y)^{2}}\nonumber \\
 & = & \frac{\sqrt{1-y^{2}}}{d(d^{2}-1)}\left|\frac{z_{j}-z_{j-1}}{z_{\overline{j}}-z_{k-\frac{1}{2}}}\right|+\frac{\sqrt{1-y^{2}}}{d}\frac{(z_{j}-z_{j-1})^{2}}{(z_{\overline{j}}-z_{k-\frac{1}{2}})^{2}}.\label{eq:integrate_psi_interpolaton}
\end{eqnarray}

Now, we upper bound the term $(z_{j}-z_{j-1})/(z_{\overline{j}}-z_{k-\frac{1}{2}})$.
By trigonometric formulas, we have 
\[
\left|\frac{z_{j}-z_{j-1}}{z_{\overline{j}}-z_{k-\frac{1}{2}}}\right|=\left|\frac{\cos\left(j\pi/d\right)-\cos\left((j-1)\pi/d\right)}{\cos\left(\overline{j}\pi/d\right)-\cos\left((k-\frac{1}{2})\pi/d\right)}\right|=\left|\frac{\sin\left(\frac{\pi}{2d}\right)\sin\left((j-\frac{1}{2})\frac{\pi}{d}\right)}{\sin\left((\frac{\overline{j}-k+\frac{1}{2}}{2})\frac{\pi}{d}\right)\sin\left((\frac{\overline{j}+k-\frac{1}{2}}{2})\frac{\pi}{d}\right)}\right|.
\]
Note that $\left|(\overline{j}-k+\frac{1}{2})\pi/(2d)\right|\leq\frac{\pi}{2}$
and hence 
\[
\left|\sin\left(\frac{\overline{j}-k+\frac{1}{2}}{2}\frac{\pi}{d}\right)\right|\geq\frac{2}{\pi}\left|\frac{\overline{j}-k+\frac{1}{2}}{2}\frac{\pi}{d}\right|=\left|\frac{\overline{j}-k+\frac{1}{2}}{d}\right|.
\]
By symmetric, we can assume $k-\frac{1}{2}\leq\frac{d}{2}$ and hence
$\left|(\overline{j}-k+\frac{1}{2})\pi/(2d)\right|\leq\frac{3\pi}{4}$
and
\[
\left|\sin\left(\frac{\overline{j}+k-\frac{1}{2}}{2}\frac{\pi}{d}\right)\right|\geq\frac{4}{3\sqrt{2}\pi}\left|\frac{\overline{j}+k-\frac{1}{2}}{2}\frac{\pi}{d}\right|=\frac{\sqrt{2}}{3}\left|\frac{\overline{j}+k-\frac{1}{2}}{d}\right|.
\]
Therefore, we have that
\[
\left|\frac{z_{j}-z_{j-1}}{z_{\overline{j}}-z_{k-\frac{1}{2}}}\right|\leq\frac{\frac{\pi}{2d}\left|(j-\frac{1}{2})\frac{\pi}{d}\right|}{\left|\frac{\overline{j}-k+\frac{1}{2}}{d}\right|\frac{\sqrt{2}}{3}\left|\frac{\overline{j}+k-\frac{1}{2}}{d}\right|}=\frac{3\pi^{2}}{2\sqrt{2}}\frac{\left|j-\frac{1}{2}\right|}{\left|\overline{j}-k+\frac{1}{2}\right|\left|\overline{j}+k-\frac{1}{2}\right|}.
\]
Note that $\left|\overline{j}+k-\frac{1}{2}\right|\geq j-\frac{1}{2}\geq0$
and hence
\[
\left|\frac{z_{j}-z_{j-1}}{z_{\overline{j}}-z_{k-\frac{1}{2}}}\right|\leq\frac{3\pi^{2}}{2\sqrt{2}}\frac{1}{\left|\overline{j}-k+\frac{1}{2}\right|}.
\]

Putting it into (\ref{eq:integrate_psi_interpolaton}), we get
\begin{align*}
\left|\int_{z_{j-1}}^{z_{j}}\psi(x)dx\right| & \leq\frac{1}{d(d^{2}-1)}\frac{3\pi^{2}}{2\sqrt{2}}\frac{1}{\left|\overline{j}-k+\frac{1}{2}\right|}+\frac{1}{d}\left(\frac{3\pi^{2}}{2\sqrt{2}}\frac{1}{\left|\overline{j}-k+\frac{1}{2}\right|}\right)^{2}\\
 & \leq\frac{21}{d(d^{2}-1)}+\frac{110}{d}\frac{1}{\left|\overline{j}-k+\frac{1}{2}\right|^{2}}.
\end{align*}
Using $d\geq2$, we have

\begin{eqnarray}
\left|\int_{-1}^{t}\psi(x)dx\right| & \leq & \sum_{k\neq j}\left|\int_{z_{j-1}}^{z_{j}}\psi(x)dx\right|+\int_{z_{k-1}}^{z_{k}}\left|\psi(x)\right|dx\nonumber \\
 & \leq & \frac{21}{d^{2}-1}+\frac{220}{d}\left(\frac{1}{0.5^{2}}+\frac{1}{1.5^{2}}+\frac{1}{2.5^{2}}+\cdots\right)+\int_{z_{k-1}}^{z_{k}}\left|\psi(x)\right|dx\nonumber \\
 & \leq & \frac{1200}{d}+\int_{z_{k-1}}^{z_{k}}\left|\psi(x)\right|dx.\label{eq:integrate_psi_interpolaton2}
\end{eqnarray}

To bound $\psi$ over $[z_{k},z_{k-1}]$, we write $x=\cos(\frac{2k-1-\theta}{2d}\pi)$
with $-1\leq\theta\leq1$. We have that
\begin{align*}
\psi(x) & =\frac{\sqrt{1-\cos(\frac{2k-1}{2d}\pi)^{2}}\cos\left(d\cos^{-1}\cos(\frac{2k-1-\theta}{2d}\pi)\right)}{d(\cos(\frac{2k-1-\theta}{2d}\pi)-\cos(\frac{2k-1}{2d}\pi))}\\
 & =\frac{-\sin(\frac{2k-1}{2d}\pi)\sin\left(\frac{\theta}{2}\pi\right)}{2d(\sin(\frac{\theta}{2d}\pi)\sin(\frac{2k-1-\theta/2}{2d}\pi))}.
\end{align*}
Since $-1\leq\theta\leq1$ and $d\geq2$, we have
\begin{align*}
\left|\psi(x)\right| & \leq\frac{\sin(\frac{2k-1}{2d}\pi)\left|\frac{\theta}{2}\pi\right|}{2d\frac{\sqrt{2}\left|\theta\right|}{d}\sin(\frac{2k-1-\theta/2}{2d}\pi)}=\frac{\pi}{4\sqrt{2}}\frac{\sin(\frac{2k-1}{2d}\pi)}{\sin(\frac{2k-1-\theta/2}{2d}\pi)}\\
 & \leq\frac{\pi}{4\sqrt{2}}\frac{\sin(\frac{1}{2d}\pi)}{\sin(\frac{1}{4d}\pi)}\leq2.
\end{align*}
Putting it into (\ref{eq:integrate_psi_interpolaton2}), we get
\[
\left|\int_{-1}^{t}\psi(x)dx\right|\leq\frac{1200}{d}+2\left|z_{k}-z_{k-1}\right|\leq\frac{2000}{d}.
\]
\end{proof}
\begin{xca}
Prove Fact \ref{fact:basic_RG} for Hessian manifolds using Lemma
\ref{lem:Hessian_formula} as a definition.
\end{xca}
\begin{proof}
We ignore the proof for (1) and (7) since we use Lemma \ref{lem:Hessian_formula}
as the definition here. (3) and (4) are immediate from the definition.

(5) By the definition of $D_{t}$, we have that $D_{t}=\nabla_{c'}$.
Using the definition of $\nabla_{v}u$, we have
\begin{align*}
 & \left\langle D_{t}u,w\right\rangle _{c(t)}+\left\langle u,D_{t}w\right\rangle _{c(t)}\\
= & \left\langle \sum_{ik}c'_{i}\frac{\partial u_{k}}{\partial x_{i}}e_{k}+\sum_{ijk}c'_{i}u_{j}\Gamma_{ij}^{k}e_{k},w\right\rangle _{c(t)}+\left\langle u,\sum_{ik}c'_{i}\frac{\partial w_{k}}{\partial x_{i}}e_{k}+\sum_{ijk}c'_{i}w_{j}\Gamma_{ij}^{k}e_{k}\right\rangle _{c(t)}\\
= & \left\langle \frac{du(c(t))}{dt},w(c(t))\right\rangle _{c(t)}+\left\langle u(c(t)),\frac{dw(c(t))}{dt}\right\rangle _{c(t)}\\
 & +\frac{1}{2}\left\langle \sum_{ijkl}\gamma'_{i}u_{j}g^{kl}\phi_{ijl}e_{k},w\right\rangle _{c(t)}+\frac{1}{2}\left\langle u,\sum_{ijkl}\gamma'_{i}w_{j}g^{kl}\phi_{ijl}e_{k}\right\rangle _{c(t)}\\
\overset{\diamondsuit}{=} & \left\langle \frac{du(c(t))}{dt},w(c(t))\right\rangle _{c(t)}+\left\langle u(c(t)),\frac{dw(c(t))}{dt}\right\rangle _{c(t)}\\
 & +\frac{1}{2}\sum_{ijklp}\gamma'_{i}u_{j}g^{kl}\phi_{ijl}g_{kp}w_{p}+\frac{1}{2}\sum_{ijklp}\gamma'_{i}w_{j}g^{kl}\phi_{ijl}g_{kp}u_{p}\\
\overset{\clubsuit}{=} & \left\langle \frac{du(c(t))}{dt},w(c(t))\right\rangle _{c(t)}+\left\langle u(c(t)),\frac{dw(c(t))}{dt}\right\rangle _{c(t)}\\
 & +\frac{1}{2}\sum_{ijklp}\gamma'_{i}u_{j}\phi_{ijp}w_{p}+\frac{1}{2}\sum_{ijklp}\gamma'_{i}w_{j}\phi_{ijp}u_{p}\\
= & \left\langle \frac{du(c(t))}{dt},w(c(t))\right\rangle _{c(t)}+\left\langle u(c(t)),\frac{dw(c(t))}{dt}\right\rangle _{c(t)}+u(t)^{T}\frac{d}{dt}g(\gamma(t))w(t)\\
= & \frac{d}{dt}\left\langle u,w\right\rangle _{c(t)}
\end{align*}
where we used $\left\langle a,b\right\rangle _{c(t)}=\sum a_{i}g_{ij}b_{j}$
on $\diamondsuit$ and $g^{kl}$ is the inverse of $g_{kp}$ on $\clubsuit$.

(6) For any map $c(t,s)$ on $M$, we have that
\begin{align*}
D_{s}\frac{\partial c}{\partial t} & =\frac{\partial}{\partial s}\frac{\partial c}{\partial t}+\sum_{ijk}\frac{\partial c_{i}}{\partial s}\frac{\partial c_{j}}{\partial t}\Gamma_{ij}^{k}e_{k}\\
 & =\frac{\partial}{\partial t}\frac{\partial c}{\partial s}+\sum_{ijk}\frac{\partial c_{i}}{\partial s}\frac{\partial c_{j}}{\partial t}\Gamma_{ij}^{k}e_{k}\\
 & =D_{t}\frac{\partial c}{\partial s}.
\end{align*}

(2) Recall a curve is geodesic if $\norm{\frac{d}{dt}\gamma(t)}_{\gamma(t)}$
is constant and $\left.\frac{d}{ds}\right|_{s=0}\int_{a}^{b}\norm{\frac{d}{dt}\gamma(t,s)}_{\gamma(t,s)}dt=0$
for any variation of $\gamma(t)$. We first prove that if $D_{t}\gamma'=0$
then it is a geodesic. For the first criteria, we use (5) and get
\[
\frac{d}{dt}\norm{\gamma'(t)}_{\gamma(t)}^{2}=\left\langle D_{t}\gamma',\gamma'\right\rangle _{\gamma(t)}+\left\langle \gamma',D_{t}\gamma'\right\rangle _{\gamma(t)}=0.
\]
Hence, $\norm{\frac{d}{dt}\gamma(t)}_{\gamma(t)}$ is a constant.
For the second criteria, we again use (5) and then (6) to get
\begin{align}
 & \frac{d}{ds}\int_{a}^{b}\norm{\frac{d}{dt}\gamma(t,s)}_{\gamma(t,s)}dt\nonumber \\
\overset{\diamondsuit}{=} & \int_{a}^{b}\frac{1}{\norm{\frac{d}{dt}\gamma(t,s)}_{\gamma(t,s)}}\left\langle D_{s}\frac{d}{dt}\gamma(t,s),\frac{d}{dt}\gamma(t,s)\right\rangle _{\gamma(t,s)}dt\nonumber \\
\overset{\clubsuit}{=} & \int_{a}^{b}\frac{1}{\norm{\frac{d}{dt}\gamma(t,s)}_{\gamma(t,s)}}\left\langle D_{t}\frac{d}{ds}\gamma(t,s),\frac{d}{dt}\gamma(t,s)\right\rangle _{\gamma(t,s)}dt\nonumber \\
\overset{\spadesuit}{=} & \frac{1}{\norm{\frac{d}{dt}\gamma(t,s)}_{\gamma(t,s)}}\int_{a}^{b}\frac{d}{dt}\left\langle \frac{d}{ds}\gamma(t,s),\frac{d}{dt}\gamma(t,s)\right\rangle _{\gamma(t,s)}-\left\langle \frac{d}{ds}\gamma(t,s),D_{t}\frac{d}{dt}\gamma(t,s)\right\rangle _{\gamma(t,s)}dt\nonumber \\
= & \frac{1}{\norm{\frac{d}{dt}\gamma(t,s)}_{\gamma(t,s)}}\left(\left\langle \frac{d}{ds}\gamma(b,s),\frac{d}{dt}\gamma(b,s)\right\rangle _{\gamma(b,s)}-\left\langle \frac{d}{ds}\gamma(a,s),\frac{d}{dt}\gamma(a,s)\right\rangle _{\gamma(b,s)}\right)\label{eq:geodesic_Dgamma}
\end{align}
where we used (5) on $\diamondsuit$, (6) on $\clubsuit$ and used
(5) and $\norm{\frac{d}{dt}\gamma(t,s)}_{\gamma(t,s)}$ is a constant
with respect to $t$ on $\spadesuit$. In the last line, we use the
fact that $D_{t}\frac{d}{dt}\gamma(t,s)=0$. Since $\frac{d}{ds}\gamma(a,s)=\frac{d}{ds}\gamma(b,s)=0$,
we have that $\frac{d}{ds}\int_{a}^{b}\norm{\frac{d}{dt}\gamma(t,s)}_{\gamma(t,s)}dt=0$.

To prove that any geodesic satisfies $D_{t}\gamma'=0$, by the calculation
in (\ref{eq:geodesic_Dgamma}), we have that
\[
0=\int_{a}^{b}\left\langle \frac{d}{ds}\gamma(t,s),D_{t}\frac{d}{dt}\gamma(t,s)\right\rangle _{\gamma(t,s)}dt
\]
for any variation $\gamma(t,s)$ of $\gamma(t)$. Since $\frac{d}{ds}\gamma(t,s)$
is chosen by us, we can put $\frac{d}{ds}\gamma(t,s)=D_{t}\frac{d}{dt}\gamma(t,s)$
and this gives that $\int_{a}^{b}\norm{D_{t}\frac{d}{dt}\gamma(t,s)}_{2}^{2}dt=0$.
Hence, we have the result.
\end{proof}
\begin{xca}
Prove the first part of Theorem \ref{thm:Jacobi_equation}.
\end{xca}
\begin{proof}
Note that
\begin{align*}
D_{t}D_{t}u & =\nabla_{c'}\nabla_{c'}\frac{\partial c}{\partial s}=\nabla_{c'}\nabla_{\frac{\partial c}{\partial s}}\frac{\partial c}{\partial t}\\
 & =\nabla_{\frac{\partial c}{\partial s}}\nabla_{c'}\frac{\partial c}{\partial t}-R(\frac{\partial c}{\partial s},\frac{\partial c}{\partial t})\frac{\partial c}{\partial t}\\
 & =-R(\frac{\partial c}{\partial s},\frac{\partial c}{\partial t})\frac{\partial c}{\partial t}
\end{align*}
where we used Fact \ref{fact:basic_RG} in the first equality, Fact
\ref{fact:formula_R} in the second equality and $\nabla_{c'}\frac{\partial c}{\partial t}=0$
in the last equality. Once we fix $s=0$, the partial derivatives
become derivatives in $t$.
\end{proof}

\end{document}